 \newcommand{\avoidexercises}[1]{#1}
 \theoremstyle{plain}% default
 \newtheorem{theorem}{Theorem}[chapter]
 \newtheorem{lemma}[theorem]{Lemma}
 \newtheorem{claim}[theorem]{Claim}
 \newtheorem{corollary}[theorem]{Corollary}
 \newtheorem*{theorem*}{Theorem}
 \newtheorem*{lemma*}{Lemma} 
 \newtheorem*{proposition*}{Proposition}
 \newtheorem*{corollary*}{Corollary}
 \newtheorem*{conjecture*}{Conjecture}
 \theoremstyle{definition}
 \newtheorem{definition}[theorem]{Definition}
 \newtheorem{example}[theorem]{Example}
 \newtheorem*{definition*}{Definition}
 \newtheorem*{example*}{Example}
 \newtheorem*{prob*}{Problem}
 \newtheorem*{remark*}{Remark}
 \newtheorem*{notation*}{Notation}
 \newtheorem*{exer*}{Exercise}
\begin{document}
\newcommand{\setexpr}[3]{\set{#1 : \text{ for }#2 \in \atoms \text{ such that }#3}}
\newcommand{\setexprtrue}[2]{\set{#1 : \text{ for }#2 \in \atoms}}

\newcommand{\aequiv}[1]{\stackrel{#1}\sim}
\newcommand{\locations}{\mathsf{Loc}}

\newcommand{\bibnotes}[1]{
\bigskip
{\footnotesize
\noindent {\bf Bibliographic notes for Section~\arabic{section}. }
#1}
}
\avoidexercises{
\renewcommand{\ExerciseHeader}{\noindent \textbf{\ExerciseName~\ExerciseHeaderNB\ExerciseHeaderTitle
\ExerciseHeaderOrigin. }}

\renewcommand{\AnswerHeader}{\noindent {\textbf{Solution to \ExerciseName\ \ExerciseHeaderNB.\\}}}
}

\newcommand{\mikexercise}[3][]{
\avoidexercises{
\begin{Exercise} 
	\ifthenelse{\equal{#1}{}}{}{(#1)}
	#2 
	% #2
\end{Exercise}

\medskip
\begin{Answer}
	#3
\end{Answer}
}
}

\newcommand{\pv}[1]{{\mathtt{#1}}}
\newcommand{\mypic}[1]{
\begin{center}
	\includegraphics[page=#1,scale=0.7]{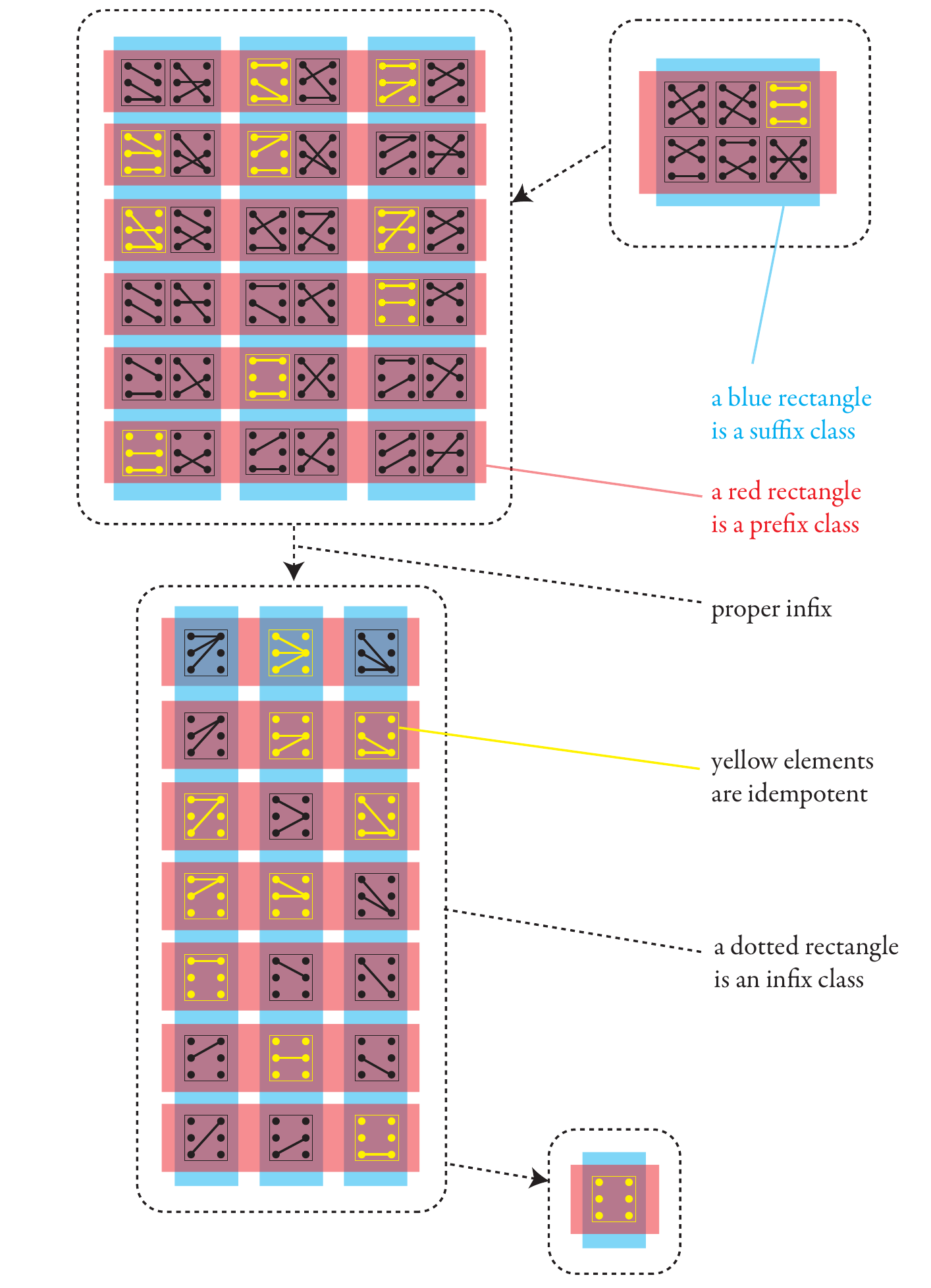}
\end{center}
}

\newcommand{\sets}{\mathsf{set}}
\newcommand{\setbuil}{\mathsf{setbuil}}
\newcommand{\defset}{\mathsf{hdef}}
\newcommand{\hofset}{\mathsf{hof}}

\newcommand{\deflift}{{\mathsf{def}}}
\newcommand{\tti}{{\mathtt I}}
\newcommand{\ttj}{{\mathtt J}}
\newcommand{\qatom}{(\mathbb Q, <)}
\newcommand{\sem}[1]{[\![#1]\!]}
\newcommand{\fraisse}{Fra\"iss\'e } 

\newcommand{\saut}[1]{\mathrm{G}_{#1}}
\newcommand{\aut}{\mathrm{G}}
\newcommand{\secfm}{\bfseries{\scshape{fm\ }}}
\newcommand{\tneq}{\(\neq\)}
\newcommand{\tin}{\(\in\)}
\newcommand{\tnotin}{\(\notin\)}
\newcommand{\tcup}{\(\cup\)}
\newcommand{\temptyset}{\(\emptyset\)}
\newcommand{\tcap}{\(\cap\)}
\newcommand{\ttimes}{\(\times\)}

\newcommand{\structclass}{\mathscr A}
\newcommand{\structa}{\mathfrak A}
\newcommand{\structb}{\mathfrak B}
\newcommand{\structc}{\mathfrak C}
\newcommand{\structh}{\mathfrak H}

\newcommand{\atoms}{\mathbb A}
\newcommand{\universe}[1]{\mathrm{universe(#1)}}
\newcommand{\Fff}{\mathscr F}
\newcommand{\limmodel}{\underline \structa}
\newcommand{\arity}{\mathrm{arity}}

\newcommand{\atoma}{{a}}
\newcommand{\atomb}{{b}}
\newcommand{\atomc}{{c}}
\newcommand{\atomd}{{d}}

\newcommand{\atomq}{{q}}
\newcommand{\atomzero}{{\underline 0}}
\newcommand{\atomone}{{\underline 1}}
\newcommand{\atomtwo}{{\underline 2}}
\newcommand{\atomthree}{{\underline 3}}

\newcommand{\atomseta}{{A}}
\newcommand{\atomsetb}{{B}}
\newcommand{\atomsetc}{{C}}
\newcommand{\atomsets}{{S}}
\newcommand{\atomsett}{{T}}

\newcommand{\Qfin}{Q_{\mathrm{fin}}}
\newcommand{\Afin}{A_{\mathrm{fin}}}

\newcommand{\bind}{\nabla}
\newcommand{\powerset}{{\mathsf P}}
\newcommand{\decproblem}[3]
{\begin{center}
\fbox{
\begin{tabular}{rl}
\textsc{Problem}: & #1 \\
\textsc{Input}: & #2\\
\textsc{Output}: & #3
\end{tabular}
}
\end{center}}

\newcounter{openquestioncounter}
\newenvironment{openquestion}{
\medskip

\refstepcounter{openquestioncounter}
\smallskip\noindent{\textbf{{Open question \arabic{openquestioncounter}. }}}}{\hfill $\Box$
\medskip 
}

% \newcounter{ourexamplecounter}
\newenvironment{ourexample}{
\medskip

\refstepcounter{ourexamplecounter}
\smallskip\noindent{\textbf{{Example \arabic{ourexamplecounter}. }}}}{\hfill $\Box$
\medskip
}

\newcounter{exercisecounter}

\newcommand{\eqdef}{\stackrel {\mathrm{def}} = }

\newcommand{\myunderbrace}[2]{\underbrace{#1}_{\mathclap{\txt{\scriptsize #2}}}}

\newcommand{\myoverbrace}[2]{\overbrace{#1}^{\mathclap{\txt{\scriptsize #2}}}}

\definecolor{cyan}{cmyk}{1,0,0,0}
\newcommand{\red}[1]{{\color{red}#1}}
\newcommand{\blue}[1]{{\color{cyan}#1}}

\newcommand{\infixsim}{\stackrel {\text{infix}} \sim}
\newcommand{\prefixsim}{\stackrel {\text{prefix}} \sim}
\newcommand{\suffixsim}{\stackrel {\text{suffix}} \sim}

\newcommand{\equivalentbecause}[2]{& \quad #1 \quad & \text{\scriptsize{(#2)}}}
\newcommand{\equalbecause}[1]{\equivalentbecause{=}{#1}}

\newcommand{\until}{\mathsf U}
\newcommand{\finally}{\mathsf F}
\newcommand{\globally}{\mathsf G}
\newcommand{\nextx}{\mathsf X}

\newcommand{\mso}{{\sc mso}\xspace}
\newcommand{\fo}{{\sc fo}\xspace}
\newcommand{\fotwo}{{\sc fo$^2$}\xspace}
\newcommand{\dav}{{\sc da}\xspace}

\newcommand{\ltlf}{{\sc ltl}{\small [$\finally$]}\xspace}
\newcommand{\ltlx}{{\sc ltl}{\small [$\nextx$]}\xspace}
\newcommand{\ltlunary}{{\sc ltl}{\small [$\finally,\finally^{-1}$]}\xspace}
\newcommand{\tlef}{{\sc ef}\xspace}

\newcommand{\msoequiv}[1]{\equiv_{#1}}
\newcommand{\foequiv}[1]{{  \equiv_{{#1}}}}
\newcommand{\ef}{Ehrenfeucht-Fra\"iss\'e\xspace}

\newcommand{\incite}[2][]{
\begin{tabular}{l}
	\cite{#2}
	\begin{minipage}[l]{10cm}
		\citeauthor{#2}, \citetitle{#2}, \citeyear{#2}
	\ifthenelse{\equal{#1}{}}{}{, #1}
	\end{minipage}
\end{tabular}
}

\newcommand{\footcitelong}[2][]{\footnote{ 	\\ \vspace{-0.3cm}\incite[#1]{#2}
}
}

\newcommand{\exercisehead}{
	
\vspace{1cm}
\begin{center}
  \bf Exercises
\end{center}
}

\newcommand{\momega}{!}

\newcommand{\Rat}{{\mathbb Q}}

\newcommand{\flatt}{{\mathrm{flat}}}

\newcommand{\finsort}[1]{#1_{\!+}}
\newcommand{\omegasort}[1]{#1_{\!\omega}}
 \newcommand{\finsortbis}[1]{#1[+]}
 \newcommand{\omegasortbis}[1]{#1[\omega]}

\newcommand{\cc}{\circ}

\newcommand{\lale}{L\"auchli-Leonard\xspace}
\newcommand{\tbe}{Trakhtenbrot-B\"uchi-Elgot\xspace}
\newcommand{\schutz}{Sh\"utzenberger\xspace}
\newcommand{\buchi}{B\"uchi\xspace}
\newcommand{\konig}{K\"onig\xspace}
\newcommand{\omegaop}{{\omega*}}

\newcommand{\range}[3]{#1(#2..#3 )}
\newcommand{\rightrange}[2]{#1(#2.. )}
\newcommand{\Ii}{\mathcal I}

\newcommand{\monad}{\mathsf T}
\newcommand{\smonad}{\mathsf S}
\newcommand{\fmonad}{\mathsf F}
\newcommand{\unit}[1]{\mathsf{unit}_{#1}}
\newcommand{\multnoarg}{\mathsf{mult}}
\newcommand{\mult}{\multnoarg}
\newcommand{\ident}{\mathrm{id}}
\newcommand{\alg}{\mathbf A}
\newcommand{\balg}{\mathbf B}

\newcommand{\multiset}[1]{\red{\{}#1 \red{\}} }
\newcommand{\setcat}{\mathsf{Set}}

\newcommand{\sorted}[1]{{{#1}}}
\newcommand{\semterm}[2]{#1^{#2}}
\newcommand{\ranked}[1]{#1}
\newcommand{\rSigma}{\ranked \Sigma}
\newcommand{\rGamma}{\ranked \Gamma}

\newcommand{\shelahtype}[1]{\scriptsize \red{$#1$}}

\newcommand{\langclass}{\mathscr L}
\newcommand{\algclass}{\mathscr A}
\newcommand{\eqclass}{\mathscr E}
\newcommand{\feqclass}{\mathscr F}
\newcommand{\basis}{\mathscr B}
\newcommand{\langmap}{\mathsf{L}}
\newcommand{\algmap}{\mathsf{A}}
\newcommand{\eqmap}{\mathsf{E}}

\newcommand{\msoone}{{\sc mso$_1$}\xspace}
\newcommand{\msotwo}{{\sc mso$_2$}\xspace}

\newcommand{\hmonad}{{\ranked {\mathsf H}}}
\newcommand{\fomega}{\fmonad_{\infty}}
\newcommand{\fthin}{\fmonad_{\text{thin}}}

\newcommand{\msocomplexity}[1]{|#1|_{\text{\mso}}}
\newcommand{\bueltr}{Trakhtenbrot-B\"uchi-Elgot\xspace}

\newcommand{\ltl}{\textsc{ltl}\xspace}
\newcommand{\ltlfun}{\textsc{fltl}\xspace}
\newcommand{\higman}{\hookrightarrow}
\newcommand{\facelabel}[2]{\ar@{}[#1]|{\text{\red{(#2)}}}}
\newcommand{\faceref}[1]{\red{(#1)}} 
 \mypic{107}

 \pagebreak
 \title 
%  {Algebraic language theory}
{Languages recognised by finite semigroups, and their generalisations to objects such as trees and graphs, with an emphasis on  definability in monadic second-order logic}
\author{Miko{\l}aj Boja\'nczyk\\[3\baselineskip]
\today \\ The latest version can be downloaded from:
{\scriptsize https://www.mimuw.edu.pl/~bojan/2019-2020/algebraic-language-theory-2020 }
 }
\bookabstract{This is the guide for authors who are preparing written,
 rather than edited, books.}
 \bookkeywords{\LaTeX; authored books; CUP style; cambridge7A.cls.}
\frontmatter 
% \maketitle

\pagebreak

\tableofcontents
% \listofcontributors

\chapter*{Preface}

These are lecture notes on the algebraic approach to regular languages. The present version is from \today, while the most recent version can be found here:
\begin{center}
  {\tt {\scriptsize https://www.mimuw.edu.pl/$~$bojan/2019-2020/algebraic-language-theory-2020 }
  }
\end{center}
The classical algebraic approach is for finite words; it  uses semigroups instead of automata. However, the algebraic approach can be extended to structures beyond words,  e.g.~infinite words, or trees or graphs.  The purpose of this book is to describe the algebraic approach in a way that covers these extensions.

\mainmatter

\part{Words}
 \chapter{Semigroups, monoids and their structure}
\label{chapter:semigroups}
In this chapter, we define semigroups and monoids, and use them to recognise languages of finite words.

\begin{definition}[Semigroup]\label{def:monoid}
A \emph{semigroup} consists of an underlying set $S$ together  with a binary multiplication operation
\begin{align*}
   (a,b) \in S^2 \quad \mapsto \quad ab \in S,
\end{align*}
which is \emph{associative} in the sense that 
\begin{align*}
a (bc) = (ab)c \qquad \text{for all }a,b,c \in S.
\end{align*}
\end{definition}
 
The definition says that the order of evaluation in a semigroup is not important, i.e. that different ways of bracketing a sequence of elements in the semigroup will yield the same result as far as  semigroup multiplication is concerned. For example,
\begin{align*}
((ab)c)(d(ef)) = ((((ab)c)d)e)f.
\end{align*}
Therefore, it makes sense to omit the brackets and write simply
\begin{align*}
abcdef.
\end{align*}
This means that  semigroup multiplication  can be seen as an operation of type $S^+ \to S$, i.e.~it is defined not just on pairs of semigroups elements, but also  on finite nonempty words consisting of semigroup elements. 

A \emph{semigroup homomorphism} is a function between  (underlying sets of) semigroups that preserves the structure of semigroups, i.e. a  function 
\begin{align*}
h: \myunderbrace{S}{semigroup} \to \myunderbrace{T}{semigroup}
\end{align*}  which  is consistent with the multiplication operation in the sense that 
\begin{align*}
    h(a \cdot b) = h(a)\cdot h(b),
\end{align*}
where the semigroup multiplication on the left is in $S$, and the semigroup multiplication on the right is in $T$.  An equivalent definition of a  semigroup homomorphism, which views semigroup multiplication as defined on entire words and not just pairs of letters, says that the following diagram must commute:
\begin{align*}
   \xymatrix{
      S^+ \ar[r]^{h^+} \ar[d]_{\text{multiplication in $S$}} &
      T^+ \ar[d]^{\text{multiplication in $T$}}\\
      S \ar[r]_{h} & T
   }
   \end{align*}
In the above, $h^+$ is the natural lifting of $h$ to words, which applies $h$ to every letter.

A \emph{monoid} is the special case of a  semigroup where there  is an identity element, denoted by $1 \in S$, which satisfies
\begin{align*}
1a = a= a1 \qquad \text{for all }a \in S.
\end{align*}
The identity element, if it exists, must be unique. This is because if there are two candidates for the identity, then multiplying them reveals the true identity. The multiplication operation in a monoid can be thought of as having type $S^* \to S$, with  the empty word $\varepsilon$ being mapped to $1$.  A \emph{monoid homomorphism} is a semigroup homomorphism that preserves the identity element. In terms of commuting diagrams, a monoid homomorphism is a function which makes the following diagram commute:
\begin{align*}
   \xymatrix{
      S^* \ar[r]^{h^*} \ar[d]_{\text{multiplication in $S$}} &
      T^* \ar[d]^{\text{multiplication in $T$}}\\
      S \ar[r]_{h} & T
   }
   \end{align*}
Clearly there is a pattern behind the diagrams. This pattern will be explored in the second part of this book, when talking about monads.

\begin{example}\label{example:semigroups}
   Here are some examples of monoids and semigroups.
   \begin{enumerate} 
      \item If $\Sigma$ is a set, then the set  $\Sigma^+$ of nonempty words over $\Sigma$, equipped with concatenation, is a semigroup, called the \emph{free\footnote{The reason for this name is the following universality property. The free semigroup  is generated by $\Sigma$, and it is the biggest  semigroup generated by $\Sigma$  in the following sense. For every semigroup $S$ that is generated by $\Sigma$, there exists a (unique) surjective semigroup homomorphism $h : \Sigma^+ \to S$ which is the identity on the $\Sigma$ generators.    } semigroup over generators $\Sigma$}.  The \emph{free monoid} is the set $\Sigma^*$ of possibly empty words.  
      \item Every group is a monoid.
      \item For every set $Q$, the set of all functions  $Q \to Q$, equipped with function composition, is a monoid. The monoid identity is the identity function.
      \item For every set $Q$, the set of all binary relations on $Q$ is a monoid, when equipped with relational composition
      \begin{align*}
      a \circ b =  \set{(p,q) : \text{there is some $r \in Q$ such that $(p,r) \in a$ and $(r,q) \in b$}}.
      \end{align*}
      The monoid identity is the identity function. The monoid from the previous item is a sub-monoid of this one, i.e.~the inclusion map is a monoid homomorphism.
      \item Here are all semigroups of size two, up to semigroup isomorphism:
      \begin{align*}
      \myunderbrace{(\set{0,1}, +)}{addition mod 2} \quad (\set{0,1}, \min) \quad \myunderbrace{(\set{0,1}, \pi_1)}{$(a,b) \mapsto a$} 
      \quad \myunderbrace{(\set{0,1}, \pi_2)}{$(a,b) \mapsto b$}
      \quad \myunderbrace{(\set{0,1}, (a,b) \mapsto 1)}{all multiplications are $1$}
      \end{align*}
      The first two are monoids. 
   \end{enumerate}
\end{example}

\paragraph*{Compositional functions.}
Semigroup homomorphisms are closely related with functions that are compositional in the sense defined below. Let $S$ be a semigroup, and let $X$ be a set (without a semigroup structure). A function 
\begin{align*}
    h : S \to X
\end{align*}
is called \emph{compositional} if for every $a, b \in S$, the value $h(a \cdot b)$ is uniquely determined
% \footnote{\label{footnote:determining} This means that there is some 
% function $\pi : X^2 \to X$ which does the determining, i.e.~it makes the following diagram commute 
% \begin{align*}
%    \xymatrix{S^2 \ar[r]^{(h,h)} \ar[d]_{\text{semigroup operation}} &
%     X^2 \ar[d]^{\pi} \\ 
%     S \ar[r]_h 
%     & X}
% \end{align*}
% } 
by the values $h(a)$ and $h(b)$.  If $X$ has a semigroup structure, then every semigroup homomorphism $S \to X$ is a compositional function. The following lemma shows that the converse is also true for surjective functions. 
\begin{lemma}\label{lem:compositional-monoid}
Let $S$ be a semigroup, let $X$ be a set, and let $h : S \to X$ be a surjective compositional function. Then there exists (a unique) semigroup structure on $X$ which makes $h$ into a semigroup homomorphism.	
\end{lemma}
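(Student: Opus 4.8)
The plan is to use the surjectivity of $h$ to transport the multiplication from $S$ to $X$. Given $x, y \in X$, pick preimages $a \in h^{-1}(x)$ and $b \in h^{-1}(y)$, and define $x \cdot y \eqdef h(a \cdot b)$. The compositionality hypothesis is exactly what guarantees this is well defined: if $a'$ is another preimage of $x$ and $b'$ another preimage of $y$, then $h(a') = h(a)$ and $h(b') = h(b)$, so by compositionality $h(a' \cdot b') = h(a \cdot b)$. Thus the operation on $X$ does not depend on the choice of representatives.

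Next I would check that this operation makes $h$ a homomorphism, which is immediate from the definition: for $a, b \in S$ we may use $a$ and $b$ themselves as the chosen preimages of $h(a)$ and $h(b)$, so $h(a) \cdot h(b) = h(a \cdot b)$. Then I would verify associativity on $X$. Given $x, y, z \in X$, choose preimages $a, b, c \in S$; using the homomorphism property just established and associativity in $S$,
\begin{align*}
(x \cdot y) \cdot z = h(a \cdot b) \cdot h(c) = h((a \cdot b) \cdot c) = h(a \cdot (b \cdot c)) = h(a) \cdot h(b \cdot c) = x \cdot (y \cdot z),
\end{align*}
where in the middle two equalities I use that $h$ is a homomorphism for the operation just defined. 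So $X$ is a semigroup and $h$ is a semigroup homomorphism.

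For uniqueness, suppose $(X, \ast)$ is any semigroup structure making $h$ a homomorphism. Then for $x, y \in X$, picking preimages $a, b$ as above, $x \ast y = h(a) \ast h(b) = h(a \cdot b) = x \cdot y$, so $\ast$ coincides with the operation we defined. Hence the structure is unique.

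The only subtle point — and the one genuine thing to get right — is the well-definedness of the transported operation, which is precisely where the compositionality assumption is consumed; everything else (homomorphism property, associativity, uniqueness) is a routine diagram chase that uses surjectivity only to produce representatives. One should also note that no choice principle beyond picking a single preimage per element is needed, and that the argument never uses associativity of $S$ until the associativity check for $X$, so the statement and proof adapt verbatim to the monoid case (tracking the identity element $h(1)$).
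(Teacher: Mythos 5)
Your proof is correct and takes essentially the same approach as the paper's: transport the operation along $h$ using compositionality for well-definedness, then verify associativity by pulling three elements of $X$ back to $S$ via surjectivity. The paper states the well-definedness and uniqueness more tersely (it treats the existence of the induced binary operation as an immediate reading of the compositionality hypothesis and only mentions uniqueness parenthetically), but there is no substantive difference in the argument.
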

\begin{proof}
   Saying that $h(a \cdot b)$ is uniquely determined by $h(a)$ and $h(b)$, as in the definition of compositionality, means that there is a binary operation $\circ$ on $X$, which is not yet known to be associative, that satisfies
   \begin{align}\label{eq:compositionality-assumption}
      h(a \cdot b) = h(a) \circ h(b) \qquad \text{for all }a,b \in S.
   \end{align}
   The semigroup structure on $X$ uses $\circ$ as the semigroup operation. It remains to prove associativity of $\circ$.  Consider three elements of $X$, which can be written as $h(a), h(b), h(c)$ thanks to the assumption on surjectivity of $h$. We have
   \begin{align*}
   (h(a) \circ h(b)) \circ h(c) 
   \stackrel{\text{\eqref{eq:compositionality-assumption}}} =
   (h(ab)) \circ h(c) 
   \stackrel{\text{\eqref{eq:compositionality-assumption}}} =
   h(abc).
   \end{align*}
   The same reasoning shows that $h(a) \circ (h(b) \circ h(c))$ is equal to $h(abc)$, thus establishing associativity. 
\end{proof}

\paragraph*{Commuting diagrams.} We finish this section with  an alternative description of semigroups which uses commuting diagrams. Similar descriptions will be  frequently used in this book, e.g.~for generalisations of semigroups for infinite words, so we want to start using them as early as possible.  

As mentioned before,  the binary multiplication operation in a semigroup $S$ can be extended to  an  operation of type  $S^+ \to S$. The following lemma explains, using commuting diagrams, which operations of  type  $S^+ \to S$ arise this way. 

\begin{lemma}\label{lem:commuting-diagram-semigroup}
   An operation  $\mu : S^+ \to S$ arises from some semigroup operation on $S$ if and only if the following two diagrams commute: 
   \begin{align*}
      \xymatrix{
         S \ar[dr]^{\text{identity}} \ar[d]_{\txt{\scriptsize view a letter as \\ \scriptsize  a one-letter word}}\\
         S^+ \ar[r]_\mu & S
      }
      \qquad \qquad
         \xymatrix@C=4cm
         {(S^+)^+ \ar[r]^{\text{multiplication in free semigroup $S^+$}} \ar[d]_{\mu^+} &
          S^+ \ar[d]^{\mu} \\ 
          S^+ \ar[r]_{\mu}
          &  S}
      \end{align*}
   In the above, $\mu^+$ stands for the coordinate-wise lifting of $\mu$ to words of words. 
\end{lemma}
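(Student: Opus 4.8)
The plan is to prove both implications by unwinding the two diagrams into elementwise statements, and then to reuse the generalised associativity already discussed informally in this section. Throughout I write a word of words as $(w_1,\dots,w_k)\in(S^+)^+$, with each $w_i\in S^+$, and I use that the horizontal map $(S^+)^+\to S^+$ in the second diagram is \emph{literal concatenation} $(w_1,\dots,w_k)\mapsto w_1w_2\cdots w_k$, i.e.\ the free-semigroup multiplication on $S^+$.

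For the ``only if'' direction, suppose $\mu$ is the extension of a semigroup operation $\cdot$ on $S$, so $\mu(a_1\cdots a_n)=a_1\cdot a_2\cdots a_n$, which is well defined precisely because $\cdot$ is associative. The first (unit) diagram then just asserts $\mu(a)=a$ for a one-letter word, which is immediate. For the second diagram, take $(w_1,\dots,w_k)\in(S^+)^+$: along the top, concatenation glues the $w_i$ into one word and then $\mu$ multiplies all of its letters; down the left side, $\mu^+$ replaces each $w_i$ by the single element $\mu(w_i)\in S$, and then $\mu$ multiplies these $k$ elements. Both outcomes are the product of all the letters occurring in all the $w_i$, in order, and their equality is exactly generalised associativity, so the diagram commutes.

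For the ``if'' direction, assume $\mu:S^+\to S$ makes both diagrams commute, and define $a\circ b\eqdef\mu(ab)$, where $ab$ is the two-letter word. First I would prove associativity: applying the second diagram to $((a,b),(c))$, the top path yields $\mu(abc)$, while the left path yields $\mu\big(\mu(ab)\,\mu(c)\big)=\mu\big((a\circ b)\,c\big)=(a\circ b)\circ c$, using the unit diagram in the form $\mu(c)=c$; symmetrically, $((a),(b,c))$ gives $a\circ(b\circ c)=\mu(abc)$. Next I would check that $\mu$ is the extension of $\circ$, i.e.\ $\mu(a_1\cdots a_n)=a_1\circ\cdots\circ a_n$, by induction on $n$: the case $n=1$ is the unit diagram, and the inductive step applies the second diagram to $((a_1,\dots,a_{n-1}),(a_n))$ exactly as in the associativity computation. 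Uniqueness is then forced, since any semigroup operation whose extension is $\mu$ must satisfy $a\circ b=\mu(ab)$.

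I do not expect a genuine obstacle. The only things to watch are the bookkeeping in the second diagram — in particular not confusing the concatenation map $(S^+)^+\to S^+$ with $\mu^+$ — and the mild point that the ``only if'' direction presupposes the generalised associativity remark made earlier, which is what makes $\mu(a_1\cdots a_n)$ well defined in the first place.
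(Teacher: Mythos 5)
Your proof is correct. Note that the paper states Lemma~\ref{lem:commuting-diagram-semigroup} without providing any proof at all — it is treated as a routine unfolding of the preceding informal discussion — so there is no paper argument to compare against. What you write is exactly that unfolding: the unit diagram is the elementwise statement $\mu(a)=a$, the second diagram instantiated on $(w_1,\dots,w_k)$ is the generalised associativity law, and in the converse direction the instantiations at $((a,b),(c))$, $((a),(b,c))$ and $((a_1\cdots a_{n-1}),(a_n))$ give associativity of $\circ$ and the induction showing $\mu$ is its extension. Your bookkeeping of the two different maps $(S^+)^+\to S^+$ (concatenation versus $\mu^+$) is handled correctly, which is the only place one could slip.
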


For monoids, the same lemma holds, with $+$ replaced by $*$. There is no need to add an extra diagram for the monoid identity, since the monoid identity can be defined as the image under $\mu$ of the empty word $\varepsilon$. The axioms 
\begin{align*}
1 \cdot a = a = a \cdot 1 
\end{align*}
can then be derived as
\begin{align*}
1 \cdot a = \mu(\varepsilon) \cdot \mu(a) = \mu(\varepsilon a) = \mu(a) = a,
\end{align*}
with a symmetric reasoning used for $a \cdot 1$. 

% Also  homomorphisms can be defined using commuting diagrams. A function $h : S \to T$ is a semigroup homomorphism if and only if the following diagram commutes
% \begin{align*}
% \xymatrix{
%    S^+ \ar[r]^{h^+} \ar[d]_{\text{multiplication in $S$}} &
%    T^+ \ar[d]^{\text{multiplication in $T$}}\\
%    S \ar[r]_{h} & T
% }
% \end{align*}
% By replacing $+$ with $*$ we get the definition of a monoid homomorphism. 

\exercisehead

\mikexercise{Show a function between two monoids that is a semigroup homomoprhism, but not a monoid homomorphism.}
{
Let $M$ be any monoid, e.g.~the trivial monoid of size one, and let $M'$ be its extension obtained by adding an extra identity element $1$. The inclusion embedding 
\begin{align*}
M \hookrightarrow M'
\end{align*}
is a semigroup homomorphism, but not a monoid homomorphism.
}

\mikexercise{Show that there are exponentially many semigroups of size $n$. }  {
Take any finite set $X$ and any  function
\begin{align*}
f : X^2 \to \set{0, \text{true}}.
\end{align*} 
The number of choices for $f$ is exponential in $X^2$. 
We can extend $f$ to an associative multiplication operation on the set 
\begin{align*}
S = X + \set{0, \text{true}}
\end{align*}
by defining all multiplications outside $X^2$ to  give value $0$.  Semigroups obtained this way correspond to languages where all words have length at most 2. 
}
\mikexercise{\label{ex:has-some-idempotent-ramsey}
	Show that for every  semigroup homomorphism $h : \Sigma^+ \to S$, with $S$ finite, there exists some $N \in \set{1,2,\ldots}$ such that every word of length at least $N$ can be factorised as $w = w_1 w_2 w_3$ where $h(w_2)$ is  an idempotent\footnote{This exercise can be seen as the semigroup version of the pumping lemma.}.
}
{}
\mikexercise{\label{ex:powerset-semigroup}Show that if $S$ is a semigroup, then the same is true for the \emph{powerset semigroup}, whose elements are possibly empty subsets of $S$, and where  multiplication is defined coordinate-wise:
\begin{align*}
   A \cdot B = \set{a \cdot b : a \in A, b \in B} \qquad \text{for $A,B \subseteq S$.}
  \end{align*}  
}{}

\mikexercise{\label{ex:category-of-semigroups} Let us view semigroups as a category, where the objects are semigroups and the morphisms are semigroup homomorphisms. What are the product and co-products of this category?}{For the case of products, we take $S$ to be the Cartesian product of $S_1 \times S_2$. This is the semigroup where underlying set is the product of the underlying sets, and the semigroup operation is defined coordinate-wise. It is not hard to check that this is a semigroup, and the two projection functions 
\begin{align*}
   \xymatrix{
      & S 
      \ar[dl]_{\pi_1}
      \ar[dr]^{\pi_2}
       & \\
      S_1
      &  &  S_2
   }
 \end{align*}
 are semigroup homomorphisms, which satisfy the defining property of products in a category, namely:
\begin{align*}
   \red{\forall} \blue{\exists!} \qquad
   \xymatrix{
      & S 
      \ar[dl]_{\pi_1}
      \ar[dr]^{\pi_2}
      \ar@[blue][dd]_{f}
       & \\
      S_1
      \ar@[red][dr]_{\red f_1}
      &  &  S_2
      \ar@[red][dl]^{\red f_2}\\
      & \red T & 
   } \qquad \qquad
 \end{align*}

 The co-products are more tricky.
 Here we use $S$ to be the following semigroup. Its elements are words over the disjoint union of alphabets $S_1 + S_2$, equipped with concatenation,  modulo the least  equivalence relation that contains all pairs of the form
\begin{align*}
w ab v  \sim wcv   \text{ where $c \in S_i$ is the multiplication $ab$ in $S_i$},
\end{align*}
and $w,v$ are possibly empty words. This equivalence relation is semigroup congruence, and therefore it makes sense to consider the quotient semigroup. One can check that the semigroup $S$ defined this way, along with the two natural inclusions $\iota_1$ and $\iota_2$, satisfies the 
 defining diagram of co-products, namely
\begin{align*}
  \red{\forall} \blue{\exists!} \qquad
  \xymatrix{
     & S 
      & \\
     S_1
     \ar[ur]^{\iota_1}
     &  &  S_2
     \ar[ul]_s{\iota_2}\\
     & \red T 
     \ar@[blue][uu]_{f}
     \ar@[red][ul]^{\red f_1}
     \ar@[red][ur]_{\red f_2}
     & 
  } \qquad \qquad
\end{align*}}

\mikexercise{Let $\Sigma$ be an alphabet, and let 
\begin{align*}
X \subseteq \Sigma^+ \times \Sigma^+
\end{align*}
be a  set of words pairs. Define $\sim_X$ to be least congruence on $\Sigma^+$ which contains all pairs from $X$. This is the same as the symmetric transitive  closure of 
\begin{align*}
\set{(wxv,wyv) : w,v \in \Sigma^*, (x,y)  \in X }.
\end{align*}
Show that the following problem -- which is called the \emph{word problem for semigroups} -- is undecidable: given finite $\Sigma, X$ and $w,v \in \Sigma^+$, decide if $w \sim_X v$.
}{
A Turing machine is called reversible if every configuration has at most one successor and at most one predecessor configuration. The halting problem is undecidable for reversible Turing machines; this can be shown by taking any Turing machine and extending it so that it stores its computation history. For a reversible Turing machine, one can come up with a set $X$ such that  for every words $w,v$ that represent configurations of the machine, we have 
\begin{align*}
w \sim_X v \quad \text{iff} \quad \text{there is a computation from $w$ to $v$ or from $v$ to $w$}
\end{align*}}

\mikexercise{Define the \emph{theory of semigroups} to be the set of first-order sentences, which use one ternary relation $x=y\cdot z$, that are true in every semigroup. Show that the theory of semigroups is undecidable, i.e.~it is undecidable if a first-order sentence is true in all semigroups.}{}

\mikexercise{Show that the theory of finite semigroups is different from the theory of (all) semigroups, but still undecidable. }{}
\section{Recognising languages}

In this book, we are  interested in monoids and semigroups as an alternative to finite automata for the purpose of recognising  languages\footnote{
   The semigroup approach to languages can be credited to 
   \incite{SD_1955-1956__9__A10_0}
   On page 10 of this paper, which is primarily devoted to codes, \schutz remarks that semigroups can be used to recognise languages and defines the syntactic congruence (in fact, the syntactic pre-order). Apparently, the syntactic congruence dates back to 
   \incite{dubreil1941}
   but I have been unable to obtain a copy of this paper. 
   These are the  early days of automata theory, and \schutz's paper is  contemporary to 
   \incite[Theorem 4,]{moore1956gedanken}
   which is the first place that I know where  minimisation of automata appears.
   %  in Theorem 4.
   % in fact \schutz's paper precedes  the Myhill-Nerode theorem, which can be found in
   % \incite[Theorem 1]{nerodeLinearAutomatonTransformations1958}
   % \incite[Theorem 1]{RabinScott59} 
   % The paper~\cite{RabinScott59} credits Theorem 1 to unpublished work of Myhill, and one can clearly recognise there the modern presentation of the theorem, which cannot be said about~\cite{nerodeLinearAutomatonTransformations1958}.
}. Since  languages are usually defined for possibly empty words, we use monoids and not semigroups when recognising languages.

\begin{definition}
   Let $\Sigma$ be a finite alphabet. 
 A language $L \subseteq \Sigma^*$ is \emph{recognised} by a monoid homomorphism 
 \begin{align*}
   h : \Sigma^* \to M
 \end{align*}
 if the membership relation  $w \in L$ is determined uniquely by $h(w)$. In
  other words, there is an \emph{accepting subset}
   $F \subseteq M$ 
   such that 
   \begin{align*}
      w \in L \quad\text{iff}
      \quad h(w) \in F 
      \qquad \text{for every }w \in \Sigma^*.
   \end{align*}
   % In the same way we define a languages $L \subseteq \Sigma^+$ that are recognised by semigroup homomorphisms $h : \Sigma^+ \to S$. 
\end{definition}

We say that a language is recognised by a monoid if it is recognised by some monoid homomorphism into that monoid. 
% Same for semigroups. 
The following theorem shows that, for the purpose of  recognising languages, finite monoids and finite automata are equivalent.
%  The same result holds for semigroups, with a suitable understanding of regular languages of nonempty words.

\begin{theorem}
   \label{thm:regular-languages-monoids}
   The following  conditions are equivalent for every $L \subseteq \Sigma^*$:
   \begin{enumerate}
      \item $L$ is recognised by a finite nondeterministic automaton;
      \item  $L$ is recognised by a finite monoid.
   \end{enumerate}
\end{theorem}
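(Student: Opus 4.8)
The plan is to prove both implications separately, using the standard automaton-to-monoid and monoid-to-automaton constructions, and then to verify that the recognised language is preserved in each direction.

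\textbf{From automata to monoids.} Suppose $L$ is recognised by a nondeterministic automaton with state set $Q$, transition relation $\delta \subseteq Q \times \Sigma \times Q$, initial states $I \subseteq Q$, and final states $F \subseteq Q$. First I would take the \emph{transition monoid}: for each letter $a \in \Sigma$, define a binary relation $\delta_a \subseteq Q \times Q$ by $(p,q) \in \delta_a$ iff $(p,a,q) \in \delta$. By Example~\ref{example:semigroups}(4), the binary relations on $Q$ form a monoid $M$ under relational composition, with the identity relation as unit. Extend the assignment $a \mapsto \delta_a$ to a monoid homomorphism $h : \Sigma^* \to M$; concretely $h(a_1 \cdots a_n) = \delta_{a_1} \circ \cdots \circ \delta_{a_n}$, and $h(\varepsilon)$ is the identity relation (this is a homomorphism because relational composition is associative and $h$ is defined on generators, extended multiplicatively — one appeals to the freeness of $\Sigma^*$). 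The key observation is that for a word $w$, the pair $(p,q)$ lies in $h(w)$ exactly when the automaton has a run on $w$ from $p$ to $q$. Hence $w \in L$ iff $h(w) \cap (I \times F) \neq \emptyset$, so taking the accepting set $F_M = \{ m \in M : m \cap (I \times F) \neq \emptyset \}$ witnesses that $M$ recognises $L$. Since $Q$ is finite, $M$ is finite.

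\textbf{From monoids to automata.} Conversely, suppose $h : \Sigma^* \to M$ recognises $L$ with accepting set $F \subseteq M$ and $M$ finite. I would build a deterministic automaton (which is in particular nondeterministic) whose state set is $M$ itself, whose initial state is $1 \in M$, whose final states are $F$, and whose transition function reads letter $a$ in state $m$ and moves to $m \cdot h(a)$. A trivial induction on the length of $w$ shows that, started in state $1$, this automaton reaches state $h(w)$ after reading $w$; here one uses that $h$ is a monoid homomorphism so $h(wa) = h(w) \cdot h(a)$ and $h(\varepsilon) = 1$. Therefore the automaton accepts $w$ iff $h(w) \in F$ iff $w \in L$, and it has $|M|$ states, which is finite.

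\textbf{Main obstacle.} Neither direction involves a genuinely hard idea; the constructions are classical. The step requiring the most care is the monoid-to-automaton direction when one insists on staying literally within \emph{nondeterministic} automata as opposed to allowing deterministic ones — but since every deterministic automaton is a special case of a nondeterministic one (singleton transition relation, single initial state), this is only a bookkeeping remark. The other point deserving attention is checking that $h : \Sigma^* \to M$ defined from $a \mapsto \delta_a$ really is a well-defined monoid homomorphism: this is exactly the universal property of the free monoid $\Sigma^*$ mentioned in Example~\ref{example:semigroups}(1), so once a function on generators is fixed there is a unique homomorphic extension, and the commuting-diagram characterisation of homomorphisms guarantees it interacts correctly with concatenation. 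Everything else is a routine verification that runs of the automaton on $w$ correspond to the value $h(w)$.
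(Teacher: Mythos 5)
Your proposal is correct and follows essentially the same two constructions as the paper: the transition monoid of binary relations on $Q$ for the automaton-to-monoid direction, and the deterministic automaton on the underlying set of $M$ with transitions $(m,a)\mapsto m\cdot h(a)$ for the converse. The only cosmetic difference is that you define $h$ on generators and appeal to freeness of $\Sigma^*$, whereas the paper defines $h(w)$ directly as the reachability relation for $w$ and observes it is a homomorphism; these are the same map.
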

\begin{proof}\ 

   \begin{description}
      \item[2 $\Rightarrow$ 1] From a monoid homomorphism  one creates a deterministic automaton, whose states are elements of the monoid, the initial state is the identity, and the transition function is 
      \begin{align*}
      (m, a) \mapsto m \cdot \text{(homomorphic image of $a$)}.
      \end{align*}
       After reading an input word $w$, the state of the automaton is equal to the  homomorphic image of $w$ under the recognising homomorphism, and therefore the accepting subset for the monoid homomorphisms can be used. This automaton computes the monoid multiplication according to the choice of parentheses illustrated  in this example:
       \begin{align*}
       (((((ab)c)d)e)f)g.  
       \end{align*}
      \item[1 $\Rightarrow$ 2] Let $Q$ be the states of a nondeterministic automaton recognising $L$. Define a function\footnote{This  transformation from a nondeterministic (or deterministic) finite automaton to a monoid incurs an exponential blow-up, which is unavoidable in the worst case. }  
      \begin{align*}
         \delta : \Sigma^* \to \text{monoid of binary relations on $Q$}
      \end{align*}
      which sends a word $w$ to the binary relation 
      \begin{align*}
      \set{(p,q) \in Q^2 : \text{some run over $w$ goes from $p$ to $q$}}.
      \end{align*}
      This is a monoid homomorphism. It recognises the language:  a word is in the language if and only if its image under the homomorphism  contains at least one pair of the form (initial state, accepting state).
   \end{description}
\end{proof}

\paragraph*{The syntactic monoid of a language.}
Deterministic finite automata have minimisation, i.e.~for every language there is a minimal deterministic automaton, which can be found inside every  other deterministic  automaton that recognises the language. The same is true for monoids, as proved in the following theorem.

\begin{theorem}\label{thm:syntactic-monoid}
 For every language\footnote{The language need not be regular, and the alphabet need not be finite.} $L \subseteq \Sigma^*$ there is a surjective monoid homomorphism 
 \begin{align*}
   h : \Sigma^* \to  M,
 \end{align*}
 called the syntactic homomorphism of $L$, which recognises it and  is minimal in the sense explained in the following quantified diagram\footnote{Here is how to read the diagram. \red{For every red extension} of the black diagram \blue{there exists a unique blue extension} which makes the diagram commute. Double headed arrows denote surjective homomorphisms, which means that $\red \forall$ quantifies over surjective homomorphisms, and the same is true for $\blue \exists!$.}
 \begin{align*}
   \red{\forall} \blue{\exists!} \qquad
   \xymatrix@C=2cm{ 
      \Sigma^* 
      \ar@{->>}[r]^h 
      \ar@{->>}@[red][dr]_-{ \txt{\scriptsize \red{monoid homomorphism $g$ \qquad}\\ \scriptsize{\red{that recognises $L$}}}} 
   & M \\
    & \red N
    \ar@[cyan]@{->>}[u]_{\blue {\text{ monoid homomorphism $f$}}} }
 \end{align*}

\end{theorem}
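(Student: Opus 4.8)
The plan is to construct the syntactic monoid explicitly via an equivalence relation on $\Sigma^*$ — the syntactic congruence — and then verify the universal property by hand. This mirrors the classical Myhill–Nerode construction but at the level of monoids rather than automata.

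First I would define the \emph{syntactic congruence} $\sim_L$ on $\Sigma^*$ by declaring $u \sim_L v$ if for all $x, y \in \Sigma^*$ we have $xuy \in L \iff xvy \in L$. A routine check shows this is an equivalence relation and is compatible with concatenation on both sides, hence a monoid congruence; therefore $M \eqdef \Sigma^*/{\sim_L}$ carries a monoid structure making the quotient map $h : \Sigma^* \to M$ a monoid homomorphism. It recognises $L$ because, taking $x = y = \varepsilon$ in the definition, $u \sim_L v$ implies $u \in L \iff v \in L$, so $L$ is a union of $\sim_L$-classes and we may take $F$ to be the set of classes contained in $L$.

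Next I would prove minimality. Suppose $g : \Sigma^* \to N$ is any surjective monoid homomorphism recognising $L$, say via accepting set $F_N \subseteq N$. The key observation is that $g(u) = g(v)$ implies $u \sim_L v$: indeed, for any $x,y$, since $g$ is a homomorphism, $g(xuy) = g(x)g(u)g(y) = g(x)g(v)g(y) = g(xvy)$, and since $g$ recognises $L$ this forces $xuy \in L \iff xvy \in L$. Hence the partition induced by $g$ refines the partition induced by $h$, so there is a well-defined function $f : N \to M$ with $f \circ g = h$; surjectivity of $g$ together with the fact that $h$ is a homomorphism makes $f$ a monoid homomorphism (one can invoke Lemma~\ref{lem:compositional-monoid}, observing that $h = f \circ g$ exhibits $f$ as compositional along the surjection $g$, or simply check $f(g(u)g(v)) = f(g(uv)) = h(uv) = h(u)h(v) = f(g(u))f(g(v))$ directly). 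Surjectivity of $f$ is immediate since $h$ is surjective. Uniqueness of $f$ follows because $g$ is surjective: any two maps agreeing after precomposition with a surjection are equal.

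The main obstacle — though it is more a matter of care than of genuine difficulty — is verifying that the construction is truly canonical, i.e.\ that $M$ does not depend on choices and that the diagram commutes on the nose rather than up to isomorphism; this is handled by the observation that $\sim_L$ is defined intrinsically from $L$ alone. A secondary point requiring attention is the well-definedness of $f$ on all of $N$: this uses surjectivity of $g$ in an essential way, and one should note that without it the theorem fails (a non-surjective recognising homomorphism need not factor through the syntactic homomorphism). I would close by remarking that finiteness of $M$ when $L$ is regular follows from Theorem~\ref{thm:regular-languages-monoids}, since $M$ is a quotient of any finite recognising monoid.
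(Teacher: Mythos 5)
Your proposal is correct and takes essentially the same approach as the paper: both define the syntactic congruence, take the quotient monoid, and establish minimality by showing that a recognising surjection $g$ refines $\sim_L$, hence $h$ factors through it. The only difference is presentational — you verify directly that $f$ is a homomorphism while the paper routes this through its Lemma~\ref{lem:compositional-monoid} on compositional functions, and you spell out surjectivity and uniqueness of $f$ a bit more explicitly.
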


\begin{proof} The proof is the same as for the Myhill-Nerode theorem about minimal automata, except that the corresponding congruence is two-sided. 
   Define the \emph{syntactic congruence} of $L$ to be the equivalence relation $\sim$ on $\Sigma^*$ which identifies two words $w,w' \in \Sigma^*$ if 
   \begin{align*}
      uwv \in L \quad\text{iff} \quad uw'v \in L \qquad \text{for all }u,v \in \Sigma^*.
   \end{align*}
   Define $h$ to be the function that maps a word to its equivalence class under syntactic congruence. It is not hard to see that $h$ is compositional, and therefore by (the monoid version of) Lemma~\ref{lem:compositional-monoid}, one can equip the set of equivalence classes of syntactic congruences  with a monoid structure -- call $M$ the resulting monoid -- which turns $h$ into a monoid homomorphism. 
   
   It remains to show minimality of $h$, as expressed by the diagram in the lemma. Let then $\red g$ be as in the diagram. Because $\red g$ recognises the language $L$, we have 
   \begin{align*}
   \red g (w) = \red g (w') \quad \text{implies} \quad w \sim w',
   \end{align*}
   which, thanks to surjectivity of $\red g$, yields some function $\blue f$ from $\red  N$ to $\blue M$, which makes the diagram commute, i.e.~$h = \blue f \circ \red g$. Furthermore, $\blue f$ must be a monoid homomorphism, because 

\begin{eqnarray*}
\blue f ( a_1 \cdot a_2)  
\equalbecause{by surjectivity of $\red g$, each $a_i$ can be presented as $\red g(w_i)$ for some  $w_i$}
\\ 
\blue f (\red g(w_1) \cdot \red g(w_2)) 
\equalbecause{$\red g$ is a monoid homomorphism}\\
  \blue f (\red g(w_1 w_2))  
  \equalbecause{the diagram commutes} \\
  h(w_1 w_2) 
  \equalbecause{$h$ is a monoid homomorphism} \\
  h(w_1) \cdot h(w_2)
  \equalbecause{the diagram commutes}\\
  \blue f(\red g(w_1)) \cdot \blue f(\red g(w_2)) & = &\\
  \blue f (a_1) \cdot \blue f (a_2).
\end{eqnarray*}

\end{proof}

\mikexercise{\label{ex:deterministic-to-monoids}
	Show that the translation from deterministic finite automata to monoids is exponential in the worst case. }
{For $n \in \set{1,2,\ldots}$, consider the language of words in $\set{a,b}^*$ where  the $n$-th letter is $a$. This language is recognised by a deterministic finite automaton with $\Oo(n)$ states. To recognise the language with a monoid homomorphism, we need to remember the first $n$ letters of a word. 
}

\mikexercise{
   Show that the translation from (left-to-right) deterministic finite automata to monoids is exponential in the worst case, even if there is a  right-to-left deterministic automaton of same  size.
}{
An example is  words of length $2n+1$, over alphabet $\set{a,b}$, where the middle letter is $a$. For the same reasons as in Exercise~\ref{ex:deterministic-to-monoids}, this language needs an exponential size monoid to be recognised. On the other hand, the language is clearly recognised by a deterministic automaton (running in either direction), with $\Oo(n)$ states.}

\mikexercise{\label{ex:commutative-regular-languages}
Show that a language $L \subseteq \Sigma^*$ is recognised by a finite commutative monoid if and only if it can be defined by a finite Boolean combination of conditions of the form ``letter $a$ appears exactly $n$ times'' or ``the number of appearances of letter $a$ is congruent to $\ell$ modulo $n$''.
}
{No surprises here -- these are the commutative languages, i.e.~languages $L \subseteq \Sigma^*$ which are commutative in the sense that 
\begin{align*}
w ab v \in L \quad \text{iff} \quad wbav \in L \qquad  \text{for all $w,a,b,v \in \Sigma^*$}.
\end{align*}
There is, however, an extended description of commutative languages, which is given below. 

Consider a language  $L \subseteq \Sigma^*$ recognised by a monoid homomorphism
\begin{align*}
 h : \Sigma^* \to M,
\end{align*}
where $M$ is a commutative monoid. 
By commutativity of $M$, we have
\begin{align*}
h(w) = \prod_{a \in \Sigma} h(a)^{\#_a(w)}  \qquad \text{for every $w \in \Sigma^*$,}
\end{align*} 
where $\#_a(w) \in \set{0,1,\ldots}$ is the number of appearances of letter $a \in \Sigma$ in $w$. For every $a \in M$ the sequence
\begin{align*}
a^0, a^1, a^2, \ldots  \in M
\end{align*}
is easily seen to be ultimately periodic, which means that after cutting of a finite prefix of the sequence we get a sequence that is periodic. This in turn implies that for every $a,b \in M$, the set
\begin{align*}
\set{ i \in \set{0,1,\ldots} : a^i = b }
\end{align*}
is defined by a formula $\varphi(i)$ which is a finite  Boolean combination of formulas which have one of the following forms:
\begin{enumerate}
   \item $i = k$ for some $k \in \set{0,1,\ldots}$; or
   \item $i \equiv k \mod m$ for some $k \in \set{0,1,\ldots}$.
\end{enumerate}
Putting these observations together, we see that for every $b \in M$, the set
\begin{align*}
\set{w \in \Sigma^* : b = \prod_{a \in \Sigma} h(a)^{\#_a(w)}  },
\end{align*}
which is equal to the inverse image $h^{-1}(b)$, 
is defined by a finite Boolean combination of formulas of one of the following forms:
\begin{enumerate}
   \item[(i)] $\#_a(w) = k$ for some $a \in \Sigma$ and $k \in \set{0,1,\ldots}$; or
   \item[(ii)] $\#_a(w) \equiv k \mod m$ for some $a \in \Sigma$ and  $k \in \set{0,1,\ldots}$.
\end{enumerate}
It follows that the following conditions are equivalent for every language:
\begin{itemize}
   \item recognised by a finite commutative monoid;
   \item regular and commutative as a language;
   \item defined by a finite Boolean combination of conditions as in (i) and (ii).
\end{itemize}

}

\mikexercise{
Prove that surjectivity of $\red g$ is important in Theorem~\ref{thm:syntactic-monoid}. 	
}{
Consider the language $(aa)^*$ of even length words, which is recognised by the homomorphism 
\begin{align*}
h : a^* \to \Int_2  = (\set{0,1}, +_{\mathrm{mod} 2}).
\end{align*}
Define $\Int_2 + \bot$ to be the extension of $\Int_2$ with an absorbing element $\bot$. Define 
\begin{align*}
g : a^* \to \Int_2 + \bot
\end{align*}
to be the same function as $h$, except that the co-domain is bigger. In particular, $g$ is not surjective.  We claim that there is no monoid homomorphism $f$ which makes the following diagram commute
\begin{align*}
\xymatrix{
   a^* \ar[r]^h
   \ar[dr]_g & \Int_2 \\
    & \Int_2 + \bot 
    \ar[u]_f
}
\end{align*}
Since $\bot$ is absorbing in $\Int_2 + \bot$, then the image of $f(\bot)$ must be an absorbing in $\Int_2$, and there are no absorbing elements in $\Int_2$. It follows that there is no $f$ which makes the diagram commute. }

\mikexercise{
   Show that for every language, not necessarily regular, its syntactic homomorphism is the function
   \begin{align*}
   w \in \Sigma^* \qquad \mapsto \qquad \myunderbrace{(q \mapsto qw)}{state transformation \\ \scriptsize in the sytactic automaton},
   \end{align*}
   where the syntactic automaton is the deterministic finite automaton from the Myhill-Nerode theorem.
}{
   The syntactic congruence identifies two words $w_1$ and $w_2$ if 
\begin{align*}
 uw_1v \in L \iff uw_2v \in L \qquad \text{for all $u,v \in \Sigma^*$.}
\end{align*}
To prove the exercise, we will show that two words are equivalent in the above sense if and only if they induce the same state transformations in the syntactic automaton.  Clearly if the words have the same state transformations, then they are equivalent. We are left with proving the opposite, i.e.~different state transformations imply non-equivalence under the syntactic congruence.

   Suppose that  $w_1$ and $w_2$ have different state transformations. This  means that there is some state $q$ of the syntactic automaton  such that  
   \begin{align*}
   q w_1  \neq   q w_2.
   \end{align*}
   Like any state of the syntactic automaton, $q$ is reached from the initial state by reading some word $u$. Since the states $q w_1$ and $q w_2$ are different,  there must be some word $u$ such that exactly one of the states 
\begin{align*}
   (q w_1)u \qquad (q w_2)u
\end{align*}
is accepting.  Summing up, we have found two words $u,v$ such that exactly one of the words
\begin{align*}
u w_1 v \qquad u w_2 v
\end{align*}
   is in the language, thus proving that $w_1$ and $w_2$ are not equivalent under the syntactic congruence.
}

\mikexercise{
\label{ex:half-eilenberg-star}    
Let $\Ll$ be a class of regular languages with the following closure properties:
\begin{itemize}
    \item $\Ll$ is  closed under Boolean combinations;
    \item $\Ll$ is closed under inverse images of homomorphisms $h : \Sigma^* \to \Gamma^*$;
    \item  Let $L \subseteq \Sigma^*$ be a language in  $\Ll$. For every  $u,w \in \Sigma^*$, $\Ll$ contains the inverse image of $L$ under the following operation:
    \begin{align*}
    v \mapsto uvw. 
    \end{align*}
\end{itemize}
Show that if $L$ belongs to $\Ll$, then the same is true for every language recognised by its syntactic monoid.
}{}
\section{Green's relations and the structure of finite semigroups}
\label{sec:greens-relations}
In this section, we describe some of the structural theory of finite semigroups. 
This theory is based on Green's relations\footfullcite{green51}, which are pre-orders in a semigroup that correspond to prefixes, suffixes and infixes.

\newcommand{\prefclass}[1]{[#1)}
\newcommand{\sufclass}[1]{(#1]}
\newcommand{\jclass}[1]{(#1)}
\newcommand{\hclass}[1]{[#1]}

We begin with idempotents, which are  ubiquitous in the analysis of finite semigroups.  A semigroup element $e$ is called  \emph{idempotent} if it satisfies 
\begin{align*}
ee = e.
\end{align*}
\begin{example}
	In a group, there is a unique idempotent, namely the group identity. There can be several idempotents, for example  all elements are idempotent in  the semigroup 
	\begin{align*}
	(\set{1,\ldots,n}, \max).
	\end{align*}
\end{example}
One can think of idempotents as being a relaxed version of identity elements. 

\begin{lemma}[Idempotent Power Lemma]\label{lem:idempotent-lemma}
	Let $S$ be a finite semigroup. For every $a \in S$, there is exactly one idempotent in the set
	\begin{align*}
	\set{a^1,a^2,a^3, \ldots} \subseteq S.
	\end{align*}
\end{lemma}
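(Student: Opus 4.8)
The plan is to prove existence and uniqueness separately; both arguments use only the finiteness of $S$ and the fact (established earlier) that associativity makes the powers $a^1,a^2,a^3,\dots$ unambiguous.

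For \textbf{existence}, I would start with the pigeonhole principle: since $S$ is finite, the sequence $a^1,a^2,a^3,\dots$ repeats a value, so there are $i<j$ with $a^i=a^j$. Writing $p=j-i$, an easy induction gives $a^k=a^{k+p}$ for every $k\ge i$, and therefore $a^k=a^{k+mp}$ for every $k\ge i$ and every $m\ge 0$. Now I would choose $n$ to be a multiple of $p$ with $n\ge i$ (for concreteness, $n=ip$). Then $a^{2n}=a^{n+n}=a^n$, because the exponent $n$ being added is a multiple of $p$ and the base exponent $n$ is $\ge i$; that is, $a^n$ is idempotent and lies in the set.

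For \textbf{uniqueness}, the key observation is that if $e=a^m$ is idempotent, then $a^{km}=e$ for every $k\ge 1$: this is immediate by induction, since $a^{(k+1)m}=a^{km}\cdot a^m=e\cdot e=e$. So suppose $a^m$ and $a^n$ are both idempotent. Applying the observation to $a^m$ with exponent $k=n$ gives $a^{mn}=a^m$, and applying it to $a^n$ with exponent $k=m$ gives $a^{mn}=a^n$; hence $a^m=a^n$, so there is at most one idempotent among the powers of $a$.

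The only place that calls for a little care — and the nearest thing to an obstacle in an otherwise routine argument — is the index bookkeeping in the existence step: the exponent $n$ must simultaneously be large enough to lie past the pre-periodic part of the sequence (so $n\ge i$) \emph{and} be a multiple of the period $p$ (so that doubling it leaves the value unchanged). Picking $n$ to be a common multiple of $p$ and a bound $\ge i$, such as $n=ip$, satisfies both constraints at once, which is why that choice is made.
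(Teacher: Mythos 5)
Your proof is correct and takes essentially the same route as the paper: for existence you locate a repetition $a^i=a^j$ with period $p=j-i$, observe that exponents $\ge i$ are invariant under adding multiples of $p$, and pick $n=ip$ so that $a^{2n}=a^n$; the paper does precisely this (with $n,k$ in place of your $i,p$, settling on the exponent $nk$). Your uniqueness argument — that an idempotent $a^m$ satisfies $(a^m)^n=a^m$, so $a^{mn}=a^m=a^n$ — is the paper's uniqueness argument verbatim.
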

\begin{proof}
	Because the semigroup is finite, the sequence $a^1,a^2,\ldots$ must contain a repetition, i.e.~there must exist $n,k \in \set{1,2,\ldots}$ such that 
	\begin{align*}
	a^{n} = a^{n+k} = a^{n + 2k} = \cdots. 
	\end{align*}
	After multiplying both sides of the above equation by $a^{nk-n}$ we get
	\begin{align*}
	a^{nk} = a^{nk+k} = a^{nk + 2k} = \cdots,
	\end{align*}
	and therefore $a^{nk} = a^{nk+nk}$ is an idempotent. To prove uniqueness of the idempotent, suppose $n_1,n_2 \in \set{1,2,\ldots}$ are powers such that  that $a^{n_1}$ and $a^{n_2}$ are idempotent. The we have 
	\begin{align*}
		\myunderbrace{ a^{n_1} = (a^{n_1})^{n_2}
		}{because $a^{n_1}$\\
	\scriptsize is idempotent}  
	= a^{n_1 n_2} =
	\myunderbrace{  (a^{n_2})^{n_1} = a^{n_2}
		}{because $a^{n_2}$\\
	\scriptsize is idempotent}
	\end{align*}
\end{proof}
Finiteness is important in the above lemma. For example the infinite semigroup
\begin{align*}
(\set{1,2,\ldots},+)
\end{align*}
contains no idempotents.
For $a \in S$, we use the name \emph{idempotent power} for the element $a^n$, and we use the name \emph{idempotent exponent} for the number $n$. The idempotent power is unique, but the idempotent exponent is not. It is easy to see that there is always an idempotent exponent which is at most the size of the semigroup, and idempotent exponents are closed under multiplication. Therefore, if a semigroup has $n$ elements, then the factorial $n!$ is an idempotent exponent for every element of  the semigroup. This motivates the following notation:  we write $a^{\momega
}$ for the idempotent power of  $a$. The  notation usually used in the semigroup literature is $a^\omega$, but we will use $\omega$ for infinite words.

The analysis presented in the rest of this chapter will hold in any semigroup which satisfies the  conclusion of the Idempotent Power Lemma.

\section*{Green's relations}
We now give the main definition of this chapter.
\begin{definition}[Green's relations] Let $a,b$ be elements of a semigroup $S$. We say that $a$ is a \emph{prefix} of $b$  if there exists a  solution $x$ of
	\begin{align*}
	ax =b.
	\end{align*}
	The solution $x$ can be an element of the semigroup, or  empty (i.e.~$a=b$). Likewise we define the suffix and infix relations, but with the equations
\begin{align*}
\myunderbrace{xa = b}{suffix}  \qquad
\myunderbrace{xay = b}{infix}.
\end{align*}
In the case of the infix relation,  one or both of $x$ and $y$ can be empty.
\end{definition}

Figure~\ref{fig:j-classes-three} illustrates  Green's relations on the example of the monoid of partial functions on a three element set.
The prefix, suffix and infix relations are pre-orders, i.e.~they are transitive and reflexive\footnote{Another description of the prefix pre-order is  that $a$ is a prefix of $b$ if
\begin{align}\label{eq:ideal-inclusion}
a S^{1} \supseteq b S^1.
\end{align}
In the above,  $S^1$ is the monoid which is obtained from $S$ by adding an identity element, unless it was already there. The sets $a S^{1}, bS^{1}$ are called \emph{right ideals}. Because of the description in terms of inclusion of right ideals, the semigroup literature  uses the notation
\begin{align*}
a \ge_{\Rr} b   \eqdef aS^{1}  \supseteq bS^1
\end{align*}
for the prefix relation. Likewise, $a \ge_{\Ll} b$ is used for the suffix relation, which is defined in terms of left ideals. Also, for some mysterious reason, $a \ge_{\Jj} b$ is used for the infix relation.   We avoid this notation, because it makes longer words smaller. 
}.  They need not be anti-symmetric, for example in a group every element is a prefix (and also a suffix and infix) of every other element. We say that two elements of a semigroup are in the same \emph{prefix class} if they are prefixes of each other. Likewise we define \emph{suffix classes} and \emph{infix classes}. 

Clearly every prefix  class is contained in some infix class, because prefixes are special cases of infixes. Therefore, every infix class is partitioned into prefix classes. For the same reasons, every infix class is partitioned into suffix classes. The following lemma describes the structure of these partitions.

\begin{figure}[!p]
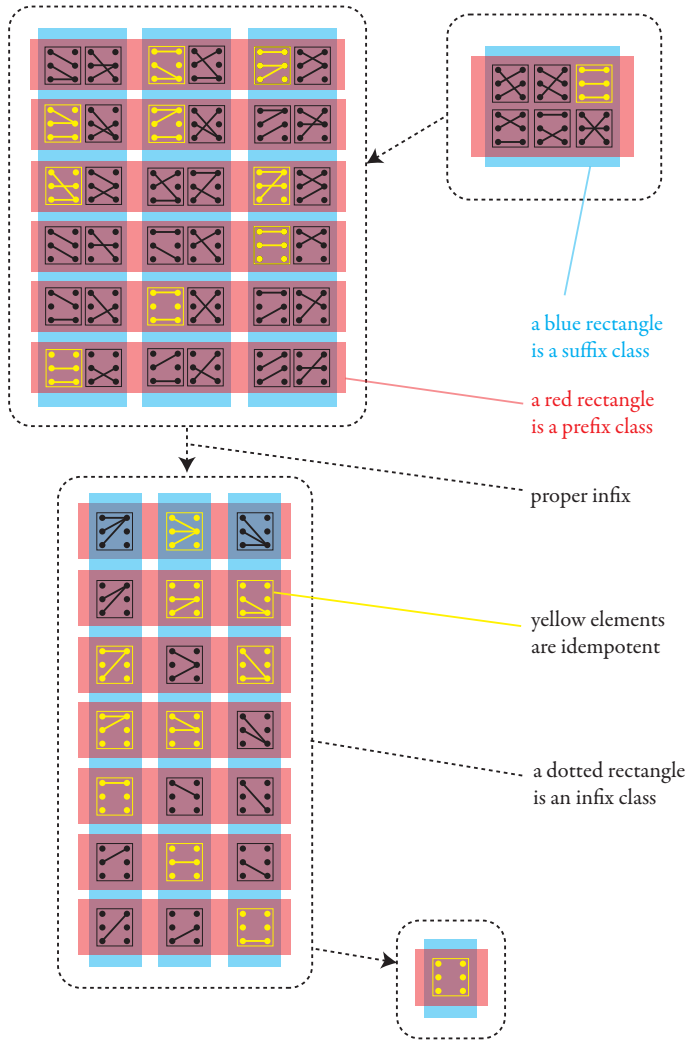

\mypic{1}
	\caption{The monoid of partial functions from  a three element set to itself, partitioned into prefix, suffix and infix classes. In this particular example, the infix classes are totally ordered, which need not be the case in general. }
	\label{fig:j-classes-three}   
\end{figure}

\clearpage

\begin{lemma}[Egg-box lemma] The following hold in every finite semigroup.
	 \begin{enumerate}
		\item \label{Egg-box:prefix-incomparable} all distinct prefix classes in a given infix class are incomparable: 
		\begin{align*}
		\text{$a,b$ are infix equivalent, and  $a$ is a prefix of $b$} \  \Rightarrow \ \text{$a,b$ are prefix equivalent}
		\end{align*}
		\item \label{Egg-box:prefix-suffix-must-intersect} if a prefix class and a suffix class are contained in the same infix class, then they have nonempty intersection;
		\item \label{Egg-box:same-size-prefix} all prefix classes in the same infix class have the same size.
	\end{enumerate}
\end{lemma}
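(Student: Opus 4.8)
The plan is to work throughout with the observation that left-multiplication and right-multiplication give the key maps between Green's classes, and to exploit finiteness via the Idempotent Power Lemma to turn "there exists a solution" into "there exists an invertible witness". First I would record the basic \emph{stability} fact: in a finite semigroup, if $a$ is an infix of $b$ and also a prefix of $b$, then $a$ and $b$ are prefix equivalent — this is exactly part~\eqref{Egg-box:prefix-incomparable}, so I would prove it first, since the other two parts lean on it. To prove it, write $b = xay$ with $a$ an infix of $b$, and $b = ax'$ with $a$ a prefix of $b$; iterating, $a$ is a prefix of $b$, which is a prefix of... one builds $a = b u = x a y u$ for suitable $u$ (using the prefix equivalence assumption $a = bu$ that one is trying to establish — more carefully: from $a \sim_{\text{infix}} b$ and $a \le_{\text{prefix}} b$ we get words $p,q$ with $a = pbq$ and $b = a r$; then $a = p a r q = p^n a (rq)^n$ for all $n$). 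Taking $n$ to be an idempotent exponent of $p$ (as a left-multiplier) and of $rq$ (as a right-multiplier) makes $p^n$ act as a left identity on $a$ and $(rq)^n$ act as a right identity, so $a = a(rq)^n$ and hence $b = ar$ gives $a = b \cdot r(rq)^{n-1}\cdots$, i.e. $b$ is a prefix of $a$; combined with $a \le_{\text{prefix}} b$ this yields prefix equivalence. The symmetric statement for suffixes is proved the same way.

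Next, for part~\eqref{Egg-box:same-size-prefix}, I would fix an infix class $J$ and two prefix classes $P_1, P_2 \subseteq J$; pick $a \in P_1$, $b \in P_2$. Since $a,b$ are infix equivalent, write $b = u a v$ and $a = u' b v'$. The idea is that right-multiplication $x \mapsto xv$ should map $P_1$ bijectively onto $P_2$, with inverse $y \mapsto y v'$ (after first adjusting on the left). The key sublemma is: if $a, a'$ lie in the same prefix class and $a = a' s$, then the map $x \mapsto xs$ is a bijection from the prefix class onto itself — this follows by the same idempotent-power trick as above (there is $t$ with $a' = at$, so $x \mapsto xst$ and $x \mapsto xts$ are each, after raising to an idempotent exponent, the identity on the prefix class, hence $x\mapsto xs$ is injective on it; surjectivity is symmetric). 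Then one checks that left-multiplication $x \mapsto ux$ carries $P_1$ into $P_2$ and is injective there (using part~\eqref{Egg-box:prefix-incomparable}: it cannot drop to a strictly smaller infix class and return, and within the infix class it preserves suffix class and hence, combined with the prefix data, is injective). Running the same argument with $u'$ in the other direction gives maps $P_1 \to P_2 \to P_1$ whose composite is a bijection of $P_1$, and symmetrically on $P_2$, forcing $|P_1| = |P_2|$.

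For part~\eqref{Egg-box:prefix-suffix-must-intersect}, fix a prefix class $P$ and a suffix class $Q$ inside the same infix class $J$; pick $a \in P$, $b \in Q$. Since $a \sim_{\text{infix}} b$, write $b = u a v$. I claim $ua \in P$: indeed $ua$ is a suffix of... rather, $a$ is a prefix of... the cleanest route is to first show $av$ is in the suffix class $Q$. We have $b = (ua)v$, so $b$ is a suffix of $v$-translate; and $a = u' b v'= u'(uav)v'$, so $a = u' u a v v'$, and by the idempotent trick (raising the left-multiplier $u'u$ and right-multiplier $vv'$ to an idempotent exponent) we get $a = a(vv')^{\momega}$, whence $a$ is a prefix of $av$ and $av$ is a prefix of $a\cdot(\text{stuff})$ — so $a$ and $av$ are prefix equivalent, i.e. $av \in P$. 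Likewise $b = uav = u(av)$ shows $av$ is a suffix of $b$, and $b = uav$ with the idempotent trick on the other side shows $b$ is a suffix of $av$, so $av \in Q$. Hence $av \in P \cap Q$, which is nonempty as required.

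The main obstacle is the bookkeeping in part~\eqref{Egg-box:same-size-prefix}: one must be careful that the candidate bijection between two prefix classes is genuinely well-defined (lands in the right prefix class and not in a smaller infix class) and injective, and the honest way to nail this down is the general "if $x \mapsto xsx'$ restricted to a class is surjective then it is a bijection" principle, which is where finiteness — via idempotent powers of the relevant translations — does the real work. Everything else is repeated application of that one trick, so I would isolate it as a preliminary sublemma and then the three parts become short.
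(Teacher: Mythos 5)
Your arguments for parts~(1) and~(2) are correct and are essentially the paper's: part~(1) is the same idempotent-power computation (with a harmless typo --- you write $a = b\cdot r(rq)^{n-1}$ where $a=b\cdot q(rq)^{n-1}$ is meant), and in part~(2) you simply re-derive the trick inline where the paper cites part~(1). Part~(3), however, contains a genuine error. Your ``key sublemma'' --- \emph{if $a,a'$ lie in the same prefix class and $a=a's$, then $x\mapsto xs$ is a bijection of that prefix class onto itself} --- is false. Take the monoid of all functions on $\{1,2\}$ with left-to-right composition, $a=\mathrm{const}_1$, $a'=\mathrm{const}_2$, $s=\mathrm{const}_1$; then $a=a's$, the prefix class of $a$ is $\{\mathrm{const}_1,\mathrm{const}_2\}$, yet $x\mapsto xs$ sends both elements to $\mathrm{const}_1$. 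The structural reason is that the prefix pre-order is a \emph{left} congruence but not a right congruence, so right-multiplication need not respect prefix classes at all, and your claim that $x\mapsto x(ts)^\momega$ ``is the identity on the prefix class'' only establishes that $(ts)^\momega$ fixes $a$ (and $(st)^\momega$ fixes $a'$), not the remaining elements of the class --- in the example above $(ts)^\momega=\mathrm{const}_1$ does not fix $a'$.

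Your subsequent pivot to left-multiplication $c\mapsto uc$ is the right move and matches the paper, but the justification cannot be left at ``it cannot drop to a strictly smaller infix class and return.'' The statement that actually does the work is that $(u'u)^\momega$ acts as the identity on the \emph{whole} prefix class $P_1$ of $a$: from $b=uav$ and $a=u'bv'$ one gets $a=(u'u)^\momega a(vv')^\momega$, hence $(u'u)^\momega a=a$, and then for every $c=az\in P_1$ one has $(u'u)^\momega c=\bigl((u'u)^\momega a\bigr)z=c$. This is precisely where the fact that elements of $P_1$ have the form $az$ is used, and precisely why left-multiplication succeeds where right-multiplication fails. It gives $c\mapsto uc$ a left inverse on $P_1$ (hence injectivity), and shows $uc$ and $c$ are infix-equivalent (each is a one-sided factor of the other via $u$ and $(u'u)^{\momega-1}u'$), after which part~(1) places $uc$ in the prefix class of $ua$, i.e.\ in $P_2$. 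The paper avoids this bookkeeping by first invoking part~(2) to reduce to the case where $a,b$ lie in the same suffix class, so that $a=xb$, $b=ya$ and the maps $c\mapsto yc$, $c\mapsto xc$ are seen to be mutually inverse by the one-line computation $xyc=xy(az)=(xb)z=az=c$; you may find that route cleaner.
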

Of course, by symmetry, the lemma remains true after swapping prefixes with suffixes.
\begin{proof}
	\ 
	\begin{enumerate}
		\item This item says that distinct prefix classes in the same infix class are incomparable with respect to the prefix relation. This item of the Egg-box Lemma is the one that will be used most often. 
		
		Suppose that $a,b$ are infix equivalent and $a$ is a prefix of $b$, as witnessed by  solutions $x,y,z$ to the equations
		\begin{align*}
		   b= ax \qquad   a = ybz.
		\end{align*}
		As usual, each of $x,y,z$ could be empty. This can be illustrated as 
		\begin{align*}
			\xymatrix@C=2cm{
				a
				\ar@/^/[r]^{c \mapsto cx}
				&
				b
				\ar@/^/[l]^{c \mapsto ycz}
			}
			\end{align*}
		Consider the idempotent exponent $\momega \in \set{1,2,\ldots}$  which arises from  Idempotent Power Lemma. We have:
		\begin{eqnarray*}
			b \equalbecause{follow $\momega + \momega$ times the loop around $a$, then go to $b$} \\
			y^{\momega + \momega}  a (xz)^{\momega + \momega} x \equalbecause{$y^\momega$ is an idempotent} \\
			y^{\momega} a (xz)^{\momega + \momega} x  \equalbecause{follow $\momega$ times the loop around $a$}\\			
			a (xz)^\momega x,
		\end{eqnarray*}
		which establishes that $b$ is a prefix of $a$, and therefore $a,b$ are in the same prefix class. 
		\item We now show that prefix and suffix classes in the same infix class must intersect. Suppose that $a,b$ are in the same infix class, as witnessed by 
		\begin{align*}
			a =xby.
		\end{align*}
		With respect to the infix relation, $by$ is between $b$ and $a=xby$, and therefore it must be in the same infix class as both of them.
		We have 
		\begin{align*}	
		\overbrace{x\myunderbrace{b \ \ y}{$b$ is a prefix of $by$}}^{\text{$by$ is a suffix of $xby = a$}},
		\end{align*}
		and therefore, thanks to the previous item, $by$ is prefix equivalent to $b$ and suffix equivalent to $a$. This  witnesses that the prefix class of $b$ and the suffix class of $a$ have nonempty intersection.
		\item We now show that all prefix classes  in the same infix class have the same size. Take some two prefix classes in the same infix class, given by representatives $a,b$. We can assume that $a,b$ are in the same suffix class, thanks to the previous item. Let 
		\begin{align*}
		a = xb \qquad b = ya
		\end{align*}
		be witnesses for the fact that $a,b$ are in the same suffix class. The following claim implies that the two prefix classes under consideration have the same size.
		\begin{claim}\label{claim:mutually-inverse-bijections}
			The following maps are mutually inverse bijections
				\begin{align*}
				\xymatrix@C=2cm{
					\text{prefix class of $a$} 
					\ar@/^/[r]^{c \mapsto yc}
					&
					\text{prefix class of $b$}
					\ar@/^/[l]^{c \mapsto xc}
				}
				\end{align*}
		\end{claim}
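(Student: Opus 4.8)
The plan is to work entirely with the two equations $a = xb$ and $b = ya$ that witness $a$ and $b$ lying in a common suffix class (treating a missing solution as the identity of $S^1$, so $x,y \in S^1$), and to extract from them the one algebraic fact that drives everything: substituting one equation into the other yields
\begin{align*}
a = x(ya) = (xy)a \qquad\text{and symmetrically}\qquad b = (yx)b .
\end{align*}
So $xy$ is a left identity for $a$ and hence for every element of which $a$ is a prefix, and likewise $yx$ is a left identity for $b$.

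First I would verify that $c \mapsto yc$ sends the prefix class of $a$ into the prefix class of $b$. Given $c$ prefix-equivalent to $a$, fix $z_1,z_2 \in S^1$ with $az_1 = c$ and $cz_2 = a$. Then $yc = y(az_1) = (ya)z_1 = bz_1$, so $b$ is a prefix of $yc$, while $b = ya = y(cz_2) = (yc)z_2$, so $yc$ is a prefix of $b$; hence $yc$ is in the prefix class of $b$. The symmetric computation, using $a = xb$ instead, shows $c \mapsto xc$ sends the prefix class of $b$ into the prefix class of $a$.

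Next I would show the two maps compose to the identity in both orders. For $c$ in the prefix class of $a$, pick $z_1 \in S^1$ with $az_1 = c$; then
\begin{align*}
x(yc) = (xy)(az_1) = \big((xy)a\big)z_1 = az_1 = c
\end{align*}
by the displayed identity $(xy)a = a$, and symmetrically $y(xd) = d$ for every $d$ in the prefix class of $b$. Together with the previous step this makes $c\mapsto yc$ and $d\mapsto xd$ mutually inverse bijections between the two prefix classes, which is exactly the claim (and in particular the two classes are equinumerous).

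I do not expect a real obstacle here; the only points requiring care are the bookkeeping convention that an empty solution means multiplying by the identity of $S^1$ (so that $xy$, $yc$, $xd$ are always defined), and keeping straight which direction of the prefix relation each substitution witnesses. It is worth noting that, unlike the proof of item~\ref{Egg-box:prefix-incomparable}, this argument uses neither the Idempotent Power Lemma nor the earlier parts of the Egg-box lemma — only associativity and the two equations defining a common suffix class.
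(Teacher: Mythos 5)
Your proof is correct, and it follows a genuinely more elementary route than the paper. The paper establishes $x(yc) = c$ and $y(xd) = d$ exactly as you do, but to justify that the image of $c \mapsto yc$ lands in the prefix class of $b$ (rather than merely in the infix class of $b$), it invokes item~\ref{Egg-box:prefix-incomparable} of the Egg-box Lemma: every $yaz$ has $b$ as a prefix and lies in the same infix class, hence is prefix-equivalent to $b$. That item in turn rests on the Idempotent Power Lemma and therefore on finiteness. You sidestep this entirely by verifying prefix-equivalence in both directions directly from the two witnessing equations $a = xb$ and $b = ya$: from $az_1 = c$ and $cz_2 = a$ you get $yc = bz_1$ and $b = (yc)z_2$, so $yc$ and $b$ are mutually prefixes using nothing but associativity. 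What this buys you, as you correctly note, is that the claim holds verbatim in arbitrary semigroups, not just finite ones; what the paper's route buys is that it keeps the argument parallel to the surrounding uses of item~\ref{Egg-box:prefix-incomparable} and avoids re-deriving the containment by hand. Both proofs are sound; yours isolates the actual algebraic content more cleanly.
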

		\begin{proof}
			Suppose that $c$ is in the prefix class of $a$,  as witnessed by a decomposition $c=az$. If we apply sequentially both maps in the statement of the claim to $c$, then we get 
			\begin{align*}
			xyc = xyaz \stackrel{ya=b}= xbz \stackrel{xb=a}=  az \stackrel{az=c}= c.
			\end{align*}
			This, and a symmetric argument for the case when $c$ is in the prefix class of $b$, establishes that the maps in the statement of the claim are mutually inverse. It remains to justify that the images of the maps are as in the statement of the claim, i.e.~the image of the top map is the prefix class of $b$, and the image of the bottom map is the prefix class of $a$.  Because the two maps are mutually inverse, and they prepend elements to their inputs, it follows that each of the maps has its image contained in the infix class of $a,b$. To show that the image of the top map is in the prefix class of $b$ (a symmetric argument works for the bottom map), we observe that every element of this image is of the form $yaz$, and therefore it has $b=ya$ as a prefix, but it is still in the same infix class as $a,b$ as we have observed before, and therefore it must be prefix equivalent to $b$ thanks to the item~\ref{Egg-box:prefix-incomparable} of the lemma. 
		\end{proof}
	\end{enumerate}
\end{proof}

The Egg-box Lemma establishes that each infix class has the structure of a rectangular grid (which apparently reminded hungry author  of a box of eggs), with the rows being prefix classes and the columns being suffix classes. Let us now look at the eggs in the box:  define an $\Hh$-class to be an intersection of some prefix class and some suffix class, both taken from some common infix class.
By item~\ref{Egg-box:prefix-suffix-must-intersect} of the Egg-box Lemma, this intersection is nonempty. The following lemma shows that all $\Hh$-classes in the same infix class have the same size.

\begin{lemma}\label{lem:mutually-inverse-bijections-h}
	If  $a,b$ are in the same infix class, then  there exist possibly empty $x,y$ such that the following is a bijection
		\begin{align*}
		\xymatrix@C=2cm{
			\text{$\Hh$-class of $a$} 
			\ar@/^/[r]^{c \mapsto xcy}
			&
			\text{$\Hh$-class of $b$}
		}
		\end{align*}
\end{lemma}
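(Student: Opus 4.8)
The plan is to compose two applications of the bijections already constructed in the Egg-box Lemma, namely Claim~\ref{claim:mutually-inverse-bijections}, with one extra move: first shift along suffix classes, then shift along prefix classes. More precisely, since $a,b$ are in the same infix class, item~\ref{Egg-box:prefix-suffix-must-intersect} of the Egg-box Lemma lets me pick a representative $a'$ that is prefix equivalent to $a$ and suffix equivalent to $b$ (take $a'$ in the intersection of the prefix class of $a$ with the suffix class of $b$). Then $a$ and $a'$ lie in the same prefix class, so by Claim~\ref{claim:mutually-inverse-bijections} (applied with prefixes and suffixes swapped, i.e.~the dual version noted right after the Egg-box Lemma) there are elements $x_1,y_1$ giving a bijection between the $\Hh$-class of $a$ and the $\Hh$-class of $a'$ implemented by $c \mapsto x_1 c$ on the prefix side — wait, I want to be careful: the $\Hh$-class map needs to move within a fixed suffix class while changing the prefix class, or vice versa.

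So here is the cleaner route. First I would treat the case where $a$ and $b$ are in the same prefix class. By the dual of Claim~\ref{claim:mutually-inverse-bijections} (prefixes swapped with suffixes), there are possibly empty $y,y'$ such that $c \mapsto cy$ is a bijection from the suffix class of $a$ to the suffix class of $b$ with inverse $c \mapsto cy'$, and this bijection restricts to a bijection between the suffix classes; crucially, because right-multiplication by $y$ preserves the property of being a prefix of a fixed element (if $c = d z$ then $cy = d(zy)$), it maps the prefix class of $a$ into the prefix class of $b$, hence maps the $\Hh$-class of $a$ (intersection of its prefix class with its suffix class) bijectively onto the $\Hh$-class of $b$. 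Symmetrically, if $a$ and $b$ are in the same suffix class, left-multiplication by the appropriate element gives a bijection of $\Hh$-classes implemented by $c \mapsto xc$. Then for the general case, where $a,b$ merely share an infix class, I interpose the representative $a'$ from item~\ref{Egg-box:prefix-suffix-must-intersect}: $a' $ is prefix-equivalent to $a$ and suffix-equivalent to $b$ (or the other way around), so one of the two special cases moves the $\Hh$-class of $a$ to that of $a'$ via $c \mapsto cy_1$, and the other moves the $\Hh$-class of $a'$ to that of $b$ via $c \mapsto x_1 c$; composing, $c \mapsto x_1 c y_1$ is the desired bijection, and absorbing the composition into single elements $x = x_1$, $y = y_1$ gives the statement as written.

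The main obstacle, and the only place real care is needed, is verifying that these one-sided multiplication maps actually carry an $\Hh$-class \emph{onto} the target $\Hh$-class and not merely into it, and that they are injective on it. Injectivity follows from having an explicit two-sided inverse, exactly as in the proof of Claim~\ref{claim:mutually-inverse-bijections}: the computation $xyc = c$ for $c$ in the relevant class reappears verbatim, using the witnessing equations for prefix/suffix equivalence. Surjectivity onto the correct $\Hh$-class is where I would lean on item~\ref{Egg-box:prefix-incomparable}: an element of the image is visibly of the form $cy$ (or $xc$), so it has the corresponding element of $b$'s class as a suffix (resp.~prefix), and it stays in the common infix class because pre-/post-multiplication can only move you down the infix order while the inverse map moves you back up, forcing infix-equivalence; then item~\ref{Egg-box:prefix-incomparable} upgrades "has $b$ as a suffix, same infix class" to "suffix-equivalent to $b$", and the prefix coordinate is unchanged, so the image lands exactly in the $\Hh$-class of $b$. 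I expect the write-up to be short, essentially a bookkeeping exercise layered on the two bijections already in hand.
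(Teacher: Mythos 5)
Your proposal is correct and follows essentially the same route as the paper: reduce to the two special cases (same prefix class, via the dual of Claim~\ref{claim:mutually-inverse-bijections} and right-multiplication; same suffix class, via the original claim and left-multiplication), then compose through the intermediate element supplied by item~\ref{Egg-box:prefix-suffix-must-intersect} of the Egg-box Lemma. The only difference is cosmetic: where the paper observes in one line that the Claim's bijection between prefix classes also preserves suffix classes (so restricting to the intersection is automatic), you re-derive the containment using item~\ref{Egg-box:prefix-incomparable}, which is sound but more roundabout than necessary.
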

\begin{proof}Consider first the special case of the lemma, when  $a$ and $b$ are in the same suffix class. Take the map from Claim~\ref{claim:mutually-inverse-bijections}, which maps bijectively the prefix class of $a$ to the prefix class of $b$. Since this map    preserves suffix  classes, it maps bijectively the $\Hh$-class of $a$ to the $\Hh$-class of $b$. By a symmetric argument, the lemma is also true when $a$ and $b$ are in the same prefix class. 
	
	For the general case, we use item~\ref{Egg-box:prefix-suffix-must-intersect} of the Egg-box Lemma, which says that there must be some intermediate element that is in the same prefix class as $a$ and in the same suffix class as~$b$, and we can apply the previously proved special cases to go from the $\Hh$-class of $a$ to the $\Hh$-class of the intermediate element, and then to the $\Hh$-class of $b$.
\end{proof}

The following lemma shows a dichotomy for an $\Hh$-class: either it is a group, or the  multiplying any two elements from the $\Hh$-class not only  falls outside the $\Hh$-class, but even outside  the corresponding infix class. 

\begin{lemma}[$\Hh$-class Lemma]\label{lem:h-classes} The following conditions are equivalent for every $\Hh$-class $G$ in a finite semigroup: 
	\begin{enumerate}
		\item \label{hclass:contains-idempotent} $G$ contains an idempotent;
		\item \label{hclass:some-product-j} $ab$ is in the same infix class as $a$ and $b$ for some $a,b \in G$;
		\item \label{hclass:some-product} $ab \in G$ for some $a,b \in G$;
		\item \label{hclass:all-product}$ab \in G$ for all $a,b \in G$;
		\item \label{hclass:group}$G$ is a group (with multiplication inherited from the semigroup).
	\end{enumerate}
	
\end{lemma}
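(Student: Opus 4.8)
The plan is to prove the chain of implications $\ref{hclass:group} \Rightarrow \ref{hclass:contains-idempotent} \Rightarrow \ref{hclass:some-product-j} \Rightarrow \ref{hclass:some-product} \Rightarrow \ref{hclass:all-product} \Rightarrow \ref{hclass:group}$, so that all five conditions become equivalent. Several of these links are immediate: $\ref{hclass:group} \Rightarrow \ref{hclass:contains-idempotent}$ because a group has an identity, which is idempotent, and that identity lies in $G$; $\ref{hclass:some-product} \Rightarrow \ref{hclass:some-product-j}$ is trivial since $G$ is contained in a single infix class, so $ab \in G$ forces $ab$ to be infix-equivalent to $a$ and $b$; and $\ref{hclass:all-product} \Rightarrow \ref{hclass:some-product}$ is trivial as $G$ is nonempty (by item~\ref{Egg-box:prefix-suffix-must-intersect} of the Egg-box Lemma). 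The substantive work is in three steps: $\ref{hclass:contains-idempotent} \Rightarrow \ref{hclass:group}$, $\ref{hclass:some-product-j} \Rightarrow \ref{hclass:contains-idempotent}$, and $\ref{hclass:some-product} \Rightarrow \ref{hclass:all-product}$ (the last of which I would fold into $\ref{hclass:some-product-j} \Rightarrow \ref{hclass:contains-idempotent}$ plus $\ref{hclass:contains-idempotent}\Rightarrow\ref{hclass:group}$ plus the trivial $\ref{hclass:group}\Rightarrow\ref{hclass:all-product}$, provided I check that $\ref{hclass:group}$ indeed gives closure under all products — which it does, since a group is closed under its own multiplication and that multiplication is the semigroup one).

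For $\ref{hclass:contains-idempotent} \Rightarrow \ref{hclass:group}$: let $e \in G$ be idempotent. First I would show $e$ acts as a two-sided identity on $G$. Given $a \in G$, since $a$ is suffix-equivalent to $e$ we can write $e = xa$ for some (possibly empty) $x$, hence $ea = e(ea)$... more carefully: $a$ and $e$ lie in the same prefix class, so $e = a u$ for some $u$, giving $ae = a(au')$-type manipulations; cleaner is to use that $a \in$ prefix class of $e$ means $e = a u$, so $ae = aue = ...$. The standard argument: from $e = au$ we get $ea = e a$, and $ae = aue$; iterating and using $ee=e$ shows $ae = a$. Symmetrically, suffix-equivalence of $a$ and $e$ gives $ea = a$. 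So $e$ is the identity of $G$. Then for inverses: given $a \in G$, the sequence $a, a^2, \dots$ hits the idempotent power $a^\momega$ by the Idempotent Power Lemma; I must check $a^\momega$ is in the same $\Hh$-class as $a$ — indeed $a^\momega$ is both a prefix and a suffix of $a$ (via $a^\momega = a \cdot a^{\momega-1}$) and conversely $a = a^\momega \cdot a^{1-\momega \bmod \text{period}}$ lies in the infix class, using that within an $\Hh$-class prefix/suffix comparability forces equivalence (item~\ref{Egg-box:prefix-incomparable}). Hence $a^\momega = e$, and then $a^{\momega - 1}$ (or the appropriate power) is a two-sided inverse of $a$ in $G$. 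Closure under multiplication then follows because $ab$ has $a$ as a prefix and $b$ as a suffix, so $ab$ is in the prefix class of $a = $ prefix class of $G$ and in the suffix class of $b = $ suffix class of $G$, hence $ab \in G$.

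For $\ref{hclass:some-product-j} \Rightarrow \ref{hclass:contains-idempotent}$: suppose $ab$ is infix-equivalent to $a$ and to $b$ for some $a, b \in G$. The key observation is that $ab$ has $a$ as a prefix, so $ab$ and $a$, being infix-equivalent, are prefix-equivalent by item~\ref{Egg-box:prefix-incomparable}; likewise $ab$ and $b$ are suffix-equivalent. So $ab$ lies in the prefix class of $a$ and the suffix class of $b$, which are exactly the prefix and suffix classes defining $G$; hence $ab \in G$. Now I would show right multiplication by $b$, viewed as a map on $G$, is a bijection $G \to G$: it maps into $G$ (same argument applied to $xb$ for $x \in G$, since $xb$ has $x$ as prefix — prefix-equivalent to $a$ — and $b$ as suffix), and a map from a finite set to itself is bijective once it is injective; injectivity follows from Claim~\ref{claim:mutually-inverse-bijections}-style cancellation, or directly: right multiplication by $b$ is surjective because $b$ itself is hit (from $\ref{hclass:some-product}$-type reasoning) — I would instead argue surjectivity, that some power of the map fixes things. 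Concretely, the map $c \mapsto cb$ on the finite set $G$ has some power equal to the identity on its image; combined with landing in $G$ and $G$ being its image, there is $n$ with $c b^n = c$ for all $c \in G$, and then $b^{\momega}$ (the idempotent power of $b$, which lies in $G$ by the same prefix/suffix argument) satisfies $b^\momega b^\momega = b^\momega$, so $G$ contains the idempotent $b^\momega$.

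The main obstacle I anticipate is the bookkeeping in $\ref{hclass:some-product-j} \Rightarrow \ref{hclass:contains-idempotent}$: making rigorous that right-translation by $b$ restricts to a bijection of the finite set $G$ (so that its orbit structure forces an idempotent power of $b$ to sit inside $G$), and separately verifying at each stage that the relevant powers $a^\momega$, $b^\momega$, products $ab$, $xb$ genuinely stay inside the single $\Hh$-class $G$ — every such claim reduces to "it has the right prefix and the right suffix, and it's in the right infix class, so by item~\ref{Egg-box:prefix-incomparable} it's equivalent," but one has to invoke that lemma carefully each time. Everything else is formal.
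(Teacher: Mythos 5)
Your proof follows the paper's decomposition very closely: the same cyclic chain of implications, the same use of item~\ref{Egg-box:prefix-incomparable} of the Egg-box Lemma to go from $\ref{hclass:some-product-j}$ to $\ref{hclass:some-product}$, and the same recipe for the group structure (idempotent power as identity, $a^{\momega-1}$ as inverse). Where you depart — and where a real gap opens — is in electing to prove $\ref{hclass:contains-idempotent} \Rightarrow \ref{hclass:group}$ directly, rather than proving the trivial $\ref{hclass:contains-idempotent} \Rightarrow \ref{hclass:some-product-j}$ (take $a=b=e$) and passing through $\ref{hclass:all-product}$ as the paper does.

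The problem is that your direct route requires $a^\momega \in G$ (and earlier $ab \in G$) \emph{before} you have closure of $G$ under multiplication, and the justification you give doesn't hold up. Your claim that $a^\momega$ is a prefix and suffix of $a$ ``via $a^\momega = a\cdot a^{\momega-1}$'' is backwards: that identity shows $a$ is a prefix of $a^\momega$, not the other way round. And the ``conversely $a = a^\momega \cdot a^{1-\momega \bmod \text{period}}$'' is unjustified: $a^\momega \cdot a^m = a^{\momega+m}$ always lands in the cyclic part of the orbit $\{a^n\}$, and $a = a^1$ need not be there (think of a nilpotent $a$, where $a^\momega$ is a zero). The same omission appears at the end of that paragraph, where you conclude $ab \in G$ from ``$ab$ has $a$ as a prefix and $b$ as a suffix'' — Egg-box item~\ref{Egg-box:prefix-incomparable} only yields prefix-equivalence after you already know $ab$ lies in the \emph{same infix class}, and you never establish that. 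The paper sidesteps all of this by ordering the chain so that $\ref{hclass:all-product}$ (closure) is already proved when $\ref{hclass:group}$ is derived: from closure, $a^\momega \in G$ is automatic. If you insist on the direct route, the missing infix-class check can be done, but it needs the idempotent explicitly: from $e = wa$ one gets $a = ea = wa^2$, so $a^2$ is a suffix (hence infix) of $a$, which together with $a$ being a prefix of $a^2$ puts them in the same infix class, and then Egg-box item~\ref{Egg-box:prefix-incomparable} applies. Your argument as written does not do this, and the specific algebraic claim you make is false.

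A smaller but related slip: in establishing that $e$ is a two-sided identity you write things like ``$e = au$ so $ae = aue$,'' which doesn't follow, and you attribute $ae = a$ to the prefix relation. The clean version uses the \emph{suffix} relation for $ae=a$ (from $a = xe$ one gets $ae = xee = xe = a$) and the prefix relation for $ea=a$. Worth straightening out, since the whole lemma turns on applying prefix/suffix facts in the right places.
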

\begin{proof}
	Implications \ref{hclass:group} $\Rightarrow$ \ref{hclass:contains-idempotent} $\Rightarrow$ \ref{hclass:some-product-j} in the lemma are obvious, so we focus on the remaining implications.
	\begin{description}
		\item[\ref{hclass:some-product-j}$\Rightarrow$\ref{hclass:some-product}] Suppose that $ab$ is in the same infix class as $a$ and $b$. Since $a$ is a prefix of $ab$, and the two elements are in the same infix class, item~\ref{Egg-box:prefix-incomparable} of the Egg-box Lemma implies that $ab$ is in the prefix class of $a$, which is the same as the prefix class of $b$. For similar reasons, $ab$ is in the same suffix class as $a$ and $b$, and therefore $ab \in G$.
		\item[\ref{hclass:some-product}$\Rightarrow$\ref{hclass:all-product}] Suppose that there exist  $a,b \in G$ with $ab \in G$.   We need to show that $G$  contains the multiplication of every two  elements $c,d \in G$. Since $c$ is prefix equivalent to $a$ there is a decomposition $a=xc$, and for similar reasons there is a decomposition $b=dy$. Therefore,  $cd$ is an infix of 
		\begin{align*}
		\overbrace{xc}^{a}\overbrace{dy}^b  \in G,
		\end{align*}
		 and therefore it is in the same infix class as $G$. By the reasoning in the previous item, $cd \in G$. 
		\item[\ref{hclass:all-product}$\Rightarrow$\ref{hclass:group}] Suppose that $G$ is closed under multiplication, i.e.~it is a subsemigroup. We will show that it is a group. By the Idempotent Power Lemma, $G$ contains some idempotent, call it $e$. We claim that $e$ is an identity element in $G$, in particular it is unique. Indeed, let $a \in G$. Because  $a$ and $e$ are in the same suffix class, it follows that $a$ can be written as $xe$, and therefore 
	\begin{align*}
		ae = xee= xe = a.
	\end{align*}
	For similar reasons, $ea=a$, and therefore $e$ is the unique identity element in $G$. The group inverse is defined as follows. Take $\momega \in \set{1,2,\ldots}$ to be the idempotent exponent which arises from the Idempotent Power Lemma. For every  $ a \in G$, the power $a^\momega$ is an idempotent.  Since there is only one idempotent in $G$, we have $a^\momega = e$. Therefore, $a^{\momega-1}$ is a group inverse of $a$.  
	\end{description}
\end{proof}

\mikexercise{\label{ex:monoid-identity-infix-class}Show that for every finite monoid, the infix class of the monoid identity is a group.}{
Let $J$ be the infix class of the monoid identity $1$. Since $1$ is prefix of every monoid element, it follows from the Egg-box Lemma that $J$ is equal to the prefix class of $1$. For the same reasons, $J$ is equal to the suffix class of $1$. Therefore $J$ is a single $\Hh$-class. Since $J$ contains an idempotent, namely $1$, it must be a group by the $\Hh$-class Lemma.
}

\mikexercise{\label{ex:regular-j-class} Consider a finite semigroup.
	Show that an infix class contains an idempotent if and only if it is \emph{regular}, which means that there exist $a,b$ in the infix class such that $ab$ is also in the infix class.
}
{If an infix class contains an idempotent, then it clearly contains elements $a,b$ such that $ab$ is in the infix class. For the converse implication, suppose that $a,b$ and $ab$ are in the same infix class $J$. It follows that each of $a,b$ can be decomposed as a multiplication of two elements from $J$. By iterating this procedure, we see that $a$ can be decomposed as a multiplication of $n$ elements from $J$, for every $n \in \set{1,2,\ldots}$. Thanks Exercise~\ref{ex:has-some-idempotent-ramsey}, a multiplication of $n$ elements from $J$ must contain an idempotent infix.
}

\mikexercise{\label{ex:isomorphic-groups}
	Show that if $G_1,G_2$ are two  $\Hh$-classes in the same infix class of a finite semigroup, and they are both groups, then they are isomorphic as groups\footnote{Let us combine Exercises~\ref{ex:regular-j-class} and \ref{ex:isomorphic-groups}. By  Exercises~\eqref{ex:regular-j-class} and the $\Hh$-class lemma, an infix class is regular if and only if it  contains an  $\Hh$-class which is a   group.  By Exercise~\eqref{ex:isomorphic-groups}, the corresponding group is unique up to isomorphism. This group is called the \emph{\schutz group} of the regular infix class. }.
}
{Suppose that $G_1$ and $G_2$ are groups in the same infix class. Let $e_1, e_2$ be the identities in the groups $G_1,G_2$.  
From Claim~\ref{claim:mutually-inverse-bijections} it follows that there exist $x_1,x_2,y_1,y_2$ such that 
\begin{align}\label{eq:mutually-inverse}
	\xymatrix@C=2cm{
		G_1
		\ar@/^/[r]^{g \mapsto x_1gy_1}
		&
		G_2
		\ar@/^/[l]^{g \mapsto x_2 g y_2}
	}
	\end{align}
are mutually inverse bijections. By replacing 
\begin{align*}
 \myunderbrace{x_1 e_1}{new $x_1$}  \quad \myunderbrace{e_1 x_2 }{new $x_2$}  \quad
 \myunderbrace{e_1 y_1}{new $y_1$}  \quad \myunderbrace{y_2 e_1}{new $y_2$},
\end{align*}
we still get mutually inverse bijections between $G_1$ and $G_2$. Summing up, we can assume without loss of generality that $x_1,y_2$ end with $e_1$, while $x_2,y_1$ begin with $e_1$.

The element $y_1x_1$ begins and ends with $e_1$, and it is also an infix of $e_2$ (and therefore also of $e_1$) thanks to 
\begin{align*}
	e_2 = e_2 e_2 = \underbrace{x_1x_2e_2y_2y_1}_{e_2} \underbrace{x_1x_2e_2y_2y_1}_{e_2}.
	\end{align*}
	It follows from the Egg-box Lemma that $y_1 x_1$ is both in the prefix class and suffix class of $e_1$, which means that $y_1 x_1 \in G_1$. Since $G_1$ is a group, there must be some $a,b \in G_1$ such that
	\begin{align*}
	a y_1 x_1 a = e_1
	\end{align*}
	Let $\overline{y_1 x_1}$  be the group inverse of $y_1 x_1$, in the group $G_1$. Define $\alpha :G_1 \to G_2$ to be the composition of the following two functions 
\begin{align*}
	\xymatrix@C=2cm{
		G_1 \ar[r]^{g \mapsto   g \overline{y_1 x_1}} &
		G_1 \ar[r]^{g \mapsto x_1 g y_1} & 
		G_2.
	}
\end{align*}
The first function is a permutation of $G_1$, while the second function is a bijection of $G_1$ and $G_2$. It follows that $\alpha$ is a bijection. We now claim that $\alpha$ is a homomorphism:
\begin{align*}
\alpha(gh) = x_1  gh \overline{y_1 x_1} y_1 = 
\rlap{$\overbrace{\phantom{x_1 g \overline{y_1 x_1} y_1}}^{\alpha(g)}$}
x_1 g 
\rlap{$\underbrace{\phantom{\overline{y_1 x_1} y_1 x_1}}_{e_1}$}
\overline{y_1 x_1} y_1
\overbrace{ x_1 h \overline{y_1 x_1} y_1 }^{\alpha(h)} = 
\alpha(g) \alpha(h).
\end{align*}
Summing up, $\alpha$ is a bijective semigroup homomorphism between the groups $G_1$ and $G_2$. It follows that these groups are isomorphic as groups, because a bijective semigroup homomorphism also preserves the group structure. 
}

\mikexercise{\label{ex:prefix-trivial-identity} We say that semigroup is \emph{prefix trivial} if its prefix classes are singletons. Show that a finite semigroup  $S$ is prefix trivial if and only if it satisfies the identity 
\begin{align*}
	(xy)^\momega = (xy)^{\momega} x \qquad \text{for all }x,y \in S.
\end{align*}
}{}

\mikexercise{\label{ex:definite}
	Define the \emph{syntactic semigroup} of a language to be the subset of the  syntactic monoid which is the image of the nonempty words under the syntactic homomorphism. The syntactic semigroup may be equal to the syntactic monoid. 	
We say that a language $L \subseteq \Sigma^*$ is definite if it is a finite Boolean combination of languages of the form $w \Sigma^*$, for $w \in \Sigma^*$. Show that a language is definite if and only if its syntactic semigroup $S$ satisfies the identity
\begin{align*}
	x^\momega = x^\momega y \qquad \text{for all }x,y \in S.
\end{align*}
}{}

\mikexercise{Show two regular languages such that one is definite and the other is not, but both have isomorphic syntactic monoids.}{}

\mikexercise{Consider semigroups  $S$ which satisfy the following property: (*) that there is an infix class $J \subseteq S$ such that every $a \in S$ is an infix of $J$, or an absorbing zero element. Show that every finite semigroup is sub-semigroup of a product of finite semigroups that satisfy (*).
}
{
Let  $S$ be a finite semigroup, and let  $ a \in S$. 
     Define  $S_a$ to be the set obtained from $S$ by removing all elements that are not infixes of $a$, and adding a fresh $0$ element.  We can view this a semigroup, with multiplication defined by 
\begin{align*}
w \in (S_a)^\cc \quad \mapsto \quad 
\begin{cases}
    0 & \text{if $w$ contains at least one letter $0$}\\
    \text{multiplication in $S$} & \text{otherwise}.
\end{cases}
\end{align*}
This semigroup satisfies condition (*). There is also a natural homomorphism $h_a : S \to S_a$, which is the identity on infixes of $a$, and maps the remaining elements to $0$. By taking combining these homomorphisms, we get  a 
homomorphism 
\begin{align*}
S \to \prod_{a \in S} S_a,
\end{align*}
which is injective, because each $h_a$ is injective on infixes of $a$. 
 It follows that $S$ is (isomorphic) to a sub-algebra of the product of the $\cc$-semigroups $S_a$. 
}

\mikexercise{Show that every finite semigroup satisfies 
\begin{align*}
\forall x_1\  \forall x_2 \ \exists y_1 \ \exists y_2 \ 
\myunderbrace{z_1 = z_1 z_1 = z_1 z_2 \land z_2 = z_2 z_2 = z_2 z_1}{where $z_i = x_i y_i$},
\end{align*}
where quantifiers range over elements of the finite semigroup.

}{}

\mikexercise{
Show that the following problem is decidable:
\begin{itemize}
	\item {\bf Input.} Two disjoint sets of variables
	\begin{align*}
	  X = \set{x_1,\ldots,x_n} \qquad Y = \set{y_1,\ldots,y_m}
	\end{align*}
	and two words $w,w' \in (X \cup Y)^+$.
	\item {\bf Question.} Is the following true in all finite semigroups:
	\begin{align*}
	\forall x_1 \ \cdots \ \forall x_n \ \exists y_1 \ \cdots \exists y_m \ \myunderbrace{w = w'}{same multiplication}
	\end{align*}
\end{itemize}

}{}
 \section{The Factorisation Forest Theorem}
\label{sec:fact-for}
In this section, we show how the  multiplication of a long sequence of elements in a semigroup can be organised as a tree, so that in each node of the tree the multiplication is very simple. The most natural way to do this is to have binary tree, as in the following example, which uses the two semigroup $\set{0,1}$ with addition modulo 2:
\mypic{25}
We use the name \emph{factorisation tree} for structures as in the above picture. More formally, a  \emph{factorisation tree over a semigroup $S$} is a tree, where nodes are labelled by semigroup elements, such that every node is either a leaf, or is labelled by  the semigroup multiplication of the labels of its children.  Since the semigroup in question need not be commutative, the children in a tree are ordered, i.e.~there is a first child, second child, etc. 

A \emph{binary factorisation} tree is one where every node has zero or two children. For every word in $S^+$, one can find a corresponding binary factorisation tree (i.e.~one where the word is obtained by reading the leaves left-to-right) whose height (i.e.~the maximal number of edges on a root-to-leaf path) is logarithmic in the length of the word. 
Binary factorisation trees are a natural data structure for several problems about regular languages. 

\begin{myexample}\label{ex:dynamic-change}
    Fix  a regular language $L \subseteq \Sigma^*$. Consider the following dynamic problem. We begin with some word  in $\Sigma^*$. We want to build a data structure that handles efficiently the following updates and queries: 
    \begin{description}
        \item[Query.] Is the current word in $L$?
        \item[Update.] Change the label of position $i$ to $a \in \Sigma$.
    \end{description}
    To solve this problem, as the data structure we can use a binary factorisation  tree with respect to some finite semigroup that recognises the language. If we assume that the  language is fixed and not part of the input, then the queries are processed in constant time, by checking if the root label of the factorisation tree belongs to the accepting set. The updates are processed in time proportional to the height of the factorisation tree, by updating all of the nodes on the path from the updated position to the root, as in the following picture:
    \mypic{26}
    If the factorisation tree is chosen to be balanced, then the updates are processed in logarithmic time. 
\end{myexample}

\begin{myexample}\label{ex:dynamic-infix}
    Fix  a regular language $L \subseteq \Sigma^*$. Consider the following dynamic problem. We begin with some word  in $\Sigma^*$. We want to build a data structure that handles efficiently the following  queries (there are no updates): 
    \begin{description}
        \item[Query.]  Given positions $i \le j$, does $L$ contain the infix from $i$ to $j$?
    \end{description}
    Of course, one obvious solution is to pre-compute in quadratic time a table of answers to all possible queries. If we want to solve the problem with linear time pre-computation, then we can use  a binary factorisation  tree, with respect to some semigroup recognising the language. Suppose that the tree has height $k$. Each node of the factorisation tree corresponds to an infix of the underlying word.  The infix from $i$ to $j$ can be partitioned into at most $2k$ intervals,  each of which corresponds to a node of the tree, as in the following picture:
    \mypic{27}
    Therefore, the queries can be processed in time proportional to the height of the tree, which can be assumed to be logarithmic in the length of the underlying word.  
\end{myexample}

In this section, we show a data structure which will allow constant time query processing in the problem from Example~\ref{ex:dynamic-infix}. We will also use a variant of factorisation trees, except that non-binary nodes will need to be used. The problem in Example~\ref{ex:dynamic-change} cannot be solved in constant time\footnote{Lower bounds for this problem can be seen in\incite[Fig.~1]{frandsenMiltersenSkym97}}.

\section*{Simon trees}
A Simon tree is a factorisation tree which allows nodes of degree higher than 2, but these nodes must have idempotent children.  The data structure is named after Imre Simon, who introduced it\footnote{Under the name Ramseyan factorisation forests, in \incite[69]{simonFactorizationForestsFinite1990}}.

\begin{definition}[Simon Tree]  Define a \emph{Simon tree} (for a given semigroup) to be a factorisation tree  where every non-leaf node has one (or both) of the following types:
    \begin{description}
        \item[binary:] there are  two children; or
        \item[idempotent:] all children have the same label, which is an idempotent.
    \end{description}
    \end{definition}

    Here is a picture of a Simon tree for the semigroup $\set{0,1}$ with addition modulo 2, with idempotent nodes drawn in red:
    \mypic{2}

The main result about Simon trees is that their height can be bounded by a constant that depends only on the semigroup, and not the underlying word. 
\begin{theorem}[Factorisation Forest Theorem]
    \label{thm:simon-factfor} Let $S$ be a  finite semigroup. Every word in $S^+$ admits a Simon tree of height\footnote{
        The first version of this theorem was proved  in~\cite[Theorem 6.1]{simonFactorizationForestsFinite1990}, with a bound of  $9|S|$. The optimal bound is $3|S|$, which was shown in 
        \incite[Theorem 1]{kufleitner2008}
         The  proof here  is based on Kufleitner, with some optimisations removed.
    }  $<5|S|$.
\end{theorem}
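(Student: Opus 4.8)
The plan is to prove the Factorisation Forest Theorem by induction on the structure of the semigroup, organised along Green's relations. The key idea is Simon's: process the word by looking at the infix classes of the elements appearing, peel off the "top" (maximal) infix class, and recurse into strictly smaller semigroups. Concretely, fix a finite semigroup $S$ and a word $w = a_1 \cdots a_n \in S^+$. Let $J$ be an infix class that is maximal among the infix classes of the $a_i$'s — i.e.\ no $a_i$ is a strict infix of any element of $J$. Split $w$ at the positions carrying a letter whose infix class is exactly $J$: this writes $w$ as an alternation of (i) single letters in $J$, and (ii) blocks whose letters all lie in the strictly-smaller sub-semigroup $S' = S \setminus (J \cup \{\text{elements with } J \text{ as a strict infix}\})$ — where the point is that the product of any such block stays in $S'$, because a product leaving $S'$ would have to have an element of $J$ as an infix (using the Egg-box Lemma / $\Hh$-class Lemma analysis of when products stay in an infix class), contradicting maximality of $J$ among letters of $w$. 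By induction each block of type (ii) gets a Simon tree of height $< 5|S'| \le 5(|S|-1)$; it remains to combine these block-roots and the $J$-letters into one tree of height at most $5|S|$, i.e.\ using at most $5$ extra levels.

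The heart of the argument is therefore the combination step: I have a word $b_1 b_2 \cdots b_m$ over $S$ where the $b_i$ alternate between "a letter of $J$" and "the product of a block, which lies in $S'$", and I must build a Simon tree over $S$ of small height. First I would observe that all partial products $b_i b_{i+1} \cdots b_j$ that actually touch $J$ lie inside a single infix class, namely $J$ itself together with the possibility of dropping into $S'$ only at the very ends; by grouping carefully one reduces to the case where all the $b_i$ lie in one infix class $J$ and their relevant products stay in $J$. Within an infix class, the Egg-box Lemma says the class is a grid of prefix-classes $\times$ suffix-classes, and the $\Hh$-class Lemma says the products behave like a "Rees matrix" pattern. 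I would use exactly this: first build a tree of height $\le 2$ that regroups the sequence so that consecutive entries lie in a common $\Hh$-class (one level to fix the prefix-class coordinate, one to fix the suffix-class coordinate — the Egg-box bijections from Claim~\ref{claim:mutually-inverse-bijections} let you do this while only prepending/appending fixed elements, which can be absorbed into neighbours); then on a sequence living inside a single $\Hh$-class that is a group $G$, the product of any run is again in $G$, and one uses the group structure together with the Idempotent Power Lemma: a run of length $\ge |G|$ contains an idempotent prefix $e$, and $e \cdot (\text{rest})$ lets you peel off an idempotent, so a constant number of further binary + idempotent levels (building one idempotent node whose children are the idempotent pieces, bracketed by $O(1)$ binary nodes) finishes the $\Hh$-class. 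Counting levels: roughly $1$ to split off the $J$-letters, $2$ for the Egg-box regrouping, $1$–$2$ for the group/$\Hh$-class handling — at most $5$ new levels, whence height $< 5(|S|-1) + 5 = 5|S|$.

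The main obstacle I expect is the bookkeeping in the combination step, specifically keeping the number of added levels down to $5$ uniformly. The naive version of "regroup so consecutive entries share an $\Hh$-class, then handle a group" can easily cost $6$ or more levels if one is not careful about reusing levels and about the "boundary" blocks from $S'$ that appear at the two ends of each $J$-stretch (they must be merged into an adjacent $J$-letter or block-root without spending a fresh level). Getting this tight is precisely where Kufleitner's refinement of Simon improves the constant from $9|S|$ to $3|S|$; since the theorem only asks for $<5|S|$, I would deliberately use the cruder, more robust bracketing and accept the slack, rather than chase the optimal constant — the statement's footnote already flags that optimisations have been removed. The other place requiring care is the base case and the treatment of letters whose infix class is not comparable to $J$: maximality must be taken with respect to the infixes actually occurring in $w$, not in all of $S$, so the recursion is on the sub-semigroup generated by the letters, and one should phrase the induction hypothesis for an arbitrary finite semigroup and an arbitrary word over it, so that $S'$ (or the subsemigroup it generates) is a legitimate smaller instance.
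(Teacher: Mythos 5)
Your high-level plan — peel off a maximal infix class $J$, split the word at $J$-letters, recurse on a strictly smaller subsemigroup, and combine — is in the right spirit, and it matches the skeleton of the paper's proof (which uses induction on \emph{infix height}, with the ideal $T$ of elements having a proper infix as the recursion target, rather than your $S'$; the ideal $T$ is cleanly a subsemigroup, whereas your $S'$ as described takes more care to make well-defined). But the crucial quantitative step has a genuine gap. You claim the combination step — building a tree over the $J$-letters and block roots — costs only $O(1)$ extra levels ("2 for the Egg-box regrouping, 1–2 for the group"), and you bound the total as $5(|S|-1)+5$. This is false: the combination step costs $\Theta(|J|)$ levels, not a constant. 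Concretely, the "regroup so consecutive entries share an $\Hh$-class" step cannot be done in two levels when $J$ has many $\Hh$-classes — you need to handle each $\Hh$-class colour in turn, and the number of colours can be as large as $|J|$. Similarly, the group case cannot be done in a bounded number of levels: the paper's group lemma requires height $< 3|G|$, proved by a nontrivial induction on the size of the \emph{prefix set} (the set of group elements realized as products of nonempty prefixes), and this count genuinely grows with $|G|$. Your "peel off an idempotent" sketch does not produce an idempotent node (which requires \emph{many} children all evaluating to the same idempotent), and in any case does not bound the height by a constant.

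What the paper does instead is run three nested inductions, each of which contributes a piece of the $5|S|$ budget. At the outer level it inducts on infix height, splitting the word into \emph{smooth} factors (words where every nonempty infix multiplies into the same infix class $J$) and recursing into $T$; this contributes one binary level per step of the outer induction. For each smooth factor it inducts on the size of the \emph{colour set} (the set of $\Hh$-classes arising as the intersection of the prefix class and suffix class at a cut), choosing a colour $H$, cutting along all $H$-coloured cuts, and handling the middle pieces with the group lemma; this contributes $<4|J|$ levels. Inside the group case it inducts on the prefix-set size, contributing $<3|G|$ levels. The right accounting is not "5 levels per infix class" but "$1+4|J|$ levels per infix class $J$", and since the distinct infix classes are disjoint, $\sum_J |J| \le |S|$ ensures the total stays below $5|S|$. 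The two inner inductions — on colour sets and on prefix sets — are the technical content of Simon's and Kufleitner's arguments, and they are exactly the pieces your proposal waves away. Without them, your proof does not establish any bound independent of the word length within a single infix class.
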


The rest of this chapter is devoted to proving the theorem. 
\paragraph*{Groups.}
We begin with  the special case of groups.
\begin{lemma}\label{lem:group-factfor}
    Let $G$ be a finite group. Every word in $G^+$ has a Simon tree of height  $<3|G|$.
\end{lemma}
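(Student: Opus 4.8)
The plan is to prove a statement slightly stronger than the lemma, with the number of distinct partial products playing the role that $|G|$ plays in the conclusion. For $w = a_1\cdots a_n\in G^+$ set $\pi_0=1$ and $\pi_i=a_1\cdots a_i$, and call $\mathrm{rk}(w)=|\{\pi_0,\pi_1,\dots,\pi_n\}|$ the \emph{rank} of $w$; note $\mathrm{rk}(w)\le|G|$. I would show that every $w\in G^+$ admits a Simon tree of height $<3\,\mathrm{rk}(w)$, which gives the lemma. The argument is an induction on the pair $(\mathrm{rk}(w),|w|)$ ordered lexicographically: rank first, with length as a tie-breaker for the few cases where the rank does not decrease. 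The base case $\mathrm{rk}(w)=1$ is immediate, since then every $\pi_i$ equals $1$, hence every letter is $1$ and $w=1^m$, for which a single idempotent node over $m$ leaves (or a bare leaf when $m=1$) has height $\le 1$. In the inductive step I would first dispose of the degenerate situation $a_1=1$ by peeling off the maximal prefix of the form $1^k$ with one binary node; the remainder has the same rank, is shorter, and starts with a non-identity letter, so the induction applies and only one level is lost. Thus I may assume $g:=a_1=\pi_1\neq 1$.

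The heart of the proof is to split $w$ at the occurrences of the pivot $g$. Let $1=j_1<j_2<\cdots<j_s$ be all positions $i$ with $\pi_i=g$, and write $w=D_1D_2\cdots D_sD_{s+1}$, where $D_1=a_1$ is a single letter (hence a leaf), $D_p=a_{j_{p-1}+1}\cdots a_{j_p}$ for $2\le p\le s$, and $D_{s+1}=a_{j_s+1}\cdots a_n$ (omitted when $j_s=n$). Each block $D_p$ with $2\le p\le s$ has product $\pi_{j_{p-1}}^{-1}\pi_{j_p}=g^{-1}g=1$, so all of them carry the idempotent label $1$ and may legitimately be gathered under a single idempotent node. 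The tree for $w$ is then built by arranging, with a bounded number of binary nodes, the three items: the leaf $D_1$, that idempotent node (whose children are the recursively constructed trees for $D_2,\dots,D_s$), and the recursively constructed tree for $D_{s+1}$.

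The decisive computation is a rank count for the sub-blocks. Conjugating by $g$, the internal partial products of $D_p$ (for $2\le p\le s$) are exactly $g^{-1}\pi_{j_{p-1}+1},\dots,g^{-1}\pi_{j_p-1}$, none of which is $1$ because the partial products strictly between two consecutive $g$-occurrences differ from $g$; a short argument from this yields $\mathrm{rk}(D_p)<\mathrm{rk}(w)$. The same reasoning applied to $D_{s+1}$ gives $\mathrm{rk}(D_{s+1})<\mathrm{rk}(w)$ whenever $\pi_n\neq1$. So in the case $\pi_n\neq1$ every recursive call is at strictly smaller rank and the primary induction closes. The delicate case is $\pi_n=1$: then $D_{s+1}$ has product $1$ but may still carry the full rank $\mathrm{rk}(w)$, so one cannot invoke the rank induction on it; instead one uses that $D_{s+1}$ has no internal partial product equal to $1$, is strictly shorter than $w$, and falls to the secondary (length) induction — and one must set up the tree so that chaining such length-recursions does not pile up depth.

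Putting it together, each application of the construction pays one idempotent level plus $O(1)$ binary levels and then recurses at rank $\mathrm{rk}(w)-1$, giving a height linear in $\mathrm{rk}(w)$. The main obstacle is keeping the constant equal to $3$: a careless arrangement of the binary nodes around the idempotent node, or a careless treatment of the residual block $D_{s+1}$, costs strictly more than three levels per unit of rank. The two points that must be handled with care are the choice of pivot — taking it to be the very first letter is what makes the leftmost block a single leaf rather than an arbitrary word of full rank — and the $\pi_n=1$ residual, which must be processed so that it does not reintroduce, at the same rank and under a fresh stack of binary nodes, the cost it appears to carry. This bookkeeping, which is essentially Kufleitner's argument with some of its sharpenings removed, is where all the real work lies.
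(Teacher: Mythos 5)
Your proposal follows the same scheme as the paper's proof: a rank-based induction, a split at every position where the partial product equals a pivot $g$, and an idempotent node gathering the blocks that multiply to the identity. Fixing the pivot to be the first letter, including $\pi_0=1$ in the rank, and adding a secondary length induction are small wrinkles of your own, but the plan coincides.

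The gap is in the ``decisive computation.'' Observing that the internal partial products of $D_p$ avoid $1$ gives, after conjugating by $g$, only $\mathrm{rk}(D_p)\le\mathrm{rk}(w)$, not the strict inequality: one has $\mathrm{rk}(D_p)=1+\bigl|\{\pi_j : j_{p-1}<j<j_p\}\bigr|$, and while $\{\pi_j : j_{p-1}<j<j_p\}$ is contained in $\{\pi_0,\dots,\pi_n\}\setminus\{g\}$, it need not be a proper subset of it. Concretely, take any $g\neq 1$ and $w=g\,g^{-1}\,g\,g^{-1}\,g$; the partial products $\pi_0,\dots,\pi_5$ are $1,g,1,g,1,g$, so $\mathrm{rk}(w)=2$, and the middle block $D_2=g^{-1}g$ has partial products $1,g^{-1},1$, giving $\mathrm{rk}(D_2)=2$ as well. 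So the middle blocks --- not only the $\pi_n=1$ residual that you single out --- can carry the full rank, and the analogous claim $\mathrm{rk}(D_{s+1})<\mathrm{rk}(w)$ for $\pi_n\neq 1$ fails by the same mechanism (the hypothesis $\pi_n\neq 1$ does not stop the identity from being attained as some $\pi_j$ with $j_s<j<n$). This is fatal for the height accounting: the secondary length induction terminates the recursion but consumes none of the $3\,\mathrm{rk}$ budget, so a chain of same-rank recursions through the middle blocks is uncontrolled. For reference, the paper's printed proof asserts the inclusion $g\cdot(\text{prefix set of }w_i)\subseteq(\text{prefix set of }w)\setminus\{g\}$, which is defeated by the same example (the full prefix of $w_i$ multiplies to $1$, and $g\cdot 1=g$); so as written neither argument establishes the strict rank drop, and closing the gap requires a sharper invariant or a more selective choice of cut points, not merely tighter bookkeeping.
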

\begin{proof}
Define the \emph{prefix set} of a word $w \in G^+$ to be the set of group elements that can be obtained by  multiplying  some nonempty prefix of $w$. 
By induction on the size of the prefix set, we show  that every $w \in G^+$ has a Simon tree of height strictly less than 3 times the size of the prefix set. Since the prefix set has maximal size $|G|$, this proves the lemma.

The induction base is when the prefix set is a singleton $\set g$. This means that the first letter is $g$, and every other letter $h$  satisfies $gh=g$. In a group, only the group identity $h=1$ can satisfy $gh=g$,  and therefore $h$ is the group identity. In other words, if the prefix set is $\set g$, then the word is of the form
\begin{align*}
g \myunderbrace{1  \cdots 1}{a certain number of times}.
\end{align*}
 Such a word admits a Simon tree as in the following picture:
\mypic{28}
The height of this tree is 2, which is strictly less than three times the size of the prefix set. 

To prove the induction step, we show that every $w \in G^+$  admits a Simon tree, whose height is at most 3 plus the size from the induction assumption.
 Choose some $g$ in the prefix set of $w$. Decompose $w$ into factors as
    \begin{align*}
    w = \myunderbrace{w_1 w_2 \cdots w_{n-1}}{nonempty\\ \scriptsize factors} \myunderbrace{w_n}{could\\ \scriptsize be empty} 
    \end{align*}
    by cutting along all prefixes that multiply to $g$. For the same reasons as in the induction base, every factor $w_i$ with $1 < i < n$  yields the group identity under multiplication.
    
    \begin{claim}
        The induction assumption applies to all of $w_1,\ldots,w_n$.
    \end{claim}
    \begin{proof} 
        For the first factor $w_1$,  the induction assumption applies, because its prefix set omits $g$. 
        For the remaining blocks, we have a similar situation, namely
        \begin{align*}
            g \cdot \text{(prefix set of $w_i$)} \subseteq \text{(prefix set of $w$)} - \set g \qquad \text{for $i \in \set{2,3,\ldots,n}$},
            \end{align*}
        where the left side of the inclusion is the image of the prefix set under the operation  $x \mapsto gx$. Since this operation is a permutation of the group, it follows that the left size of the inclusion has smaller size than the prefix set of $w$, and therefore the induction assumption applies.             
    \end{proof}
    By the above claim, we can apply the induction assumption to compute Simon trees $t_1,\ldots,t_n$ for the factors $w_1,\ldots,w_n$. To get a Simon tree for the whole word, we join  these trees as follows:
    \mypic{8}
    The gray nodes are binary, and the red node is idempotent because every $w_i$ with $1 < i <n$  evaluates to the group identity.
\end{proof}

\paragraph*{Smooth words.} In the next step, we prove the theorem for words where all infixes have multiplication in the  same infix class. 
We say that a word $w \in S^+$ is \emph{smooth} if every  nonempty infix multiplies to  the same infix class. The following lemma constructs Simon trees for smooth words.
\begin{lemma}\label{lem:smooth} If a word is smooth, and the corresponding infix class is $J \subseteq S$, then it has a  Simon tree of height $<4|J|$.
\end{lemma}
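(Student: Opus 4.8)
The plan is to reduce the smooth case to the group case (Lemma~\ref{lem:group-factfor}) using the Egg-box structure of the infix class $J$. Since $w$ is smooth, every nonempty infix of $w$ multiplies into $J$, so $J$ is a regular infix class and, by the $\Hh$-class Lemma, it contains $\Hh$-classes that are groups; moreover the Egg-box Lemma tells us $J$ is a rectangular grid of $\Hh$-classes, all of the same size, indexed by prefix classes (rows) and suffix classes (columns). The idea is to track, as we read $w$ letter by letter, which row (prefix class) and which column (suffix class) the running product sits in. Because multiplying a running product $p$ (already in $J$) on the right by the next letter $a$ keeps us in $J$ and can only move us within the grid, the row of $pa$ is determined by the row of $p$ together with $a$ — so the sequence of rows traversed behaves like a run of a deterministic automaton whose state set is the set of prefix classes in $J$, which has size at most $|J|$ (in fact $|J|$ divided by the $\Hh$-class size).

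First I would make this precise: define the sequence $s_0, s_1, \ldots$ where $s_i$ is the prefix class of the product of the first $i$ letters of $w$ (for $i \ge 1$), and cut $w$ along repetitions of rows exactly as in the group proof — choose some row value $r$ occurring in this sequence, factor $w = w_1 w_2 \cdots w_{n-1} w_n$ by cutting after every prefix whose product lies in row $r$. Each $w_i$ for $1 < i < n$ then has the property that its own running-row sequence avoids $r$, so by induction on the number of rows we get Simon trees of bounded height for $w_2, \ldots, w_{n-1}$ and for $w_1$ (whose row set omits $r$), and $w_n$ is handled like the group base case. The remaining issue, compared with the pure group argument, is the idempotent node: the products of the middle factors $w_2, \ldots, w_{n-1}$ all lie in the same row $r$ but need not be equal — they lie in possibly different $\Hh$-classes of that row — so they are not literally the same idempotent and we cannot immediately form an idempotent node over them.

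The main obstacle, then, is exactly this: turning the middle factors into genuinely identical idempotent labels. The plan is to fix their \emph{columns} as well. Here I would use Claim~\ref{claim:mutually-inverse-bijections} / Lemma~\ref{lem:mutually-inverse-bijections-h}: within the fixed row $r$, the column of a product is governed by the suffix structure, and one can insert a controlled secondary cutting — or equivalently a second level of the construction — so that a further refinement of the factorisation produces factors whose products all lie in one fixed $\Hh$-class $G$ of row $r$. If $G$ is a group (which we may arrange, since $J$ is regular so some $\Hh$-class is a group, and by Lemma~\ref{lem:mutually-inverse-bijections-h} all $\Hh$-classes in $J$ are "equivalent"), then apply Lemma~\ref{lem:group-factfor} inside $G$ to build the idempotent node: a block of consecutive factors all multiplying into the group $G$ can be grouped, by the group lemma applied to their product sequence, under a Simon tree of height $<3|G| = 3|\Hh\text{-class size}|$, at the root of which sits an idempotent. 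The height accounting is then: at most $|J|$ divided by the $\Hh$-class size $\approx |J|/|G|$ levels of row-cutting, each contributing a constant overhead plus a group-tree of height $<3|G|$ for the idempotent blocks, giving roughly $3|G| \cdot (|J|/|G|) + O(|J|/|G|) = 3|J| + O(\cdots)$, which one massages below $4|J|$. The bookkeeping in this height estimate, and the precise way the row-cutting and the $\Hh$-class-fixing interleave, is where the real work lies; everything else is a routine adaptation of the group proof.

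\begin{remark*}
If the alignment of the bound with $4|J|$ is tight, one expects to be slightly more careful: reuse the $3|G|$ factor for the group subtrees and show the outer row-cutting recursion costs strictly less than $|J| - 3|G|$ extra, using that there are at most $|J|/|G|$ distinct rows and that each cut adds only a bounded number of binary-node levels.
\end{remark*}
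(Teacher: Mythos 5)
Your plan points in the right direction -- reduce to the group case via the Egg-box structure and induct on a shrinking set of parameters -- but the first concrete step does not work, and the fix you gesture at for the acknowledged obstacle would blow the height budget. The step that fails: for a $J$-smooth word $w = a_1\cdots a_m$, the prefix class of the running prefix product $a_1\cdots a_i$ is \emph{constant} in $i$. Indeed $a_1$ is a prefix of $a_1\cdots a_i$, both multiply into $J$ by smoothness, so item~\ref{Egg-box:prefix-incomparable} of the Egg-box Lemma makes them prefix equivalent. Your sequence $s_1,s_2,\ldots$ of rows is therefore constant; the ``automaton on prefix classes'' never changes state, and cutting after every prefix whose product lies in row $r$ either cuts after every position or nowhere. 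The induction on ``number of rows encountered'' stalls at $1$. Replacing rows by columns (suffix classes) does vary, but then, as you correctly anticipate, the middle factors' products end up in the same column yet different rows, hence different $\Hh$-classes -- you do not get a common idempotent, and the ``controlled secondary cutting'' you mention, if read as a nested recursion (columns outside, rows inside), would cost (number of columns)$\times$(number of rows)$\times 3|G|\approx 3|J|^2/|G|$, far beyond $4|J|$. Lemma~\ref{lem:mutually-inverse-bijections-h} also cannot rescue this: the bijections move $\Hh$-classes around but do not force running products into a common $\Hh$-class.

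The paper's argument avoids this by attaching a \emph{colour} to each cut (space between consecutive letters), where the colour is the $\Hh$-class sitting at the intersection of the prefix class of the part \emph{after} the cut and the suffix class of the part \emph{before} the cut; this intersection is a single $\Hh$-class by item~\ref{Egg-box:prefix-suffix-must-intersect} of the Egg-box Lemma, and, crucially, it is not the $\Hh$-class of any running product -- it couples information from both sides of the cut. The key claim (proved using smoothness and item~\ref{Egg-box:prefix-incomparable} once for the row, once for the column) is that the infix between two cuts of the same colour $H$ multiplies into $H$. One then inducts on the size of the colour set of $w$: cut along all cuts of one colour $H$, the middle factors $w_2,\ldots,w_{n-1}$ all land in $H$, so $w_2w_3\in H$ forces $H$ to be a group by item~\ref{hclass:some-product} of the $\Hh$-class Lemma, and Lemma~\ref{lem:group-factfor} handles the middle block; each factor's colour set strictly omits $H$, so the recursion has depth at most the number of $\Hh$-classes in $J$. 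This single-level induction is what makes the arithmetic come out to $<4|J|$ rather than the quadratic bound your nested cutting would give.
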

\begin{proof}
    % Recall, from the previous chapter, that an $\Hh$-class is a nonempty intersection of a prefix class and a suffix class. By .., all $\Hh$-classes contained in $J$ have the same size, and furthermore, every infix class intersects every prefix class. Therefore, it makes sense to talk 
    % % By the Egg-box Lemma, the size of $J$ is equal to 
    % \begin{align*}
    % \text{($\Hh$-class size)} \cdot \underbrace{\text{(number of prefix classes)} \cdot \text{(number of suffix classes)}}_{\text{number of $\Hh$-classes}}.
    % \end{align*}
    Define a \emph{cut} in a word to be the space between two consecutive letters; in other words this is a decomposition of the word into a nonempty prefix and a nonempty suffix. 
    For a cut, define its \emph{prefix} and \emph{suffix} classes as in the following picture:
    \mypic{7}
    For every cut,  both the prefix and suffix classes are contained in $J$, and therefore they have nonempty intersection thanks to item~\ref{Egg-box:prefix-suffix-must-intersect} of the Egg-box Lemma. This nonempty intersection is an $\Hh$-class, which is defined to be the \emph{colour} of the cut.
    The following claim gives the crucial property of cuts and their colours.  
    
    \begin{claim}\label{claim:consecutive-cuts}
        If  two cuts have the same colour $H$, then the infix between returns an element of  $H$ under multiplication.
    \end{claim}
    \begin{proof}
        Here is a picture of the situation:
        \mypic{6}
        The infix begins with a letter from the prefix class containing $H$.  Since the infix is still in the infix class $J$, by assumption on smoothness, it follows from item~\ref{Egg-box:prefix-incomparable} of the Egg-box Lemma that the result of multiplying   the infix is in the prefix class of $H$. For the same reason, the result is also in the suffix class of $H$. Therefore, it is in $H$.
    \end{proof}

    Define the \emph{colour set} of a word to be the set of colours of its cuts; this is a subset of the $\Hh$-classes in $J$. Thanks to Lemma~\ref{claim:mutually-inverse-bijections}, all $\Hh$-classes contained in $J$ have the same size, and therefore it makes sense to talk about  the $\Hh$-class size in $J$, without specifying which $\Hh$-class is concerned. 
    
    \begin{claim}
        Every $J$-smooth word has a Simon tree of height at most 
        \begin{align*}
        |\text{colour set of $w$}| \cdot (3 \cdot \text{$\Hh$-class size} + 1).
        \end{align*}
    \end{claim}

    Since the number of possible colours is the number of $\Hh$-classes, the maximal height that can arise from the claim is
    \begin{align*}
    3 \cdot |J| + \text{(maximal size of colour set)} < 4|J|,
    \end{align*}
which proves the lemma.
    It remains to prove the claim.
    
    \begin{proof}
    Induction on the size of the colour set. The induction base is when the colour set is empty. In this case the word has no cuts, and therefore it is a single letter, which is a Simon tree of height zero.

    Consider the induction step. Let $w$ be a smooth word. To prove the induction step, we will find a Simon tree whose height is at most the height from the induction assumption, plus
    \begin{align*}
        3 \cdot \text{($\Hh$-class size)} +1.
    \end{align*} Choose some colour in the colour set of $w$, which is an $\Hh$-class $H$. Cut  the word $w$ along all cuts with colour $H$, yielding a decomposition 
    \begin{align*}
    w = w_1  \cdots w_n.
    \end{align*}
    None of the words $w_1,\ldots,w_n$ contain a cut with colour $H$, so the induction assumption can be applied to yield corresponding Simon trees $t_1,\ldots,t_n$. 

    If $n \le 3$, then  the Simon trees from the induction assumption  can be combined  using  binary nodes, increasing the height by at most 2, and thus staying within the bounds of the claim. 
    
    Suppose now that $n \ge 4$.  
    By Claim~\ref{claim:consecutive-cuts}, multiplying an infix between any two cuts of colour $H$ returns a value in $H$. In particular, all  $w_2,\ldots,w_{n-1}$ yield results in $H$ under multiplication, and the same is true for $w_2w_3$. It follows that $H$ contains at least one multiplication of two elements from $H$, and therefore $H$ is a group thanks to item~\ref{hclass:some-product} of the $\Hh$-class Lemma. Therefore,  we can apply the group case from Lemma~\ref{lem:group-factfor} to join the trees $t_2,\ldots,t_{n-1}$. The final Simon tree looks like this:
    \mypic{10}
\end{proof}
\end{proof}

\paragraph*{General case.}
We now complete the proof of the Factorisation Forest Theorem. The proof is by induction on the \emph{infix height} of the semigroup, which is defined to be the longest chain that is strictly increasing in the infix ordering.  If the infix height is one, then the semigroup is a single infix class, and we can apply Lemma~\ref{lem:smooth} since all words in $S^+$ are smooth. For the induction step, suppose that $S$ has infix height at least two, and let $T \subseteq S$ be the elements which have a proper infix. It is not hard to see that $T$ is a subsemigroup, and its induction parameter is smaller. 

 Consider a  word $w \in S^+$. As in Lemma~\ref{lem:smooth}, define a cut to be a space between two letters. We say that a cut is \emph{smooth} if the letters preceding and following the cut give a two-letter word that is smooth.  

\begin{claim}\label{claim:smooth}
    A word in $S^+$ is smooth if and only if all of its cuts are smooth.
\end{claim}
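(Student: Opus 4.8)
The forward implication is immediate: if $w$ is smooth then all of its nonempty infixes lie in a single infix class, and for any cut of $w$ the two-letter word formed by the preceding and following letters is one of these infixes, hence smooth. So the whole content is the converse, and I plan to prove it by showing that if every cut of $w=a_1\cdots a_n$ is smooth then every nonempty infix $a_i\cdots a_j$ lies in one common infix class $J$ (which is exactly what ``$w$ is smooth'' means).

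First I would fix $J$. For each $\ell$, smoothness of the cut between $a_\ell$ and $a_{\ell+1}$ means $a_\ell$, $a_{\ell+1}$ and $a_\ell a_{\ell+1}$ all lie in one infix class; since the classes obtained for consecutive values of $\ell$ share the letter $a_{\ell+1}$, they all coincide. Call this common class $J$ (the case $n=1$ is trivial, as $w$ has no cuts and is a single letter). In particular all letters $a_i$ and all adjacent products $a_i a_{i+1}$ lie in $J$. Then I would prove by induction on $j-i$ that $a_i\cdots a_j\in J$; the cases $j-i\le 1$ are exactly the statement just made.

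For the inductive step, suppose $j-i\ge 2$ and write $X=a_i\cdots a_j$ and $Y=a_i\cdots a_{j-1}$, so that $X=Ya_j$ and, by the induction hypothesis, $Y\in J$. Since $a_{j-1}$ is the last letter of $Y$, it is a suffix of $Y$, and $a_{j-1}$ and $Y$ are infix-equivalent (both lie in $J$); by the suffix form of item~\ref{Egg-box:prefix-incomparable} of the Egg-box Lemma, $a_{j-1}$ and $Y$ are suffix-equivalent, so there is $z\in S^1$ with $zY=a_{j-1}$. Multiplying on the left gives $zX=(zY)a_j=a_{j-1}a_j$, which exhibits $X$ as an infix of $a_{j-1}a_j$; conversely $X=(a_i\cdots a_{j-2})(a_{j-1}a_j)$ exhibits $a_{j-1}a_j$ as an infix of $X$. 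Hence $X$ and $a_{j-1}a_j$ are infix-equivalent, and since $a_{j-1}a_j\in J$ (the cut between positions $j-1$ and $j$ is one of the cuts of $w$ and is smooth), we get $X\in J$. This closes the induction, and therefore $w$ is smooth.

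I do not expect a genuine obstacle here: the only structural input is the suffix version of item~\ref{Egg-box:prefix-incomparable} of the Egg-box Lemma, which is available. The points that need care are purely bookkeeping: allowing empty factors (elements of $S^1$) in the two ``is an infix of'' witnesses, making sure the cut between $a_{j-1}$ and $a_j$ really is a cut of $w$ (it is, since $1\le j-1<j\le n$), and noting that fixing the class $J$ up front is what lets the sandwich ``$a_{j-1}a_j$ is an infix of $X$ and $X$ is an infix of $a_{j-1}a_j$'' conclude $X\in J$ rather than merely that $X$ is comparable to something in $J$.
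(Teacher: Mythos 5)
Your proof is correct and takes essentially the same route as the paper: induct, peel off the last letter, and apply item~\ref{Egg-box:prefix-incomparable} of the Egg-box Lemma to upgrade infix-comparability into prefix/suffix-equivalence, thereby placing the longer word in the common infix class $J$. You apply Egg-box to the pair $(a_{j-1},Y)$ in the suffix direction, while the paper applies it to $(a,ab)$ in the prefix direction and then squeezes; the bookkeeping differs slightly, but the mathematical content is the same.
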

\begin{proof}
    Clearly if a word is smooth, then all of its cuts must be smooth.  We prove the converse implication by induction on the length of the word. Words of length one or two are vacuously smooth. For the induction step, consider a word $w  \in S^+$  with all cuts being smooth.  Since all cuts are smooth, all letters are in the same infix class. We will show that $w$ is also in this infix class. Decompose the word as $w = vab$ where  $a,b \in S$ are the last two letters. By induction assumption, $va$ is smooth. Since the last cut is smooth,  $a$ and $ab$ are in the same infix class, and therefore they are in the same prefix class by the Egg-box Lemma. This means that there is some $x$ such that  $abx = a$. We have
    \begin{align*}
    va = vabx = wx
    \end{align*}
which establishes that $w$ is in the same infix class as $va$, and therefore in the same infix class as all the letters in $w$. 
\end{proof}

Take a word $w \in S^+$, and cut it along all cuts which are not smooth, yielding a factorisation
\begin{align*}
w = w_1 \cdots w_n.
\end{align*}
By Claim~\ref{claim:smooth}, all of the words $w_1,\ldots,w_n$ are smooth, and therefore Lemma~\ref{lem:smooth} can be applied to construct corresponding Simon trees of height strictly smaller than
\begin{align*}
4 \cdot \text{(maximal size of an infix class in $S-T$)}.
\end{align*}
Using binary nodes, group these trees into pairs, as in the following picture:
\mypic{29}
Each pair corresponds to a word with a non-smooth cut, and therefore multiplying each pair yields a result in $T$. Therefore, we can combine the paired trees into a single tree, using the induction assumption on a smaller semigroup. The resulting height is the height from the induction assumption on $T$, plus at most
\begin{align*}
    1 + 4 \cdot \text{(maximal size of an infix class in $S-T$)} < 5|S-T|,
    \end{align*}
    thus proving the induction step. 

\exercisehead

\mikexercise{
    Show that for  every semigroup homomorphism
    \begin{align*}
    h : \Sigma^+ \to S \qquad \text{with $S$ finite}
    \end{align*}
    there is some $k \in \set{1,2,\ldots}$ such that for every $n \in \set{3,4,\ldots}$, every word of length bigger than $n^k$ can be decomposed as 
    \begin{align*}
     w_0 w_1 \cdots w_n w_{n+1}
    \end{align*}
    such that all of the words $w_1,\ldots,w_n$ are mapped by $h$ to the same idempotent. 
}
{  The number $k$ is the bound on the height Simon trees for $S$, i.e.~at most $5|S|$. If a word $w$ has length more than $n^k$, and it has a Simon tree of height $k$, then some node in that tree must have degree at least $n$. Since $n \ge 3$, that node is an idempotent node, which gives the factorisation in the statement of the exercise.   }

\mikexercise{Show optimality for the previous exercise, in the following sense. Show that for every $k \in \set{1,2,\ldots}$ there is some semigroup homomorphism
\begin{align*}
    h : \Sigma^+ \to S \qquad \text{with $S$ finite}
    \end{align*}
    such that for every  $n \in \set{1,2,\ldots}$ there is a word of length at least $n^k$ which does not admit a factorisation $w_0 \cdots w_{n+1}$ where all of $w_1,\ldots,w_n$ are mapped by $h$ to the same idempotent. 
}
{
    Consider the semigroup homomorphism
    \begin{align*}
    h : \set{1,\ldots,k}^+ \to (\set{1,\ldots,k}, \max)
    \end{align*}
    which maps a word to its maximum.
    For $n \in \set{1,2,\ldots}$, define a word $w_{i,n}$ by induction on $i \in \set{0,1,\ldots,k}$ as follows:
    \begin{align*}
        w_{i,n} = \begin{cases}
            \varepsilon & \text{for $i=0$}\\
            w_{i,n} (i w_{i,n})^{n-1} & \text{for $i \in \set{1,\ldots,k}$}.
        \end{cases}
    \end{align*}
    When going from $i$ to $i+1$, the length of the word increases at least $n$ times, and therefore $w_{k,n}$ has length bigger than $n^k$. Nevertheless, there is no infix $v$ of $w_{k,n}$ which is a concatenation of $n$ idempotents. To see this, consider the maximal letter $i \in \set{1,\ldots,k}$ that appears in the infix $v$. By construction of the words, this letter appears $<n$ times. If $v$ is a concatenation of idempotents, then each of these idempotents must contain $i$, and therefore there are $<n$ such idempotents.
}

\mikexercise{\label{ex:typed-regular-expressions}Let  $h : \Sigma^* \to M$ be a monoid homomorphism. Consider a regular expression over $\Sigma$, which does not use Kleene star $L^*$ but only Kleene plus $L^+$.  Such a  regular expression is called  $h$-typed if every subexpression has singleton image under $h$, and furthermore subexpressions with Kleene plus have idempotent image.  Show that every language recognised by $h$ is defined by finite union of $h$-typed expressions.}{}
 \chapter{Logics on finite words, and the corresponding monoids}
\label{chap:logics}
In this chapter, we show how structural properties of a monoid correspond to  the logical power needed to define languages recognised by this monoid. We consider two kinds of logic: monadic second-order logic \mso and its fragments (notably first-order logic \fo), as well as linear temporal logic \ltl and its fragments. Here is a map of the results from this chapter, with horizontal  arrows being equivalences, and the vertical arrows being strict inclusions.

\begin{align*}
\xymatrix{
  \txt{Section~\ref{sec:mso-finite-words}} &
  \txt{finite\\ monoids} \ar[r] &
  \txt{definable\\ in \mso} \ar[l]
  \\
  \txt{Section~\ref{sec:schutz}} & \ar[u]
  \txt{aperiodic\\finite\\ monoids} \ar[r] &
  \txt{definable\\ in \fo} \ar[l] \ar[r] & 
  \txt{definable\\ in \ltl} \ar[l] 
  \\
  \txt{Section~\ref{sec:fotwo}} & \ar[u]
  \txt{\dav } \ar[r] &
  \txt{definable\\ in \fo with\\ two variables} \ar[l] \ar[r] & 
  \txt{definable\\ in \ltlunary} \ar[l] 
  \\
  \txt{Section~\ref{sec:finally-only}} &
  \ar[u]
  \txt{suffix trivial\\finite\\ monoids} \ar[rr] &
   & 
  \txt{definable\\ in \ltlf} \ar[ll]
  \\
  \txt{Section~\ref{sec:piecewise}} &
  \ar[u]
  \txt{infix trivial\\finite\\ monoids} \ar[r] &
  \txt{Boolean\\ combinations \\ of $\exists^*$-\fo} \ar[l] 
}
\end{align*}

\section{All monoids and monadic second-order logic}
\label{sec:mso-finite-words}
We begin with  monadic second-order logic (\mso), which is the 
 logic that captures exactly the class of  regular languages.

\paragraph*{Logic on words.} We assume that the reader is familiar with the basic notions of logic, such as formula, model, quantifier or free variable. The following description is meant to fix notation. We use the word  \emph{vocabulary} to denote a set of relation names, each one with associated arity in $\set{1,2,\ldots}$.   A \emph{model} over a vocabulary consists of an \emph{underlying set} (also called the \emph{universe} of the model), together with an interpretation, which maps each relation name from the vocabulary to a relation over the universe of same arity. We allow the universe to be empty.  For example, a directed graph is the same thing as a model where the universe is the vertices and the  vocabulary  has one binary relation $E(x,y)$ that represents the edge relation.

To express properties of models, we use  first-order logic \fo and \mso.  Formulas of first-order logic over a given vocabulary are constructed as follows:
\begin{align*}
\myunderbrace{\forall x \quad \exists x}{quantification over\\ \scriptsize elements of the universe}  \qquad 
\myunderbrace{\varphi \land \psi \quad \varphi \lor \psi \quad \neg \varphi}{Boolean operations}
\qquad 
\myunderbrace{R(x_1,\ldots,x_n)}{an $n$-ary relation name from\\ \scriptsize the vocabulary applied \\ \scriptsize  to a tuple of variables} 
\qquad 
\myunderbrace{x=y}{equality}.
\end{align*}
We  use the  notation
\begin{align*}
\mathbb A, a_1,\ldots,a_n  \models \varphi(x_1,\ldots,x_n)
\end{align*}
to say that formula $\varphi$ is true in the model $\mathbb A$, assuming that  free variable $x_i$ is mapped to $a_i \in \mathbb A$.
A \emph{sentence} is a formula without free variables.
% If $\varphi$ is a sentence, i.e.~it has no free variables, then we simply write $\mathbb A \models \varphi$. 

Apart from first-order logic, we also use monadic second-order logic \mso; in fact \mso is the central logic for this book.
The logic \mso extends first-order logic by allowing quantification over subsets of the universe (in other words, monadic relations over the universe, hence the name). The syntax of the logic has   two kinds of variables: lower case variables $x,y,z,\ldots$ describe elements of the universe as in first-order logic, while upper case variables $X,Y,Z,\ldots$ describe subsets of the universe. Apart from the syntactic constructions of first-order logic, \mso also allows:
\begin{align*}
  \myunderbrace{\forall X \quad \exists Y}{quantification over\\ \scriptsize subsets of the universe}  \qquad \qquad 
  \myunderbrace{x \in X}{membership}.
\end{align*}
We do not use more powerful logics (e.g.~full second-order logic, which can also quantify over binary relations, ternary relations, etc.). This is because more powerful logic will not be subject to compositionality methods that are discussed in this book.

The following definition associates to each word a corresponding model.  With this  correspondence, we can use   logic to define properties of words.  
\begin{definition}[Languages definable in first-order logic and \mso]
\label{def:mso-finite-words}
  For a word $w \in \Sigma^*$, define its \emph{ordered model} as follows. The universe is the set of  positions in the word, and it is equipped with the following relations: 
\begin{align*}
\myunderbrace{x \le y}{position $x$ \\ \scriptsize is before \\ \scriptsize position $y$} \qquad \set{\myunderbrace{a(x)}{position $x$ \\ \scriptsize has label $a$ }}_{a \in \Sigma}.
\end{align*}
  For a sentence $\varphi$ of \mso   over the vocabulary used in the ordered model (this vocabulary depends only on the alphabet $\Sigma$), we  define  its \emph{language} to be
   \begin{align*}
    \set{w \in \Sigma^* : \text{the ordered model of $w$ satisfies $\varphi$}}.
   \end{align*}
   A language is called \emph{\mso definable} if it is of this form. If $\varphi$ is in first-order logic, i.e.~it does not use set quantification, then the language is called \emph{first-order definable}.
\end{definition}

\begin{myexample}\label{es:mso-definable-languages}
  The language $a^*b c^* \subseteq \set{a,b,c}^*$ is first-order definable, as witnessed by the sentence:
  \begin{align*}
  \myunderbrace{\exists x}
  {there is a \\
  \scriptsize position} \quad 
  \myunderbrace{b(x)}
  {which has  \\
  \scriptsize  label $b$}
  \land 
  \myunderbrace{\forall y \ \overbrace{y < x}^{y \le x \land x \neq y} \Rightarrow a(y)}
  {and every earlier position 
  \\ \scriptsize has label $a$}
   \land 
   \myunderbrace{\forall y \ y > x \Rightarrow c(y).}
  {and every later position 
  \\ \scriptsize has label $b$}
  \end{align*}
\end{myexample}
\begin{myexample}\label{ex:parity-mso} The language $(aa)^*a \subseteq a^*$ of words of odd length is \mso definable, as witnessed by the sentence:
  \begin{align*}
    \myunderbrace{\exists X }
    {there is a \\
    \scriptsize set of\\
    \scriptsize positions}
    \quad
    \myunderbrace{\forall x \ 
    \overbrace{\mathrm{first}(x)}^{\forall y \ y \ge x} \lor 
    \overbrace{\mathrm{last}(x)}^{\forall y \ y \le x}
     \Rightarrow x \in X}
    {which contains the first and last positions,}
     \land 
     \myunderbrace{\forall x \forall y \  
     \overbrace{x = y+1}^{
       \substack{
       x < y \land \\ \forall z \ z \le x \lor y \le z}}
     \Rightarrow (x \in X \iff y \not \in X)}
    {and contains every second position.}
    \end{align*}
    As we will see  in Section~\ref{sec:schutz}, this language is  not first-order definable.
\end{myexample}

One could imagine other ways of describing a word via a model, e.g.~a \emph{successor model} where $x \le y$ is replaced by a successor relation $x+1=y$. The successor relation can be defined in first-order logic in terms of order, but the converse is not true. Indeed,  there are languages that are first-order definable in the ordered model but are not first-order definable  in the successor model, see Exercise~\ref{ex:successor-idempotent-swap}. 
For the logic \mso, there is no difference between successor and order, since the order can be defined in terms of successor using the logic \mso as follows
\begin{align*}
x \le y \quad \text{iff} \quad \forall X\ 
\myunderbrace{
(x \in X \land (\forall y \ \forall z\ y \in X \land y+1=z \Rightarrow z \in X) )}
{$X$ contains $x$ and is closed under successors} \Rightarrow y \in X.
\end{align*}

We now present the seminal \bueltr Theorem, which says that  \mso describes exactly the regular languages. 
\begin{theorem}[\bueltr]
  A language $L \subseteq \Sigma^*$ is \mso definable  if and only if it is regular\footnote{
    This result was proved, independently, in the following papers:
     \incite[Theorems 1 and 2]{trakthenbrot1958} 
     \incite[Theorems 1 and 2]{Buchi60}
     \incite[Theorem~5.3]{Elgot61}
     }.
\end{theorem}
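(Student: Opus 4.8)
The plan is to prove the two directions separately, reducing each to tools already available in the excerpt.

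\textbf{From regular to \mso-definable.} First I would take a regular language $L \subseteq \Sigma^*$ and, by Theorem~\ref{thm:regular-languages-monoids}, fix a deterministic finite automaton (or a monoid homomorphism $h : \Sigma^* \to M$ with accepting set $F$) recognising it. The idea is to write an \mso sentence that guesses an accepting run. For a deterministic automaton with state set $Q = \{q_1,\ldots,q_k\}$, I would existentially quantify monadic variables $X_1,\ldots,X_k$, intended so that $X_i$ is the set of positions $x$ such that the automaton is in state $q_i$ \emph{after} reading the prefix ending at $x$. The sentence then asserts: the $X_i$ partition the universe; the state at the first position is consistent with the initial state and the label there (via the transition function); for every pair of consecutive positions $x, y$ with $y = x+1$, the pair of states recorded by the $X_i$'s at $x$ and $y$ is consistent with the transition on the label of $y$; and the state recorded at the last position lies in the accepting set. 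The successor relation and ``first''/``last'' are first-order definable from $\le$, as illustrated in Example~\ref{ex:parity-mso}. The empty word is handled by a separate disjunct (a sentence $\forall x\, (x \neq x)$ checking emptiness of the universe, conjoined with whether the initial state is accepting). This direction is routine bookkeeping.

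\textbf{From \mso-definable to regular.} This is the direction I expect to be the main obstacle, and the natural route is a compositionality / Ehrenfeucht--Fra\"iss\'e argument. Given an \mso sentence $\varphi$ of quantifier rank $r$, I would define an equivalence relation on $\Sigma^*$ by $w \equiv_r w'$ iff $w$ and $w'$ satisfy the same \mso sentences of quantifier rank at most $r$ (over the ordered vocabulary for $\Sigma$). The two key facts to establish are: (1) $\equiv_r$ has finitely many classes, because up to logical equivalence there are only finitely many \mso sentences of quantifier rank $\le r$ over a fixed finite vocabulary; and (2) $\equiv_r$ is a congruence with respect to concatenation, i.e.\ if $w_1 \equiv_r w_1'$ and $w_2 \equiv_r w_2'$ then $w_1 w_2 \equiv_r w_1' w_2'$. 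Fact (2) is the heart of the matter and is proved via an Ehrenfeucht--Fra\"iss\'e game argument: a winning strategy for Duplicator in the $r$-round \mso game on $w_1, w_1'$ and one on $w_2, w_2'$ can be combined into a winning strategy on the concatenations, by playing in each ``half'' according to which side of the cut Spoiler's pebble or set-move falls (set moves are split into their restrictions to the two halves). Given (1) and (2), the quotient map $\Sigma^* \to \Sigma^*/{\equiv_r}$ is a monoid homomorphism onto a finite monoid, and since $\varphi$ has quantifier rank $r$, membership in the language of $\varphi$ is a union of $\equiv_r$-classes; hence the language is recognised by a finite monoid, and by Theorem~\ref{thm:regular-languages-monoids} it is regular.

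\textbf{Main obstacle.} The delicate point is the composition lemma for the \mso Ehrenfeucht--Fra\"iss\'e game: one must check carefully that splitting a monadic-set move across the cut preserves Duplicator's invariant, and that the ordered vocabulary (with $\le$ rather than just successor) behaves well under concatenation -- the order relation on $w_1 w_2$ decomposes cleanly as ``$\le$ within the first block, $\le$ within the second block, and everything in the first block precedes everything in the second.'' Once this composition property is in hand, both the finiteness of the index and the congruence property follow, and the theorem is immediate. I would present the argument at the level of the game rather than grinding through an explicit Hintikka-formula induction, since the game formulation makes the block-decomposition transparent.
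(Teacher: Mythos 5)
Your proposal is correct, but your proof of the hard direction takes a genuinely different route from the paper. The paper (Lemma~\ref{lem:bet-induction}) proceeds by structural induction on the \mso formula: it eliminates first-order variables in favour of singleton-set predicates, then shows that the language of each subformula is recognised by a finite monoid, handling Boolean connectives via product monoids and existential set quantification via a powerset construction. You instead take a semantic route: fix a quantifier rank $r$, define $\equiv_r$ on $\Sigma^*$ by agreement on all \mso sentences of rank $\le r$, and prove (i) finite index, by counting formulas up to logical equivalence, and (ii) that $\equiv_r$ is a monoid congruence, via the Ehrenfeucht--Fra\"iss\'e composition lemma for ordered words. Both are classical and correct; the interesting contrast is what each buys you. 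The paper's formula-induction argument is deliberately "algebra-first": it never mentions games, produces explicit recognising monoids operation by operation, and --- as the author stresses --- transfers almost verbatim to $\cc$-words, forests, and hypergraphs later in the book, where the only thing one must re-verify is that products and powersets preserve finiteness of the relevant algebras. Your EF-game argument is conceptually crisper for the special case of words, making the congruence property for $\equiv_r$ the single load-bearing lemma, but the composition lemma itself is tied to the linear structure of words and would need to be reproved from scratch for each new kind of structure. It is worth noting that the paper \emph{does} use your style of argument, but one level down: Lemma~\ref{lem:star-free-ef} and its corollary establish exactly the finite-index-congruence property of $\foequiv{k}$ for first-order logic en route to the Sch\"utzenberger--McNaughton--Papert--Kamp theorem. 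So you have, in effect, lifted to \mso a technique the paper reserves for the first-order fragment. The one technical point you flag as the "main obstacle" --- splitting a set move across the cut in the composition game --- is indeed where the care is needed, and your sketch of it is sound.
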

This result is seminal for   two reasons.

The first reason is that it motivates the  search  for other correspondences 
\begin{align*}
\text{machine model} \quad \sim \quad \text{logic,}
\end{align*}
which can concern either  restrictions or generalisations of the regular languages. In the case of restrictions, an important example is first-order logic; this restriction and others  will be described later in this chapter. In this book, we do not study the generalisations; we are only interested in regular languages. Nevertheless, it is worth mentioning Fagin's Theorem, which says that  {\sc np}  describes exactly the languages definable  in existential second-order logic\footcitelong[Theorem 6]{Fagin74}.   

The  \bueltr theorem is also seminal because it generalises well to structures beyond finite words. For example, there are obvious notions of \mso definable languages for: infinite words, finite trees, infinite trees, graphs, etc. It therefore makes sense to search for notions of regularity -- e.g.~based on generalisations of semigroups -- which have the same expressive power as \mso. This line of research will also be followed in this book.

The rest of Section~\ref{sec:mso-finite-words} proves the \bueltr Theorem.

The easy part is  that every regular language is \mso definable. Using the same idea as for the parity language in Example~\ref{ex:parity-mso}, the existence of a run of  nondeterministic finite automaton can be formalised in \mso. If the automaton has $n$ states, then the formula looks like this:
\begin{align*}
\myunderbrace{
  \exists X_1\  \exists X_2 \ \cdots \exists X_n}
  {existential set quantification} \qquad 
\myunderbrace{
\text{``the sets $X_1,\ldots,X_n$
 describe an accepting run''.}}
 {first-order formula}
\end{align*}
A corollary is that if we take any \mso definable language, turn it into an automaton using the hard implication, and come back to \mso using the easy implication, then we get an  \mso sentence of the form described above.

We now turn to the hard part, which says that every \mso definable language is regular. This implication is proved in the rest of Section~\ref{sec:mso-finite-words}. The proof is so generic that it will be reused multiple times in future chapters, for structures such as infinite words, trees or graphs.

For the definition of regularity, we use finite monoids. In other words, we will show that every \mso definable language is  recognised by a finite monoid. 
The idea is to construct the finite monoid by induction on formula size. In the induction, we also construct monoids for formulas with free variables, so we begin by dealing with those. 

\begin{definition}[Language of formulas with free variables]
  For an \mso  formula 
  \begin{align*}
  \varphi(\myunderbrace{X_1,\ldots,X_n}{all free variables are set variables})
  \end{align*}
  which uses the vocabulary of the ordered model for words over alphabet $\Sigma$, define its \emph{language}  to be the set of words $w$ over alphabet $\Sigma \times \set{0,1}^n$ such that 
  \begin{align*}
  \pi_\Sigma(w) \models \varphi(X_1,\ldots,X_n),
  \end{align*}
  where $\pi_\Sigma$ is the projection of $w$ onto the $\Sigma$ coordinate, and $X_i$ is the set of positions whose label has value 1 on the $i$-th bit of the bit vector from $\set{0,1}^n$.
\end{definition}
If the formula $\varphi$ in the above definition has no free variables, then the above notion of language coincides with Definition~\ref{def:mso-finite-words}. Therefore, the hard part of the \bueltr Theorem will follow immediately from Lemma~\ref{lem:bet-induction} below.
\begin{lemma}\label{lem:bet-induction}
  If $\varphi(X_1,\ldots,X_n)$ is a formula of \mso (where all free variables are set variables), then its language is recognised by a finite monoid.
\end{lemma}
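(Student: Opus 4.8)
The plan is to proceed by structural induction on the \mso formula $\varphi(X_1,\ldots,X_n)$, building along the way a finite monoid $M_\varphi$ together with a monoid homomorphism $h_\varphi : (\Sigma\times\set{0,1}^n)^* \to M_\varphi$ and an accepting set $F_\varphi \subseteq M_\varphi$ that recognise the language of $\varphi$. The delicate bookkeeping issue, which I would set up carefully at the start, is that the number of free variables changes as we descend into the formula: a subformula may have more free variables than its parent (when the parent binds one), so the induction hypothesis must be stated for \emph{all} choices of the free-variable context. Concretely, I would fix once and for all that for each formula $\psi$ and each finite tuple of set variables $\bar X$ containing the free variables of $\psi$, the language $L_{\psi,\bar X}$ over $\Sigma\times\set{0,1}^{|\bar X|}$ is recognised by a finite monoid; adding dummy free variables corresponds to a projection on alphabets, which preserves recognisability, so this is harmless.

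\textbf{Base cases.} The atomic formulas are $x \le y$, $a(x)$, $x = y$, and $x \in X$ — but note that after the reduction to set variables, the first-order element variables must themselves be encoded as set variables constrained to be singletons. So the genuine atoms to handle are: ``$X$ is a singleton'', ``the unique element of $X$ has label $a$'', ``the unique element of $X$ precedes the unique element of $Y$'', ``$X = Y$'', and ``$X \subseteq Y$''. Each of these is a regular language over the appropriate product alphabet — for instance ``$X$ is a singleton'' is the language of words in which exactly one position carries a $1$ in the $X$-bit, recognised by the three-element monoid $\set{0,1,\bot}$ counting occurrences up to $2$. For each atom I would just exhibit a small finite monoid directly. (Strictly, at this point one may instead invoke the easy direction already discussed — regular implies recognisable — but since we want a self-contained induction it is cleaner to write the monoids out.)

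\textbf{Inductive steps.} For $\psi = \psi_1 \land \psi_2$ and $\psi = \psi_1 \lor \psi_2$, after padding both subformulas to a common free-variable context $\bar X$, take $M_\psi = M_{\psi_1} \times M_{\psi_2}$ with $h_\psi$ the pairing of the two homomorphisms and $F_\psi$ the obvious intersection/union of preimages; this is a finite monoid because products of finite monoids are finite. For $\psi = \neg \psi_1$, keep the same monoid and homomorphism and complement the accepting set $F_\psi = M_{\psi_1} \setminus F_{\psi_1}$. The one substantive step is existential quantification $\psi = \exists X_{n+1}\, \psi_1(X_1,\ldots,X_{n+1})$: here the language of $\psi$ over $\Sigma\times\set{0,1}^n$ is the image of the language of $\psi_1$ over $\Sigma\times\set{0,1}^{n+1}$ under the letter-to-letter projection that forgets the last bit. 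The point is that recognisability by finite monoids is closed under such projections: if $h : (\Sigma\times\set{0,1}^{n+1})^* \to M$ recognises $L_{\psi_1}$ with accepting set $F$, then a word $w$ over $\Sigma\times\set{0,1}^n$ lies in $L_\psi$ iff some lift $w'$ of $w$ has $h(w') \in F$, and this property of $w$ depends only on the set of values $\set{h(w') : w' \text{ lifts } w}$, which can be computed letter-by-letter in the \emph{powerset monoid} of $M$ (Exercise~\ref{ex:powerset-semigroup}). So set $M_\psi = \powerset(M)$, let $h_\psi$ send a letter $(a,\bar v)$ to the set $\set{h(a,\bar v 0), h(a,\bar v 1)}$ of the two one-letter lifts' images, extend multiplicatively, and take $F_\psi = \set{A \subseteq M : A \cap F \neq \emptyset}$; one checks $h_\psi(w)$ is exactly $\set{h(w') : w' \text{ lifts } w}$ by a straightforward induction on $|w|$, so $w \in L_\psi$ iff $h_\psi(w) \in F_\psi$. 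Universal quantification is handled via $\forall = \neg\exists\neg$, or dually.

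\textbf{Main obstacle.} The only real subtlety is the one flagged above — getting the induction statement right so that the change in free-variable context across the quantifier step is legitimate — together with verifying carefully that $h_\psi$ in the projection step is a homomorphism into $\powerset(M)$ and that it indeed computes the set of images of all lifts; everything else is routine. The exponential blow-up incurred at each existential quantifier (from $M$ to $\powerset(M)$) is expected and matches the nonelementary dependence of automaton size on formula size, but is irrelevant for the statement, which only asserts finiteness.
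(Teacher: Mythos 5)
Your proposal is correct and follows essentially the same route as the paper's proof: eliminate first-order variables by encoding them as singleton sets, induct on the formula, use product monoids (with complemented accepting sets) for Boolean connectives, and use the powerset monoid for existential set quantification. The only differences are cosmetic — the paper chooses the atoms $X\subseteq Y$, $X\le Y$, $X\subseteq a$ rather than your singleton-based atoms, and handles $\lor$ via De Morgan rather than directly — and your explicit flagging of the free-variable-context bookkeeping, which the paper treats implicitly, is sound.
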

\begin{proof}
  Before proving the lemma, we observe that first-order variables can be eliminated from \mso, and therefore we can assume that in  $\varphi$ and all of its sub-formulas, all free variables are set variables.  Suppose that we extend  \mso with the following predicates that express properties of sets
  \begin{align}\label{eq:set-model}
  \myunderbrace{X \subseteq Y}{set inclusion}
  \qquad \qquad
\myunderbrace{X \le Y}{
$x \le y$ holds for \\
\scriptsize every $x \in X$ and \\
\scriptsize every $y \in Y$}
\qquad \qquad
\myunderbrace{X \subseteq a}{
$a(x) $ holds for \\
\scriptsize every $x \in X$}.
  \end{align}
  The above predicates are second-order predicates in the sense that they express properties of sets; in contrast to the first-order predicates $x \le y$ and $a(x)$ which express properties of elements.
  Using the second-order predicates, we can eliminate the first-order variables: instead of quantifying over a position $x$, we can quantify over a set of positions $X$, and then say that this set is a singleton:
  \begin{align*}
  \myunderbrace{X \neq \emptyset}{$X$ is nonempty}
  \quad \myunderbrace{\forall Y \ Y \subseteq X \Rightarrow Y = \emptyset \lor Y =X}{and every proper subset of $X$ is empty}.
  \end{align*}
  Once elements are represented as singleton sets, the first-order predicates $x \le y$ and $a(x)$ can be simulated using  the second-order predicates from~\ref{eq:set-model}.

  Using the transformation described above, from now on we assume that \mso has only set variables, and it uses the second-order predicates from~\ref{eq:set-model}. For such formulas, we prove the lemma by induction on formula size. 

  \begin{itemize}
    \item \emph{Induction base.} In the induction base, we need to show that for every atomic formula as in~\eqref{eq:set-model}, its language is recognised by a finite monoid. Consider for example the formula $X \subseteq a$. The language of this formula consists of words over alphabet $\Sigma \times 2$ where  every position has a label that satisfies the following implication:
\begin{align*}
\text{second coordinate is $1$}\quad \Rightarrow \quad
\text{first coordinate is $a$}.
\end{align*}
This language is recognised by the homomorphism into the monoid 
\begin{align*}
(\set{0,1}, \min)
\end{align*}
which maps letters that satisfy the implication to $1$ and other letters to $0$.  Similar constructions can be done for the remaining predicates, in the case of $X \le Y$ the monoid is not going to be commutative.
\item {Boolean combinations.} For negation, the language of 
\begin{align*}
\neg \varphi(X_1,\ldots,X_n)
\end{align*}
is recognised by the same homomorphism as the language of $\varphi(X_1,\ldots,X_n)$, only the accepting set needs to be complemented. For conjunction
\begin{align*}
  \varphi_1(X_1,\ldots,X_n) \land \varphi_2(X_1,\ldots,X_n),
  \end{align*}
  one uses a product homomorphism 
  \begin{align*}
  (h_1,h_2) : (\Sigma \times 2^n)^* \to M_1 \times M_2 \quad \text{where $h_i : (\Sigma \times 2^n)^* \to M_i$ recognises $\varphi_i$},
  \end{align*}
  with the accepting set consisting of pairs that are accepting on both coordinates. Disjunction $\lor$ reduces to conjunction and negation using De Morgan's Laws.
  \item \emph{Set quantification.} By De Morgan's Laws, it is enough to consider existential set quantification
  \begin{align*}
 \exists X_n \ \varphi(X_1,\ldots,X_n) 
  \end{align*}
  The language of the quantified formula uses alphabet $\Sigma \times 2^{n-1}$. Let 
  \begin{align*}
  h : (\Sigma \times 2^{n})^* \to M
  \end{align*}
  be a homomorphism that recognises the language of the formula $\varphi(X_1,\ldots,X_n)$, which is obtained by induction assumption. To recognise the quantified formula, we  use a powerset construction. Define
  \begin{align*}
    \pi : (\Sigma \times 2^n)^* \to (\Sigma \times 2^{n-1})^*
    \end{align*}
    to be the letter-to-letter homomorphism which removes the last bit from every input position, and define 
  \begin{align*}
  H : (\Sigma \times 2^{n-1})^* \to \powerset M \qquad 
  H(w) = \set{ h(v) : \pi(v)=w}.
  \end{align*}
  It is not hard to see that the function $H$ is a homomorphism, with the monoid structure on the powerset 
$\powerset M$ defined by 
\begin{align*}
A \cdot B = \set{a \cdot b : a \in A, b \in B} \qquad \text{for }A,B \subseteq M.
\end{align*}
The powerset construction clearly preserves finiteness, although at the cost of an exponential blow up. The accepting set consists of those subsets of $M$ which have at least one accepting element. 
  \end{itemize}
\end{proof}
The construction in the above lemma is effective, which means that given a sentence of \mso, we can compute in finite time a recognising monoid homomorphism with an accepting set. Therefore, it is decidable if a sentence of \mso is true in at least one finite word: check if the image of the monoid homomorphism contains at least one accepting element.  

As mentioned before, 
the proof of the ``hard'' implication in the  \bueltr Theorem is very generic and will work without substantial changes in other settings, such as infinite words, trees or graphs. The ``easy part'' will become hard part in some generalisations -- e.g.~for some kinds of infinite words or for graphs -- because these generalisations lack a suitable automaton model.

\exercisehead

\mikexercise{\label{ex:u2-powerset-generates} Define $\Uu_2$ to be the monoid with elements $\set{a,b,1}$ and multiplication
\begin{align*}
      xy = \begin{cases}
    y & \text{if $x=1$}\\
    x & \text{otherwise}.
 \end{cases}
\end{align*}
Show that every finite monoid can be obtained from  $\Uu_2$ by applying Cartesian products, quotients (under semigroup congruences), sub-semigroups, and the powerset construction from Exercise~\ref{ex:powerset-semigroup}.
}{}

\mikexercise{\label{ex:prefix-structure}For an alphabet $\Sigma$, consider the model where the universe is the set $\Sigma^*$ of all finite words, and which is equipped with the following relations: 
\begin{align*}
  \myunderbrace{x \text{ is a prefix of }y}{binary relation} \qquad 
  \myunderbrace{\text{the last letter of $x$ is $a \in \Sigma$}}{one unary relation for each $a \in \Sigma$}
  \end{align*}
Show that a language $L \subseteq \Sigma^*$ is regular if and only if there is a first-order formula $\varphi(x)$ over the above vocabulary such that $L$ is exactly the words that satisfy $\varphi(x)$ in the above structure. 
}{}

\mikexercise{\label{ex:infix-structure} What happens if the prefix relation in Exercise~\ref{ex:prefix-structure} is replaced by the infix relation?}{Then we can describe accepting computations of Turing machines. 
}

\mikexercise{\label{ex:second-order-logic-not-congruent}Consider the fragment of second-order logic where one can quantify over: elements, unary relations, and binary relations. (This fragment is expressively complete.) Define $\equiv_k$ to be the equivalence on $\Sigma^*$ which identifies two words if they a satisfy the same sentences from the above fragment of second-order logic, up to quantifier rank $k$. Show that this equivalence relation has finite index, but it is not a semigroup congruence.}{}

\mikexercise{In the proof of the \bueltr Theorem, there was an exponential blowup incorred by every set set quantifier. Show that this is optimal, i.e.~for every $n$ there is an \mso formula with $\Oo(n)$ set quantifiers such that the smallest model of this formula is a word that has length which is a tower of $n$ exponentials.   }{}{}

\section{Aperiodic semigroups and  first-order logic}
\label{sec:schutz}
Having shown that \mso corresponds to all finite monoids, we  now begin the study of fragments of  \mso and the corresponding restrictions on finite monoids.  The first -- and arguably most important -- fragment is first-order logic. This fragment will be described in  the \schutz-McNaughton-Papert-Kamp Theorem. One part of the theorem says that  a language   is first-order definable if and only if it is recognised by a finite monoid  $M$ which  satisfies
\begin{align*}
\myunderbrace{a^\momega = a^{\momega}a \quad \text{for all $a \in M$}}{a monoid or semigroup which satisfies this is called \emph{aperiodic}},
\end{align*}
where $\momega \in \set{1,2,\ldots}$ is the idempotent exponent from  the Idempotent Power Lemma. 
In other words, in an aperiodic monoid the sequence $a, a^2, a^3, \ldots$ is eventually constant, as opposed to having some non-trivial periodic behaviour.
\begin{myexample}
    Consider the parity language $(aa)^* \subseteq a^*$.   We claim that this language is not recognised by any aperiodic monoid, and therefore it is not first-order definable. Of course the same is true for the complement of the language, namely the words of odd length which were discussed in Example~\ref{ex:parity-mso}.

    Suppose that the parity language is recognised by a homomorphism $h$ into some  finite monoid $M$.  By  Theorem~\ref{thm:syntactic-monoid} on syntactic monoids, there is a surjective homomorphism from the image of $h$, which is a sub-monoid of $M$, into the syntactic monoid. In other words, the syntactic monoid is a quotient (i.e.~image under a surjective homomorphism) of a sub-monoid of $M$. Since the syntactic monoid is the two-element group, which is not aperiodic, and since aperiodic monoids are closed under taking quotients and  sub-monoids, it follows that $M$ cannot be aperiodic. 

    The above argument shows  that  a  regular language is first-order definable if and only if its syntactic monoid  is  aperiodic. Since the syntactic monoid can be computed, and aperiodicity is clearly decidable, it follows that there is an algorithm which decides if a regular language is first-order definable. 
\end{myexample}

As can be guessed from the name,  the \schutz-McNaughton-Papert-Kamp Theorem is an amalgam of several results, which consider several formalisms. Apart from first-order logic and aperiodic monoids, these formalisms include linear temporal logic and star-free regular expressions, so we begin by defining those. 

\paragraph*{Linear temporal logic} Linear temporal logic\footnote{
  This logic, and the theorem about its expressive completeness for first-order logic, is due to 
\incite[Theorem II.1.]{Kamp68}
This theorem considers all words where the set of positions is a (possibly infinite) complete linear ordering, which covers the special case of finite words that is considered in this chapter.
} (\ltl) is an alternative to first-order logic which does not use quantifiers.  The logic \ltl only makes sense for structures equipped with a linear order; hence the name. 

\begin{definition}[Linear temporal logic]
  Let $\Sigma$ be a finite alphabet. Formulas of \emph{linear temporal logic} (\ltl) over $\Sigma$ are defined by the following grammar:
  \begin{align*}
  \myunderbrace{a \in \Sigma}{the current\\
  \scriptsize position has\\
  \scriptsize label $a$} \qquad \varphi \land \psi \qquad \varphi \lor \psi \qquad \neg \varphi \qquad \myunderbrace{\varphi \until \psi}{$\varphi$ until $\psi$}.
  \end{align*}
  The semantics for \ltl formulas is a ternary relation, denoted by 
\begin{align*}
\myunderbrace{w,}{word\\ \scriptsize in $\Sigma^+$} 
\myunderbrace{x}{position\\ \scriptsize in $w$} 
\models
\myunderbrace{\varphi}{\ltl formula},
\end{align*}
   which is defined as follows. A formula $a \in \Sigma$ is true in positions with label $a$. The semantics of Boolean combinations are defined as usual. For formulas of the form $\varphi \until \psi$, the semantics\footnote{We use a variant of the until operator which is sometimes called \emph{strict until}. Strict until is the variant that was originally used by Kamp, see~\cite[p. viii]{Kamp68}.
   } are 
    \begin{align*}
    w,x \models \varphi \until \psi \quad \eqdef \quad 
    \myunderbrace{\exists y\ x<y}
    {there is some \\
    \scriptsize position strictly\\
    \scriptsize after $x$}
    \land \ 
    \myunderbrace{w,y \models \psi}
    {which\\
    \scriptsize satisfies $\psi$} \land 
    \myunderbrace{
    \forall z \ x < z < y \ \Rightarrow w,z \models \varphi.}
    {and such that all intermediate \\
    \scriptsize positions satisfy $\varphi$}
    \end{align*}
  We say that an \ltl formula is true in a word, without specifying a position,   if the formula  is true in the first position of that word; this only makes sense for nonempty words. 
  A language $L \subseteq \Sigma^*$ is called \emph{\ltl definable} if there is an \ltl formula $\varphi$  that defines the language on nonempty words: 
  \begin{align*}
  w \in L \quad \text{iff} \quad 
  w \models \varphi \qquad \text{for every $w \in \Sigma^+$.}
  \end{align*}
  \end{definition}
For example, the formula
$a \until b$ defines the language $\Sigma a^* b \Sigma^*$. If we add the empty word to this language, then it is still defined by the formula $a \until b$, because the notion of \ltl definable language does not take into account the empty word.
\begin{myexample}
  To get a better feeling for \ltl, we discuss some extra operators that can be defined using until, and which will be used later in this chapter.
     We write $\bot$ for any vacuously false formula, such as~$a \land \neg a$. Likewise $\top$ denotes any vacuously true formula. Here are  some commonly used  extra operators:
    \begin{align*}
    \myunderbrace{\nextx \varphi \quad \eqdef \quad \bot \until \varphi,}{the next position satisfies $\varphi$} \hspace{1cm}
    \myunderbrace{\finally \varphi \quad \eqdef \quad \top \until \varphi,}{some strictly later position\\ \scriptsize satisfies $\varphi$}
    \hspace{1cm}
    \myunderbrace{\varphi \until^* \psi \quad \eqdef \quad \psi \lor (\varphi \until \psi).}{non-strict until}
    \end{align*}
    Similarly, we  define a non-strict version of the operator $\finally$, with $\finally^* \varphi = \varphi \lor \finally \varphi$.
    For example, the formula 
    \begin{align*}
    \finally^* (a \land \myunderbrace{\neg \finally \top}{last position})
    \end{align*}
    says that the last position in the word has label $a$. 
\end{myexample}

Almost by definition, every \ltl definable language is also first-order definable. Indeed, by unfolding the definition, one sees that for every \ltl formula there is  a first-order formula $\varphi(x)$ that is true in the same positions.

\paragraph*{Star-free languages.} We now present the final formalism that will appear in the \schutz-McNaughton-Papert-Kamp Theorem, namely star-free expressions\footnote{These were introduced in\incite[p. 190.]{Schutzenberger65}}. As the name implies, star-free expressions cannot use Kleene star. However,  in exchange they are allowed to use complementation (without star and complementation one could only define finite languages). For an alphabet $\Sigma$, the  star-free expressions are those that can be defined using the following operations on languages:
\begin{align*}
\myunderbrace{
  a \in \Sigma}
  {the language that\\
  \scriptsize contains only\\
  \scriptsize the word $a$} 
  \qquad  
  \myunderbrace{\emptyset}{empty \\ \scriptsize language} \qquad  \myunderbrace{L K}{concatenation} 
  \qquad \myunderbrace{L + K}{union} \qquad 
  \myunderbrace{\overline L.}{complementation \\ \scriptsize with respect \\ \scriptsize to $\Sigma^*$}
\end{align*}
Note that the alphabet needs to be specified to give meaning to the complementation operation. A language is called \emph{star-free} if it can be defined by a star-free expression. 
\begin{myexample}
    Assume that the alphabet is $\set{a,b}$. The expression $\bar \emptyset$ describes the full language $\set{a,b}^*$. Therefore 
    \begin{align*}
    \bar \emptyset \cdot a \cdot \bar \emptyset 
    \end{align*}
    describes all words with at least one $a$. Taking the complement of the above expression, we get a star-free expression for the language $b^*$. 
\end{myexample}

Like for \ltl formulas, almost by definition every star-free expression describes a first-order definable language. This is because to every star-free expression one can associate a first-order formula $\varphi(x,y)$ which selects a pair of positions $x \le y$ if and only if the corresponding infix (including $x$ and $y$) belongs to the language described by the  expression.

\paragraph*{Equivalence of the models.} The \schutz-McNaughton-Papert-Kamp  Theorem says that all of the formalisms discussed so far in this section are equivalent. 

\begin{theorem}[\schutz-McNaughton-Papert-Kamp]
  The following are equivalent\footnote{This theorem combines three equivalences.

    The equivalence of aperiodic monoids and star-free expressions was shown in \incite[p.~190.]{Schutzenberger65} The equivalence of star-free expressions and first-order logic was  shown in \incite[Theorem 10.5.]{McNaughtonPapert71}   The equivalence of first-order logic and \ltl, not just for finite words, was shown  in \incite[Theorem II.1.]{Kamp68}
  } for every  $L \subseteq \Sigma^*$:
  \begin{enumerate}
    \item \label{schutz:aperio} recognised by a finite  aperiodic monoid;
    \item \label{schutz:starfree} star-free;
    \item \label{schutz:fodef} first-order definable;
    \item \label{schutz:ltl} \ltl definable.
  \end{enumerate}
\end{theorem}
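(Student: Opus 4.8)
The theorem is a cycle of implications, and the natural route is to close the cycle in the order \ref{schutz:aperio} $\Rightarrow$ \ref{schutz:starfree} $\Rightarrow$ \ref{schutz:fodef} $\Rightarrow$ \ref{schutz:ltl} $\Rightarrow$ \ref{schutz:aperio}. The two implications \ref{schutz:starfree} $\Rightarrow$ \ref{schutz:fodef} and \ref{schutz:ltl} $\Rightarrow$ \ref{schutz:fodef} are essentially free: as remarked just above the theorem statement, a star-free expression translates directly into a first-order formula $\varphi(x,y)$ selecting the infix between $x$ and $y$, and an \ltl formula translates into a first-order formula $\varphi(x)$ by unfolding the semantics of $\until$. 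The implication \ref{schutz:ltl} $\Rightarrow$ \ref{schutz:aperio} is also comparatively cheap: one shows by induction on the \ltl formula that each subformula defines (at each position) a language whose behaviour is tracked by an aperiodic monoid, using that aperiodic monoids are closed under product, quotient, sub-monoid, and — the one nontrivial closure — the powerset-style construction needed for $\until$; alternatively one argues directly that the syntactic monoid of an \ltl-definable language satisfies $a^\momega = a^\momega a$ by a pumping argument on the word, exploiting that \ltl cannot see the difference between $w^\momega$ and $w^{\momega+1}$ as an infix.

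**The two hard implications.** The substance of the proof is \ref{schutz:aperio} $\Rightarrow$ \ref{schutz:starfree} (Schützenberger) and \ref{schutz:fodef} $\Rightarrow$ \ref{schutz:ltl} (Kamp). For Schützenberger's direction, I would argue by induction on $|M|$ — or more precisely on the pair (number of elements of $M$, infix height of $M$) — that for a homomorphism $h : \Sigma^* \to M$ into a finite aperiodic monoid, every set $h^{-1}(m)$ is star-free. The key structural input is the Egg-box Lemma together with the $\Hh$-class Lemma: in an aperiodic monoid every $\Hh$-class is trivial (a singleton), because a nontrivial group $\Hh$-class would contradict aperiodicity. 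One then picks a $\higman$-maximal infix class $J$ not equal to all of $M$, splits each word according to where (if anywhere) it "enters" $J$, and writes $h^{-1}(m)$ as a finite union of concatenations of the form $L_1 a L_2 b \cdots$ where each factor lands strictly below $J$ in the infix order and so is handled by a smaller sub-monoid, with the single letters $a,b$ being the points of descent. The aperiodicity (triviality of $\Hh$-classes) is exactly what makes these descents finite and deterministic enough to capture without a Kleene star — iterating a loop inside $J$ stabilises. Here the Factorisation Forest Theorem can also be invoked to bound the bracketing depth, giving a clean inductive decomposition.

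**Kamp's direction.** For \ref{schutz:fodef} $\Rightarrow$ \ref{schutz:ltl}, the cleanest plan is not to induct on the formula directly but to go through the algebra: assume $L$ is first-order definable, hence by the easy-to-prove fact that \fo $\subseteq$ \mso and the \bueltr Theorem it is regular; combine with the (already available) observation that \fo-definability of a regular language is equivalent to aperiodicity of its syntactic monoid — which itself follows once we have \ref{schutz:starfree}/\ref{schutz:fodef}/\ref{schutz:aperio} equivalent — and then prove the genuinely new statement \ref{schutz:aperio} $\Rightarrow$ \ref{schutz:ltl}: every language recognised by a finite aperiodic monoid is \ltl-definable. This I would prove by induction on $|M|$ using the same Green's-relations decomposition as in Schützenberger's argument, but now producing \ltl formulas: to decide $h(w) = m$, use $\nextx$ and $\until$ to locate the first position where the run "drops" out of a top infix class, assert an \ltl formula (from the induction hypothesis on a smaller monoid) about the prefix up to that point, and recurse on the suffix; triviality of $\Hh$-classes again guarantees the recursion is well-founded and the "drop" points are \ltl-detectable.

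**Main obstacle.** The serious work is Schützenberger's theorem, \ref{schutz:aperio} $\Rightarrow$ \ref{schutz:starfree} (equivalently \ref{schutz:aperio} $\Rightarrow$ \ref{schutz:ltl}): setting up the right induction parameter, handling the interaction between the chosen maximal infix class and the letters that let a word leave it, and verifying that the resulting union of concatenations is both finite and correct. All the other implications are either syntactic translations or routine closure-property inductions. I expect the book's proof to organise this around the structural lemmas of Section~\ref{sec:greens-relations} and possibly the Factorisation Forest Theorem of Section~\ref{sec:fact-for}, which is exactly the tool for bounding the nesting depth of such decompositions.
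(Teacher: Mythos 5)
Your cycle contains a genuine circularity. You propose to prove $\ref{schutz:fodef}\Rightarrow\ref{schutz:ltl}$ by ``going through the algebra'': from first-order definability deduce regularity via \bueltr, then ``combine with the (already available) observation that \fo-definability of a regular language is equivalent to aperiodicity of its syntactic monoid --- which itself follows once we have \ref{schutz:starfree}/\ref{schutz:fodef}/\ref{schutz:aperio} equivalent.'' But the equivalence of \ref{schutz:aperio}, \ref{schutz:starfree}, \ref{schutz:fodef} is not yet available at this point of your argument: you have only shown $\ref{schutz:aperio}\Rightarrow\ref{schutz:starfree}\Rightarrow\ref{schutz:fodef}$, and the implication you are now invoking ($\ref{schutz:fodef}\Rightarrow\ref{schutz:aperio}$) is exactly the one that remains to be proved. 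Nothing in your plan ever produces an implication \emph{out} of item \ref{schutz:fodef}. Your $\ref{schutz:ltl}\Rightarrow\ref{schutz:aperio}$ pumping argument does not help here, because \ltl is a syntactically narrower class than \fo and you cannot transfer the pumping argument without first knowing $\ref{schutz:fodef}\Rightarrow\ref{schutz:ltl}$. The missing piece is a direct proof that first-order definability implies recognisability by an aperiodic monoid (or implies star-freeness, either suffices). The paper supplies exactly this in Section~\ref{sec:fol-to-aperio} via \ef games: Lemma~\ref{lem:star-free-ef} shows that $\foequiv{k}$ has finite index and can be characterised by membership in languages $LaK$ built from $\foequiv{k-1}$-classes, a corollary then establishes that $\foequiv{k}$ is a monoid congruence, and a short induction gives $w^{2^k-1}\foequiv{k}w^{2^k}$, proving aperiodicity. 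Without this (or an equivalent argument, such as Kamp's genuinely syntactic translation of \fo into \ltl), your cycle does not close.

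Setting the gap aside, it is worth noting that your proposed \emph{hard} direction also differs from the paper's. You take Schützenberger's original route $\ref{schutz:aperio}\Rightarrow\ref{schutz:starfree}$ by induction on the monoid using Green's relations; the paper instead proves $\ref{schutz:aperio}\Rightarrow\ref{schutz:ltl}$ (Section~\ref{sec:aperio-to-ltl}, Wilke's argument), inducting lexicographically on (size of the semigroup, size of the generating set $\Sigma$), using the observation that either every $a\mapsto ca$ is the identity or the image $cS$ is a proper subsemigroup; the Factorisation Forest Theorem is not used. Both roads are legitimate; the paper's choice makes $\ltl\Rightarrow\fo$ the ``obvious'' step and avoids the separate Schützenberger construction, while your choice makes $\text{star-free}\Rightarrow\fo$ the ``obvious'' step instead, but in either layout one still needs a genuine non-circular argument for $\fo\Rightarrow\text{(aperiodic or star-free)}$, and that is where your plan is incomplete.
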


The rest of Section~\ref{sec:schutz} is devoted to proving the  theorem, according to the following plan:
\begin{align*}
\xymatrix@C=2cm{
   \txt{aperiodic\\monoids} 
  \ar[d]_{\text{Section~\ref{sec:aperio-to-ltl}}} \\
  \txt{\ltl} 
  \ar[r]_{\text{obvious}}
   & 
  \txt{first-order\\logic}
  \ar@/^/[r]^{\text{Section~\ref{sec:fol-to-starfree}}}
  \ar[ul]_{\text{Section~\ref{sec:fol-to-aperio}}} &
  \txt{star-free\\ expressions}
  \ar@/^/[l]^{\text{obvious}}
}
\end{align*}

\subsection{From first-order logic to aperiodic monoids and star-free expressions}
\label{sec:fol-to-aperio}
\label{sec:fol-to-starfree}
In this section, we prove two inclusions: first-order logic is contained in both aperiodic monoids and star-free expressions. 

\paragraph*{\ef games.} In the proof, we  use \ef games, which are described as follows. An \ef  game is played by two players, called Spoiler and Duplicator.  A configuration of the game is a pair of words (one red and one blue), each one  with a $n$-tuple of distinguished word positions 
\begin{align*}
    \myunderbrace{\red w,}{in $\Sigma^*$}\myunderbrace{\red{x_1,\ldots,x_n}}{positions in $w$}
\qquad    
\myunderbrace{\blue w,}{in $\Sigma^*$}\myunderbrace{\blue{x_1,\ldots,x_n}}{positions in $w$}
\end{align*}
For such a configuration and $k \in \set{0,1,\ldots}$, the $k$-round game is played as follows.  If there is a quantifier-free formula that distinguishes the two sides (\red{red} and \blue{blue}), then Spoiler wins immediately and the game is stopped. Otherwise, the game continues as follows. If $k=0$, then Duplicator wins. If $k > 0$ then  Spoiler chooses one of the colours \red{red} or \blue{blue}, and a distinguished position $x_{n+1}$ in the word of the chosen colour. Duplicator responds with a matching distinguished position in the word of the other colour, and the game continues with $k-1$ rounds from the configuration with $n+1$  distinguished positions, which is obtained by adding the new distinguished positions. This completes the definition of \ef games.

The point of \ef games is that they characterise the expressive power of first-order logic, as stated in Theorem~\ref{thm:ehrenfeucht-fraisse} below. The   number of rounds in the games corresponds  to quantifier rank of a formula, which is the nesting depth of quantifiers, as illustrated in the following example
\begin{align*}
    \underbrace{\forall x\ \big(a(x)   \ \Rightarrow  \  \overbrace{(\exists y\ y < x \land b(y))}^{\text{quantifier rank 1}} 
    \land 
    \overbrace{(\exists y\ y > x \land b(y))}^{\text{quantifier rank 1}}
    \big)}_{
      \text{quantifier rank 2}
    }
  \end{align*}
The correspondence of logic and games is given in the following theorem:
\begin{theorem}\label{thm:ehrenfeucht-fraisse}
    For every configuration of the game and $k \in \set{0,1,\ldots}$, Duplicator has a winning strategy in the game if and only if the two sides of the configuration satisfy the same formulas of first-order logic with quantifier rank at most $k$.
\end{theorem}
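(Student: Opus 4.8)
The plan is to prove the statement by induction on the number of rounds $k$, proving both directions simultaneously. The key conceptual tool is the notion of the \emph{$k$-type} of a configuration: for a word $w$ with distinguished positions $\bar x = (x_1,\ldots,x_n)$, let $\mathrm{tp}_k(w,\bar x)$ be the set of first-order formulas $\varphi(x_1,\ldots,x_n)$ of quantifier rank at most $k$ that are satisfied. The crucial finiteness observation — which should be established first, by induction on $k$ — is that for fixed $n$ there are only finitely many $k$-types, and moreover each $k$-type is captured by a single formula $\tau_{k,w,\bar x}(x_1,\ldots,x_n)$ of quantifier rank $k$ (the conjunction of a finite set of representatives, since up to logical equivalence there are finitely many formulas of bounded quantifier rank over a fixed finite vocabulary and a fixed set of free variables). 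This reduces "satisfy the same rank-$\le k$ formulas" to "have the same $k$-type".

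Next I would set up the induction. For the base case $k = 0$: a quantifier-free formula distinguishes the two sides if and only if Spoiler wins immediately (by the definition of the game, in which at $k=0$ with no distinguishing quantifier-free formula Duplicator wins), and quantifier-rank-$0$ formulas are exactly the Boolean combinations of atomic formulas over the distinguished positions, so "same $0$-type" is literally "no quantifier-free formula distinguishes them". For the induction step, assume the equivalence holds for $k$ and all configurations (with any number $n$ of distinguished positions), and prove it for $k+1$.

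For the forward direction ($k+1$ rounds, Duplicator wins $\Rightarrow$ same rank-$\le k{+}1$ type): take $\varphi$ of quantifier rank $\le k+1$; it suffices to handle $\varphi = \exists x_{n+1}\, \psi$ with $\psi$ of rank $\le k$ (and its negation, by symmetry of colours, and Boolean combinations trivially). If the red side satisfies $\exists x_{n+1}\,\psi$, Spoiler plays that witnessing position on the red side; Duplicator's winning response gives a blue position such that, by the induction hypothesis applied to the resulting $(n{+}1)$-position configuration (from which Duplicator still wins the remaining $k$ rounds), the blue side satisfies $\psi$, hence $\exists x_{n+1}\,\psi$. For the backward direction (same rank-$\le k{+}1$ type $\Rightarrow$ Duplicator wins $k+1$ rounds): describe Duplicator's strategy. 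Since the two sides have the same $(k{+}1)$-type, no quantifier-free formula distinguishes them, so Spoiler does not win immediately. When Spoiler picks, say, a red position $a$ for $x_{n+1}$, consider the rank-$k$ type $t = \mathrm{tp}_k(\red w, \red{\bar x}, a)$; then the red side satisfies $\exists x_{n+1}\,\tau_{t}$, a sentence-shaped formula of rank $\le k+1$ in the variables $x_1,\ldots,x_n$, so by hypothesis the blue side satisfies it too, yielding a blue position $b$ with $\mathrm{tp}_k(\blue w,\blue{\bar x}, b) = t$; Duplicator plays $b$, and the induction hypothesis says Duplicator wins the remaining $k$-round game. This closes the induction.

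The main obstacle — and the only genuinely non-routine point — is the finiteness/definability lemma: that for a fixed finite vocabulary and fixed tuple of free variables, there are, up to logical equivalence, only finitely many formulas of quantifier rank at most $k$, so that each $k$-type is axiomatised by a single formula $\tau_{k,w,\bar x}$ of quantifier rank $k$. This is proved by induction on $k$ (rank-$0$: finitely many atomic formulas over $n$ variables, hence finitely many Boolean combinations up to equivalence; rank $k+1$: a formula of rank $k+1$ is a Boolean combination of atoms and formulas $\exists x_{n+1}\,\psi$ with $\psi$ of rank $k$ in $n+1$ variables, of which there are finitely many up to equivalence by the inductive hypothesis), and it is exactly what makes the "Hintikka formula" $\tau_{t}$ in the backward direction a legitimate first-order formula of the required quantifier rank. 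Everything else is bookkeeping about quantifier rank and the symmetry of the two colours.
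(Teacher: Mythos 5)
Your proposal is the standard inductive proof and is correct; the paper simply dismisses it as ``Straightforward induction on $k$'', and your argument spells out exactly what that induction is, including the finiteness-of-types/Hintikka-formula lemma that makes the backward direction go through.
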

\begin{proof}
    Straightforward induction on $k$.
\end{proof}

 For $k \in \set{0,1,2,\ldots}$ and $\red w, \blue w \in \Sigma^*$, we write 
\begin{align*}
\red w \foequiv k \blue w
\end{align*}
if the two words satisfy the same sentences of first-order logic with quantifier rank at most $k$, or equivalently, Duplicator has a winning strategy in the $k$-round game over the two words (with no distinguished positions).
% if the ordered models of $w$ and $w'$  satisfy the same first-order formulas of quantifier rank at most $k$.  We use the blue colour to distinguish the equivalence from the one for \mso that was used in Section~\ref{sec:mso-finite-words}. By the  same argument as in Section~\ref{sec:mso-finite-words}, one shows that
% \begin{align*}
%   w \in \Sigma^* \qquad \mapsto \qquad 
%   \myunderbrace{\text{equivalence class of $w$ under $\foequiv{ k}$}}
%   {we call this the rank $k$ \fo type of $w$}
%   \end{align*}
%   is a monoid homomorphism, into a finite monoid, which recognises every language that can be defined by a first-order sentence of quantifier rank $k$. 
The following lemma characterises  equivalence classes of $\foequiv {k+1}$ in terms  of equivalence classes of $\foequiv k$ by using only Boolean combinations and concatenation.  
\begin{lemma}\label{lem:star-free-ef} For every $k \in \set{0,1,\ldots}$ and finite alphabet $\Sigma$, the equivalence relation $\foequiv k$ on $\Sigma^*$ has finitely many equivalence classes. Furthermore, for every words $\red w, \blue w \in \Sigma^*$ we have  $\red w \foequiv {k+1} \blue w$ if and only if 
 \begin{align*}
   \red w \in LaK \iff \blue w \in LaK
 \end{align*}
 holds for every $a \in \Sigma$ and  every $L,K \subseteq \Sigma^*$ which are  equivalence classes of $\foequiv {k}$.
\end{lemma}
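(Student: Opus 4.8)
The plan is to work entirely with Ehrenfeucht--Fra\"iss\'e games, using the game characterisation of $\foequiv k$ (Theorem~\ref{thm:ehrenfeucht-fraisse}) and reducing everything to two routine ``cut and paste'' lemmas about games played on concatenations. Fix $m \in \set{0,1,\ldots}$ and consider words of the form $u_1 a u_2$ and $v_1 b v_2$ with $a,b \in \Sigma$, each carrying a distinguished position at the displayed middle letter. I would first establish: \emph{(i) (restriction)} if Duplicator wins the $m$-round game on these two pointed words, then $a=b$ and Duplicator wins the $m$-round games on $u_1$ versus $v_1$ and on $u_2$ versus $v_2$ separately; and \emph{(ii) (composition)} conversely, if $a=b$ and Duplicator wins the $m$-round games on $u_1$ versus $v_1$ and on $u_2$ versus $v_2$, then Duplicator wins the $m$-round game on the pointed words $u_1 a u_2$, $v_1 a v_2$. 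Both are standard: the distinguished middle positions make ``lies strictly to the left of the cut'' a quantifier-free relation, so any move by Spoiler inside one side of one word must be answered inside the same side of the other (otherwise Spoiler wins immediately), labels are preserved because each component strategy preserves them, and order relations across the cut are automatically respected since everything left of the cut precedes the cut precedes everything right of it, consistently on both sides.

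\textbf{The ``furthermore'' equivalence.} Note that for $\foequiv k$-classes $L,K$ and $a \in \Sigma$, membership $u \in LaK$ means precisely that $u$ splits as $u = u'au''$ with $u' \in L$ and $u'' \in K$. For the forward direction, assume Duplicator wins the $(k+1)$-round game on $\red w,\blue w$ and that $\red w = w'aw''$ with $[w']_{\foequiv k}=L$, $[w'']_{\foequiv k}=K$. Spoiler plays the splitting position of $\red w$; Duplicator's reply in $\blue w$ has the same label $a$ (else Spoiler wins at once) and splits $\blue w = v'av''$, after which Duplicator wins the $k$-round pointed game, so lemma (i) gives $w' \foequiv k v'$ and $w'' \foequiv k v''$, hence $v' \in L$, $v'' \in K$, so $\blue w \in LaK$; symmetry gives the ``only if''. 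For the converse, suppose $\red w$ and $\blue w$ agree on membership in every $LaK$. If $\red w = \varepsilon$ then $\red w$ lies in no $LaK$, hence neither does $\blue w$, forcing $\blue w = \varepsilon$ (any nonempty word lies in the $LaK$ obtained by splitting at one of its positions), and two empty words are identical so Duplicator wins. If both are nonempty, consider Spoiler's first move, say position $x$ of $\red w$ with label $a$, cutting $\red w = w'aw''$; put $L=[w']_{\foequiv k}$, $K=[w'']_{\foequiv k}$. Then $\red w \in LaK$, so $\blue w \in LaK$, say $\blue w = v'av''$ with $w' \foequiv k v'$ and $w'' \foequiv k v''$; Duplicator answers with the cut position of $\blue w$, and lemma (ii) shows the resulting $k$-round pointed game is won by Duplicator. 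The case where Spoiler first moves in $\blue w$ is symmetric, so Duplicator wins the $(k+1)$-round game, i.e. $\red w \foequiv{k+1} \blue w$.

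\textbf{Finitely many classes.} I would finish by induction on $k$. For $k=0$, the vocabulary of the ordered model has no constant symbols, so every quantifier-free sentence is equivalent to $\top$ or to $\bot$, and $\foequiv 0$ has a single class. For the step, the equivalence just proved says that the map sending a word $w$ to the set $\set{(a,L,K) : w \in LaK}$, where $a$ ranges over the finite alphabet and $L,K$ over the (finitely many, by induction hypothesis) $\foequiv k$-classes, is constant on $\foequiv{k+1}$-classes and separates distinct ones; its range is finite, so there are finitely many $\foequiv{k+1}$-classes. The only step requiring genuine care is lemmas (i) and (ii) --- in particular verifying in (ii) that Duplicator's combined strategy never allows Spoiler to create a quantifier-free difference straddling the cut --- but this is mechanical; everything else is bookkeeping.
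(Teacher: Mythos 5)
Your proof is correct and takes essentially the paper's route: both arguments work through Ehrenfeucht--Fra\"iss\'e games, characterise $\foequiv{k+1}$ by cutting at the position chosen in the first round, and deduce finiteness of index from the ``furthermore'' clause by induction on $k$ with the same one-class base case. Your lemmas (i) and (ii) are precisely the cut-and-paste of Duplicator strategies across the distinguished cut that the paper packages as the ``local game'' claim (which it asserts in a single sentence) plus the short syntactic observation that $LaK$ is definable at quantifier rank $k+1$ by relativising rank-$k$ sentences for $L$ and $K$; your version is slightly longer but spells out the composition step explicitly and uses games uniformly in both directions.
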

\begin{proof}
  Induction on $k$. 
The ``furthermore'' part immediately implies that there are finitely many equivalence classes, since there are finitely many choices for the letter $a$, and also finitely many choices for the equivalence classes $L,K$ thanks to the induction assumption. Note that the number of equivalence classes for $\equiv_{k+1}$ is exponential in the number of equivalence classes for $\equiv_k$. 

It remains to prove the ``furthermore'' part. 

    For the left-to-right implication, we observe that $LaK$ can be defined by a first-order sentence of quantifier rank $k+1$, which existentially quantifies over some  position $x$ with label $a$ and then checks (using quantifier rank $k$) that the part before $x$ belongs to $L$ and the part after $x$ belongs to $K$. Therefore, if $\red w$ and $\blue w$ satisfy the same sentences of quantifier rank $k+1$, they must belong to the same languages of the form $LaK$. 

    Consider now the right-to-left implication. Here it will be useful to consider variant of the \ef game, call it the \emph{local game}. Consider a configuration of the \ef game of the form 
    \begin{align*}
\red {w,x_1,\ldots,x_n}
    \qquad    \blue {w,x_1,\ldots,x_n}.
    \end{align*}
    where the red  distinguished positions are listed in strictly increasing order  $\red{x_1 < \cdots < x_n}$, and  the same is true for the blue positions. Let us partition  the positions $\red x$  of $\red w$ into the following $2n+1$ sets, some of which may be empty: 
    \begin{align}\label{eq:ef-partition}
    \myunderbrace{\red{X_0}}{$\red x < \red{x_1}$} \quad  
    \set{\red{x_1}}  \quad
    \myunderbrace{\red{X_1}}{$\red{x_1} < x < \red{x_2}$} 
    \quad 
    \set{\red{x_2}}  \quad
    \cdots
    \quad 
    \myunderbrace{\red{X_{n-1}}}{$\red{x_{n-1}} <\red  x < \red{x_n}$}  \quad 
    \set{\red{x_n}}  \quad
    \myunderbrace{\red{X_{n}}}{$ \red{x_n} < \red x$} \quad .
    \end{align}
    Similarly, we partition the positions in the blue word $\blue w$. We say that a strategy of player Spoiler is local if there is some $i \in \set{0,\ldots,n}$ such that all positions chosen by Spoiler in the strategy belong to $\red{X_i}$ (for positions in the red word $\red w$) or $\blue{X_i}$ (for positions in the blue word $\blue w$). 
    
    \begin{claim}
        If Spoiler has a winning strategy, then he also has a local one.
    \end{claim}
    \begin{proof}
        There is no benefit for Spoiler in using two different blocks of the partition described in~\eqref{eq:ef-partition}.
    \end{proof}
    A corollary of this claim is that if $n=1$, then Spoiler has a winning strategy in the $(k+1)$-round game for  the configuration 
    \begin{align*}
    \red{w, x_1}\qquad \blue{w,x_1}
    \end{align*}
    if and only if: (1) the  distinguished positions have different labels; or (2) Spoiler has a winning strategy in the $k$-round game for the parts strictly before the distinguished position; or (3) Spoiler has a winning strategy for the parts strictly after the distinguished position. This gives the right-to-left implication in the lemma.  
\end{proof}

We  use the lemma above to prove the inclusion of first-order logic in both star-free expressions and aperiodic monoids. 

\begin{description}
  \item[From first-order logic to star-free.] It is enough to show that every equivalence class of $\equiv k$ is star-free. This is proved by  induction on $k$.  For the induction base of $k=0$, there is only one equivalence class, namely all words, which is clearly a  star-free language. Consider now the induction step. 
  Consider an equivalence class $M$ of $\foequiv {k+1}$. Let 
   $X$ be  the set of triples 
  \begin{align*}
   \myunderbrace{L}{equivalence\\ \scriptsize class of $\foequiv k$}
    \qquad 
    \myunderbrace{a}{letter in $\Sigma$}
     \qquad 
    \myunderbrace{K}{equivalence\\ \scriptsize class of $\foequiv k$}.
  \end{align*}
  By 
   Lemma~\ref{lem:star-free-ef}, the  equivalence class $M$ is equal to  the following finite Boolean combination of concatenations
  \begin{align*}
  \bigcap_{
    \substack{(L,a,K) \in X\\
    M \subseteq LaK}
  } LaK 
  \qquad \cap \qquad
  \bigcap_{
    \substack{(L,a,K) \in X\\
    LaK \cap  M = \emptyset}
  } \overline{ LaK }.
  \end{align*}
  This is a star-free expression, if we assume that $L$ and $K$ are described by star-free expressions from the induction assumption. 
   Since every first-order definable language is a finite union of equivalence classes of $\foequiv k$ for some $k$, the result follows. 
   \item[From first-order logic to aperiodic monoids.] A corollary of Lemma~\ref{lem:star-free-ef} is the following compositionality property for first-order logic on words. 
   \begin{corollary}
       For every alphabet $\Sigma$ and $k \in \set{0,1,\ldots}$, the equivalence relation $\foequiv k$ on $\Sigma^*$ is a monoid congruence with finitely many equivalence classes. 
   \end{corollary}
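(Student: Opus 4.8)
The finitely-many-classes assertion is exactly the first sentence of Lemma~\ref{lem:star-free-ef}, so the real content is that $\foequiv k$ is a monoid congruence: whenever $\red{u_1} \foequiv k \blue{u_1}$ and $\red{u_2} \foequiv k \blue{u_2}$, also $\red{u_1}\red{u_2} \foequiv k \blue{u_1}\blue{u_2}$. I would prove this by induction on $k$, feeding on the ``furthermore'' part of Lemma~\ref{lem:star-free-ef}.

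The base case $k=0$ is trivial, since $\foequiv 0$ has one class and is therefore a congruence. For the step, assume $\foequiv k$ is a congruence and write $[w]$ for the $\foequiv k$-class of $w$, so the classes form a finite monoid $\Sigma^*/{\foequiv k}$ with $[x][y]=[xy]$ and identity $[\varepsilon]$. Suppose $\red{u_1}\foequiv{k+1}\blue{u_1}$ and $\red{u_2}\foequiv{k+1}\blue{u_2}$; since a rank-$(\le k)$ sentence also has rank $\le k+1$, the relation $\foequiv{k+1}$ refines $\foequiv k$, so in particular $[\red{u_i}]=[\blue{u_i}]$. By Lemma~\ref{lem:star-free-ef} it suffices to show, for every letter $a$ and all $\foequiv k$-classes $L,K$, that $\red{u_1}\red{u_2}\in LaK$ iff $\blue{u_1}\blue{u_2}\in LaK$, and by symmetry only the forward direction is needed. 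So assume $\red{u_1}\red{u_2}=x\,a\,y$ with $[x]=L$, $[y]=K$. The distinguished occurrence of $a$ lies in $\red{u_1}$ or in $\red{u_2}$. If it lies in $\red{u_1}$, write $\red{u_1}=x\,a\,z$, $y=z\,\red{u_2}$; then $\red{u_1}\in LaK'$ for $K':=[z]$, and $K=[z\,\red{u_2}]=K'[\red{u_2}]$ because $\foequiv k$ is a congruence. Since $\red{u_1}\foequiv{k+1}\blue{u_1}$, Lemma~\ref{lem:star-free-ef} gives $\blue{u_1}=x'\,a\,z'\in LaK'$ with $[x']=L$, $[z']=K'$, whence $\blue{u_1}\blue{u_2}=x'\,a\,(z'\blue{u_2})$ and $[z'\blue{u_2}]=K'[\blue{u_2}]=K'[\red{u_2}]=K$, so $\blue{u_1}\blue{u_2}\in LaK$. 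The case where $a$ lies in $\red{u_2}$ is symmetric, using $\red{u_2}\foequiv{k+1}\blue{u_2}$ and $[\red{u_1}]=[\blue{u_1}]$; the degenerate cases where some $u_i$ is empty are covered by the same computation once $[\varepsilon]$ is read as the monoid identity (and for $k\ge1$ one moreover has $\red{u_i}=\varepsilon\Rightarrow\blue{u_i}=\varepsilon$, since emptiness is expressible by $\exists x\,\top$).

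A slicker alternative avoids the induction and the factorisation bookkeeping: by Theorem~\ref{thm:ehrenfeucht-fraisse} it is enough, given winning Duplicator strategies in the $k$-round games on $\red{u_i}$ versus $\blue{u_i}$, to produce one on $\red{u_1}\red{u_2}$ versus $\blue{u_1}\blue{u_2}$. Duplicator just runs the two strategies in parallel, answering a Spoiler move in the first part by the first strategy and a move in the second part by the second. The only thing to verify is preservation of the quantifier-free winning condition, and this holds because the order between a position of the first part and a position of the second part is the same on both sides (first-part positions always precede second-part positions), while the order within each part is handled by the corresponding sub-strategy. I expect this gluing verification — essentially the remark that the partition into the two factors is order-respecting on both words simultaneously — to be the only delicate point; the rest is routine.
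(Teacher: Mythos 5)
Your first argument follows the paper's proof: the same induction on $k$ via the ``furthermore'' part of Lemma~\ref{lem:star-free-ef}, with the refinement $\foequiv{k+1}\subseteq\foequiv k$ noted explicitly; you simply spell out the factorisation case analysis that the paper compresses into ``this follows immediately.'' That is correct and matches the intended proof.

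Your ``slicker alternative'' is a genuinely different route that the paper does not take here: instead of routing through the concatenation characterisation in Lemma~\ref{lem:star-free-ef}, you compose Duplicator strategies directly via Theorem~\ref{thm:ehrenfeucht-fraisse}, avoiding the induction on $k$ altogether. The gluing argument is sound: when Spoiler moves in factor $i$, Duplicator replies with the $\red{u_i}$-vs-$\blue{u_i}$ strategy; positions in different factors compare under $\le$ by which factor they lie in, which is the same on both sides, and equality and labels across factors trivially fail or are handled by the sub-strategy. This composition buys you a proof that does not depend on Lemma~\ref{lem:star-free-ef} at all (and works for a wider class of local structures), at the cost of not simultaneously reproving the finiteness of the index, which you correctly quoted from the first sentence of Lemma~\ref{lem:star-free-ef}. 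The paper's route is shorter given that the lemma is already in hand; yours is more self-contained if one wants to avoid the concatenation characterisation.
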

   \begin{proof}
       Induction on $k$. To see that there are finitely many equivalence classes, we use Lemma~\ref{lem:star-free-ef}, which says that an equivalence class of $\equiv_{k+1}$ can be viewed as a set of triples (equivalence class of $\equiv_k$, letter from $\Sigma$, equivalence class of $\equiv_k$), and there are finitely many possible sets of such triples. We now show that $\equiv_{k+1}$ is a monoid congruence, i.e.
       \begin{align*}
       \red{w} \equiv_{k+1} \blue{w} \text{ and } \red{v} \equiv_{k+1} \blue v \qquad \text{implies} \qquad  \red{wv} \equiv_{k+1} \blue{wv}.
       \end{align*}
       By Lemma~\ref{lem:star-free-ef}, to prove the conclusion of the above implication, it is enough to show that $\red{wv}$ and $\blue{wv}$ belong to the same languages of the form $LaK$ as in the lemma. This follows immediately from the assumption of the implication, and the induction assumption of the lemma which  that $\equiv_{k}$ is a monoid congruence. (In the proof we also need the observation that $\equiv_{k+1}$ refines $\equiv_k$, which follows from the definition of $\equiv_k$.) 
   \end{proof} 
   By the above corollary, the function  $h_k$ which maps a word to its equivalence class under $\foequiv{ k}$ is a monoid homomorphism into a finite monoid. This homomorphism recognises every language that is defined by a first-order sentence of quantifier rank at most $k$, by definition of $\foequiv k$. Therefore, every first-order definable language is recognised by $h_k$ for some $k$.  It remains to show that the monoid used by such a homomorphism  is aperiodic. To prove this, we use Lemma~\ref{lem:star-free-ef} and a simple induction on $k$ to show that
   \begin{align*}
   w^{2^k-1} \foequiv k w^{2^k} \qquad  \text{for every  $w \in \Sigma^*$ and $k \in \set{1,2,\ldots}$.}
    \end{align*}
\end{description}

\subsection{From aperiodic monoids to \ltl}
\label{sec:aperio-to-ltl}

The last, and most important, step in the proof is constructing an \ltl formula based on an aperiodic monoid\footnote{The proof in this section is based on
\incite[Section 2]{wilke1999}
}. In this part of the proof, semigroups will be more convenient than monoids. We will use \ltl to define colourings, which are like languages but with possibly more than two values:  a function from $\Sigma^+$ to a finite set of colours is called \ltl definable if for every colour, the words sent that colour are an \ltl definable language. For example, a semigroup homomorphism into a finite semigroup is a colouring. 

\begin{lemma}
  \label{lem:wilke-homomorphism} Let $S$ be a finite aperiodic semigroup, and let $\Sigma \subseteq S$. The colouring 
  \begin{align*}
  w \in \Sigma^+ \qquad \mapsto \qquad \text{multiplication of $w$}
  \end{align*}
  is  \ltl definable.
  \end{lemma}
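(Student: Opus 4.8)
The plan is to argue by induction on the pair $(|S|,|\Sigma|)$, ordered lexicographically, proving that for every finite aperiodic semigroup $S$ and every $\Sigma\subseteq S$ the multiplication colouring $\Sigma^+\to S$ is \ltl definable. If $|S|=1$ the colouring is constant, defined by $\top$; and if $\Sigma=\{a\}$ is a single letter, then aperiodicity forces $a,a^2,a^3,\dots$ to be eventually equal to $a^\momega$, so each colour class inside $a^+$ is either ``length $=n$'' or ``length $\ge n$'' for a fixed $n$, both \ltl definable. For the inductive step I would first perform two cheap reductions. If $\Sigma$ does not generate $S$, the colouring lands in the strictly smaller aperiodic semigroup $\langle\Sigma\rangle$, and the induction hypothesis finishes it. And if $S$ is a monoid whose identity lies in $\Sigma$, I delete that identity from the alphabet: it never affects a product, and in \ltl one can always jump to ``the next non-identity position'' and relativise the formula there, so the colouring over $\Sigma$ reduces to the one over $\Sigma$ minus the identity, dropping $|\Sigma|$. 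After these reductions I may assume $\Sigma$ generates $S$, $|\Sigma|\ge 2$, and that no letter of $\Sigma$ is a unit of the monoid $S^1$ (in the aperiodic case, this just says the identity of $S^1$ is not a letter).

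Now I fix a letter $c\in\Sigma$, set $B=\Sigma\setminus\{c\}$, and split an arbitrary word $w\in\Sigma^+$ at its occurrences of $c$. If $w$ is $c$-free, then $\mathrm{mult}(w)$ is the product of a word over the smaller alphabet $B$ and the induction hypothesis applies (drop in the second coordinate). Otherwise $w=w_0\,c\,w_1\,c\cdots c\,w_k$ with $k\ge1$ and each $w_i\in B^*$, and $\mathrm{mult}(w)$ is reconstructed from three pieces: the product $\varphi(w_0)\in S^1$ of the $c$-free prefix, the product $\varphi(w_k)\in S^1$ of the $c$-free suffix, and the value of $c\,\varphi(w_1)\,c\cdots c\,\varphi(w_{k})\,c$ interpreted in the \emph{local divisor} $S_c$ of $S$ at $c$ --- the standard construction of a finite semigroup that is a quotient of a subsemigroup of $S$, has underlying set (essentially) $c\,S^1\,c$ with the twisted multiplication $cxc\cdot cyc\mapsto cxcyc$, inherits aperiodicity from $S$, and --- since $c$ is not a unit of $S^1$, using aperiodicity --- satisfies $|S_c|<|S|$. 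Thus the induction hypothesis applies to $S_c$ with the finite derived alphabet $\Gamma=\{c\,\varphi(u)\,c:u\in B^*\}\subseteq S_c$, yielding \ltl formulas that compute the $S_c$-product of a $\Gamma$-word.

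It remains to reassemble everything into a single \ltl formula over $\Sigma$. The key is to read $w$, on its $c$-labelled positions only, as a word over $\Gamma$: the $c$ preceding the block $w_i$ carries the derived letter $c\,\varphi(w_i)\,c$. For this I need \ltl to be closed under two substitution operations. First, ``the block of non-$c$ letters immediately following the current position multiplies to $t$'' must be \ltl over $\Sigma$; this is the $B$-colouring formula for $t$ (available by induction) \emph{relativised} to that block, which --- since \ltl does not relativise for free --- I would obtain by hand, confining each strict until occurring in the formula to stay inside the $c$-free block, in the spirit of the local-game analysis of Lemma~\ref{lem:star-free-ef}. Second, the \ltl formula for the $S_c$-colouring over $\Gamma$ must be ``lifted along the $c$-skeleton'': replace each $\nextx$ by ``move to the next $c$-position'', each strict until $\alpha\until\beta$ by $(\lnot c\lor\alpha)\until(c\land\beta)$, and each test of a $\Gamma$-letter by the formula from the first point. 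Finally, the colour class $\{w:\mathrm{mult}(w)=s\}$ is the finite union, over all triples $(p,g,q)$ with $p,q\in S^1$, $g\in S_c$ whose combination (with the bookkeeping for the trailing $c$'s and for empty blocks) equals $s$, of the conjunction of: the $B$-colouring formula for $p$ relativised to the prefix before the first $c$, the lifted $S_c$-formula for $g$, and the $B$-colouring formula for $q$ applied to the automatically $c$-free suffix after the last $c$; together with the degenerate cases ($w$ being $c$-free, beginning with $c$, or ending with $c$).

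The main obstacle is precisely this last part: proving that \ltl is closed under the two substitution operations and handling the $S^1$/empty-block bookkeeping cleanly. Getting the local divisor set up correctly and verifying $|S_c|<|S|$ in the aperiodic case is the other point that requires care; everything else is routine.
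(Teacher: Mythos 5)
Your proof is correct but takes a genuinely different route from the paper's. The paper (following Wilke) fixes a letter $c$ for which $a\mapsto ca$ is not a permutation of $S$, factors the word into alternating maximal blocks of $c$'s and non-$c$'s, and recurses on the proper \emph{subsemigroup} $T=cS$ into which the ``red-black'' block pairs multiply, packaging the recursion via a notion of \ltl transduction. You instead recurse on the \emph{local divisor} $S_c=cS^1c$ with twisted product $cxc\circ cyc=cxcyc$: it is a quotient of the subsemigroup $cS^1$, hence inherits aperiodicity, and satisfies $|S_c|<|S|$ once $c$ is not a unit of $S^1$ --- which your preliminary reduction (deleting the identity from $\Sigma$) ensures, since in an aperiodic semigroup the only unit is the identity. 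The local-divisor decomposition is the standard modern alternative to Wilke's $cS$ decomposition; its advantage is that the smaller semigroup is isolated structurally rather than through an ad hoc pairing of blocks, and it tends to generalise more smoothly. Both approaches hit the same genuine technical obstacle, which you correctly identify as the crux: one must show \ltl closed under relativising a formula to a maximal $c$-free block and under running a formula over the $c$-skeleton of the word; the paper packages this as Exercise~\ref{ex:run-ltl-on-prefix} and the compositionality of \ltl transductions. Two small bookkeeping slips in your sketch: the middle piece should be the $S_c$-product of $c\varphi(w_1)c,\dots,c\varphi(w_{k-1})c$ (not including $w_k$, which is already accounted for by the suffix product), and after deleting the identity from $\Sigma$ you need a separate clause for words consisting entirely of identity letters.
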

  By applying the lemma to the special case of $S$ being a monoid, and substituting each monoid element for the letters that get mapped to it in the recognising homomorphism, we immediately get the implication from finite aperiodic monoids to \ltl. 

It remains to prove the lemma. The  proof is by induction on two parameters: the size of the semigroup $S$, and the size of the subset $\Sigma$.  These parameters are ordered lexicographically, with the size of $S$ being more important. Without loss of generality, we assume that $\Sigma$ generates $S$, i.e.~every element of $S$ is the multiplication of some word in $\Sigma^+$. 

The induction base is treated in the following claim. 
\begin{claim}\label{claim:wilke-base}
  If either $S$ or $\Sigma$ has size one, then Lemma~\ref{lem:wilke-homomorphism} holds.
\end{claim}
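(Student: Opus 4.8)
I would handle the two cases separately, starting with the easier one. Suppose first that $\Sigma$ has size one, say $\Sigma = \set{a}$. Since $\Sigma$ generates $S$ and $S$ is aperiodic, the elements $a, a^2, a^3, \ldots$ are all of $S$, and the sequence is eventually constant: there is some $m$ with $a^m = a^{m+1} = \cdots$, and $S = \set{a, a^2, \ldots, a^m}$ with these elements pairwise distinct (up to the stabilisation point). A word $w \in \Sigma^+$ is just $a^n$ for $n = |w|$, and its multiplication is $a^{\min(n,m)}$. So the colouring only needs to count the length of the word up to threshold $m$. This is \ltl definable: for each $i < m$, the set of words of length exactly $i$ is defined by saying that the $i$-th position exists and is last, which is expressible using nested $\nextx$ operators and the "last position" formula $\neg \finally \top$; and the colour $a^m$ is assigned to the remaining (cofinite) set of words, which is the complement. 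I would write out the formula $\nextx^{i-1}(\neg \finally \top)$ for "length exactly $i$" and note that these finitely many formulas partition $\Sigma^+$.

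Now suppose $S$ has size one, say $S = \set{e}$ where necessarily $e$ is idempotent (and this is consistent with aperiodicity). Then every word in $\Sigma^+$ multiplies to $e$, so the colouring is constant. A constant colouring is trivially \ltl definable: the unique colour class is all of $\Sigma^+$, defined by any vacuously true formula $\top$ (which the paper has introduced as an abbreviation for e.g.\ $a \land \neg a$ for some $a \in \Sigma$, noting $\Sigma$ is nonempty since it generates the nonempty semigroup $S$). I would simply record this.

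The main thing to be careful about is not a genuine obstacle but a bookkeeping point: one must make sure the two base cases are stated so that they cover the starting point of the lexicographic induction in the body of Lemma~\ref{lem:wilke-homomorphism} — in particular, the case $|\Sigma| = 1$ must be fully self-contained (it cannot appeal to a smaller $\Sigma$), which is why I treat it via the length-counting argument rather than any recursion. Everything here is elementary; the only tool needed is the eventual-constancy of $a, a^2, \ldots$ in an aperiodic semigroup, which is immediate from the defining identity $a^{\momega} = a^{\momega} a$ together with the Idempotent Power Lemma, and the observation that $\nextx$ and "last position" are expressible in \ltl as already shown in the preceding examples.
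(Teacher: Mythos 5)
Your proof is correct and takes essentially the same approach as the paper: both cases are handled as you do, with the $|S|=1$ case being a constant colouring and the $|\Sigma|=1$ case using the eventual constancy of $a, a^2, \ldots$ to reduce to counting the length up to a threshold. The only cosmetic difference is in how you express ``length exactly $i$'' in \ltl — you use $\nextx^{i-1}(\neg\finally\top)$, while the paper uses $\finally^{i-1}\top \land \neg\finally^{i}\top$ — but these are equivalent and both correct.
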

\begin{proof}
  If the  semigroup has one element, there is nothing to do, since colourings with one possible colour are clearly \ltl definable. Consider the case when the $\Sigma$ contains only one element $a \in S$. By aperiodicity, the sequence
  \begin{align*}
  a, a^2, a^3, \ldots
  \end{align*}
  is eventually constant, because all powers bigger than the  threshold $\momega$ give the same result. The multiplication is therefore easily seen to be an  \ltl definable colouring, because for every $n \in \set{1,2,\ldots}$ the singleton language 
  \begin{align*}
    \set{a^n} \subseteq \set{a}^+
  \end{align*}
  is definable in \ltl. For example, when $n=3$, then the defining formula is
  \begin{align*}
  \myunderbrace{\finally \finally \top }{there are at least\\
  \scriptsize 3 positions} 
  \quad \land \quad 
  \myunderbrace{\neg (\finally \finally \finally \top) }{there are strictly \\
  \scriptsize  less than 4 positions}   .
\end{align*}
\end{proof}

We are left with the induction step. For $c \in S$, consider the function 
\begin{align*}
a \in S \mapsto ca \in S.
\end{align*}

\begin{claim}
  If $a \mapsto ca$ is a permutation of $S$, then it is the identity.
\end{claim}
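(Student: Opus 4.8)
The plan is to exploit the idempotent power $c^{\momega}$ together with aperiodicity, which are the only tools available about $S$. Write $\lambda_c$ for the left translation $a \mapsto ca$. The first observation is that $\lambda_b \circ \lambda_c = \lambda_{bc}$ (both sides send $a$ to $bca$), so that $\lambda_c^k = \lambda_{c^k}$ for every $k \ge 1$. Hence, if $\lambda_c$ is a permutation of $S$, then so is each of its iterates, and in particular $\lambda_{c^{\momega}}$ is a permutation of $S$. On the other hand, since $c^{\momega}$ is idempotent by the Idempotent Power Lemma, we get $\lambda_{c^{\momega}} \circ \lambda_{c^{\momega}} = \lambda_{c^{\momega} c^{\momega}} = \lambda_{c^{\momega}}$, so $\lambda_{c^{\momega}}$ is an idempotent function. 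An idempotent bijection of a set is the identity (compose $f \circ f = f$ with $f^{-1}$), so $\lambda_{c^{\momega}} = \mathrm{id}$; that is, $c^{\momega} a = a$ for every $a \in S$.

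The second step uses aperiodicity of $S$ to descend from $c^{\momega}$ to $c$ itself. Aperiodicity gives $c^{\momega} = c^{\momega} c = c^{\momega+1}$. Then for every $a \in S$,
\[
ca = c\,(c^{\momega} a) = (c c^{\momega}) a = c^{\momega+1} a = c^{\momega} a = a,
\]
where the first equality uses the conclusion of the first step and the last equality uses it again. This shows $\lambda_c = \mathrm{id}$, which is exactly the claim.

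There is no serious obstacle here; the argument is short. The only point worth flagging is that aperiodicity is genuinely essential — in a finite group every left translation is a permutation, and a nontrivial one unless the group is trivial — and it enters the proof precisely at the equality $c^{\momega} = c^{\momega+1}$, which is what allows the idempotent power to be traded for $c$.
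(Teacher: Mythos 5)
Your proof is correct and takes essentially the same route as the paper: both identify $\lambda_c^k$ with $\lambda_{c^k}$, use aperiodicity to obtain $c^\momega = c^{\momega+1}$, and then appeal to the group structure on permutations (the paper cancels $\pi^\momega$ from $\pi^\momega\circ\pi = \pi^\momega$; you observe that $\pi^\momega$ is an idempotent bijection, hence the identity, and then cancel one $c$). The split into two visible steps makes the argument slightly longer but doesn't change its substance.
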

\begin{proof}
  Suppose that $a \mapsto ca$ is a permutation of $S$, call it $\pi$. By aperiodicity, 
  \begin{align*}
    \pi^\momega \circ \pi = \pi^\momega.
  \end{align*}
  Since permutations form a group, we can multiply both sides by the inverse of $\pi^\momega$ and conclude that   $\pi$ is  the identity permutation.
\end{proof}
If the function $a \mapsto ca$ is the identity for every  $c \in \Sigma$, then the multiplication  of a word is the same as its last letter; and such a colouring is clearly \ltl definable.  We are left with the case when there is some $c \in \Sigma$ such that $a \mapsto ca$ is not the identity. Fix this $c$ for the rest of the proof.  
 Define $T$ to be the image of the function $a \mapsto ca$,  this is a proper subset of $S$ by assumption on $c$. 
\begin{claim}
  $T$ is a sub-semigroup of $S$. 
\end{claim}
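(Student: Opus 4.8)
The plan is to unwind the definition: $T$ is by construction the image of the map $a \mapsto ca$, i.e.\ $T = cS = \set{ca : a \in S}$, and then to verify closure under multiplication by a one-line computation. Take two elements of $T$, say $x = ca$ and $y = cb$ with $a,b \in S$. Their product is
\begin{align*}
xy = (ca)(cb) = c\,(acb),
\end{align*}
and since $acb \in S$, this again has the form $c \cdot (\text{element of }S)$, hence lies in $T$. Thus $T$ is closed under the semigroup operation of $S$, and being a subset of $S$ with the inherited multiplication, it is a sub-semigroup. (Note $T$ is a proper subset of $S$ by the choice of $c$, which is what makes the induction hypothesis applicable to it.)

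I do not expect any obstacle here: the claim is immediate from writing out a single product, and in fact the same computation shows slightly more, namely that $T$ is a right ideal of $S$ — for $x = ca \in T$ and any $s \in S$ we have $xs = c(as) \in T$. In particular, no appeal to finiteness or aperiodicity is needed for this step; those hypotheses enter only afterwards, when the induction on $\abs{S}$ is applied with $T$ (and a suitable generating subset of it) in place of $S$.
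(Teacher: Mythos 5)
Your proof is correct and matches the paper's argument, which states it tersely as ``Multiplying two elements with prefix $c$ gives an element with prefix~$c$''; your computation $xy = (ca)(cb) = c(acb)$ is exactly that observation spelled out. The remark that $T$ is in fact a right ideal of $S$ is also correct and is a mild bonus, though the proof only needs closure under multiplication within $T$.
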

\begin{proof}
  Multiplying two elements with prefix $c$ gives an element with prefix~$c$.
\end{proof}

In the rest of the proof, we use the following terminology for a word $w \in \Sigma^+$:
\mypic{30} 
We first describe the proof strategy.    
 For each black block, its multiplication can be computed in \ltl using the induction assumption on a smaller set of generators. The same is true for red blocks.  Define a \emph{red-black block} 
to be any union of a  red block  plus  the following (non-empty) black block; as illustrated below:
\mypic{31}
For red-black blocks,  the multiplication operation  can be computed in \ltl, by using multiplication for the  red and black blocks inside it. Also, for every red-black block, its multiplication is in $T$ because it begins with $c$ and has at least two letters.  Therefore, we can use the induction assumption on a smaller semigroup, to compute the multiplication of the union of all red-black blocks. Finally, the multiplication of the entire word is obtained by taking into account the blocks that are not part of any red-black block. 

The rest of this section is devoted to formalising the above proof sketch. In the formalisation, it will be convenient to reason with word-to-word functions. 
We say that  a function of type $\Sigma^* \to \Gamma^*$ is  an \label{page:ltl-transduction}  \emph{\ltl transduction} if it has the form 
\begin{align*}
a_1 \cdots a_n \in \Sigma^* \qquad \mapsto \qquad f(a_1 \cdots a_n)  f(a_2 \cdots a_n) \cdots f(a_{n})
\end{align*}
for some \ltl definable colouring $f : \Sigma^+ \to \Gamma + \varepsilon$. Under this definition, the output length is at most the input length for \ltl transductions.  By substituting formulas, one easily\footnote{The result would also hold for generalisation of \ltl transductions where the colouring $f$ has type $\Sigma^+ \to \Gamma^*$, but the proof is easier when the type is $\Sigma^+ \to \Gamma+\varepsilon$, and only the latter case is needed here.} shows the following composition properties: 
\begin{eqnarray*}
  \text{(\ltl colourings)} \circ \text{(\ltl transductions)}  &\subseteq& \text{\ltl colourings}\\
  \text{(\ltl transductions)} \circ \text{(\ltl transductions)}  &\subseteq& \text{\ltl transductions.}
\end{eqnarray*}

We use \ltl transductions to decorate an input word $w \in \Sigma^+$ with extra information that will serve towards computing its multiplication. 

\begin{enumerate}
  \item For each position that precedes a block (i.e.~the next position begins a new block), write  in that position the value of the next block.   For the remaining positions, do not write anything. Use two disjoint copies of $S$ to distinguish the values of the red and black blocks. Here is a picture:
    \mypic{32}
    In the above picture, $a_{i,j}$ denotes the multiplication of the infix $\set{i,\ldots,j}$. The function described in this step is an \ltl transduction, thanks to the induction assumption on smaller alphabets\footnote{To make this formal, we need a simple closure property of  \ltl that is described in Exercise~\ref{ex:run-ltl-on-prefix}. }. 
    % When using the induction assumption, we use an \ltl transduction to keep only the positions from the following block.
\item  Take the output of the function in the previous step, and for each red letter (the multiplication of a red block), multiply it with the next letter (which is the multiplication of a black block). As a result, we get the values of all red-black blocks which do not begin in the first position.   Here is a  picture:
\mypic{33}
The function in this step is clearly an \ltl transduction. 
\end{enumerate}
 By induction assumption on a smaller semigroup, the multiplication operation $T^+ \to T$ is an \ltl colouring. By composing the functions described above with the semigroup multiplication in $T$, we see that 
\begin{align*}
w \in \Sigma^+ \quad \mapsto \quad \text{value of the union of red-black blocks}
\end{align*}
is an \ltl colouring. The values of the (at most two) blocks that do not participate in above union can also be computed using  \ltl colourings, and therefore the multiplication of the entire word can be computed.

\exercisehead

\mikexercise{\label{ex:three-variable} Show that for every sentence of first-order logic, there is a sentence that is equivalent on finite words, and which uses at most three variables (but these variables can be repeatedly quantified).}{
  Translate a first-order sentence to \ltl, and then back again to first-order logic. 
}

\mikexercise{\label{ex:h-trivial}
  Show that the following  are equivalent for a finite semigroup:
  \begin{enumerate}
    \item \label{aperiodic:power} aperiodic;
    \item \label{aperiodic:h-trivial} $\Hh$-trivial, which means that all $\Hh$-classes are singletons;
    \item \label{aperiodic:no-groups} no sub-semigroup is a non-trivial group.
  \end{enumerate}
}
{
\begin{description}
      \item[\ref{aperiodic:power} $\Rightarrow$ \ref{aperiodic:h-trivial}] Suppose that $a,b$ are in the same $\Hh$-class. This means that there exist $x,y$ such that $a=xb$ and $b=ay$. We have therefore 
      \begin{align*}
      a = x^n a y^n \qquad \text{for every $n \in \set{1,2,\ldots}$.}
      \end{align*}
      By~\ref{aperiodic:power}, there must be some $n$ such that $y^{n}=y^{n+1}$ and therefore
      \begin{align*}
      a = x^n a y^n = x^n a y^{n+1} = ay = b.
      \end{align*}
      \item[\ref{aperiodic:h-trivial} $\Rightarrow$ \ref{aperiodic:no-groups}] Every group is contained in some $\Hh$-class, and therefore all groups must be trivial.
      \item[\ref{aperiodic:no-groups} $\Rightarrow$ \ref{aperiodic:power}] Let $s \in S$, and choose $n$ so that $s^n$ is idempotent. It follows that 
      \begin{align*}
      s^{n}, s^{n+1}, \ldots, s^{n+n} = s^n
      \end{align*}
      are all in the same $\Hh$-class, and this $\Hh$-class contains an idempotent. Therefore, this $\Hh$-class is a group by the $\Hh$-class Lemma, and hence it must be trivial. This proves that $s^n= s^{n+1}$.
    \end{description}
}

\mikexercise{Consider the successor model of a word $w \in \Sigma^*$, which is defined like the ordered model, except that instead of $x < y$ we have $x+1=y$. Give an example of a regular language that is first-order definable using the ordered model, but not using the successor model.}
  {Consider the language 
  \begin{align*}
    a^* b a^* c a^* \subseteq \set{a,b,c}^*.
  \end{align*}
  By induction on $k$, one can show that  Duplicator wins the $k$-round \ef game over the successor models of 
  \begin{align*}
  \myunderbrace{a^{2^k}ba^{2^k}ca^{2^k}}{in the language}\qquad 
  \myunderbrace{a^{2^k}ca^{2^k}ba^{2^k}.}{not in the language}
  \end{align*}
}

\mikexercise{Show two languages which have the same syntactic monoid, and such that only one of them is first-order definable in the successor model. In particular, one of the closure properties from Exercise~\ref{ex:half-eilenberg-star} must fail for this logic.} 
{The languages are 
\begin{align*}
\myunderbrace{abaca}
{first-order definable\\
\scriptsize in the successor model} 
\hspace{2cm}
\myunderbrace{a^*ba^*ca^*}
{not first-order definable\\
\scriptsize in the successor model}
\end{align*}
}

\mikexercise{Let $\Sigma$ be a finite alphbet and let $\vdash, \dashv$ be fresh symbols. For $k,\ell \in \set{0,1,\ldots}$, we say that $w, w' \in \Sigma^*$ are  $(k,\ell)$-locally equivalent if 
\vspace{0.3cm}
\begin{align*}
\txt{
  $\vdash w \dashv$ has at least $i$\\
   occurrences of infix $v$
}
\qquad \text{iff} \qquad 
\txt{
   $\vdash w' \dashv$ has at least $i$\\
   occurrences of infix $v$
}
\end{align*}
\vspace{0.1cm}

\noindent holds  for every $i \in \set{0,\ldots,k}$ and every $v \in \Sigma^*$ of length at most $\ell$. Show that $L \subseteq \Sigma^*$ is first-order definable in the successor model if and only if it is a union of equivalence classes of $(k,\ell)$-local equivalence, for some $k,\ell$.
}{}

\mikexercise{\label{ex:run-ltl-on-prefix}  Let $\Gamma \subseteq \Sigma$ and let   $L \subseteq \Gamma^*$. If $L$ is  definable in \ltl, then the same is true for 
\begin{align*}
\set{w \in \Sigma^* : \text{$L$ contains the maximal prefix of $w$ which uses only letters from $\Sigma$}}.
\end{align*}
}{}

% \mikexercise{Consider a more general version of \ltl transductions, see page~\pageref{page:ltl-transduction}, where the \ltl definable colouring $f$ has type $\Sigma^+ \to \Gamma^*$ and finite image. Show that the composition properties of \ltl transductions still hold.

% }{}
\mikexercise{\label{ex:definite-ltlx}
Consider  \ltlx, i.e.~the fragment of \ltl where the only operator is $\nextx$. Show that this fragment is equal to the definite languages from Exercise~\ref{ex:definite}.
}{}

\mikexercise{\label{ex:successor-idempotent-swap}
Show that if   a language  is first-order definable in the successor model, then the  syntactic semigroup  satisfies the following equality
\begin{align*}
eafbecf = ecfbeaf \qquad \text{for all }
\myunderbrace{e,f,}{idempotents}
a,b,c.
\end{align*}
}{}

\mikexercise{
  \label{ex:successor-more-than-swap} Show that the identity in Exercise~\ref{ex:successor-idempotent-swap}, together with aperiodicity, is equivalent to first-order definability in the successor model.
}{Any commutative language that is not first-order definable, e.g.~words of even length. However, if the semigroup satisfies the equality and is aperiodic, then we get a necessary and equivalent condition.    }

\mikexercise{Consider the following extension of \ltl with group operators.  Suppose that $G$ is a  finite group, and let
\begin{align*}
  \set{\varphi_g}_{g \in G},
\end{align*}
be a family of already defined formulas  such that every position in an input word is selected by  exactly one formula $\varphi_g$. Then we can create a new formula, which  is true in a word of length $n$ if 
\begin{align*}
1 = g_1 \cdots g_n,
\end{align*}
where  $g_i \in G$ is the unique group element whose corresponding formula selects  position $i$. Show that this logic defines all regular languages.
}{}

\section{Suffix trivial semigroups and temporal logic with \texorpdfstring{$\finally$}{F}  only}
\label{sec:finally-only}
In the previous section, we showed that first-order logic corresponds to the monoids without groups, which is the same thing as finite monoids with trivial $\Hh$-classes (Exercise~\ref{ex:h-trivial}). What about monoids with trivial suffix classes, prefix classes, or infix classes? Trivial infix classes will be described in Section~\ref{sec:piecewise}. In this section, we give a logical characterisation of trivial suffix classes. A symmetric statement holds for trivial prefix classes.

In the characterisation,  we use the fragment of \ltl where until is replaced by the following operators
\begin{align*}
\underbrace{\top \until \varphi}_{\finally \varphi} \qquad
\underbrace{\neg \finally \neg  \varphi}_{\globally \varphi} \qquad 
\underbrace{\varphi \lor \finally \varphi}_{\finally^* \varphi}\qquad 
\underbrace{\neg \finally^* \neg  \varphi}_{\globally^* \varphi}.
\end{align*}
Since all of the above operators can be defined in terms of $\finally$, we write \ltlf for the resulting logic.

\begin{theorem}\label{thm:ltl-f}\footnote{This theorem is based on 
  \incite[Theorem 6.1]{etessamiWilke2000}
  \incite[Theorem 4.2]{CohenPerrinPin93}
  The result itself is taken from~\cite{etessamiWilke2000}. However, the use of Green's relations in the proof is more in the spirit of~\cite{CohenPerrinPin93}, which considers an stronger logic that is obtained from \ltlf by adding a ``next'' operator.
  }
  The following conditions are equivalent for  $L \subseteq \Sigma^*$:
  \begin{enumerate}
    \item \label{suffix:suffix} is recognised by a finite suffix trivial monoid;
    \item \label{suffix:expression} is defined by a finite union of regular expressions of the form
    \begin{align*}
    \myunderbrace{\Sigma_0^* a_1 \Sigma_1^* a_2 \cdots a_n \Sigma_n^* \qquad \text{where $a_i \in \Sigma - \Sigma_i$ for $i \in \set{1,\ldots,n}$;}}{We call such an expression  \emph{suffix unambiguous}.\\
    \scriptsize A set $\Sigma_i \subseteq \Sigma$ is allowed to be empty, in which case $\Sigma_i^* = \set \varepsilon$.}
    \end{align*}
    
    \item \label{suffix:logic} is defined by a Boolean combination of \emph{\ltlf} formulas of the form  $\finally^* \varphi$.
    % \begin{align*}
    % \finally^* \varphi \qquad \text{where $\varphi$ is an \ltl formula that uses only the operator $\finally$}.
    % \end{align*}
    % For every $a \in \Sigma$, one can define the language 
    % \begin{align*}
    % a^{-1}L  \eqdef \set{w : aw \in L}
    % \end{align*}
    % in the fragment of  \ltl where, instead of  $\until$, one can  only use the operator      
  \end{enumerate}
\end{theorem}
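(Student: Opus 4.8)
The plan is to prove the cycle of implications $\ref{suffix:suffix}\Rightarrow\ref{suffix:expression}\Rightarrow\ref{suffix:logic}\Rightarrow\ref{suffix:suffix}$, with the first implication being the substantive one that uses Green's relations. For the easy directions first: to get $\ref{suffix:expression}\Rightarrow\ref{suffix:logic}$, I would translate a suffix-unambiguous expression $\Sigma_0^* a_1 \Sigma_1^* \cdots a_n \Sigma_n^*$ directly into a nested $\finally^*$-formula. The key point exploited by unambiguity is that the factorisation of any word matched by the expression is \emph{forced}: reading left to right, $a_1$ is the first position whose label lies outside $\Sigma_0$, then $a_2$ is the first subsequent position outside $\Sigma_1$, and so on; there is no backtracking. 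So membership is expressed by a formula roughly of the shape ``$\globally^*(\text{label} \in \Sigma_0)$ holds up to the first $a_1$, after which $\globally^*(\text{label}\in\Sigma_1)$ holds up to the first $a_2$, \ldots'', which one writes as an $\finally^*$-nested formula. A finite union becomes a disjunction, which is a Boolean combination. For $\ref{suffix:logic}\Rightarrow\ref{suffix:suffix}$, I would show that the syntactic monoid of a language defined by a Boolean combination of $\finally^*\varphi$ formulas is suffix trivial; since suffix-trivial monoids are closed under the operations relating a monoid to the syntactic monoid of a language it recognises, it suffices to build, by induction on the formula, a recognising homomorphism into a suffix-trivial monoid, using that $\finally^*\varphi$ only looks at the \emph{set} of positions (equivalently, the set of labels from some point on) where $\varphi$ holds — information that is insensitive to prepending letters in the right way. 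Concretely, $\finally^*\varphi$ evaluated at the first position of $w$ depends only on whether $\varphi$ holds somewhere in $w$; prepending a prefix $u$ can only add positions where $\varphi$ might hold, so the relevant state collapses under the suffix relation. I would formalise this by checking the algebraic identity characterising suffix triviality, namely $x^{\momega} = y x^{\momega}$ for all $x,y$ (equivalently $\Ll$-classes are singletons), holds in the monoid $h_k$ built from $\foequiv{}$-style quotients restricted to $\finally^*$-formulas.

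The main work is $\ref{suffix:suffix}\Rightarrow\ref{suffix:expression}$. Let $h:\Sigma^*\to M$ recognise $L$ with $M$ suffix trivial and finite. Since $L$ is a union of sets $h^{-1}(m)$, it suffices to show each $h^{-1}(m)$ is a finite union of suffix-unambiguous expressions. The proof goes by induction on the infix height of $M$ (or, equivalently, on $|M|$), mirroring the structure of the Factorisation Forest Theorem's proof. For a word $w$, track the run $h(w_1), h(w_1w_2), \ldots$ of prefixes. The crucial structural fact, coming from suffix triviality via the Egg-box Lemma, is this: as we read $w$ letter by letter, the suffix class (equivalently, since suffix classes are singletons, the \emph{element} itself cannot strictly decrease in the infix order upon right-multiplication unless it stays in the same infix class — wait, that's prefixes). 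Let me restate: in a suffix-trivial semigroup, multiplying on the right can only keep an element in the same infix class or move it to a strictly smaller (``more degenerate'') infix class; and within a fixed infix class, since each suffix class is a singleton, the map $c\mapsto c$ under right multiplication by a letter that keeps us in the class is... the point is that right multiplication within a regular infix class of a suffix-trivial semigroup acts like a group that must be trivial by the $\Hh$-class Lemma combined with Exercise~\ref{ex:h-trivial}-style reasoning, so it is literally the identity. Therefore, reading $w$, the element $h(w_1\cdots w_i)$ is non-increasing in the infix order and changes only finitely often; let $a_1,\ldots,a_n$ be the positions where it strictly drops. Between consecutive drop-positions, the letters read are exactly those whose right-multiplication fixes the current element — a fixed set $\Sigma_j$ depending only on the current element — and the dropping letter $a_{j+1}$ is one of the letters outside $\Sigma_j$. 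This yields precisely a suffix-unambiguous factorisation, and $m = h(w)$ determines the sequence of infix classes traversed up to finitely many choices, giving the finite union.

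The hard part — the step I expect to be the main obstacle — is pinning down exactly why, in a suffix-trivial finite semigroup, right multiplication by a letter that stays inside the current infix class acts as the \emph{identity} on that class (so that the ``stay in $\Sigma_j$'' set is well-defined and the expression is genuinely unambiguous), and dually why the distinguished letters $a_i$ land outside $\Sigma_i$. I would handle this by invoking the Egg-box Lemma: if $a$ and $ab$ lie in the same infix class then by item~\ref{Egg-box:prefix-incomparable} (in its prefix form) $ab$ is in the prefix class of $a$; combined with suffix triviality and the $\Hh$-class Lemma, the $\Hh$-class of $a$ is a trivial group, forcing $ab = a$. This is the algebraic heart and everything else — the induction on infix height to handle the ``drops,'' the bookkeeping that $m$ determines the traversed classes up to finite ambiguity, and the verification that the resulting regular expressions have the required form $a_i\in\Sigma\setminus\Sigma_i$ — is then routine structural manipulation parallel to the smooth-word analysis in the Factorisation Forest Theorem's proof.
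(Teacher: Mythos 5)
Your plan for $\ref{suffix:suffix}\Rightarrow\ref{suffix:expression}$ rests on a false algebraic claim. You assert that in a finite suffix-trivial monoid, if $a$ and $ab$ lie in the same infix class then $ab=a$, via Egg-box plus the $\Hh$-class Lemma. But the Egg-box Lemma only puts $ab$ in the same \emph{prefix} class as $a$, and suffix-triviality collapses suffix classes, not prefix classes, so there is no reason for $a$ and $ab$ to share an $\Hh$-class. Concretely, let $M=\{1,a,b\}$ where $1$ is the identity and $\{a,b\}$ is the right-zero semigroup ($xy=y$ for $x,y\in\{a,b\}$). All suffix classes of $M$ are singletons, so $M$ is suffix trivial, yet $a$ and $ab=b$ share an infix (indeed prefix) class while $ab\neq a$. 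Consequently the sequence of prefix values $h(w_1\cdots w_i)$ need not stabilise within an infix class — in this example it alternates between $a$ and $b$ indefinitely — so your left-to-right decomposition does not produce a bounded number of ``drops,'' and the rest of the construction collapses. What you have written is, in effect, the argument for \emph{prefix}-trivial monoids, where the dual fact ``$a,ab$ same infix class $\Rightarrow ab=a$'' is true.

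The repair is to flip left and right throughout. The fact that does follow from suffix-triviality and Egg-box concerns \emph{left} multiplication: if $a$ and $ba$ lie in the same infix class, then since $a$ is a suffix of $ba$ they share a suffix class, hence $ba=a$. Accordingly, the correct quantity to track is the suffix value $h(w_j\cdots w_m)$, read from the right; it is non-increasing in the suffix ordering, so it changes only when the infix class strictly drops. The paper's proof does exactly this: it fixes the target $a$, inducts on its position in the suffix ordering, and peels off the \emph{left} stabiliser $\Sigma_0=\{b:ba=a\}$ followed by one letter $c$ with $a=cb$ for $b$ a proper suffix of $a$. The same directional confusion also affects your sketch of $\ref{suffix:expression}\Rightarrow\ref{suffix:logic}$: suffix-unambiguity constrains $a_i\notin\Sigma_i$, not $a_i\notin\Sigma_{i-1}$, so ``the first position with label outside $\Sigma_0$'' need not exist — for $\{a,b\}^*a\{b\}^*$ and input $aabb$, every position has label in $\Sigma_0=\{a,b\}$, yet the forced $a_1$-position is the last $a$. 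The positions are located from the right, which is why the paper builds $\varphi_n,\ldots,\varphi_0$ bottom-up. Finally, the identity you cite, $x^{\momega}=yx^{\momega}$ for all $x,y$, cannot be right as stated (take $x=1$, which would force $M$ to be trivial); the identity that characterises suffix-triviality and that the paper extracts from a pumping argument on $\finally$-rank is $(xy)^{\momega}=y(xy)^{\momega}$.
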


To see why the formulas in item~\ref{suffix:logic} need to be guarded by $\finally^*$, consider the \ltlf formula $a$ which defines the language ``words beginning with $a$''. This language is not recognised by any finite suffix trivial monoid. 

% \begin{myexample}
%   We show that every finite language is definable in \ltlf.  Define $\finally^* \varphi$ to be $\varphi \lor \finally \varphi$.  For every word
%   $
%   w = a_1 \cdots a_n \in \Sigma,
%   $ 
%   the \ltlf formula 
%   \begin{align*}
%   \finally^* (a_1 \land \finally (a_2 \land (\finally a_3 \land (\cdots (\finally a_n) \cdots ))))
%   \end{align*}
%   is true in words which have $w$ as a (not necessarily connected) subsequence. 
%     Using a Boolean combination of formulas of this form, one can define every finite language. 
% \end{myexample}

% Show that a language over alphabet $\Sigma$ is recognised by a finite suffix trivial monoid if and only if it can be defined by a finite union of regular expressions of the form
% \begin{align*}
% \myunderbrace{\Sigma_0^* a_1 \Sigma_1^* a_2 \cdots a_n \Sigma_n^* \qquad \text{where $a_i \in \Sigma - \Sigma_i$ for $i \in \set{1,\ldots,n}$.}}{we call such an expression  \emph{suffix unambiguous}}
% \end{align*}

\begin{proof} \ 
  \begin{description}
  \item[\ref{suffix:suffix}$\Rightarrow$ \ref{suffix:expression}]  We will show that for every finite suffix trivial monoid $M$, and every $F \subseteq  M$, the language
  \begin{align*}
  \set{w \in M^* : \text{the multiplication of $w$ is in $F$}}
  \end{align*}
  is defined by a finite union of suffix unambiguous expressions. It will follow that for every monoid homomorphism into $M$, the recognised language  is defined by a similar expression, with monoid elements substituted by the letters that map to them (such a substitution preserves suffix unambiguity). 
  
  Since our target  expressions are closed under finite unions,
  it is of course enough to consider the case when $F$ contains only one element, call it $a \in M$. The proof is  by induction on the position of $a$ in the  suffix ordering. 

  The induction base is when $a$ is a suffix of every monoid element, which means that $a$ is a suffix of the monoid identity. By Exercise~\ref{ex:monoid-identity-infix-class}, the infix class of the identity is a group, and a group must be trivial in a suffix trivial monoid. It follows that a word multiplies to $a$ if and only if it belongs to $a^*$, which is a suffix unambiguous expression.

   We now prove  the induction step. Consider a word  that multiplies to $a$. This word must be nonempty, since otherwise it  would multiply to the identity. Let $i$ be the maximal position in the word such that the suffix starting in $i$ also multiplies to $a$. By suffix triviality, every position $<i$ is labelled by a letter in 
   \begin{align*}
   \Sigma_0  = \set{b \in M : ba = a}.
   \end{align*}
   Let $b$ be the multiplication of the suffix that starts after $i$, not including $i$, and let $c$ be the label of position $i$. By choice of $i$, $b$ is a proper suffix of $a$ and $a=cb$.  Summing up, words that multiply to $a$ are defined by the expression
\begin{align*}
\bigcup_{
  \substack{b,c \in M\\
  \text{$b$ is a proper suffix of $a$}\\
  a=cb
}} \Sigma_0^* c \cdot  \text{(words that multiply to $b$)},
\end{align*}
Apply the induction assumption to $b$, yielding a finite union of suffix unambiguous expressions, and distribute the finite union across concatenation. It remains to justify that the resulting expressions are also suffix unambiguous. This is because none of the expressions that define words that multiply to $b$ can begin with $\Sigma_1^*$ with $c \in \Sigma_1$, since otherwise we would  contradict the assumption that $cb = a \neq b$. 

\item[\ref{suffix:expression} $\Rightarrow$ \ref{suffix:logic}] Since the formulas from item~\ref{suffix:logic}  are closed under union, it is enough to show that every suffix unambiguous expression 
\begin{align*}
  \Sigma_0^* a_1 \Sigma_1^* a_2 \cdots a_n \Sigma_n^*
\end{align*}
can be defined by a formula as in~\ref{suffix:logic}.  For $i \in \set{0,\ldots,n}$, define $L_i$ to be the suffix of the above expression that begins with $\Sigma_i^*$. By induction on $i$, starting with $n$ and progressing down to $0$, we show that $L_i$ can be defined by a formula $\varphi_i$ as in item~\ref{suffix:logic}.
In the induction base, we use the formula
\begin{align*}
\varphi_n =  \myunderbrace{  \globally^* \bigvee_{a  \in  \Sigma_n} a}{all positions have label in $\Sigma_n$}.
\end{align*}
For the induction step, we first define the language $a_i L_i$, using a formula of \ltlf (which is not in the shape from item~\ref{suffix:logic}):
\begin{align*}
\psi_i =  a_i \land (\finally \varphi_i) \land  \globally \bigvee_{j > i} \varphi_j.
\end{align*}
 Because the expression is suffix unambiguous, the formula $\psi_i$ selects at most one position in a given input word; this property will be used below.
The language $L_{i-1}$ is then defined by 
\begin{align*}
\varphi_{i-1}= \qquad \finally^* \psi_i \ \land \ 
\myunderbrace{
\globally^* ((\finally \psi_i) \Rightarrow \bigwedge_{a \in \Sigma_0} a)}
{if a position is to the left\\
\scriptsize of the unique position \\
\scriptsize satisfying $\psi_i$, then \\
\scriptsize it has label in $\Sigma_0$.}
\end{align*}
\item[\ref{suffix:logic} $\Rightarrow$ \ref{suffix:suffix}] Define the rank of a formula in \ltlf to be the nesting depth of the operator $\finally$. (We assume here that $\finally$ is the only temporal operator used in the formula, and the remaining operators such as $\globally$ or $\finally^*$ are replaced by their definitions using $\finally$.) For $k \in \set{0,1,\ldots}$, define $\approx_k$ to be the equivalence relation on $\Sigma^+$ which identifies two words if they satisfy the same formulas of  rank at most $k$.
  The key observation is the following pumping lemma.

  \begin{claim}
    For every  $k \in \set{0,1,2,\ldots}$  we have 
    \begin{align*}
      w (xy)^{i} u \quad  \approx_k \quad   w y (xy)^{j} u  
      \qquad \text{for every $w \in \Sigma^+,x,y,u \in \Sigma^*$ and $i,j \ge k$.}
      \end{align*}    
  \end{claim}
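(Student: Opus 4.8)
The plan is to prove this pumping lemma by induction on $k$, the nesting depth of $\finally$. The base case $k=0$ is immediate: formulas of rank $0$ are Boolean combinations of the letter predicates $a \in \Sigma$, which only see the first position. Both words $w(xy)^i u$ and $wy(xy)^j u$ begin with the same letter (the first letter of $w$, which is nonempty), so they agree on all rank-$0$ formulas. For the induction step, I would take a rank-$(k+1)$ formula; since $\approx_k$ is clearly refined as we increase $k$ and the formulas of rank $\le k+1$ are Boolean combinations of rank-$\le k$ formulas together with formulas of the shape $\finally \psi$ with $\psi$ of rank $\le k$, it suffices to show that the two words agree on every formula $\finally \psi$ with $\psi$ of rank $\le k$.

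Unfolding the semantics, $\finally \psi$ holds at the first position of a word $v$ iff some strictly later position satisfies $\psi$, i.e. iff some proper nonempty suffix of $v$ satisfies $\psi$ at its first position. So I need: for every $\psi$ of rank $\le k$, the word $w(xy)^i u$ has a proper suffix satisfying $\psi$ if and only if $wy(xy)^j u$ does. The key point is that the set of proper suffixes of $w(xy)^i u$ and the set of proper suffixes of $wy(xy)^j u$ are "the same up to $\approx_k$": both families consist of (a) suffixes of $u$; (b) suffixes of the form $z(xy)^m u$ or $z(xy)^m u$-variants where $z$ is a proper suffix of $w$, $x$, or $y$, with various exponents $m$; and crucially (c) suffixes obtained by cutting inside the repeated block $(xy)^\bullet$. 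Here is where the induction hypothesis does the work: any suffix of $w(xy)^i u$ of the form $s (xy)^m u$ (with $s$ a suffix of $xy$ and $m \ge 0$) is $\approx_k$-equivalent, by the induction hypothesis applied with the roles suitably arranged, to a corresponding suffix of $wy(xy)^j u$ — the point being that both $i$ and $j$ are $\ge k \ge $ whatever exponent is needed to apply the hypothesis at level $k$. A suffix whose starting position lies inside $w$ is of the form $s(xy)^i u$ resp. $s y (xy)^j u$ with $s$ a proper suffix of $w$, and these are again matched by the induction hypothesis (note $s$ may be empty, but then we are looking at $(xy)^i u$ vs $y(xy)^j u$ which is an instance of the hypothesis at level $k$, taking $w := xy$... one must be slightly careful that the "$w$" in the hypothesis is nonempty, which it is since $xy$-repetition only matters when $xy \neq \varepsilon$, and if $xy = \varepsilon$ the statement is trivial).

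Concretely, I would organize the suffix-matching as follows. Fix $\psi$ of rank $\le k$. Suppose $w(xy)^i u$ has a proper suffix $v$ with $v \models \psi$. Case analysis on where $v$ starts: (i) inside $u$ — then $v$ is literally a suffix of $wy(xy)^j u$ too, done; (ii) inside the final copy of $xy$, or exactly at a block boundary, so $v = s (xy)^m u$ with $m \le i$, $s$ a suffix of $xy$, $m \ge 1$ possibly — rewrite using the induction hypothesis to get an $\approx_k$-equivalent proper suffix $v'$ of $wy(xy)^j u$ of the form $s (xy)^{m'} u$ with $m' \ge $ appropriate value $\le j$; since $\psi$ has rank $\le k$, $v' \models \psi$; (iii) $v$ starts inside $w$, i.e. $v = s(xy)^i u$ with $s$ a proper suffix of $w$ — then the corresponding proper suffix of the right-hand word is $s y (xy)^j u$, and $s(xy)^i u \approx_k sy(xy)^j u$ is again an instance of the induction hypothesis (with $s$ in the role of the prefix, nonempty when $s \neq \varepsilon$; when $s = \varepsilon$ it is the base instance $(xy)^i u \approx_k y(xy)^j u$ which is the hypothesis at level $k$ with prefix $xy$). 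The reverse direction (a proper suffix of the right word transfers to the left) is symmetric, again using only that $i, j \ge k$.

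The main obstacle I expect is bookkeeping in the boundary cases — in particular making sure the "prefix" word in each invocation of the induction hypothesis is genuinely nonempty (the statement requires $w \in \Sigma^+$), and handling the degenerate subcases $x = \varepsilon$, $y = \varepsilon$, or $xy = \varepsilon$ cleanly (when $xy = \varepsilon$ both sides are equal and there is nothing to prove; when exactly one of $x,y$ is empty the argument still goes through but one should double-check the suffix enumeration). A secondary subtlety is that $\finally$ is strict, so one must consistently say "proper nonempty suffix" rather than "suffix", and verify that the first position of each word (which $\finally \psi$ does not see directly, but which a rank-$(k+1)$ formula can test via a conjoined letter predicate) carries the same letter on both sides — which it does because $w$ is nonempty. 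Once these edge cases are pinned down, the argument is a routine structural induction, and I would present it at that level of detail rather than writing out every subcase in full.
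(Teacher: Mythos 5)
Your overall strategy matches the paper's: induction on $k$, base case via first letters, inductive step matching nonempty proper suffixes up to $\approx_k$. However, there is a genuine error in the boundary subcase $s = \varepsilon$ of your case (iii). You claim $(xy)^i u \approx_k y(xy)^j u$ and call it ``the hypothesis at level $k$ with prefix $xy$.'' This is false: as soon as $x$ and $y$ are both nonempty, the word $(xy)^i u$ begins with the first letter of $x$ while $y(xy)^j u$ begins with the first letter of $y$, so the two words are not even $\approx_0$-equivalent. Nor does the hypothesis with prefix $xy$ produce this equivalence --- it gives $(xy)^{a+1}u \approx_k xyy(xy)^b u$, whose right-hand side is not of the form $y(xy)^j u$.

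The fix is to abandon the position-aligned matching here: pair the suffix $(xy)^i u$ of the left word with the suffix $(xy)^j u$ of the right word rather than with $y(xy)^j u$. Then $(xy)^i u \approx_k (xy)^j u$ follows from two applications of the level-$k$ hypothesis and transitivity --- taking $w' = xy$ (assuming $xy \neq \varepsilon$, else both sides equal $u$), both $(xy)^i u = xy(xy)^{i-1}u$ and $(xy)^j u = xy(xy)^{j-1}u$ are $\approx_k$-equivalent to $xyy(xy)^c u$ for any $c \ge k$, using $i-1, j-1 \ge k$. The paper's proof organizes the case split to sidestep this issue: since $i,j \ge k+1$, the word $(xy)^{k+1}u$ is literally a suffix of both sides, so any suffix that fits inside it is automatically shared; only the longer suffixes need the inductive hypothesis, and the ``extra'' suffixes on the right of the form $z(xy)^j u$ with $z$ a nonempty suffix of $y$ get matched with $z(xy)^k u$, a suffix of $(xy)^{k+1}u$. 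Your case (ii) has the same latent issue when $s$ is empty, and the same fix applies.
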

  \begin{proof} 
    Induction on $k$. For $k=0$, we observe that  the equivalence class under $\approx_0$ depends only on the first letter, and the two words on both sides in the claim have the same letter because $w$ is nonempty. 
    
    Consider now the induction step, when going from $k$ to $k+1$. By unravelling the definition of $\approx_{k+1}$, we need to show that if $i,j \ge k+1$, then words on both sides of the equivalence 
    \begin{align*}
      w (xy)^{i} u \quad  \approx_{k+1}\quad   w y (xy)^{j} u 
    \end{align*}
    have the same first letter, and for every nonempty proper suffix of a word on one side of the equivalence, there 
    there is a nonempty proper suffix on the other side of the equivalence, such that the two suffixes are equivalent under $\approx_k$.  Clearly the first letters are the same, because $w$ is nonempty. Consider now the suffixes. Suppose first that  $v$ is a nonempty proper suffix of the left side. If $v$ is a suffix of $(xy)^{k+1}u$, then the same $v$ is a suffix of the right side. Otherwise, we can use the induction assumption.  Consider now a nonempty proper suffix $v$ of the right side.  Here we argue in the same way as previously, except that there is one extra case, when 
    \begin{align*}
    v = z(xy)^{i}u \qquad \text{for some $z$ that is a suffix of $y$.}
    \end{align*}
    In this case, the $\approx_k$-equivalent suffix on the left side is 
    $z(xy)^k u$.
  \end{proof}
  
  By unravelling the definition of the syntactic monoid, in terms of two-sided congruences, we infer from the above claim  that for every rank $k$ formula $\varphi$ of \ltlf, the syntactic monoid $M$ of $\finally^* \varphi$  satisfies 
  \begin{align}\label{eq:suffix-trivial-identity}
    (xy)^\momega = y(xy)^\momega \qquad \text{for all $x,y \in M$}.
  \end{align}
  The same is also true for syntactic monoids of Boolean combinations of such formulas.  
  To finish the proof, we observe that  property~\eqref{eq:suffix-trivial-identity} is true in a finite monoid if and only if it is suffix trivial. Indeed, if a monoid is suffix trivial, then   $(xy)^\momega$ and $y(xy)^\momega$ must be in the same suffix class, and hence equal. Conversely, if $a,b$ are in the same suffix class, then there must be some $x,y$ such that $b=xa$ and $a=yb$; it follows that 
  \begin{align*}
   a = y(xy)^\momega b \stackrel{\text{\eqref{eq:suffix-trivial-identity}}} =  (xy)^\momega b = b.
  \end{align*}
\end{description} 
\end{proof}

\exercisehead
\mikexercise{\label{ex:ltlf-fresh-letter}
  Let $\Sigma$ be an alphabet and let $c \not \in \Sigma$ be a fresh letter. Show that $L \subseteq \Sigma^+$ satisfies the conditions of Theorem~\ref{thm:ltl-f} if and only $cL$ is definable in \ltlf.
}{}
\section{Infix trivial semigroups and piecewise testable languages}
\label{sec:piecewise}
Having discussed monoids that are $\Hh$-trivial,  prefix-trivial and suffix-trivial  in the previous sections, we turn to finite monoids that are infix-trivial.  For languages recognised by finite infix trivial monoids, a prominent role will be played embeddings of words (also known as the Higman ordering).

\begin{definition}[Embedding] We say that a word $w \in \Sigma^*$ \emph{embeds} in a word $v \in \Sigma^*$, denoted by $w \higman v$,  if there is an injective function from positions in $w$ to positions in $v$, which preserves the order on positions and the labels. 
\end{definition}
In other words, $w$ embeds in $v$ if and only if  $w$ can be obtained from $v$ by removing zero or more positions.  For example ``ape'' embeds into ``example''. It is easy to see that embedding is an ordering on words: it is reflexive, transitive and anti-symmetric (although it will cease to be anti-symmetric for infinite words).
We say that a language $L \subseteq \Sigma^*$ is \emph{upward closed} if 
\begin{align*}
      v \higman w \land  v \in L \Rightarrow w \in L.
      \end{align*}
Symmetrically, we define downward closed languages. The main result  
about embedding is that it is a well-quasi order, as explained in the following lemma. 
% \begin{theorem}
%   For a language $L \subseteq \Sigma^*$, the following are equivalent:
%   \begin{enumerate}
%     \item $L$ is upward closed in the Higman ordering, i.e.
%     \begin{align*}
%     w \le v \land w \in L \Rightarrow v \in L
%     \end{align*}
%     \item $L$ is a finite union of languages defined by expressions 
%     \begin{align*}
%     \Sigma^* a_1 \Sigma^* a_2 \Sigma^* \cdots  \Sigma^* a_n \Sigma^*.
%     \end{align*}
%   \end{enumerate}
% \end{theorem}
% \begin{proof}
% ss
% \end{proof}

\begin{lemma}[Higman's Lemma]
  For every upward closed  $L \subseteq \Sigma^*$ there is a finite subset $U \subseteq L$ such that
  \begin{align*}
  L = \myunderbrace{\set{w \in \Sigma^*: v \higman w \text{ for some $v \in U$} }}{we call this the \emph{upward closure} of $U$}.
  \end{align*}
\end{lemma}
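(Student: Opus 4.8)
The plan is to prove the statement by \emph{well-founded induction on the alphabet size}, with an inner argument that is itself an induction. Fix an upward closed language $L \subseteq \Sigma^*$. I want a finite $U \subseteq L$ whose upward closure is $L$. Equivalently (since the upward closure of any set is upward closed, and any subset of $L$ has upward closure contained in $L$ when $L$ is upward closed), it suffices to show that the set of $\higman$-minimal elements of $L$ is finite: these minimal elements automatically generate $L$, because $\higman$ restricted to $L$ has no infinite strictly descending chain once we know the statement — but to avoid circularity I will instead argue directly that $L$ cannot contain an infinite \emph{antichain} and cannot contain an infinite strictly descending chain, i.e.\ that $\higman$ is a well-quasi-order on $\Sigma^*$; the finiteness of a generating set then follows by a standard argument (an upward closed set in a wqo is generated by its finitely many minimal elements).

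The core is the wqo claim: for every finite $\Sigma$, every infinite sequence $w_1, w_2, \ldots \in \Sigma^*$ contains indices $i < j$ with $w_i \higman w_j$. I would prove this by the classical \emph{minimal bad sequence} argument (Nash-Williams). Suppose not, and among all ``bad'' sequences (those with no such $i<j$) choose one, $w_1, w_2, \ldots$, that is minimal in the following sense: $w_1$ has minimal length among first terms of bad sequences; given $w_1, \ldots, w_k$, the term $w_{k+1}$ has minimal length among all bad sequences beginning with $w_1, \ldots, w_k$. No $w_i$ is empty (the empty word embeds in everything, so a bad sequence has at most one empty term, and we may discard it). Write $w_i = a_i v_i$ with $a_i \in \Sigma$ the first letter and $v_i$ the rest. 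Since $\Sigma$ is finite, some letter $a$ occurs as $a_i$ for infinitely many $i$; let $i_1 < i_2 < \cdots$ enumerate those indices. Now consider the sequence $w_1, \ldots, w_{i_1 - 1}, v_{i_1}, v_{i_2}, \ldots$. It is strictly shorter in its $i_1$-th term than the original, so by minimality it is \emph{not} bad: there are two embedding-related terms. They cannot both come from the prefix $w_1, \ldots, w_{i_1-1}$ (that is part of the original bad sequence), and a relation $w_p \higman v_{i_q}$ with $p < i_1$ would give $w_p \higman v_{i_q} \higman w_{i_q}$, contradicting badness of the original. So the relation is $v_{i_p} \higman v_{i_q}$ for some $p < q$; prepending the common letter $a$ gives $w_{i_p} = a v_{i_p} \higman a v_{i_q} = w_{i_q}$, contradicting badness of the original sequence.

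Given the wqo property, finish as follows. Let $U$ be the set of $\higman$-minimal elements of $L$. If $U$ were infinite, it would contain, by wqo, two elements $u \higman u'$ with $u \neq u'$, contradicting minimality of $u'$; so $U$ is finite. For any $w \in L$, the set $\{v \in L : v \higman w\}$ is nonempty and, again by wqo (no infinite strictly descending chain), has a minimal element $v_0$, which is then $\higman$-minimal in all of $L$ by upward closedness, so $v_0 \in U$ and $v_0 \higman w$. Hence $L$ equals the upward closure of $U$. The main obstacle is setting up the minimal bad sequence correctly — in particular justifying that a minimal bad sequence exists (dependent choice) and handling the bookkeeping that the modified sequence inherits non-badness from the prefix and from the suffix simultaneously. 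Everything else is routine; no induction on alphabet size is actually needed once the minimal-bad-sequence device is in place, though the excerpt's structure suggests the author may instead run an induction on $|\Sigma|$ using Dickson-type reasoning, which would be an equally valid alternative route.
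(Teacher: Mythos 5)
Your proof is correct and takes essentially the same approach as the paper: both reduce the statement to showing that embedding has no infinite bad sequence, and both prove this by the Nash-Williams minimal bad sequence argument (extract a common first letter $a$ from infinitely many terms, strip it, and use minimality to derive a contradiction). The only cosmetic difference is that the paper orders candidate words by the radix ordering (a total well-order) rather than by length, and your closing speculation that the author instead runs an induction on $|\Sigma|$ via Dickson-type reasoning is not borne out — the paper's proof is the same minimal bad sequence argument you give.
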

\begin{proof}
  Consider the set of minimal elements in $L$, i.e.~the set
  \begin{align*}
  U = \set{ w \in L : \text{there is no $v \in L$ such that $v \higman w$ and $v \neq w$}}.
  \end{align*}
  Because the embedding ordering is well-founded (there are no infinite decreasing chains) it follows that $L$ is equal to the upward closure of its minimal elements $U$. By definition, $U$ is an \emph{antichain},  which means that every two elements of $U$ are incomparable with respect to embedding. Therefore, to prove the lemma it remains to show that  antichains are finite.

  \begin{claim}
    There is no infinite antichain with respect to embedding.
  \end{claim}
  \begin{proof}
    Define a \emph{growth} in a finite or infinite sequence $w_1,w_2,\ldots$ to be a pair of indices $i < j$ such that $w_i \higman w_j$. We will show that every infinite sequence contains at least one growth. This implies that there cannot be any infinite antichains. 
    
    Suppose, toward a contradiction, that there is an infinite sequence of words without growths. 
Define the radix ordering on finite words as follows: shorter words come before longer ones, and same length words are ordered lexicographically.    Define a sequence $w_1,w_2,\ldots$ by induction as follows. The word $w_1$ is the least word, in the radix ordering (which is well-founded, so it makes sense to talk about least words), which can be extended to an infinite sequence without growths. For $n > 1$, define $w_n$ to be the least word in the radix ordering such that $w_1,\ldots,w_n$ can be extended to an infinite sequence without growths, in particular $w_1,\ldots,w_n$ has no growths. Sequences without growths are closed under limits, and therefore $w_1,w_2,\ldots$ has no growths.

Consider the sequence $w_1,w_2,\ldots$ defined in the previous paragraph. Because the alphabet is finite, there must be some letter, call it $a$, such that infinitely many words in the sequence, say with indexes $n_1 < n_2 < \cdots$, begin with the letter $a$. Define a new sequence as follows:
\begin{align*}
w_1,\ldots,w_{n_1 -1 }, \myunderbrace{\red{w_{n_1}},\red{w_{n_2}},\ldots}{the word $\red{w_n}$ is
\\ \scriptsize is obtained from \\\scriptsize the word $w_n$ by \\ \scriptsize removing the \\ \scriptsize first letter} .
\end{align*}
Since $\red{w_{n_1}}$ is shorter than $w_{n_1}$, it follows from the construction in the previous paragraph that  the above sequence must have some growth. However, it is easy to see that any growth in the above sequence would also translate to some growth in the sequence from the previous paragraph, hence a contradiction. 
  \end{proof}
\end{proof}

Here is a logical corollary of  Higman's lemma. 

\begin{theorem}
  A language is upward closed if and only if it can be defined in the ordered model by an $\exists^*$-sentence, i.e.~a sentence of the form
  \begin{align*}
    \myunderbrace{\exists x_1 \ \exists x_2 \ \cdots \exists x_n}{only existential quantifiers}  \quad
    \myunderbrace{\varphi(x_1,\ldots,x_n)}{quantifier-free}.
    \end{align*} 
\end{theorem}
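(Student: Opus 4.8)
The plan is to prove both directions of the equivalence, with the right-to-left direction being essentially immediate and the left-to-right direction being the one that uses Higman's Lemma.

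First I would handle the easy direction: if $L$ is defined by an $\exists^*$-sentence $\exists x_1 \cdots \exists x_n\ \varphi(x_1,\ldots,x_n)$ with $\varphi$ quantifier-free, then $L$ is upward closed. The point is that an $\exists^*$-sentence over the ordered model is preserved under extensions of models, and for words the embedding relation $w \higman v$ corresponds exactly to $v$ being an extension of $w$ as an ordered model with unary label predicates. More concretely, suppose $w \in L$ and $w \higman v$, witnessed by an order- and label-preserving injection $\iota$ from positions of $w$ to positions of $v$. If $a_1,\ldots,a_n$ are positions of $w$ witnessing $\varphi$, then $\iota(a_1),\ldots,\iota(a_n)$ witness $\varphi$ in $v$: since $\varphi$ is quantifier-free it is a Boolean combination of atomic formulas of the form $x_i \le x_j$, $x_i = x_j$ and $c(x_i)$, and each such atomic formula has the same truth value under $\iota$ because $\iota$ preserves order and labels. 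Hence $v \in L$.

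For the hard direction, suppose $L$ is upward closed. By Higman's Lemma there is a finite set $U = \{u_1,\ldots,u_m\} \subseteq L$ such that $L$ is exactly the set of words $w$ with $u_i \higman w$ for some $i$. Now I observe that for each fixed word $u = c_1 c_2 \cdots c_k$, the set $\{w : u \higman w\}$ is defined by the $\exists^*$-sentence
\begin{align*}
\exists x_1 \cdots \exists x_k\ \left( \bigwedge_{j=1}^{k-1} x_j < x_{j+1} \ \land\ \bigwedge_{j=1}^{k} c_j(x_j) \right),
\end{align*}
where $x_j < x_{j+1}$ abbreviates $x_j \le x_{j+1} \land x_j \neq x_{j+1}$; this literally says that there is an increasing sequence of positions carrying the labels $c_1,\ldots,c_k$, which is precisely an embedding of $u$. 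Taking the disjunction over the finitely many $u_i \in U$ gives an $\exists^*$-sentence (the existential quantifiers of the disjuncts can be collected at the front, renaming variables apart, since $\exists$ distributes over $\lor$), and this sentence defines $L$.

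I expect the main obstacle to be purely presentational rather than mathematical: one must be careful about the empty word and the convention about empty models, and about the case where $U$ might contain the empty word (in which case $L = \Sigma^*$, defined by the sentence $\top$, or equivalently an empty block of existential quantifiers). All the real content is already packed into Higman's Lemma, which is available by assumption; the remaining work is just the routine translation between embeddings and existential sentences sketched above.
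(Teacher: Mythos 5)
Your proof is correct and follows the same two-step outline as the paper: the easy direction is preservation of $\exists^*$-sentences under the order- and label-preserving injections that witness embeddings, and the hard direction applies Higman's Lemma to reduce to finitely many embedding patterns, each of which is plainly $\exists^*$-definable. The paper states this in two sentences; you have simply filled in the routine details (including the observation that prenexing the disjunction is harmless), so the approaches coincide.
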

\begin{proof}
  Clearly every $\exists^*$-sentence defines an upward closed language.  Higman's Lemma gives the converse implication, because the upward closure of every finite set is definable by an $\exists^*$-sentence.   
\end{proof}

Embeddings will also play an important role in the characterisation of languages recognised by monoids that are infix trivial. Before stating the characterisation, we introduce one more definition, namely zigzags.  For languages $L,K \subseteq \Sigma^*$, define  a \emph{zigzag between $L$ and $K$} to be a sequence
\begin{align*}
 \underbrace{w_1}_{\in L} \higman 
 \underbrace{w_2}_{\in K} \higman
 \underbrace{w_3}_{\in L} \higman 
 \underbrace{w_4}_{\in K} \higman
 \underbrace{w_5}_{\in L} \higman 
 \underbrace{w_6}_{\in K} \higman  \cdots.
\end{align*}
In other words, this is a sequence that is  growing with respect to embeddings, and such that odd-numbered elements are in $L$ and even-numbered elements are in $K$. The zigzag does not need to be strictly growing, but it will be if  $L$ and $K$ are disjoint.

We are now ready for the characterisation of infix trivial monoids.

\begin{theorem}\label{thm:piecewise}
  The following conditions are equivalent\footnote{
    Equivalence of items~\ref{jsimon-infix} and~\ref{jsimon-piecewise} was first proved in 
    \incite[p.~220.]{Simon75}
    Equivalence of items~\ref{jsimon-piecewise} and~\ref{jsimon-zigzag} was first proved in 
    \incite[Theorem 3.]{czewinskietal2013}
  } for every $L \subseteq \Sigma^*$:
  \begin{enumerate}
    \item \label{jsimon-infix} recognised by a finite monoid that is infix trivial;
    \item \label{jsimon-piecewise} is a finite Boolean combination of upward closed languages;
    \item \label{jsimon-zigzag} there is no infinite zigzag between $L$ and its complement.
  \end{enumerate}
\end{theorem}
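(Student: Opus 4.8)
The plan is to prove the cycle $\ref{jsimon-infix} \Rightarrow \ref{jsimon-piecewise} \Rightarrow \ref{jsimon-zigzag} \Rightarrow \ref{jsimon-infix}$, which is the most economical route given the tools already developed. The implication $\ref{jsimon-piecewise} \Rightarrow \ref{jsimon-zigzag}$ is the easiest: if $L$ is a finite Boolean combination of upward closed languages, then so is its complement, and an upward closed language is determined by finitely many ``thresholds'' in the embedding order; more precisely, an infinite zigzag $w_1 \higman w_2 \higman \cdots$ would eventually be constant on each of the finitely many upward closed pieces (once it enters such a piece it stays, by upward closure), so the whole sequence is eventually constant in membership to $L$, contradicting the alternation between $L$ and its complement. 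For $\ref{jsimon-infix} \Rightarrow \ref{jsimon-piecewise}$, suppose $h : \Sigma^* \to M$ recognises $L$ with $M$ infix trivial. Since $L$ is a union of sets $h^{-1}(m)$, it suffices to show each $h^{-1}(m)$ is a Boolean combination of upward closed languages; and since the infix preorder on $M$ is a partial order (infix triviality), I can induct on the position of $m$ in that order. The key point is that whether $h(w) = m$ depends only on which ``pieces'' $w$ embeds: I claim $h(w)$ is determined by the set of words $v$ of length $<|M|$ with $v \higman w$. This is the classical Simon argument — reading a long word, whenever one multiplies past a whole infix class one gets strictly below in the infix order, so after at most $|M|$ ``descents'' the value stabilises, and the descents are detectable by the presence of short embedded subwords. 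Making this precise is where I expect the bulk of the work: I would prove by induction on $|M|$ that $h^{-1}(m)$ is a Boolean combination of upward-closures of finitely many short words, using the Factorisation Forest Theorem (Theorem~\ref{thm:simon-factfor}) or a direct Green's-relations argument as in the proof of the Egg-box Lemma to control how the value changes.

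For the remaining implication $\ref{jsimon-zigzag} \Rightarrow \ref{jsimon-infix}$, I would argue contrapositively: if the syntactic monoid $M$ of $L$ is not infix trivial, I will exhibit an infinite zigzag between $L$ and its complement. Non-infix-triviality gives two distinct elements $a \neq b$ of $M$ that are infix-equivalent, so there are $x_1,y_1,x_2,y_2$ (in $M^1$) with $b = x_1 a y_1$ and $a = x_2 b y_2$. Lifting through the syntactic homomorphism $h : \Sigma^* \to M$, pick words realising all these, and use that since $a \neq b$ there is a context $u\cdot\_\cdot v$ separating them: exactly one of $u a' v, u b' v$ lies in $L$ for representatives $a', b'$ of $a,b$. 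Now one builds the zigzag by alternately wrapping with the $(x_1,y_1)$ context and the $(x_2,y_2)$ context: starting from a word mapping to $a$ (or $b$), each wrapping step produces a word that embeds the previous one (it is literally a superword) and toggles the syntactic-monoid value between $a$ and $b$, hence toggles membership in $uLv^{-1}$-type conditions; by composing with the separating context $u,v$ one gets the alternation between $L$ and its complement. The technical care needed is to ensure the wrapped words are genuinely embedding-larger (they are, by construction, since we only insert letters) and that the value really alternates rather than collapsing — this uses $a \neq b$ and idempotent-power bookkeeping exactly as in the proof of item~\ref{Egg-box:prefix-incomparable} of the Egg-box Lemma.

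The main obstacle I anticipate is the $\ref{jsimon-infix} \Rightarrow \ref{jsimon-piecewise}$ direction — specifically, proving cleanly that in an infix-trivial monoid the homomorphic image of a word is a finite-Boolean-combination-of-subword-occurrences invariant, with an explicit bound on the length of the relevant subwords. The slick way is the induction on $|M|$: factor $\Sigma$ according to whether a letter's image is the monoid identity, handle the non-trivial letters by noting any product of two of them drops strictly in the infix order (so lands in a smaller submonoid, where induction applies), and glue using that ``$v$ embeds in $w$'' is preserved under the relevant decompositions. I would also want to double-check the base case where $M$ is a single infix class: infix triviality then forces $M$ trivial, and $h^{-1}(m)$ is either $\emptyset$ or $\Sigma^*$, both trivially Boolean combinations of upward closed sets. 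Everything else is routine bookkeeping, so I would allot most of the write-up to making the Simon-style stabilisation argument airtight.
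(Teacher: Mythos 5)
Your cycle $(1)\Rightarrow(2)\Rightarrow(3)\Rightarrow(1)$ is a genuinely different route from the paper's, so let me first compare. The paper proves $(2)\Leftrightarrow(3)$ via a separate Zigzag Lemma about \emph{separation} of two languages by a piecewise testable language, and then proves $(1)\Leftrightarrow(3)$ directly, the hard direction being $(1)\Rightarrow(3)$: an induction on pairs $(a,b)$ in the infix order, showing via ``zigzag extraction'' and ``zigzag decomposition'' claims that an infinite zigzag between two elements of a finite infix-trivial monoid forces them to be equal. You instead route the hard work through Simon's classical argument for $(1)\Rightarrow(2)$, that in a finite infix-trivial monoid the homomorphic image of a word is determined by its short embedded subwords. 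That is a legitimate alternative that entirely bypasses the zigzag-extraction machinery; the trade-off is that you take on the full weight of Simon's subword argument, whereas the paper sidesteps it by making the zigzag condition carry the load.

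There is, however, a genuine gap in your $(3)\Rightarrow(1)$ step. You argue contrapositively that ``if the syntactic monoid $M$ of $L$ is not infix trivial, there is an infinite zigzag.'' But the negation of $(1)$ is that the syntactic monoid is either \emph{infinite} or finite-but-not-infix-trivial, and your wrapping construction only handles the second case. The first case is not vacuous and cannot be absorbed into your argument: the language $\{a^{n^2} : n \geq 0\}$ has syntactic monoid $(\mathbb N, +)$, which is infinite but \emph{is} infix trivial (the infix order on $\mathbb N$ is ordinary $\leq$, hence antisymmetric), so it produces no infix-equivalent pair $a\neq b$ to wrap with. Yet $(3)$ fails for this language: lengths alternating between squares and non-squares give an infinite zigzag. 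So you really must prove separately that non-regular languages always admit an infinite zigzag. This cannot be obtained for free from your $(2)\Rightarrow(3)$, because what you would need is the converse implication $(3)\Rightarrow(2)$ (``no infinite zigzag $\Rightarrow$ piecewise testable''), which in your cycle becomes available only \emph{after} the implication you are trying to prove. Filling the gap requires the compactness argument the paper packages inside its Zigzag Lemma: if zigzag lengths are bounded one builds a piecewise-testable separator by induction on potential, and if they are unbounded then Higman's Lemma plus a K\"onig-style extraction yields an infinite zigzag. Either import that lemma explicitly before your $(3)\Rightarrow(1)$, or give a direct proof that non-regular implies the existence of an infinite zigzag.
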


We use the name \emph{piecewise testable} for languages as in item~\ref{jsimon-piecewise} of the above theorem.
%  The reason for this name is that membership $w \in L$ in a piecewise testable language can be determined by checking the Higman relation $v \higman w$ for finitely many words. Equivalence 
Equivalence\footnote{Both conditions~\ref{jsimon-zigzag} and~\ref{jsimon-infix}  can be checked by  algorithms. For~\ref{jsimon-infix} this is immediate, while condition~\ref{jsimon-zigzag} is discussed in Exercise~\ref{ex:separate-piecewise}. Therefore, condition~\ref{jsimon-infix} would not be useful for a hypothetical person that  only cares about deciding if a regular language is piecewise testable.  } of items~\ref{jsimon-piecewise} and~\ref{jsimon-zigzag} is a corollary of the following lemma, when applied to $K = \Sigma^* - L$.

\begin{lemma}[Zigzag Lemma]
  \label{lem:zigzag} Let $L,K \subseteq \Sigma^*$. The following are equivalent:
  \begin{enumerate}
    \item \label{zigzag-finite-zigzags} there are zigzags between $L$ and $K$  of every finite length;
    \item \label{zigzag-infinite-zigzag} there is an infinite zigzag between $L$ and $K$;
    \item \label{zigzag-finite-separation} there is no piecewise testable  language $M \subseteq \Sigma^*$ such that 
    \begin{align*}
    \myunderbrace{L \subseteq M \quad \text{and} \quad  M \cap K = \emptyset.}{we say that $M$  separates $L$ and $K$}
    \end{align*}
  \end{enumerate}
\end{lemma}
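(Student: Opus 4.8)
The plan is to prove the cycle of implications $\ref{zigzag-infinite-zigzag} \Rightarrow \ref{zigzag-finite-zigzags} \Rightarrow \ref{zigzag-finite-separation} \Rightarrow \ref{zigzag-infinite-zigzag}$. The first implication is trivial: an infinite zigzag restricted to its first $n$ elements is a zigzag of length $n$, so \ref{zigzag-infinite-zigzag} implies \ref{zigzag-finite-zigzags} directly.

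For \ref{zigzag-finite-zigzags} $\Rightarrow$ \ref{zigzag-finite-separation}, I would argue contrapositively: suppose $M$ is piecewise testable, $L \subseteq M$, and $M \cap K = \emptyset$. The key observation is a ``finiteness rank'' for piecewise testable languages. Since $M$ is a finite Boolean combination of upward closed languages, and (by Higman's Lemma) every upward closed language is the upward closure of a finite set whose elements have bounded length, there is a number $N$ such that membership in $M$ depends only on which words of length $\le N$ embed into the given word; call two words \emph{$N$-equivalent} if they have the same such ``subword profile'' up to length $N$, and note $M$ is a union of $N$-equivalence classes. Now I claim there is a bound on the length of zigzags between $L$ and $K$: if $w_1 \higman w_2 \higman \cdots \higman w_m$ is a zigzag, then along the chain the subword profile (up to length $N$) is monotone non-decreasing; since there are only finitely many profiles, the profile must strictly increase at most boundedly often, and between two consecutive profile-changes all the $w_i$ are $N$-equivalent — but then they all lie in $M$ or all lie outside $M$, contradicting that the odd ones are in $L \subseteq M$ and the even ones are in $K \subseteq \Sigma^*\setminus M$ once such a block has length $\ge 2$. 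So zigzags have length bounded by roughly twice the number of distinct profiles, contradicting \ref{zigzag-finite-zigzags}. This gives \ref{zigzag-finite-zigzags} $\Rightarrow$ \ref{zigzag-finite-separation}.

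The main obstacle is \ref{zigzag-finite-separation} $\Rightarrow$ \ref{zigzag-infinite-zigzag}, which is really a compactness-type argument; equivalently I will prove the contrapositive of \ref{zigzag-finite-zigzags} $\Rightarrow$ \ref{zigzag-infinite-zigzag}, i.e.\ that arbitrarily long finite zigzags yield an infinite one, and separately that \emph{no} infinite zigzag yields a separator. For the latter: assume there is no infinite zigzag; I want to build a piecewise testable separator $M$. The idea is to stratify by subword profiles as above. Define, for each ``depth'' $d$, the relation ``$w$ starts a zigzag of length $d$ with $w \in L$''; call a word \emph{$L$-good} at depth $d$ if it does. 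By the absence of an infinite zigzag (using \ref{zigzag-finite-zigzags} $\Leftrightarrow$ \ref{zigzag-infinite-zigzag}, a König's-lemma style argument on the tree of finite zigzags ordered by extension — the tree is finitely branching after one passes to minimal embedding-witnesses via Higman's Lemma, so infinite tree implies infinite branch), there is a finite bound $D$ on zigzag lengths. Then take $M$ to be the union of the $N$-equivalence classes (for $N$ chosen large enough that it controls zigzags up to depth $D$) that meet $L$; one checks $L \subseteq M$ and, because a word in $K \cap M$ together with a cleverly-chosen element of $L$ in the same class would extend a maximal zigzag by one step, $M \cap K = \emptyset$. This $M$ is piecewise testable by construction, giving \ref{zigzag-finite-separation} $\Rightarrow$ \ref{zigzag-infinite-zigzag} in contrapositive form.

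The delicate points I expect to spend the most care on are: (i) making precise the ``$N$-equivalence = subword profile up to length $N$'' notion and verifying it has finite index and refines every piecewise testable language — this is essentially Simon's characterization and must be isolated as a sublemma; (ii) the König's lemma step, where I need the tree of finite zigzags to be finitely branching, which requires replacing arbitrary zigzag continuations by \emph{minimal} ones under embedding (legitimate by well-quasi-orderedness from Higman's Lemma); and (iii) checking that the candidate separator $M$ built from $N$-classes genuinely avoids $K$ — this is where the hypothesis ``no long zigzag'' is actually consumed, via the observation that if $w \in K$, $w' \in L$, and $w \equiv_N w'$ with $w' \higman w$ (which can be arranged after replacing $w$ by a suitable larger word, or by working with the class rather than the individual word), then appending $w$ to a maximal zigzag ending at $w'$ produces a longer zigzag, a contradiction.
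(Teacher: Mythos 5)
Your overall architecture ($\ref{zigzag-infinite-zigzag}\Rightarrow\ref{zigzag-finite-zigzags}\Rightarrow\ref{zigzag-finite-separation}\Rightarrow\ref{zigzag-infinite-zigzag}$) is reasonable, and two of your three steps are sound. The trivial implication is fine. Your implication $\ref{zigzag-finite-zigzags}\Rightarrow\ref{zigzag-finite-separation}$ via subword profiles is correct and is essentially the paper's argument in different clothing: the paper observes that an upward-closed set selects either none or cofinitely many elements of an infinite zigzag; you observe that the subword profile up to length $N$ is non-decreasing along a zigzag and strictly increases at every step (since a constant-profile step would put an $L$-element and a $K$-element in the same $\equiv_N$-class, hence both on the same side of the separator), bounding zigzag length by $|\Sigma^{\le N}|+1$. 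Your K\"onig-plus-Higman argument for $\ref{zigzag-finite-zigzags}\Rightarrow\ref{zigzag-infinite-zigzag}$ is also correct in spirit, though the paper's version is more explicit — it builds a directed graph on $L$-vertices, assigns a potential (supremum of path lengths), and uses Higman's Lemma twice: once to find a vertex of infinite potential among the finitely many minimal $L$-words, and once more to propagate infinite potential to a successor.

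The gap is in the final step, where you claim that bounded zigzag length implies separability (the contrapositive of $\ref{zigzag-finite-separation}\Rightarrow\ref{zigzag-finite-zigzags}$). You propose taking $M$ to be the union of $\equiv_N$-classes meeting $L$, and arguing that any $w\in K\cap M$ would, together with an $L$-word $w'$ in its class, ``extend a maximal zigzag by one step.'' But extending a zigzag requires an \emph{embedding} $w'\higman w$ (or $w\higman w'$), and $w\equiv_N w'$ gives you no such embedding. Your parenthetical ``which can be arranged after replacing $w$ by a suitable larger word, or by working with the class'' is precisely the point that fails: $w\in K$ is given to you, you cannot enlarge it, and there is no reason the $L$-representatives of its class should embed into it. Nor does Simon's lemma (joins exist inside an $\equiv_N$-class) help directly, since the join need not lie in $L\cup K$. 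The paper avoids this trap entirely: it stratifies $L$ and $K$ by \emph{potential} (the maximal length of a zigzag starting from a given word), defines $L_i$ and $K_i$ as the potential-$i$ slices, and builds a separator $M_i$ for $(L_{\le i},K_{\le i})$ by induction on $i$, using three pieces — the upward closure of $L_{\le i}$, the downward closure of $L_i$, and the inductively-obtained separator $M_{i-1}$ — each interacting with the potential bound in a way that is checked explicitly. This stratified construction is the missing ingredient in your plan; without it, or something equivalent, your candidate $M$ has no reason to avoid $K$.
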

\begin{proof}
  \ 
  \begin{description}
    \item[\ref{zigzag-finite-zigzags}$\Rightarrow$\ref{zigzag-infinite-zigzag}] Assume that zigzags between $L$ and $K$ can have  arbitrarily long finite lengths. Define a directed acyclic graph $G$ as follows. Vertices are words in $L$, and there is an edge $w \to v$ if 
    \begin{align*}
    w \higman u \higman v \qquad \text{for some $u \in K$.}
    \end{align*}
    For a vertex $v \in L$ of this graph, define its \emph{potential} 
    \begin{align*}
      \alpha(v) \in \set{0,1,\ldots,\omega}
    \end{align*}
    to be the least upper bound on the lengths of paths in the graph that start in $v$. This can be either a finite number, or $\omega$ if the paths have unbounded length.  
    
    We first show that some vertex must have potential $\omega$. By assumption on arbitrarily long zigzags, potentials have arbitrarily high values. By definition of the graph, $\alpha$ is monotone with respect to  (the opposite of the) embedding, in the following sense:
    \begin{align*}
    v \higman v' \quad \text{implies} \quad \alpha(v) \ge \alpha(v') \qquad \text{for every $v,v' \in L$}.
    \end{align*}
    By Higman's Lemma, the language  $L$, like any set of words, has finitely many minimal elements with respect to embedding. By monotonicity, one of these minimal words must therefore have potential $\omega$.  
    
    For the same reason as above, if a word has potential $\omega$, then one of its successors (words reachable in one step in the graph) must also have potential $\omega$; this is because there are finitely many successors that are minimal with respect to embedding. This way, we can construct an infinite path in the graph which only sees potential $\omega$, using the same reasoning as  in the proof of  \konig's Lemma. 

    \item[\ref{zigzag-infinite-zigzag}$\Rightarrow$\ref{zigzag-finite-separation}]Suppose that there is a zigzag between $L$ and $K$ of infinite length. Every  upward closed  set selects either no elements of the zigzag, or all but finitely many elements of the zigzag. It follows that every finite Boolean combination of upward closed sets must contain, or be disjoint with, two consecutive elements of the zigzag. Therefore, such a Boolean combination cannot  separate $L$ from $K$. 
    \item[\ref{zigzag-finite-separation}$\Rightarrow$\ref{zigzag-finite-zigzags}] We prove the contra-positive: if zigzags between $L$ and $K$ have bounded length, then $L$ and $K$ can be separated by a piecewise testable language. For  $w \in L$ define its \emph{potential} to be the maximal length of a zigzag between $L$ and $K$  that starts in $w$; likewise we define the potential for $w \in K$, but using  zigzags between $K$ and $L$.  Define $L_i \subseteq L$ to be the words in $L$ with potential exactly $i \in \set{1,2,\ldots}$, likewise define $K_i \subseteq K$.  Our assumption is that the potential is bounded, and therefore $L$ is a finite union of the languages $L_i$, likewise for $K$. 
    By induction on $i \in \set{0,1,\ldots}$, we will show that the languages
    \begin{align*}
    \underbrace{L_1 \cup \cdots \cup L_i}_{L_{\le i}} \qquad \text{and} \qquad 
    \underbrace{K_1 \cup \cdots \cup K_i}_{K_{\le i}}.
    \end{align*}
    can be separated by a piecewise testable language, call it $M_i$. 
In the induction base, both languages are empty, and can therefore be separated by the empty language, which is clearly piecewise testable. 
Consider  the induction step, where we go from $i-1$ to $i$. We write $L_{<i}$ instead of $L_{\le i-1}$. We will use the following sets
\begin{itemize}
  \item the upward closure of $L_{\le i}$;
  \item the downward closure of $L_i$;
  \item a piecewise testable set $M$ that contains $K_{<i}$ and is disjoint with $L_{<i}$.
\end{itemize}
The first two sets are piecewise testable because they are upward or downward closed, and the third set is obtained from the induction assumption.  These sets  are  depicted in the following picture, with $i=3$:
\mypic{36}
 The separator $M$ from the induction assumption contains $K_{<i}$ and is disjoint with $L_{<i}$. Therefore, the piecewise testable language 
\begin{align*}
M' = \myunderbrace{\text{(upward closure of $L_{\le i}$)}}{contains $L_{\le i}$}  -  \myunderbrace{(M -  \text{(downward closure of $L_i$)}}{disjoint with $L_{\le i}$})
\end{align*}
contains $L_{\le i}$. We will now show that $M'$ is disjoint with $K_{\le i}$, thus finding a separator as required in the induction. First observe that the upward closure of  $L_{\le i}$ is disjoint   with  $K_i$, because otherwise there would be some words
 \begin{align*}
 \myunderbrace{w}{$L_{\le i}$} \quad \higman \quad 
 \myunderbrace{v}{$K_i$},
 \end{align*}
 and therefore the word $w$ would have potential $i+1$, and would not belong to $L_{\le i}$. Therefore, $K_i$ is disjoint with $M' \subseteq L_{\le i}$. The downward closure of   $L_i$ is disjoint with $K_{<i}$, since otherwise there would be some words 
 \begin{align*}
  \myunderbrace{w}{$K_{<i}$} \quad \higman \quad 
  \myunderbrace{v}{$L_i$},
  \end{align*}
  contradicting the definition of $K_{<i}$. Therefore  $K_{<i}$ is contained in
  \begin{align*}
     \myunderbrace{M}{contains $K_{<i}$} - \myunderbrace{\text{(downward closure of $L_i$)}}{disjoint with $K_{<i}$},
  \end{align*}
  and thus $K_{<i}$ is disjoint with $M'$. 
  \end{description}
\end{proof}

The Zigzag Lemma proves the equivalence of the conditions about infinite zigzags and piecewise testability in  Theorem~\ref{thm:piecewise}. To finish the proof of the Theorem, we show that the syntactic monoid of $L$ is finite and  infix trivial (which is the same as saying that some recognising monoid is  finite and infix trivial) if and only if there is no infinite zigzag between $L$ and its complement.

Suppose first  that the syntactic monoid of $L$  is either infinite or finite but not infix trivial. If the syntactic monoid is infinite, then the language cannot be piecewise testable, since piecewise testable languages are necessarily regular. Assume therefore that the syntactic monoid is  finite but not infix trivial.  This means that the syntactic monoid is either not prefix trivial, or not suffix trivial. By symmetry, we only consider the case where the syntactic monoid is not suffix trivial. This means that there exist $a,b$ in the syntactic monoid such that
\begin{align*}
(ab)^\momega  \neq b(ab)^\momega.
\end{align*}
By unravelling the definition of the syntactic monoid, the above disequality  can be easily used to create an infinite zigzag between $L$ and its complement. 

It remains to show that if the syntactic monoid of $L$ is finite and infix trivial,  then there is no infinite zigzag between $L$ and its complement. Let $M$ be the syntactic monoid. For  $a,b \in M$, define a zigzag between $a$ and $b$ to be a zigzag between the languages
\begin{align*}
\set{w \in M^* : \text{$w$ multiplies to $a$}} \qquad \set{w \in M^* : \text{$w$ multiplies to $b$}}.
\end{align*}
If $M$ recognises $L$, then a zigzag between $L$ and its complement can be used, by extraction, to obtain a zigzag between some two distinct monoid elements $a,b \in M$. The following lemma shows that this cannot happen, thus completing the proof of Theorem~\ref{thm:piecewise}.

\begin{lemma}
  Let $M$ be finite and infix trivial, and let $a,b \in M$. If there is an infinite zigzag between $a$ and $b$, then $a=b$.
\end{lemma}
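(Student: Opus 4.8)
The statement to prove: if $M$ is finite and infix trivial, $a,b \in M$, and there is an infinite zigzag between $a$ and $b$ (i.e., an infinite $\higman$-increasing sequence $w_1 \higman w_2 \higman \cdots$ over the alphabet $M$ with $w_1, w_3, \ldots$ multiplying to $a$ and $w_2, w_4, \ldots$ multiplying to $b$), then $a = b$. The plan is to extract enough structure from the zigzag to force $a$ and $b$ into the same infix class, and then use infix triviality to conclude equality. Here is how I would proceed.

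First I would pass to a convenient subsequence. Since the alphabet $M$ is finite, by Higman's Lemma (or just by pigeonhole on which letters occur) I can assume that all words $w_i$ use exactly the same set of letters, and that the multiset-support stabilizes in the sense that each letter occurring in $w_1$ occurs in every $w_i$. More importantly, since $w_i \higman w_{i+1}$, each $w_{i+1}$ contains $w_i$ as a scattered subword; writing $w_{i+1}$ as an interleaving of $w_i$ with some extra block $u_{i+1}$ of inserted letters, I get, by multiplying, a relation of the form (schematically) $m_{i+1} = x_{i+1}\, m_i\, y_{i+1}$ where $m_i$ is the value of $w_i$ and $x_{i+1}, y_{i+1}$ are products of pieces of $w_{i+1}$. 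This already shows that $m_i$ is an infix of $m_{i+1}$ for every $i$. Since $m_i$ alternates between $a$ and $b$, this gives: $a$ is an infix of $b$ and $b$ is an infix of $a$, i.e., $a$ and $b$ lie in the same infix class $J$.

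Now I would invoke the Egg-box Lemma together with infix triviality. Infix triviality means every infix class is a singleton; so the infix class $J$ containing both $a$ and $b$ is a singleton, hence $a = b$. In fact, the argument shows something slightly stronger: once I know $a$ and $b$ are infix-equivalent, the hypothesis that $M$ is infix trivial immediately collapses them. So the real content — and the main obstacle — is establishing that $a$ and $b$ are infix-equivalent from the zigzag, i.e., making rigorous the claim that $w_i \higman w_{i+1}$ implies $m_i$ is an infix of $m_{i+1}$.

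Let me sketch that claim carefully, since it is the crux. If $w \higman v$, fix a witnessing embedding of positions; the positions of $v$ in the image of the embedding, read in order, spell out $w$, and the complementary positions partition $v$ into (possibly empty) blocks $v_0, v_1, \ldots, v_{|w|}$ interleaved with the letters of $w = c_1 \cdots c_n$, so that $v = v_0\, c_1\, v_1\, c_2 \cdots c_n\, v_n$. Multiplying in $M$: the value of $v$ is $\pi(v_0)\cdot c_1 \cdot \pi(v_1) \cdots c_n \cdot \pi(v_n)$ where $\pi$ denotes multiplication (with empty blocks contributing the identity, using that $M$ is a monoid — and $L$ being a language over $\Sigma^*$ the empty word is available; if one insists on semigroups, a block may be empty but at least one of $x,y$ in the infix relation is allowed to be empty anyway). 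Grouping everything to the left of $c_1 \cdots c_n$ as $x$ and everything to the right as $y$ is not quite right since the inserted blocks are interspersed; but I only need $x c_1 \cdots c_n y$-type placement if the insertions are all at the ends. In general I would instead argue directly: the value of $w$ is $c_1 \cdots c_n = \pi(w)$, and I can write $\pi(v) = P \cdot \pi(w) \cdot S$ is false in general — so the cleaner route is: observe $\pi(v)$ is obtained from $\pi(w)$ by a sequence of single-letter insertions, and a single-letter insertion changes $\pi(w) = m$ into $m' = \alpha\, d\, \beta$ where $m = \alpha\beta$ and $d$ is the inserted letter; hence $m$ is an infix of $m'$ (take $x = \alpha d$... no — directly $m' = \alpha d \beta$ and $m = \alpha\beta$; then $\alpha$ is a prefix of $m'$... the correct statement is just: is $\alpha\beta$ an infix of $\alpha d\beta$? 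Yes when $\alpha$ or $\beta$ is empty; not obviously in general). The genuinely safe statement is the one I first gave via the full interleaving: $\pi(v) = \pi(v_0 c_1)\cdot \pi(c_2 \cdots c_{n-1}) \cdot \pi(c_n v_n)$ shows $\pi(c_2\cdots c_{n-1})$ is an infix of $\pi(v)$, which is the wrong middle. So I anticipate the main obstacle is precisely that "$w \higman v$" does *not* in general make $\pi(w)$ an infix of $\pi(v)$, and the proof must instead use the *infinitely many* insertions plus idempotent/Ramsey arguments (Exercise~\ref{ex:has-some-idempotent-ramsey} or the Factorisation Forest Theorem applied to a long stretch of insertions) to find, among the zigzag, two words whose values $a$ and $b$ sit in a common infix class through genuine left/right multiplications — after which infix triviality finishes. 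I would therefore structure the final proof as: (1) use Higman/pigeonhole to stabilize, (2) apply a Ramsey-type argument to the growing inserted material to locate three consecutive zigzag words with a clean $a = x b y$, $b = x' a y'$ pattern, (3) conclude $a,b$ are infix-equivalent, (4) invoke infix triviality to get $a = b$.
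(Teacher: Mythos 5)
You have correctly identified the crux: $w \higman v$ does not, in general, make $\pi(w)$ an infix of $\pi(v)$, because the inserted letters are interleaved throughout $v$ rather than gathered at the two ends. But your proposed repair --- a Ramsey argument locating a ``clean'' pattern $a = xby$, $b = x'ay'$, from which infix-equivalence and then equality (by infix triviality) would follow --- is not substantiated and, as stated, does not close the gap. The stabilisation you propose will indeed fix the letters $c_1,\ldots,c_n$ of $w_1$ and let you factor every later word as $u_0\,c_1\,u_1\,c_2\cdots c_n\,u_n$, and Ramsey extraction will make each gap-sequence settle into its own zigzag between some $a_i$ and $b_i$. But then you are back exactly where you started: the value of such a word has extra factors in the \emph{interior}, and you cannot read $a$ off as an infix of $b$ unless you already know that $a_i = b_i$ for every $i$. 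That is the lemma again, applied to smaller instances, and it forces an induction which your plan does not contain.

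The paper's proof takes a structurally different route. It never establishes that $a$ and $b$ are infix-equivalent and then appeals to singleton infix classes. Instead it performs exactly the extraction you describe, but organises it as an induction on the pair $(a,b)$ in the componentwise infix ordering. Each sub-pair $(a_i,b_i)$ is either strictly smaller, in which case the induction hypothesis gives $a_i = b_i$ and the corresponding gaps collapse, or it equals $(a,b)$, in which case one obtains an equation $a = cc'$ with $cb = b = bc'$ (and, symmetrically, $b = dd'$ with $da = a = ad'$). The endgame then uses the identity $(dc)^\momega (c'd')^\momega = (cd)^\momega (d'c')^\momega$, which is specific to infix-trivial semigroups and genuinely more than the bare fact that infix classes are singletons. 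Both the missing induction and this final algebraic manipulation are essential content that your outline does not supply; your step (4) is correct as far as it goes, but nothing in your plan actually delivers its hypothesis.
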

\begin{proof}
  The proof is by induction on  the infix ordering lifted to pairs:
  \begin{align*}
  (x,y)\preceq (a,b) \qquad \eqdef \qquad \text{$x$ is an infix of $a$ and $y$ is an infix of $b$}.
  \end{align*}
  The induction base is proved the same way as the induction step.
  Suppose that we have proved the lemma for all pairs $(x,y)\prec (a,b)$. 
 
  \begin{claim}\label{claim:zigzag-extraction}
    If there is an infinite zigzag between $a$ and $b$, then there exists $n \in \set{0,1,\ldots}$ and monoid elements $\set{a_i,b_i,c_i}_{i}$ such that
    \begin{align*}
      \begin{array}{cccccccccccc}
        a &=& &  c_1 & & c_2 & &  \cdots  & & c_n \\
        b &=&b_0 & c_1 & b_1 & c_2 & b_2 & \cdots & b_{n-1} & c_n & b_n\\
        a &=&a_0 & c_1 & a_1 & c_2 & a_2 & \cdots & a_{n-1} & c_n & a_n
      \end{array}
    \end{align*}
    and for every $i \in \set{0,\ldots,n}$ there is an infinite zigzag between $a_i$ and $b_i$.
  \end{claim}
  \begin{proof}
    Consider an infinite zigzag between $a$ and $b$ of the form
    \begin{align*}
    w_1 \higman w_2 \higman \cdots
    \end{align*}

    Let the letters in $w_1$ be $c_1,\ldots,c_n \in M$.   For  $j \ge 2$, define an \emph{important position} in $w_j$ to be any position that arises by starting in some position of  $w_1$, and then following the embeddings 
    \begin{align*}
    w_1 \higman w_2 \higman \cdots \higman w_j.
    \end{align*}
    By distinguishing the important positions in $w_j$, we get a factorisation
    \begin{align*}
    w_j =  \myunderbrace{w_{j,0}}{$M^*$}   c_1   \myunderbrace{w_{j,1}}{$M^*$}   c_2   \cdots c_{n-1} \myunderbrace{w_{j,n-1}}{$M^*$}   c_n  \myunderbrace{w_{j,n}}{$M^*$}.
    \end{align*}
    By definition of important positions, for every $i \in \set{0,\ldots,n}$  the following sequence is  growing with respect to embedding
    \begin{align*}
      w_{2,i} \higman w_{3,i} \higman \cdots.
      \end{align*}
    By extracting a subsequence, we can assume that for every $i \in \set{0,1\ldots,n}$, the above chain is a zigzag between $b_i$ and $a_i$, for some $b_i,a_i \in M$. This proves the conclusion of the claim.
  \end{proof}

  \begin{claim}\label{claim:zigzag-decomposition}
    If there is an infinite zigzag between $a$ and $b$, then either $a=b$, or there exist $c,c'  \in M$ such that $a = cc'$ and $cb=b=bc'$.
    \end{claim}
    \begin{proof}
      Apply Claim~\ref{claim:zigzag-extraction}, yielding monoid elements which satisfy the following equalities: 
      \begin{align*}
        \begin{array}{cccccccccccc}
          a &=& &  c_1 & & c_2 & &  \cdots  & & c_n \\
          b &=&b_0 & c_1 & b_1 & c_2 & b_2 & \cdots & b_{n-1} & c_n & b_n\\
          a &=&a_0 & c_1 & a_1 & c_2 & a_2 & \cdots & a_{n-1} & c_n & a_n
        \end{array}
      \end{align*}
      For every $i \in \set{0,\ldots,n}$, we can see that $(b_i,a_i) \preceq (b,a)$. If the inclusion is strict, then the induction assumption of the lemma yields $b_i=a_i$. Otherwise, the inclusion is not strict, and therefore 
      \begin{align*}
      (a_i,b_i) = (a,b).
      \end{align*}
      If the inclusion is strict for all $i$, then the third and second rows in the conclusion of Claim~\ref{claim:zigzag-extraction} are equal, thus proving $a=b$, and we are done. Otherwise, there is some $i \in \set{0,\ldots,n}$ such that $(b_i,a_i)=(b,a)$. By infix triviality, every interval in the second row that contains $i$ will have multiplication $b$. It follows that 
      \begin{align*}
       \myunderbrace{c_jb = b}{for all $j \le i$}  \qquad 
       \myunderbrace{bc_j = b}{for all $j > i$}  \qquad 
      \end{align*}
      It is now easy to see that the conclusion of the claim holds if we  define $c$ and $c'$ as follows:
      \begin{align*}
      a = \myunderbrace{c_1 \cdots c_i}{$c$} 
      \myunderbrace{c_{i+1} \cdots c_n}{$c'$}.
      \end{align*}
    \end{proof}
  
Apply the above claim, and a symmetric one with the roles of $a$ and $b$ swapped, yielding elements $c,c',d,d'$ such that
\begin{align}\label{eq:jtrivial-endgame}
a=cc' \quad cb=b=bc' \quad b=dd' \quad da=a=ad'.
\end{align}
We can now prove the conclusion of the lemma:
\begin{align*}
a \myoverbrace{=}{
  \eqref{eq:jtrivial-endgame}
} (dc)^\momega (c'd')^\momega 
\myunderbrace{=}{infix triviality}
(cd)^\momega (d'c')^\momega 
\myoverbrace{=}{
  \eqref{eq:jtrivial-endgame}
}  b.
\end{align*}
\end{proof}

\exercisehead

\mikexercise{Prove Higman's Lemma.}{}

\mikexercise{\label{ex:separate-piecewise} Give a polynomial time algorithm, which inputs two nondeterministic automata, and decides if their languages can be separated by a piecwise testable language.}{}

\mikexercise{Consider  $\omega$-words, i.e.~infinite words of the form \begin{align*}
  a_1 a_2 \cdots \qquad\text{where } a_1,a_2,\ldots \in \Sigma.
  \end{align*}
Embedding  naturally extends to $\omega$-words (in fact, any labelled orders).  Show that the embedding on $\omega$-words is also a well-quasi order, i.e.~every upward closed set is the upward closure of finitely many elements.
}{}

\section{Two-variable first-order logic}
\label{sec:fotwo}
We finish this chapter with one more monoid characterisation of a fragment of first-order logic. A corollary of the equivalence of first-order logic and \ltl (or of the equivalence of first-order logic and star-free expressions) is that, over finite words, first-order logic is equivalent to its three variable fragment. What about one or two variables?

First-order logic with one variable defines exactly the languages which are  Boolean combinations for sentences of the form $\exists x \ a(x)$. These languages are exactly the languages that are recognised by monoids that are aperiodic and commutative:
\begin{align*}
a^\momega = a^{\momega+1} \quad ab=ba \qquad \text{for all $a,b$.}
\end{align*}
The more interesting case is first-order logic with  two variables, which we denote by \fotwo. This logic  is characterised in the following theorem. 

\begin{theorem}\footnote{
  The class of monoids from  item 2 appears, under the name Df, in 
\incite[p.~47,]{Schtzenberger1976SurLP}
where it is used to characterise certain unambiguous regular expressions, see Exercise~\ref{ex:unambiguous-concatenation}. 
Subsequent articles use the name \dav, which we use here as well. The connection with two variable first-order logic, which is the content of the theorem, is from 
\incite[Theorem 4.]{TherienWilke98}
}
  For a language $L \subseteq \Sigma^*$, the following are equivalent:
  \begin{enumerate}
    % \item \label{fo2:ltl} Definable in \ltlunary;
    \item \label{fo2:fo2} Definable in two variable first-order logic;
    \item \label{fo2:da} Recognised by a finite monoid $M$ with the following property:  $M$ is  aperiodic, and if an infix class $J \subseteq M$ contains an idempotent, then $J$ is a sub-semigroup of $M$.  
  \end{enumerate}
\end{theorem}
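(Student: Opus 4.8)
The plan is to prove the equivalence in two directions, using the characterisation of first-order logic by \ltl as a stepping stone. For the direction \ref{fo2:fo2} $\Rightarrow$ \ref{fo2:da}, I would show that the syntactic monoid of any \fotwo-definable language satisfies the algebraic identity, using Ehrenfeucht--Fra\"iss\'e games with only two pebbles. The key combinatorial lemma is a pumping statement: if Duplicator wins the $k$-round two-pebble game, then for every $x,y \in M$ one has $(xy)^\momega x (xy)^\momega = (xy)^\momega$ and, more to the point, that an infix class containing an idempotent $e$ is closed under multiplication. Concretely, for the latter, if $e = (st)^\momega$ is idempotent and lies in the infix class of $s$ and $t$, one uses a two-pebble game argument to show that any product of elements each infix-equivalent to $e$ stays in that infix class. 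The idempotent children/smoothness machinery from Section~\ref{sec:fact-for} is the natural tool here: a smooth word over letters mapping into $J$ multiplies back into $J$, precisely because $J$ being regular (contains an idempotent) forces it to be a sub-semigroup by item~\ref{hclass:some-product} of the $\Hh$-class Lemma --- which is exactly the property we are trying to capture.

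For the harder direction \ref{fo2:da} $\Rightarrow$ \ref{fo2:fo2}, I would proceed by induction on the size of the monoid $M$ (or on its infix height), mimicking the structure of the proof in Section~\ref{sec:aperio-to-ltl} that built \ltl formulas from aperiodic monoids, but now producing \fotwo formulas --- equivalently, formulas in unary temporal logic \ltlunary (with operators $\finally$, $\finally^{-1}$ and their duals), exploiting the fact, which I would cite or quickly prove, that \ltlunary and \fotwo have the same expressive power over finite words. The strategy is: take a homomorphism $h : \Sigma^+ \to M$ with $M$ in \dav. If $M$ is a single infix class, the smoothness hypothesis and the fact that the $\Hh$-class is a group that must be trivial (aperiodicity) pin the multiplication down cheaply. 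Otherwise, pick a maximal proper infix class, cut the word at positions witnessing "descent" in the infix ordering, handle each smooth block by the base case, and recombine using the induction hypothesis on the proper subsemigroup $T$ of non-maximal elements. The non-commutativity is handled by allowing both $\finally$ and $\finally^{-1}$, which is what distinguishes this argument from the \ltlf case in Section~\ref{sec:finally-only}.

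The main obstacle, I expect, is the recombination step in the second direction: when we cut a word into smooth blocks and want to express "the product of all blocks up to here lies in such-and-such class", we must locate the cut positions using only two variables. This is where the restriction to \emph{unary} temporal operators bites --- one cannot simply nest an existential quantifier for each cut as in the full \ltl argument, so the induction has to be set up so that the decoration of the word with block-products is itself expressible by \ltlunary modalities applied to already-constructed formulas, and the composition closure properties (analogous to the "\ltl transductions compose" observation on page~\pageref{page:ltl-transduction}) must be re-established for the unary fragment. I would isolate this as a lemma: the class of \fotwo-definable colourings is closed under a suitably restricted notion of transduction, and a colouring obtained by "multiply each block-value with its successor block-value, where block boundaries are \fotwo-definable" is again \fotwo-definable. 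Verifying that block boundaries --- defined via the infix-descent condition --- are themselves \fotwo-definable is the technical heart, and I would reduce it to saying that "the infix from the previous descent point to here is smooth", which by Claim~\ref{claim:smooth} is a local (two-letter) condition and hence cheaply expressible.
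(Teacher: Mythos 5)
Your direction \ref{fo2:fo2} $\Rightarrow$ \ref{fo2:da} is in the same spirit as the paper: the paper passes through the equivalent Higman-embedding identity $w^\momega = w^\momega v w^\momega$ for $v \higman w$ (Lemma~\ref{lem:da-identity}) and then verifies it on the syntactic monoid by an \ef game; your version via $(ab)^\momega a (ab)^\momega = (ab)^\momega$ is the identity from Exercise~\ref{eq:da-identity-simplified}, and is essentially the same.

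Your direction \ref{fo2:da} $\Rightarrow$ \ref{fo2:fo2} has a genuine gap, and you have in fact put your finger on exactly the place where it breaks. You propose to cut the word at infix-descent points into smooth blocks, and then recombine by applying the induction to the sequence of block values over the smaller semigroup $T$; you isolate a lemma asserting that this block transduction (and in particular the colouring ``my block value times the next block's value'') is \fotwo-definable whenever the block boundaries are. But this lemma is false. Identifying ``the next block boundary after position $x$'' --- or even ``the previous descent point'' that you use when checking smoothness --- requires the successor relation on the set of boundaries, and \fotwo cannot define successor; indeed, if \fotwo could define successor it would coincide with full first-order logic, collapsing \dav into all aperiodic monoids. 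The \ltl argument in Section~\ref{sec:aperio-to-ltl} can afford to talk about consecutive blocks because the ``until'' operator can scan to the next occurrence of a pattern, but the unary operators $\finally, \finally^{-1}$ cannot. Reformulating the recombination as an ``\ltlunary transduction'' does not repair this: the obstruction is the same modality you removed.

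The paper avoids this entirely by never iterating over block boundaries. Its induction has \emph{three} parameters --- the size of $M$, and the number of elements that properly extend $a_1$ in the prefix ordering and $a_2$ in the suffix ordering --- and proves that the colouring $w \mapsto a_1\cdot\mu(w)\cdot a_2$ is \fotwo-definable. The key structural fact, proved using the \dav-identity, is that the \emph{prefix stabiliser} of $a_1$ (the set of $b$ with $a_1 b$ prefix-equivalent to $a_1$) is a proper sub-monoid $N$ whenever $a_1$ is not prefix-maximal. The word is then split into the left part (all positions before the first letter outside $N$), the single central position, and the right part. The left part lives in the smaller monoid $N$ (induction on parameter 1); the right part is recombined with the updated left boundary $a_1 c$, which is a strictly larger prefix extension of $a_1$ (induction on parameter 2). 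Only \emph{one} cut position is ever identified, namely ``the first position whose letter is not in $N$'', and this is easily \fotwo-definable. The deeper insight, which is worth internalising, is that the \dav-identity is precisely what makes the prefix stabiliser a sub-monoid --- and it is that sub-monoid, not a decomposition into smooth blocks, that drives the two-variable argument.
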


We use the name \dav for the monoids (more generally, finite semigroups) that satisfy the property in item~\ref{fo2:da}.
In the exercises, we add several other equivalent conditions for the above theorem, including the temporal logic \ltlunary and the following fragment of first-order logic:
\begin{align*}
\text{(definable by a $\exists^* \forall^*$-sentence)}
\quad \cap  \quad
\text{(definable by a $\forall^* \exists^*$-sentence)}.
\end{align*}

The rest of Section~\ref{sec:fotwo} is devoted to proving the theorem. We begin with an equational description of \dav, which uses the embedding ordering on words that featured prominently in the previous section. (A stronger equational description  is given in Exercise~\ref{eq:da-identity-simplified}.)

\begin{lemma}\label{lem:da-identity}
  A finite monoid $M$ is in \dav if and only if  
  it satisfies:
\begin{align*}
\myunderbrace{w^\momega = w^\momega v w^\momega}{same multiplication} \qquad \text{for all $w,v \in M^*$ with $v \higman w$.}
\end{align*}
  \end{lemma}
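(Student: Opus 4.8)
The plan is to prove both directions of the equivalence, passing through the structural description of \dav in terms of Green's relations. First I would record the easy observation that the stated identity quantifies over all pairs $v \higman w$ with $w,v \in M^*$; since multiplication in $M$ is determined by multiplication of the corresponding words, it suffices to check the identity for single elements $w \in M$ and words $v$ that embed into $w$, which in $M$ just means $v = v_1 \cdots v_k$ where each $v_j$ is one of the ``letters'' appearing in $w$ — but since $w$ itself is a single monoid element, the relevant instances are really $w^\momega = w^\momega v w^\momega$ whenever $v$ lies in the sub-monoid generated by the letters of some word multiplying to $w$. The cleanest route is to first prove that the identity \emph{implies} aperiodicity: taking $v = w$ (trivially $w \higman w$) gives $w^\momega = w^\momega w w^\momega = w^{\momega + 1 + \momega} = w^{\momega + 1}$ after absorbing the idempotent powers, which is exactly aperiodicity by the Idempotent Power Lemma.

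For the direction ``\dav $\Rightarrow$ identity'', I would argue as follows. Suppose $M$ is aperiodic and every infix class containing an idempotent is a sub-semigroup. Fix $w \in M$ and $v \higman w$ in the monoid sense; write $e = w^\momega$, an idempotent. The key claim is that $v$ is an infix of $e$: since $v \higman w$, every letter of (a word representing) $v$ is a letter of (a word representing) $w$, so $v$ is an infix of $w \cdot w \cdots w = w^{\momega}= e$ when we spell $e$ out as a long enough product of $w$'s. Hence $v$ lies in the infix class $J$ of $e$ or strictly below it. If $v$ is strictly below $e$ in the infix order, then I expect the absorbing behaviour to come from aperiodicity plus the structure; more robustly: $evev \cdots$ stays within the infix ideal generated by $e$, and the relevant element $eve$ is an infix of $e$ but also has $e$ as both a prefix and a suffix (since it begins and ends with $e$), so by the Egg-box Lemma $eve$ is in the same prefix and suffix class as $e$, i.e.\ in the $\Hh$-class of $e$. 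That $\Hh$-class contains the idempotent $e$, hence is a group by the $\Hh$-class Lemma, but aperiodicity forces it to be trivial (Exercise~\ref{ex:h-trivial}), so $eve = e$, which is $w^\momega v w^\momega = w^\momega$. The hypothesis ``$J$ is a sub-semigroup'' is what guarantees $eve \in J$ in the first place (it rules out $eve$ dropping to a lower infix class), so this step uses \dav essentially.

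For the converse ``identity $\Rightarrow$ \dav'', aperiodicity is already in hand. Let $J$ be an infix class containing an idempotent $e$; I must show $J$ is closed under multiplication, equivalently (by the $\Hh$-class Lemma, item~\ref{hclass:some-product-j} vs \ref{hclass:all-product}) that $J$ is a single $\Hh$-class which is a group — but a trivial group by aperiodicity, so really $J = \{e\}$. Take any $a \in J$. Since $a$ and $e$ are infix-equivalent, $a$ is an infix of $e = e^\momega$, and writing $e$ as a product of letters we can find a word $v$ with multiplication $a$ that embeds into a word with multiplication $e$; more directly, $a \higman e$ in the monoid sense because $a$'s letters occur among $e$'s letters (taking $e$ spelled as $a x a y \cdots$ via the infix witnesses). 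Applying the identity with $w = e$, $w^\momega = e$, we get $e = e a e$. On the other hand, since $a$ is prefix-equivalent and suffix-equivalent to $e$ (Egg-box, as $a,e$ are in the same infix class and infix-ordered both ways), $a = e a = a e$, so $a = eae = e$. Thus $J = \{e\}$, in particular a sub-semigroup, establishing \dav.

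The main obstacle I anticipate is making precise the passage ``$v \higman w$ in $M$'' — the statement writes $v,w \in M^*$, so I would want to be careful that the identity as written is equivalent to the element-wise statement ``$w^\momega = w^\momega v w^\momega$ whenever $v$ is an infix of (a power of) $w$'', and that the embedding hypothesis correctly translates to the letters-of-$v$-among-letters-of-$w$ condition that powers it into being an infix of the idempotent $w^\momega$. Once that translation is pinned down, both directions are short applications of the Egg-box Lemma, the $\Hh$-class Lemma, and Exercise~\ref{ex:h-trivial}.
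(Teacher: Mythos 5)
Your proof has genuine gaps in both directions.

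In the direction \dav\ implies the identity, the ``key claim'' that the multiplication of $v$ is an infix of $e=w^\momega$ is justified only by the remark that ``$v$ is an infix of $w\cdots w$ when we spell $e$ out as a long enough product of $w$'s.'' This does not follow: $v\higman w$ means the letters of $v$ occur in $w$ in order but possibly separated by other letters, so $v$ is a subword, not a contiguous factor, of $w$ or of any power of $w$. The passage from ``$v$ is a Higman subword of $w$'' to ``the multiplication of $v$ is an infix of $w^\momega$'' is exactly what must be \emph{proved} using \dav, and it fails for aperiodic monoids outside \dav\ (this is the point of the lemma). Your subsequent appeal to ``$J$ is a sub-semigroup'' to put $eve$ back into $J$ does not rescue the step, because $v$ itself need not lie in $J$, so closure of $J$ under multiplication does not apply to the product $e\cdot v\cdot e$. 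The paper instead establishes $ev\in J$ by a letter-by-letter argument, using at each step both the sub-semigroup property of $J$ and item~\ref{Egg-box:prefix-incomparable} of the Egg-box Lemma: first it shows $ea\in J$ for every letter $a$ appearing in $w$, then assembles these to get $ev\in J$, and only afterwards deduces $eve\in J$ and $eve=e$ by triviality of $\Hh$-classes.

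In the direction identity implies \dav, the extraction of $e=eae$ from the identity (via a word such as $xay$ multiplying to $e$ and the one-letter subword $a$) is sound and matches the paper's opening move. But you then claim that $a$ and $e$, being infix-equivalent, are prefix- and suffix-equivalent, so that $ea=a=ae$ and hence $a=eae=e$, i.e.\ $J=\{e\}$. This is wrong on two counts. Infix-equivalence does not imply prefix- or suffix-equivalence: item~\ref{Egg-box:prefix-incomparable} of the Egg-box Lemma only upgrades infix-equivalence to prefix-equivalence under the extra hypothesis that one of the two elements is already a prefix of the other, which is not given here. And the conclusion $J=\{e\}$ is false for \dav\ monoids: take $M=\{1,a,b\}$ with $a,b$ a left-zero band ($aa=a$, $bb=b$, $ab=a$, $ba=b$). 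This monoid is in \dav, yet the infix class $\{a,b\}$ of the idempotent $a$ has two elements and two distinct prefix classes, and with $e=a$ one has $eb=ab=a\neq b$, so your step $ea=a$ already fails. The correct goal is only that $J$ be closed under multiplication, which the paper obtains by taking arbitrary $a,b\in J$, writing $e=e\cdot e=(xay)(sbt)$ so that $a$ appears before $b$ in a word multiplying to $e$, applying the identity to the two-letter subword $ab$ to get $e=eabe$, and concluding that $ab$ and $e$ are infix-equivalent.
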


\begin{proof}
  We first prove that the identity implies that $M$ is  \dav. The identity clearly implies aperiodicity, by taking $w=v$. 
  Let $e$ be an idempotent. We need to show that if $a,b$ are infix equivalent to $e$, then the same is true for $ab$. Because $a,b$ are infixes of $e$, and $e$ is an idempotent, one can find a word $w$ in $S^+$ which multiplies to $e$ and contains both  $a$ and $b$. In particular, $ab \higman w$. By the identity in the lemma, we know that $e = eabe$, and therefore $ab$ is an infix of  $e$. 

  We now show that if $M$ is in \dav, then the identity is satisfied.  Let $v \higman w$ be as in the identity. Let $e$ be the multiplication of $w^\momega$, and let  $J$ be the infix class of $e$. This infix class is a monoid, by definition of \dav.  For every letter $a$ that appears in the word $w$, there is a suffix of $w^\momega w^\momega$ which begins with $a$ and has  multiplication in $J$. Let $a' \in J$ be the multiplication of this suffix. Since $J$ is a monoid, it follows that $ea' \in J$ and therefore also $ea \in J$. Since $ea \in J$ holds for every letter that appears in $w$, it follows that $ev \in J$, and therefore also $eve \in J$.  This means that $eve$ is in the $\Hh$-class of $e$, and therefore $e=eve$ by aperiodicity (which is part of the definition of \dav), thus establishing the identity.
\end{proof}

We now prove the theorem.

To prove the implication \ref{fo2:fo2}$\Rightarrow$\ref{fo2:da}, we show  that for every language definable in \fotwo, its syntactic monoid belongs to \dav.  By  Lemma~\ref{lem:da-identity} and unravelling  the definition of the syntactic monoid, it is enough to show that for every $w_1,w_2, v, w \in \Sigma^*$  and  $n \in \set{0,1,\ldots}$,  if $v \higman w$ then 
  the words
  \begin{align*}
  w_1w ^n w^n w_2 \qquad  w_1 w^n v w^n  w_2 
  \end{align*}
  satisfy the same \fotwo sentences of quantifier rank at most $n$. This is shown using 
  a simple \ef argument.

For the  implication  \ref{fo2:da}$\Rightarrow$\ref{fo2:fo2}, we use the following lemma. 
\begin{lemma}
  Let $M$ be a monoid in \dav, and let $a_1,a_2 \in M$. Then 
  \begin{align*}
  w \in M^* \qquad \mapsto \qquad 
  \myunderbrace{a_1 \cdot \text{(multiplication of $w$)}\cdot a_2}{$ \in M$}
  \end{align*}
  is a colouring definable in \fotwo, which means that for every $c \in M$, the inverse image of $c$ under the colouring is a language that is definable in \fotwo.
\end{lemma}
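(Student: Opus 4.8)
The plan is to prove this lemma by induction on the size of the monoid $M$, exactly mirroring the structure of the proof of Lemma~\ref{lem:wilke-homomorphism} (the $\ltl$ case for aperiodic semigroups), but now staying inside the two-variable world. The base case is when $M$ has size one, where the colouring is constant and hence trivially $\fotwo$-definable. For the induction step I would first reduce to the case where the letters actually generate $M$, by substituting each element of $M$ for a fresh letter; this does not change $\fotwo$-definability. Then, as in the $\ltl$ proof, I would analyse the maps $a \mapsto ca$ for $c$ ranging over the generators. If all such maps are the identity, then the multiplication of a word equals its last letter (for nonempty words), and $a_1 \cdot (\text{last letter}) \cdot a_2$ is plainly $\fotwo$-definable: one can say ``the last position has label $b$'' using only the two variables $x,y$. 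Otherwise fix a generator $c$ with $a \mapsto ca$ not the identity, let $T$ be the image of this map; by the same argument as in Lemma~\ref{lem:wilke-homomorphism}, $T$ is a proper subsemigroup of $M$ (here I would use aperiodicity, which holds for $\dav$, to rule out $a \mapsto ca$ being a nontrivial permutation).

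The heart of the argument is then to cut an input word $w$ into blocks at the occurrences of $c$, forming ``$c$-blocks'' and ``non-$c$-blocks'', and then merge consecutive pairs into ``red--black blocks'' whose multiplication lies in $T$, just as on the picture in Section~\ref{sec:aperio-to-ltl}. For each individual block, the multiplication is computable by a $\fotwo$-colouring via the induction hypothesis on a \emph{smaller alphabet} (the alphabet of each block omits $c$ or consists essentially of copies of $c$), and then the multiplication of the concatenation of all red--black blocks is computable via the induction hypothesis on the \emph{smaller semigroup} $T$. The constants $a_1, a_2$ get absorbed by prepending $a_1$ to the first block's contribution and appending $a_2$ to the last; the (at most two) leftover blocks not inside any red--black block are handled directly. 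The key subtlety — and the place I expect to do real work — is that $\fotwo$ has no word-to-word transduction machinery as flexible as the $\ltl$ transductions used by Wilke; instead I would phrase everything via relativised $\fotwo$ formulas: to say ``the infix strictly between consecutive $c$'s multiplies to $m$'' one writes a $\fotwo$ sentence relativised to the interval $(x,y)$ where $x,y$ are the two consecutive $c$-positions, and one must verify that $\fotwo$ is closed under this kind of relativisation and under composition along the block decomposition. Because $\fotwo$ is closed under the ``between two marked positions'' relativisation (this is an easy syntactic substitution, as only two variables are ever needed to point at the endpoints), the composition argument goes through.

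The main obstacle, therefore, is not the combinatorial block decomposition — that is copied almost verbatim from Lemma~\ref{lem:wilke-homomorphism} — but checking that each stage can be carried out \emph{within} $\fotwo$, i.e.\ that marking the occurrences of the distinguished generator $c$, describing the value of each block, and combining these values, all remain expressible with two variables. For this I would rely on the $\dav$ identity from Lemma~\ref{lem:da-identity}: the fact that idempotent-containing infix classes are subsemigroups is precisely what guarantees that the ``merge'' operations stay inside proper subsemigroups, so the induction parameter genuinely decreases, and $\fotwo$ never has to count or track unboundedly much information at once. Once the lemma is proved, applying it with $a_1 = a_2 = 1$ and substituting monoid elements for letters gives, for any homomorphism recognising $L$ into a $\dav$ monoid, a $\fotwo$ definition of $L$, completing the implication \ref{fo2:da}$\Rightarrow$\ref{fo2:fo2} and hence the theorem.
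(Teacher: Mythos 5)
Your plan transplants Wilke's block decomposition (cut at every occurrence of a generator $c$ for which $a \mapsto ca$ is not the identity, merge into red--black blocks, recurse on the sub-semigroup $T = cS$) from the \ltl\ world into \fotwo. This has a structural problem that the appeal to the \dav\ identity does not fix: Wilke's decomposition is available for \emph{every} finite aperiodic monoid (aperiodicity alone gives that $a \mapsto ca$ is either the identity or non-surjective, and that $T$ is a proper sub-semigroup), and it produces \ltl, i.e.\ full first-order logic. If that decomposition could be carried out entirely within \fotwo, it would prove that every aperiodic monoid recognises only \fotwo-definable languages, which is false (\dav\ is a strict subclass of the aperiodic monoids). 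So somewhere the construction must leave \fotwo, and the place it leaves is exactly the step you flag but do not resolve: expressing the block values.

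Concretely, to evaluate the multiplication of a single block you propose ``a \fotwo\ sentence relativised to the interval $(x,y)$ where $x,y$ are the two consecutive $c$-positions.'' But once both variables $x$ and $y$ are spent marking the endpoints, the relativised formula has no variable left to quantify with. Even the relativisation guard ``$z$ lies strictly between two consecutive $c$-positions'' (equivalently, ``there is no $c$ strictly between $x$ and $z$'') is the $c$-successor relation, an until-type property that already needs a third variable. This is exactly the expressive boundary between \ltlunary\ (equivalently \fotwo) and \ltl\ (equivalently \fo): the strict-until operator, which is what implements ``no $c$ until the next block boundary,'' is not available. So the block decomposition cannot even be \emph{described} in \fotwo, let alone iterated over unboundedly many blocks. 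Your remark that \dav\ keeps the merges inside proper sub-semigroups is true but beside the point --- properness of $T$ is already a consequence of aperiodicity, and says nothing about variable budget.

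The paper's proof takes a genuinely different route precisely to avoid this. It inducts not just on $|M|$ but on the triple $\bigl(|M|,\, \text{number of proper prefix-extensions of } a_1,\, \text{number of proper suffix-extensions of } a_2\bigr)$. The decomposition of $w$ has a \emph{single} cut: the first position at which the running product leaves the prefix stabiliser $N = \{b : a_1 b \text{ is prefix-equivalent to } a_1\}$, which is a proper sub-monoid thanks to the \dav\ identity (this is where \dav\ is used, and it is used for a different purpose than in your sketch). To the left of the cut all labels are in $N$, so the smaller-monoid induction applies; to the right of the cut the parameter $a_1$ gets replaced by $a_1 c$, which is a strict prefix-extension of $a_1$, so the second induction parameter drops. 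A single distinguished cut is \fotwo-friendly because ``first position with property $P$'' is a one-variable concept; the unbounded iteration of cuts is what kills the two-variable budget in your approach.
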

If we apply the above lemma to $a_1$ and $a_2$ being the monoid identity, we conclude that the multiplication operation  is   definable in \fotwo. This implies that every language recognised by the monoid  is definable in \fotwo, thus proving the implication \ref{fo2:fo2} $\Leftarrow$ \ref{fo2:da} in the theorem. It remains to prove the lemma.
\begin{proof}
 Induction on the following parameters, ordered lexicographically:
 \begin{enumerate}
   \item size of $M$;
   \item number of elements that properly extend $a_1$ in the prefix ordering;
   \item number of elements that properly extend $a_2$ in the suffix ordering.
 \end{enumerate}
 The induction base is when $M$ has one element, in which case the colouring in the lemma is constant, and therefore definable in \fotwo. 

 Let us also prove another variant of the induction base, namely when the induction parameters (2) and (3) are zero, which means that $a_1$ is maximal in the prefix ordering and $a_2$ is maximal in the suffix ordering.  It follows that
 \begin{align*}
 \text{$\Hh$-class of $a_1aa_2$} \ =  \ \text{$\Hh$-class of $a_1ba_2$} \qquad \text{for all $a,b \in M$}.
 \end{align*}
 Since \dav implies aperiodicity, which implies $\Hh$-triviality,  the colouring in the statement of the lemma is constant, and therefore definable in \fotwo.

It remains to prove the induction step. Because of the two kinds of induction base that were considered above, we can assume that one of the parameters (2) or (3) is nonzero. By symmetry, assume that $a_1$ is not maximal in the prefix ordering.

  \begin{claim} For every $a \in M$, the following is a sub-monoid of $M$: 
    \begin{align*}
      \myunderbrace{\set{b \in M : a b \text{ is prefix equivalent to $a$}}}{we call this set the prefix stabiliser of $a$}.
    \end{align*}
    \end{claim}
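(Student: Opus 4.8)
The plan is to verify the two submonoid axioms. Containment of the identity is immediate, since $a\cdot 1 = a$ is prefix equivalent to $a$. The real content, and the step I expect to be the obstacle, is closure under multiplication: the statement is \emph{false} for general finite monoids (already the full transformation monoid on four points provides a counterexample, where the prefix stabiliser of a suitable rank-three map is not closed under composition), so the argument must genuinely use that $M$ lies in \dav, and the natural way to do that is through the embedding identity of Lemma~\ref{lem:da-identity}.

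So I would take $b, b'$ in the prefix stabiliser of $a$ and show $bb'$ is too. Since $abb' = a\cdot(bb')$, the element $a$ is automatically a prefix of $abb'$, so it is enough to exhibit an $x$ with $abb'x = a$. From the hypotheses choose (possibly empty) $u, v$ with $abu = a$ and $ab'v = a$, and put $p := bub'v$. Then $ap = (abu)(b'v) = a(b'v) = a$, hence $ap^{\momega} = a$; set $f := p^{\momega}$, an idempotent of $M$ fixing $a$ on the right.

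The key move is to apply Lemma~\ref{lem:da-identity} to the words $W = (b,u,b',v)$ and $V = (b,b')$ over the alphabet $M$. Their products are $p$ and $bb'$ respectively, and $V \higman W$ because the letter $b$ occurs in position $1$ of $W$ and $b'$ in position $3$. As $M \in \dav$, the lemma gives $p^{\momega} = p^{\momega}(bb')p^{\momega}$, i.e.\ $f(bb')f = f$. Now, using $af = a$ twice, $abb'f = (af)(bb')f = a\bigl(f(bb')f\bigr) = af = a$, so $x := f$ witnesses that $abb'$ is a prefix of $a$; combined with the trivial reverse direction, $abb'$ is prefix equivalent to $a$, so $bb'$ lies in the prefix stabiliser. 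The only points to watch in the write-up are that $u$ or $v$ may be the monoid identity (harmless, since a word over $M$ may use $1$ as a letter) and the mild clash between the name "$w$" in Lemma~\ref{lem:da-identity} and the elements being manipulated, which I would resolve by renaming.
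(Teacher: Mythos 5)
Your proof is correct and is essentially the paper's proof, just spelled out in more detail: the paper also constructs a word $w$ (your $W=(b,u,b',v)$) with $bc\higman w$ and $aw=a$, and then applies Lemma~\ref{lem:da-identity} to get $a = aw^\momega = aw^\momega bc\,w^\momega = abc\,w^\momega$. Your explicit naming of $p$ and the idempotent $f=p^\momega$ is a clean presentational choice, but the decomposition, the key lemma, and the calculation are the same.
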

\begin{proof}   The prefix stabiliser clearly   contains the monoid identity. It remains to show that it is closed under multiplication. Let $b,c$ be in the prefix stabiliser of $a$.  Using the definition of the prefix stabiliser, it is easy to construct a word  $w \in M^*$, such that $bc \higman w$ and   $aw = a$.  By Lemma~\ref{lem:da-identity}, it follows that 
  \begin{align*}
  a = aw = aw^\momega = aw^\momega bc w^\momega = abc w^\momega,
  \end{align*}
  which establishes that $bc$ is in the prefix stabiliser of $a$.
\end{proof}

Let $N \subseteq M$ be the prefix stabiliser of $a_1$; our assumption says that $N$ is a proper subset of $M$, and by the above claim it is also a sub-monoid.
  We decompose a word  $w \in M^*$ into three parts, as explained in the following picture: 
  \mypic{37} There is an \fotwo formula which selects the central position. Since  all labels in the left part are from $N$,   we can use the 
   induction assumption on a smaller monoid to prove that  the colouring
  \begin{align*}
  w \qquad \mapsto \qquad  \text{multiplication of left part}
  \end{align*}
  is definable in \fotwo. (When using the induction assumption, we restrict all quantifiers of the formulas from the induction assumption so that they quantify over positions in the left part.)  Let $c$ be the multiplication of the prefix up to and including the central position; as we have shown above, this multiplication can be computed in \fotwo. By definition of the central position, we know that $a_1$ is a proper prefix of $a_1c$, and therefore we can use the induction assumption to prove that 
  \begin{align*}
  w \qquad \mapsto \qquad a_1 c \cdot \text{(multiplication of right part)} \cdot a_2
  \end{align*}
  is a colouring definable in \fotwo.  The conclusion of the lemma follows.
\end{proof}

\exercisehead

% \mikexercise{Find the identities on the syntactic monoid that characterise languages definable in one-variable first-order logic.}{
%   Aperiodicity and commutativity.
% \begin{align*}
% a^\momega = a^{\momega+1} \qquad ab=ba.
% \end{align*}
% }

\mikexercise{\label{eq:da-identity-simplified} Show that a  monoid belongs to \dav if and only if it satisfies the identity 
\begin{align*}
  (ab)^\momega = (ab)^\momega a (ab)^\momega \qquad \text{for all $a,b$.}
  \end{align*}
}{}
\mikexercise{Show that \fotwo has the same expressive power as \ltlunary, which is the extension of \ltlf with the following past operator:
\begin{align*}
  w, x \models \finally^{-1} \varphi \quad \eqdef \quad \exists y\ y < x \ \land \  w,y \models \varphi.
\end{align*}
}{}

\mikexercise{\label{ex:sigma-two} Define the \emph{syntactic ordering} on the syntactic monoid, which depends on the accepting set $F$, as follows:
\begin{align*}
a \le b \quad \eqdef \quad \forall x,y\in M\ xay \in F \Rightarrow xby \in F.
\end{align*}
Show that a language can be defined by a first-order sentence of the form 
\begin{align*}
\underbrace{\exists x_1  \cdots \exists x_n \forall y_1  \cdots \forall y_m \quad \overbrace{\varphi(x_1,\ldots,x_n,y_1,\ldots,y_m)}^{\text{quantifier-free}}}
_{\text{such a formula is called an $\exists^*\forall^*$-sentence}}
\end{align*}
if and only if 
\begin{align*}
w^\momega \le w^\momega v w^\momega \qquad \text{for all }\myunderbrace{v \higman w}{Higman ordering}
\end{align*}
Hint\footnote{
  An effective characterisation of $\exists^*\forall^*$-sentence was first given in 
  \incite[Theorem 3.]{arfi87}
The proof was simplified in 
\incite[Theorem 5.8]{PinWeil97a}
  The solution which uses Exercise~\ref{ex:typed-regular-expressions} is based on~\cite{PinWeil97a}. Characterisations of fragments 
of first-order logic such as $\exists^*\forall^*$ are widely studied, see 
\incite{placeZeitoun2019}
}: use Exercise~\ref{ex:typed-regular-expressions}.
}{}

\mikexercise{Show that $L$  is definable in \fotwo if and only both $L$ and its complement can be defined using $\exists^* \forall^*$-sentences. }{
Both $L$ and its complement have the same syntactic monoid, and the syntactic orderings are opposites of each other. Therefore, by Exercise~\ref{ex:sigma-two}, we see that $L$ and its complement can be defined by $\exists^* \forall^*$-sentences if and only if the syntactic monoid satisfies
\begin{align*}
  w^\momega = w^\momega v w^\momega \qquad \text{for all }\myunderbrace{v \higman w}{Higman ordering}.
  \end{align*}
We will show that the above identity is equivalent to \dav. }

\mikexercise{\label{ex:unambiguous-concatenation}We say that a regular expression 
\begin{align*}
\Sigma_0^* a_1 \Sigma_1^* \cdots \Sigma_{n-1}^* a_n \Sigma_n^*
\end{align*}
is \emph{unambiguous} if every word $w$ admits at most one factorisation
\begin{align*}
w = w_0 a_1 w_1 \cdots w_{n-1}a_n w_n \qquad \text{where $w_i \in \Sigma_i^*$ for all $i \in \set{1,\ldots,n}$.}
\end{align*}
Show that a language is a finite disjoint union of unambiguous expressions if and only if its  syntactic monoid of $L$ is in \dav\footnote{This exercise is based on
\incite{Schtzenberger1976SurLP}}.
}{}

 \chapter{Infinite words}

In this chapter, we study infinite words. 

    In Section~\ref{sec:buchi-determinisation},  we begin with the classical model of infinite words, namely $\omega$-words. In an $\omega$-word,
     the positions are ordered like the natural numbers. We show how the structure of finite semigroups described by Green's relations can be applied to prove  McNaughton's Theorem about  determinisation of $\omega$-automata. 

In Section~\ref{sec:countable-words}, we move to more general infinite words, where the positions can be any countable linear order, e.g.~the rational numbers. For this kind of infinite words, we define a suitable generalisation of semigroups, and show that it has the same expressive power as monadic second-order logic.

\section{Determinisation of \buchi automata for \texorpdfstring{$\omega$-words}{ω-words}}
\label{sec:buchi-determinisation}
An \emph{$\omega$-word} is defined to be a function from the natural numbers to some alphabet $\Sigma$. We write $\Sigma^\omega$ for the set of all $\omega$-words over  alphabet $\Sigma$. 
To recognise properties of $\omega$-words, we use \buchi automata. These have the same syntax as nondeterministic automata on finite words, but they are used to accept or reject $\omega$-words.

\begin{definition}
    [\buchi automata] The syntax of a \emph{nondeterministic \buchi automaton} is the same as the syntax of a nondeterministic finite automaton for finite words, namely it consists of:
    \begin{align*}
    \myunderbrace{Q}{states} \qquad
    \myunderbrace{{\color{white}Q}\Sigma{\color{white}Q}}{input\\ \scriptsize alphabet} \qquad
    \myunderbrace{I,F \subseteq Q}{inital and\\ \scriptsize final states} \quad 
    \myunderbrace{\delta \subseteq Q \times \Sigma \times Q}{transition relation}.
    \end{align*}
    An $\omega$-word over the input alphabet is accepted by the automaton if there exists a run which begins in an initial state, and which satisfies the \emph{\buchi condition}: some accepting state appears infinitely often in the run.
    A \emph{deterministic  \buchi automaton} is the special case when 
     there is one initial state, and the transition relation is a function from $Q \times \Sigma$ to $Q$.
\end{definition}

The literature on automata for $\omega$-words has other acceptance conditions, which will not be used in this book. One example is the \emph{Muller condition}, where the accepting set is a family of subsets of states, and a run is accepting if the set of states used infinitely often is a subset that belongs to the accepting family. Another example is the \emph{parity condition}: there is a linear order on the states, and a subset of accepting states, and a run is accepting if the maximal state used infinitely often is accepting. 

The following example shows that deterministic \buchi automata are weaker than than nondeterministic ones. 
\begin{myexample}
    Consider the language of $\omega$-words over alphabet $\set{a,b}$ where letter $a$ appears finitely often. This language is recognised by a nondeterministic \buchi automaton as in the following picture:
\mypic{38}
    The idea is that the automaton nondeterministically guesses some position which will not be followed by any $a$ letters; this guess corresponds to the horizontal transition with label $b$ in the picture.

    This  language is not recognised by any deterministic \buchi automaton.  Toward a contradiction, imagine a hypothetical deterministic \buchi automaton which recognises the language. Run this automaton on $b^\omega$. Since $a$ appears finitely often in this $\omega$-word, the corresponding run (unique by determinism) must use an accepting state in some finite prefix. Extend that finite prefix by appending $ab^\omega$. Again, the word must be accepted, so an accepting  state must be eventually visited after the first $a$. By repeating this argument, we get a word which has infinitely many $a$'s and where the (unique) run of the deterministic automaton sees accepting states infinitely often; a contradiction. 
\end{myexample}

The above shows that languages recognised by deterministic \buchi automata are not closed under Boolean combinations. This turns out to be the only limitation of the model, as shown in the following theorem. 
\begin{theorem} \label{thm:buchi-determinisation} The following formalisms describe the same languages of $\omega$-words:
    \begin{itemize}
        \item nondeterministic B\"uchi automata;
        \item Boolean combinations of deterministic B\"uchi automata\footnote{A  Boolean combination of deterministic \buchi automata is the same thing as a deterministic automaton with the Muller condition. Therefore, the  theorem is the  same McNaughton's Theorem,
        \incite[p.~524]{McNaughton66}
        which says that nondeterministic \buchi automata can be determinised  into deterministic Muller automata.
        }.
    \end{itemize}
\end{theorem}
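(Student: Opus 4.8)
The plan is to prove the two inclusions separately. The direction \emph{Boolean combinations of deterministic \buchi automata $\Rightarrow$ nondeterministic \buchi automata} is the routine half: nondeterministic \buchi automata are effectively closed under union (disjoint union of automata), under intersection (a product construction with a counter that cycles through the two accepting sets), and — the only point requiring thought — under complementation. For complementation one can quote the standard argument, but since the chapter emphasises the algebraic toolkit, I would instead route it through a finite semigroup: from a nondeterministic \buchi automaton with state set $Q$ build the homomorphism $h:\Sigma^+\to S$ into the transition monoid enriched with \buchi information (for each word $w$, record for each pair $(p,q)$ whether there is a run $p\to q$, and whether there is such a run visiting an accepting state). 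Using the Idempotent Power Lemma (Lemma~\ref{lem:idempotent-lemma}) and a Ramsey argument as in Exercise~\ref{ex:has-some-idempotent-ramsey}, every $\omega$-word factors as $uv^\omega$ with $h(u),h(v)$ fixed and $h(v)$ idempotent, so acceptance of an $\omega$-word depends only on finitely many ``linked pairs'' $(s,e)$ with $se=s$, $ee=e$; this makes the complement a finite union of sets $WV^\omega$ with $W=h^{-1}(s)$, $V=h^{-1}(e)$, each of which is clearly \buchi-recognisable, hence so is the complement.

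The substantive half is \emph{nondeterministic \buchi $\Rightarrow$ Boolean combination of deterministic \buchi}. Here the plan is to exploit the factorisation just described together with the Factorisation Forest Theorem (Theorem~\ref{thm:simon-factfor}) or the Ramsey-type Exercise~\ref{ex:has-some-idempotent-ramsey}. Fix the homomorphism $h:\Sigma^+\to S$ above. For each linked pair $(s,e)$ let $L_{s,e}=h^{-1}(s)\cdot(h^{-1}(e))^\omega$ restricted to $\omega$-words; by the Ramsey argument the language $L$ of the automaton is a finite union of such $L_{s,e}$, and it suffices to express each $L_{s,e}$ as a Boolean combination of deterministic \buchi languages. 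The key observation is that ``$h(w\!\restriction\![i,j)) = e$ for infinitely many cuts $i$, consistent with an accepting run'' can be detected by running a \emph{deterministic} automaton that tracks the value in $S$ since the last ``checkpoint'' and resets; combined with a deterministic automaton that verifies the prefix lies in $h^{-1}(s)$ and a deterministic automaton (of the kind in the ``$a$ appears finitely often'' example, whose \emph{complement} is deterministic \buchi) that ensures the checkpoint structure is respected, one writes $L_{s,e}$ as a positive/negative Boolean combination of finitely many deterministic \buchi conditions. The careful bookkeeping is to show that the relevant conditions — ``infinitely many $e$-cuts form an accepting chain'' versus ``only finitely many do'' — really are each (or each a complement of) a deterministic \buchi language; this uses that a deterministic \buchi automaton can compute, online, the $S$-value of the current infix since its last reset.

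The main obstacle I expect is precisely this last step: a naive deterministic simulation of a nondeterministic \buchi automaton fails (as the example in the excerpt shows, determinism genuinely loses power), so one cannot simply determinise and must instead decompose the acceptance condition into a Muller-style condition — ``the set of infinitely-visited \emph{semigroup profiles} is exactly $\{e\}$'' — and only then observe that a Muller condition is a Boolean combination of \buchi conditions of the form ``profile $e$ is visited infinitely often'', each of which \emph{is} deterministically \buchi-checkable. Making the reduction from the nondeterministic automaton to this profile-based Muller condition fully rigorous — in particular verifying that acceptance is preserved in both directions via the $uv^\omega$ normal form — is where the real work lies; the rest is the closure-property boilerplate of the first paragraph.
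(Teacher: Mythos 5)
Your overall shape — form the transition semigroup with accepting-state information, use a Ramsey-type argument to get an $(a,e)$ factorisation, then decompose acceptance into a Muller-style condition on pairs appearing infinitely often, each of which is handled by a deterministic \buchi automaton or its complement — is the right one and matches the paper's strategy. But there is a genuine gap precisely at the point you flag as ``the real work'': your stated Muller condition, ``the set of infinitely-visited semigroup profiles is exactly $\set e$'', is not correct, and the correct condition is not just a matter of bookkeeping. The paper's condition (Lemma~\ref{lem:det-buchi-char}) is: there exist $a,e$ with $e$ idempotent, $ae=a$, $ae^\omega \in L$, such that $(a,e)$ appears infinitely often \emph{and} for every pair $(b,c)$ that appears infinitely often, $c$ is an \emph{infix} of $e$. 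The infix preorder is essential here: you cannot expect only the single value $e$ to recur (anything between two good $e$-cuts will be some other infix of $e$), and you cannot simply drop the maximality requirement on $e$ (otherwise the characterisation is unsound — the stuff sitting between your checkpoints could push the $\omega$-type outside the intended $\Jj$-class). Proving that this condition \emph{suffices} for acceptance uses Green's relations in an essential way: one shows that the elements $g_i = e a_i e$ eventually all lie in the $\Hh$-class of $e$, which contains an idempotent and is therefore a group by the $\Hh$-class Lemma, and then the group structure forces the $\omega$-type to collapse to $ae^\omega$. Nothing in your plan invokes the infix ordering or the $\Hh$-class Lemma, and without them the ``verifying that acceptance is preserved in both directions'' step you defer will not go through.

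Two smaller points. First, the construction of a deterministic \buchi automaton for ``$(a,e)$ appears infinitely often'' is itself not entirely routine: you cannot just track $h$ of the prefix read so far (that state sequence carries no useful recurrence), you need a reset mechanism that restarts a finite deterministic automaton for a carefully chosen regular language after each checkpoint. It deserves an explicit lemma rather than being folded into the phrase ``deterministically \buchi-checkable''. Second, in the easy direction you are over-engineering: one only needs to complement a \emph{deterministic} \buchi automaton, and for that the automaton simply guesses the point after which accepting states of the original automaton no longer appear — the semigroup-based complementation of a general nondeterministic \buchi automaton that you sketch is a genuine theorem (\buchi's), but it is both harder and unnecessary here.
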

A language is called \emph{$\omega$-regular} if it satisfies either of the two equivalent conditions in the above theorem. The $\omega$-regular languages are closed under Boolean combination thanks to the deterministic characterisation.   The original application of \buchi automata was \buchi's proof\footcitelong{Buchi62} that they recognise exactly the same languages of $\omega$-words as monadic second-order logic; this application is a simple corollary of Theorem~\ref{thm:buchi-determinisation}, see Exercise~\ref{ex:buchi-mso}.

The easier bottom-up implication in Theorem~\ref{thm:buchi-determinisation} follows from the following lemma. 
\begin{lemma}
    Languages recognised by nondeterministic \buchi automata are closed under union and intersection, and contain all languages recognised by  deterministic \buchi automata and their complements.
\end{lemma}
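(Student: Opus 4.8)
The plan is to establish the lemma by handling each of its three claims in turn, all via direct constructions on nondeterministic B\"uchi automata. First I would observe that every deterministic B\"uchi automaton \emph{is} a nondeterministic B\"uchi automaton, so the containment of deterministic languages is immediate. The substantive part is closure under union, intersection, and complementation of deterministic B\"uchi languages (the latter combined with intersections/unions will give all Boolean combinations of deterministic automata, as needed in Theorem~\ref{thm:buchi-determinisation}).

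For \textbf{union}, the construction is the standard disjoint-sum automaton: given B\"uchi automata $\mathcal A_1, \mathcal A_2$ with state sets $Q_1, Q_2$, take $Q = Q_1 \sqcup Q_2$, initial states $I_1 \cup I_2$, accepting states $F_1 \cup F_2$, and the union of the transition relations. A run of the new automaton is a run of exactly one $\mathcal A_i$, and it is B\"uchi-accepting iff that run is; hence the language is $L_1 \cup L_2$. For \textbf{intersection}, I would use the standard product construction with a ``flag'' component that cycles through $\{1,2\}$: states are $Q_1 \times Q_2 \times \{1,2\}$, and the flag switches from $1$ to $2$ when an $F_1$-state is seen in the first component, and from $2$ back to $1$ when an $F_2$-state is seen in the second component; accepting states are those with (say) flag value just after a switch from $2$ to $1$. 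An infinite run visits such states infinitely often iff both $F_1$ and $F_2$ are visited infinitely often, so the accepted language is $L_1 \cap L_2$. Both of these work for arbitrary nondeterministic B\"uchi automata, not just deterministic ones, which is all the lemma asks.

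For \textbf{complementation of a deterministic B\"uchi automaton}, this is where the only real work lies, since nondeterministic B\"uchi languages are famously \emph{not} closed under complement in general; determinism of the input is essential. Let $\mathcal A$ be a deterministic B\"uchi automaton with transition function $\delta$, single initial state $q_0$, and accepting set $F$. An $\omega$-word $w$ is \emph{rejected} iff its unique run $\rho_w$ visits $F$ only finitely often, i.e.\ iff there is a position $n$ after which $\rho_w$ stays forever inside $Q \setminus F$. I would build a nondeterministic B\"uchi automaton for the complement that operates in two phases: in the first phase it deterministically simulates $\mathcal A$ (reading $w$ and tracking the state of $\rho_w$), then at some nondeterministically chosen position it ``commits'' by moving to a copy of $\mathcal A$ restricted to the states $Q \setminus F$ (deleting all transitions into $F$), where every state is accepting. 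Such a run exists iff from some point on the unique run of $\mathcal A$ avoids $F$, i.e.\ iff $w \notin L(\mathcal A)$. The key point that makes this sound --- and where the argument would break for a nondeterministic $\mathcal A$ --- is that $\mathcal A$ has a \emph{unique} run on each input, so ``the run avoids $F$ eventually'' is correctly captured by a single guessed commitment point; there is no need to reason about all runs simultaneously.

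The main obstacle is really just getting the complementation construction stated cleanly and verifying its correctness in both directions (a rejected word yields an accepting run of the complement automaton via its eventual-avoidance witness; conversely an accepting run of the complement automaton forces the unique run of $\mathcal A$ to avoid $F$ eventually). Everything else --- the union and intersection constructions, and the trivial inclusion of deterministic automata --- is routine. Once the lemma is in hand, the easy ``bottom-up'' direction of Theorem~\ref{thm:buchi-determinisation} follows: a Boolean combination of deterministic B\"uchi languages is built from deterministic B\"uchi languages and their complements using finite unions and intersections, and by the lemma each such language is recognised by a nondeterministic B\"uchi automaton.
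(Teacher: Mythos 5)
Your proposal is correct and follows essentially the same three-part plan as the paper: union by disjoint sum, intersection by a two-phase product construction that switches phases on seeing an accepting state in the active component, and complementation of a deterministic automaton by nondeterministically guessing a commitment point after which the (unique) run stays in $Q\setminus F$ and simulating in the restricted sub-automaton. The only cosmetic difference is that you track the phase with a $\{1,2\}$ flag while the paper speaks of ``two copies'' of the product; the underlying construction is identical.
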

\begin{proof}
    Closure under union is immediate for nondeterministic automata. Consider now the intersection of two nondeterministic \buchi automata $\Aa$ and $\Bb$. A nondeterministic \buchi automaton $\Cc$ for the intersection is defined as follows. Take two copies of the product automaton $\Aa \times \Bb$. The accepting states are
    \begin{align*}
    \myunderbrace{(\text{accepting state}, \text{any state})}{first copy}
    \quad \cup \quad 
    \myunderbrace{(\text{any state}, \text{accepting state})}{second copy}
    \end{align*}
    Whenever $\Cc$ sees an accepting state as described above, it switches to the other copy. An accepting run of $\Cc$  must see accepting states of both copies infinitely often, and hence it recognises the intersection of the languages of $\Aa$ and $\Bb$.

    Since deterministic \buchi automata are a special case of nondeterministic ones, it remains to show that complements of deterministic \buchi automata can be simulated by nondeterministic \buchi automata.    The complement of the  language of a deterministic \buchi automaton consists of those words where final states are seen finitely often in the unique run. This can be checked by a nondeterministic \buchi automaton, which nondeterministically guesses the moment where accepting states of the original automaton will no longer be seen. Here is an example. Suppose that we want to complement the deterministic \buchi automaton
    \mypic{109}
    which checks that infinitely often, the number of $a$'s is equal to the number of $b$'s modulo 3. The nondeterministic \buchi automaton for the complement looks like this:
    \mypic{108}
    In the  simulating nondeterministic automaton, the initial states are inherited in the first copy, and the accepting states are all states in the second copy (which correspond to non-accept states in the original deterministic automaton). The above picture uses $\varepsilon$-transitions, which can be easily eliminated.
\end{proof}

We are left with the harder top-down implication in the theorem, which says that every nondeterministic \buchi automaton can be simulated by a Boolean combination of deterministic \buchi automata. 
There are several combinatorial  proofs for the determinisation result in harder implication\footnote{Apart from McNaughton's original proof from~\cite{McNaughton66}, another well-known construction is given in 
\incite[Theorem 1.]{Safra88}
Another approach, which is based on a construction of Muller and Schupp, is described in
\incite[Section 1.]{bojanczyk_automata_2018}
}. In this section, we present 
an algebraic proof, which leverages the  structural theory of finite semigroups described earlier in this book.

Let $\Aa$ be a nondeterministic \buchi automaton, with states $Q$ and input alphabet $\Sigma$. 
% For a finite word $w \in \Sigma^+$, define its \emph{profile} to be the function
% \begin{align*}
% (p,q) \in Q^2 \quad \mapsto \quad \begin{cases}
%     0 & \text{if there is no run from $p$ to $q$ over $w$}\\
%     2 & \text{if there is a run from $p$ to $q$ on $w$, which visits $F$}\\
%     1 & \text{otherwise} 
% \end{cases}
% \end{align*}
% In other words, the profile gives 1 to those pairs $(p,q)$ such that there is a run from $p$ to $q$, but all runs with this property avoid accepting states. 
The rest of this section is devoted to finding a Boolean combination of deterministic \buchi automata that is equivalent to $\Aa$. 
For an $\omega$-word, define its \emph{$\omega$-type} to be the set of states from which the word is accepted. We also define the type for finite words, but here we need to store a bit more information. 
For a run of the automaton over a finite word, define the \emph{profile} of the run to be  the triple $(q,i,p)$ where $q$ is the source state of the run, $p$ is the target state of the run, and  
\begin{align*}
i = \begin{cases}
    0 & \text{if the run does not use any accepting state}\\
    1 & \text{if the run uses some accepting state}.
\end{cases}
\end{align*}
Here is a picture of a run with its profile:
\mypic{41}
Define the \emph{type} of a finite word $w \in \Sigma^+$ to be the set of profiles of runs over this word. It is not hard to see that the function
\begin{align*}
w \in \Sigma^+ \quad \mapsto \quad \text{type of $w$}\ \in \underbrace{\powerset(Q \times \set{0,1} \times Q)}_S
\end{align*}
is a semigroup homomorphism, with a naturally defined semigroup structure on $S$. 

The following lemma shows that  types for finite and $\omega$-words   are compatible with each other.
\begin{lemma}\label{lem:type-omega-congruence} 
     If  $w_i \in \Sigma^+$ and $v_i \in \Sigma^+$ have the same type  for every $i \in \set{1,2,\ldots}$, then   $w_1 w_2 \cdots \in \Sigma^\omega$ and $v_1 v_2 \cdots \in \Sigma^\omega$ have the same $\omega$-type. 
\end{lemma}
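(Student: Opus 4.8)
The plan is to show that if $w_1w_2\cdots$ is accepted from a state $q_0$, then so is $v_1v_2\cdots$, and by symmetry this gives equality of $\omega$-types. Suppose we have an accepting run $\rho$ of $\Aa$ over $w = w_1w_2\cdots$ starting in $q_0$. This run passes through a sequence of states $q_0, q_1, q_2, \ldots$ at the block boundaries, where $q_{i-1} \xrightarrow{w_i} q_i$, and the \buchi condition guarantees that infinitely many of the sub-runs over the $w_i$ contain an accepting state (since the whole run sees accepting states infinitely often, and each such visit lies in some block). For each $i$, the profile $(q_{i-1}, \epsilon_i, q_i)$ — where $\epsilon_i = 1$ iff the sub-run over $w_i$ touches a final state — belongs to the type of $w_i$, which by hypothesis equals the type of $v_i$. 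Hence there is a sub-run of $\Aa$ over $v_i$ from $q_{i-1}$ to $q_i$ with the same bit $\epsilon_i$. Concatenating these sub-runs yields a run of $\Aa$ over $v = v_1v_2\cdots$ starting in $q_0$, which visits a final state within block $v_i$ whenever $\epsilon_i = 1$. Since $\epsilon_i = 1$ for infinitely many $i$, this run satisfies the \buchi condition, so $v$ is accepted from $q_0$.

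First I would make precise the claim that an accepting run over $w$ can be ``cut'' at the block boundaries into sub-runs whose profiles witness the type membership, and conversely that sub-runs with matching profiles over the $v_i$ can be ``glued'': the gluing works precisely because consecutive profiles share the boundary state ($q_i$ is simultaneously the target of block $i$ and the source of block $i+1$), which is exactly how the semigroup multiplication on $S = \powerset(Q \times \set{0,1} \times Q)$ composes profiles. This is the point where the homomorphism property noted just before the lemma is used: the type of a block is determined by the types of its pieces, and matching types give matching sets of profiles.

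The one genuinely delicate point — and the step I expect to be the main obstacle to state cleanly — is the bookkeeping around the accepting-state bit when reassembling the run over $v$: I must argue that the \emph{new} run over $v$ genuinely sees final states infinitely often, not merely that infinitely many blocks ``could'' contain a final state. This is fine because the bit $\epsilon_i$ is part of the profile that is transferred, so each block $v_i$ with $\epsilon_i = 1$ does contain a final state on the chosen sub-run, and there are infinitely many such $i$ by the \buchi condition on $\rho$. A secondary bit of care is needed at the very first block, to ensure the glued run actually starts at the chosen state $q_0$ — this is automatic since the source state is also recorded in the profile. With these observations the argument is a short and direct construction; no appeal to the structural Green's-relations theory is needed for this particular lemma, only the definition of profiles and their composition.
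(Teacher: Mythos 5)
Your proposal is correct and is exactly the argument the paper compresses into one line (``by substituting parts of an accepting run, while preserving the B\"uchi condition''): cut an accepting run over $w_1w_2\cdots$ at the block boundaries, observe that the profile of each block is transferred by the type equality to a sub-run over the corresponding $v_i$ with the same source, target and accepting bit, glue, and note that the accepting bit is $1$ for infinitely many $i$. Your handling of the two ``delicate'' points (the accepting bit and the initial state both living inside the profile) is precisely what makes the one-line proof work, so there is no gap and no divergence from the paper.
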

\begin{proof}
    By substituting parts of an accepting run, while preserving the \buchi condition.
\end{proof}
Thanks to the above lemma, it makes sense to talk about the $\omega$-type  of a word  $w \in S^\omega$ built out of types; this is the $\omega$-type of some (equivalently, every) $\omega$-word that is obtained by concatenating $\omega$-many finite words with the respective finite types. In particular, it makes sense to say whether or not a word $w \in S^\omega$ is accepted by $\Aa$, since this information is stored in the type.  A special case of this notation is $ae^\omega$, where $a,e \in S$, which is the $\omega$-type of the  $\omega$-word that begins with letter $a$ and has all other letters equal to $e$.  The importance of this special case is explained by the following lemma  about factorisations of $\omega$-words\footnote{This lemma was first observed by \buchi in 
\cite[Lemma 1]{Buchi62}
where it was used to prove that nondeterministic \buchi automata are closed under complementation, without passing through a deterministic model.
}

\begin{lemma}\label{lem:ramsey-buchi}
    For every $w \in S^\omega$ there exist $a,e \in S$, such that $e$ is an idempotent, $ae=a$, and there is a   factorisation 
    \begin{align*}
    w \qquad = \qquad \myoverbrace{w_0}{type $a$} \quad
    \myoverbrace{w_1}{type $e$} \quad
    \myoverbrace{w_2}{type $e$} \quad
    \myoverbrace{w_3}{type $e$} \cdots
    \end{align*}
\end{lemma}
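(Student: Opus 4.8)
The plan is to prove this Ramsey-type factorisation lemma by combining the classical infinite Ramsey theorem with the Idempotent Power Lemma for the finite semigroup $S$. First I would colour the pairs of natural numbers: for $i < j$, let the colour of $\{i,j\}$ be the type of the finite infix $w_i w_{i+1} \cdots w_{j-1}$, which is an element of the finite set $S$. By the infinite Ramsey theorem, there is an infinite set $N = \{n_0 < n_1 < n_2 < \cdots\} \subseteq \mathbb N$ that is monochromatic, i.e.\ all infixes $w_{n_k} \cdots w_{n_\ell - 1}$ for $k < \ell$ have the same type, call it $e \in S$.

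The element $e$ is automatically an idempotent: the infix from $n_0$ to $n_2$ equals the concatenation of the infix from $n_0$ to $n_1$ with the infix from $n_1$ to $n_2$, so $e = e \cdot e$ in $S$ (using that the type map $\Sigma^+ \to S$ is a semigroup homomorphism, as established just before the statement). Now set $a_0 = $ type of the prefix $w_0 w_1 \cdots w_{n_0 - 1}$. Then $a_0 e^\omega$ is the $\omega$-type of $w$. But I also want the refinement $ae = a$. For this, consider the infinite sequence $a_0, a_0 e, a_0 e^2, \ldots$ in the finite semigroup $S$; by the Idempotent Power Lemma (or just finiteness), this sequence is eventually periodic, and since $e$ is idempotent it is actually eventually constant: there is some $m$ with $a_0 e^m = a_0 e^{m+1}$. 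Put $a = a_0 e^m$; then $ae = a$.

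Finally I would regroup the factorisation accordingly. Define $w_0' = w_0 \cdots w_{n_m - 1}$ (the prefix up to the $m$-th element of $N$), which has type $a_0 e^m = a$, and for $k \geq 1$ define $w_k' = w_{n_{m+k-1}} \cdots w_{n_{m+k} - 1}$, each of which has type $e$. Then $w = w_0' w_1' w_2' \cdots$ is the desired factorisation, with $e$ idempotent and $ae = a$. One should also remark that, by Lemma~\ref{lem:type-omega-congruence}, the $\omega$-type of $w$ is indeed determined by the sequence of types $a, e, e, \ldots$, so the reference to ``type $a$'' and ``type $e$'' in the statement is meaningful and the resulting $\omega$-type equals $a e^\omega$.

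The main obstacle is essentially bookkeeping rather than a deep difficulty: one must be careful that $S$ is genuinely finite (it is the powerset of $Q \times \{0,1\} \times Q$, so finite as $Q$ is finite) so that Ramsey applies, and one must check that the type/colour really is additive under concatenation — but this is exactly the homomorphism property $w \mapsto \text{type}(w)$ noted in the paragraph preceding the lemma. The only mildly subtle point is deriving $ae = a$ rather than settling for the weaker statement without it; the trick above (run the sequence $a_0 e^k$ until it stabilises, which happens because $e$ is idempotent) handles this cleanly.
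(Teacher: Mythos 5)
Your proof is correct and follows essentially the same route as the paper's: colour pairs of positions by the type of the intervening finite segment, apply infinite Ramsey to obtain an infinite monochromatic set (which gives the idempotent $e$), and absorb initial factors into $w_0$ to achieve $ae=a$. The only wrinkle is that your appeal to the Idempotent Power Lemma to stabilise the sequence $a_0 e^m$ is unnecessary: since $e$ is already idempotent, $a_0 e = a_0 e^2$ holds directly, so taking $m=1$ is exactly the paper's one-line remark about joining the first two groups (and the relevant homomorphism is the multiplication $S^+\to S$, not the type map $\Sigma^+\to S$, since $w$ is a word over the alphabet $S$).
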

\begin{proof} 
    Define a \emph{cut} in $w$ to be  the space between two positions. Consider an undirected edge-labelled graph, defined as follows. Vertices are cuts. For every two distinct cuts, there is an undirected edge,   labelled  by the type of the finite word that  connects the two cuts. By Ramsey's Theorem A, see Exercise~\ref{ex:ramsey}, there exists a type $a \in S$ and an infinite set  $X$ of vertices, such  every two distinct vertices from $X$ are connected by an edge with label $e$.  Define the decomposition from the lemma to be the result of cutting $w$ along all cuts from $X$.  By assumption on $X$, every word $w_i$ with $i > 0$ has type $e$. Idempotence of $e$ follows from
    \begin{align*}
    \overbrace{
        \underbrace{w_i}_{e}
        \underbrace{w_{i+1}}_{e}
    }^e.
    \end{align*}
Finally, we can assure that $ae=a$ by joining the first two groups.
\end{proof}

A corollary of Lemmas~\ref{lem:type-omega-congruence} and~\ref{lem:ramsey-buchi} is that $w \in L$ if and only if 
\begin{itemize}
    \item[(*)] there is a factorisation as in Lemma~\ref{lem:ramsey-buchi} such that  $ae^\omega \in L$.
\end{itemize}
So far, we are doing the same argument as in \buchi's original complementation proof from~\cite{Buchi62}. In his proof,  \buchi observed that variant of (*) with $ae^\omega \not \in L$, which characterises the complement of $L$, can be expressed by a nondeterministic \buchi automaton, and therefore nondeterministic \buchi automata are closed under complementation. 

This is the place where  we diverge from \buchi's proof, since we are interested in determinisation, while \buchi was interested in complementation.  For determinisation, more insight into the structure of finite semigroups will be helpful. Since it is immediately  not clear how to express condition (*) using a deterministic \buchi automaton, we will  reformulate it. In the reformulation,  we say that a pair $(a,b) \in S^2$ appears infinitely often in an $\omega$-word $w \in \Sigma^\omega$ if  for every  $n \in \set{1,2,\ldots}$ one can find a factorisation 
\begin{align*}
w = \myunderbrace{x}{type $a$}
\myunderbrace{y}{type $b$}
z
\end{align*}
such that $x$ has length at least $n$. 

\begin{lemma}\label{lem:det-buchi-char}
    An $\omega$-word $w \in \Sigma^\omega$ is accepted by $\Aa$ if and only if
    \begin{itemize}
        \item[(**)]there exist $a,e \in S$, with $e$ idempotent, $ae=a$, and $ae^\omega \in L$,  such that both conditions below are satisfied:
        \begin{enumerate}
            \item \label{buchi:infix-yes} $(a,e)$ appears infinitely often; and
            \item \label{buchi:infix-no} if $(b,c)$ appears infinitely often, then $c$ is an infix of $e$. 
        \end{enumerate}
    \end{itemize} 
\end{lemma}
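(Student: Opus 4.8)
The plan is to prove the two implications of (**) separately, both times using Lemma~\ref{lem:type-omega-congruence}, Lemma~\ref{lem:ramsey-buchi} and the characterisation (*) derived from them, together with the structural theory of finite semigroups (the Egg-box Lemma and the $\Hh$-class Lemma). Throughout I write $S$ for the semigroup of types and $h : \Sigma^+ \to S$ for the type homomorphism; acceptance of an $\omega$-word is determined by its $\omega$-type, which in turn is determined by any factorisation into finite pieces together with their types.

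For the implication ``$w$ accepted $\Rightarrow$ (**)'', I would start from the factorisation provided by Lemma~\ref{lem:ramsey-buchi}: there are $a,e \in S$ with $e$ idempotent, $ae = a$, and a factorisation $w = w_0 w_1 w_2 \cdots$ with $h(w_0) = a$ and $h(w_i) = e$ for $i \ge 1$. Since $w$ is accepted, $ae^\omega \in L$, and condition~\ref{buchi:infix-yes} holds immediately because the prefixes $w_0 w_1 \cdots w_n$ all have type $a \cdot e^n = a$ and the single next block $w_{n+1}$ has type $e$, giving longer and longer witnessing factorisations for the pair $(a,e)$. The work is in arranging that $e$ can additionally be chosen so that condition~\ref{buchi:infix-no} holds. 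Here I would use a maximality/minimality argument on Green's relations: among all pairs $(a,e)$ arising as in Lemma~\ref{lem:ramsey-buchi} from factorisations of $w$ that witness acceptance, choose one in which $e$ is $\le_{\Jj}$-maximal (i.e.\ its infix class is as high as possible). If some pair $(b,c)$ appears infinitely often with $c$ not an infix of $e$, I would merge the infinitely-many occurrences of $(b,c)$ with the Ramsey factorisation to produce a new idempotent $e'$ of which $e$ is a proper infix (and with a suitable new $a'$, $a'e'=a'$, $a'e'^\omega \in L$), contradicting maximality of $e$; the Egg-box Lemma and the $\Hh$-class Lemma are what let me promote the relevant products back into the infix class and recognise the new idempotent.

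For the converse, ``(**) $\Rightarrow$ $w$ accepted'', suppose $a,e$ are as in (**). Using condition~\ref{buchi:infix-yes}, pick an increasing sequence of cuts $n_1 < n_2 < \cdots$ such that each prefix $w[1..n_k]$ has type $a$ and each next block realises type $e$ at sufficiently many places; more carefully, I would extract (using that $S$ is finite, plus Ramsey as in Lemma~\ref{lem:ramsey-buchi}) a factorisation $w = v_0 v_1 v_2 \cdots$ in which $h(v_0) = b$, $h(v_i) = f$ for $i \ge 1$, $f$ idempotent, $bf = b$, and $(b,f)$ appears infinitely often. Condition~\ref{buchi:infix-no} then forces $f$ to be an infix of $e$; combined with $e$ idempotent this is exactly the situation where the $\Hh$-class / infix-class machinery (as in Lemma~\ref{lem:da-identity}-style reasoning, or directly via the Egg-box Lemma) shows $e f^\omega$ and $e e^\omega = e^\omega$ have the same $\omega$-type, and more generally that $b f^\omega$ and $a e^\omega$ have the same $\omega$-type. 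Since $a e^\omega \in L$, we get $b f^\omega \in L$, and by (*) applied to the factorisation $w = v_0 v_1 v_2\cdots$ this means $w$ is accepted.

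The main obstacle is the first implication, specifically the maximality argument that lets us upgrade an arbitrary accepting Ramsey factorisation to one whose idempotent $e$ dominates (in the infix order) every pair appearing infinitely often. This requires showing that whenever some $(b,c)$ with $c \not\sqsubseteq_{\text{infix}} e$ appears infinitely often, the occurrences of $(b,c)$ and the blocks of type $e$ can be interleaved and re-Ramseyed into a single new idempotent whose infix class strictly contains that of $e$ — the bookkeeping of cuts, and the verification via the Egg-box Lemma that the new product really lands strictly higher in the $\Jj$-order (rather than merely in the same class), is where the care is needed. The rest is routine substitution arguments of the kind already used to prove Lemmas~\ref{lem:type-omega-congruence} and~\ref{lem:ramsey-buchi}.
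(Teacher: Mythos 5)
Your plan is inverted on which direction is the hard one. The implication ``$w$ accepted $\Rightarrow$ (**)'' needs no maximality argument on Green's relations: it falls out of Lemma~\ref{lem:ramsey-buchi} directly. Take the Ramsey factorisation $w = w_0 w_1 w_2 \cdots$ with $h(w_0)=a$ and $h(w_i)=e$ for $i\ge 1$. If a pair $(b,c)$ appears infinitely often, then already for $n > |w_0|$ the witnessing factorisation $w = xyz$ has its middle block $y$ lying entirely inside $w_1 w_2 \cdots$; hence $y$ is an infix of some product $w_i \cdots w_j$ whose type is $e^{j-i+1}=e$, and so $c = h(y)$ is an infix of \emph{this very} $e$. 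The first Ramsey pair automatically satisfies condition~\ref{buchi:infix-no}, and the re-Ramseying-for-a-maximal-$e$ machinery you sketch is unnecessary.

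The real work is in ``(**) $\Rightarrow$ $w$ accepted'', and there your sketch has a gap. You extract a fresh Ramsey factorisation $w = v_0 v_1 \cdots$ with types $b,f,f,\ldots$ and observe, via condition~\ref{buchi:infix-no}, that $f$ is an infix of $e$. But ``$f$ is an infix of $e$'' is too weak: it does not place $f$ in the $\Hh$-class of $e$ (two idempotents of the same infix class can lie in different $\Hh$-classes), so the step that the $\Hh$-class machinery gives $e f^\omega = e^\omega$ and hence $b f^\omega = a e^\omega$ does not follow from what you have. Note also that your argument never actually uses the decomposition supplied by condition~\ref{buchi:infix-yes}. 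The fix, which is what the paper's proof does, is to exploit it: write $w = w_1 v_1 w_2 v_2 \cdots$ with $h(v_i)=e$ and each prefix ending in $w_i$ of type $a$, set $a_i = h(w_i)$, and work with the sandwiched products $g_i = e a_i e$. Because each $g_i$ both begins and ends with $e$, once condition~\ref{buchi:infix-no} places $g_i$ (for large $i$) inside the infix class of $e$, item~\ref{Egg-box:prefix-incomparable} of the Egg-box Lemma forces $g_i$ into the $\Hh$-class of $e$, which is a group by the $\Hh$-class Lemma, and the group arithmetic collapses $a g_n g_{n+1} \cdots$ to $a e^\omega$. The sandwiching $e a_i e$ is precisely the step your proposal is missing, and it cannot be bypassed by looking only at the raw Ramsey idempotent $f$.
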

\begin{proof} The top-down implication, which says that every word accepted by $\Aa$ must satisfy (**), is an immediate  consequence of Lemma~\ref{lem:ramsey-buchi}. We are left with the bottom-up implication. Suppose that $w$ satisfies (**), as witnessed by    $a,e \in S$.
By condition~\ref{buchi:infix-yes}, there is a decomposition
\begin{align*}
    w  = w_1 v_1 w_2 v_2 w_3 v_3 \cdots
    \end{align*}
    such that for every $i \in \set{1,2,\ldots}$ the word $v_i$ has type $e$ and the prefix ending in $w_i$ has  type $a$. Let $a_i$ be the type of $w_i$. The $\omega$-type of $w$ is equal to 
\begin{align*}
 a_1 e a_2 e a_3 e \cdots.
\end{align*}
By condition~\ref{buchi:infix-no}, there is some $n$ such that  
\begin{align*}
g_i \eqdef e a_i e
\end{align*}
is an infix of $e$ for all $i \ge n$. Since $g_i$ is begins and ends with $e$, it follows that $g$ is in the $\Hh$-class of $e$ for all $i > n$. Since this  $\Hh$-class, call it $G$, contains the idempotent $e$, it must be a group by the $\Hh$-class lemma. We now complete the proof of the lemma as follows:
   \begin{eqnarray*}
            \text{$\omega$-type of $w$}  \equalbecause{$a_1 e a_2 \cdots a_{n-1} e_{n-1} =a$}\\
            a ea_n ea_{n+1} e a_{n+2} e \cdots \equalbecause{$e$ is idempotent and Lemma~\ref{lem:type-omega-congruence}}\\
            a ea_n eea_{n+1} ee a_{n+2} ee \cdots\equalbecause{definition of $g_i$}\\
            a g_{n} g_{n+1} g_{n+2} \cdots \equalbecause{by Lemma~\ref{lem:ramsey-buchi}, for some $g,f \in G$}\\
            agf^\omega \equalbecause{because $e$ is the unique idempotent in $G$}\\
            age^\omega 
             \equalbecause{some power of $g$ is the  idempotent $e$}\\
            ag^\omega \equalbecause{for the same reason}\\
            ae^\omega
        \end{eqnarray*}
        and therefore $w$ must belong to $L$.
\end{proof}

To finish the determinisation construction in Theorem~\ref{thm:buchi-determinisation}, it remains to show that condition (**) from the above lemma is a finite Boolean combination of languages recognised by deterministic \buchi automata. This will follow from the following lemma.

\begin{lemma}
    For every  $a,e \in S$ the property ``$(a,e)$ appears infinitely often'' is recognised by a deterministic B\"uchi automaton.
\end{lemma}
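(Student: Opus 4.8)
The plan is to construct the required deterministic B\"uchi automaton $\mathcal D_{a,e}$ directly. Write $\tau\colon\Sigma^+\to S$ for the type homomorphism (so $\tau(w)$ is the type of $w$), extended to $\tau\colon\Sigma^*\to S^1$ with $\tau(\varepsilon)=1$, where $S^1$ is $S$ with a fresh identity adjoined. While scanning an input $w=\ell_1\ell_2\cdots$, the automaton remembers three things: the prefix type $t_k:=\tau(\ell_1\cdots\ell_k)\in S^1$; a \emph{set} $R_k\subseteq S^1$ of ``pending searches'', the intuition being that every position $i<k$ with $t_i=a$ launches a search whose current contents is the infix type $\tau(\ell_{i+1}\cdots\ell_k)$, and that a search is deleted from $R_k$ the first moment its contents becomes $e$; and one bit $b_k\in\{0,1\}$ that records whether the transition into $t_k$ just deleted a search. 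Since $t_k$ ranges over the finite set $S^1$ and $R_k$ over the finite set $\powerset(S^1)$, this is a genuine finite deterministic automaton; its accepting states are those with $b=1$, and its initial state is $(1,\emptyset,0)$.

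Concretely, reading a letter $\ell$ with $s=\tau(\ell)$ from a state $(t,R,b)$ produces $(t',R',b')$ where $t'=t\cdot s$; one first forms $R''=\{\,r\cdot s:r\in R\,\}$, adds the element $1$ to $R''$ in case $t'=a$ (this is the freshly launched search), sets $b'=1$ iff $e\in R''$, and finally puts $R'=R''\setminus\{e\}$. The correctness claim is that $\mathcal D_{a,e}$ visits an accepting state infinitely often on $w$ if and only if $(a,e)$ appears infinitely often in $w$. For the ``only if'' direction: whenever the run enters an accepting state at time $k$, some pending search reached contents $e$ there for the first time, i.e. there is $i<k$ with $t_i=a$ and $\tau(\ell_{i+1}\cdots\ell_k)=e$; then $x=\ell_1\cdots\ell_i$ and $y=\ell_{i+1}\cdots\ell_k$ witness a factorisation $w=xyz$ with $\tau(x)=a$, $\tau(y)=e$ and $|x|=i$. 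Since a search is never re-added once deleted, each position $i$ is responsible for at most one accepting event, so infinitely many accepting events force these witnessing positions $i$ to be unbounded, whence $(a,e)$ appears infinitely often.

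For the ``if'' direction, suppose $(a,e)$ appears infinitely often, so there are arbitrarily large $i$ with $t_i=a$ for which some $j>i$ has $\tau(\ell_{i+1}\cdots\ell_j)=e$. The search launched at such an $i$ survives (with contents $\ne e$) until the least such $j$, where it is deleted and $b$ is set; and for each fixed $j$ there are at most $j$ indices $i<j$ whose search is deleted exactly at time $j$, so the set of deletion times is unbounded, hence infinite, and the automaton accepts. I expect the delicate point to be exactly this bookkeeping: one must match ``the bit fires infinitely often'' with ``infinitely many \emph{distinct} good positions $i$'', not merely infinitely many good pairs $(i,j)$. This is also why tracking a single search --- say from the most recent, or from the oldest, occurrence of prefix type $a$ --- does not suffice: over the five-element semigroup $\{0,e,x,y,a\}$ in which $xy=e$, every longer product equals $0$, and $as=a$ for all $s$, one can write down words on which either heuristic gives the wrong verdict, so the whole set $R$ of pending searches, together with the rule that a deleted search is never revived, is genuinely needed.
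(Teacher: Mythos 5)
Your construction is correct, and your argument --- the invariant that $r\in R_k$ iff some $i\le k$ has $t_i=a$, $\tau(\ell_{i+1}\cdots\ell_k)=r$, and $\tau(\ell_{i+1}\cdots\ell_j)\neq e$ for all $j\in(i,k]$, together with the counting step that matches ``the bit fires infinitely often'' with ``infinitely many distinct good positions'' --- is sound (one small slip: the launching positions should read $i\le k$, not $i<k$, since the fresh content $1$ added when $t_k=a$ is the search for $i=k$). This is, however, a genuinely different route from the paper's. You carry the entire set of live infix types and delete each on first success; the paper instead waits until the prefix type first equals $ae$, and then repeatedly runs a single deterministic automaton $\Dd$ for the auxiliary regular language $L=\{uv : \text{$\tau(u)=b$ for some $b$ with $aeb=a$, and $\tau(v)=e$}\}$, resetting $\Dd$ to its initial state each time it accepts. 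What makes the paper's one-search restart automaton sound is the algebraic identity $ae\cdot be = ae$ for any $b$ with $aeb=a$: at every reset point the prefix type is $ae$ again, so a single running search suffices. Both proofs hinge on the same deterministic-B\"uchi trick (forget a search the moment it succeeds, never revive it); you make the powerset explicit, while the paper ships the set-valued bookkeeping into a black-box automaton for $L$ and keeps only a one-dimensional restart pointer at the outer level. This also means your closing claim that ``the whole set $R$ of pending searches \ldots is genuinely needed'' overshoots: the paper's single search does suffice, precisely because its restart points are chosen algebraically rather than at arbitrary occurrences of prefix type $a$, which is all your sketched counterexample targets.
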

\begin{proof}
    Let $L \subseteq \Sigma^*$ be the set of words which can be  decomposed as
    \begin{align*}
    w = \myunderbrace{u}{type $b$} \myunderbrace{v}{type $e$} \qquad \text{for some $b \in S$ such that $aeb=a$}.
    \end{align*}
    This is easily seen to be a regular language,
    and hence it is recognised by some finite deterministic automaton $\Dd$.  The deterministic \buchi automaton $\Bb$  recognising the property in the statement of the  lemma is defined as follows. Its space is the disjoint union of the set of types $S$ and the states of  $\Dd$. 
    % \begin{align*}
    % \myunderbrace{\text{\blue{types}}}{we call these blue states} \qquad + \qquad 
    % \myunderbrace{\red{
    % \text{types} \times  \text{(states of $\Aa$)}.}}
    % {we call these red states}
    % \end{align*}
    The initial state is the type in $S$ of the empty word. The automaton $\Bb$  begins to read input letters, keeping in its state the   type of the prefix read so far in its state, until the prefix has  type $ae$. Then it switches to the initial state $q_0$ of the automaton $\Dd$. For states of $\Dd$, the  state update function of $\Bb$ is defined as follows:
    \begin{align*}
    \delta_\Bb(q,\sigma)  \mapsto \begin{cases}
        \delta_\Aa(q,\sigma) & \text{if $q$ is not accepting in $\Dd$}\\
        \delta_\Aa(q_0,\sigma) & \text{otherwise}.
    \end{cases}
    \end{align*}
    The \buchi accepting states of $\Bb$ are the same as in $\Dd$. 
\end{proof}

This completes the proof of Theorem~\ref{thm:buchi-determinisation}.

\paragraph*{Semigroups for $\omega$-words.}
There is an implicit algebraic structure in the proof of Theorem~\ref{thm:buchi-determinisation}, which is formalised in the following definition.  
\begin{definition}\label{def:omega-semigroups}
    An $\omega$-semigroup\footcitelong[Section 7]{PerrinPin95} consists  of:
    \begin{itemize}
        \item two sets $\finsort{S}$ and $\omegasort{S}$, called the \emph{finite sort} and the \emph{$\omega$-sort}, respectively.
        \item a finite multiplication operation $\mu_{+} : (\finsort S)^+ \to \finsort S$,  which is associative  in the sense of  semigroups;
        \item an $\omega$-multiplication operation $\mu_\omega :  (\finsort S)^\omega \to \omegasort{S}$,  
        which is associative in the following sense:
        \begin{align*}
        \mu_\omega(w_1 w_2  \cdots ) = \mu_\omega(\mu_+(w_1) \mu_+(w_2) \cdots) \qquad \text{for every }w_1,w_2,\ldots \in S^+.
        \end{align*}
    \end{itemize}
\end{definition}
An example of an $\omega$-semigroup is the  automaton types that were used in the proof of Theorem~\ref{thm:buchi-determinisation}. Another example is the \emph{free $\omega$-semigroup over  a set $\Sigma$}, where the finite sort is $\Sigma^+$, the $\omega$-sort is $\Sigma^\omega$, and the two multiplication operations are defined in the natural way. The same proof as in Theorem~\ref{thm:buchi-determinisation} shows that a language is $\omega$-regular if and only if it is recognised by a homomorphism into an $\omega$-semigroup which is finite (on both sorts). This is discussed in more detail in some of the exercises at the end of this section. 

The  associativity axiom on $\omega$-multiplication can be represented using a commuting diagram, in the same spirit as for Lemma~\ref{lem:commuting-diagram-semigroup}:
   \begin{align*}
    \xymatrix@C=5cm
    {((\finsort S)^+)^\omega \ar[r]^{\text{$\omega$-multiplication in free $\omega$-semigroup over $\finsort S$}} \ar[d]_{(\mu_+)^\omega} &
     (\finsort S)^\omega \ar[d]^{\mu_\omega} \\ 
     (\finsort S)^\omega \ar[r]_{\mu_\omega}
     & \omegasort S}
 \end{align*}
 In the above diagram,  $(\mu_+)^\omega$ denotes the coordinate-wise lifting of $\mu_+$ to $\omega$-words of finite words.

\exercisehead

\mikexercise{\label{ex:ramsey} 
Prove the following result, called  \emph{Ramsey's Theorem A \footcitelong[Theorem A]{Ramsey29}}. Consider an infinite undirected graph, where every two 
distinct vertices are a connected by an edge that is labelled by one of finitely many colours. Then the graph contains an infinite monochromatic clique,  which means that  there exists a colour $e$ and an infinite set $X$ of vertices, such that every two distinct  vertices from $X$ are connected by an edge with colour $e$.  
}{
    Let $V_0$ be all the vertices in the graph.
Choose some vertex $v_1 \in V_0$. Since there are infinitely many vertices and finitely many colours, there must be some colour $c_1$ and an infinite set $V_1 \subseteq V_0 - \set v$ of vertices such that all vertices from $X_1$ are connected to $v_1$ by edges of colour $c_1.$  Repeat the same process for the graph restricted to vertices $V_1$, and keep on repeating it infinitely often, thus creating an infinite sequence
\begin{align*}
V_0, v_1, c_1, V_1, v_2, c_2, \cdots
\end{align*}
By construction, for every $i < j$ the edge connecting $v_i$ with $v_j$ has colour $c_i$. Some colour must appear infinitely often in $c_1,c_2,\ldots$ and if we restrict the sequence to coordinates $i$ with $c_i=c$, then we get the monochromatic clique in the conclusion of the Ramsey Theorem.
}

\mikexercise{ \label{ex:ultimately-periodic-nonempty}
We say that an $\omega$-word is \emph{ultimately periodic} if it has the form $wu^\omega$, for some finite words $w,u \in \Sigma^\omega$. Show that every nonempty $\omega$-regular language contains an ultimately periodic $\omega$-word.
  }{
    Take a nonempty nondeterministic \buchi automaton.
    If the automaton accepts some word, then there is an accepting state $q$, such that $q$ is reachable from some initial state (via some finite word $w$), and $q$ is also reachable from itself (via some finite word $u$). Then word $wu^\omega$ is accepted. }

\mikexercise{\label{ex:ultimately-periodic-dense}Show that two $\omega$-regular languages are equal  if and only if they contain the same ultimately periodic $\omega$-words.}{
    Suppose, toward a contradiction, that $L$ and $K$ are different languages that  are recognised by nondeterministic \buchi automata. Assume that $L - K$ is nonempty. Thanks to Theorem~\ref{thm:buchi-determinisation}, this difference is recognised by a Boolean combination of deterministic \buchi automata, and therefore by Theorem~\ref{thm:buchi-determinisation} it is recognised by a nondeterministic \buchi automaton. By Exercise~\ref{ex:ultimately-periodic-nonempty}, the difference contains some ultimately periodic word.
}

\mikexercise{Show that an $\omega$-word $w$ is ultimately periodic if and only if $\set w$ is an $\omega$-regular language.}{The language $\set{wu^\omega}$ is easily seen to be recognised by a nondeterministic \buchi automaton, where the number of states is the total length of the words $w$ and $u$. For the converse implication, we use Exercise~\ref{ex:ultimately-periodic-dense} to conclude that if a language recognised by a nondeterministic \buchi automaton contains a unique word, then that word is ultimately periodic.}

\mikexercise{\label{ex:buchi-mso} To an $\omega$-word we associate an ordered model, in the same way as for finite words. Show that a language is \mso definable (using the ordered model) if and only if it is $\omega$-regular. }{}

\mikexercise{Define an $\omega$-term  to be any tree as in the following picture:  
\mypic{39}
Every $\omega$-term represents some ultimately periodic $\omega$-word, but several $\omega$-terms might represent the same ultimately periodic $\omega$-word. 
Show that two $\omega$-terms represent the same ultimately periodic $\omega$-word if and only if one can be transformed into the other using the equations:
\begin{align*}
(xy)z = x(yz) \qquad 
(xy)^\omega = x(yx)^\omega \qquad 
\myunderbrace{(x^n)^\omega = x^\omega}{for every $n \in \set{1,2,\ldots}$}
\end{align*}
where $x,y,z$ stand for $\omega$-terms. }
{Using the equation $(xy)z = x(yz)$, we can write finite words without indicating the order in which the multiplication operation is applied.  We say that an $\omega$-term is  in \emph{normal form} if has the shape $wu^\omega$ such that:
\begin{enumerate}
    \item $u$ is a prime word, which means that there is no decomposition of the form $u = v^n$ for some $n \in \set{2,3,\ldots}$;
    \item there is no suffix of $w$ which is also a suffix of $u$.
\end{enumerate}
Using the equation $(x^n)^\omega = x^\omega$, we can convert every $\omega$-term  into one which satisfies condition (1). Using the equation $(xy)^\omega = x(yx)^\omega$, we can convert every pair into one which satisfies condition (2), without violating condition (1). Therefore, using the identities we can convert every $\omega$-term into one that is in normal form. For every ultimately periodic word, there is a unique $\omega$-term in normal form that generates it. 

}

\mikexercise{\label{ex:equivalences-for-omega-words} Let $L \subseteq \Sigma^\omega$. Consider the following equivalence relations on $\Sigma^+$.
\begin{itemize}
    \item Right equivalence is defined by 
    \begin{align*}
        w \sim w' \quad \eqdef \quad    wv \in L \iff w'v \in L \ \text{for every $v \in \Sigma^\omega$.}
        \end{align*}
    \item Two-sided congruence is defined by 
        \begin{align*}
            w \sim w' \quad \eqdef \quad    uwv \in L \iff uw'v \in L \ \text{for every $u \in \Sigma^*, v \in \Sigma^\omega$.}
            \end{align*}
    \item Arnold congruence is defined by 
            \begin{align*}
                w \sim w' \quad \eqdef \quad   \land 
                \begin{cases}u(wv)^\omega \in L \iff u(w'v)^\omega \in L & \text{for every $u,v \in \Sigma^*$.}\\
                    uwv \in L \iff uw'v \in L & \text{for every $u \in \Sigma^*, v \in \Sigma^\omega$.}
                \end{cases}
                \end{align*}         
\end{itemize}
Show that the latter two, but not necessarily the first one,  are  semigroup congruences, i.e.~they satisfy
\begin{align*}
    \bigwedge_{i \in \set{1,2}} w_i \sim w'_i \qquad \text{implies} \qquad w_1 w_2 \sim w'_1 w'_2.
    \end{align*}
}{}
\mikexercise{Consider the equivalence relations defined in Exercise~\ref{ex:equivalences-for-omega-words}.
Prove that the arrows in the following diagram are true implications, and provide counter-examples the missing arrows:
\begin{align*}
    \xymatrix{
        \txt{\scriptsize right\\
        \scriptsize congruence\\
        \scriptsize has finite index} 
        \ar[r]&
        \txt{\scriptsize two-sided \\
        \scriptsize congruence\\
        \scriptsize has finite index}
        \ar[l]
         &
        \txt{\scriptsize Arnold \\
        \scriptsize congruence\\
        \scriptsize has finite index} 
        \ar[l]&
        \txt{\scriptsize $\omega$-regular}
        \ar[l]
    }
\end{align*}
}{}

\mikexercise{Define the \emph{Arnold semigroup} of a language $L \subseteq \Sigma^\omega$ to be the quotient of $\Sigma^+$ under Arnold congruence. Let $L \subseteq \Sigma^\omega$ be a $\omega$-regular.   Show that $L$ is definable in first-order logic if and only if its Arnold semigroup is aperiodic.}{}

\mikexercise{The temporal logic \ltlf can also be used to define languages of $\omega$-words.  Let $L \subseteq \Sigma^\omega$ be a $\omega$-regular. Show that  $L$ is definable in \ltl if and only if its Arnold semigroup is suffix-trivial.
}{}

\mikexercise{Show an $\omega$-regular language where the Arnold semigroup is infix trivial, but which cannot be defined by a Boolean combination of $\exists^*$-sentences.}{ 
    Consider the language $L \subseteq \set{0,1}^\omega$ of words which have $1$ infinitely often. This Arnold semigroup of this language is $(\set{0,1},\max)$ and is therefore infix trivial. This language 
}

\mikexercise{\label{ex:safety-automata}Define a \emph{safety automaton} to be an automaton on $\omega$-words with the following acceptance condition: all states in the run are  accepting. Show that deterministic and nondeterministic safety automata recognise the same languages. 
}{}

\mikexercise{\label{ex:safety-automata-algebraic} Show that an $\omega$-regular language of $\omega$-words is recognised by a safety automaton (deterministic or nondeterministic, does not matter by Exercise~\ref{ex:safety-automata}) if and only if 
\begin{align*}
 uw^\momega v \in L \ \iff \ u(w^\momega)^\omega   \in L \qquad \text{for every $u,w \in \Sigma^+$ and $v \in \Sigma^\omega$,}
\end{align*}
where $\momega \in \set{1,2,\ldots}$ is the exponent obtained from the Idempotent Power Lemma as applied to the Arnold semigroup of $L$.
}{}

\mikexercise{\label{ex:topology} \label{ex:safety-topology} For a finite alphabet $\Sigma$, we can view $\Sigma^\omega$ as metric space, where the distance between two different $\omega$-words is defined to be
\begin{align*}
    \frac 1
    {2^{\text{(length of longest common prefix)}}}
\end{align*}
This is indeed a distance, i.e.~it satisfies the triangle inequality. 
Let $L \subseteq \Sigma^\omega$ be $\omega$-regular. Show that $L$ is recognised by a  safety automaton if and only if it is a closed set with respect to this distance.
}{}

\mikexercise{\label{ex:clopen-omega-word} Find a condition on the Arnold semigroup of an $\omega$-regular language which characterises the clopen languages (i.e.~languages which are both closed and open with respect to the distance from Exercise~\ref{ex:topology})}{The clopen languages are the definite languages, and the condition is therefore $a^\momega = a^\momega b$.}

\mikexercise{We use the topology from Exercise~\ref{ex:topology}. Define a $G_\delta$ set to be any countable intersection of open sets. Show that every $\omega$-regular language is a finite Boolean combination of $G_\delta$ sets. 
}{}

\mikexercise{ Let $L \subseteq \Sigma^\omega$ be an $\omega$-regular language, and define  $\momega$  as in  Exercise~\ref{ex:safety-automata}. Show that $L$ is recognised by a deterministic \buchi automaton if and only if:
    \begin{align*}
        u (wv^\momega)^\momega v^\omega  \in L \ \Rightarrow \ u(wv^\momega)^\omega   \in L \qquad \text{for every $u,w,v \in \Sigma^+$.}
       \end{align*}
}{}
\mikexercise{Let $L \subseteq \Sigma^\omega$. Define 
an $\omega$-congruence to be any equivalence relation $\sim$ on $\Sigma^+$ which is a semigroup congruence and which satisfies
    \begin{align}\label{eq:semi-congruence}
\bigwedge_{i \in \set{1,2,\ldots}} w_i \sim w'_i \qquad \text{implies} \qquad w_1 w_2 \cdots \in L \iff w'_1 w'_2 \cdots \in L.
\end{align}
Show that a language is $\omega$-regular  if and only if it has an $\omega$-congruence of finite index.
}{}

\mikexercise{Define \emph{semi-$\omega$-congruence} for a language $L \subseteq \Sigma^\omega$ to be an equivalence relation on finite words which satisfies~\eqref{eq:semi-congruence}, but which is not necessarily a semigroup congruence. Show that if there is a semi-$\omega$-congruence of finite index, then there is an $\omega$-congruence of finite index. }{}

\mikexercise{We say that $\sim$ is the \emph{syntactic $\omega$-congruence} of $L \subseteq \Sigma^\omega$ if it is an $\omega$-congruence, and every other $\omega$-congruence for $L$ refines $\sim$. Show that if a language is $\omega$-regular, then  it has a syntactic $\omega$-congruence, which is equal to the Arnold congruence. }{}

\mikexercise{ 
Show a language of $\omega$-words which does not have a syntactic $\omega$-congruence. }{}

\section{Countable words and \texorpdfstring{$\cc$-semigroups}{circle-semigroups}}
\label{sec:countable-words}
In this section, we move to $\cc$-words. These are  words where the set of  positions is a countable linear order. The positions could be some finite linear order, as in finite words, or  the natural numbers, as in $\omega$-words, but  some dense set, like the  rational numbers.  One advantage of $\cc$-words, as compared to $\omega$-words, is that they can be concatenated, which is  useful when defining the corresponding generalisation of semigroups.

For finite words, as well as for $\omega$-words, the approach via semigroups can be seen as an alternative to existing automata models. This is no longer the case for $\cc$-words. There  is no known corresponding automaton model, and therefore $\cc$-semigroups are the only known model of recognisability.

\begin{definition}[$\cc$-words]
    A \emph{$\Sigma$-labelled linear order} consists of a set $X$ of \emph{positions}, equipped with a total order and a labelling of type $X \to \Sigma$. 
Two such objects are considered \emph{isomorphic} if there is a bijection between their positions, which preserves the order and labelling. Define a   \emph{$\cc$-word over  $\Sigma$}  to be any isomorphism class of countable\footnote{Why do we assume countability? It turns out that  the decidability results described in this section breaks down for uncountable linear orders. In fact, the \mso theory of the order of real numbers $(\mathbb R, <)$  is undecidable, as shown 
\incite[Theorem 7.]{shelahMonadicTheoryOrder1975}
The  description of  $\cc$-semigroups in this section is based on~\cite{shelahMonadicTheoryOrder1975} and
\incite{carton_colcombet_puppis_2018}
} $\Sigma$-labelled linear orders. We write $\Sigma^\cc$ for the set of $\cc$-words\footnote{Formally speaking, this is not a set, because the linear orders form a class an not a set. However, without loss of generality we can use some fixed countably infinite set, e.g.~the natural numbers, for the positions (but the order need not be the same as in the natural numbers). Under this restriction, the labelled linear orders become a set, and no isomorphism types are lost. For this reason, we can refer to $\Sigma^\cc$ as a set. The same issue and the same solution appears in other places in this book, and we do not mention it explicitly any more.}. 
\end{definition}

Every finite word is a $\cc$-word, likewise for every $\omega$-word. Another example is labelled  countable ordinals, e.g.~any $\cc$-word where the positions are~$\omega + \omega$. Here is a more fancy example, which uses a dense set of positions. 

 \begin{example}[Shuffles] 
    A classical exercise on linear orders is that the rational numbers are the unique -- up to isomorphism -- countable linear order which is dense and has no endpoints (i.e.~neither a least nor greatest element).   The same argument, see below, shows that for every countable $\Sigma$ there is a unique $\cc$-word over $\Sigma$ which has no endpoints, and which satisfies
    \begin{align*}
    \bigwedge_{a \in \Sigma}\quad  
    \myunderbrace{\forall x \ \forall y \  \exists z  \quad x < z < y \land a(z)}{label $a$ is dense}.
    \end{align*}
    We use the name \emph{shuffle of $\Sigma$} for the above $\cc$-word. Shuffles will play an important role in semigroups for $\cc$-words. 
    
    In case the reader is not familiar with back-and-forth arguments, we explain why the shuffle is unique. 
    Define a \emph{finite partial isomorphism} between two $\cc$-words to be a bijection between two  finite subsets  of their  positions which respects the order and labels. Because shuffles have no endpoints and all labels are dense, we conclude the following property:
    \begin{itemize}
        \item[(*)] If $f$ is finite partial isomorphism between two shuffles, and $x$ is a position in the first (respectively, second shuffle), then $f$ can be extended to a finite partial isomorphism whose domain (respectively, co-domain) contains $x$.
    \end{itemize}
    Using the above property and countability, for every two  shuffles one can define  a sequence 
    \begin{align*}
     f_0 \subseteq  f_1 \subseteq f_2 \subseteq \cdots
    \end{align*}
    of finite partial isomorphisms such that every position is eventually covered by some $f_n$. The limit (set union) of these finite partial isomorphisms is an isomorphism between the two shuffles. 
\end{example}
 
We now define the generalisation of semigroups for $\cc$-words.  We use the approach to associativity via commuting diagrams that was described in Lemma~\ref{lem:commuting-diagram-semigroup}.  Recall from that lemma  that a semigroup multiplication on a set $S$  could be defined as any operation $\mu : S^+ \to S$ which makes the following diagram commute:
\begin{align*}
    \xymatrix{
         S \ar[dr]^{\text{identity}} \ar[d]_{\txt{\scriptsize view a letter as \\ \scriptsize  a one-letter word}}\\
         S^+ \ar[r]_\mu & S
      }
      \qquad
    \xymatrix@C=4cm
    {(S^+)^+ \ar[r]^{\text{multiplication in the free semigroup}} \ar[d]_{\mu^+} &
     S^+ \ar[d]^{\mu} \\ 
     S^+ \ar[r]_\mu 
     & S}
 \end{align*}
 For $\cc$-semigroups, we take the same approach: we begin by defining a free multiplication operation (which corresponds to multiplication in the free $\cc$-semigroup), and then define other $\cc$-semigroups in terms of that. For a set $S$,  define \emph{free multiplication}  to be the operation $(S^\cc)^\cc \to S^\cc$ which replaces each position in the input $\cc$-word with the $\cc$-word that is in its label (a formal definition uses a lexicographic product of labelled linear orders).

 \begin{definition}\label{def:cc-semigroup}
     A \emph{$\cc$-semigroup} consists of an underlying set $S$  equipped with  a multiplication operation $\mu : S^\cc \to S$, which is  associative in the sense that the following two diagrams commute:
     \begin{align*}
        \xymatrix{
            S \ar[dr]^{\text{identity}} \ar[d]_{\txt{\scriptsize view a letter as \\ \scriptsize  a one-letter $\cc$-word}}\\
            S^\cc \ar[r]_\mu & S
         } \qquad \xymatrix@C=2cm
   {(S^\cc)^\cc \ar[r]^{\text{free multiplication}} \ar[d]_{\mu^\cc} &
    S^\cc \ar[d]^{\mu} \\ 
    S^\cc \ar[r]_\mu 
    & S}
\end{align*}
In the above diagram, $\mu^\cc$ denotes the coordinate-wise lifting of $\mu$ to $\cc$-words of $\cc$-words.
 \end{definition}

 \begin{myexample}
     The \emph{free $\circ$-semigroup} over alphabet $\Sigma$ has $\Sigma^\circ$ as its underlying set, and its multiplication operation is free multiplication. To check that this multiplication operation is associative, one needs to prove that the following diagram commutes:
     \begin{align*}
        \xymatrix@C=3.5cm
        {((\Sigma^\cc)^\cc)^\cc \ar[r]^{\text{free multiplication  for alphabet $\Sigma^\cc$}} \ar[d]_{\txt{\scriptsize (free multiplication \\ \scriptsize for alphabet $\Sigma$)$^\cc$}} &
         (\Sigma^\cc)^\cc \ar[d]^{{\txt{\scriptsize free multiplication\\ \scriptsize 
          for alphabet $\Sigma$}}} \\ 
         (\Sigma^\cc)^\cc \ar[r]_{\text{free multiplication for alphabet $\Sigma$}} 
         & \Sigma^\cc}
     \end{align*}
     To prove this formally, one uses the formal definition of free multiplication, in terms of lexicographic products of linear orders (see Example~\ref{ex:monads-for-linear-orders}). This $\cc$-semigroup is called \emph{free} for the usual reasons; a more formal description of these usual reasons will appear later in the book, when discussing monads. 
    %  \begin{align*}
    %  \prod_{x \in X} (\prod_{y \in Y_x} Z_y) \qquad \prod_{y \in \prod_{x \in X}} Z_y
    %  \end{align*}
 \end{myexample}

 \begin{myexample}
     Recall the semigroups of size two that were discussed in Example~\ref{example:semigroups}:
     \begin{align*}
        \myunderbrace{(\set{0,1}, +)}{addition mod 2} \quad (\set{0,1}, \min) \quad \myunderbrace{(\set{0,1}, \pi_1)}{$(a,b) \mapsto a$} 
        \quad \myunderbrace{(\set{0,1}, \pi_2)}{$(a,b) \mapsto b$}
        \quad (\set{0,1}, (a,b) \mapsto 1)
        \end{align*}
        Which ones can be extended to $\cc$-semigroups in at least one way?
    
        The first example, i.e.~the two-element group, cannot be extended in any way, because the multiplication $a$ of the  $\omega$-word $1^\omega$ would need satisfy
    \begin{align*}
    a = \mu(1^\omega) =  \mu(\mu(1) \mu(1^\omega)) = \mu(1a) = 1 + a.
    \end{align*}
    The remaining semigroups can be extended to $\cc$-semigroups. As we will see in Example~\ref{example-cc-well-founded}, the extensions are not necessarily unique.
 \end{myexample}

We use  $\cc$-semigroups   to recognise languages of $\cc$-words. Define a \emph{homomorphism of $\cc$-semigroups} to be a function $h$ which makes the following diagram commute:
 \begin{align*}
    \xymatrix{
       S^\cc \ar[r]^{h^\cc} \ar[d]_{\text{multiplication in $S$}} &
       T^\cc \ar[d]^{\text{multiplication in $T$}}\\
       S \ar[r]_{h} & T
    }
    \end{align*}
Like for semigroups, homomorphisms of $\cc$-semigroup can be described in terms of compositional functions.  Suppose that $S$ is a $\cc$-semigroup and  $T$ is a set, which is not yet known to have the structure of a $\cc$-semigroup. We say that  a  function $h : S \to T$ is \emph{compositional} if there exists a function $\mu : T^\cc \to T$ which makes the following diagram commute 
\begin{align*}
    \xymatrix{
       S^\cc \ar[r]^{h^\cc} \ar[d]_{\text{multiplication in $S$}} &
       T^\cc \ar[d]^{\mu}\\
       S \ar[r]_{h} & T
    }
    \end{align*}
Using the same proof as for Lemma~\ref{lem:compositional-monoid}, one shows that if $h$ is a compositional and surjective, then $\mu$ is necessarily associative, thus turning  $T$ into a $\cc$-semigroup, and furthermore $h$ is a homomorphism. A generalised version of this result, which works not just for $\cc$-semigroups but also for a wider class of algebraic structures, will be proved in Part II of this book about monads. 

We say that a language $L \subseteq \Sigma^\cc$ is \emph{recognised} by a $\cc$-semigroup $S$ if there is a homomorphism $h : \Sigma^\cc \to S$ which recognises it, i.e.
\begin{align*}
h(w)=h(w') \quad \text{implies} \quad w \in L \iff w' \in L \qquad \text{for every $w,w' \in L$.}
\end{align*}
We are mainly interested in languages recognised by finite $\cc$-semigroups, i.e.~$\cc$-semigroups where the underlying set is finite. Note that it is not immediately clear how to present the multiplication operation of a finite $\cc$-semigroup in a finite way; this question will be addressed later in this section.

\begin{myexample}\label{example-cc-well-founded}
    Consider un-labelled countable linear orders, which can be viewed as  $\cc$-words over a one letter alphabet $\set a$. 
    Consider the function 
    \begin{align*}
    h : \set a^\cc \to \set{0,1}
    \end{align*}
    which sends well-founded $\cc$-words to $1$, and the remaining $\cc$-words to $0$. 
    We claim that $h$ compositional (and therefore the language of well-founded $\cc$-words is recognised by a finite $\cc$-semigroup). Indeed,  take some  $v \in (\set{a}^\cc)^\cc$ which gives $w \in \set a^\cc$ under free multiplication. To prove compositionality, need to show that $h^\cc(v)$ uniquely determines $h(w)$. This is because $h(w)=1$ if and only if the positions of $v$ are well-founded, and every such a position is labelled by a well-founded order. All of this information can be recovered from $h^\cc(v)$. 
    The compositional function $h$ induces an underlying structure of a $\cc$-semigroup on $\set{0,1}$. When restricted to finite multiplications, this $\cc$-semigroup is the same as $(\set{0,1},\min)$. Note that a symmetric $\cc$-semigroup can be constructed, for orders which are well-founded after reversing. The symmetric $\cc$-semigroup also coincides with $(\set{0,1},\min)$ on finite words.
\end{myexample}

\begin{myexample}\label{ex:semigroup-before}
    Consider the language $L \subseteq \set{a,b,1}^\cc$, which contains $\cc$-words where some position with label $a$ is to the left of some position with label $b$. Consider the following function  
    \begin{align*}
    w \in \set{a,b,1}^\cc \quad \mapsto \quad 
    \begin{cases}
        0 & \text{if $w \in L$}\\
        1 & \text{if all letters are $1$}\\
        b & \text{if all letters are $b$ or $1$, and there is some $b$}\\
        ba & \text{if both $b$ and $a$ appear, but $w \not \in L$}\\
        a & \text{otherwise}
    \end{cases}
    \end{align*}
    This function is easily seen to be compositional, and therefore its image is a $\cc$-semigroup. The element $0$ is absorbing, and $1$ is a monoid identity. The language $L$ is therefore recognised by the corresponding $\cc$-semigroup.
   \end{myexample}

\subsection{Monadic second-order logic on $\cc$-words} As usual in this book, we are interested in properties of $\cc$-words that can be defined using \mso. Part of the appeal of \mso is the ease with which it can be applied to different setups (such as finite or infinite words, trees, graphs, etc.) and $\cc$-words are no exception. It is immediately clar what \mso for $\cc$-words should mean. Define the \emph{ordered model} of a $\cc$-word  in the same way as for finite words: the universe is the positions, and the relations and their meaning are the same as for finite words. We say that a language $L \subseteq \Sigma^\cc$ is definable in \mso if there is an \mso sentence $\varphi$, using the vocabulary of the ordered model, such that
\begin{align*}
w \in L \quad \iff \quad \text{the ordered model of $w$ satisfies $\varphi$} \qquad \text{for every $w \in \Sigma^\cc$.}
\end{align*}

\begin{myexample}
        Consider the language of well-founded $\circ$-words that was discussed in Example~\ref{example-cc-well-founded}. This language is definable in \mso, by simply writing in \mso the definition of well-foundedness:
        \begin{align*}
        \myunderbrace{\forall X}{for every\\ \scriptsize set of \\ \scriptsize positions}  \qquad \
        \myunderbrace{(\exists x \in X)}{which is nonempty} \Rightarrow \myunderbrace{(\exists x  \in X   \ \forall y \in X \  x \le y)))}{there is a least position}.
        \end{align*}
        Another example is the $\circ$-words which contain a sub-order that is dense:
        \begin{align*}
            \myunderbrace{\exists X}{exists a\\ \scriptsize set of \\ \scriptsize positions}  \qquad \
            \myunderbrace{(\exists x \in X)}{which is nonempty} \land \myunderbrace{(\forall x \in X \ \forall y \in Y\ x < y  \Rightarrow  \ \exists z \in X\  x < z < y)))}{and dense in itself}.
            \end{align*}
        An $\cc$-word which violates the second property, i.e.~it does not have any dense sub-order, is called \emph{scattered}.
\end{myexample}

Once we have built up all the necessary ideas in the Trakhtenbrot-\buchi-Elgot Theorem for finite words, it is very easy to get the extension for $\circ$-words. The same proof as for finite words (using a powerset construction on $\cc$-semigroups) gives the following result. 
\begin{theorem}\label{thm:mso-to-cc}
    If a language $L \subseteq \Sigma^\cc$ is definable in \mso, then it is recognised by a finite $\cc$-semigroup. 
\end{theorem}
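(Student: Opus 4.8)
I would mimic the proof of Lemma~\ref{lem:bet-induction} (the hard direction of \tbe for finite words), exploiting the fact that the argument there was deliberately written so as to be reusable. The key observation is that nothing in that proof is specific to finite words except three ingredients: (1) first-order variables can be eliminated in favour of singleton set variables, using only the second-order predicates from~\eqref{eq:set-model}; (2) the atomic languages for those predicates are recognisable; and (3) recognisable languages of $\cc$-words are closed under Boolean combinations and under the powerset/projection construction that handles existential set quantification. So the proof is an induction on the structure of an \mso sentence $\varphi$ over the ordered-model vocabulary of $\Sigma^\cc$, where along the way I also handle formulas $\varphi(X_1,\ldots,X_n)$ with free set variables by working over the alphabet $\Sigma \times \set{0,1}^n$, exactly as in Definition before Lemma~\ref{lem:bet-induction}.

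\textbf{Key steps.} First I would record the elimination of first-order variables: a position $x$ is simulated by a set variable $X$ together with the \mso-expressible assertion that $X$ is a singleton ($X \neq \emptyset$ and every proper subset of $X$ is empty), and the first-order atoms $x \le y$, $a(x)$ become the second-order atoms $X \le Y$, $X \subseteq a$. This is purely syntactic and independent of the underlying class of words, so it goes through verbatim. Next, the induction base: for each atomic formula $X \subseteq Y$, $X \le Y$, $X \subseteq a$ over the alphabet $\Sigma \times \set{0,1}^k$, I must exhibit a finite $\cc$-semigroup recognising its language. For $X \subseteq a$ the language is a ``every letter lies in a fixed subset $\Gamma \subseteq \Sigma \times \set{0,1}^k$'' condition; this is recognised by the two-element $\cc$-semigroup carried by $(\set{0,1},\min)$ — which is a genuine $\cc$-semigroup by Example~\ref{example-cc-well-founded} — via the homomorphism sending good letters to $1$ and bad letters to $0$, since an arbitrary $\cc$-product of $1$'s is $1$ and any product involving a $0$ is $0$. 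For $X \le Y$ I would use the three-element $\cc$-semigroup of Example~\ref{ex:semigroup-before} (or a direct variant), which detects ``some marked-by-$X$ position precedes some marked-by-$Y$ position''; its compositionality is already asserted there, so it is a $\cc$-semigroup. The Boolean cases are handled by complementing the accepting set and by taking products of $\cc$-semigroups (the product of $\cc$-semigroups is again a $\cc$-semigroup, with coordinatewise multiplication, which one checks from the commuting-diagram definition in Definition~\ref{def:cc-semigroup}).

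\textbf{The existential quantifier step — the main point.} For $\exists X_n\,\varphi(X_1,\ldots,X_n)$ I would run the powerset construction: given a recognising homomorphism $h:(\Sigma\times 2^n)^\cc \to S$, let $\pi:(\Sigma\times 2^n)^\cc \to (\Sigma\times 2^{n-1})^\cc$ erase the last bit, and define $H(w) = \set{h(v) : \pi(v)=w} \subseteq S$. The crucial claim is that $H$ is a homomorphism into the \emph{powerset $\cc$-semigroup} $\powerset S$, whose multiplication is $\mu_{\powerset S}\big((A_i)_{i\in X}\big) = \set{\mu_S((a_i)_{i\in X}) : a_i \in A_i}$. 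I would verify that this is a well-defined associative $\cc$-semigroup operation — this is where the bulk of the genuine content lies, because associativity of $\mu_{\powerset S}$ must be checked against the free-multiplication diagram, i.e.\ one needs that ``choice commutes with $\cc$-indexed products''; concretely, a choice function on a $\cc$-word of $\cc$-words of subsets factors as an outer choice of an element of each $A_i$-product together with inner choices, which is exactly the associativity diagram for $S$ transported along powers of the functor $(-)^\cc$. Finiteness is preserved ($|\powerset S| = 2^{|S|}$), and the accepting set becomes $\set{A \subseteq S : A \cap F \neq \emptyset}$; that $H$ recognises the projected language is immediate from $\pi$ being surjective on fibres. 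I expect this verification of the powerset $\cc$-semigroup — making precise that $\powerset$ lifts along the $\cc$-word ``monad'' and that the multiplication is associative — to be the one step that needs care; everything else is bookkeeping parallel to Lemma~\ref{lem:bet-induction}. Since a sentence has no free variables, applying the induction to $\varphi$ itself yields a finite $\cc$-semigroup recognising $L$, proving the theorem.
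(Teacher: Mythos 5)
Your proposal is correct and matches the paper's argument, which is stated in one line as ``the same proof as for finite words, using a powerset construction on $\cc$-semigroups''; you flesh out exactly those details, and you are right that checking associativity of the powerset $\cc$-semigroup against the free-multiplication diagram is the one step requiring genuine verification. One small correction: Example~\ref{example-cc-well-founded} equips $\set{0,1}$ with the ``well-foundedness'' multiplication, not the ``all letters are $1$'' multiplication you need for $X \subseteq a$; the latter is nonetheless a valid $\cc$-semigroup because the colouring ``all letters are $1$'' is compositional, so your atomic case goes through after this adjustment (and similarly the five-element, not three-element, $\cc$-semigroup of Example~\ref{ex:semigroup-before} handles $X \le Y$).
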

% \begin{proof}We only give a brief sketch using \ef games -- a more detailed proof using a powerset construction on $\cc$-semigroups will be given later in Lemma~\ref{lem:induction-on-formulas}. 
%     We use the same argument as in Section~\ref{sec:mso-finite-words}. The set model is defined the same way as for finite words -- its universe is the subsets of the positions. For  $k \in \set{0,1,\ldots}$, define $\msoequiv k$ to be the equivalence relation on $\Sigma^\cc$ which identifies $\circ$-words if their set models satisfy the same first-order sentences of quantifier rank at most $k$. Using the same proof as for Lemma~\ref{lem:mso-congruence}, except that now a $\cc$-word can be divided into more than two parts, one shows that
%     \begin{align*}
%         w \in \Sigma^\cc \qquad \mapsto \qquad 
%         \myunderbrace{\text{equivalence class of $w$ under $\msoequiv{ k}$}}
%         {we call this the rank $k$ \mso type of $w$}
%         \end{align*}
%     is a compositional function into a finite set. Therefore, the function is a homomorphism of $\cc$-semigroups. 
%     Every language definable in \mso is recognised by such a homomorphism, for suitably chosen $k$.
% \end{proof}
The  above theorem seems all too easy. Is there a catch? Yes: the theorem alone does not give an algorithm for deciding if an \mso definable language is empty. In the case of finite words, we could remark that all of the constructions used in the proof (products and powersets) are effective, with finite semigroups represented by their multiplication tables. But, so far, we do not have any finite representation of $\cc$-semigroups yet, and therefore we cannot talk about effectivity. Such finite representations, and their application to deciding \mso, will be developed in the next section. 

\label{page:undecidability-of-mso}
To see the difficulty in getting finite representations, consider uncountable words. Theorem~\ref{thm:mso-to-cc} remains true for uncountable words. However,  satisfiability of \mso sentences over uncountable words (e.g.~the reals)  is undecidable\footcitelong[Theorem 7.]{shelahMonadicTheoryOrder1975}. This means that for uncountable words the constructions in the lemma cannot be made effective. Hence, countability will play a crucial role in finding finite representations.

Another interesting question is about the converse of the theorem: can one define in \mso every language that is recognised by a finite $\cc$-semigroup? For finite words and 
 $\omega$-words, the answer was ``obviously yes'', because one can use \mso to formalise the acceptance by an automaton. Since we have no automata for $\cc$-words, the question is harder. However, the answer is still ``yes'', and it will be given in Section~\ref{sec:mso-countable-words}.

\exercisehead

\mikexercise{
    Give a formula of \mso which is true in some uncountable well-founded linear order, but is false in all countable well-founded linear orders.
}{
    The property is that there exists some set of positions $X$ which: (a) has order type $\omega$; and (b) is co-final, which means that every element of the order is to the left of some element of $X$.  Condition (a) is expressed in \mso by saying that $X$ is the least set that contains some position $x \in X$, and is closed under taking successors inside $X$. 
}

\mikexercise{Find two countable ordinals (viewed as $\cc$-words over a one letter alphabet), which have the same \mso theory. }{}  

\mikexercise{We write $\omegaop$ for the reverse of $\omega$. An $(\omegaop + \omega)$-word is   a $\cc$-word where the underlying order is the same as for the integers.
Show that the following problem is decidable: given an \mso sentence, decide if it is true in some bi-infinite word.   
}{}

\mikexercise{We say that a $(\omegaop + \omega)$-word $v$ is \emph{recurrent} if every finite word $w \in \Sigma^+$ appears as an infix in every prefix of $v$ and in every suffix of $v$. Show that all recurrent $(\omegaop + \omega)$-words have the same \mso theory.
}{}

\mikexercise{\label{ex:cc-contextual} Let $\Sigma$ be an alphabet, and let $x \not \in \Sigma$ be a fresh letter.  For $w \in \Sigma^\cc$ and $u \in (\Sigma \cup \set{x})^\cc$,  define $u[x:=w] \in \Sigma^\cc$ to be the result of substituting each occurrence of variable $x$ in $u$ by the argument $w$. For a language $L \subseteq \Sigma^\cc$, define \emph{contextual equivalence} to be the equivalence relation on $\Sigma^\cc$ defined by 
\begin{align*}
w \sim w'  \quad \text{iff} \quad  u[x:=w]\in L \iff u[x:=w'] \in L \text{ for every }u \in (\Sigma \cup \set x)^\cc.
\end{align*}
Show that $\sim$ is a $\cc$-congruence (which means that the function that maps $w$ to its equivalence class is compositional) for every language recognised by some finite $\cc$-semigroup. 

}{}

\mikexercise{Give an example of a language $L \subseteq \Sigma^\cc$ where contextual equivalence is not a $\cc$-congruence. }{}

\mikexercise{\label{ex:synt-cc}  Show that every language recognised by a finite $\cc$-semigroup has syntactic $\cc$-semigroup, but there are some languages (not recognised by finite $\cc$-semigroups), which do not have a syntactic $\cc$-semigroup. }{}

\mikexercise{\label{ex:yield} Consider a binary tree (every node has either zero or two children, and we distinguish left and right children), where leaves are labelled by an alphabet $\Sigma$.  The tree  might have infinite branches. Define the \emph{yield} of such a tree to be the $\cc$-word where the positions are leaves of the tree, the labels are inherited from the tree, and the ordering on leaves is lexicographic (for every node, its left subtree is before its right subtree). Show that every $\cc$-word can be obtained as the yield of some tree.  }{}

\mikexercise{\label{ex:equi-decidable-cc-rat} Show that the following problems are equi-decidable:
\begin{itemize}
    \item given an \mso sentence, decide if it
    is true in some $\cc$-word $w \in \Sigma^\cc$
    \item given an \mso sentence, decide if its true in $(\Rat,<)$.
\end{itemize}
 }{}

 \mikexercise
 {\label{ex:rabin-cc} Assume Rabin's Theorem, which says that the \mso theory of the complete binary tree
  \begin{align*}
  (\set{0,1}^*, \myunderbrace{x=y0}{left\\ \scriptsize child}, \myunderbrace{x=y1}{right\\ \scriptsize child})
  \end{align*}
 is decidable. Show that the problems from Exercise~\ref{ex:equi-decidable-cc-rat} are decidable. (We will also prove this in the next section, without assuming Rabin's theorem.) }{}

\section{Finite representation of 
\texorpdfstring{$\cc$-semigroups}{circle-semigroups}}
 \label{sec:shelah-operations}
   The multiplication operation in a finite semigroup can be seen as an operation of type $S^+ \to S$, or as a binary operation of type $S^2 \to S$. The binary operation has the advantage that a finite semigroup can be represented in a finite way, by giving a multiplication table of quadratic size.  In this section, we show that a similar finite representation is also possible  for $\cc$-semigroups.  Apart from binary multiplication, we will use two types of $\omega$-iteration -- one forward and one backward -- and a shuffle operation (which inputs a set of elements, and not a tuple of fixed length).

\begin{definition}
    [\lale operations] For a $\cc$-semigroup, define its \emph{\lale operations\footcitelong[p.~109.]{lauchli1966elementary}} to be the following four operations (with their types written in red).
    \begin{align*}
    \myunderbrace{ab}{binary\\ \scriptsize multiplication\\
    \shelahtype{S^2 \to S}} \qquad  \qquad
    \myunderbrace{a^\omega}{multiplication\\ \scriptsize of $aaa \cdots$ \\ \shelahtype{S\to S}}
    \qquad \qquad
    \myunderbrace{a^{\omegaop}}{multiplication \\ \scriptsize of $\cdots aaa$ \\ \shelahtype{S\to S}}
    \qquad\qquad
    \myunderbrace{\set{a_1,\ldots,a_n}^\eta}{multiplication of the\\ \scriptsize shuffle of $a_1,\ldots,a_n$
    \\ \shelahtype{\powerset S\to S}}
    \end{align*}    
\end{definition}

The main result of this section is the following theorem, which says that the role played by the \lale operations in a finite $\cc$-semigroup is the same as the role played by  binary multiplication in a semigroup.
A difference with respect to semigroups is that the \lale operations are complete only for \emph{finite} $\cc$-semigroups, see Exercise~\ref{ex:lale-incomplete-infinite}. 
 
 \begin{theorem}\label{thm:shelah-operations}
    The multiplication operation in a  finite $\cc$-semigroup is uniquely determined by its \lale operations.
 \end{theorem}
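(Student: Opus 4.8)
The plan is to prove the stronger statement that the multiplication operation $\mu : S^\cc \to S$ is determined on every $\cc$-word by a bounded amount of algebraic data, by induction on the structure of countable linear orders. The key tool is Shelah's (equivalently, Hausdorff's) structure theory of countable linear orders: every countable linear order is built from the one-point order by the operations of finite sum, $\omega$-sum, $\omega^*$-sum, and dense (shuffle) sum of a finite family of suborders. More precisely, one uses the Hausdorff rank hierarchy — a countable linear order either is a singleton, or is a (finite, $\omega$-, $\omega^*$-, or $\mathbb Q$-indexed) sum of linear orders of strictly smaller rank — together with the Cantor characterisation of $\mathbb Q$ as the unique countable dense order without endpoints, which is exactly what licenses the shuffle operation. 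So first I would recall or cite this structural fact about $\Sigma^\cc$, phrasing it as: the set $\Sigma^\cc$ is generated from the singletons by concatenation indexed over $\omega$, over $\omega^*$, and over shuffles.

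Next I would set up the induction. Suppose $S$ and $S'$ are two $\cc$-semigroup structures on the same underlying set $S$ which agree on all four \lale operations; I must show $\mu = \mu'$. By the associativity axiom (the second commuting square in Definition~\ref{def:cc-semigroup}), it suffices to show that $\mu$ and $\mu'$ agree on $\cc$-words $w \in S^\cc$ of each of the following four shapes, assuming inductively (on Hausdorff rank, or on the generation process) that they already agree on the "smaller" $\cc$-words fed into these constructors: (1) $w$ a one-letter word — here both sides equal the letter by the identity axiom (first square); (2) $w = uv$ a finite concatenation — reduces via associativity to binary multiplication $ab$, on which the two structures agree by hypothesis; (3) $w = a_1 a_2 a_3 \cdots$ an $\omega$-indexed concatenation — here I would use the Idempotent Power Lemma (Lemma~\ref{lem:idempotent-lemma}) together with the Factorisation Forest / Ramsey argument exactly as in Lemma~\ref{lem:ramsey-buchi}: every $\omega$-word over the finite set $S$ can, after regrouping its positions, be written as $b e e e \cdots$ with $e$ idempotent and $be = b$, so by associativity $\mu(w) = \mu(b\, e^\omega) = b \cdot e^\omega$ with $e^\omega$ a \lale operation and $b\cdot(-)$ binary multiplication, hence determined; symmetrically for (3$^*$) $\omega^*$-indexed concatenations using the backward iteration $a^{\omegaop}$; and (4) $w$ indexed over a countable dense order. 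Case (4) is the substantive one: one must show that a dense-indexed concatenation can be reduced, after coarsening, to a shuffle of a fixed finite family together with finitely many "boundary" pieces handled by the previous cases — this is where the uniqueness of $\mathbb Q$ and a Ramsey-type colouring of intervals by their $S$-products is invoked, to conclude that the product only depends on the *set* of products appearing densely, i.e. on $\{a_1,\dots,a_n\}^\eta$.

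The main obstacle I expect is case (4), the dense/scattered interface: a general countable order indexed over a dense set need not literally be a shuffle of constant pieces, because the intervals between cut points can be non-constant and because "scattered" and "dense" parts are interleaved. The right way through is a two-level argument: first use a Ramsey / Hausdorff-rank induction to show that after merging each maximal scattered interval into a single letter (whose value is determined by the induction hypothesis on lower rank), what remains is a concatenation indexed over a dense order in which one may further extract, by Ramsey's theorem on pairs applied to the colouring of intervals by their product in $S$, a condensation that is a genuine shuffle of a finite colour set $\{a_1,\dots,a_n\}$, possibly flanked by an initial and final scattered (hence lower-rank) segment. Then $\mu(w)$ factors as $(\text{initial piece}) \cdot \{a_1,\dots,a_n\}^\eta \cdot (\text{final piece})$, all of whose factors are determined, and associativity reassembles them. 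Getting the bookkeeping of this condensation/extraction right — in particular verifying that the regrouped word is legitimately obtained from $w$ by the free multiplication $(S^\cc)^\cc \to S^\cc$ so that the associativity square applies — is the delicate step; everything else is a routine application of Lemma~\ref{lem:idempotent-lemma}, the Ramsey argument of Lemma~\ref{lem:ramsey-buchi}, and the commuting diagrams in Definition~\ref{def:cc-semigroup}.
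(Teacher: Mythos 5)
Your case analysis (singleton, binary, $\omega$, $\omega^*$, dense) is the right one, but the direct induction on the structure of the underlying linear order has a genuine gap: the ``structure theorem'' you invoke is not a theorem, and the induction is not well-founded. It is \emph{not} the case that every countable linear order is built from singletons by finite, $\omega$-, $\omega^*$-, and shuffle sums — only countably many are (these are the regular linear orders), while there are uncountably many countable linear orders. Hausdorff's condensation gives a dense sum of scattered pieces, but a dense sum indexed by $\mathbb{Q}$ is a \emph{shuffle} only when the pieces fall into finitely many shapes each appearing densely, which is not automatic; getting there requires exploiting the finiteness of $S$, i.e.\ doing the hard part of the proof rather than citing a black box. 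Moreover you would need a rank on all countable linear orders that strictly decreases when passing to the pieces of such a decomposition, and there is none: Hausdorff rank is defined for scattered orders, and a $\mathbb{Q}$-indexed concatenation of copies of $\mathbb{Q}$ is again isomorphic to $\mathbb{Q}$, so the pieces of a dense decomposition can be exactly as complex as the whole. Without a decreasing rank the inductive hypothesis cannot be assumed on the pieces, and case (4) does not go through.

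The paper avoids both problems by replacing your outside-in induction with a proof by contradiction (Lemma~\ref{lem:shelah-induction}). Given a hypothetical $w$ with a non-computable infix, it defines an \emph{internal} condensation $x \sim y$ iff the interval $(x,y]$ is already computable, proves that $\sim$ is an equivalence whose classes are simple intervals and whose quotient $w_\sim$ is a dense order over the finite alphabet $S$ (this uses the binary, $\omega$- and $\omega^*$-closure together with the Ramsey argument of Lemma~\ref{lem:ramsey-buchi}, exactly as in your case (3)), and then — in place of your appeal to Ramsey on pairs — uses the elementary finite elimination of Claim~\ref{claim:rational-dense}: for each of the finitely many letters of $S$, either pass to an infix avoiding it or conclude it is dense, arriving after at most $|S|$ steps at an infix of $w_\sim$ that is a genuine shuffle. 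Shuffle closure makes this infix computable, contradicting the non-triviality of $\sim$. The contradiction does the work that your missing rank was supposed to do; the rest of the paper's proof (phrasing the result as Lemma~\ref{lem:shelah-subalgebra} about generated subalgebras and deducing Theorem~\ref{thm:shelah-operations} by the diagonal trick in $S_1 \times S_2$) is clean packaging rather than essential content.
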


 Another way of stating the theorem is that if $S$ is  a finite set  equipped with  the \lale operations, then there is at most one way of extending these operations to an associative multiplication $S^\cc \to S$. We say at most one instead of exactly one, because the \lale operations need to satisfy certain associativity axioms, such as:
\begin{align*}
aa^\omega = a^\omega  \qquad (ab)^\omega = a(ba)^\omega  \qquad \set{a_1,\ldots,a_n}^\eta = \set{\set{a_1,\ldots,a_n}^\eta}^\eta
\end{align*}
Because the full list of axioms and its completeness proof are both long, we do not consider them here\footnote{They  can be found in
\incite[Section 7.]{bloomEsik2005}
}. This will not be a big issue,  because we will only consider  multiplication operations that arise from compositional functions -- e.g.~the multiplication operation on \mso types of given quantifier rank $k$ -- and such  multiplication operations are guaranteed to be associative. 

\exercisehead
\mikexercise{\label{ex:lale-incomplete-infinite} Find two  infinite $\cc$-semigroups which have the same underlying set and the same \lale operations, but  different multiplication operations. 
 }{
The underlying set is $\bot + \set{a}^\cc$. 
The first multiplication operation is
\begin{align*}
\mu_1(w) = \begin{cases}
    \bot & \text{ if $w$ contains at least one $\bot$}\\
    \text{free multiplication of $w$} & \text{otherwise}.
\end{cases}
\end{align*}
The second multiplication operation $\mu_2$ is defined in the same way, except that $\mu_2(w)=\bot$ also holds  for all $\cc$-words $w$ which are not regular. Here a regular $\cc$-words is one that can be generated using finitely many application of  the \lale operations to finite words.
}
\subsection{Proof of Theorem~\ref{thm:shelah-operations}}
\label{sec:proof-of-shelah-generators}
The key idea in the proof of Theorem~\ref{thm:shelah-operations} is that the \lale operations are enough to generate all sub-algebras, as stated in the following lemma. 

 \begin{lemma}\label{lem:shelah-subalgebra}
     Let $S$ be a finite $\cc$-semigroup, and let $\Sigma \subseteq S$. Then 
\begin{align*}
\myunderbrace{\set{ \text{multiplication of $w$} : w \in \Sigma^\cc}}{this is called the  \emph{sub-algebra generated by $\Sigma$}} \subseteq S
\end{align*}
is equal to the smallest subset of $S$ which contains  $\Sigma$ and  is closed under the \lale operations.
 \end{lemma}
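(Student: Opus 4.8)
The plan is to prove both inclusions of the asserted equality. The easy inclusion is that the smallest subset $C \subseteq S$ containing $\Sigma$ and closed under the \lale operations is contained in the sub-algebra generated by $\Sigma$: this follows because each \lale operation is an instance of the multiplication operation $\mu$ applied to a $\cc$-word (a two-letter word, the $\omega$-word $aaa\cdots$, the $\omegaop$-word $\cdots aaa$, or the shuffle of $\{a_1,\ldots,a_n\}$), and the sub-algebra generated by $\Sigma$ is closed under $\mu$ applied to $\cc$-words of its own elements, by the associativity (second commuting diagram) in Definition~\ref{def:cc-semigroup}. So the sub-algebra generated by $\Sigma$ contains $\Sigma$, is closed under the \lale operations, hence contains $C$.

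The hard inclusion is that every element of the form (multiplication of $w$) with $w \in \Sigma^\cc$ lies in $C$. The plan is to prove this by induction on the structure of the countable linear order underlying $w$, using the fact that $C$ is finite. The natural tool is a Ramsey/Hausdorff-style decomposition of countable linear orders: every countable linear order is built from singletons by iterating the operations of finite sum, $\omega$-sum, $\omegaop$-sum, and dense shuffle. More precisely, since $S$ (hence $C$) is finite, I would colour each "condensation interval'' of $w$ by the element of $S$ that it multiplies to, and argue that this colouring, combined with Ramsey's Theorem A (Exercise~\ref{ex:ramsey}, already available) for the $\omega$ and $\omegaop$ directions and a back-and-forth argument for the dense parts, lets us factor any $w \in \Sigma^\cc$ as a $\cc$-word over the alphabet $C$ in which every position's label is already known to be reachable by the \lale operations, and where the outer $\cc$-word has strictly simpler structure. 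Formally, one does an induction on the "Hausdorff rank'' (the ordinal measuring how deeply nested the shuffle/sum decomposition is): at rank $0$ the word is a single letter of $\Sigma \subseteq C$; at a successor or limit stage, $w$ decomposes into a finite sum, an $\omega$-sum, an $\omegaop$-sum, or a shuffle of pieces of lower rank, whose multiplications lie in $C$ by induction, and then applying the corresponding \lale operation (binary multiplication iterated, $a^\omega$, $a^{\omegaop}$, or $\{a_1,\ldots,a_n\}^\eta$ — collapsing the $\omega$- and $\omegaop$- and shuffle-sums of a finite set of $C$-values using Ramsey to make the summands eventually constant) shows the multiplication of $w$ is in $C$.

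The main obstacle is the shuffle case: one must show that if $w \in \Sigma^\cc$ has a dense condensation structure, then after condensing by "having the same multiplication value'' one obtains a $\cc$-word over a finite subset of $C$ that is (up to the associativity axioms of $\mu$) a shuffle of finitely many $C$-elements, possibly with endpoints; this requires the back-and-forth characterisation of shuffles from the excerpt's Shuffles example together with a careful argument that the set of values occurring densely is finite (automatic, since $C$ is finite) and that endpoint contributions can be absorbed by binary multiplication. The $\omega$ and $\omegaop$ cases are where Ramsey's Theorem A does the work: any $\omega$-indexed product of $C$-elements can be re-bracketed, via an infinite monochromatic clique, into $a\cdot e\cdot e\cdot e\cdots = a e^\omega$ with $e$ idempotent (exactly as in Lemma~\ref{lem:ramsey-buchi}), and $a e^\omega = a \cdot e^\omega$ is obtained from $C$ by binary multiplication and the $a \mapsto a^\omega$ operation. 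Once all cases are in hand, the two inclusions together give the lemma, and Theorem~\ref{thm:shelah-operations} follows because any two associative multiplications $S^\cc \to S$ agreeing on the \lale operations agree on all of $S^\cc = S^\cc$ (take $\Sigma = S$): the value of $\mu(w)$ for arbitrary $w$ is forced to equal a term built from the \lale operations, by running the decomposition above inside $S$ itself.
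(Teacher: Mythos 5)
Your easy inclusion is fine. The hard inclusion, however, rests on a ``Hausdorff rank'' that is not available for non-scattered countable linear orders: the collection of orders obtained from singletons by iterating finite sums, $\omega$-sums, $\omegaop$-sums, and finite shuffles is a proper subclass of all countable linear orders (essentially the regular $\cc$-words of Exercise~\ref{ex:regular-cc-word}), so you cannot simply assume that every $w \in \Sigma^\cc$ decomposes into strictly lower-rank pieces, and there is no well-founded measure on which to run your induction. More seriously, in the dense case you assert the conclusion rather than proving it: you claim that after condensing one obtains ``a shuffle of finitely many $C$-elements, possibly with endpoints,'' but density does not imply shuffle --- the quotient can, for example, have label $a$ dense among its smaller positions and label $b$ dense among its larger ones, which no amount of endpoint absorption converts into a shuffle. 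What finiteness of $C$ and a back-and-forth argument actually give is only that some \emph{infix} of the dense quotient is a shuffle.

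This distinction is exactly why the paper's proof (Lemma~\ref{lem:shelah-induction}) is organised as a contradiction rather than a rank induction. One defines an equivalence $\sim$ on the positions of $w$ identifying $x,y$ whenever the interval $(x,y]$ is ``simple'' (every infix already multiplies into the generated set), shows that if $w$ is not simple then the quotient $w_\sim$ is an infinite dense order, finds a shuffle infix inside $w_\sim$, and observes that the corresponding interval of $w$ must itself be simple, so it should have collapsed to a single $\sim$-class --- contradiction. The contradiction plays the role of the termination argument your direct induction would need but cannot supply: after merging one shuffle infix, the quotient may still be infinite and dense, so the process does not obviously halt, and there is no ordinal rank that visibly decreases. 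Your remaining ingredients (Ramsey's Theorem A for the $\omega$ and $\omegaop$ directions, the shuffle characterisation, the essential finiteness of $C$) are correct and do appear in the paper's argument, just embedded in this fixed-point/contradiction structure rather than a Hausdorff-style recursion.
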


 Before proving the lemma, we use it to prove Theorem~\ref{thm:shelah-operations}.
 \begin{proof}
     [Proof of Theorem~\ref{thm:shelah-operations}, assuming Lemma~\ref{lem:shelah-subalgebra}.] Suppose that  $S_1$ and $S_2$ are two $\cc$-semigroups, which have the same underlying set, and where the multiplication operations  agree on the \lale operations. We will show that the multiplication operations are the same.  Consider the product $\cc$-semigroup $S_1\times S_2$, defined in the usual coordinate-wise way. Apply Lemma~\ref{lem:shelah-subalgebra}  to the diagonal 
     \begin{align*}
     \Sigma = \set{(a,a) : a \in S} \subseteq S_1 \times S_2.
     \end{align*}
     Since the \lale operations agree for $S_1$ and $S_2$, it follows from the  lemma that the sub-algebra generated by $\Sigma$ is  also  the diagonal, which shows that the  multiplication operations of $S_1$ and $S_2$ are equal.
 \end{proof}

 The rest of Section~\ref{sec:proof-of-shelah-generators} is devoted to proving Lemma~\ref{lem:shelah-subalgebra}.
 Define $L \subseteq \Sigma^\cc$ to be the $\cc$-words  whose multiplication  can be obtained from $\Sigma$ by applying the \lale operations.  To prove  Lemma~\ref{lem:shelah-subalgebra}, we need to show  $L = \Sigma^\cc$.  This will follow immediately  from the following lemma. 
% \begin{lemma} \label{lem:shelah-induction} Assume that  $L \subseteq \Sigma^\cc$ contains $\Sigma$ and is closed under binary concatenation. A sufficient condition   for $L = \Sigma^\cc$ is that there exists a colouring $\lambda : \Sigma^\cc \to C$, with $C$ a finite set of colours, such that:
%     \begin{enumerate}
% \item[(*)] Let  $w \in (\Sigma^\cc)^\cc$ be such that every position has label in $L$, and
% \begin{align*}
%     \myunderbrace{\lambda^\cc(w) = c^\omega}{for some $c \in C$} \qquad\text{or} \qquad
%     \myunderbrace{\lambda^\cc(w) = c^{\omegaop}}{for some $c \in C$} \qquad\text{or} \qquad
%     \myunderbrace{\lambda^\cc(w) = \text{shuffle of $D$.}}{for some $D \subseteq C$} 
% \end{align*}
% Then  the flattening of $w$ belongs to $L$.
%     \end{enumerate}
% \end{lemma}

\begin{lemma} \label{lem:shelah-induction} Let $S$ be a finite $\cc$-semigroup, and let $L \subseteq S^\cc$ be such that:
    \begin{enumerate}
        \item \label{shelah-ind:bin} if $w_1,w_2 \in L$ then $w_1 w_2 \in L$;
        \item \label{shelah-ind:omega} if $w_1,w_2,\ldots \in L$ have the same multiplication, then $w_1 w_2 \cdots \in L$;
        \item \label{shelah-ind:omega-op} if $w_1,w_2,\ldots \in L$ have the same multiplication, then $\cdots w_2 w_1 \in L$;
        \item \label{shelah-ind:shuffle} if $w \in L^\cc$ is such that $\mu^{\cc}(w)$ is a shuffle, then its free multiplication is in $L$.
    \end{enumerate}
    If $L$ contains all letters in a subset $\Sigma \subseteq S$, then $L$ also contains $\Sigma^\cc$.
\end{lemma}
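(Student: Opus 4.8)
The statement is a structural induction principle for $\cc$-words: a set $L \subseteq S^\cc$ closed under binary product, $\omega$-power of a constant sequence, $\omegaop$-power of a constant sequence, and shuffles of its own members (when the outer $\cc$-word of types forms a shuffle), and containing a generating set $\Sigma$, must already contain all of $\Sigma^\cc$. The proof will proceed by induction on the \emph{structure of countable linear orders}, exploiting the classical fact (Hausdorff's theorem) that the countable scattered linear orders are generated from the singleton by the operations of $\omega$-sum, $\omegaop$-sum, and finite sum, and that an arbitrary countable linear order decomposes, via the \emph{condensation} that collapses each maximal scattered interval to a point, into a scattered order or a shuffle-like arrangement of scattered pieces. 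So the first move is to reduce the problem to showing that every $w \in \Sigma^\cc$ has its ``$\cc$-word of letters'' built by a finite nesting of the four operations, and then to lift that combinatorial statement along the associativity axioms of $\cc$-semigroups (the second commuting diagram in Definition~\ref{def:cc-semigroup}) which is exactly what lets us replace a factor by its multiplication.

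\textbf{Key steps.} First I would handle \emph{finite words}: closure under binary product (item~\ref{shelah-ind:bin}) plus $\Sigma \subseteq L$ gives all of $\Sigma^+ \subseteq L$ by an easy induction on length, and the empty word / one-letter words are covered by the identity axiom. Second, \emph{ordinals up to $\omega$ and its reverse}: an $\omega$-indexed concatenation $w_1 w_2 \cdots$ of words in $L$ need not have constant multiplication, so to apply item~\ref{shelah-ind:omega} I would first group the $w_i$ into blocks using Ramsey's Theorem A (Exercise~\ref{ex:ramsey}) applied to the cuts, exactly as in Lemma~\ref{lem:ramsey-buchi}: there is an infinite set of cuts so that all the infixes between them have the same multiplication $e$, so after absorbing the finite prefix (which is in $L$ by the finite-word case and item~\ref{shelah-ind:bin}) the tail is a constant-multiplication $\omega$-concatenation, hence in $L$; symmetrically for $\omegaop$ using item~\ref{shelah-ind:omega-op}. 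Third, \emph{countable scattered words}: by Hausdorff's characterisation, write the scattered order as a transfinite sum indexed by a smaller-rank scattered order of pieces each of which is (inductively) already handled; the sum over an ordinal index is dealt with by iterating the $\omega$ and finite-sum cases by transfinite induction on the rank, and the general scattered index is reduced to ordinals via the same Ramsey grouping trick on cuts. Fourth, the \emph{general case}: given arbitrary $w \in \Sigma^\cc$, take the condensation $\sim$ that identifies two positions iff the interval between them is scattered; the quotient order is either a single point (so $w$ was scattered, already done) or a dense order, and the dense quotient, being a countable dense order, is either a finite/ordinal-bounded object or (after pruning) has an interval isomorphic to the shuffle of the finitely many multiplication-values of the $\sim$-classes appearing; the classes themselves are scattered, hence their multiplications lie in $L$ by the previous step, so item~\ref{shelah-ind:shuffle} applies to the condensed $\cc$-word of types and we conclude $w$'s multiplication is in $L$. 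Throughout, the associativity diagram of Definition~\ref{def:cc-semigroup} is what guarantees that ``multiplying a factor first and then the whole'' gives the same answer, so that each regrouping step is legitimate.

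\textbf{Main obstacle.} The delicate point is the \emph{general (non-scattered) case}, i.e.\ making the condensation argument precise: one must show that after collapsing maximal scattered intervals the remaining order, if nontrivial, literally contains (or can be analysed via) a shuffle of a \emph{finite} set of type-values, and that the finitely many type-values arising are precisely a subset of $S$ — which is finite, so the set of values is finite even though the index order may be wildly complicated. The argument that a countable dense order decomposes into a shuffle of finitely many ``colours'' (here: multiplications of the $\sim$-classes) is essentially the Fraïssé back-and-forth analysis of shuffles given in the Shuffles example, but one has to be careful that the boundary pieces (an initial and a final scattered chunk, plus possibly isolated points between dense parts) are absorbed using the finite-product and scattered cases. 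I would isolate this as a separate lemma: every countable linear order is a finite sum of orders each of which is scattered or a shuffle of scattered orders; granting that, the rest of the induction is bookkeeping. A secondary technical nuisance is the repeated use of Ramsey to convert ``$\omega$-concatenation of $L$-words'' into ``$\omega$-concatenation of \emph{constant-multiplication} $L$-words'', but that is a direct reprise of Lemma~\ref{lem:ramsey-buchi} and should go through verbatim.
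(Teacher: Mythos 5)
Your steps through the scattered case (finite words, $\omega$/$\omegaop$-sums via the Ramsey regrouping, Hausdorff's hierarchy) are workable, and the Ramsey device is in fact also used, in essentially the same form, inside the paper's proof. The genuine gap is in step four. After condensing by maximal scattered intervals, you obtain only that \emph{some infix} of the quotient $w_\sim$ is a shuffle (the pruning you describe is the paper's Claim~\ref{claim:rational-dense}); item~\ref{shelah-ind:shuffle} then yields only that the corresponding infix of $w$ is in $L$, not that $w$ itself is. Absorbing that infix into a single point produces a piece that is no longer scattered, so your condensation criterion and your Hausdorff-rank measure both break, and there is no visible termination for the ``absorb and iterate'' step. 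Your proposed isolated lemma (every countable order is a \emph{finite} sum of scattered orders and shuffles of scattered orders) is in any case insufficient as stated: the difficulty is not in the underlying order but in the multiplication labels in $S$ decorating the quotient, which need not be arranged as a shuffle even when the underlying order of $w_\sim$ is $\mathbb{Q}$ --- that is exactly why the pruning is needed and exactly what leaves a leftover.

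The paper sidesteps this by condensing on a criterion tied to $L$ rather than to the structure of the underlying order, and by arguing by contradiction rather than by direct induction. Call $w$ \emph{simple} if $w$ and all its infixes lie in $L$, and let $x \sim y$ iff the open-closed interval between them induces a simple word. The strengthening from ``in $L$'' to ``all infixes in $L$'' is what makes $\sim$ an equivalence with interval classes and what makes those classes themselves simple --- this is precisely where your Ramsey/$\omega$-decomposition argument appears, packaged as a claim inside the proof rather than as a separate scattered case. If some $w$ were not simple then $\sim$ would have several classes; since simple words are closed under binary concatenation, no two classes are adjacent, so the quotient $w_\sim$ is a dense $\cc$-word over the finite set $S$. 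The pruning argument gives a shuffle infix of $w_\sim$, and item~\ref{shelah-ind:shuffle} makes the corresponding interval of $w$ simple --- but that interval strictly contains two distinct $\sim$-classes, contradicting the definition of $\sim$. The contradiction arises exactly because shuffle closure lets you achieve a larger interval than the $L$-aware condensation permits; taking ``scattered'' as the condensation criterion discards that mechanism, which is why your direct induction cannot close.
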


\begin{proof} 
    We begin with  some notation for $\cc$-words. Define an \emph{interval} in a $\cc$-word to be any set of positions $X$  that is connected in the following sense:
    \begin{align*}
    \forall x \in X \ \forall y \in Y \ \forall z \quad  x < z < y \Rightarrow y \in X.
    \end{align*}
    An \emph{infix} of a $\cc$-word is defined to be any $\cc$-word that is  obtained by restricting the positions  to some interval.   For example, the rational numbers -- viewed as a $\cc$-word $w$ over a one letter alphabet $\set{a}^\cc$ --  have uncountably many intervals, but five possible infixes, namely $a$, $w$, $aw$, $wa$ and $awa$.
    % If $x <y $ are positions in a $\cc$-word $w$, then we write $\range{w}{x}{y}$ for the infix corresponding to the interval $\set{ z : x < z < y}$. 

    We now proceed with the proof of the lemma. Suppose that  $L$ has all of the closure properties in the  assumption of the lemma. We say that   $w \in \Sigma^\cc$ is  \emph{simple} if not only $w$, but also all of its infixes are in $L$.  We will show that  every $\cc$-word in   $\Sigma^\cc$ is simple, thus proving $L=\Sigma^\cc$. For the sake of contradiction, suppose that $w \in \Sigma^\cc$ is not simple. 
       Define $\sim$ to be the binary relation on positions in $w$, which identifies positions if they are equal, or  the infix corresponding to the interval 
       \begin{align*}
        \myunderbrace{\set{z : x < z \le y}}{an open-closed interval}
       \end{align*}
       is  simple (where $x$ is the smaller position and $y$ is the bigger position).
       
    \begin{claim}
        The relation  $\sim$ is an equivalence relation,  every equivalence class is an interval, and this interval induces a simple $\cc$-word.
    \end{claim}
    \begin{proof}
        The relation $\sim$ is symmetric and reflexive by definition. Transitivity holds because  simple words are closed under binary concatenation. This establishes that $\sim$ is an equivalence relation. Because  simple $\cc$-words are closed under infixes by definition,   every equivalence class of $\sim$ is an interval. 
        
        It remains to show  that every (infix induced by an) equivalence class is  simple. Here we use countability and items~\ref{shelah-ind:bin}--\ref{shelah-ind:omega-op} from the assumption of the lemma.  Consider an equivalence class $X$.  Choose some  position $x \in X$. We will show that both intervals
        \begin{align*}
         \myunderbrace{\set{ y \in X : y > x}}{$Y$} \qquad \myunderbrace{\set{y \in X : y \le x}}{$Z$}
        \end{align*}
        are simple, and therefore $X$ itself is simple thanks to binary concatenation. We only consider the first interval $Y$, and $Z$ is treated with a symmetric argument. 

        If $Y$ has a last position, then it is simple  by definition of $\sim$. If there is no last position, than thanks to  countability we can decompose $Y$ as a union
        \begin{align*}
        Y = Y_0 \cup Y_1 \cup Y_2 \cup \cdots,
        \end{align*}
        of consecutive open-closed intervals.  By definition of $\sim$, each interval $Y_n$ is simple. Since $L$ is closed under binary concatenation by assumption~\ref{shelah-ind:bin}, also every finite union of consecutive intervals $Y_n$ is simple. Therefore, by  the Ramsey Theorem, we can assume without loss of generality that all of the intervals $Y_1,Y_2,\ldots$ (but not necessarily $Y_0$) have the same multiplication  $a \in S$. It follows that $Y_1 \cup Y_2 \cup \cdots$ is simple, thanks to assumption~\ref{shelah-ind:omega} about closure of $L$ under $\omega$-concatenation of $\cc$-words with same multiplication. Finally, $Y_0$ can be added using binary concatenation, thus proving that $Y$ is simple.
    \end{proof}
    Since the equivalence classes of $\sim$ are intervals, they can be viewed  as an ordered set, with the order inherited from the original order on positions in $w$.  Because  simple words are closed under binary concatenation, the order on equivalence classes is dense, since otherwise two consecutive equivalence classes would need to be merged into a single one.  Define $w_\sim \in S^\cc$ to be the result of replacing every equivalence class of $\sim$ by its multiplication in $S$.  By assumption that $w$ is not simple, $\sim$ has more than one equivalence class, and therefore the positions of $w_\sim$ are an infinite dense linear order. 

       \begin{claim}\label{claim:rational-dense} Some infix of  $w_\sim$ is  a shuffle.
    \end{claim}
    \begin{proof}
     Take some   $a\in S$. If there is some infinite infix of $w_\sim$ where no position is labelled by  $a$, then we can continue working in that infix (its positions  are still an infinite dense linear order). Otherwise, positions with label $a$ are dense. By iterating this argument for all finitely many elements of $S$, we  find an infinite infix where every $a \in S$ either does not appear at all, or is dense. This infix is a shuffle.  
    \end{proof}
       
     By the closure of $L$ under shuffles, the free multiplication  of the  infix from the above claim is simple. It follows that the corresponding interval should have been a single equivalence class of $\sim$, contradicting the assumption.
\end{proof}

\subsection{Decidability of MSO}
\label{sec:mso-shelah-operations}
Thanks to Theorem~\ref{thm:shelah-operations}, a finite $\cc$-semigroup can be represented in a finite way, by giving its underlying set and the multiplication tables for its \lale operations. We will use this representation to give decision procedure for \mso on $\cc$-words.

Recall the proof of Theorem~\ref{thm:mso-to-cc}, which showed that every \mso definable language is recognised by a finite $\cc$-semigroup. We will show that the constructions in the proof can be made effective, with finite $\cc$-semigroups being represented using  the \lale operations.  In the proof of Theorem~\ref{thm:mso-to-cc}, we inductively transformed the \mso formula into a recognising $\cc$-semigroup,  starting from  $\cc$-semigroups corresponding to the atomic relations in \mso, and then  by applying products $S_1 \times S_2$ and powersets $\powerset S$. It is not hard to find representations (in terms of \lale operations) for the  $\cc$-semigroups that correspond to the atomic relations. It is also easy to see that given representations of $\cc$-semigroups $S_1$ and $S_2$, one can compute a representation of the product $\cc$-semigroup $S_1 \times S_2$, because the \lale operations work coordinate-wise. The interesting case is the powerset construction, which is treated in the following lemma.

% \paragraph*{A powerset construction.} To decide  \mso, we will use a powerset construction for finite $\cc$-semigroups, which will correspond to set quantification in \mso.  We begin by describing this construction. 

% \begin{definition}
%     For a $\cc$-semigroup $S$, define the \emph{powerset $\cc$-semigroup} $\powerset S$ as follows. The underlying set of $\powerset S$ is the powerset of the underlying set of $S$, including the empty set\footnote{Whether or not we allow the empty set is not important for the construction.}. The multiplication operation is defined by 
%     \begin{align*}
%     w \in (\powerset S)^\cc \quad \mapsto \quad 
%     \set{ 
%         \myunderbrace{\text{multiplication of $v$}}{in $S$} : 
%         v \in^\cc w
%     },
%     \end{align*}
%     where $v \in^\cc w$ means that $v \in S^\cc$ can be obtained from $w \in (\powerset S)^\cc$ by choosing for each position  an element of its label\footnote{The relation $v \in^\cc w$  can be formalised by saying that there exists a $\cc$-word $u$ over alphabet
%     \begin{align*}
%     \set{(a,A) : a \in A \subseteq S}
%     \end{align*} 
%     such that $v$ is the projection of $u$ to the first coordinate, and $w$ is the projection of $u$ to the second coordinate. 
%     }. We leave it as an exercise to check that the multiplication operation defined this way is associative\footnote{One has to a bit careful. For example, there is no such thing as a powerset group.}.
% \end{definition}

\newcommand{\omegaset}[1]{#1^\omega \red{\subseteq S^\cc}}
\newcommand{\omegaword}[1]{#1^\omega \red{\in (\powerset S)^\cc}}
\newcommand{\omegaelement}[1]{#1^\omega \red{\in \powerset S}}
% The following lemma shows that the powerset construction is computable. What is not obvious is finding the multiplication tables for  the \lale operations. 
\begin{lemma}\label{lem:compute-multiplication-tables}
    Given a representation (using \lale operations) of a finite $\cc$-semigroup $S$, one can compute a representation of the   powerset $\cc$-semigroup $\powerset S$. 
\end{lemma}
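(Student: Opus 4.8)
The goal is to show that, from the multiplication tables of the \lale operations on $S$, one can compute the multiplication tables of the \lale operations on $\powerset S$, where the multiplication on $\powerset S$ is the one induced by the compositional (hence surjective onto its image, and associative) function that sends a $\cc$-word $w \in (\powerset S)^\cc$ to the set of multiplications of all ``selections'' of $w$ (i.e.\ $\cc$-words over $S$ obtained by choosing, at each position, an element of the set labelling that position). The first step is to record precisely this description: for a $\cc$-word $w$ over $\powerset S$, the value $\mu_{\powerset S}(w)$ is $\set{\mu_S(v) : v \higman w \text{ pointwise}}$, and to observe that this function is compositional, so $\powerset S$ genuinely is a $\cc$-semigroup, and by Theorem~\ref{thm:shelah-operations} its multiplication is determined by its \lale operations — so it suffices to compute those four operations.

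Next I would treat each of the four \lale operations of $\powerset S$ in turn. Binary multiplication is immediate: $A \cdot B = \set{a \cdot b : a \in A,\ b \in B}$, computable from the binary multiplication table of $S$. For the forward $\omega$-iteration $A^\omega \in \powerset S$: a selection of $A^\omega$ (an $\omega$-word over $S$ with each letter in $A$) multiplies, by Ramsey's Theorem A (Exercise~\ref{ex:ramsey}) together with the Factorisation-Forest-type reasoning used in Lemma~\ref{lem:ramsey-buchi}, to an element of the form $a e^\omega$ where $a \in A$, $e$ is an idempotent of $S$ lying in the sub-semigroup generated by $A$, and $a e = a$... but more carefully: I want to show $A^\omega = \set{ \mu_S(v) : v \in A^\omega}$ equals a set that can be built from $S$'s \lale operations applied to elements of $A$. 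Concretely, every $v \in A^\omega$ admits a Ramsey factorisation $v = v_0 v_1 v_2 \cdots$ with all $v_i$ ($i \ge 1$) of the same value $e = e^2 \in \langle A\rangle$; writing $a = \mu_S(v_0)$ one gets $\mu_S(v) = a e^\omega$. Conversely every such $a e^\omega$ (with $a \in A$, $e$ an idempotent that is a product of finitely many elements of $A$, $a$ a product of finitely many elements of $A$) is realised. Since the set of finite products of elements of $A$ is computable (it is the sub-semigroup of $S$ generated by $A$, obtained by saturation under binary multiplication), $A^\omega$ is computable; the backward case $A^{\omegaop}$ is symmetric.

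The main obstacle, as expected, is the shuffle operation $\set{A_1,\ldots,A_n}^\eta \in \powerset S$. Here the input is a finite set $\mathcal A = \set{A_1,\ldots,A_n}$ of elements of $\powerset S$, and the output must be the set of all $\mu_S(v)$ where $v$ is a selection of the shuffle of $\mathcal A$. The key point is that a selection of the shuffle $\set{A_1,\ldots,A_n}^\eta$ is \emph{not} in general the shuffle of a subset of $S$ — it is a $\cc$-word over $S$ all of whose positions carry labels from $\bigcup_i A_i$, with the reachability/density constraints inherited from the shuffle structure — so I cannot just apply $\eta$ of $S$. Instead I would argue, via Lemma~\ref{lem:shelah-induction} (the characterisation of sub-algebras as closure under \lale operations) applied inside $S$ to the generating set $\Sigma = \bigcup_i A_i$, that the multiplication of any such selection lies in the sub-algebra $\langle \Sigma\rangle$ of $S$, and moreover can be described as follows: by a dense-refinement/Ramsey argument on the selection $v$ one finds a finite set $\set{b_1,\ldots,b_m} \subseteq \langle\Sigma\rangle$ of \lale-generated elements such that some infix of $v$ is itself (isomorphic to) a shuffle-like $\cc$-word on which the induced labelling is a shuffle of $\set{b_1,\ldots,b_m}$, whence $\mu_S$ of that infix is $\set{b_1,\ldots,b_m}^\eta$, and the whole thing is obtained by prepending/appending finitely many more \lale-generated pieces; running this in reverse gives that $\set{A_1,\ldots,A_n}^\eta$ is exactly $\set{ x_0 \cdot \set{b_1,\ldots,b_m}^\eta \cdot x_1 : m \ge 1,\ b_j \in \langle\Sigma\rangle,\ x_i \in \langle\Sigma\rangle \cup \set{\varepsilon},\ \text{plus the density/reachability side-conditions}}$, all of which range over computable finite sets. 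Collecting the four computable tables finishes the proof. I would keep the write-up of the shuffle case at the level of ``by the analysis underlying Lemma~\ref{lem:shelah-induction}, a selection of a shuffle decomposes into \lale-generated pieces, one of which is again a shuffle'', since the detailed combinatorics duplicates the proof of that lemma.
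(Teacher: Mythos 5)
Your treatment of binary multiplication, $A^\omega$ and $A^{\omegaop}$ is correct, modulo a small slip: in the $\omega$ case the element $a=\mu_S(v_0)$ lies in the generated sub-semigroup $\langle A\rangle$, not necessarily in $A$, as your ``conversely'' parenthetical already indicates. The shuffle case, however, contains a genuine gap, and it is not one that can be closed by pointing at Lemma~\ref{lem:shelah-induction}. That lemma tells you only that every $\cc$-word over $\Sigma=\bigcup_i A_i$ is \emph{simple}, i.e.\ its multiplication is \emph{some} \lale expression over $\Sigma$; it does not give the specific shape $x_0\cdot\{b_1,\ldots,b_m\}^\eta\cdot x_1$, which you assert but do not establish, and -- more importantly -- it says nothing about which such expressions arise from selections of \emph{this particular} shuffle. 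The ``density/reachability side-conditions'' you leave unspecified are the entire content of the problem. They must, for instance, distinguish $\{A_1,A_2\}^\eta$ from $\{A_1\cup A_2\}^\eta$ from $\{A_1\}^\eta$: a selection of the shuffle of $\{A_1,\ldots,A_n\}$ must realise each $A_i$ densely, and the expression $x_0\cdot\{b_1,\ldots,b_m\}^\eta\cdot x_1$ with $b_j,x_i\in\langle\Sigma\rangle$ carries no trace of this per-$A_i$ requirement. Without a precise, finitely checkable account of these constraints there is no algorithm, and it is far from obvious that a simple direct account exists.

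The paper sidesteps this combinatorics with a product construction. It observes that $a\in\{A_1,\ldots,A_n\}^\eta$ if and only if there exists $v\in(S\times\powerset S)^\cc$ with (a) the $S$-projection of $v$ multiplying to $a$, (b) the first coordinate of every letter of $v$ lying in its second, and (c) the $\powerset S$-projection of $v$ equal to the shuffle of $\{A_1,\ldots,A_n\}$. It then exhibits a compositional map $h:(S\times\powerset S)^\cc\to T$ into an explicitly small, computable $T$ (storing: whether (b) holds; whether the word is an infix of something satisfying (c); the letter if the word is a single letter; and the presence of endpoints), together with a computable accepting set $F\subseteq T$. Then $\{A_1,\ldots,A_n\}^\eta$ is the $S$-projection of those elements of the subalgebra of $S\times T$ generated by $\{(b,h(b,B)):b\in B\subseteq S\}$ whose $T$-coordinate lies in $F$; that subalgebra is computed by saturating under the \lale operations, using Theorem~\ref{thm:shelah-operations}. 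In short, the hard question of which $\cc$-words over $\Sigma$ arise as selections of the shuffle is factored out into the separate recogniser $T$, rather than being characterised directly on the outcome set as your proposal attempts.
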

\begin{proof}
    In the proof, we  use lower-case letters $a,b,c$ for elements of $S$, and we use  upper-case letters $A,B,C$ for elements of the powerset $\powerset S$. We only show how to compute the multiplication table for the shuffle operation
    \begin{align*}
        \set{A_1,\ldots,A_n} \quad \mapsto \quad \set{A_1,\ldots,A_n}^\eta
        \end{align*}
        in the powerset $\cc$-semigroup $\powerset S$.
    The remaining \lale operations are treated in a similar way.

    By definition of the powerset $\cc$-semigroup, an element belongs to the set $\set{A_1,\ldots,A_n}^\eta$ if and only if it can be obtained as follows: take the $\cc$-word 
    \begin{align}\label{eq:compute-shuffle-word}
         \text{shuffle of }\set{A_1,\ldots,A_n} \quad \in (\powerset S)^\cc,
       \end{align}
       choose for each position an element of its label, and then apply the multiplication operation of $S$. In other words, $a$ belongs to $\set{A_1,\ldots,A_n}^\eta$ if and only if  there  exists a word 
        \begin{align*}
        v \in (S \times \powerset S)^\eta
        \end{align*}
        which satisfies the following properties:
        \begin{itemize}
            \item[(a)] after projecting $v$ to the first coordinate and multiplying in $S$, the result is $a$;
            \item[(b)] in every letter of $v$, the first coordinate belongs to the second coordinate;
            \item[(c)] after projecting $v$ to the second coordinate, the result is~\eqref{eq:compute-shuffle-word}.
        \end{itemize}
         We will construct a homomorphism
        \begin{align*}
        h : (S \times \powerset S)^\cc \to T
        \end{align*}
        that recognises the set of $\cc$-words which satisfies conditions (b) and (c) above. The homomorphism $h$ maps a $\cc$-word $v$ to the following information: (i) is condition (b) satisfied; (ii) is $v$ an infix of some $\cc$-word that satisfies (c); (iii) if $v$ is a single letter, then what is the letter; and (iv) does $v$ have a first/last position. The function $h$ defined this way is compositional -- and therefore it is a homomorphism -- and  the \lale operations on its image $T$ can be computed. Also, the accepting set $F \subseteq T$ can be computed, it consists of elements where the answers to questions (i) and (ii) are both ``yes'', the answer to question (iii) is ``not a single letter'', and the answer to question (iv) is ``there is neither a first nor last position''. 

        By the above discussion, the set $\set{A_1,\ldots,A_n}^\eta$ that we want to compute consists of those elements $a \in S$ that can be obtained by taking some $v \in h^{-1}(F)$, projecting to the first coordinate, and then applying the multiplication operation of $S$. Here is alternative description of this set: take the subalgebra of $S \times T$ that is generated by 
        \begin{align}\label{eq:compute-powerset-generators}
        \set{(b,h(b,B)) : b \in B \subseteq S} \quad \subseteq S \times T,
        \end{align}
        keep only the pairs from this subalgebra where the second coordinate belongs to the accepting set $F$, and then project these pairs to the first coordinate. The alternative description  can be computed, because we can compute a representation of  the product $\cc$-semigroup $S \times T$, and we can compute  subalgebras  by saturating with respect to the \lale operations thanks to Theorem~\ref{thm:shelah-operations}. 
\end{proof}

Using the above lemma, we can  deduce decidability of \mso over $\cc$-words. 

\begin{theorem}\label{thm:decidable-mso-cc-words}
    The following problem is decidable:
    \begin{description}
        \item[Input.] An \mso sentence $\varphi$, which defines a language $L \subseteq \Sigma^\cc$.
        \item[Question.]  Is the language $L$ nonempty?
    \end{description}
\end{theorem}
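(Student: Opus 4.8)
The plan is to combine Theorem~\ref{thm:mso-to-cc} with Lemma~\ref{lem:compute-multiplication-tables}, following exactly the recipe used for finite words. Recall that Theorem~\ref{thm:mso-to-cc} was proved by induction on the structure of the \mso sentence: one starts from $\cc$-semigroups recognising the atomic predicates, then closes under products $S_1 \times S_2$ (for conjunction) and powersets $\powerset S$ (for existential set quantification), with complementation handled by swapping the accepting set. First I would observe that every step in that induction can be made effective once we agree to represent a finite $\cc$-semigroup by its underlying set together with the multiplication tables of its four \lale operations (binary product, $\omega$-iteration, $\omegaop$-iteration, and shuffle) -- this finite representation being legitimate thanks to Theorem~\ref{thm:shelah-operations}, which guarantees that these tables determine the multiplication $\mu : S^\cc \to S$ uniquely.

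Concretely, I would run through the cases. For the atomic relations of \mso (the predicates $X \subseteq Y$, $X \le Y$, $X \subseteq a$ from the proof of Lemma~\ref{lem:bet-induction}, adapted to $\cc$-words), the recognising $\cc$-semigroups are small and explicit, so their \lale tables can be written down by hand. For conjunction, given \lale representations of $S_1$ and $S_2$, the product $\cc$-semigroup $S_1 \times S_2$ has \lale operations computed coordinatewise, so its representation is immediate. For negation, one keeps the same representation and complements the accepting set $F \subseteq S$. The only genuinely nontrivial case is existential set quantification, which requires passing from a representation of $S$ to a representation of $\powerset S$ together with the image-of-projection accepting set -- and this is precisely the content of Lemma~\ref{lem:compute-multiplication-tables}. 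Carrying out the induction therefore yields, from an input \mso sentence $\varphi$, an explicitly computed finite $\cc$-semigroup $S$, a homomorphism recognising $L$, and an accepting set $F \subseteq S$.

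Finally, emptiness of $L$ reduces to a finite computation on this representation. Since $h : \Sigma^\cc \to S$ recognises $L$, we have $L \neq \emptyset$ if and only if some element of $F$ lies in the subalgebra of $S$ generated by the image $\Sigma \subseteq S$ of the alphabet letters (more precisely, generated by $\set{h(a) : a \in \Sigma}$). By Lemma~\ref{lem:shelah-subalgebra}, that subalgebra is exactly the closure of $\set{h(a) : a \in \Sigma}$ under the \lale operations, which can be computed by a straightforward saturation procedure: start with the generators and repeatedly apply the (finitely many, tabulated) \lale operations until no new elements appear, terminating because $S$ is finite. One then checks whether the resulting set meets $F$.

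The main obstacle is not conceptual but bookkeeping: one must be careful, in the quantification step, that the powerset construction of Lemma~\ref{lem:compute-multiplication-tables} is stated and used with the correct accepting set (``at least one accepting element in the subset''), and that the inductive invariant being maintained is ``\lale representation plus accepting set'', not merely ``\lale representation''. Everything else is routine once Theorems~\ref{thm:shelah-operations} and~\ref{thm:mso-to-cc} and Lemmas~\ref{lem:shelah-subalgebra} and~\ref{lem:compute-multiplication-tables} are in hand; in particular no new combinatorics about linear orders is needed, since the countability hypothesis has already done its work inside Theorem~\ref{thm:shelah-operations}.
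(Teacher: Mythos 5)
Your proposal is correct and follows essentially the same route as the paper: effectivize the induction from Theorem~\ref{thm:mso-to-cc} using \lale representations (justified by Theorem~\ref{thm:shelah-operations}), handle the powerset step via Lemma~\ref{lem:compute-multiplication-tables}, and decide emptiness by computing the subalgebra generated by $\set{h(a):a\in\Sigma}$ through saturation under the tabulated \lale operations. If anything, you spell out the final emptiness check (which needs Lemma~\ref{lem:shelah-subalgebra}) more explicitly than the paper does, but there is no difference in substance.
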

\begin{proof}
    By induction on formula size, we compute for each formula of \mso (possibly with free variables), a homomorphism into a finite $\cc$-semigroup that recognises its language, together with an accepting subset of the $\cc$-semigroup. The $\cc$-semigroup is represented using the \lale operations, and the homomorphism is represented by its images for the letters of the alphabet. In the induction step, we use Lemma~\ref{lem:compute-multiplication-tables} to compute a finite representation of a powerset $\cc$-semigroup.
\end{proof}

\exercisehead

\mikexercise{\label{ex:singleton-shuffle} Let $\Sigma$ be a finite alphabet, and let $w$ be the shuffle of all letters in $\Sigma$. Show a finite $\cc$-semigroup which recognises 
the singleton language $\set{w}$. 
}{
    The elements of the   $\cc$-semigroup  are:    
    \begin{align*}
        S = \Sigma \cup \set 0 \cup  \set{\text{open,closed}} \times \set{\text{open,closed}}.
    \end{align*}
The idea is that we remember the following information: the entire word if  it is just a single letter, $0$ if the word  cannot be extended to $w$, and otherwise  we only remember if there are left and right endpoints. This is formalised by the following function:
    \begin{align*}
    v \mapsto \begin{cases}
        a & \text{if $v$ consists of just the letter $a$}\\ 
        \text{(open, open)} & \text{if $v=w$}\\
        \text{(closed, open)} & \text{if $v=aw$ for some $a \in \Sigma$}\\
        \text{(open, closed)} & \text{if $v=wa$ for some $a \in \Sigma$}\\
        \text{(closed, closed)} & \text{if $v=awb$ for some $a,b \in \Sigma$}\\
        0 & \text{otherwise}.
    \end{cases}
    \end{align*}
It is not hard to see that this function is compositional, and therefore  $S$  can be seen as $\cc$-semigroup. The \lale operations of $S$ are defined as follows. The binary multiplication returns $0$ for all arguments, with the following exception:
\begin{align*}
(x,\myunderbrace{y)(x'}{one of these must not be ``closed''},y') \mapsto (x,y') \qquad \text{for }x,y,x',y' \in \set{\text{open, closed}}.
\end{align*}
The $\omega$-power returns $0$ for all arguments, with the following exception:
\begin{align*}
(\myunderbrace{x,y}{one of these must not be ``closed''})^\omega = (x,\text{open})  \qquad \text{for }x,y \in \set{\text{open, closed}}.
\end{align*}
Reverse $\omega$-power is defined symmetrically. Finally, the shuffle power is
\begin{align*}
X^\eta = \begin{cases}
    0 & \text{if $0 \in X$ or $X \subsetneq \Sigma$}\\
    \text{(open,open)} & \text{otherwise}.
\end{cases}
\end{align*}
}

\mikexercise{\label{ex:regular-cc-word}A $\cc$-word $w$ is called \emph{regular} if the singleton language $\set w$ is recognised by a finite $\cc$-semigroup. Show that $w$ is regular if and only if  it can  be constructed from the letters by using the \lale operations.
}{}

\mikexercise{\label{ex:regular-cc-word-dens} Show that every  nonempty \mso definable language $L \subseteq \Sigma^\cc$ contains some regular $\cc$-word.
}{}

\mikexercise{Show that if $w$ is a regular $\cc$-word, then $\set w$ is \mso definable (without invoking Theorem~\ref{thm:mso-complete-cc-words}).}{}

\mikexercise{Show that for every finite alphabet $\Sigma$ there exists a $\cc$-word $w \in \Sigma^\cc$ such that 
\begin{align*}
    h(wvw)=h(w)  \qquad \text{for every} 
    \myunderbrace{h: \Sigma^\cc \to S}{homomorphism into\\
    \scriptsize a finite $\cc$-semigroup} \text {and }v \in \Sigma^\cc.
\end{align*}
}{Shuffle of all regular $\cc$-words in $\Sigma^\cc$.}

\mikexercise{   \label{ex:zero-one-law} 
For a countable linear order $X$, let $\set{a,b}^X \subseteq \set{a,b}^\cc$ be the set of  $\cc$-words with  with positions $X$. We can equip this set with a probabilistic measure,  where for each position $x \in X$, the label is selected independently, with $a$ and $b$  both having probability half. We say that $X$ has a zero-one law if for every \mso definable language $L$,  
the probability of $\varphi \cap \set{a,b}^X$ is either zero or one. For which of the following $X = \Nat, \Int, \Rat$ is there a zero-one law?
}{}

\mikexercise{A countable linear order can be viewed as a $\cc$-word over a one-letter alphabet. Among these, we can distinguish the countable linear orders that are regular, i.e.~generated by the \lale operations, see Exercise~\ref{ex:regular-cc-word}.  
    Give an algorithm, which inputs a 
an countable linear order that is regular in the above sense, and decides if it has a zero-one law (in the sense of Exercise~\ref{ex:zero-one-law}).
}{}

\mikexercise{Show that every \mso definable language of $\cc$-words belongs  to the least class of languages which:
        \begin{itemize}
            \item contains the following  two languages over alphabet $\set{a,b,c}$:
            \begin{align*}
        \myunderbrace{\exists x a(x)}{some $a$} \qquad 
        \myunderbrace{\exists x \ \exists y \  a(x) \land b(y) \land x < y}{ $a$ before $b$}
            \end{align*}
            \item is closed under  Boolean combinations;
            \item is closed under images and inverse images of letter-to-letter homomorphisms.
        \end{itemize}
    }{}
% \mikexercise{Let us write $\Sigma^{\bullet} \subseteq \Sigma^\cc$ for the regular $\cc$-words. Show that a multiplication operation $\mu : S^\cc \to S$ is associative

% }a {
% Let $w \in S^\cc$.
% Define a factorisation of $ w \in S^\cc$ to be an equivalence relation $\sim$ on positions where every equivalence class is an interval.

% We encode a factorisation using  two sets of  positions, as follows. For two sets of positions $X,Y$ in $w \in S^\cc$, define the associated factorisation to be the  equivalence relation where  $x \sim y$ holds if the interval from $x$ to $y$ (including $x$ but not $y$) contains no positions from $X$ and most finitely many positions from $Y$. It is not hard to see that this is a factorisation. 

% Furthermore, we claim that every factorisation arises this way. Indeed, suppose that $\sim$ is a factorisation. For each equivalence class, choose some element and put it in the set $Y$. Furthermore, if an equivalence class has a last element, then put that last element in $X$. It is not hard to see that the factorisation associated to these two sets is exactly $\sim$.

% }

\mikexercise{We say that a binary tree (possibly infinite) is \emph{regular} if it has finitely many non-isomorphic sub-trees. Show that a $\cc$-word is regular (in the sense  of Exercise~\ref{ex:regular-cc-word}) if and only if it is the yield (in the sense of Exercise~\ref{ex:yield}) of some regular tree.
}{}

\mikexercise[91]{Consider the embedding ordering (Higman ordering)  $w \higman v$ on $\cc$-words. Show that for every $\cc$-words $w$ there is a regular $\cc$-word $v$ such that  $w \higman v$ and $v \higman w$.  Hint: use Lemma~\ref{lem:shelah-induction}. }{}

\mikexercise{
Suppose that we are given a language $L \subseteq \Sigma^\cc$, represented by a finite $\cc$-semigroup $S$,  a homomorphism $h : \Sigma^\cc \to S$, and  an accepting set $F \subseteq S$.  
Give a algorithm which computes the syntactic $\cc$-semigroup (which exists by Exercise~\ref{ex:synt-cc}).

}{}

\mikexercise{
\label{ex:half-eilenberg-cc}    
Let $\Ll$ be a class of languages, such that $\Ll$ satisfies the following conditions:
\begin{itemize}
    \item every language in $\Ll$ is recognised by a finite $\cc$-semigroup;
    \item $\Ll$ is  closed under Boolean combinations;
    \item $\Ll$ is closed under inverse images of homomorphisms $h : \Sigma^\cc \to \Gamma^\cc$;
    \item  Let $L \subseteq \Sigma^\cc$ be a language in  $\Ll$. For every  $w,w_1,\ldots,w_n \in \Sigma^\cc$, $\Ll$ contains the inverse image of $L$ under the following operations: 
    \begin{align*}
    v \mapsto wv \quad v \mapsto vw \quad  v\mapsto v^\omega \quad v \mapsto v^\omegaop \quad  v \mapsto \text{shuffle of }\set{w_1,\ldots,w_n,v}.
    \end{align*}
\end{itemize}
Show that if $L$ belongs to $\Ll$, then the same is true for every language recognised by its syntactic $\cc$-semigroup.
}{}

\mikexercise{\label{ex:ltlf-cc}
    Let $\Sigma$ be an alphabet and let $c \not \in \Sigma$ be a fresh letter. We say that  $L \subseteq \Sigma^\cc$ is definable in \ltlf if there is a formula of \ltlf which defines the language $cL$, see  Exercise~\ref{ex:ltlf-fresh-letter}. Give an algorithm which inputs the finite syntactic $\cc$-semigroup of a language  $L \subseteq \Sigma^\cc$, and answers if the language  is definable in \ltlf. Hint:  the $\cc$-semigroup must be suffix trivial,  but this is  not sufficient. }{
Condition X says that the $\cc$-semigroup is suffix trivial and satisfies the following identities 
    \begin{align*}
    a^\momega = a^\omega a= aa^{\omegaop} \qquad a_1\set{a_1,\ldots,a_n}^\eta a_n = (a_1 \cdots a_n)^\momega
    \end{align*}
    }

\mikexercise{Give an algorithm which inputs the finite syntactic $\cc$-semigroup of a language  $L \subseteq \Sigma^\cc$, and answers if the language  is definable in two-variable first-order logic \fotwo. Hint:  the $\cc$-semigroup must be in \dav,  but this is  not sufficient.}{
Condition X is the following identity: 
\begin{align*}
(a_1 \cdots a_n)^\momega = (a_1 \cdots a_n)^\momega w (a_1 \cdots a_n)^\omega  \qquad \text{for every }w \in \set{a_1,\ldots,a_n}^\cc.
\end{align*}
This identity can be effectively checked, since the possible multiplications of $\cc$-words $w$ can be computed using Lemma~\ref{lem:shelah-subalgebra}.
}

\mikexercise{Show that aperiodicity is not sufficient for first-order definability for $\cc$-words: give an example of a language $L \subseteq \Sigma^\cc$ that is recognised by a finite aperiodic $\cc$-semigroup, but which is not definable in first-order logic.  }{}

\section{From \texorpdfstring{$\cc$-semigroups}{circle-semigroups} to  MSO}
\label{sec:mso-countable-words}
In Theorem~\ref{thm:mso-to-cc} we  have shown that if a language of $\cc$-words is definable in \mso, then it is recognised by a finite $\cc$-semigroup. We now show that the converse implication is also true. 

\begin{theorem}\label{thm:mso-complete-cc-words} If a  language of $\cc$-words is recognised by a finite $\cc$-semigroup, then it is  definable in \mso\footnote{This theorem was first shown in 
    \incite[Theorem 5.1.]{carton_colcombet_puppis_2018}
The proof presented here is different, and it is based on the proof in 
\incite[p.~192]{Schutzenberger65} 
which shows that every aperiodic monoid recognises a star-free language. We use the different proof because, after suitable modifications, it allows us to characterise star-free languages of $\cc$-words, see Exercise~\ref{ex:star-free-cc-effective}.}.
\end{theorem}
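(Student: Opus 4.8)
The plan is to mimic, at the level of $\cc$-semigroups, the classical Schützenberger-style argument that every language recognised by a finite semigroup admits a ``structured'' decomposition, and to turn each piece of that decomposition into an \mso formula. Fix a finite $\cc$-semigroup $S$, a homomorphism $h:\Sigma^\cc\to S$ and an accepting set $F\subseteq S$; it suffices to show that for each $a\in S$ the language $L_a=\{w\in\Sigma^\cc: h(w)=a\}$ is \mso definable, since $L=\bigcup_{a\in F}L_a$ and \mso is closed under finite unions. The key auxiliary notion will be, for a word $w\in\Sigma^\cc$ and an interval $I$ of its positions, the value $h$ assigns to the infix $w|_I$; the goal is to exhibit an \mso formula $\mathrm{val}_a(X)$, for each $a\in S$, true exactly when $X$ is an interval whose infix multiplies to $a$. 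Once we have these formulas for the whole word (take $X$ = all positions), we are done.

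The main tool is Lemma~\ref{lem:shelah-subalgebra}: the value of any infix lies in the sub-algebra of $S$ generated by the letters actually occurring in it, and this sub-algebra is built from those letters by finitely many applications of the \lale operations (binary product, $\omega$-power, $\omegaop$-power, shuffle). So I would proceed by induction on the \emph{infix height} of $S$ (the length of the longest strictly increasing chain in the infix ordering), exactly as in the Factorisation Forest Theorem and in the proof that aperiodic monoids give star-free languages. For the inductive step, given an interval $X$, I would use an \mso-definable equivalence relation $\sim$ on the positions of $X$ — the ``simple-infix'' relation from the proof of Lemma~\ref{lem:shelah-induction}, relativised to $X$ — whose classes are intervals, whose quotient order is dense, and such that each class is a simple word (all its infixes are in the lower sub-algebra). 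This relation is \mso definable because ``the infix between $x$ and $y$ multiplies to $b$'' is, by the induction hypothesis applied to a proper sub-algebra, an \mso-expressible property for each $b$ below the current infix class. The quotient word $w_\sim$ is then a dense $\cc$-word over the (smaller) sub-algebra; by Claim~\ref{claim:rational-dense} it has a shuffle infix, and this is where the four \lale operations enter: in \mso one can identify the maximal blocks that multiply (via $\omega$, $\omegaop$, binary product) to fixed elements, collapse them, and read off the shuffle value via the shuffle operation applied to the finite set of block-values occurring densely. Unwinding the finitely many \lale operations that generate the value of $w|_X$ gives the formula $\mathrm{val}_a(X)$.

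Concretely the steps are: (1) set up the notation, reduce to defining $L_a$ and then to defining $\mathrm{val}_a(X)$ for intervals $X$; (2) establish the base case, where $S$ is a single infix class — here every infix has value in a fixed $\Hh$-structured piece and, using the Egg-box Lemma and the $\Hh$-class Lemma, one checks the value of an interval is determined by simple first-order data (first letter, last letter, which elements of $S$ occur as letters of infixes), all \mso-expressible; (3) for the inductive step, let $T\subseteq S$ be the elements having a proper infix (a proper sub-semigroup of smaller infix height), define $\sim$ on an interval $X$ via the induction hypothesis for $T$, prove it is \mso-definable with dense interval-quotient, and show each class is simple; (4) analyse the dense quotient $w_\sim$ over $T$: identify in \mso the $\omega$-blocks, $\omegaop$-blocks, binary-product blocks and the shuffled residue, using Claim~\ref{claim:rational-dense}, and combine their values by the \lale operations to obtain $h(w|_X)$; (5) conclude that $\mathrm{val}_a(X)$ is \mso and hence $L$ is \mso definable.

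The main obstacle I anticipate is step (4): making the passage through the shuffle genuinely \mso-expressible. Saying ``the set of values occurring densely below $x$ and $y$ is exactly $\{b_1,\dots,b_k\}$'' and ``between consecutive important cuts the infix multiplies to a fixed idempotent-ish element'' requires carefully nested set quantifications, and one must be sure the decomposition into $\omega$-, $\omegaop$-, product- and shuffle-parts is canonical enough to be pinned down by a formula rather than merely existing. The Ramsey-type extractions in the proof of Lemma~\ref{lem:shelah-induction} are non-constructive, so I would need to replace ``extract a subsequence with constant value'' by an \mso statement asserting the existence of a set of cuts witnessing the relevant constancy and cofinality — which is possible because \mso can quantify over the witnessing set of cuts, just as \konig-type arguments are internalised in \mso over $\cc$-words. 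Getting these witness-set formulas right, and checking they compose correctly under the induction, is the delicate part; everything else is a routine but lengthy translation of the algebraic decomposition into logic.
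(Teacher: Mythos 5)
Your proposal takes a genuinely different route. You propose to form the simple-infix quotient $\sim$ from Lemma~\ref{lem:shelah-induction} directly in \mso and then read the value off the dense quotient word $w_\sim$ by combining \lale operations. The paper never forms that quotient: it fixes one infix class $J$ at a time, splits $S$ into ``easy'' (proper prefixes of $J$) and ``hard'' elements, and the crux of the argument is the factorisation $w = w_1 w_2 w_3$ in Lemma~\ref{lem:product-approximated}, with $w_1 \in H^\omega$ (the maximal limit prefix), $w_3 \in H^\omegaop$, and --- crucially --- $w_2$ a \emph{finite} concatenation of almost easy intervals, so that its value is computed by Kleene's theorem over a finite alphabet. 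The non-constructive ingredient (Lemma~\ref{lem:shelah-induction}) is invoked exactly once, in Lemma~\ref{lem:lj-definable}, and only for its existence conclusion $L = S^\cc$, never for a decomposition.

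The gap in your step (4) is real and you half-acknowledge it. Lemma~\ref{lem:shelah-induction} is a closure argument proved by contradiction: Claim~\ref{claim:rational-dense} asserts that \emph{some} infix of $w_\sim$ is a shuffle, and the proof immediately derives a contradiction from that; it does not give any canonical way to carve $w_\sim$ into $\omega$-, $\omegaop$-, binary-product, and shuffle parts. Since $w_\sim$ is dense, every sub-interval with two or more positions is itself dense, so the ``$\omega$-blocks'' and ``$\omegaop$-blocks'' you want to identify are not there to be found; and a dense countable $\cc$-word need not be a shuffle (a shuffle infix exists, but the residual parts can again be dense and non-shuffle, inviting an unbounded regress with no accompanying induction measure). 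There is a further unresolved layer: the positions of $w_\sim$ are $\sim$-classes of $w$, not positions of $w$, so any candidate $\lale$-style analysis of $w_\sim$ must be re-expressed as an \mso formula over $w$'s own positions. The paper's easy/hard machinery, the finite middle concatenation, and the one-shot ``error'' characterisation of $L_I$ are precisely what circumvent all three difficulties, and none of them appear in your plan.
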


As mentioned before in this chapter, the theorem would be easy if there was an automaton model, which would assign states to positions, and where the acceptance condition could be formalised in \mso. Unfortunately, no such automaton model is known. Therefore, we need a different proof for the theorem. The rest of Section~\ref{sec:mso-countable-words} is devoted to such a proof.

We begin by defining regular expressions for $\cc$-words. For a finite family $\Ll$ of languages  of $\cc$-words, define the  shuffle of $\Ll$ to be the   $\cc$-words which can be partitioned into intervals so that: (a) every interval induces a word from $L$ for some $L \in \Ll$; (b) the order type on the intervals is that of the rational numbers; and (c) for every $L \in \Ll$, the intervals from $L$ are dense. 
\begin{lemma}\label{lem:shuffle-expressions}
    Languages definable in \mso are closed under Boolean combinations and the following kinds of concatenation:
    \begin{align*}
     LK \qquad  L^+ \qquad L^\omega \qquad L^\omegaop \qquad \text{shuffle of } \myunderbrace{\Ll}{a finite family \\ \scriptsize of languages}
    \end{align*}
\end{lemma}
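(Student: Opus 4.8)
The plan is to verify each closure operation separately, reducing everything to \mso-definability of the ingredient languages. For Boolean combinations this is immediate: \mso is closed under $\land$, $\lor$ and $\neg$, and the ordered model of a $\cc$-word is fixed once $\Sigma$ is fixed, so a Boolean combination of \mso-definable languages is defined by the corresponding Boolean combination of sentences. The genuinely new content is the four concatenation-style operations, and for each of them the key idea is the same as in the classical finite-word argument: guess (with set quantifiers) a partition of the positions of the input $\cc$-word into intervals of the prescribed order type, and then relativise the defining sentences of the ingredient languages to each block.

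First I would treat $LK$. Given sentences $\varphi_L, \varphi_K$, a word lies in $LK$ iff there is a downward-closed set $X$ of positions (an initial interval, possibly empty, possibly everything) such that $X$ induces a word in $L$ and its complement induces a word in $K$. ``$X$ is an initial interval'' is first-order ($\forall x \forall y\ (y \le x \land x \in X) \Rightarrow y \in X$), and ``the substructure on $X$ satisfies $\varphi_L$'' is obtained by the standard relativisation of $\varphi_L$ to $X$ — every quantifier $\exists z$ becomes $\exists z \in X$, and similarly for set quantifiers $\exists Z \subseteq X$; the atomic relations $\le$ and $a(\cdot)$ are unchanged because they are inherited. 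One subtlety: the empty word is a legitimate ingredient (e.g.\ $\Sigma_i^* = \set\varepsilon$ appears in suffix-unambiguous expressions), so I should make sure the relativised sentence is correctly interpreted on the empty substructure; with the convention that the universe may be empty, this is unproblematic. Then $L^+$ is handled by guessing a set $X$ of ``block-start'' positions, requiring that $X$ is nonempty and contains the first position (if one exists), requiring that the blocks it induces each satisfy $\varphi_L$, and requiring that the order type of $X$ together with the block structure is ``discrete with a least element'' in the appropriate sense — more precisely, that consecutive members of $X$ exist, which for $L^+$ we must be careful about since a $\cc$-word can be split into a $\cc$-order-indexed family of blocks. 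Actually the cleanest formulation: a word is in $L^+$ iff it can be cut into a \emph{finite} or \emph{well-ordered-by-$\omega$} or more generally \emph{$\cc$}-indexed sequence of $L$-blocks; but $L^\omega$ and $L^\omegaop$ and the shuffle are exactly the building blocks that let us dispense with needing the general $\cc$-indexed concatenation as a separate primitive, so for $L^+$ I only need the \emph{finite} concatenation, which is expressible by: there is a set $X$ of cut positions that is finite (\mso can say ``$X$ has a maximum and between any two consecutive elements of $X\cup\{\text{first}\}$ there is no further element of $X$'' — finiteness of a linearly ordered set is \mso-expressible as ``every nonempty subset has a maximum'') and each induced block satisfies $\varphi_L$.

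For $L^\omega$ I would guess a set $X$ of cut positions of order type $\omega$ — expressible in \mso by ``$X$ is the least set containing its least element and closed under the successor-within-$X$ operation, and $X$ is cofinal in the whole word'' — and require each of the $\omega$ blocks to satisfy $\varphi_L$ via relativisation; $L^\omegaop$ is the mirror image, with $X$ of type $\omegaop$ (least set containing its greatest element, closed under predecessor-within-$X$, coinitial). Finally, for the shuffle of a finite family $\Ll = \{L_1,\dots,L_k\}$ with sentences $\varphi_1,\dots,\varphi_k$: guess a colouring of the positions into blocks, i.e.\ guess sets $B, C_1,\dots,C_k$ where $B$ marks the left endpoint of each block and $C_j$ tags which blocks belong to $L_j$; require (a) every block induces a word satisfying $\varphi_j$ for the unique $j$ with that block in $C_j$; (b) the quotient order on blocks has no endpoints and is dense — both dense and endpointless are first-order statements about the induced order on block-representatives; (c) for each $j$, the $L_j$-blocks are dense in the block order. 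Each of (a)–(c) is routine to write in \mso once we have a way to refer to the block containing a position (its nearest $B$-marked predecessor) and to compare two blocks (compare any representatives).

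The main obstacle, and the place I expect to spend the most care, is the \emph{relativisation lemma for \mso}: that for any sentence $\psi$ and any (first-order or \mso) formula $\delta(x)$ defining a set, there is a sentence $\psi^{\delta}$ true in a model exactly when the substructure induced on $\{x : \delta(x)\}$ satisfies $\psi$. This is standard but needs to be stated carefully here because our substructures are arbitrary induced sub-orders of $\cc$-words — which are again $\cc$-words (a sub-order of a countable linear order is a countable linear order, and the labelling restricts), so the induced structure really is the ordered model of a $\cc$-word, and $\mso$-satisfaction on it is faithfully simulated by the relativised sentence. The only thing to watch is that second-order quantifiers must be relativised to \emph{subsets} of the block, not arbitrary subsets of the whole word; writing $\exists Z\,(Z \subseteq X \land \dots)$ takes care of this. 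Once this lemma is in hand, each of the five constructions above is a short explicit sentence, and the proof is complete. I would present the relativisation lemma first as a self-contained remark, then do $LK$ in full detail, and indicate the others as variations, since they differ only in which \mso-expressible order type of cut-set is being quantified.
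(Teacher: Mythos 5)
Your high-level plan is the right one and matches the paper's: for each concatenation-type operation, existentially quantify over a partition of the positions into intervals (blocks), check an order-type condition on the block structure, and relativise the defining sentences of the ingredient languages to the individual blocks. You are also right to flag the relativisation step as the thing to state carefully, and the paper uses exactly that.

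However, there is a genuine gap in the mechanism you use to represent a factorisation in \mso. You propose to guess ``a set $X$ of cut positions'' (for $LK$, $L^+$, $L^\omega$, $L^\omegaop$) or ``a set $B$ marking the left endpoint of each block'' (for the shuffle). For $\cc$-words this fails: blocks of a factorisation need not have endpoints, and cuts need not be witnessed by any position. For example, $L$ may contain $\omegaop$-words, in which case a block of a word in $L^\omega$ can have no first position; and the blocks of a shuffle can themselves be dense and endpointless. There is then simply nothing to put in $X$ or $B$. This is precisely the difficulty that the paper's proof is built to overcome: it encodes a factorisation not via boundary positions but via a \emph{compatible colouring}, i.e.\ a single set $X$ (a two-colouring of positions) in which all blocks are monochromatic and between any two distinct same-coloured blocks there is a block of the other colour. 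Such a colouring always exists, and ``$x$ and $y$ lie in the same block'' is recovered as ``the interval from $x$ to $y$ is monochromatic'', which is \mso-definable from $X$. Once a factorisation can be referenced this way, all five operations are expressed as you intend, by imposing the appropriate order-type condition on the blocks (two blocks; finitely many; $\omega$; $\omegaop$; dense with no endpoints) and relativising.

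A secondary issue: your proposed \mso expression of finiteness of the block order, ``every nonempty subset has a maximum'', does not express finiteness --- it expresses reverse well-ordering, which $\omegaop$ also satisfies. The paper instead expresses ``finitely many blocks'' by saying there is a first and a last block and no proper subset of positions containing the first block is closed under taking successor blocks, which is the correct induction-style characterisation. You would need to fix this even if the cut-position representation had worked.
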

\begin{proof}
    For the Boolean operations, there is nothing to do, since Boolean operations are part of the logical syntax. For the remaining operations, which are all variants of concatenation, we observe that \mso can quantify over factorisations, as described below. 
    
    Define a \emph{factorisation} of a $\cc$-word  to be a partition of its positions into intervals, which are called \emph{blocks}. For a factorisation,   define a \emph{compatible colouring} to be any colouring of positions that uses  two colours, such that all blocks are monochromatic, and such that for every two distinct blocks with the same colour, there is a block between them  with a different colour. A compatible colouring always exists (there could be uncountably many choices). A factorisation can be recovered from any compatible colouring: two positions are in the same block if and only if the interval connecting them is monochromatic. A compatible colouring can be represented using a single set -- namely the positions with one of the two colours. This representation can be formalised by an  \mso formula $\varphi(x,y,X)$ which says that positions $x$ and $y$ are in the same block of the factorisation (i.e.~they have the same colour and they are not separated by any position with a different colour).

    Using the above representation, we show closure of \mso under the concatenations in the lemma. For $LK$, we simply say that there exists  a factorisation with two blocks, where the first block is in $L$ and the second block is in $K$. (To say that a block is in $L$ or $K$, we observe that \mso sentences can be relativised to a given interval.) For $L^+$, we say that there exists a factorisation with finitely many blocks, where all blocks are in $L$. Here is how we express  that there are finitely many blocks: there are first and last blocks, and there is no proper subset of positions that  contains the first block and is closed under adding successor blocks. For $L^\omega$, we do the same, except that there is no last block. For $L^\omegaop$, we use a symmetric approach. For the shuffle, we say that the blocks are dense and there is no first or last block.
\end{proof}

In the proof of Theorem~\ref{thm:mso-complete-cc-words}, we will only use the closure properties of \mso from the  above lemma. In particular, it will follow that every language recognised by a finite $\cc$-semigroup can be defined by a regular expression which uses single letters and the closure operations from the lemma.

To prove Theorem~\ref{thm:mso-complete-cc-words}, we will show that the multiplication operation of every finite $\cc$-semigroup can be defined in \mso, in the following sense. 
Let $S$ be a finite $\cc$-semigroup. We will show that  for every  $a \in S$, the language 
\begin{align*}
L_a = \set{ w \in S^\cc : w  \text{ has multiplication $a$}}
\end{align*}
is \mso definable. This will immediately imply that every language recognised by a homomorphism into $S$ is \mso definable, thus proving the theorem.

The proof is by induction on the  position of $a$ in the infix ordering. Fix for the rest of this section an infix class $J \subseteq S$.  We partition $S$ into two parts: 
\begin{align*}
    \myunderbrace{\text{easy elements}}{proper prefixes of $J$} \quad \cup \quad  \myunderbrace{\text{hard elements}}{the rest}.
\end{align*}
The induction hypothesis says $L_a$ is \mso definable for every easy  $a \in S$. We will prove the same thing for every  $a \in J$.

We begin with an observation about smooth multiplications, which follows from the Ramsey argument that was used in Theorem~\ref{thm:buchi-determinisation}. We say that $w \in S^\cc$ is \emph{$J$-smooth} if the multiplication of every \emph{finite} infix  $w$ is  in $J$. This is a lifting to infinite words of the notion of smoothness that was used in Section~\ref{sec:fact-for}. By the same proof as in  Claim~\ref{claim:smooth} from that section, a $\cc$-word is $J$-smooth if and only if all of its infixes of length at most two are $J$-smooth.   The following lemma describes the multiplication of certain $J$-smooth words. 

\begin{lemma}\label{lem:smooth-limit}  Let   $e \in J$ be idempotent, and let  $w \in J^\cc$ be $J$-smooth. If $w$ is an $\omega$-word, then its  multiplication is $ae^\omega$, where $a$ depends only on  $e$ and the prefix class of  the first letter in $w$. If $w$ is an   $(\omegaop + \omega)$-word, i.e.~its positions are ordered like the integers,  then  its multiplication is $e^\omegaop e^\omega$. 
    \end{lemma}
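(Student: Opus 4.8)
The statement is purely an algebraic fact about finite $\cc$-semigroups: once $e \in J$ is idempotent and $w$ is $J$-smooth, the multiplication of $w$ is forced, and moreover it depends only on $e$ and the prefix class of the first letter (in the $\omega$-word case) or nothing beyond $e$ (in the $(\omegaop+\omega)$-word case). So the plan is to reduce an arbitrary $J$-smooth $\omega$-word to a word of the form $a_0 e e e \cdots$ and then show $a_0 e^\omega$ depends only on the prefix class of the first letter and on $e$; and analogously in the bi-infinite case.

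First I would treat the $\omega$-word case. Let $w = c_1 c_2 c_3 \cdots$ with each $c_i \in J$, and suppose $e \in J$ is idempotent. Apply the Ramsey argument of Lemma~\ref{lem:ramsey-buchi} (or directly Ramsey's Theorem A, Exercise~\ref{ex:ramsey}) to the colouring of cuts of $w$ by the multiplication of the finite infix between them: this yields a factorisation $w = w_0 w_1 w_2 \cdots$ where all $w_i$ with $i \ge 1$ have the same multiplication $f \in S$, and $f$ is idempotent (from $\overbrace{\underbrace{w_i}_{f}\underbrace{w_{i+1}}_{f}}^{f}$). Since $w$ is $J$-smooth, each $w_i$ is a finite infix with multiplication in $J$, so $f \in J$; and the multiplication of $w_0$ is also in $J$, call it $a_0$. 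Thus the multiplication of $w$ equals $a_0 f^\omega$ where $a_0, f$ lie in $J$ and $f$ is idempotent. Now $f$ and $e$ are both idempotents in the same infix class $J$, so by the Egg-box Lemma they lie in the same prefix class. I would then show $f^\omega = e^\omega$: since $f$ is a prefix of $e$ and $e$ is a prefix of $f$ there are $x,y$ with $e = fx$, $f = ey$; a short computation using idempotence of $e$ and $f$ (in the spirit of the Egg-box Lemma proofs, or one can instead appeal to the fact that $f$ is a prefix of $e^\omega$ and vice versa and use the associativity axioms $ae^\omega = e^\omega$ when $a$ is a suffix, etc.) gives $f^\omega = e f^\omega$ and $e^\omega = f e^\omega$, from which $f^\omega = e^\omega$. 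Finally $a_0 e^\omega$: write $a_0 = a_0' c_1$ if one likes, but the cleanest route is to note $a_0 e^\omega = a_0 e \cdot e^\omega$, and $a_0 e$ is an element of $J$ with $c_1$ as a prefix (since $a_0$ has $c_1$ as a prefix and left-multiplication preserves this), hence $a_0 e$ lies in the prefix class of $c_1$ intersected with the suffix class of $e$, i.e.\ in a fixed $\Hh$-class determined by the prefix class of $c_1$ and by $e$; and since $\Hh$-classes containing an idempotent are groups and all the relevant data is fixed, $a_0 e^\omega = a \cdot e^\omega$ for an $a$ depending only on $e$ and the prefix class of $c_1$. (In fact once $a_0 e$ is pinned to that $\Hh$-class, and $e^\omega$ absorbs on the right whatever the group structure contributes, one checks $a_0 e \cdot e^\omega$ is independent of which element of that $\Hh$-class $a_0 e$ happens to be, using that the $\Hh$-class is a group with identity an idempotent prefix-equivalent to $e$.) Set $a \eqdef a_0 e^\omega$; this is the claimed element.

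For the $(\omegaop + \omega)$-word case, I would argue similarly but from both sides. Writing the positions as $\cdots c_{-1} c_0 c_1 \cdots$, apply the two-sided Ramsey argument: factor the forward half into blocks of constant idempotent multiplication $e^{+}$ and the backward half into blocks of constant idempotent multiplication $e^{-}$, with both $e^{\pm} \in J$ by $J$-smoothness. Then the multiplication of $w$ is $(e^-)^{\omegaop} (e^+)^\omega$. As above $(e^+)^\omega = e^\omega$ and $(e^-)^{\omegaop} = e^{\omegaop}$ because all three idempotents lie in the same infix class $J$ and so in the same prefix (resp.\ suffix) class, giving the associativity identities $(e^+)^\omega = e (e^+)^\omega$ etc. Hence the multiplication is $e^{\omegaop} e^\omega$, with no dependence on the first letter at all (there is no first letter).

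\textbf{Main obstacle.} The genuinely delicate point is the claim that $a_0 e^\omega$ depends only on $e$ and the prefix class of the first letter, not on $a_0$ itself. This needs the Egg-box Lemma together with the $\Hh$-class Lemma: one must check that $a_0 e$ lands in a group $\Hh$-class, that right-multiplication by $e^\omega$ factors through that group's identity, and that the resulting value is constant on the $\Hh$-class. I expect this to require carefully tracking which associativity axiom of the $\cc$-semigroup ($ae^\omega = e^\omega$ for $a$ a suffix of $e$, and $(ab)^\omega = a(ba)^\omega$) is invoked at each step, since those are the only tools available to collapse infinite multiplications. The reduction to the Ramsey-factored form and the identities $f^\omega = e^\omega$, $(e^-)^{\omegaop} = e^{\omegaop}$ are comparatively routine given Lemma~\ref{lem:ramsey-buchi} and the Egg-box Lemma.
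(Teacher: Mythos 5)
Your high-level plan (Ramsey decomposition, then convert the Ramsey idempotent $f$ to the given $e$, then argue the coefficient depends only on the prefix class) is a reasonable one and is in the spirit of the paper, but the execution contains a genuine error at the step you yourself flag as routine, and your "main obstacle" step also has unjustified assumptions.

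\textbf{The false step.} You write that $f$ and $e$ ``are both idempotents in the same infix class $J$, so by the Egg-box Lemma they lie in the same prefix class.'' This is not a consequence of the Egg-box Lemma, and it is false in general: an infix class is partitioned into a grid of $\Hh$-classes, each $\Hh$-class contains at most one idempotent, and idempotents can sit in distinct rows (prefix classes). A concrete example is the monoid of total functions on $\set{1,2}$: the two constant maps $c_1,c_2$ are idempotent and infix-equivalent (conjugate by the transposition), but $c_1 S = \set{c_1}$ and $c_2 S = \set{c_2}$, so they lie in distinct prefix classes. Consequently your derivation of $f^\omega = e^\omega$ does not go through. (In fact $e^\omega = f^\omega$ should not be expected to hold for arbitrary idempotents $e,f \in J$; note that the paper's remark after the lemma asserts only that $e^\omegaop e^\omega$ is independent of $e$, not $e^\omega$ alone.) As a smaller point, even if $e,f$ were in the same prefix class, the two relations $f^\omega = e f^\omega$ and $e^\omega = f e^\omega$ you propose to derive do not by themselves give $f^\omega = e^\omega$; the correct route in that special case is via $(ef)^\omega$, using $ef=f$, $fe=e$, and $(ab)^\omega = a(ba)^\omega$.

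\textbf{Issues in the ``main obstacle'' step.} You want to pin $a_0 e$ to the $\Hh$-class (prefix class of $c_1$) $\cap$ (suffix class of $e$). But you have not argued that $a_0 e \in J$ --- multiplying two elements of $J$ can drop below $J$ in the infix order, so the prefix/suffix-class reasoning is not available without this. (The element that is automatically in $J$ is $a_0 f$, the multiplication of $w_0 w_1$, not $a_0 e$.) You also assert that this $\Hh$-class is a group ``since $\Hh$-classes containing an idempotent are groups and all the relevant data is fixed''; but there is no reason for this $\Hh$-class to contain an idempotent unless $c_1$ and $e$ happen to be prefix-equivalent. The absorption argument ``$e^\omega$ absorbs on the right whatever the group structure contributes'' works when the element multiplied on the left lies in the $\Hh$-class \emph{of $e$} (so that some power of it equals $e$, giving $g e^\omega = g^\omega = e^\omega$), but that is exactly the situation you have not arranged.

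\textbf{What the paper does instead.} The paper avoids both problems by a different order of reduction. It first proves the special case where the first letter of $w$ \emph{is} $e$: then Ramsey gives $af^\omega$ with $ea=a$, and the decomposition $f = xeey$ (which exists because $e,f$ are merely infix-equivalent --- no claim of prefix-equivalence is needed) is used to regroup the $\omega$-word so that the multiplication becomes $g\,h^\omega$ with $g = eaxe$ and $h = eyxe$. The crucial point is that $g$ and $h$ both \emph{begin and end with $e$}, and are infixes of elements of $J$, hence by the Egg-box Lemma lie in the $\Hh$-class of $e$, which is a group with identity $e$. Then some power of $g$ and of $h$ equals $e$, and the $\cc$-semigroup associativity gives $g^\omega = h^\omega = e^\omega$ and $g h^\omega = e^\omega$. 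The general first letter $b$ is then handled by writing $b = aex$ with $a,x \in J$ (choosing $a = ue$ and $x = ev$ to stay inside $J$) and applying the special case to the $J$-smooth word $(e)(x)(c_2)(c_3)\cdots$; the element $a$ can be fixed once per prefix class of $b$. This regrouping trick --- forcing the block idempotent to lie in $G_e$ by bracketing with $e$ on both sides, rather than trying to prove a spurious identity between $e^\omega$ and an unrelated $f^\omega$ --- is the idea missing from your argument.
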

    Since the lemma is true for every choice of idempotent $e \in J$, it follows that $e^\omegaop e^\omega$ does not depend on the choice of $e$. In particular, 
    \begin{align*}
        e^\omegaop e^\omega = f^\omegaop f^\omega 
    \end{align*}
    holds for every two idempotents $e,f$ in the same infix class.
    \begin{proof}   The main observation is the following claim. 

        \begin{claim}\label{claim:group-omega}
           If $w$ is an $\omega$-word that is  $J$-smooth and has first letter $e$, then its multiplication is  $e^\omega$. 
        \end{claim}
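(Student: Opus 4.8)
The plan is to run the Ramsey argument from the proof of Lemma~\ref{lem:ramsey-buchi} and then finish with a short computation inside the finite semigroup $S$ using the Egg-box Lemma together with the \lale associativity law $(ab)^\omega = a(ba)^\omega$. First I would apply Ramsey's Theorem (Exercise~\ref{ex:ramsey}) to the colouring which assigns to each pair of cuts of $w$ the multiplication of the infix between them. Since $w$ is an $\omega$-word, an infinite set of cuts is automatically cofinal, so we obtain a factorisation $w = w_0 w_1 w_2 \cdots$ into consecutive finite infixes with $\mu(w_i) = f$ for all $i \ge 1$, for a single element $f \in S$. Exactly as in Lemma~\ref{lem:ramsey-buchi}, $f$ is idempotent, and after replacing $w_0$ by $w_0 w_1$ we may assume in addition that $\mu(w_0) f = \mu(w_0)$; put $a := \mu(w_0)$, so that $\mu(w) = a f^\omega$. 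Because $w$ is $J$-smooth, every $w_i$ and every finite union of consecutive pieces is a finite infix of $w$, so $a$, $f$ and $af = a$ all lie in $J$; and since $w_0$ begins with the letter $e$, we also get $ea = a$.

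Next I would do the Green's-relations bookkeeping. From $ea = a$ the element $e$ is a prefix of $a$; as $e$ and $a$ both lie in the infix class $J$, item~\ref{Egg-box:prefix-incomparable} of the Egg-box Lemma provides $y \in S$ with $ay = e$ (in the degenerate case $a = e$ just take $y := e$). Set $g := ya$. Using $ay = e$ and $ea = a$ one checks in one line that $g$ is idempotent and that $ag = a$; moreover, since $g = ya$ and $ag = a$, the elements $g$ and $a$ are suffixes of each other, so $g$ lies in the suffix class of $a$. On the other hand, from $af = a$ the idempotent $f$ is a suffix of $a$, hence — both being in $J$ — $f$ also lies in the suffix class of $a$ by the suffix version of item~\ref{Egg-box:prefix-incomparable}. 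So $f$ and $g$ are idempotents in a common suffix class, which forces $fg = f$ and $gf = g$.

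Finally I would assemble the computation. The law $(ab)^\omega = a(ba)^\omega$ gives $e^\omega = (ay)^\omega = a(ya)^\omega = a g^\omega$, and, using $gf = g$ and $fg = f$, it also gives $g^\omega = (gf)^\omega = g(fg)^\omega = g f^\omega$. Hence
\[
e^\omega \;=\; a g^\omega \;=\; a(g f^\omega) \;=\; (ag) f^\omega \;=\; a f^\omega \;=\; \mu(w),
\]
which is the claim. I do not expect a genuine obstacle here: the argument is essentially the Ramsey lemma plus the rectangular structure of $J$. The one point that needs care is arranging the Ramsey factorisation so that $af = a$ holds on the nose (not merely $af \in J$) — this is what makes the final chain of equalities collapse — and keeping the asymmetric prefix/suffix conventions straight when invoking the Egg-box Lemma.
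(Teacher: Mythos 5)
Your proof is correct. It starts the same way as the paper's — Lemma~\ref{lem:ramsey-buchi} gives $\mu(w) = af^\omega$ with $a, f \in J$, $f$ idempotent, $af = a$, and $ea = a$ because $w$ begins with the idempotent $e$ — but it finishes with a genuinely different Green's-relations argument. The paper uses the infix-equivalence of $f$ and $e$ to write $f = xeey$, rewrites $af^\omega = (eaxe)(eyxe)^\omega$ via the law $(ab)^\omega = a(ba)^\omega$, and argues that $g := eaxe$ and $h := eyxe$ lie in the group which is the prefix-and-suffix class of $e$, concluding $g^\omega = h^\omega = e^\omega$. You instead stay on the suffix-class side: from Egg-box item~\ref{Egg-box:prefix-incomparable} you get $y$ with $ay = e$, set $g := ya$, and the two relations $g = ya$, $ag = a$ immediately make $g$ and $a$ mutual suffixes — so $g$ is an idempotent in the suffix class of $a$ (and hence of $f$, since $af = a$ puts $f$ there too). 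The idempotent relations $fg = f$, $gf = g$ then drive a single chain of $\omega$-law rewrites. One thing your route buys is transparency on a point the paper leaves implicit: to place $g$ and $h$ in the group around $e$, the paper tacitly needs $g, h \in J$, which requires a separate Green's-relations check; your mutual-suffixes observation gives $g \in J$ for free.
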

        \begin{proof}
            By  Lemma~\ref{lem:ramsey-buchi}, the  multiplication of $w$ is equal to  $af^\omega$, for some  $a,f$. Since $w$ is   $J$-smooth,    $a$ and $f$ belong to $J$. Since the first letter of $w$ is $e$, we have $ea=a$.  Since $f$ is infix equivalent to $e$, it admits a decomposition  $f=xeey$. Therefore 
            \begin{align*}
            af^\omega = ea(xeey)^\omega = \underbrace{eaxe}_g(\underbrace{eyxe}_h)^\omega.
            \end{align*}
            We now continue as in the proof of Lemma~\ref{lem:det-buchi-char}: because $g,h,e$ are in the same group, then $g^\omega = e^\omega = h^\omega$, and therefore $gh^\omega =e^\omega$.
        \end{proof}

    The claim immediately proves the lemma. Indeed, consider a $J$-smooth $\omega$-word with first letter $b$. The first letter admits a decomposition $b=aex$ for some $a,x \in J$, and furthemore $a$ depends only on the prefix class of $b$.  By the above claim,  the multiplication of every $J$-smooth $\omega$-word that begins with $b$  is equal to $ae^\omega$. A similar argument works when the positions are ordered as the integers: every $J$-smooth $(\omegaop + \omega)$-word has the same multiplication as a smooth $(\omegaop + \omega)$-word with an infix $ee$, and the latter has multiplication $e^\omegaop e^\omega$ thanks to the claim and its symmetric version for $\omegaop$.
    \end{proof}

    We say that a colouring
    $\lambda : S^\cc \to C$ which uses a fintie set $C$ of colours 
     is \mso definable on a subset $L \subseteq S^\cc$ if there exists an \mso definable colouring that agrees with $\lambda$ on inputs from $L$.
    The strategy for the rest of the proof is as follows.
    Define $L_J \subseteq S^\cc$ to be the $\cc$-words that have multiplication in $J$. We first show in Lemma~\ref{lemma:approximate-prefix-class} that the colouring
    \begin{align*}
        w \in S^\cc \quad \mapsto \quad \text{prefix class of the multiplication of $w$}
        \end{align*}
        is \mso definable on $L_J$. Next, in Lemma~\ref{lem:product-approximated},  we use this result about prefixes and a symmetric one for suffixes  to show  that the multiplication operation is \mso definable  on $L_J$. Finally, in Lemma~\ref{lem:lj-definable} we show  that the language  $L_J$  is definable in \mso. We  then conclude as follows: a $\cc$-word has multiplication $a \in J$ if and only if it belongs to $L_J$, and the colouring from Lemma~\ref{lem:product-approximated}  maps it to $a$.  It remains to prove the lemmas.

\begin{lemma}\label{lemma:approximate-prefix-class}
    The following colouring is \mso definable on $L_J$: 
    \begin{align*}
    w \in S^\cc \quad \mapsto \quad \text{prefix class of the multiplication of $w$}.
    \end{align*}
\end{lemma}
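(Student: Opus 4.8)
The plan is to show that the prefix class of the multiplication of a $\cc$-word $w$ that multiplies into $J$ is determined by \mso-definable information, and then invoke the closure of \mso-definable languages under Boolean combinations and the regular operations from Lemma~\ref{lem:shuffle-expressions}. Fix an infix class $J$ and a $\cc$-word $w$ with multiplication in $J$. The rough idea: cut $w$ along a suitable equivalence relation (analogous to the $\sim$-relation in the proof of Lemma~\ref{lem:shelah-induction}) into blocks that are $J$-smooth, and whose quotient linear order is either a singleton, finite, an $\omega$-word, an $\omegaop$-word, an $(\omegaop+\omega)$-word, or something containing a shuffle. In each block the multiplication lands in $J$ again (or is an easy element for a shorter infix class, handled by the induction hypothesis applied to the easy elements). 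The key point is that for a $J$-smooth word, Lemma~\ref{lem:smooth-limit} says the multiplication of an $\omega$-tail is $ae^\omega$ where $a$ depends only on $e$ and the prefix class of the first letter, and the multiplication of a bi-infinite $J$-smooth word is the constant $e^\omegaop e^\omega$; so only \emph{prefix-class-level} information of the leading block, together with finitely many Green's-relation data about $J$, is needed to compute the prefix class of the whole product.

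\textbf{Key steps in order.} First I would set up the factorisation: define, for the given $w$, an equivalence relation on positions that lumps together positions whose connecting infix is ``$J$-smooth and already handled'', exactly in the spirit of the $\sim$ in Lemma~\ref{lem:shelah-induction}, and check (using the Ramsey argument and countability, as there) that each class induces a $J$-smooth word and the quotient order is dense-or-trivial. Crucially, this factorisation is \mso-definable: by the compatible-colouring trick in the proof of Lemma~\ref{lem:shuffle-expressions}, \mso can quantify over such a factorisation via a single set variable. Second, I would compute the prefix class of the product from the factorisation: the prefix class of $w$ equals the prefix class of (an initial finite product of block-multiplications) times (the common idempotent-tail contribution $e^\omega$ when the quotient order has an $\omega$-like tail, or $e^\omegaop e^\omega$ for bi-infinite pieces, or a shuffle contribution $\{\ldots\}^\eta$ for dense pieces). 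Since by the Egg-box Lemma distinct prefix classes in $J$ are incomparable and prepending an element that keeps one in $J$ moves one along a prefix class, the whole computation factors through finitely many prefix classes and depends only on the prefix class of the first block plus bounded Green's-relation bookkeeping about $J$ and $S$. Third, I would translate each piece of this bookkeeping into \mso: ``there is a first block and its induced word lies in $L_b$ for some easy or $J$-element $b$ with prefix class $P$'' is \mso-expressible by the induction hypothesis (for easy $b$) and by an inner induction on the infix ordering (for $b\in J$, one only needs the prefix class, which is exactly what we are computing, so we organise this as a fixed-point / finite case distinction rather than a circular appeal); ``the remaining blocks form an $\omega$-word / $\omegaop$-word / bi-infinite word / a shuffle, all $J$-smooth'' is \mso-expressible using the block-finiteness and density formulas from Lemma~\ref{lem:shuffle-expressions} together with the ``all length-$\le 2$ infixes are $J$-smooth'' characterisation of $J$-smoothness noted before Lemma~\ref{lem:smooth-limit}.

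\textbf{Main obstacle.} The subtle point — and the step I expect to be the real work — is handling the apparent circularity: the colouring we want to \mso-define (prefix class of the product) is exactly the data we need in order to evaluate the contribution of the \emph{leading} block when that block's multiplication itself lies in $J$. The resolution is to note that the leading block is a \emph{proper prefix-infix} situation only in a controlled way: either it is itself $J$-smooth (then Lemma~\ref{lem:smooth-limit} and the prefix class of \emph{its} first letter suffice, and one recurses on a strictly smaller linear-order structure), or it decomposes further so that the recursion is on structurally simpler $\cc$-words. Making this recursion well-founded — by simultaneously inducting on the infix ordering (for the ``easy'' vs. ``hard'' split) and on the order-theoretic complexity of the factorisation (the $\sim$-classes of a non-simple word are strictly coarser) — is where the care is needed. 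Once that is pinned down, the final assembly is routine: the set of $w$ with prefix class $P$ is a Boolean combination of \mso-definable languages built by $LK$, $L^+$, $L^\omega$, $L^\omegaop$ and shuffles, hence \mso-definable on $L_J$ by Lemma~\ref{lem:shuffle-expressions}.
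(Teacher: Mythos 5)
Your proposal correctly identifies the central obstacle — the apparent circularity in needing the prefix class of a leading ``$J$-block'' in order to compute the prefix class of $w$ — but the proposed resolution does not hold up. You suggest recursing ``on a strictly smaller linear-order structure'' or ``on structurally simpler $\cc$-words,'' but there is no well-founded measure here: a $J$-smooth leading block of $w$ is itself an arbitrary $\cc$-word whose multiplication lies in $J$, and nothing prevents it from being just as complicated as $w$ itself. The appeal to the $\sim$-relation of Lemma~\ref{lem:shelah-induction} does not help either, because that relation partitions a word into ``simple'' (all-infixes-in-$L$) blocks whose quotient order is dense, whereas what you would actually need for a well-founded recursion is a quotient order that is strictly simpler than the original, and the dense case is precisely the one where your outer recursion would not terminate. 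So the ``main obstacle'' paragraph names the problem but does not dispose of it, and without a replacement for that step the argument does not go through.

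The paper avoids this circularity by choosing a different boundary to cut along. Instead of factorising $w$ into $J$-smooth blocks, it takes the maximal \emph{easy} prefix interval $X$ (easy meaning a proper infix of $J$) and the complementary suffix $Y$. The multiplication $a$ of $X$ is an easy element, hence \mso-computable directly from the outer induction on the infix ordering, with no reference to prefix classes in $J$ at all; this is the key move that breaks the circularity. One then observes that it suffices to compute the multiplication $b$ of \emph{any} nonempty prefix of $Y$: by maximality of $X$, $ab$ is hard, and since $ab$ is a prefix of the multiplication of $w\in L_J$ it lands in $J$, so by item~\ref{Egg-box:prefix-incomparable} of the Egg-box Lemma the prefix class of $w$ equals that of $ab$. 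Computing such a $b$ only requires one-step arguments: either $Y$ has a first position, or a Ramsey decomposition $Y=\cdots\cup Y_2\cup Y_1\cup Y_0$ produces a prefix multiplying to $c^\omegaop$ for some easy $c$, or (when $c$ is hard) Lemma~\ref{lem:smooth-limit} yields the fixed value $e^\omegaop$ for an arbitrary idempotent $e\in J$, independently of any prefix-class bookkeeping. None of these steps requires knowing the prefix class of a $J$-element ``from the inside.'' That is the piece your proposal is missing, and it is the actual content of the lemma.
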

\begin{proof}
    We write $H \subseteq S^\cc$ for the $\cc$-words which multiply to a  hard element, i.e.~an element that is at least as big as $J$ in the infix ordering, or incomparable with $J$. This language is definable in \mso, as the complement of the language of $\cc$-words that multiply to an easy element, which is definable by induction assumption. 
    For an interval in $w$, define its multiplication to be the multiplication of the infix of $w$ that is induced by the interval. An interval is called \emph{easy} if its multiplication is easy, otherwise it is called \emph{hard}. The family of easy intervals is closed under subsets. By the induction assumption, we can check in \mso if an interval is easy or hard.  An interval is called \emph{almost easy} if it all of its proper sub-intervals are easy. 
    \begin{claim}\label{claim:almost-easy}
        The multiplication operation is \mso definable on almost easy intervals.
    \end{claim}
    \begin{proof}
        If there is a last position, then the multiplication can be easily computed: remove the last position, compute the multiplication, and then add the last position. Otherwise, if there is no last position,  then we can use Lemma~\ref{lem:ramsey-buchi} to see that  an almost easy interval has multiplication $b \in S$ if and only if it belongs to 
        \begin{align*}
        L_a (L_e)^\omega \qquad \text{for some easy $a,e$ such that $ae^\omega = a$.}
        \end{align*}
        The above condition can be formalised in \mso thanks to the induction assumption and Lemma~\ref{lem:shuffle-expressions}.
    \end{proof}

    Define a \emph{prefix interval} to be a   interval that is downward closed in the ordering of positions. We will compute in \mso the prefix class of some nonempty hard prefix interval; if the $\cc$-word is in $L_J$ then this hard prefix has the same  prefix class as  $w$. 
    Define $X$ to be the union of all easy prefix intervals; if there is no easy prefix interval then this union is empty. This union is an almost easy interval, and therefore its multiplication, call it $a$, can be computed using Claim~\ref{claim:almost-easy}. If $a$ is hard, then we are done, since $a \in J$ by the assumption that $w \in L_J$, and therefore thanks to the Egg-box lemma we can conclude that the prefix class for the  multiplication of $w$ is the same as for $a$.

    Suppose now that $a$ is not hard. Define $Y$ to be the suffix interval which is the complement of $X$.   By definition of $a$, if $b$ is the multiplication of some prefix of $Y$, then $ab$ is hard, and therefore by the same argument as in the previous paragraph, the prefix class of the multiplication of $w$ is the same as that for $ab$. Therefore, it remains to compute in \mso the multiplication of some (does not matter which one) prefix of $Y$.  If $Y$ has a first position, then we can simply use the letter in that position. Otherwise, by the Ramsey Theorem, $Y$ can be decomposed as a union of consecutive intervals 
    \begin{align*}
    Y = \cdots \cup Y_2 \cup Y_1 \cup Y_0
    \end{align*}
    such that all of the intervals $Y_1,Y_2,\ldots$ have the same multiplication, call it $c$.
    If $c$ is easy, which can be defined in \mso thanks to the induction assumption and the closure properties from Lemma~\ref{lem:shuffle-expressions}, we know that $Y$ has a prefix which multiplies to $c^\omegaop$. Otherwise, $c$ is hard, and therefore by Lemma~\ref{lem:smooth-limit}, we know that $Y$ has a prefix which multiplies to $e^\omegaop$, where $e$ is some arbitrarily chosen idempotent from $J$.
\end{proof}
In the above lemma, we have shown how to compute in \mso the prefix class of a $\cc$-word, conditionally under the assumption that its multiplication is in $J$. A symmetric argument works for suffix classes. Now we use that result to compute the actual value, still conditionally under the assumption that the multiplication is in $J$.
\begin{lemma}\label{lem:product-approximated}
    The multiplication operation of $S$ is \mso definable on $L_J$.
\end{lemma}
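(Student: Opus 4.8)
Here is how I would prove that the multiplication operation of $S$ is \mso definable on $L_J$.

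\medskip

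The plan is to reduce the multiplication of an arbitrary $\cc$-word $w \in L_J$ to data that has already been shown to be \mso-computable: the prefix class of $w$ (Lemma~\ref{lemma:approximate-prefix-class}), the symmetric statement about the suffix class of $w$, and the multiplication of almost easy intervals (Claim~\ref{claim:almost-easy}). First I would fix $w \in L_J$; by the Egg-box Lemma, once we know both the prefix class and the suffix class of $\mu(w)$ inside $J$, the element $\mu(w)$ is pinned down to a single $\Hh$-class, so it remains to locate $\mu(w)$ within that $\Hh$-class. The idea is to cut $w$ into three consecutive intervals: a hard prefix $P$, a ``core'' interval $M$, and a hard suffix $Q$, chosen so that $\mu(P)$ and $\mu(Q)$ can be computed in \mso, and so that the multiplication of the remaining core is determined by its prefix and suffix classes alone.

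\medskip

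Concretely, I would define $P$ to be the union of all easy prefix intervals of $w$ (as in the proof of Lemma~\ref{lemma:approximate-prefix-class}); this is an almost easy interval, so its multiplication $a$ is \mso-computable by Claim~\ref{claim:almost-easy}, and since $w \in L_J$ and $a$ is a prefix of $\mu(w) \in J$, we have $a \in J$. Symmetrically let $Q$ be the union of all easy suffix intervals, with \mso-computable multiplication $b \in J$. Let $M$ be the complementary middle interval, so $\mu(w) = a \cdot \mu(M) \cdot b$. Now I claim $\mu(M)$ depends only on \mso-definable data. The key point is that every finite infix of $M$ which is disjoint from the endpoints is hard (if it were easy, it could be absorbed into $P$ or $Q$), hence every finite infix of $M$ multiplies into $J$ up to taking one more letter on each side; combined with the fact that $a M$ is hard and $M b$ is hard, one shows that $\mu(M)$ lies in $J$ and that $a \cdot \mu(M) \cdot b$ is in the $\Hh$-class determined by the prefix class of $\mu(w)$ and the suffix class of $\mu(w)$. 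Since all the $\Hh$-classes in $J$ have the same size and $S$ is finite, there are only finitely many candidate values, and I would use Lemma~\ref{lemma:approximate-prefix-class} together with its suffix-mirror to select the prefix and suffix classes, and then a bounded case analysis (using the multiplication table of $S$, which is a constant not part of the input) to read off $a \cdot \mu(M) \cdot b = \mu(w)$. All the intermediate objects — the intervals $P$, $M$, $Q$, their multiplications, the prefix/suffix classes — are \mso-definable by the induction assumption and the closure properties of Lemma~\ref{lem:shuffle-expressions}, so the whole colouring $w \mapsto \mu(w)$ is \mso-definable on $L_J$.

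\medskip

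The main obstacle I anticipate is handling the core interval $M$ cleanly: one must argue that $\mu(M)$ is genuinely determined by bounded information and not by some global feature of $M$, and this requires a Ramsey-style decomposition of $M$ (as in Lemma~\ref{lem:smooth-limit} and the proof of Lemma~\ref{lemma:approximate-prefix-class}) to deal with the case where $M$ has no first or last position, or where $M$ is itself built from a dense shuffle of hard pieces. The right way through is to observe that $M$ is ``$J$-smooth modulo endpoints'' and invoke Lemma~\ref{lem:smooth-limit} and the Egg-box Lemma to collapse all the possibilities into the single $\Hh$-class fixed by the prefix and suffix classes of $\mu(w)$; once that collapse is established, the \mso formula is just a finite disjunction over $S$ of conditions of the form ``$P$ has multiplication $a$, the prefix class is $[\,\cdot\,)$, the suffix class is $(\,\cdot\,]$, $Q$ has multiplication $b$, and $a c b = $ the target element'', which is expressible by the induction hypothesis.
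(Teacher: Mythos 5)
Your proposal is not correct; there are several genuine gaps.

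First, the claim ``since $w \in L_J$ and $a$ is a prefix of $\mu(w) \in J$, we have $a \in J$'' is false. Being a Green's prefix of an element of $J$ only places $a$ in the prefix-upward-closure of $J$, which consists of $J$ together with the proper prefixes of $J$ — i.e.\ the \emph{easy} elements. The union of all easy prefix intervals can itself multiply to an easy element; the paper's proof of Lemma~\ref{lemma:approximate-prefix-class} has an explicit case split (``Suppose now that $a$ is not hard'') precisely to handle this. Second, your assertion that ``every finite infix of $M$ disjoint from the endpoints is hard (if it were easy, it could be absorbed into $P$ or $Q$)'' is also false: an easy infix sitting strictly inside $M$ does not extend to an easy prefix interval or easy suffix interval of $w$, so there is nothing to absorb it into. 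Third, and most importantly, the endgame does not close. Knowing the prefix class and the suffix class of $\mu(w)$ only pins down its $\Hh$-class; by the $\Hh$-class Lemma, an $\Hh$-class inside $J$ can be a nontrivial group, in which case it contains several elements, and a ``bounded case analysis over the multiplication table of $S$'' cannot distinguish among them. You never actually compute $\mu(M)$; you only narrow it to an $\Hh$-class, which is not enough.

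The paper's route avoids all of this. It does not decompose $w$ as (easy prefixes)$\cdot M\cdot$(easy suffixes); instead it proves a factorisation $w = w_1 w_2 w_3$ where $w_1$ is either empty or lies in $H^\omega$ (an $\omega$-concatenation of pieces with hard multiplication), $w_2$ is a \emph{finite} concatenation of almost easy $\cc$-words, and $w_3$ is either empty or in $H^{\omegaop}$. The reason $\mu(w_1)$ is determined by its prefix class is not a general fact about $\Hh$-classes; it is the structural collapse of $J$-smooth $\omega$-words given by Lemma~\ref{lem:smooth-limit}, which uses the group structure inside $J$ to force the $\omega$-power to equal $e^\omega$. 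The piece $\mu(w_2)$ is then computed by Kleene's theorem, relying crucially on the finiteness of the concatenation. Your middle interval $M$ is a strictly larger and less structured object than the paper's $w_2$; without the $H^\omega$/finite/$H^{\omegaop}$ trichotomy there is no tool for computing its multiplication, and the ``$J$-smooth modulo endpoints'' heuristic you invoke is not a lemma in the paper and would need its own (nontrivial) proof.
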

\begin{proof}
    Let $w \in L_J$. We use the terminology about intervals from the proof of Lemma~\ref{lemma:approximate-prefix-class} .
    %  Define a  \emph{limit prefix} of $w$ to be any prefix interval which induces a $\cc$-word in $H^\omega$.
    %  Symmetrically, a limit suffix is a suffix interval which induces a  $\cc$-word in $H^\omegaop$.
    
    \begin{claim}
        There exists a factorisation $w = w_1 w_2 w_3$ such that: 
        \begin{itemize}
            \item $w_1$ is either empty or in $H^\omega$;
            \item $w_2$ is a finite concatenation of almost easy $\cc$-words;
            \item $w_3$ is either empty or in $H^{\omegaop}$.
        \end{itemize}
    \end{claim}
    Here is a picture of the factorisation, in the case when $w_1$ and $w_3$ are nonempty:
    \mypic{40}
    \begin{proof}
        Define a  \emph{limit prefix} of $w$ to be any prefix interval which induces a $\cc$-word in $H^\omega$. 
        Limit prefixes are closed under (possibly infinite) unions. If there is a limit prefix, then there is a maximal one, namely the  union of all limit prefixes. Define $w_1 \in H^\omega$ to be the maximal limit prefix of $w$ (if no limit prefix exists, then $w_1$ is empty).
        Remove  the prefix $w_1$, and to the remaining part of the word apply a symmetric process, yielding a suffix $w_3 \in H^\omegaop$ and a remaining part $w_2$.  This is the factorisation in the statement of the claim. 
        
        It remains to show that $w_2$ is a finite concatenation of almost easy $\cc$-words.
        By construction, the remaining part $w_2$ does not have any prefix in $H^\omega$, nor does it have any suffix in $H^\omegaop$.       Take the union of all easy prefixes of $w_2$ (this union is nonempty, because there must be some nonempty easy prefix of $w_2$ thanks to the assumption that $w_2$ has no  suffix in $H^\omegaop$), and cut it off. After repeating this process a finite number of times, we must exhaust all of $w_2$, since otherwise there would be a prefix in $H^\omega$.  Therefore, $w_2$ is a finite concatenation of almost easy $\cc$-words.
    \end{proof}
    
    Let $w_1,w_2,w_3$ be as in the above claim. By Lemma~\ref{lem:smooth-limit}, the multiplication of $w_1$ is uniquely determined by its prefix class (under the assumption that the entire $\cc$-word belongs to $L_J$). Therefore, thanks to Lemma~\ref{lemma:approximate-prefix-class}, we can compute in \mso the  multiplication of the  $w_1$. Symmetrically,  we can compute the multiplication of $w_3$. It remains to compute the multiplication of~$w_2$.  This is done in the following claim. 
    
    \begin{claim}
        If a $\cc$-word is a finite concatenation almost easy $\cc$-words, then its  multiplication  can be computed in \mso. 
    \end{claim}
    \begin{proof}
        By the Kleene theorem about regular expressions being equivalent to finite automata, the set of finite concatenations of almost easy intervals can be described using a regular expression, where the atomic expressions describe almost easy words of that multiply to a given element. Such a regular expression can be formalised in \mso thanks to Lemma~\ref{lem:shuffle-expressions}
    \end{proof}
    
\end{proof}

\begin{lemma}\label{lem:lj-definable}
    The language $L_J$ is \mso definable.   
\end{lemma}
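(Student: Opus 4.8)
The plan is to reduce the statement to the \mso-definability of the language $B := \set{w \in S^\cc : \mu(w)\text{ is not an infix of }J}$. Writing $E := \set{w : \mu(w)\text{ is a proper infix of }J}$, the three sets $E$, $L_J$, $B$ partition $S^\cc$, so $L_J = S^\cc \setminus (E \cup B)$; and $E$ is \mso definable by the induction hypothesis, since every $b \in S$ that is a proper infix of $J$ lies strictly below $J$ in the infix order, so $E$ is a finite union of languages $L_b$ already known to be \mso definable. Thus it suffices to define $B$.

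First I would record the structural fact that $Z := \set{b \in S : b\text{ is not an infix of }J}$ is a two-sided ideal of $S$ which moreover absorbs all \lale operations: if $b \in Z$ then $b$ occurs as a letter, hence as an infix, in each of $bc$, $cb$, $b^\omega$, $b^{\omegaop}$ and every shuffle containing $b$, and an infix of an infix of $J$ is an infix of $J$. Consequently $\mu(w) \in Z$ holds exactly when some contiguous infix of $w$ multiplies into $Z$ — so $B$ is the set of words that "reach $Z$", and in particular $B$ contains every word having a letter in $Z$ (a trivially \mso-definable part).

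Next, adding $|S|$ as an outermost induction parameter, I would argue as follows. If $Z = \emptyset$ then $B = \emptyset$ and there is nothing to do. If $|Z| \ge 2$, collapsing $Z$ to a single absorbing zero yields a strictly smaller $\cc$-semigroup $S'$ and a letter-to-letter homomorphism $h : S \to S'$ (the collapse is a $\cc$-congruence because $Z$ absorbs every operation) with $\mu_{S'} \circ h^\cc = h \circ \mu_S$; since $\mu_S(w) \in Z$ iff $\mu_{S'}(h^\cc(w)) = 0$, the language $B$ is the $h$-preimage of $\set{v \in (S')^\cc : \mu_{S'}(v) = 0}$, which is \mso definable because $S'$ is smaller, and pulling an \mso formula over $S'$ back along the fixed label map $h$ — replacing each predicate $b'(x)$ by $\bigvee_{h(b)=b'} b(x)$ — keeps it \mso. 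This leaves the genuinely hard case, $Z = \set 0$ a single absorbing zero: here $B = L_0$ and one must show directly that $\set{w : \mu(w) = 0}$ is \mso definable using only the languages $L_b$ with $b$ strictly below $J$. For this I would analyse how a word whose finite infixes all multiply into $S \setminus \set 0$ can nevertheless escape to $0$ "at a limit": peeling the \lale structure of a hypothetical minimal escaping infix — via the condensation and Ramsey arguments already used in this section, together with Lemma~\ref{lem:ramsey-buchi} and Lemma~\ref{lem:smooth-limit} — one sees it must contain an escaping infix that is finite, an $\omega$-word, an $\omegaop$-word, or a shuffle of blocks each typed by an element below $J$ or in $J$; each such possibility is an \mso condition, using Lemma~\ref{lem:shuffle-expressions} for the concatenation and shuffle operators, the induction hypothesis for the blocks typed below $J$, and the approximating colourings of Lemma~\ref{lemma:approximate-prefix-class} and Lemma~\ref{lem:product-approximated} to read off the value of a $J$-typed block without circularity (such a block lies in $L_J$, where those colourings agree with $\mu$).

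The main obstacle is precisely this last case: establishing that the case analysis on the order type of a minimal escaping infix is exhaustive, and arranging the bookkeeping so that recognising the typed blocks never circularly presupposes $L_J$ itself. The absorption property of $Z$, the decomposition theory of $\cc$-words into binary, $\omega$, $\omegaop$ and shuffle parts, and the earlier lemmas of the section are exactly the tools that have to be marshalled here.
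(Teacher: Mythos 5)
Your identification of $B = \set{w : \mu(w)\text{ is not an infix of }J}$ as the set whose complement in $H$ gives $L_J$, your observation that $Z$ is an ideal absorbing all \lale operations, and your remedy for the circularity via the approximating colourings of Lemma~\ref{lemma:approximate-prefix-class} and Lemma~\ref{lem:product-approximated} all match the paper's proof (which calls your $Z$ by the name $I$, your $B$ by $L_I$, and replaces $L_a$ for $a \in J$ with \mso-definable approximations $K_a$). The extra outer induction on $|S|$, collapsing $Z$ to a single absorbing zero, is a valid construction but buys nothing: the step you identify as the hard case---showing that a $\cc$-word all of whose finite infixes multiply outside $Z$ can nonetheless escape into $Z$ only by containing a local ``error'' infix---is exactly as hard when $|Z|=1$ as in general, and the paper handles arbitrary $I$ directly without any such reduction.

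The genuine gap is that this core characterisation is only sketched, and you reach for the wrong tools. You cite Lemma~\ref{lem:ramsey-buchi} and Lemma~\ref{lem:smooth-limit} and appeal to a minimal-escaping-infix analysis, but what makes the argument short and clean is Lemma~\ref{lem:shelah-induction}, the structural induction principle from Section~\ref{sec:proof-of-shelah-generators}. The paper defines an \emph{error} as a $\cc$-word in one of $L_a L_b$ (with $ab \in I$), $(L_a)^\omega$ (with $a^\omega \in I$), $(L_a)^{\omegaop}$ (with $a^{\omegaop} \in I$), or a shuffle of $\set{L_a}_{a \in A}$ (with $A^\eta \in I$), always taking $a,b,A$ from $S-I$; it then applies Lemma~\ref{lem:shelah-induction} to the language of $\cc$-words $w$ satisfying ``if $w \in L_I$ then $w$ has an error infix.'' The four error types line up exactly with the four closure conditions of that lemma, and verifying each is routine (for binary concatenation: if $u,v \notin L_I$ but $uv \in L_I$, then $uv$ is itself a binary error; the other three are analogous). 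Without carrying out this step, your proposal stops precisely where the substance of the lemma begins---as you acknowledge when you call it ``the main obstacle.''
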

\begin{proof}
    Define $I \subseteq S$ to be the hard elements which are not in $J$. This is an ideal in the $\cc$-semigroup $S$, i.e.~if $w \in S^\cc$  has at least one letter in $I$, then its multiplication is in $I$. Define $L_I$ to be the $\cc$-words that multiply to an element of  $I$. Again, this is an ideal, this time in the free $\cc$-semigroup $S^\cc$. We will show how to define $L_I$ in \mso; it will follow that $L_J$ is \mso definable as 
    \begin{align*}
    L_J = H - L_I.
    \end{align*}
    The key is the following characterisation of $L_I$. Define an \emph{error} to be a $\cc$-word in $S^\cc$ which satisfies at least one of the following  conditions:
    \begin{itemize}
        \item binary error: belongs to   $L_a L_b$ for some $a,b \in S-I$ such that $ab \in I$; 
        \item $\omega$-error: belongs to   $(L_a)^\omega$, for some $a \in S-I$ such that $a^\omega  \in I$; 
        \item  $\omegaop$-error: belongs to $ (L_a)^\omegaop$, for some $a \in S-I$ such that $a^\omegaop  \in I$; 
        \item  shuffle error: is in the shuffle of   $\set{L_a}_{a \in A}$ for some $A \subseteq S-I$ such that $A^\eta \in I$.
    \end{itemize}
    Note that in the above definition, we can use languages $L_a$ for $a \in J$. These languages are not yet known to be definable in \mso, but they are conditionally definable in the sense used by Lemma~\ref{lem:product-approximated}.

    \begin{claim}\label{lem:bad-patterns} 
        A $\cc$-word belongs to $L_I$ if and only if  it has an error  infix.
    \end{claim}
    \begin{proof}
        Clearly every error is in $L_I$, and since $L_I$ is an ideal, it follows that   every $\cc$-word with an error infix is in $L_I$. We are left with the converse implication: every $\cc$-word in $L_I$ contains an error infix. 
        To prove this   implication, we will show that the language 
    \begin{align*}
    L = \set{w \in S^\cc : \text{if $w \in L_I$  then $w$  has an error infix}} 
    \end{align*}
    satisfies the assumptions of Lemma~\ref{lem:shelah-induction}, with $\lambda$ being the multiplication operation in $S$. The conclusion of Lemma~\ref{lem:shelah-induction} will then say that  $L$  is equal to $S^\cc$, thus showing that every $\cc$-word in $L_I$ has an error infix.
    
    The first assumption of Lemma~\ref{lem:shelah-induction} says that $L$ is closed under binary concatenation. Suppose that $u,v \in L$. We need to show that $uv \in L$. Suppose that $uv \in L_I$. If $u \in L_I$, then it has an error infix by assumption on $u \in L$, and therefore also $uv$ has an error infix. We argue similarly if $v \in L_I$. Finally, if both $u,v$ multiply to elements in $S-I$, then $uv$ is a binary  error. 

    The remaining assumptions of Lemma~\ref{lem:shelah-induction} are checked the same way. 
    \end{proof}

    As remarked before Claim~\ref{lem:bad-patterns}, the definition of errors refers to languages $L_a$ with $a \in J$, which are not yet known to be definable in \mso. We deal with this issue now. 
    By Lemma~\ref{lem:product-approximated}, for every $a \in J$ there an  \mso definable language which contains all  $\cc$-words that have multiplication $a$, and does not contain any $\cc$-words that have  multiplication in $J- \set a$. By removing the  $\cc$-words with easy multiplications from that language, we get an \mso definable language $K_a$ with
    \begin{align*}
    L_a \subseteq K_a \subseteq L_a \cup L_I.
    \end{align*}
     Define a \emph{weak error} in the same way as an error, except that $K_a$ is used instead of $L_a$ for $a \in J$. Since $K_a$ is obtained from $L_a$ by adding some words from the ideal $L_I$, it follows from Claim~\ref{lem:bad-patterns} that a $\cc$-word is in $L_I$ if and only if it has an infix that is a weak error. Finally, weak errors can be defined by an expression which uses \mso definable languages and the closure operators from Lemma~\ref{lem:shuffle-expressions}, and therefore weak errors are \mso definable. It follows that $L_I$ is \mso definable, and therefore $L_J$ is \mso definable.
\end{proof}

As we have already remarked when describing the proof strategy, the  above lemma completes the proof of the induction step in Theorem~\ref{thm:mso-complete-cc-words}. Indeed, a $\cc$-word has multiplication  $a \in J$  if and only if it belongs to $L_J$ and it is assigned $a$ by the colouring from Lemma~\ref{lem:product-approximated}.

\exercisehead

% \mikexercise{
% For an \mso sentence of the form
% \begin{align*}
% \myunderbrace{\mathsf Q_1 X_1 \mathsf Q_2 X_2 \cdots \mathsf Q_n X_n}
% {set quantifiers $\mathsf Q_i \in \set{\exists,\forall}$}  \qquad \myunderbrace{\varphi(X_1,\ldots,X_n)}{only first-order quantification} 
% \end{align*}
% define its prenex type to be the word $\mathsf Q_1 \cdots \mathsf Q_n$ over alphabet $\set{\exists, \forall}$.  Show that there is some $k \in \set{1,2,\ldots}$ such that every  \mso sentence over $\cc$-words is equivalent to one which has prenex type that can be decomposed into at most $k$ blocks of type $\exists^*$ or $\forall^*$. 
% }{}

\mikexercise{The syntax of  {star-free expression} for $\cc$-words is the same as for finite words, except that the complementation operation is interpreted as $\Sigma^\cc - L$ instead of $\Sigma^* - L$. Define a $\cc$-star-free language to be a language $L \subseteq \Sigma^\cc$ that is defined by a star-free expression. Show that if $L$ is $\cc$-star-free, then its syntactic $\cc$-semigroup is aperiodic, but the converse implication fails.
%  in the same way as for finite words: we begin with one-letter words and the empty set, and we close these under Boolean combinations and concatenation $LK$.  Show that if $L^\cc$ is an ideal that is star-free, then 
% \begin{align*}
% L^+ \qquad L^\omega \qquad L^\omegaop \qquad \text{shuffle of $L$}
% \end{align*}
% are all star-free.
}{}

\mikexercise{What is the modification for $\cc$-star-free expressions that is needed to get  first-order logic (over the ordered model)?}{
    Instead of concatenation $LK$, we need to use marked concatenation $LaK$ where $a$ is a letter.
}

 \mikexercise{
    Show that if $L \subseteq \Sigma^\cc$ is $\cc$-star-free, then the same is true for every language recognised by  its syntactic $\cc$-semigroup.}{}

\mikexercise{\label{ex:star-free-omega-closed} Show that if $L \subseteq \Sigma^\cc$ is $\cc$-star-free, then the same is true for $L^\omega$. 
}{}

\mikexercise{\label{ex:star-free-constructions}  Show that if $S$ is aperiodic, then the constructions from Lemmas~\ref{lemma:approximate-prefix-class} and~\ref{lem:product-approximated} can be done using $\cc$-star-free expressions. }{}

\mikexercise{\label{ex:star-free-cc-effective} Show that  $L \subseteq \Sigma^\cc$ is $\cc$-star-free if and only if its syntactic $\cc$-semigroup is finite, aperiodic and satisfies\footnote{This exercise is based on 
\incite[Theorem 2, item 2.]{colcombetSreejith2015}
}:
\begin{align*}
e^\omegaop = e = e^\omega \quad \Rightarrow \quad e = \set e^\eta \qquad \text{for every idempotent $e$.}
\end{align*}
Hint: use Exercises~\ref{ex:star-free-omega-closed} and~\ref{ex:star-free-constructions}.
}{}
\mikexercise{Show that languages of $\cc$-words definable in first-order logic (in the ordered model) are not closed under concatenation $LK$. }{The singleton languages $\set{a^\omega}$ and $\set{a^\omegaop}$ are definable in first-order logic, but their concatenation is not.}

\mikexercise{\label{ex:cc-regular-associative} We say that a multiplication operation $\mu : S^\cc \to S$ is regular-associative if it satisfies the associativity condition from Definition~\ref{def:cc-semigroup}, but with the diagrams restricted so that  only 
\begin{align*}
S^\bullet = \set{w \in S^\cc : \text{$w$ is regular}}
\end{align*}
is used instead of $S^\cc$. Show that if $S$  finite and $\mu : S^\cc \to S$ is \mso definable and regular-associative, then $\mu$ is associative. 
}{}

\mikexercise{
Show that if $S$ is finite and $\mu : S^\bullet \to S$ is regular associative, then it can be extended uniquely to an associative multiplication $\bar \mu : S^\cc \to S$. Hint: the \mso formulas  defined in the   proof of Theorem~\ref{thm:mso-complete-cc-words} depend only on the \lale operations of the $\cc$-semigroup $S$.
}{
    The \mso formulas  defined in the   proof of Theorem~\ref{thm:mso-complete-cc-words} depend only on the \lale operations of the $\cc$-semigroup $S$. Therefore, the construction in the theorem allows us to define, for  every $\mu : S^\bullet \to S$, an \mso definable colouring  $\lambda : S^\cc \to S$. (Furthermore, if $\mu$ arose from an associative multiplication operation by restricting it to $S^\bullet$, then $\lambda$ is equal to that multiplication operation.) 

    We claim that if $\mu$ was regular-associative, then $\lambda$ agrees with $\mu$ on regular inputs. To prove this claim, we simply follow the proof of Theorem~\ref{thm:mso-complete-cc-words}, but we use regular $\cc$-words instead of all $\cc$-words. 

    Once we know that $\lambda$ agrees with $\mu$ on regular inputs, and $\mu$ is regular-associative, we can apply Exercise~\ref{ex:cc-regular-associative}.
 }

 \part{Monads}

  \chapter{Monads}
\label{sec:monads}
As discussed in Chapter~\ref{chapter:semigroups}, instead of viewing a semigroup as having a binary multiplication operation, one could think of a semigroup as a set  $S$ equipped with a  multiplication operation $\mu : S^+ \to S$, which is associative in the sense that the following two diagrams commute:
\begin{align*}
    \xymatrix{
       S \ar[dr]^{\text{identity}} \ar[d]_{\txt{\scriptsize view a letter as \\ \scriptsize  a one-letter word}}\\
       S^+ \ar[r]_\mu & S
    }
    \qquad \qquad
       \xymatrix@C=4cm
       {(S^+)^+ \ar[r]^{\text{multiplication in free semigroup $S^+$}} \ar[d]_{\mu^+} &
        S^+ \ar[d]^{\mu} \\ 
        S^+ \ar[r]_{\mu}
        &  S}
    \end{align*}
The same is true for monoids, with $*$ used instead of $+$, and  for $\cc$-semigroups, with $\cc$ used instead of $+$. In this chapter, we examine the common pattern behind these constructions, which is that they are the Eilenberg-Moore algebras for the monads of $+$-words, $*$-words and $\cc$-words, respectively.

From the perspective of this book, the idea behind monads is the following. Instead of first defining  not necessarily free algebras (e.g.~semigroups) and then defining free algebras (e.g.~the  free semigroup) as a special case, an opposite approach is used. We begin with the free algebra (which is  the monad), and then other, not necessarily free, algebras are defined as a derived notion (which is  the Eilenberg-Moore algebras of the monad). This opposite approach is  useful for less standard algebras such as graphs, where axiomatising the not necessarily free algebras is possible but tedious and not intuitive, while the free algebra is very natural, because it consists of graphs with a certain substitution structure.

\section{Monads and their Eilenberg-Moore algebras}
This section, presents the basic definitions for monads and their algebras. These notions make sense for arbitrary categories. However,  for simplicity we use the category of sets and functions, because this is where most of our examples live. In later chapters we will consider multi-sorted sets (e.g.~sets with sorts $\set{+,\omega}$ for $\omega$-semigroups, or sets with sorts $\set{0,1,\ldots}$ for hypergraphs), but multi-sorted sets  is as far as we go with respect to the choice of categories.

\begin{definition}[Monad]
     A \emph{monad} in the category of sets\footnote{The same definition can be applied to any other category, by using ``object'' instead of ``set'', and  ``morphism'' instead of ``function''.
     }  consists of the following ingredients:
\begin{itemize}
    \item \emph{Structures:}  for every set  $X$, a set   $\monad X$;
    \item \emph{Substitution: }  for every function $f : X \to Y$, a   function $\monad f : \monad X \to \monad Y$;
    \item \emph{Unit and free multiplication:} for every set  $X$, two functions
    \begin{align*}
    \myunderbrace{\unit X : X \to \monad X}{the unit of $X$}
    \qquad 
    \myunderbrace{\mult_X : \monad \monad X \to \monad X}{free  multiplication on $X$}.
    \end{align*}
\end{itemize}
These ingredients are subject to six axioms 
\begin{align*}
\myunderbrace{\text{\eqref{eq:monad-axiom-functorial}}}{$\monad$ is a functor}
\qquad \qquad 
\myunderbrace{\text{\eqref{eq:monad-axiom-naturality-unit} 
\quad \eqref{eq:monad-axiom-naturality-mult}}}{unit and multiplication\\
\scriptsize are natural transformations}
\qquad \qquad 
\myunderbrace{\text{\eqref{eq:monad-axiom-associative-unit-1}\quad \eqref{eq:monad-axiom-associative-mult} 
\quad \eqref{eq:monad-axiom-associative-unit-2}}}{$\monad X$ with free  multiplication\\
\scriptsize is an Eilenberg-Moore algebra,
\\ \scriptsize and one more associativity axiom}
\end{align*}
which will be described later in this section. 
\end{definition}

% The above definition is actually a special case of the general definition of monads. Monads can be applied to any category, while the above definition is only given for the category of sets. However, since in this lecture we only use monads in the category of sets, we only give the special case of the definition.

Before describing the monad axioms, we discuss some examples, and define Eilenberg-Moore algebras.  The purpose of the monad axioms  is to ensure that  Eilenberg-Moore algebras are well-behaved, and therefore it is easier to  see the monad axioms after the definition of  Eilenberg-Moore algebras. But even before that, we begin with  an example of the monad of finite words, where the Eilenberg-Moore algebras are monoids, to  illustrates what we want to do with monads.

\begin{example}[Monad of finite words]\label{ex:monad-of-finite-words}
    The monad of finite words is defined as follows. The structures are defined by   $\monad X = X^*$. For a function $f : X \to Y$, its corresponding substitution  
    \begin{align*}
    \myunderbrace{\monad f : \monad X \to \monad Y}{can also be written as \\ \scriptsize $f^* : X^* \to Y^*$  \\ \scriptsize for this particular monad}
    \end{align*}
    is defined by applying $f$ to every letter in the input word.  For a set $X$, the unit operation of type
    \begin{align*}
    \myunderbrace{X \to \monad X}{$X \to X^*$ \\ \scriptsize for this particular monad},
    \end{align*}
    maps a  letter  to the one-letter word  consisting of this letter. Free  multiplication, which is a function of type 
    \begin{align*}
        \myunderbrace{\monad \monad X \to \monad X}{$(X^*)^* \to X^*$ \\ \scriptsize for this particular monad}
        \end{align*}    
    flattens a word of words into a  word. 
 The monad of $\cc$-words is defined  in the same way, except that it uses $\cc$ instead of $*$.
\end{example}

For this book, the key notion for monads is Eilenberg-Moore algebras. The idea is that $\monad X$ describes the free algebra, while the Eilenberg-Moore algebras are the algebras that are not necessarily free.

\begin{definition}[Eilenberg-Moore algebras]
     An Eilenberg-Moore algebra in a monad $\monad$, also called a \emph{$\monad$-algebra},  consists of an \emph{underlying set} $A$ and a \emph{multiplication operation} $\mu : \monad A \to A$, subject to the following associativity axioms:
$$
    \xymatrix{
        A \ar[dr]^{\text{identity}} \ar[d]_{\unit A
        }\\
        \monad A \ar[r]_\mu & A
     } \qquad 
\xymatrix @R=2pc @C=6pc { \monad \monad A \ar[r]^{\text{free multiplication on $A$}} \ar[d]_{\monad \mu} & \monad A \ar[d]^{\mu} \\
\monad A \ar[r]_{\mu} & A
}$$
\end{definition}
The above definition also makes sense for categories other than the category of sets, with $A$ being an object in the category and $\mu$ being a morphism. In the definition, the reader will recognise, of course, the diagramatic definitions of semigroups, monoids and $\cc$-semigroups. Also, the diagramatic definition of $\omega$-semigroups will fall under the scope of the above definition, if we think about $\omega$-semigroups as living in the category of sets with two sorts $\set{+,\omega}$.

By abuse of notation, we  use the same letter to denote a $\monad$-algebra and its underlying set, assuming that the multiplication operation is clear from the context. Also, if the monad $\monad$ is clear from the context, we will say algebra instead of $\monad$-algebra.

\begin{myexample}[Group monad]
    The \emph{free group over a set $X$} is defined to be 
    \begin{align*}
    (\myunderbrace{\blue X + \red X}{two copies of $X$,\\ \scriptsize one blue, and one red})^*
    \end{align*}
     modulo the identities
\begin{align*}
\blue x \red x = \red x \blue x = \varepsilon \qquad \text{for every $x \in X$,}
\end{align*}
where $\varepsilon$ represents the empty word, while $\blue x$ and $\red x$ represent the blue and red copies of $x$.
The identities can be applied in any context, for example
    \begin{align*}
      \blue z \red x \qquad 
      \blue z   \red y \blue  y \red  x  \qquad    \blue z  \blue z \red  z \red x \red y  \blue  y,
    \end{align*}
    represent the same element of the free group.
    Define $\monad$ to be the monad where $\monad X$ is the free group over $X$, and  the remaining  monad structure is defined similarly  as for finite words, except that we have the two copies of the alphabet, and the identities. The unit operation maps an element to its blue copy.
    
    An algebra over this monad is the same thing as a group.
Indeed, if $G$ is an  algebra over this monad, with multiplication $\mu$, then the group structure is recovered as follows:
\begin{align*}
\myunderbrace{1 \eqdef \mu(\varepsilon)}{group identity}
\qquad
\myunderbrace{x^{-1} \eqdef  \mu(\red x)}{group inverse}
\qquad 
\myunderbrace{x \cdot y \eqdef  \mu(\blue{xy})}{group operation}
\end{align*}
The axioms of a group are easily checked, e.g.~the axiom $x \cdot x^{-1}$ is proved as follows:
\begin{eqnarray*}
    x \cdot x^{-1} \equalbecause{definition of inverse} \\
    x \cdot \mu(\red x) \equalbecause{unit followed by multiplication is the identity, i.e.~axiom~\ref{eq:monad-axiom-associative-unit-1}} \\
    \mu(\blue x) \cdot \mu(\red x) \equalbecause{definition of the group operation}\\
    \mu(\mu(\blue x)\mu(\red x)) \equalbecause{associativity of multiplication, i.e.~axiom~\ref{eq:monad-axiom-associative-mult}}\\
    \mu(\blue x \red x) \equalbecause{equality in the free group}\\
    \mu(\varepsilon) \equalbecause{definition of group identity}\\
    1 & & 
\end{eqnarray*}
For the converse, we observe that for every group $G$, its group multiplication can be extended uniquely to an operation of type $\monad G \to G$, and the resulting operation will be associative in the sense required by Eilenberg-Moore algebras. 
\end{myexample}

\subsection{Axioms of a monad}
\label{sec:axioms-of-a-monad}
Having described some intuition behind monads and their Eilenberg-Moore algebras, we now describe the axioms of a monad.
%  When reading the axioms below, it is a good idea to check that they are satisfied by the monad of nonempty words in the example above.

\paragraph*{Functoriality.} The first group of axioms says that the first two ingredients (the structures and substitutions) of a monad are a functor in the sense of category theory. This means that substitutions  preserve the identity and composition of functions. Preserving the identity means that if  we apply $\monad$ to the identity function on $X$, then the  result is the identity function on $\monad X$. Preserving composition means that the composition of substitutions is the same as the substitution of their composition, i.e.~for every functions $f : X \to Y$ and $g : Y \to Z$, the following diagram commutes 
\begin{align}
    \label{eq:monad-axiom-functorial}
     \xymatrix @R=2pc { \monad X \ar[r]^{\monad f} \ar[dr]_{\monad (g \circ f)} & \monad Y \ar[d]^{\monad g} \\& \monad Z
}
\end{align}
 
\paragraph*{Naturality.}   The naturality  axioms say that for every function $f :  X \to Y$, the following diagrams commute. 
\begin{align}
\label{eq:monad-axiom-naturality-unit}
     \xymatrix @R=2pc { X \ar[r]^{f} \ar[d]_{\unit X} & Y \ar[d]^{\unit Y} \\
\monad X \ar[r]_{\monad f}& \monad Y
} 
\end{align}
\begin{align}
\label{eq:monad-axiom-naturality-mult}
\xymatrix @R=2pc { \monad \monad X \ar[r]^{\monad \monad f} \ar[d]_{\text{free multiplication on $X$}} & \monad \monad Y \ar[d]^{\text{free multiplication on $Y$}} \\
\monad X \ar[r]_{\monad f}& \monad Y
}
\end{align}
In the language of category theory, this means that the unit and free multiplication are natural transformations. Also, as we will see later on, the second naturality axiom (naturality of free multiplication) says that the substitution  $\monad f$ is a homomorphism between the free algebras $\monad X$ and $\monad Y$. 

\paragraph*{Associativity.} We now turn to the most important monad axioms, which ensure the  Eilenberg-Moore algebras are well behaved.  The main associativity axiom says  that for every set $X$, the set $\monad X$ equipped with free multiplication  on $X$ is a $\monad$-algebra (we call this the free $\monad$-algebra over $X$, or simply free algebra if the monad is clear from the context). By unravelling the definitions, this means that the following two diagrams commute:
\begin{align}
    \label{eq:monad-axiom-associative-unit-1}
    \xymatrix@C=7pc{
        \monad X \ar[dr]^{\text{identity}} \ar[d]_{\unit {\monad X}
        }\\
        \monad \monad X \ar[r]_{\text{free multiplication on $X$}} & \monad X
     } 
\end{align}
\begin{align}
    \label{eq:monad-axiom-associative-mult}
\xymatrix @R=2pc @C=7pc { \monad \monad \monad X \ar[r]^{\text{free multiplication on $\monad X$}} \ar[d]_{\monad \text{(free multiplication on $X$)}} & \monad \monad X \ar[d]^{\text{free multiplication on $X$}} \\
\monad \monad X \ar[r]_{\text{free multiplication on $X$}} & \monad X
}
\end{align}
Apart from the above two, there is one more associativity axiom, namely:
\begin{align}
    \label{eq:monad-axiom-associative-unit-2}
    \xymatrix@C=7pc{
        \monad X \ar[dr]^{\text{identity}} \ar[d]_{\monad(\unit { X})
        }\\
        \monad \monad X \ar[r]_{\text{free multiplication on $X$}} & \monad X
     } 
\end{align}

This completes the axioms of a monad, and the definition of a monad.

\exercisehead
\mikexercise{\label{ex:lift-binary-relation} Consider a monad $\monad$ in the category of sets. For a binary relation $R$ on a set $X$, define 
\begin{align*}
    R^\monad \subseteq (\monad X) \times (\monad X)
\end{align*}
to be the binary relation on $\monad X$ that is defined by
\begin{align*}
R^\monad =    \set{((\monad 
\pi_1)(t), (\monad \pi_2)(t)) : t \in \monad R} \quad\text{where $\pi_i : X \times X \to X$ is the $i$-th projection.}
\end{align*}
Does transitivity of $R$ imply transitivity of $R^\monad$?
}{}

\subsection{Homomorphisms and recognisable languages }
\label{sec:monad-homomorphism}
A homomorphism between two $\monad$-algebras is any function between their underlying sets which is consistent with  the multiplication operation, as formalised in the following definition. 
\begin{definition}[Homomorphism] Let $\monad$ be a monad. A \emph{$\monad$-homomorphism} is a  function $h : A \to B$ on the  underlying sets of two $\monad$-algebras $A$ and $B$,  which makes the following diagram commute 
    \begin{align*}
        \vcenter{\xymatrix  @C=3pc { \monad A \ar[r]^{\monad h} \ar[d]_{ \text{multiplication in $A$}} & \monad B \ar[d]^{\text{multiplication in $B$}} \\
        A \ar[r]_{h} & B
        }}
        \end{align*} 
\end{definition}
When the monad is clear from the context, we simply write homomorphism, instead of $\monad$-homomorphism.
Again, the reader will recognise the notion of homomorphism for semigroups, monoids and $\cc$-semigroups. 

In the rest of this section, we describe some basic properties of homomorphisms.

\begin{lemma}\label{lem:composition-of-homomorphisms}
    Homomorphisms are closed under composition.
\end{lemma}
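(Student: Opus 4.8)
The statement is that if $h : A \to B$ and $g : B \to C$ are $\monad$-homomorphisms, then $g \circ h : A \to C$ is a $\monad$-homomorphism. The plan is simply to paste the two defining squares side by side and check that the outer rectangle commutes, using functoriality of $\monad$ (axiom~\eqref{eq:monad-axiom-functorial}) to identify $\monad(g \circ h)$ with $\monad g \circ \monad h$.

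First I would write down what has to be shown: that the square with top edge $\monad(g\circ h) : \monad A \to \monad C$, bottom edge $g \circ h : A \to C$, left edge the multiplication of $A$, and right edge the multiplication of $C$, commutes. Then I would consider the diagram
\begin{align*}
  \xymatrix @C=3pc {
    \monad A \ar[r]^{\monad h} \ar[d]_{\text{mult.\ in }A} &
    \monad B \ar[r]^{\monad g} \ar[d]_{\text{mult.\ in }B} &
    \monad C \ar[d]^{\text{mult.\ in }C} \\
    A \ar[r]_{h} & B \ar[r]_{g} & C
  }
\end{align*}
The left square commutes because $h$ is a $\monad$-homomorphism, and the right square commutes because $g$ is a $\monad$-homomorphism. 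Hence the outer rectangle commutes, i.e.
\begin{align*}
  g \circ h \circ (\text{mult.\ in }A) = (\text{mult.\ in }C) \circ \monad g \circ \monad h.
\end{align*}
By the functoriality axiom~\eqref{eq:monad-axiom-functorial}, $\monad g \circ \monad h = \monad(g \circ h)$, so the right-hand side equals $(\text{mult.\ in }C) \circ \monad(g \circ h)$, which is exactly the commutativity of the square defining $g \circ h$ as a homomorphism.

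There is no real obstacle here; the only point worth being careful about is invoking functoriality (the preservation-of-composition part of axiom~\eqref{eq:monad-axiom-functorial}) rather than treating $\monad(g\circ h)$ and $\monad g\circ \monad h$ as visibly the same, since in a general monad this identity is exactly what the functoriality axiom guarantees. Everything else is diagram pasting.
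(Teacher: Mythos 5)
Your proof is correct and takes essentially the same approach as the paper's: paste the two defining squares of $h$ and $g$ side by side, observe that the outer rectangle commutes, and invoke the functoriality axiom to identify $\monad g \circ \monad h$ with $\monad(g \circ h)$.
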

\begin{proof}
    Consider two homomorphisms 
    \begin{align*}
    \xymatrix{
        A \ar[r]^g & B \ar[r]^h & C.
    }
    \end{align*}
    Saying that the composition $h \circ g$ is a homomorphism is the same as saying that the perimeter of the following diagram commutes:
    \begin{align*}
        \xymatrix@C=6pc{ \monad A \ar[dd]_{\text{multiplication in $A$}} \ar[rr]^{\monad (h \circ g)} \ar[dr]_{\monad  g} & & \monad C  \ar[dd]^{\text{multiplication in $C$}}\\
        & \monad B \ar[ur]_{\monad h} \ar[d]_{\text{ multiplication in $B$}}  \\
        A   \ar[r]^{g} & B \ar[r]^{h} & C}
      \end{align*}
      The upper triangular face commutes because of the functoriality axioms (substitutions are compatible with composition). The left and right  triangular faces commute by assumption that $g$ and $h$ are homomorphisms.
\end{proof}

Recall that we defined the \emph{free algebra over a set $X$} to be $\monad X$ equipped with  the free multiplication operation of type $\monad \monad X \to \monad X$. The monad axioms say that this is indeed an algebra. It is called free because of the universal property given in the following lemma.

\begin{lemma}[Free Algebra Lemma]
    For every set $X$, the free algebra $\monad X$ has the following universal property:
    \begin{align}\label{eq:free-monad-algebra}
        \red{\forall} \blue{\exists!} \qquad
        \xymatrix@C=4cm{ 
           X
           \ar@[red][r]^{\red{\text{function on sets}}}
           \ar[dr]_-{\unit X} 
        & \red {\text{algebra $A$}}\\
         & {\monad X}
         \ar@[blue][u]_{\blue {\text{homomorphism of algebras}}} }
      \end{align}
\end{lemma}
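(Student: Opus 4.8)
The plan is to define the blue homomorphism explicitly in terms of the data $\red{f} : X \to A$ (a function into the underlying set of the algebra $A$ with multiplication $\mu : \monad A \to A$), verify it is a homomorphism, verify the triangle commutes, and finally argue uniqueness. The natural candidate for the extension is
\begin{align*}
\bar f \;\eqdef\; \mu \circ (\monad \red f) \;:\; \monad X \to A,
\end{align*}
that is, first apply the substitution $\monad \red f : \monad X \to \monad A$ induced by $\red f$, then collapse using the algebra multiplication $\mu$. First I would check the commuting triangle: $\bar f \circ \unit X = \mu \circ (\monad \red f) \circ \unit X$, and by naturality of the unit (axiom~\eqref{eq:monad-axiom-naturality-unit}) we have $(\monad \red f) \circ \unit X = \unit A \circ \red f$, so this equals $\mu \circ \unit A \circ \red f = \red f$, using the Eilenberg-Moore unit axiom $\mu \circ \unit A = \ident$. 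Hence $\bar f \circ \unit X = \red f$ as required.

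Next I would verify that $\bar f$ is a homomorphism of $\monad$-algebras, i.e.\ that $\bar f \circ (\text{free mult.\ on }X) = \mu \circ (\monad \bar f)$ as maps $\monad \monad X \to A$. This is the diagram chase that forms the technical core, but it is routine given the axioms. Expanding $\bar f = \mu \circ \monad \red f$ on both sides, the left-hand side becomes $\mu \circ (\monad \red f) \circ (\text{free mult.\ on }X)$; by naturality of free multiplication (axiom~\eqref{eq:monad-axiom-naturality-mult}) this equals $\mu \circ (\text{free mult.\ on }A) \circ (\monad \monad \red f)$. The right-hand side is $\mu \circ \monad(\mu \circ \monad \red f) = \mu \circ (\monad \mu) \circ (\monad \monad \red f)$, using functoriality (axiom~\eqref{eq:monad-axiom-functorial}). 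So it suffices to show $\mu \circ (\text{free mult.\ on }A) = \mu \circ (\monad \mu)$ as maps $\monad \monad A \to A$, which is exactly the associativity square in the definition of an Eilenberg-Moore algebra. This closes the verification.

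Finally, for uniqueness, suppose $g : \monad X \to A$ is any homomorphism with $g \circ \unit X = \red f$. I would show $g = \bar f$ by using that $g$ is a homomorphism, that $\monad X$ carries the free multiplication, and the associativity axiom~\eqref{eq:monad-axiom-associative-unit-2} (the $\monad(\unit X)$ triangle). Concretely: since $g$ is a homomorphism, $g \circ (\text{free mult.\ on }X) = \mu \circ (\monad g)$. Precompose with $\monad(\unit X) : \monad X \to \monad \monad X$. On the left, $(\text{free mult.\ on }X) \circ \monad(\unit X) = \ident_{\monad X}$ by axiom~\eqref{eq:monad-axiom-associative-unit-2}, so the left side is just $g$. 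On the right, $\mu \circ (\monad g) \circ \monad(\unit X) = \mu \circ \monad(g \circ \unit X) = \mu \circ \monad \red f = \bar f$, using functoriality and the hypothesis $g \circ \unit X = \red f$. Hence $g = \bar f$. The main obstacle, such as it is, is purely bookkeeping: keeping straight which naturality square and which associativity square is being invoked at each step, and making sure the two ``free multiplication'' arrows (on $X$ versus on $A$) are not conflated. There is no genuine difficulty beyond careful diagram chasing.
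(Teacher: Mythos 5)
Your proposal is correct and takes essentially the same route as the paper: define $\bar f = \mu \circ \monad\red f$, verify it is a homomorphism, and show uniqueness by precomposing a homomorphism $g$ extending $\red f$ with $\monad(\unit X)$ and using axiom~\eqref{eq:monad-axiom-associative-unit-2}. The only cosmetic differences are that the paper proves $\bar f$ is a homomorphism by appealing to the closure-under-composition lemma (Lemma~\ref{lem:composition-of-homomorphisms}) rather than chasing the square directly, and the paper actually leaves out your explicit check that $\bar f \circ \unit X = \red f$ via the unit naturality square and the unit axiom of Eilenberg-Moore algebras — your proof is a bit more complete on that point.
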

\begin{proof}
We begin by showing that there is at least one blue homomorphism $\blue h$ for each red function $\red f$; later we show that this homomorphisms is unique.  Let $\mu : \monad \red A \to \red A$ be  multiplication in the algebra $\red A$. 
Define $\blue h$ to be the composition of the following functions:
\begin{align*}
    \xymatrix{
        \monad X \ar[r]^{\monad \red f} &
        \monad \red A \ar[r]^{\mu} &
        \red A.
    }
\end{align*}
The axiom on naturality of free multiplication says that $\monad \red f$ is a homomorphism from the free algebra $\monad X$ to the free algebra $\monad \red A$. The associativity axiom in the definition of an Eilenberg-Moore algebra says that  multiplication $\mu$ is a homomorphism from the free algebra $\monad \red A$ to the algebra $\red A$. Thanks to Lemma~\ref{lem:composition-of-homomorphisms}, $\blue h$ is a homomorphism, as the composition of two homomorphisms $\monad \red f$ and $\mu$. 

We now show uniqueness -- every  homomorphism $\blue h$ which makes the diagram must be equal to the one described above.  Consider the following diagram:
$$ \xymatrix@C=8pc{
& \monad  X \ar[ddl]_{\monad \red f} \ar[ddr]^{\text{identity}} \ar[dd]_{\monad \unit  X}\\
\\
\monad A \ar[dr]_{\text{multiplication in $A$} \qquad} & \monad \monad  X \ar[l]_{\monad \blue h} \ar[r]^-{\text{free multiplication on $X$ \qquad \qquad}}  & \monad  X \ar[dl]^{\blue h} \\
& A}$$
The upper left triangular  face commutes by applying  $\monad$ to the assumption that $\blue h$ extends $\red f$. (Applying $\monad$  preserves commuting diagrams, because of the functoriality axioms.)  The upper right triangular face commutes by the first associativity axiom. The lower four-sided face commutes, because it says that  $\blue h$ is a homomorphism. Therefore, the perimeter of the diagram commutes. The perimeter says that  says that $\blue h$ must be equal to $\monad f$ followed by multiplication in $A$, and therefore $\blue h$ is unique. 
\end{proof}

\paragraph*{Compositional functions.} Fix a monad $\monad$. Suppose that $A$ is an algebra, while $B$ is a set, which is not (yet) equipped with a multiplication operation. We say that a function $h : A \to B$ on sets is \emph{compositional} if 
\begin{align*}
     \red{\exists \mu} \qquad
    \xymatrix@C=4cm{ 
       \monad A  
       \ar[r]^{\monad h}
       \ar[d]_{\text{multiplication in $A$}}
       & \monad B
       \ar@[red][d]^{\red \mu}\\
       A 
       \ar[r]_h
    & B}    
\end{align*}
This is the same notion of compositionality as was used for monoids, semigroups and $\cc$-semigroups in part I of the book. For the same reason as before, surjective compositional functions are equivalent to surjective  homomorphisms, as stated in the following lemma.

\begin{lemma}\label{lem:monad-compositional}
    If $A$ is an algebra, $B$ is a set, and $h : A \to B$ is compositional and surjective, then there is a (unique) multiplication operation on $B$ which turns it into an algebra and $h$ into a homomorphism.
\end{lemma}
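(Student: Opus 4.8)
The plan is to mimic exactly the proof of Lemma~\ref{lem:compositional-monoid}, the monoid case, but now inside an arbitrary monad $\monad$. The compositionality assumption gives us a function $\mu : \monad B \to B$ making the square commute, i.e.~$h \circ (\text{multiplication in }A) = \mu \circ \monad h$. This $\mu$ is the candidate multiplication operation on $B$; it is not yet known to satisfy the two Eilenberg--Moore associativity axioms, and that is the whole content of the lemma. Uniqueness is immediate: since $h$ is surjective, $\monad h$ is surjective too (functors applied in $\setcat$ preserve surjections --- one can pick a section of $h$ and apply $\monad$ to it), and the commuting square then forces the value of $\mu$ on every element of $\monad B$, so any two multiplication operations making $h$ a homomorphism must agree.

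Next I would check the unit axiom, i.e.~that $\mu \circ \unit B = \ident_B$. Using naturality of the unit (axiom~\eqref{eq:monad-axiom-naturality-unit}) we have $\unit B \circ h = \monad h \circ \unit A$, so $\mu \circ \unit B \circ h = \mu \circ \monad h \circ \unit A = h \circ (\text{multiplication in }A) \circ \unit A = h \circ \ident_A = h$, where the middle equality is the compositionality square and the last but one is the unit axiom for the algebra $A$. Since $h$ is surjective this yields $\mu \circ \unit B = \ident_B$.

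Then comes the associativity (``main diagram'') axiom, which is the step I expect to be the real work, exactly as ``establishing associativity of $\circ$'' was the crux in Lemma~\ref{lem:compositional-monoid}. We must show the two composites $\monad\monad B \to \monad B \to B$, namely $\mu \circ (\text{free mult.\ on }B)$ and $\mu \circ \monad\mu$, are equal. The idea is to precompose with $\monad\monad h : \monad\monad A \to \monad\monad B$, which is surjective since $h$ is, and to chase a diagram built from three kinds of tiles: naturality of free multiplication (axiom~\eqref{eq:monad-axiom-naturality-mult}) applied to $h$, functoriality of $\monad$ applied to the compositionality square to get that $\monad h \circ (\monad\text{-multiplication in }A) = \monad\mu \circ \monad\monad h$, and the associativity axiom of the algebra $A$ itself. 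Concretely, both composites, when precomposed with $\monad\monad h$, reduce --- using these tiles and the compositionality square --- to $h \circ (\text{mult.\ in }A) \circ (\text{mult.\ in }A)$ rendered as a map $\monad\monad A \to A$ in the two ways permitted by the algebra-$A$ associativity axiom, hence are equal; surjectivity of $\monad\monad h$ then cancels it. This is the same bookkeeping as the $h(abc)$ computation in the monoid lemma, just drawn as commuting squares rather than equations between elements, and it is the only place where any genuine checking happens; the rest is formal. I would state it as a single pasted diagram of nested tiles whose outer perimeter is the desired square, noting which earlier axiom makes each tile commute.
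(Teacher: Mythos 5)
Your proposal is correct and takes essentially the same route as the paper: the candidate $\mu$ comes from the compositionality square, $\monad$ is shown to preserve surjections via a section, and the associativity diagram is verified by pasting naturality of free multiplication, the algebra axiom for $A$, and the compositionality square (and its image under $\monad$), then cancelling the surjective $\monad\monad h$. Your explicit verification of the unit-triangle axiom is a small addition — the paper omits it as the less interesting of the two — but the argument is the same in spirit.
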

\begin{proof}
    The multiplication operation -- no surprises here -- is $\mu$ from the definition of a compositional function. 
    The diagram in the  definition of a compositional function is the same diagram as in the definition of a homomorphism, and therefore if  $B$ equipped with $\mu$ is an algebra, then $h$ is a homomorphism. It remains to show that $B$ equipped with $\mu$ is indeed an algebra. We only prove the more interesting of the two  associativity diagrams, namely the one with a rectangular diagram.

    We first observe that $\monad$ preserves surjectivity of functions\footnote{This part of the argument is true for the category of sets, and also for multi-sorted sets, but fails in general, since functors do not need to preserve epimorphisms in general categories.}. Indeed, if a function  $h : A \to B$ is surjective, then it has a one-sided inverse, i.e.~a function  $h^{-1} : B \to A$ such that $h \circ h^{-1}$ is the identity on $B$. By the functoriality axioms,  $\monad h^{-1}$ is a one-sided inverse for $\monad h$, and therefore $\monad h$ is also surjective. This argument justifies the  surjectivity  annotation (double-headed arrows) in the  following diagram.
    \begin{align*}
		\xymatrix@C=1.8cm{
        \monad \monad B 
        \ar[ddd]_{\monad \mu} 
        \ar[rrrr]^{\text{free multiplication on $B$}}  
        & & & & 
        \monad B\ar[ddd]^{\mu}
        \\	
        & 
        \monad \monad A 
        \ar@{->>}[ul]_{\monad \monad h} 
        \ar[d]_{\monad (\text{multiplication in $A$})}
        \ar[rr]^{\text{free multiplication on $A$}} 
        & &
        \monad A 
        \ar[d]^{\text{multiplication in $A$}} 
        \ar@{->>}[ur]^{\monad h}
        \\
        & \monad A 
        \ar[rr]_{\text{multiplication in $A$}} 
        \ar@{->>}[dl]^{\monad h}
        & &
        A 
        \ar@{->>}[dr]^h
        \\
        \monad B 
        \ar[rrrr]_{\mu}
        & & & &
        B
		}
    \end{align*}
    The central  rectangular face commutes by the assumption that $A$ is an algebra. 
	The upper trapezoid face commutes by naturality of free multiplication.  The right and lower trapezoid faces commute by definition of a compositional function, and the left trapezoid face commutes by the same definition with  $\monad$ applied to it. It follows that all paths that begin in $\monad \monad A$ and end in $B$ denote the same function. Since $h$ is surjective, it follows that the  perimeter of the diagram commutes. This proves the second of the associativity diagrams in the definition of an  Eilenberg-Moore algebra.
\end{proof}

\paragraph*{Recognisable colourings and languages.}
In this book, we are most interested in the Eilenberg-Moore algebras as recognisers of languages.  A language is a subset $L$ of a free algebra $\monad \Sigma$. (Typically we are interested in the case where the alphabet $\Sigma$ is finite, but this assumption does not seem to play a role in the results that we care about, so we omit it.)   A language is called \emph{recognisable}\footnote{
    The definition of recognisable languages for monads appears first in 
    \incite[Section 11.]{eilenbergAutomataGeneralAlgebras}
    The above paper uses Lawvere theories, which correspond to finitary monads (see Section~\ref{sec:finitary-monads}). The main result of~\cite{eilenbergAutomataGeneralAlgebras}, Theorem III,  concerns free Lawvere theories, which correspond to the  monads described in Example~\ref{ex:term-monad}, and says that recognisable languages for such monads can be described using least fix-points. With the exception of Example~\ref{ex:term-monad}, none of the monads studied in this book are free.  
} if it is recognised by a finite algebra, as explained in the following definition (which uses a slightly more general notion of language, called colourings).

\begin{definition}[Recognisable colourings]\label{def:recognisable-monad}
    Fix a monad $\monad$. An \emph{algebra colouring} is defined to be any function from an algebra to  a  set of colours\footnote{For some monads, it would be more useful to deviate from this definition. For example, in the monad from  Example~\ref{ex:monad-algebra-over-field} that deals with vector spaces, a more useful notion of colouring is a linear map to the underlying field. Therefore, one could think of a parametrised  notion of recognisability, where  the notion of ``algebra colouring'' is taken as a parameter. Nevertheless, for all monads that are studied in more detail in this book, Definition~\ref{def:recognisable-monad} is good enough.}. A  \emph{finite algebra} is an algebra where the underlying set is finite\footnote{Like for algebra colourings, sometimes this notion of finite algebra is not the right one. In the monad from Example~ref{ex:monad-algebra-over-field}, the more useful notion is that a finite algebra is one where the underlying set is a vector space of finite dimension. Again, one could think of the notion of ``finite algebra'' as being a parameter.}. An algebra  colouring $L : A \to U$ is called \emph{recognisable} if it factors through a homomorphism into a finite algebra, as expressed in the following diagram:
    \begin{align*}
        \red{\exists }  \qquad
        \xymatrix@C=2cm{ 
           A
           \ar[r]^L 
           \ar@[red][dr]_-{ \txt{\scriptsize \red{homomorphism } \qquad \\ \scriptsize \red{into a  finite algebra }}} 
        & U \\
         & \red B
         \ar@[red][u]_{\red {\text{algebra colouring}}} }
      \end{align*}
      Note that a recognisable colouring will necessarily use finitely many colours.
\end{definition}

A {language} can be viewed as the  special case of an algebra 
  where the algebra is a free algebra  and there are  two colours ``yes'' and ``no''. For languages, we prefer set notation, e.g.~we can talk about the complement of a language, or use Boolean operations for  languages.  The above definition is easily seen to coincide with the notions of recognisability for semigroups, monoids and $\cc$-semigroups that were discussed in the first part of this book. In the next section, we give more examples.

\exercisehead

% \mikexercise{Conisder a $\monad$, and let $X$ be a set a,Y$ be sets (or, if the monad is not in the category of sets, then objects in the underlying category). Show a one-to-one correspondence between: (a) functions $X \to A$; (b) homomorphisms $

% }{}

\mikexercise{
\label{ex:powerset-algebra}    
For an algebra $A$ with multiplication operation $\mu : \monad A \to A$, define its powerset as follows: the underlying set is the powerset $\powerset A$, and multiplication is defined by 
\begin{align*}
        t \in \monad \powerset A \quad \mapsto \quad \set{ \mu(s) : s \in^\monad t},
\end{align*}
where $\in^\monad$ is defined as in Exercise~\ref{ex:lift-binary-relation}. Show an example of a monad $\monad$ where this construction does not yield an algebra.
}{}

\mikexercise{\label{ex:reco-closed-under-homo} Does the group monad satisfy the following implication:
\begin{itemize}
    \item[(*)] If $L \subseteq \monad \Sigma$ is recognisable, and $h : \monad \Sigma \to \monad \Gamma$ is a homomorphism, then $h(L)$ is recognisable.
\end{itemize}
 What about surjective homomorphisms?
}{}

\mikexercise{Consider the 
implication in the previous exercise. Show that even if we restrict $h$ to functions of the form $\monad f$ for some surjective $ f : \Sigma \to \Gamma$, then the implication can still be false in some monads.
}{}

\section{A zillion examples}
\label{sec:zillion}
Monads have an abundance of interesting examples. This section is devoted to   a collection of such examples, with an emphasis on the  algebras arising from the monads, and the languages  recognised by the finite algebras.

\subsection{Monads for words}

 We begin with several examples of monads that study words, both finite and infinite. We have already discussed finite words in Example~\ref{ex:monad-of-finite-words}. The following example discusses infinite words, up to a fixed cardinality.
    
\newcommand{\monadkappa}{\monad_{\!\kappa}}
\begin{myexample}[Chains]\label{ex:monads-for-linear-orders}
    Define a \emph{chain}  over a set $X$ to be a  linear order with positions labelled by $X$, modulo isomorphism of labelled linear orders. 
    For an infinite cardinal  $\kappa$, define a monad  $\monadkappa$   as follows. The set $\monadkappa X$ consists of chains over $X$, which have cardinality strictly less than $\kappa$. For example, if $\kappa$ is the first infinite cardinal $\aleph_0$ then the monad describes finite words, and if $\kappa$ is the first uncountable cardinal then the monad describes $\cc$-words. 
    The monad structure is defined in the same way as for finite words and  $\cc$-words. Nevertheless, we give  a more exact description below. 
    
    For a function $f$, the corresponding substitution $\monadkappa f$ is defined by applying $f$ to the labels in the input chain and leaving the positions and ordering unchanged. The unit maps a letter  to the chain with one position labelled by that letter. 
    The free multiplication operation is defined using lexicographic products, as follows.
    Suppose that $w \in \monadkappa \monadkappa X$.  The positions in the free multiplication of $w$  are pairs $(i,j)$ such that $i$ is a position of $w$ and $j$ is a position in the label of position $i$ in the chain $w$, call this label   $w(i) \in \monadkappa X$. The label of such a position is inherited from $j$, and the ordering is lexicographic. The cardinality of the resulting chain is at most   $\kappa$, since every infinite cardinal satisfies $\kappa = \kappa^2$.
    
This is a monad.
    We only prove one of the monad axioms, namely
    \begin{align*}
\xymatrix @R=2pc @C=7pc { \monadkappa \monadkappa \monadkappa X \ar[r]^{\text{free multiplication on $\monadkappa X$}} \ar[d]_{\monadkappa \text{(free multiplication on $X$)}} & \monadkappa \monadkappa X \ar[d]^{\text{free multiplication on $X$}} \\
\monadkappa \monadkappa X \ar[r]_{\text{free multiplication on $X$}} & \monadkappa X
}
    \end{align*}
    Let $w \in \monadkappa \monadkappa \monadkappa X$. If, in the diagram above, we first go right and then down, then the resulting linear order will have positions of the form $((i,j),k)$, where $i$ is a position in $w$, $j$ is a position in $w(i)$, and $k$ is a position in $w(i)(j)$. If, in the diagram, we first go down and then right, then we get positions of the form $(i,(j,k))$, where $i,j,k$ satisfy the same conditions as above. In both cases, the tuples of positions are ordered lexicographically, and the label is inherited from $k$. Therefore 
    \begin{align*}
    ((i,j),k) \mapsto (i,(j,k))
    \end{align*}
    is an isomorphism of labelled linear orders, and hence the two outcomes are equal as chains.

    If we take $\kappa$ to be the first cardinal bigger than the continuum cardinal $\mathfrak c$,  then  $\monadkappa$ describes chains of cardinality at most $\mathfrak c$. In this case, we have the following phenomenon. 
    Recall the  powerset construction that was described in Section~\ref{sec:shelah-operations}. This construction also makes sense for chains of size at most $\mathfrak c$. Define $\algclass$ to be the least class of $\monadkappa$-algebras which contains the syntactic algebra of the language ``every $a$ is before every $b$'', and which is closed under  products and the powerset construction.  Using the same proof as in the Trakhtenbrot-\buchi-Elgot Theorem and in   Theorem~\ref{thm:mso-to-cc}, one can show that every \mso definable language $L \subseteq \monadkappa \Sigma$ is recognised by an algebra from $\algclass$. As we have mentioned on page~\ref{page:undecidability-of-mso},  satisfiability for \mso over the reals is undecidable, and therefore there  is no finite way of representing algebras from $\algclass$. This means that the powerset construction over finite $\monadkappa$-algebras is not computable.
\end{myexample}

In the above example, we consider all chains of given cardinality. One can also consider subclasses of chains, subject to some condition on the underlying linear order, as described in the following example.

\begin{myexample}
    Consider a set $\Xx$ of linear orders which is closed under free multiplication as defined in the previous example, when  viewed as chains over a one letter alphabet. If we restrict  the monad from the previous example to chains where the underlying linear order is in $\Xx$, then we also get a monad. This construction yields the following monads (in all cases, we assume some fixed upper bound on $\kappa$ on the cardinality, e.g.~we can require countability):
\begin{itemize}
    \item well-founded words (the class of well-founded linear orders);
    \item scattered words (the class of scattered orders, i.e.~those into which one cannot embed the rational numbers)\footnote{Algebras for the monad of countable scattered words are studied in 
    \incite{rispalComplementationRationalSets2005}
    };
    \item dense words (the class which contains two orders: a singleton order for units, and the rational numbers).
\end{itemize}
\end{myexample}

\begin{myexample}[$\omega$-semigroups]
    We now describe a monad that corresponds to $\omega$-semigroups, see Definition~\ref{def:omega-semigroups}. Since an $\omega$-semigroup has two sorts, we leave the category of sets, and use instead the category  
    \begin{align*}
      \setcat^{\set{+,\omega}}
    \end{align*}
    of sets with two sorts $+$ and $\omega$. 
    An object in this category is a set, where every element is assigned  exactly one of two sorts, called $+$ and $\omega$. We use the name \emph{sorted set} for the  objects, for the purpose of this example.   A morphism in this category is any sort-preserving function between sorted sets.   We also use the following notation for sorts:
    \begin{align*}
    \myunderbrace{\sorted X}{a sorted set}  \quad =\quad \myunderbrace{\sorted X[+]}{elements\\ \scriptsize of sort $+$ }\quad \cup \quad \myunderbrace{\sorted X[\omega]}{elements\\ \scriptsize of sort $\omega$}.
    \end{align*}
     Define a monad $\sorted \monad$ over this category as follows. For a sorted set $\sorted X$, the sorted set  $\sorted{\monad X}$ is defined by:
    \begin{align*}
    (\sorted{\monad X})[+] = (\sorted X[+])^+ \qquad \cup \qquad
    (\sorted{\monad X})[\omega] = (\sorted X[+])^* (\sorted X[\omega]) \cup (\sorted X[+])^\omega.
    \end{align*}
    For a morphism $\sorted{ f : X \to Y}$, the substitution morphism $\sorted{ \monad f}$ is defined in the natural way, by applying $\sorted f$ to every letter. The unit and free multiplication are defined in the natural way as well. (An element of sort $+$ in $\sorted \monad \sorted \monad \sorted X$ is simply a finite nonempty word of finite nonempty words over $\sorted X[+]$, and we can use free multiplication from the monad of finite nonempty words. On sort $\omega$, there are more cases to consider, but the definition is natural as well. ) An Eilenberg-Moore algebra over this monad is the same thing as an $\omega$-semigroup, as defined at the end of Section~\ref{sec:buchi-determinisation}.
\end{myexample}

\subsection{Other monads}
We now present two monads -- finite multisets and finite sets -- which can be viewed as finite words modulo some equalities. Because these monads arise by imposing equalities on finite words,  their Eilenberg-Moore algebras for these monads are going to be special cases of monoids.
\begin{myexample}[Finite multisets]
    Define $\monad X$ to be the finite multisets over $X$. We write finite multisets using red brackets like this
    \begin{align*}
    \multiset{x,x,y,y,y,z}.
    \end{align*}
    A multiset is finite if it has finitely many elements, and each element appears finitely many times.  Functions are lifted to multisets point-wise, e.g.
    \begin{align*}
    \multiset{x_1,\ldots,x_n} \qquad \stackrel{\monad f} \mapsto \qquad 
    \multiset{f(x_1),\ldots,f(x_n)}.
    \end{align*}
    Another perspective on finite multisets is that they are finite words modulo commutativity $xy=yx$. 
    The unit is $x \mapsto \multiset x$, and free multiplication is simply removing nested brackets, e.g.
    \begin{align*}
    \multiset{\multiset{x,y},\multiset z} \mapsto \multiset{x,y,z}.
    \end{align*}
    This is a monad. An algebra over this monad is the same thing as commutative monoid. Recognisable languages over this monad are the same things are regular languages -- in the usual sense -- which are commutative, see Exercise~\ref{ex:commutative-regular-languages}.
    
    If we lift the restriction on finite supports, then we do not get a monad. The problem is with the substitutions: if  $f : X \to \set a$ is the constant function with an infinite domain, then there is no way to define 
    \begin{align*}
    (\monad f) \multiset{\myunderbrace{x_1,x_2,\ldots}{infinitely many distinct elements}}.
    \end{align*}
    The problem is that the output multiset should contain $a$ infinitely many times. To overcome this problem, we could allow multisets with infinitely many copies of an element. 
\end{myexample}

\begin{myexample}[Idempotent finite words]
    Define $\monad X$ to be finite words $X^*$, modulo the equation $ww=w$. For example,
    \begin{align*}
    abcababc = abc(ab)^2c = abcabc = (abc)^2 = abc = (ab)^2c = ababc.
    \end{align*}
    The remaining ingredients of the monad are defined in the natural way. An algebra over this monad is the same thing as an idempotent monoid, i.e.~a monoid where all elements are idempotent.  Green and Rees show  that if $X$ is a   finite set, then  $\monad X$ is finite\footcitelong[p.~35]{green_rees_1952}. It follows that for every finite alphabet, there are finitely many  languages over this alphabet, and all of them  are recognisable.
\end{myexample}
\begin{myexample}[Powersets]
    The \emph{powerset monad}, and its variant the \emph{finite powerset monad},  are defined in the same way as the multiset monad, except that we use sets (or finite sets) instead of multisets. The substitutions are defined via images (in the language of category theory, we use the co-variant powerset functor, as opposed to the contra-variant powerset functor, which uses inverse images):
    \begin{align*}
      A \subseteq X \qquad \stackrel{\monad f}\mapsto \qquad \set{f(x) : x \in A } \subseteq Y.
    \end{align*} 
    Algebras over the finite powerset monad are the same thing as  monoids that are commutative and idempotent. If  $X$ is a  finite set, then both powerset monads generate  finite sets; and therefore all languages over finite alphabets are recognisable. 
    \end{myexample}

    \begin{myexample}
        [Terms] \label{ex:term-monad} Fix a ranked set $\Sigma$, i.e.~a set where every element has an associated arity in $\set{0,1,\ldots}$. For example, we could have 
        \begin{align*}
        \Sigma = \set{\myunderbrace{a}{arity 2}, \myunderbrace{b}{arity 1}, \myunderbrace{c}{arity 0} }.
        \end{align*}
        Based on $\Sigma$, we define a monad $\monad_\Sigma$ as follows.
        Define $\monad_\Sigma X$ to be the terms over $\Sigma$ with variables $X$, i.e.~an element of $\monad_\Sigma X$ is a tree that looks like this:
        \mypic{47}
        The unit operation maps $x \in X$ to a term which consists only of $x$. The substitution $\monad_\Sigma f$ is defined by applying $f$ to the variables and leaving the remaining part of the term unchanged. Finally, free multiplication replaces each variable with the corresponding term. It is a simple exercise to check that an algebra over the monad $\monad_\Sigma$  is the same thing as an \emph{algebra of type $\Sigma$}, in the sense of universal algebra, i.e.~it consists of an underlying set equipped,  with one operation for every letter in $\Sigma$\footcitelong[Definition 1.3]{sankappanavar1981course}. In the terminology of automata theory, both of these notions are the same as {deterministic bottom-up tree automata over finite trees}, where $\Sigma$ is the input alphabet\footcitelong[Section 2]{thatcherGeneralizedFiniteAutomata1968}. From the above observation it follows that a language $L \subseteq \monad X$ is recognisable  in the sense of Definition~\ref{def:recognisable-monad}  if and only if it is a regular tree language in the sense of automata theory\footnote{This monad describes finite trees. Finding an algebraic account for languages of infinite trees remains an open problem. This problem is discussed in the following papers:
        \incite{DBLP:journals/corr/abs-1808-03559}
\incite{DBLP:journals/lmcs/BojanczykK19}
        } , where the  input alphabet is  obtained from $\Sigma$ by adding one letter of arity $0$ for each element of $X$. If the ranked set $\Sigma$ contains only letters of arity exactly one, then a $\monad$-algebra can be seen as a deterministic word automaton with input alphabet $\Sigma$, without distinguished initial and final states.
    \end{myexample}

\begin{myexample}[Vector spaces]\label{ex:monad-of-vector-spaces}
    In this example, we discuss vector spaces over some field. For the sake of concreteness, we use the field of rational numbers.
    Define $\monad X$ to be the vector space, over the  field of rational numbers, where the basis is $X$. In other words, elements of $\monad X$ are finite linear combinations of elements from $X$ with rational coefficients. For example, 
    \begin{align*}
    3x + 7y - 0.5z \in \monad \set{x,y,z}.
    \end{align*}
    The action of $\monad$ on functions is defined by 
\begin{align*}
q_1 x_1 + \cdots + q_n  x_n \qquad \stackrel{\monad f} \mapsto \qquad q_1 f(x_1) + \cdots + q_n f(x_n).
\end{align*}
The unit operation maps $x \in X$ to the corresponding basis vector, and free multiplication is defined in the natural way, as illustrated in the following example:
\begin{align*}
    3(4x + 0.5y) - 0.2(5x -0.1y)  \quad \mapsto \quad 12 x + 1.5y - x + 0.02y = 11x + 1.52y.
\end{align*}
An algebra $A$ over this monad, with multiplication $\mu$, is also equipped with the structure of a vector space, because we can add elements
\begin{align*}
a + b \eqdef \mu(a+b)
\end{align*}
and multiply them by scalars $q$ from the field of rational numbers:
\begin{align*}
qa \eqdef \mu(qa).
\end{align*}
If $B \subseteq A$ is a basis for the vector space $A$, then the algebra $A$ is isomorphic to $\monad B$. Therefore,  over this monad,  every algebra is isomorphic to a free algebra. 
\end{myexample}

\newcommand{\tvec}{\monad_{\mathrm{vec}}}
\begin{myexample}[Algebra over a field]
    \label{ex:monad-algebra-over-field}
    Define $\monad X$ to be finite linear combinations of words in $X^*$, with rational coefficients. For example, 
    \begin{align*}
    2xyx + -2xx + 0.5xyz \in \monad \set{x,y,z}.
    \end{align*}
    We can view elements of this monad as polynomials with non-commuting variables.
    In other words, $\monad X = \tvec (X^*)$, where $\tvec$ is the monad of vector spaces from Example~\ref{ex:monad-of-vector-spaces} and $X^*$ is the monad of finite words\footnote{This is an example of a composite monad that arises via a distributive law of two monads. This type of construction was first described in
    \incite[Chapter on distributive laws]{appelgate1969seminar}
     }.
    On functions, the monad acts as follows
    \begin{align*}
        q_1 x_1 + \cdots + q_n  x_n \qquad \stackrel{\monad f} \mapsto \qquad q_1 f^*(x_1) + \cdots + q_n f^*(x_n),
        \end{align*}
        where $f^*$ is the substitutions in the monad of finite words. The unit maps $x$ to the linear combination which has the one-letter word $x$ with coefficient $1$. Free multiplication is defined like for polynomials, but the variables are non-commuting, e.g.:
        \begin{align*}
            3(4x -2y)(2xy+yy)   \quad \mapsto \quad 24xxy + \myunderbrace{12xyy - 12yxy}{this is not 0} - 6yyy.
        \end{align*}
     Every algebra over this monad has the structure of a vector space over the rationals, but there is more structure (e.g.~one can multiply two elements of the algebra)\footnote{Algebras over this monad are  known as ``algebras over the field of rational numbers'', but we avoid this terminology due to the over-loading of  ``algebra over''.}. 

     What is a recognisable colouring over this monad? In the  context of this monad (and also  the simpler monad of vector spaces from Example~\ref{ex:monad-of-vector-spaces}), it is more useful to work with different notions of ``finite algebra'' and ``algebra colouring'': instead of finite algebras, one should consider finite dimensional algebras (i.e.~those  where the underlying vector space has finite dimension), and instead of algebra colourings one should consider  linear maps to vector spaces. Under these adapted definitions, the algebra colourings recognised by finite algebras are exactly those which are recognised by  weighted automata, see Exercise~\ref{ex:weighted-automata}. 
\end{myexample}

\exercisehead

\mikexercise{\label{ex:term-modulo-equivalences} Consider the monad $\monad_\Sigma$ from Example~\ref{ex:term-monad}, where $\Sigma$ is some ranked set (possibly infinite). Let $X$ be some possibly infinite set of variables, and consider a set
\begin{align*}
\myunderbrace{\Ee \subseteq (\monad_\Sigma X) \times (\monad_\Sigma X)}{elements of this set will be called identities}.
\end{align*}
For a set $Y$, define $\sim$ to be the least congruence on $\monad_\Sigma Y$ that satisfies 
\begin{align*}
(\monad_\Sigma f)(t_1) \sim (\monad_\Sigma f)(t_2) \qquad \text{for every $(t_1,t_2)\in \Ee$ and $f : X \to Y$.}
\end{align*}
(This congruence can be obtained by intersecting all congruences with the above property.) Define a new monad as follows: $\monad Y$ is equal to $\monad_\Sigma Y$ modulo $\sim$, and the remaining components of the monad are defined in the natural way. Show that this is a monad.
}{}

\begin{figure}[]
    \centering
    
\begin{align*}
    \xymatrix{
        \smonad X 
        \ar[r]^{\delta_X}
        \ar[d]_{\smonad f}
        &
        \monad X 
        \ar[d]^{\monad f}
        \\
        \smonad Y 
        \ar[r]_{\delta_Y}
        &
        \monad Y
    }
    \end{align*}
    \begin{align*}
        \xymatrix @R=2pc 
    { X 
    \ar[d]_{\text{unit in $\smonad$}} 
    \ar[dr]^{\text{unit in $\monad$}}
    \\
    \smonad X \ar[r]_{\delta_X}& \monad X
    }
    \end{align*}
    \begin{align*}
    \xymatrix @R=2pc 
    { \smonad \smonad X 
    \ar[r]^{\smonad \delta_x}
    \ar[d]_{\text{free multiplication in $\smonad$}} 
    &
    \smonad \monad X 
    \ar[r]^{\delta_{\monad X}}
    &
    \monad \monad X
    \ar[d]^{\text{free multiplication  in $\monad$}}
    \\
    \smonad X \ar[rr]_{\delta_X}& & \monad X
    } 
    \end{align*}
    \caption{These three diagrams should commute for all sets $X$ and all functions $f : X \to Y$. The top diagram says that 
    $\set{\delta_X}_X$ is a natural transformation, while the bottom two diagrams say that it is compatible with unit and free multiplication.
    }
    \label{fig:monad-morphism}
\end{figure}
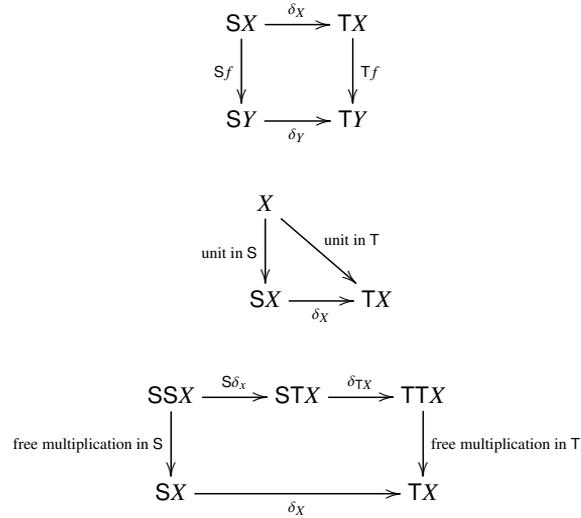

\mikexercise{For monads $\smonad$ and $\monad$, define a monad morphism from $\smonad$ to $\monad$ to be a family of 
functions
\begin{align*}
\set{ \delta_X : \smonad X \to \monad X}_{\text{$X$ is a set}}
\end{align*}
which is subject to the axioms in Figure~\ref{fig:monad-morphism}. Using the monads from  Section~\ref{sec:zillion}, give  five examples of monad morphisms, and five examples of pairs of monads which do not allow a monad morphism.
}{}

\mikexercise{\label{ex:monad-regular-elements} We say that $w \in \monad \Sigma$  is \emph{regular}  if   $\set w$ is a recognisable language. Find a monad where there are no regular elements. (Hint: it appears in this section.) }{
Groups. Consider a  homomorphism $h$ from the free group $\monad \Sigma$ to some finite group $G$. If $n \in \set{1,2,\ldots}$ is the order of the group $G$, then 
\begin{align*}
h(w) = h(w^{n+1}) \qquad \text{for every }w \in \monad \Sigma,
\end{align*}
and therefore $h$ cannot recognise any singleton language.

}
\mikexercise{What is the monad for rings (commutative and non-commutative)? Semirings? }{}

\mikexercise{Consider the following monad $\monad$. The set $\monad X$ is the set of $\omega$-words $X^\omega$, and the substitution $\monad f : \monad X \to \monad Y$ is defined coordinate-wise. The unit is $x \mapsto x^\omega$, and free multiplication  is defined by 
\begin{align*}
    w \in \monad \monad X \qquad \mapsto \qquad (i \mapsto 
    \myunderbrace{(w[i])[i]}{$i$-th letter of\\ 
    \scriptsize of $i$-th letter of $w$}).
\end{align*}
Show that this is a monad. Also, show that a language  $L \subseteq \monad \Sigma$ is recognisable if and only if it is clopen in the sense of Exercise~\ref{ex:clopen-omega-word}.
}{}

\mikexercise{\label{ex:monad-of-countable-well-founded-chains} Let $\monad$ be the monad of countable well founded chains. Show that a finite algebra with universe $S$  is uniquely determined by the operations
\begin{align*}
    \myunderbrace{ab}{binary\\ \scriptsize multiplication\\
    \shelahtype{S^2 \to S}} \qquad 
    \myunderbrace{a^\omega}{multiplication of\\ \scriptsize $aaa \cdots$ \\ \shelahtype{S\to S}}
    \end{align*}  
}{}

\mikexercise{\label{ex:aperiodic-countable-well-founded-chains} Consider 
the monad from Exercise~\ref{ex:monad-of-countable-well-founded-chains}. Show that a language $L \subseteq \monad \Sigma$ is definable in first-order logic (in the ordered model) if and only if it is recognised by a finite $\monad$-algebra $S$ where the underlying semigroup is aperiodic. 
}{}

\mikexercise{\label{ex:aperiodic-countable-well-founded-chains-regular-expressions} Consider 
the monad from Excercise~\ref{ex:monad-of-countable-well-founded-chains}. Consider regular expressions defined by the usual operators, plus $L^\omega$. Show that these expressions do not describe all recognisable languages. 
}
{The set of all countable well-founded orders, i.e.~the full language $\monad \set a$,  is not described by any regular expression (of the form in the exercise). The issue is that the regular expressions have bounded nesting of $\omega$, which is not true for countable well-founded orders. More precisely, define words as follows:
\begin{align*}
w_0 = a \qquad w_{n+1}= (w_n)^\omega.
\end{align*}
By induction on $n$, one shows that there is no regular expression with $\omega$-height $n$ which contains $w_{n+1}$. Here, the $\omega$-height is the nesting depth of $\omega$ (the other operators are not counted).
}

\mikexercise{Consider the monad and regular expressions from Exercise~\ref{ex:aperiodic-countable-well-founded-chains-regular-expressions}. Give an effective condition on finite algebras which corresponds exactly to the Boolean combinations of regular expressions. }
{There is a ranking function $r : A \to \set{0,1,\ldots}$ such that $r(a)=0$ if and only if $a$ is an absorbing zero element, and 
\begin{align*}
r(a^\omega) < r(a) \qquad \text{for all $a$ with nonzero rank.}
\end{align*}
}

\mikexercise{\label{ex:monad-of-countable-scattered-chains}  Let $\monad$ be the monad of countable scattered chains. Show that a finite algebra with universe $S$  is uniquely determined by the operations
\begin{align*}
    \myunderbrace{ab}{binary\\ \scriptsize multiplication\\
    \shelahtype{S^2 \to S}} \qquad 
    \myunderbrace{a^\omega}{multiplication of\\ \scriptsize $aaa \cdots$ \\ \shelahtype{S\to S}}
    \qquad
    \myunderbrace{a^{\omegaop}}{multiplication of\\ \scriptsize $\cdots aaa$ \\ \shelahtype{S\to S}}
    \end{align*}  
}{}

\mikexercise{Consider the monads from Examples~\ref{ex:monad-of-countable-well-founded-chains} and~\ref{ex:monad-of-countable-scattered-chains}. In which of these monads is first-order logic (over ordered models) equivalent to star-free expressions?
}{}

\mikexercise{Consider the monad from Example~\ref{ex:monad-of-countable-scattered-chains}. Which class of languages corresponds to aperiodicity (of the semigroup underlying the $\monad$-algebra)?
}{}

\mikexercise{\label{ex:weighted-automata} A weighted automaton over the rationals consists of:
\begin{align*}
\myunderbrace{\Sigma}{input alphabet,\\
\scriptsize which is a \\
\scriptsize finite set}
\qquad 
\myunderbrace{S}{state space,\\
\scriptsize which is a \\
\scriptsize vector space of\\
\scriptsize finite dimension}
\qquad 
\myunderbrace{s_0 \in S}{initial state}
\qquad 
\myunderbrace{\set{\delta_a : S \to S}_{a \in \Sigma}}{state updates,\\
\scriptsize which are \\
\scriptsize linear maps}
\qquad 
\myunderbrace{F : S \to \Rat}{final function,\\
\scriptsize which is a  \\
\scriptsize linear map}
\end{align*}
The semantics of this automaton is a function of type $\Sigma^* \to \Rat$ defined as follows. Given an input word, do the following: start with the initial state, apply the state update for the first letter, then the state update for the second letter, and so on for all letters, and at the end apply the final function.  The semantics of a weighted automaton can also be naturally extended to finite  linear combinations of words over $\Sigma$, i.e.~to elements of the $\monad \Sigma$ as in Example~\ref{ex:monad-algebra-over-field}.

Show that  $L : \monad \Sigma \to \Rat$ is recognised by a weighted automaton if  and only if it is recognised by a finite dimensional $\monad$-algebra.
}{For the left-to-right implication, suppose that $\Aa$ is a weighted automaton, with state space $S$.  Define $A$ to be the set of linear maps $S \to S$, viewed as matrices. This is a $\monad$-algebra, with the multiplication defined in the natural way (using composition of linear maps and multiplying them by rational scalars).  It has finite dimension, namely the square of the dimension of $S$. Consider the homomorphism 
\begin{align*}
h : \monad \Sigma \to A,
\end{align*}
which is the unique extension of $a \in \Sigma \mapsto \delta_a$. This homomorphism recognises the language $L$, because 
\begin{align*}
L(w) =   \myunderbrace{F(h(w)(\text{initial state of $\Aa$}))}{a linear map, assuming that the argument is $h(w)$}
\end{align*}

We now prove the right-to-left implication. Suppose that $L : \monad \Sigma \to \Rat$ is recognised by  a homomorphism of $\monad$-algebras 
\begin{align*}
\xymatrix{
    \monad \Sigma \ar[dr]_{L}\ar[r]^{h}& A\ar[d]^{F}\\
    & \Rat
}
\end{align*}
A weighted automaton is defined as follows. The state space is $A$. The initial state is the multiplication of the empty empty word in $A$. The final function is $F$.
}

\section{Syntactic algebras}
\label{sec:syntactic-algebras}
In this section, we show that if an algebra colouring is recognisable, then it has a syntactic homomorphism, i.e.~a recognising homomorphism that stores the minimal amount of information\footnote{The results of this section, with the exception of Section~\ref{sec:finitary-monads}, are based on 
\incite[Part I.]{bojanczykRecognisableLanguagesMonads2015}
}.

\begin{definition}
    [Syntactic homomorphism] Fix a monad $\monad$ in the category of sets\footnote{This definition also makes sense for monads in other categories, assuming that one  one interprets ``surjective functions'' as ``epimorphisms''. However, the results in this book about the existence of syntactic homomorphisms will depend on the category of sets, and its generalisation to sorted sets.}.
    \label{sec:synt-hom} The \emph{syntactic homomorphism} of an algebra colouring $L : A \to U$ is any   surjective  homomorphism 
    \begin{align*}
    h :A \to  B
    \end{align*}
     which recognises $L$ and which  is minimal in the sense explained in the following quantified diagram
    \begin{align*}
      \red{\forall }  \blue{\exists! } \qquad
      \xymatrix@C=2cm{ 
         A
         \ar@{->>}[r]^h 
         \ar@{->>}@[red][dr]_-{ \txt{\scriptsize \red{surjective} \qquad \\ \scriptsize \red{homomorphism }\\ 
         \scriptsize{\red{that recognises $L$}}}} 
      & B \\
       & \red C
       \ar@[blue]@{->>}[u]_{\blue {\text{homomorphism}}} }
    \end{align*}
\end{definition}

The algebra used by the syntactic  homomorphism is called the \emph{syntactic algebra}.  The syntactic algebra, if it exists,  is unique up to isomorphism of algebras. Also the syntactic homomorphism is unique in the following sense: every two syntactic homomorphisms will have the same kernel (equivalence relation on $A$ that identifies two elements with the same homomorphic image). This unique kernel is called the \emph{syntactic congruence} of $L$. 

We are mainly interested in the case where the algebra colouring describes a language, i.e.~$A$ is a free algebra, and there are two colours ``yes'' and ``no''.

There are two main results in this section. The first one, Theorem~\ref{thm:syntactic-homomorphism-monad}, says that if an algebra colouring  is recognisable,   then it has a syntactic homomorphism.  In general, colourings that are not recognisable need not have syntactic homomorphisms. The second one, Theorem~\ref{thm:plotkin}, says that a monad is finitary (roughly speaking, this means that all structures described by the monad are finite) if and only if every (not necessarily recognisable) algebra colouring has a syntactic homomorphism.  To illustrate these theorems, we begin with an example of an algebra colouring that does not have a syntactic homomorphism. In light of Theorems~\ref{thm:syntactic-homomorphism-monad} and~\ref{thm:plotkin}, the example uses a colouring that  is  not recognisable and a  monad that is not finitary.
\begin{myexample}\label{ex:no-syntactic-algebra} 
    Consider the monad of $\cc$-words. (As we will see later, this monad is not finitary.) Define $L$ to be the set of  $\cc$-words over a one letter alphabet $\set{a}$ which contain every finite word as an infix. More formally, 
    \begin{align*}
    L=  \set{ w \in \set{a}^\cc : \text{$a^n$ is an infix of $w$ for every $n \in \set{0,1,\ldots}$}}.
    \end{align*}
     This language is not recognisable, because all finite words must have different images under any recognising homomorphism (we leave this as an exercise for the reader).  We will show that $L$ does not have a syntactic homomorphism.
    For $n \in \set{1,2,\ldots}$, define 
    \begin{align*}
    w_n = \text{(shuffle of $\set a$)} \cdot a^n \cdot \text{(shuffle of $\set a$)}.
    \end{align*}
    Define  $h_n$ to be the function 
        \begin{align*}
        w \in \set{a}^\cc \quad \mapsto \quad 
        \begin{cases}
            w_n & \text{if $w=w_{n+1}$}\\
            w & \text{otherwise.}
        \end{cases}
        \end{align*}
        This function is  compositional for every $n$, and therefore it can be viewed as a homomorphism.
    If there would be a syntactic homomorphism, then it would need to factor through $h_n$. Since $h_n$ gives the same result for $w_n$ and $w_{n+1}$, therefore the same would have to be true for the syntactic homomorphism.   Therefore, the syntactic homomorphism $h$, if it existed, would need to give the same result for all $\cc$-words $w_1,w_2,\ldots.$
        By associativity, we would have 
    \begin{align*}
    h(\myunderbrace{w_1 w_1 w_1 \cdots}{$\not \in L$})=h(\myunderbrace{w_1 w_2 w_3 \cdots}{$\in L$}).
    \end{align*}
     Therefore the syntactic homomorphism does not exist.
\end{myexample}

\subsection{Terms and congruences}
\label{sec:terms}
To construct the syntactic homomorphism, we will use classical notions from universal algebra, such as terms and congruences, adapted to the monad setting. These notions and their basic properties are described below.

\paragraph*{Terms.} If $X$ is a set, possibly infinite, then  a \emph{term over variables $X$} is defined simply to be any element of $\monad X$. Given an algebra $A$, a term $t \in \monad X$ is interpreted as the following operation
    \begin{align*}
\myunderbrace{\eta \in A^X \quad \mapsto \quad  \text{multiply $(\monad \eta) (t)$ in $A$}}
{we write $\semterm t A$ for this operation}.
    \end{align*}
    Arguments of $t^A$ are called  \emph{variable valuations}. An operation of the form $t^A$, for some $t$, is called a \emph{term operation\footnote{
        If we are working in a category other than the category of sets, then the term operation $t^A : A^X \to A$ is not necessarily a morphism in the category. For example, in  a category of sorted sets with at least two sorts, the set $A^X$ is not a sorted set.
    }} in the algebra $A$. We distinguish between a term (which can be viewed as syntax) and the term operation that it generates in a given algebra (which can be viewed as semantics). If a term uses a finite set  of  $n$ variables with some implicit ordering, then  we write 
    \begin{align*}
        \semterm t A (a_1,\ldots,a_n)  
    \end{align*}
    for the result of applying $\semterm t A$ to the variable  valuation which maps the $i$-th variable to $a_i$. 
    
\begin{myexample}
    Consider the monad of finite words. The word $xy$ is a term, and the term operation induced it in an algebra (which is the same thing as a monoid) is binary multiplication. The operation induced by the term $\varepsilon$, which has an empty set of variables, is the constant that represents the monoid identity. Another example of a term operation is squaring, which is given by the term $xx$. A non-example is the idempotent power operation $a \mapsto a^\momega$. This is not a term operation, because the number $\momega$ depends on the algebra at hand (also, this number does not  exist in some infinite algebras).
    
    In the monad of $\cc$-chains, the \lale  operations of $\omega$-power and $\omegaop$-power are term operations which arise from the univariate terms  $x^\omega$ and $x^{\omegaop}$. To model  shuffling, we use an infinite family of terms, with the $n$-th one being the shuffle of $\set{x_1,\ldots,x_n}$.
\end{myexample}

Term operations commute with homomorphisms, as shown below.
\begin{lemma}\label{lem:terms-commute-with-homomorphisms}
    If $h : A \to B$ is a homomorphism, and $t \in \monad X$ is a term, then the following diagram commutes:
    \begin{align*}
    \xymatrix@C=1.5cm{
        A^X
        \ar[r]^{\eta \mapsto h \circ \eta}
        \ar[d]_{t^A}
         &
        B^X
        \ar[d]^{t^B}
        \\
        A \ar[r]_h & 
        B
    }.
    \end{align*}
\end{lemma}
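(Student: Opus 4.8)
The statement to prove is that term operations commute with homomorphisms: for a homomorphism $h : A \to B$ and a term $t \in \monad X$, the square relating $t^A$, $t^B$, $h$, and post-composition $\eta \mapsto h \circ \eta$ commutes. The plan is to unfold the definition of $t^A$ given just above the lemma, namely $t^A(\eta) = \mu_A((\monad \eta)(t))$ where $\mu_A : \monad A \to A$ is multiplication in $A$, and similarly $t^B(\zeta) = \mu_B((\monad \zeta)(t))$. So I need to show $h(\mu_A((\monad \eta)(t))) = \mu_B((\monad(h \circ \eta))(t))$ for every variable valuation $\eta \in A^X$.

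First I would rewrite the left-hand side using the assumption that $h$ is a homomorphism: by the defining commuting square of a $\monad$-homomorphism, $h \circ \mu_A = \mu_B \circ (\monad h)$, so $h(\mu_A((\monad \eta)(t))) = \mu_B((\monad h)((\monad \eta)(t)))$. Next I would use functoriality of $\monad$ (axiom~\eqref{eq:monad-axiom-functorial}), which says $\monad h \circ \monad \eta = \monad(h \circ \eta)$; applying both sides to $t$ gives $(\monad h)((\monad \eta)(t)) = (\monad(h\circ\eta))(t)$. Substituting this in yields $h(\mu_A((\monad\eta)(t))) = \mu_B((\monad(h\circ\eta))(t))$, which is exactly $t^B(h \circ \eta)$, i.e. the right-hand side. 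Tracing the diagram: going down-then-right is $\eta \mapsto \mu_A((\monad\eta)(t)) \mapsto h(\mu_A((\monad\eta)(t)))$, and going right-then-down is $\eta \mapsto h\circ\eta \mapsto \mu_B((\monad(h\circ\eta))(t))$, and we have just shown these agree.

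This proof is essentially a two-line diagram chase, so there is no real obstacle — the only thing to be careful about is keeping straight which square is "the homomorphism square" (used once) and which axiom is "functoriality of $\monad$ on the composition $h \circ \eta$" (used once); no associativity, naturality, or unit axioms are needed. If one wanted a fully diagrammatic rendering, one would paste together the naturality-style square for $\monad$ applied to $\eta$ and $h$ with the homomorphism square for $h$, but the pointwise computation above is cleaner and suffices. I would write it as: fix $\eta \in A^X$; then
\begin{align*}
h(t^A(\eta)) = h(\mu_A((\monad\eta)(t))) = \mu_B((\monad h)((\monad\eta)(t))) = \mu_B((\monad(h\circ\eta))(t)) = t^B(h\circ\eta),
\end{align*}
where the second equality is the homomorphism property of $h$ and the third is functoriality of $\monad$. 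Since $\eta$ was arbitrary, the diagram commutes.
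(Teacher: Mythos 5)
Your proof is correct and takes essentially the same route as the paper: unfold $t^A(\eta)=\mu_A((\monad\eta)(t))$, apply the homomorphism square $h\circ\mu_A=\mu_B\circ\monad h$, then functoriality $\monad h\circ\monad\eta=\monad(h\circ\eta)$, and refold as $t^B(h\circ\eta)$. Your version is in fact a touch more careful than the paper's, which elides the term $t$ at the intermediate steps and writes $h\circ\monad\eta$ where $\monad h\circ\monad\eta$ is meant.
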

\begin{proof} Consider a term $t \in \monad X$ and a valuation $\eta \in A^X$. We show below that if we start in a valuation $\eta \in A^X$, and follow the down-right and the right-down paths in the diagram from the statement, then we get the same element of $B$.
    %  and consider the following diagram:
    % \begin{align*}
    % \xymatrix{
    %     \monad X 
    %     \ar[r]^{\monad \eta}
    %     \ar[dr]_{\monad(h \circ \eta)}
    %     &
    %     \monad A 
    %     \ar[rrr]^{\text{multiplication in $A$ }}
    %     \ar[d]^{\monad h}
    %     &&&
    %     A 
    %     \ar[d]^h
    %     \\
    %     & \monad B
    %     \ar[rrr]_{\text{multiplication in $B$ }}
    %     &&&
    %     B
    % }
    % \end{align*}
    % The triangular face commutes because $\monad$ is a functor, while the rectangular face commutes because $h$ is a homomorphism. Therefore, the entire diagram commutes. If we start in a term $t \in \monad X$ and follow the  top-most and bottom-most paths in the diagram above, then we get the following elements from $B$:
    % \begin{align*}
    %     \myunderbrace{h(t^A(\eta))}{
    %         top-most path
    %     } 
    %     \qquad 
    %     \myunderbrace{t^B(h \circ \eta)}{
    %         top-most path
    %     }
    % \end{align*}
    \begin{align*}
    \text{down-right path applied to $\eta$} & =\\
    h(t^A(\eta)) 
    \equalbecause{definition of $t^A$}\\
    h( \text{(multiplication in $A$)}(\monad \eta))
    \equalbecause{$h$ is a homomorphism}\\
    \text{(multiplication in $B$)}(h \circ \monad \eta))
    \equalbecause{$\monad$ is a functor}\\
    \text{(multiplication in $B$)}(\monad(h \circ  \eta))
    \equalbecause{definition of $t^B$}\\
    t^B(h \circ \eta) & =\\
    \text{right-down path applied to $\eta$.} 
    \end{align*}
\end{proof}

\paragraph*{Congruences.} Define a  \emph{congruence} in an algebra $A$ to be an equivalence on the underlying set that satisfies any of the equivalent conditions in the following lemma.
\begin{lemma}
    \label{lem:congruence-definition}
    Let $A$ be an algebra. For every equivalence relation $\sim$ on its underlying set, the following conditions are equivalent:
    \begin{enumerate}
        \item \label{congruence:compositional} the function which maps $a \in A$ to its equivalence class is compositional;
         \item \label{congruence:kernel-hom}$\sim$ is the kernel of some homomorphism from $A$ to some algebra $B$;
        \item \label{congruence:commutes-with-terms} $\sim$ commutes with  every term operation,  which means that:
        \begin{align*}
            % \label{eq:congruence-definition}
            \myunderbrace{\eta_1 \sim \eta_2}{$\eta_1(x) \sim \eta_2(x)$ \\ \scriptsize for every $x \in X$}\quad \Rightarrow \quad t^A (\eta_1) \sim t^A (\eta_2)  \qquad \text{for every $t \in \monad X$ and $\eta_1,\eta_2 \in A^X$.}
            \end{align*}
    \end{enumerate}
\end{lemma}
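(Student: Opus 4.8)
The plan is to prove the three equivalences in a cycle, say \ref{congruence:kernel-hom}~$\Rightarrow$~\ref{congruence:commutes-with-terms}~$\Rightarrow$~\ref{congruence:compositional}~$\Rightarrow$~\ref{congruence:kernel-hom}, since each implication is short and uses a different earlier result. The spirit is exactly the same as for ordinary universal algebra, but with ``operations of bounded arity'' replaced by ``term operations $t^A$ for arbitrary $t \in \monad X$''; the only new ingredient is Lemma~\ref{lem:terms-commute-with-homomorphisms} (term operations commute with homomorphisms) and Lemma~\ref{lem:monad-compositional} (surjective compositional functions give rise to algebra structure). So the monad machinery does all the heavy lifting and there are no genuine obstacles, only some diagram-chasing to keep honest.

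First I would do \ref{congruence:kernel-hom}~$\Rightarrow$~\ref{congruence:commutes-with-terms}. Suppose $\sim$ is the kernel of a homomorphism $h : A \to B$, and let $t \in \monad X$ with $\eta_1, \eta_2 \in A^X$ satisfying $\eta_1(x) \sim \eta_2(x)$ for all $x$. Kernel membership means $h \circ \eta_1 = h \circ \eta_2$ as functions $X \to B$. Applying Lemma~\ref{lem:terms-commute-with-homomorphisms} to $h$ and $t$, we get $h(t^A(\eta_i)) = t^B(h \circ \eta_i)$ for $i = 1,2$; since the right-hand sides agree, so do the left-hand sides, i.e.~$t^A(\eta_1) \sim t^A(\eta_2)$, as required.

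Next, \ref{congruence:commutes-with-terms}~$\Rightarrow$~\ref{congruence:compositional}. Let $q : A \to A/{\sim}$ be the quotient map sending $a$ to its $\sim$-class. I want to produce $\mu : \monad (A/{\sim}) \to A/{\sim}$ making the compositionality square commute. Because $q$ is surjective, $\monad q : \monad A \to \monad (A/{\sim})$ is surjective (by the functoriality argument used inside the proof of Lemma~\ref{lem:monad-compositional}, $\monad$ preserves surjections in $\setcat$ since surjections split). Thus to define $\mu(s)$ for $s \in \monad(A/{\sim})$, pick any $t \in \monad A$ with $(\monad q)(t) = s$ and set $\mu(s) = q(\text{multiply } t \text{ in } A)$; I must check this is well-defined, i.e.~independent of the choice of $t$. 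This is precisely where \ref{congruence:commutes-with-terms} is used: if $(\monad q)(t) = (\monad q)(t')$, then, writing any element of $\monad A$ as $t = (\monad \eta)(u)$ for a suitable term $u \in \monad Y$ and valuation $\eta$ — more simply, taking $Y = A$, $u = t$, $u' = t'$ with the identity valuation is not quite enough, so instead I view $t, t'$ themselves as terms over the variable set $A$ and observe that $(\monad q)(t) = (\monad q)(t')$ says exactly that the ``valuations'' $\ident_A$ composed with $q$ agree after applying $\monad$; unravelling, the cleanest route is: take the term $t \in \monad A$, regard $A$ as the variable set and $\ident_A$ and any section-based rewriting as two valuations that are pointwise $\sim$-related, then $t^A$ applied to them is $\sim$-related by \ref{congruence:commutes-with-terms}. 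Concretely: $\mu$ is well-defined because $(\monad q)(t)=(\monad q)(t')$ together with \ref{congruence:commutes-with-terms} (applied with the term being $t$ over variable set $A$, and the two valuations being $\ident_A$ and a function $r$ with $q\circ r = q$ chosen so that $(\monad r)(t) = t'$) yields $q(\text{mult}(t)) = q(\text{mult}(t'))$. This gives a well-defined $\mu$ making the square commute, so $q$ is compositional.

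Finally, \ref{congruence:compositional}~$\Rightarrow$~\ref{congruence:kernel-hom} is immediate from Lemma~\ref{lem:monad-compositional}: since $q : A \to A/{\sim}$ is compositional and surjective, there is a (unique) algebra structure on $A/{\sim}$ turning $q$ into a homomorphism, and $\sim$ is by construction the kernel of $q$. This closes the cycle and proves the lemma. The one step that needs the most care in the write-up is the well-definedness check in \ref{congruence:commutes-with-terms}~$\Rightarrow$~\ref{congruence:compositional} — choosing the right ``term'' and ``valuations'' so that \ref{congruence:commutes-with-terms} applies verbatim — but there is no conceptual difficulty, only bookkeeping with $\monad q$ and sections of $q$.
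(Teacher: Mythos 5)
Your overall decomposition is the same as the paper's: both prove the cyclic chain of implications $1\Rightarrow 2$, $2\Rightarrow 3$, $3\Rightarrow 1$ (you merely start the cycle at item~\ref{congruence:kernel-hom} rather than item~\ref{congruence:compositional}), and your handling of $2\Rightarrow 3$ (via Lemma~\ref{lem:terms-commute-with-homomorphisms}) and of $1\Rightarrow 2$ (via Lemma~\ref{lem:monad-compositional}) is exactly what the paper does.

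However, there is a genuine gap in your argument for $3\Rightarrow 1$. The ``concrete'' version you settle on claims: if $(\monad q)(t)=(\monad q)(t')$ then one can find a function $r:A\to A$ with $q\circ r = q$ and $(\monad r)(t)=t'$, and then apply condition~3 with the two valuations $\ident_A$ and $r$. No such $r$ exists in general. Take the monad of finite words, $A=\{a,b,c\}$ with $a\sim b$ and $c$ in its own class, $t=aab$ and $t'=aba$. Both map to $[a][a][a]$ under $\monad q$, but $(\monad r)(aab)=r(a)r(a)r(b)$ always has equal first and second letters, whereas $aba$ does not, so no $r$ works. The fix is the move you gesture at in your ``cleanest route'' sentence but then abandon: fix \emph{one} section $r : A/{\sim}\to A$ of $q$ and apply condition~3 to the term $t$ with the pair of valuations $\ident_A$ and $r\circ q$ (which are pointwise $\sim$-related because $q = q\circ r\circ q$), giving
\[
\text{mult}(t) = t^A(\ident_A) \sim t^A(r\circ q) = \text{mult}\bigl((\monad r)\,(\monad q)(t)\bigr).
\]
The right-hand side depends only on $(\monad q)(t)$, so $\text{mult}(t)\sim\text{mult}(t')$ whenever $(\monad q)(t)=(\monad q)(t')$; equivalently, $\mu(s) := q\bigl(\text{mult}((\monad r)(s))\bigr)$ is a well-defined witness of compositionality. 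You should replace the ``chosen so that $(\monad r)(t)=t'$'' step with this argument; as written, the well-definedness check does not go through.
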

\begin{proof}
    The implication~\ref{congruence:compositional}$\Rightarrow$ \ref{congruence:kernel-hom} follows from Lemma~\ref{lem:monad-compositional} which says that compositional functions are the same as homomorphisms. The implication \ref{congruence:kernel-hom}$\Rightarrow$\ref{congruence:commutes-with-terms} follows from Lemma~\ref{lem:terms-commute-with-homomorphisms}, which says that term operations commute with homomorphisms. The implication~\ref{congruence:commutes-with-terms}$\Rightarrow$\ref{congruence:compositional} follows from the definition of compositional functions. 
\end{proof}
 By condition~\ref{congruence:kernel-hom}, every congruence induces a  quotient algebra, where the universe is equivalence classes.

    \exercisehead

    \mikexercise{\label{ex:term-basis} Fix a monad in the category of sets.  Consider a set of terms $\basis$. We say that $\basis$ is a \emph{term basis} if for every finite algebra $A$ and subset $\Gamma \subseteq A$, the sub-algebra generated by $\Gamma$ is equal to the least subset of $A$ that contains $\Gamma$ and which is closed under applying term operations corresponding to terms  from $\basis$.  Show that if $\basis$ is a term basis, then a finite algebra $A$ is uniquely determined by its $\basis$-multiplication tables, which  is the family of term operations $\set{t^A}_{t \in \basis}$.
}{}

\mikexercise{Let $\monad$ be a monad which has a finite term basis $\basis$.  Show that given the $\basis$-multiplication tables in an algebra $A$, one  can compute the $\basis$-multiplication tables of the powerset algebra $\powerset A$ (as defined in Exercise~\ref{ex:powerset-algebra}, assuming that $\powerset A$ is indeed an algebra). 
}{}
\mikexercise{Find a notion of \emph{computable term basis} which generalises the previous exercise so as to capture the \lale operations in the monad of $\cc$-words.
}{}

\mikexercise{
    \label{ex:term-basis-congruence}
    Consider a monad with a term basis $\basis$. Show that $\sim$ is an equivalence relation on the underlying set of a finite algebra, then $\sim$ is a congruence if and only if it commutes with all term operations from the term basis $\basis$. 
}{
    Let $\basis$ be the term basis, let  $A$ be a finite algebra, and let $\sim$ be an equivalence relation on its universe. Define 
    \begin{align*}
    B \subseteq A \times A
    \end{align*}
    to be the subalgebra of $A \times A$ that is generated by $\sim$ when viewed as a set of pairs. By definition of subalgebras,  $B$ is the set of values of the form 
    \begin{align*}
    t^{A \times A}(\eta),
    \end{align*}
    where $t$ ranges over terms of the form $t \in \monad X$ for some set of variables $X$, and $\eta$ ranges over valuations from $X$ to the the set of pairs $\sim$.   For $i \in \set{1,2}$, let 
    \begin{align*}
    \pi_i : A \times A \to A
    \end{align*}
    be the projection onto the $i$-th coordinate. Since this is a homomorphism, and term operations commute with homomorphisms thanks to Lemma~\ref{lem:terms-commute-with-homomorphisms}, we have 
    \begin{align*}
    \pi_i(t^{A \times A})(\eta) = t^A(\pi_i \circ \eta).
    \end{align*}
    Therefore, $B$ is equal to the set of pairs
    \begin{align*}
    (t^A(\pi_1 \circ \eta), t^A(\pi_2 \circ \eta))
    \end{align*}
    where the ranges of  $t$ and $\eta$ are the same as when describing $B$. Since every pair of valuations
    \begin{align*}
    \eta_1, \eta_2 : X \to A
    \end{align*}
    which are pointwise equivalent under $\sim$ can be described as $\pi_1 \circ \eta$ and $\pi_2 \circ \eta$ for some $\eta$, it follows that 
    \begin{align*}
    B = \set{(t^A(\eta_1),t^A(\eta_2)): t \in \monad X \text{ and } \eta_1 \sim \eta_2 \in A^X}.
    \end{align*}
    If we use the definition of congruences as equivalence relations that commute with all term operations, then $\sim$ is a congruence if and only if $B \subseteq \sim$.   
    
    Since $\basis$ is a term basis, it follows that $B$ is also the smallest subset of $A \times A$ which contains all pairs from $\sim$ and which is closed under applying all term operations from $\basis$. By assumption that $\sim$ commutes all term operations from the basis, it follows that $B \subseteq \sim$. 
}

\mikexercise{\label{ex:vectorial-terms} Define a \emph{vectorial term} to be any function
\begin{align*}
f : 
\myunderbrace Y{output \\ \scriptsize variables}
\to \monad \myunderbrace{X}{input \\ \scriptsize variables}.
\end{align*}
For  an algebra $A$ and a vectorial term $f$ as above, define
\begin{align*}
\myunderbrace{f^A : A^X \to A^Y}{such a function is called \\ \scriptsize a \emph{vectorial term operation}}
\end{align*}
to be the function which maps  $\eta \in A^X$ to the  following function:
\begin{align*}
    \xymatrix{
        Y 
        \ar[r]^{f}
        &
        \monad X
        \ar[r]^{\monad \eta}
         &
         \monad A
        \ar[r]^{\mu}
        &
    A.
    }
    \end{align*}
Show that  vectorial term operations are closed under composition.
}
{
    Closure under composition is proved in the following claim.

\begin{claim} \label{claim:Kleisli-functor} 
    For every vectorial terms 
    \begin{align*}
    g : Z \to \monad Y \qquad f : Y \to \monad X
    \end{align*}
    there is a vectorial term 
    \begin{align*}
    \myunderbrace{g \cdot f  : Z \to \monad X}
    {this is called the \emph{Kleisli composition} of $g$ and $f$}
    \end{align*}
    such that the following diagram commutes for every algebra $A$:
    \begin{align*}
    \xymatrix{
        A^X 
        \ar[r]^{f^A}
        \ar[dr]_{(g \cdot f)^A}
        &
        A^Y 
        \ar[d]^{g^A}
        \\
        & A^Z
    }
    \end{align*}
    
\end{claim}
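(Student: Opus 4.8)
The claim asserts that the Kleisli composition of two vectorial terms exists and that the corresponding vectorial term operations compose correctly. The plan is to define $g \cdot f$ explicitly using the monad structure and then verify the diagram by a diagram chase, in the same style as the proof of Lemma~\ref{lem:terms-commute-with-homomorphisms}.

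First I would define the Kleisli composition. Given $g : Z \to \monad Y$ and $f : Y \to \monad X$, set
\begin{align*}
    \xymatrix{
        g \cdot f : Z \ar[r]^{g} & \monad Y \ar[r]^{\monad f} & \monad \monad X \ar[r]^{\mult_X} & \monad X.
    }
\end{align*}
This is the usual Kleisli composition of the monad, so associativity and the fact that $\unit{}$ is a two-sided identity are standard consequences of the monad axioms, though I only need the composition itself here.

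Next I would carry out the diagram chase. Fix an algebra $A$ with multiplication $\mu : \monad A \to A$ and a valuation $\eta \in A^X$. Unravelling definitions, $f^A(\eta) = \mu \circ (\monad \eta) \circ f$, viewed as an element of $A^Y$. Then $g^A(f^A(\eta))$ is obtained by composing $g : Z \to \monad Y$ with $\monad(f^A(\eta)) : \monad Y \to \monad A$ and then $\mu$. The key point is that $\monad(f^A(\eta)) = \monad(\mu) \circ \monad\monad(\eta) \circ \monad f$ by functoriality of $\monad$ (axiom~\eqref{eq:monad-axiom-functorial}). Substituting, $g^A(f^A(\eta)) = \mu \circ \monad\mu \circ \monad\monad\eta \circ \monad f \circ g$. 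On the other side, $(g\cdot f)^A(\eta) = \mu \circ \monad\eta \circ (g \cdot f) = \mu \circ \monad\eta \circ \mult_X \circ \monad f \circ g$. Comparing the two expressions, it suffices to show that $\mu \circ \monad\mu \circ \monad\monad\eta = \mu \circ \monad\eta \circ \mult_X$ as functions $\monad\monad X \to A$. This follows from two facts: the naturality of free multiplication~\eqref{eq:monad-axiom-naturality-mult} applied to $\eta$, which gives $\monad\eta \circ \mult_X = \mult_A \circ \monad\monad\eta$, and the associativity axiom of an Eilenberg-Moore algebra, which says $\mu \circ \monad\mu = \mu \circ \mult_A$. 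Chaining these, $\mu \circ \monad\eta \circ \mult_X = \mu \circ \mult_A \circ \monad\monad\eta = \mu \circ \monad\mu \circ \monad\monad\eta$, as required.

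The proof is essentially a routine verification; I do not expect a genuine obstacle, only the bookkeeping of keeping track of which axiom justifies each rewriting step. The mild subtlety worth being careful about is that $A^X, A^Y, A^Z$ are not objects of the ambient category in the multi-sorted setting, so the argument should be phrased pointwise on valuations (chasing a fixed $\eta$) rather than as an equality of morphisms in the category — exactly as the footnote to the definition of term operations warns. Assembling the displayed equalities pointwise then gives the commuting triangle, and hence closure of vectorial term operations under composition.
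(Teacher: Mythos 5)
Your proposal is correct and takes essentially the same route as the paper: you define the Kleisli composition in exactly the same way (as free multiplication after $\monad f$ after $g$), and then verify the triangle by reducing, via functoriality and the definition of the term operations, to the combination of naturality of free multiplication and the associativity axiom for Eilenberg-Moore algebras. The paper presents this as a single five-face commuting diagram while you write it out equationally, but the faces of that diagram are precisely the rewriting steps in your chain, so there is no substantive difference.
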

\begin{proof}
    The Kleisli composition $g \cdot f$ from the statement of the claim is defined to be the composition of the following functions:
    \begin{align*}
    \xymatrix{
        Z \ar[r]^g
        &
        \monad Y
        \ar[r]^{\monad f}
        &
        \monad \monad X
        \ar[rr]^{\text{free multiplication}}
        && 
        \monad X
    }   
    \end{align*}
Consider a valuation $\eta : Z \to A$. To prove that applying $(g \cdot f)^A$ to $\eta$ gives the same result as applying first $f^A$ and then $g^A$, as required in the statement of the claim, consider the following diagram. 
\begin{align*}
    \xymatrix@C=1.7cm
    {
        X 
        \ar[r]^g
        \ar[drr]_{g \cdot f}
         &
         \monad Y
         \ar[r]^{\monad f}
         \ar@/^3.0pc/[rrr]^{\monad (f^A(\eta))}
        &
        \monad \monad X 
        \ar[r]^{\monad \monad \eta} 
        \ar[d]^{\txt{\scriptsize free \\ \scriptsize
        multiplication}}
        &
        \monad \monad A 
        \ar[r]^{\monad \mu} 
        \ar[d]^{\txt{\scriptsize free \\ \scriptsize multiplication}}
        &
        \monad A 
        \ar[d]^{\mu}\\
        &
        & 
        \monad X
        \ar[r]_{\monad \eta} 
        &
        \monad A
        \ar[r]_\mu
         & 
        A
    }
\end{align*}
The bottom-most path from $X$ to $A$ describes the result of applying $(g \cdot f)^A$ to the valuation $\eta$, while the top-most path describes the result of applying first $f^A$ and then $g^A$. Therefore, to prove the lemma, it is enough to show that the diagram commutes.  The top face in the diagram (incident with the curving arrow) describes the definition of $f^A$, with the functor $\monad$ applied to it. The left-most face is the definition of Kleisli composition.  The left rectangular face  is naturality of free multiplication, which is axiom~\eqref{eq:monad-axiom-naturality-mult} of monads. The right rectangular face is the associativity axiom for Eilenberg Moore algebras. 
\end{proof}
}

    \subsection{Syntactic homomorphisms for  recognisable  colourings}
    In this section, we prove  the first result about syntactic homomorphisms, which says that they always exist for  algebra colourings that are recognisable. 
    \begin{theorem}\label{thm:syntactic-homomorphism-monad} 
        Let $\monad$ be a monad in the category of sets. Every recognisable algebra colouring has a  syntactic homomorphism.   
    \end{theorem}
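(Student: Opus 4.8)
The plan is to imitate the classical construction of the syntactic monoid (Theorem~\ref{thm:syntactic-monoid}) and its monad-level analogue, but with the extra twist that the ambient free algebra $A$ need not be the whole of $\monad\Sigma$ — still, the only thing we use is that $A$ is an arbitrary $\monad$-algebra, together with the recognisability hypothesis. Let $L : A \to U$ be recognisable, say via a surjective homomorphism $g : A \to B$ into a finite algebra $B$ together with an algebra colouring $c : B \to U$ with $c \circ g = L$. The idea is to define the \emph{syntactic congruence} $\sim$ on $A$ directly, as the largest congruence on $A$ that ``refines'' $L$ in the appropriate sense, and then take the quotient. The key point is that such a largest congruence exists because congruences are closed under a suitable join operation.

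First I would make precise what ``recognises $L$'' means at the level of congruences: a congruence $\equiv$ on $A$ recognises $L$ if $a \equiv a'$ implies $L(a) = L(a')$, equivalently if $L$ factors through the quotient map $A \to A/{\equiv}$. I would then show that the family of congruences recognising $L$ is closed under arbitrary joins in the lattice of equivalence relations — or, more carefully (since the join of congruences as equivalence relations need not be a congruence in general categories, cf. Exercise~\ref{ex:lift-binary-relation}), I would instead argue as follows. Consider the product $\prod_{i} A/{\equiv_i}$ over all congruences $\equiv_i$ recognising $L$; this is an algebra, and the diagonal map $A \to \prod_i A/{\equiv_i}$ is a homomorphism whose kernel $\sim$ is again a congruence. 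One checks that $\sim$ recognises $L$ (because each factor does) and that $\sim$ refines every $\equiv_i$. Take $B_0$ to be the image of $A$ under this diagonal map — a subalgebra of the product — and let $h : A \to B_0$ be the corestriction, which is a surjective homomorphism by construction. Then $h$ recognises $L$, and its kernel is $\sim$.

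Next I would verify the universal property in Definition~\ref{def:recognisable-monad}'s diagram: given any surjective homomorphism $g : A \to C$ that recognises $L$, we must produce a unique surjective homomorphism $f : C \to B_0$ with $h = f \circ g$. Since $\ker g$ is one of the $\equiv_i$ in our family, and $\sim = \ker h$ refines $\ker g$, we get $g(a) = g(a') \Rightarrow h(a) = h(a')$, which yields a well-defined set-map $f : C \to B_0$ with $h = f \circ g$; it is surjective because $h$ is, and it is a homomorphism by the standard diagram chase using surjectivity of $g$ (exactly as in the last displayed computation in the proof of Theorem~\ref{thm:syntactic-monoid}, or via Lemma~\ref{lem:monad-compositional}: $f$ is compositional because $h$ is a homomorphism and $g$ is surjective, hence $f$ is a homomorphism). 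Uniqueness of $f$ is immediate from surjectivity of $g$.

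The one place where recognisability is genuinely used — and the main obstacle — is ensuring that the resulting $B_0$ is a \emph{finite} algebra, as required by the theorem's statement (the syntactic algebra must be finite). Without recognisability the diagonal into the product of all $L$-recognising quotients could be infinite (this is exactly what goes wrong in Example~\ref{ex:no-syntactic-algebra}). With recognisability, I would observe that $\sim = \ker h$ refines $\ker g$ for the fixed finite recogniser $g : A \to B$; hence the quotient $A/{\sim} = B_0$ is itself a quotient of $B$ (via the induced surjection $B = A/{\ker g} \twoheadrightarrow A/{\sim}$), and a quotient of a finite algebra is finite. So in fact I can take the product in the previous paragraph over the (finitely many, up to isomorphism — or at worst a set's worth of) congruences that refine $\ker g$ and recognise $L$; equivalently, just intersect $\ker g$ with nothing and note directly that among all $L$-recognising congruences coarser than nothing there is a finest one lying below $\ker g$. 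I expect the write-up to spend most of its effort on this finiteness bookkeeping and on the homomorphism diagram chase for $f$; the lattice-theoretic existence of $\sim$ and the universal property are then routine.
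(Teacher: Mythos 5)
There is a genuine gap, and it is a direction error at the heart of the construction. You take $\sim$ to be the kernel of the diagonal map $A \to \prod_i A/{\equiv_i}$ over all recognising congruences $\equiv_i$. This kernel is exactly $\bigcap_i \equiv_i$: the \emph{finest} common refinement, not the coarsest. But the syntactic congruence must be the \emph{coarsest} recognising congruence. To see why, look at the universal property in Definition~\ref{sec:synt-hom}: for every surjective $g : A \to C$ that recognises $L$, we must have $h = f \circ g$, which forces $\ker g \subseteq \ker h$. So $\ker h$ must \emph{contain} every recognising congruence --- it is the join, not the meet. Your inference ``$\sim = \ker h$ refines $\ker g$, hence $g(a)=g(a') \Rightarrow h(a)=h(a')$'' is backward: if $\sim$ refines $\ker g$, meaning $\sim \subseteq \ker g$, then $h(a)=h(a') \Rightarrow g(a)=g(a')$, the opposite implication. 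The same flip breaks the finiteness argument: if $\sim \subseteq \ker g$ then $A/\sim$ surjects \emph{onto} $B=A/\ker g$, not the other way, so nothing prevents $A/\sim$ from being infinite. Concretely, the intersection of all recognising congruences can collapse to the identity (for a regular language there are arbitrarily fine recognising congruences), giving $A/\sim \cong A$.

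You correctly sensed that the obstacle is showing that the relevant supremum of congruences is itself a congruence --- your mention of Exercise~\ref{ex:lift-binary-relation} is on the right track --- but then you retreated to the meet, which avoids the difficulty only because the meet answers the wrong question. The paper's proof does work at the level of congruences, but it attacks the join head-on: Lemma~\ref{lem:join-congruences} shows that the transitive closure of $\sim_1 \cup \sim_2$ is a congruence whenever one of them has finite index, the key step being to show that the join commutes with term operations having infinitely many variables by replacing an arbitrary valuation with one of finite image via a transversal of the finite-index congruence. With that lemma in hand, one restricts to the (finitely many) congruences coarsening the given finite recogniser, takes their greatest element, and then argues as you do --- using Claim~\ref{claim:function-that-factors-is-homomorphism} --- that the factoring map is a homomorphism. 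So your final diagram chase is fine; it is the existence and identification of $\sim$ that is wrong.
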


The proof is based on congruences. The main result is (a strengthening of) the observation that congruences of finite index in a given algebra, ordered by inclusion when viewed as sets of pairs, form a lattice. This means that   every two congruences of finite index have a least upper bound and greatest lower bound. For the greatest lower bound (which is called the  \emph{meet} in the terminology of  lattices), the observation is straightforward: if $\sim_1$ and $\sim_2$ are congruences, not necessarily of finite index, then their intersection (when viewed as a set of pairs) clearly commutes with all term operations, and is therefore a congruence. Since every lower bound must be contained in the intersection, it follows that the intersection is the greatest lower bound.

The least upper bound (which is called the \emph{join} in the terminology of lattices) is more interesting. Here, we use the assumption on finite index (it is enough that one of the congruences has finite index), see Exercise~\ref{ex:finiteness-assumption-in-join-needed} for why this assumption is needed.

\begin{lemma}\label{lem:join-congruences}
    Let $A$ be an algebra. If $\sim_1$ and $\sim_2$ are congruences and $\sim_1$ has finite index,  then they have a join congruence, i.e.~a least upper bound among all congruences in $A$.
\end{lemma}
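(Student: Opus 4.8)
The obvious candidate for the join is the least equivalence relation containing $\sim_1\cup\sim_2$, but this need not be a congruence in general (that is exactly what Exercise~\ref{ex:finiteness-assumption-in-join-needed} warns about); so the plan is to build the join differently, exploiting the finite index of $\sim_1$. First I would pass to the quotient algebra $A_1 = A/{\sim_1}$, which is a \emph{finite} algebra, together with its quotient homomorphism $q_1 : A \to A_1$. The congruence $\sim_2$ does not in general push forward to $A_1$, so instead I would look at the set of \emph{all} congruences $\theta$ on $A$ that contain both $\sim_1$ and $\sim_2$; this set is nonempty, since the all-identifying equivalence relation is trivially a congruence containing everything. I claim the intersection of this family is again a congruence containing $\sim_1\cup\sim_2$, and is therefore the desired join.

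The key step is to check that this intersection is a congruence. Using the characterisation of congruences in Lemma~\ref{lem:congruence-definition}(\ref{congruence:commutes-with-terms}), I need: if $\eta_1,\eta_2 \in A^X$ are pointwise related by the intersection, then $t^A(\eta_1)$ and $t^A(\eta_2)$ are related by the intersection, for every term $t \in \monad X$. But being related by the intersection means being related by \emph{every} $\theta$ in the family, and each such $\theta$ commutes with term operations; hence $t^A(\eta_1) \mathrel\theta t^A(\eta_2)$ for every $\theta$, i.e.\ the images are related by the intersection. So the intersection is a congruence, it clearly contains $\sim_1$ and $\sim_2$, and it is clearly the least such — any upper bound of $\sim_1$ and $\sim_2$ in the lattice of congruences is one of the $\theta$'s, hence contains the intersection. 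This already gives a join \emph{among congruences}, with no finiteness needed.

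So where does the finite index of $\sim_1$ come in? It is needed to guarantee that the join has finite index (which is what we actually want when building syntactic homomorphisms — the lemma as stated only asks for ``a least upper bound among all congruences'', but in the intended application we want to stay inside finite-index congruences). The join, being contained in no relation finer than $\sim_1$ is false; rather, the join \emph{contains} $\sim_1$, hence the quotient $A/{\text{join}}$ is a quotient of $A/{\sim_1}=A_1$, which is finite; so the join has finite index automatically. Concretely: since $\sim_1 \subseteq \text{join}$, the identity-on-$A$ map factors as $A \twoheadrightarrow A/{\sim_1} \twoheadrightarrow A/{\text{join}}$, exhibiting $A/{\text{join}}$ as a quotient of the finite algebra $A_1$, so it is finite.

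\textbf{Main obstacle.} The only real subtlety is making sure the intersection-of-congruences argument is carried out at the level of the set-of-pairs description and matches the lattice ordering used in the surrounding text (inclusion of relations viewed as subsets of $A\times A$, with ``least upper bound'' meaning smallest relation above both). Everything else — that intersections of congruences are congruences, that a quotient of a finite algebra is finite — is routine given Lemma~\ref{lem:congruence-definition} and Lemma~\ref{lem:monad-compositional}. I would therefore spend the bulk of the written proof on the verification that the intersection commutes with all term operations, and then close with the one-line remark that finite index of $\sim_1$ forces finite index of the join via the factorisation through $A/{\sim_1}$.
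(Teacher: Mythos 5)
Your proof of the lemma as stated is correct, and it is a genuinely different argument from the one in the paper. The paper does not take the intersection of all congruences above $\sim_1\cup\sim_2$; instead it shows that the transitive closure of $\sim_1\cup\sim_2$ itself is a congruence, and this is precisely where the finite index of $\sim_1$ is used. (The transitive closure commutes with term operations of finite arity almost tautologically; the finite-index hypothesis is invoked, via a map $\alpha : A \to A$ picking a representative in each $\sim_1$-class and therefore having finite image, to extend the commutation to term operations with infinitely many variables.) Your observation that the lemma's existence claim does not actually need the finite-index hypothesis is correct: intersections of congruences are congruences (your argument via Lemma~\ref{lem:congruence-definition}\eqref{congruence:commutes-with-terms} is right), and the family you intersect is nonempty because the all-identifying relation is always a congruence, since the one-element algebra exists.

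The catch is that your version of the lemma is not strong enough for its intended use in Theorem~\ref{thm:syntactic-homomorphism-monad}. There one takes two congruences $\theta$ and $\approx$ that both recognise the colouring $L$, forms their join, and needs the join to \emph{still recognise} $L$, so that it stays inside the set $\Cc_\approx$. If the join is the transitive closure of $\theta\cup\approx$, this is immediate: each step in the transitive chain preserves the colour. If the join is only known to be the smallest congruence above $\theta\cup\approx$, it is not a priori contained in $\set{(a,b):L(a)=L(b)}$, and the proof of Theorem~\ref{thm:syntactic-homomorphism-monad} would have a gap. So the real content of the finite-index hypothesis is not, as you speculate in your last paragraph, that the join has finite index (your derivation of that fact is correct but beside the point), but that the join admits the explicit transitive-closure description, which is the form in which the lemma is subsequently applied. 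The Exercise~\ref{ex:finiteness-assumption-in-join-needed} warning you cite is about exactly this: without finite index the transitive closure can fail to be a congruence for monads with infinitary term operations, such as the monad of $\cc$-words.
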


\begin{proof}
    Define the join to be the transitive closure of the union of $\sim_1$ and $\sim_2$.
        In other words, two elements of the algebra are related by the join if one can be reached from the other using a finite number of steps which use either one of the congruences $\sim_1$ or $\sim_2$.  This is the same as the join in the lattice of equivalence relations, and therefore  every congruence that contains both $\sim_1$ and $\sim_2$ must contain the join defined above. It remains to prove that this join is in fact a congruence, and not just an equivalence relation. To prove this, we 
     use the definition of congruences which says that they  commutes with all term operations in the algebra. We begin with the special case of term operations with  finitely many variables.

        \begin{claim}\label{claim:join-commutes-finitely-many-variables}
            The join commutes with all term operations  that have finitely many variables. 
        \end{claim}
        \begin{proof} Consider a term operation 
            \begin{align*}
            t^A(x_1,\ldots,x_n)
            \end{align*}
            with finitely many variables. We need to show that  every two outputs 
            \begin{align*}
            t^A(a_1,\ldots,a_n) \qquad \text{and} \qquad  t^A(b_1,\ldots,b_n)
            \end{align*}
            are equivalent under the join, assuming that the inputs are pairwise equivalent under the join. This is proved using a finite number of steps, where in each step we use commutation of $t^A$ with either $\sim_1$ or $\sim_2$.
        \end{proof}
        Using the above claim, and the assumption that $\sim_1$ has finite index, we show that the join commutes with all term operations, even those with infinitely many variables.  Consider a term $t \in \monad X$ and two valuations 
        \begin{align*}
        \eta_1,\eta_2 : X \to A
        \end{align*}
        which are pointwise equivalent with respect to the join. We need to show that applying $t^A$ to both of these valuations gives outputs that are equivalent with respect to the join. We begin with the special case when the two valuations have finite images.
        \begin{claim}\label{claim:special-case-of-finite-image}
            If  $\eta_1,\eta_2$ have finite image $B \subseteq A$, then $t^A(\eta_1) = t^A(\eta_2)$. 
        \end{claim}
        \begin{proof}
            Define $Y$ to be the set of pairs in $B \times B$ which are equivalent under $\sim$ and let $\eta : X \to Y$  be the function which maps an variable from $X$ to the pair of its images under $\eta_1$ and $\eta_2$. Each valuation $\eta_i$ can be decomposed as first applying $\eta$, and then taking the $i$-th projection, as  in the following diagram:
            \begin{align*}
            \xymatrix{
                & X 
                \ar[dl]_{\eta_1}
                \ar[dr]^{\eta_2}
                \ar[d]_{\eta}
                \\
                A &
                Y 
                \ar[l]^{\pi_1}
                \ar[r]_{\pi_2}
                &
                A
            }
            \end{align*}
            Applying the term operation $t^A$ to a valuation $\eta_i$ is the same as applying the term operation $(\monad \eta(t))^A$ to the valuation $\pi_i$, which is proved by chasing the following diagram:
            \begin{align*}
            \xymatrix{
                \monad X 
                \ar[rr]^{t \mapsto t^A(\eta_i)}
                \ar[dd]_{\monad \eta} 
                \ar[dr]^{t \mapsto t^A(\pi_i)}
                & &
                A \\
                &
                \monad A
                \ar[ur]_\mu \\
                \monad Y
                 \ar[ur]_{\monad \pi_i}
                \ar@/_3pc/[uurr]_{t \mapsto t^A(\pi_i)}
            }
            \end{align*}
            Since the  term operation $(\monad \eta(t))^A$ uses a finite set of variables $Y$, and since the valuations $\pi_1$ and $\pi_2$ are equivalent under $\sim$, the statement of this claim follows from Claim~\ref{claim:join-commutes-finitely-many-variables}.
        \end{proof}
        Using the above claim, and the assumption that $\sim_1$ has finite index, we conclude the proof of the lemma. 
        Choose a function 
        \begin{align*}
        \alpha : A \to  A
        \end{align*}
        that maps every element of $A$ to some chosen element in its equivalence class under $\sim_1$. This function has finite image, because $\sim_1$ has finite index. We now conclude the proof of the lemma as follows:
        \begin{align*}
        t^A(\eta_1) 
        \equivalentbecause{\sim_1}{$\sim_1$ is a congruence}\\
        t^A(\alpha \circ \eta_1)
        \equivalentbecause{\sim}{$\alpha$ has finite image and Claim~\ref{claim:special-case-of-finite-image}}
        \\
        t^A(\alpha \circ \eta_2)
        \equivalentbecause{\sim_1}{$\sim_1$ is a congruence}
        \\
        t^A(\eta_2)
        \end{align*}
\end{proof}
Using the above lemma, we complete the proof of Theorem~\ref{thm:syntactic-homomorphism-monad}.
\begin{proof}[Proof of Theorem~\ref{thm:syntactic-homomorphism-monad}]
    Consider algebra colouring $L : A \to U$ that is recognisable. Let $\Cc$ be the  set of congruences in $A$ that recognise the colouring in the following sense
\begin{align*}
a \sim b \qquad \text{implies} \qquad L(a)=L(b).
\end{align*}
Because the colouring is recognisable, the there is at least one congruence  $\approx$ of finite index in $
\Cc$.  Define $\Cc_\approx \subseteq \Cc$ to be the congruences which contain $\approx$. This is a finite set, since every congruence in $\Cc_\approx$ is obtained by merging some of the finitely many equivalence classes in $\approx$. By Lemma~\ref{lem:join-congruences}, $\Cc_\approx$ is a finite lattice, and therefore it has a greatest element, call it $\sim$. Again by Lemma~\ref{lem:join-congruences}, every congruence in $\Cc$ has an upper bound in $\Cc_\approx$, and therefore $\sim$ is the greatest element also of $\Cc$. We will prove that the 
quotient homomorphism
\begin{align*}
    h : A \to A_{/\sim},
    \end{align*}
is the syntactic homomorphism of $L$.

By translating maximality of $\sim$ into  the language of homomorphisms, it follows that  that every surjective  homomorphism $g : A \to B$ that recognises $L$ must factor through $h$,
 i.e.~must be  some function $f$ such that $g=f \circ h$.  The last thing to show is that $f$ is in fact a homomorphism, and not just any function on the underlying sets of the algebras. This is shown in the following claim, with $C$ being the quotient $A_{/\sim}$.

\begin{claim}\label{claim:function-that-factors-is-homomorphism}
    Let $A,B,C$ be algebras, let $g,h$ be  surjective homomorphisms, and let $f$ be a function which makes the following diagram commute. 
    \begin{align*}
        \xymatrix@R=0.6cm@C=2cm{
            & B \ar[dd]^f\\
            A \ar[ur]^g \ar[dr]_h \\
            & C
        }
        \end{align*}
     Then $f$ is a homomorphism.
\end{claim}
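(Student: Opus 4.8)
The plan is to exploit surjectivity of $h$ in the standard way: to verify the homomorphism diagram for $f$, I will chase a bigger diagram whose outer faces commute and use the fact that $\monad$ preserves surjectivity (which was already established in the proof of Lemma~\ref{lem:monad-compositional}), so that equality can be transported across $\monad h$.

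First I would write down what needs to be shown: that the square with $\monad f : \monad B \to \monad C$ on top, $f$ on the bottom, and the multiplication operations of $B$ and $C$ on the sides commutes. Since $h : A \to C$ is surjective and $\monad$ preserves surjectivity, $\monad h : \monad A \to \monad C$ is surjective as well; hence it suffices to check that the two composites $\monad A \to C$ obtained by precomposing each side of the square with $\monad h$ agree. Then I would assemble the cube (or rather the prism) built from: the homomorphism square for $g$ (top-left), the homomorphism square for $h$ (bottom), the functoriality identity $\monad f \circ \monad g = \monad h$ coming from applying $\monad$ to $f \circ g = h$, and the defining identity $f \circ g = h$ itself. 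Each of these faces commutes — three by hypothesis (two homomorphism squares, one factorisation identity) and one by functoriality of $\monad$ applied to a commuting triangle. Chasing an element of $\monad A$ around the two available routes to $C$ shows the two composites $\monad A \to C$ coincide, and surjectivity of $\monad h$ then yields the desired square, i.e.\ $f$ is a homomorphism.

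The only mild subtlety — and the step I expect to be the main obstacle, though it is not really hard — is making sure the surjectivity of $\monad h$ is legitimately available: this relies on $\monad$ being a functor on the category of sets (or sorted sets), where surjections split, and was already argued inside the proof of Lemma~\ref{lem:monad-compositional}; I would simply cite that argument rather than repeat it. Everything else is diagram-chasing with faces that commute for bookkeeping reasons already recorded in Lemma~\ref{lem:composition-of-homomorphisms} and the monad/algebra axioms. Once the claim is proved, the enclosing argument for Theorem~\ref{thm:syntactic-homomorphism-monad} is complete, since it shows the function $f$ witnessing factorisation through $h$ is automatically a homomorphism, which is exactly the minimality condition in the definition of the syntactic homomorphism.
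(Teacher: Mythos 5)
Your approach is the same as the paper's: assemble a prism whose faces are the two homomorphism squares for $g$ and $h$, the factorisation identity $f\circ g = h$, and its image under $\monad$, then conclude by cancelling a surjection. But you invoke the wrong surjectivity. The square to be established compares two composites $\monad B \to C$, so to verify it after precomposition you need a surjection onto the domain $\monad B$, which is $\monad g : \monad A \to \monad B$; it is \emph{not} $\monad h : \monad A \to \monad C$, which lands in the wrong object and cannot be precomposed with either side of the square as you describe. The correct chase is: given $t \in \monad B$, write $t = (\monad g)(s)$ using surjectivity of $\monad g$; then, writing $\mu_A,\mu_B,\mu_C$ for the respective multiplications,
\begin{align*}
f(\mu_B(t)) \ = \ f(g(\mu_A(s))) \ = \ h(\mu_A(s)) \ = \ \mu_C((\monad h)(s)) \ = \ \mu_C((\monad f)(t)),
\end{align*}
using, in order, the $g$-homomorphism square (with $t=(\monad g)(s)$), the identity $f\circ g = h$, the $h$-homomorphism square, and functoriality ($\monad h = \monad f \circ \monad g$) combined with $(\monad g)(s)=t$.

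The rest of what you say is fine: the observation that $\monad$ preserves surjectivity is indeed established inside the proof of Lemma~\ref{lem:monad-compositional} and is the right thing to cite, and the four faces you list are exactly those the paper chases. Once the $g/h$ swap is corrected, note also that surjectivity of $h$ plays no role in this particular claim; only surjectivity of $g$ (hence of $\monad g$) is used, with $h$'s surjectivity mattering elsewhere in the surrounding proof of Theorem~\ref{thm:syntactic-homomorphism-monad}.
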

\begin{proof}
    Consider the following diagram:
    \begin{align*}
    \xymatrix@C=2cm{
        \monad B 
        \ar[rrr]^{\text{multiplication in $B$}}
        \ar[dd]_{\monad f}
        &&& 
         B
         \ar[dd]^f\\
        & 
        \monad A 
        \ar[ul]^{\monad g}
        \ar[dl]_{\monad h}
        \ar[r]^{\text{multiplication in $A$}}
        &
        A 
        \ar[ur]^g
        \ar[dr]_h
        \\
        \monad C
        \ar[rrr]_{\text{multiplication in $C$}}
        &&&
        C
    } 
    \end{align*}
    The upper and lower faces commute because $g$ and $h$ are homomorphisms, and the left and right faces commute by definition of $f$. Since all arrows in the diagram are surjective, it follows that the perimeter of the diagram commutes, which means that $f$ is a homomorphism.
\end{proof}
\end{proof}

% \mikexercise{\label{ex:meet-congruences} Show that for every family  of congruences in an algebra, their intersection (as a set of pairs) is a the least upper bound of the family in the partially ordered set of congruences. The congruences do not need to have finite index, and the family need not be finite. 
%  }{
%         Every lower bound needs to be contained in the intersection, so it remains to show that the intersection is indeed a congruence. Here we use the definition of congruences in terms of kernels of homomorphisms. Suppose that each congruence in the family is the kernel of a homomorphism into some algebra, then the intersection of these congruences is the kernel of a naturally defined product homomorphism. 
%  }

\exercisehead

\mikexercise{
    \label{ex:finiteness-assumption-in-join-needed}
    Show that the assumption on finite index of $\sim_1$ in Lemma~\ref{lem:join-congruences} is needed.}{}
\mikexercise{
    Consider the monad from Example~\ref{ex:monad-algebra-over-field}.  Show that if   $\lambda : \monad \Sigma \to \Rat$ is recognised by a weighted automaton, then the same is true for the syntactic homomorphism.
}{}

\mikexercise{This exercise can be seen as a variant 
of Moore's algorithm for computing the syntactic congruence.  Consider a monad in the category of sets, together with a term basis, see Exercise~\ref{ex:term-basis}. 
Consider an algebra colouring $\lambda : A \to U$ where $A$ is a finite algebra.  Show that the syntactic congruence of $\lambda$ is the greatest (coarsest) equivalence relation on $A$ recognises $\lambda$ and which  is stable under all term operations from the term basis. 
}{}{}
% \mikexercise{We say that a monad in the category of sets is \emph{computable} if every set $X$ whose elements have a finite representation, there is a finite representation of elements from $\monad X$, and furthermore all operations in the monad (units,  free multiplication, and the functor) are computable functions using this representation.
% Consider a computable monad, and a recognisable colouring 
% \begin{align*}
% L \subseteq \monad \Sigma,
% \end{align*}
% where $\Sigma$ is a finite 
% For a computable monad, an algebra is called computable if its multiplication operation is a computable function. 
%     Consider a monad in the category of sets, which has a term basis $\basis$ that is finite, see Exercise~\ref{ex:term-basis}.  }{}{}

\subsection{Finitary monads}
\label{sec:finitary-monads}
In the monad of finite words, every language --  not just  recognisable ones -- has a syntactic homomorphism. For example, in the monad of finite words, the syntactic homomorphism of the non-recognisable language ``the number of $a$ letters is equal to the number of $b$ letters''
maps a word to the difference (number of $a$'s $-$ number of $b$'s).
In the monad of $\cc$-words, some non-recognisable languages do not have syntactic homomorphisms, as witnessed by Example~\ref{ex:no-syntactic-algebra}. What is the difference?

The difference, as will be shown in Theorem~\ref{thm:plotkin} below, is that every finite word uses only a finite subset of the alphabet, which is no longer true for  $\cc$-words. This is made precise by the following definition.

\begin{definition}[Finitary elements and monads] Let $\monad$ be a monad in the category of sets.
    We say that an element $t \in \monad X$ is \emph{finitary} if
\begin{align*}
t = (\monad f)(t) \qquad \text{for some $f : X \to X$ with finite image.}
\end{align*}
We say that $\monad$ is  finitary if for every $X$, all elements of $\monad X$ are finitary.
\end{definition}
 
For example, the monad of finite words is finitary, while the monads of $\cc$-words is not. 
The following theorem shows that finitary monads are exactly those monads where all algebra colourings have syntactic homomorphisms. 
% Another equivalent condition is that in every algebra the congruences form a complete lattice, i.e.~every set of congruences has a least upper bound.

\begin{theorem}\label{thm:plotkin} Let $\monad$ be a monad in the category of sets. Then $\monad$ is finitary if and only if every algebra colouring has a syntactic homomorphism\footnote{This theorem is unpublished work of Gordon Plotkin and the author.}.
    % \begin{enumerate}
    %     \item \label{plotkin:finitary} the monad is finitary;
    %     % \item \label{plotkin:complete-lattice} in every algebra, the congruences form a complete lattice;
    %     \item \label{plotkin:synt-hom} every algebra colouring has a syntactic homomorphism.
    % \end{enumerate}
\end{theorem}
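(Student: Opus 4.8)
The statement is an ``if and only if,'' so I would prove the two directions separately, with the forward direction (finitary $\Rightarrow$ syntactic homomorphisms exist) being the easy one and the converse being the main work.

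For the forward direction, suppose $\monad$ is finitary and let $L : A \to U$ be any algebra colouring. The plan is to mimic the proof of Theorem~\ref{thm:syntactic-homomorphism-monad}, but without the recognisability assumption. The obstacle there was Lemma~\ref{lem:join-congruences}: the join of two congruences need not be a congruence when neither has finite index, because Claim~\ref{claim:special-case-of-finite-image} used finiteness of the image of $\alpha$ to reduce an arbitrary term operation to one with finitely many variables. Here finiteness comes instead from the monad: every term $t \in \monad X$ is finitary, so $t = (\monad f)(t)$ for some $f : X \to X$ with finite image, and hence $t^A(\eta) = t^A(\eta \circ f)$ for every valuation $\eta$, where $\eta \circ f$ has finite image. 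So I would first reprove that the join of \emph{any} two congruences is a congruence: commutation with a term operation $t^A$ reduces, via the finitary witness $f$ for $t$, to commutation on valuations with finite image, and that case is handled by Claim~\ref{claim:special-case-of-finite-image} together with Claim~\ref{claim:join-commutes-finitely-many-variables} exactly as before. With this in hand, the set $\Cc$ of congruences recognising $L$ is closed under arbitrary intersections (greatest lower bounds, as already noted) and under pairwise joins; I would then take $\sim$ to be the intersection of \emph{all} members of $\Cc$ and argue it is itself in $\Cc$ — this is immediate for the meet — so $\sim$ is the greatest element of $\Cc$, and the quotient homomorphism $A \to A_{/\sim}$ is syntactic by the same factorisation argument (Claim~\ref{claim:function-that-factors-is-homomorphism}) as in Theorem~\ref{thm:syntactic-homomorphism-monad}.

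For the converse, I would prove the contrapositive: if $\monad$ is not finitary, exhibit an algebra colouring with no syntactic homomorphism. Pick a set $X$ and a non-finitary element $t_0 \in \monad X$. The idea, generalising Example~\ref{ex:no-syntactic-algebra}, is to build a colouring on the free algebra $\monad X$ (or on a cleverly chosen algebra) that cannot have a minimal recogniser because one can construct, for each finite subset of the ``support'' of $t_0$, a compatible quotient that collapses that part but not the rest, and these quotients have no common refinement below a single recogniser. Concretely: for each $f : X \to X$ with finite image, consider the compositional map $h_f$ that collapses $t_0$ with $(\monad f)(t_0)$ but is the identity elsewhere (checking compositionality as in the example), so that any syntactic homomorphism would factor through every $h_f$ and hence identify $t_0$ with $(\monad f)(t_0)$ for all such $f$ simultaneously. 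By non-finitariness this forces the putative syntactic homomorphism to identify $t_0$ with elements obtained by substitutions that are \emph{not} individually finite-image, and combining these via free multiplication (as in the $w_1 w_1 \cdots$ versus $w_1 w_2 \cdots$ trick of Example~\ref{ex:no-syntactic-algebra}) produces two elements with distinct colours that would have to be identified — contradiction.

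The hard part will be the converse: turning the failure of the finitary condition into a concrete colouring and a concrete ``diagonal'' contradiction. In Example~\ref{ex:no-syntactic-algebra} the combinatorics of $\cc$-words (shuffles, infixes) were used explicitly, and the challenge is to replace that with a purely monad-theoretic argument — essentially, to show that a non-finitary element always admits an infinite family of ``incomparable'' collapsing congruences whose join is too coarse to recognise the colouring, while each individual one does recognise it. I expect the cleanest route is to design $L$ so that it distinguishes $(\monad g)(t_0)$ for suitable substitutions $g$ while being constant on each finite approximation, and then use an $\omega$-indexed free multiplication of the $(\monad f_n)(t_0)$'s to force the collapse; making ``$\omega$-indexed free multiplication'' sense requires the monad to support iteration of substitutions, which is exactly what non-finitariness provides.
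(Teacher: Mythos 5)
Your forward direction has the right underlying idea but a wrong final step, and your converse direction diverges from the paper and contains genuine gaps.

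\textbf{Forward direction.} You correctly identify that finitariness lets you reprove commutation of joins with arbitrary term operations (by replacing a term $t \in \monad X$ with $(\monad f)(t)$ for a finite-image $f$ and then invoking Claim~\ref{claim:special-case-of-finite-image}). But then you take $\sim$ to be the \emph{intersection} of all congruences in $\Cc$ and assert that this is the \emph{greatest} element of $\Cc$. That is backwards: the intersection is the \emph{least} element (finest congruence), and it is trivially in $\Cc$, but it is not what a syntactic homomorphism requires — you need the \emph{coarsest} congruence recognising $L$, i.e.\ the join of all of $\Cc$. The paper takes exactly that join (transitive closure of the union over all of $\Cc$), and the point of the finitariness hypothesis is that this infinitary join is still a congruence. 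You also mention only closure under \emph{pairwise} joins, which is not enough when $\Cc$ is infinite; you need the join of the entire set. So the forward direction needs: (i) replace intersection with join, and (ii) prove the join of arbitrarily many congruences is a congruence, not just two at a time.

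\textbf{Converse direction.} You go by contrapositive (non-finitary $\Rightarrow$ some colouring lacks a syntactic homomorphism), which is logically equivalent but not what the paper does, and your sketch has gaps you yourself flag. The main problem is step 1: you propose a map $h_f$ on $\monad X$ that ``collapses $t_0$ with $(\monad f)(t_0)$ but is the identity elsewhere'' and ask that it be compositional. Compositionality is a strong constraint, and there is no reason an arbitrary two-point collapse should be compositional — in Example~\ref{ex:no-syntactic-algebra} it works only because the collapsed elements $w_n$ and $w_{n+1}$ were carefully chosen using the combinatorics of $\cc$-words. For an abstract non-finitary $t_0$ in an arbitrary monad there is no such control. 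Your ``$\omega$-indexed free multiplication'' step also isn't available in general: non-finitariness says some element depends on infinitely many letters; it does not by itself give an iteration or $\omega$-concatenation operation. The paper sidesteps both problems by arguing directly, not by contrapositive: it fixes $X$, takes a disjoint red copy $\red X$, defines $\sim$ on $\monad(X + \red X)$ by ``agree under $\monad f_Y$ for some finite $Y \subseteq X$'' (a union of kernels of honest homomorphisms $\monad f_Y$), and then uses the \emph{assumption} that syntactic homomorphisms exist to conclude that $\sim$ equals its own syntactic congruence and hence is a congruence. Once $\sim$ is known to be a congruence, a short diagram chase shows every $w \in \monad X$ is finitary. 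This avoids ever needing to construct a compositional collapse by hand, which is precisely where your plan stalls.
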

\begin{proof}
    For the left-to-right implication, we use the same proof as for Theorem~\ref{thm:syntactic-homomorphism-monad}. Define the join of a possibly infinite set of congruences to be the transitive closure of their union. By the same reasoning as in Claim~\ref{claim:join-commutes-finitely-many-variables}, the join commutes with all term operations that have finitely many variables. Because the monad is finitary, all term operations are like this, and therefore the join commutes with all term operations, and is therefore a congruence. (We have thus shown that for finitary monads, the congruences in an algebra form a complete lattice.) If we now take the join of all congruences that recognise a given algebra colouring, then we get the syntactic congruence, and the quotient homomorphism is the syntactic homomorphism.

    We now prove the converse implication. Suppose that every algebra colouring has a syntactic homomorphism.
    Fix some set $X$. We will show that all elements of $\monad X$ are finitary. Let $\red X$ be a disjoint copy of $X$.  For a finite subset $Y \subseteq X$, define 
    \begin{align*}
    f_Y : \red X +  X \to \red X +  X
    \end{align*}
    to be the function which maps each element to itself, with the exception of the red copies of elements from $Y$, which are mapped to their corresponding black copies. Define $\sim$ to be the equivalence relation on $\monad (X + \red X)$ which identifies two elements if, for some finite $Y \subseteq X$, they have the same image under $\monad f_Y$.  

    \begin{claim}
        $\sim$ is a congruence on $\monad(X + \red X)$.
    \end{claim}
    \begin{proof}
        We first argue that $\sim$ is  an equivalence relation.  Transitivity argued as follows:
    \begin{align*}
    w_1\myunderbrace{ \sim}{as witnessed \\ 
    \scriptsize by $Y_1 \subseteq X$} w_2  
    \quad \text{and} \quad 
    w_2\myunderbrace{ \sim}{as witnessed \\ 
    \scriptsize by $Y_2 \subseteq X$} w_3
    \qquad \text{implies} \qquad
    w_1\myunderbrace{ \sim}{as witnessed \\ 
    \scriptsize by $Y_1 \cup Y_2 \subseteq X$} w_3.
    \end{align*}
    Consider the syntactic homomorphism of $\sim$, which exists by the assumption  that syntactic congruences exist. Let $\approx$ be the kernel of the syntactic homomorphism, which means that (a) $\approx$ is a congruence that is contained in  $\sim$ when viewed as a set of pairs, and (b)  $\approx$ contains  every  congruence that is contained in $\sim$.
    We will show that $\sim$ is actually equal to $\approx$, and therefore $\sim$ is a congruence. In light of (a), it is enough to show that $\sim$ is contained in $\approx$. Indeed, suppose that two elements are equivalent under $\sim$. By definition, this means that they have the same image under $\monad f_Y$ for some finite $Y$. Since $\monad f_Y$ is a  homomorphism that   recognises $\sim$, it follows by (b) that the two elements are equivalent under $\approx$. 
\end{proof}
    
    Consider the functions 
    \begin{align*}
\red f, f :  X \to \red X +  X
        \end{align*}
        such that $\red f$ maps each argument to its red copy, and $f$ is the identity.
For every  $x \in X$, its unit is mapped by $\monad \red f$ and $\monad f$ to elements which
 have the same image under $\monad f_{\set x}$, and therefore are equivalent under $\sim$.  Since $\sim$ is a congruence, and all units in $\monad X$ are mapped by $\monad \red f$ and $\monad  f$ to elements equivalent under $\sim$,  it follows that for every $w \in \monad X$, its images under $\monad \red f$ and $\monad  f$ are equivalent under $\sim$.  By definition of $\sim$, this means that for every $w \in \monad X$  there must be some finite $Y \subseteq X$ such that 
\begin{align}\label{eq:plotkin-Y}
(\monad (f_Y \circ \red f))(w) =
(\monad  (f_Y \circ f))(w)
\end{align} 
We now complete the proof that every element of $\monad X$ is finitary. 
Let $w \in \monad X$, and let $Y$ be such that the above equivalence holds.  Choose an element  $y \in Y$ and consider the function
\begin{align*}
    g :   \red  X + X \to X  
    \end{align*}
     which is the identity on $X$ and maps all red letters to $y$.   
We have 
\begin{eqnarray*}
    w \equalbecause{because $g \circ  f$ is the identity on $X$}\\ 
    (\monad (g \circ  f ))(w) \equalbecause{because $f_{ Y}$ is the identity on black letters}\\
    (\monad (g \circ f_{ Y} \circ   f ))(w) \equalbecause{by~\eqref{eq:plotkin-Y}}\\
    (\monad (g \circ f_{Y} \circ  \red f ))(w ).
\end{eqnarray*}
The image of the function $g \circ f_{\red Y} \circ \red f$ is contained in $Y$, and therefore we have established that $w$ is finitary. 
\end{proof}

\exercisehead

\mikexercise{Give an example of a monad which is not finitary, but where every language $L \subseteq \monad \Sigma$ with a finite alpahbet $\Sigma$ has a syntactic homomorphism.}{ The (not necessarily finite) powerset monad.}

% \mikexercise{
% Consider an algebra colouring $L : A \to U$. Consider the  family of congruences on $A$ that recognise $U$, ordered by inclusion. Show that this family is a lattice, i.e.~every two elements have a greatest lower bound and also a least upper bound. 
% }{}

\mikexercise{Let $\monad$ be a monad in the category of sets. Show that if every algebra colouring with two colours has a syntatic homomorphism, then every algebra colouring with an arbitrary number of colours has a syntactic homomorphism.}{}

\mikexercise{Give an example of a monad $\monad$ which is not finitary, but such that all elements of $\monad X$ are finitary for countable $X$.}{}

\mikexercise{
Let $S$ be a finite set of sort names, and consider the category 
\begin{align*}
\mathsf{Set}^S
\end{align*}
of $S$-sorted sets with sort-preserving functions.  Prove  Theorem~\ref{thm:syntactic-homomorphism-monad}  for monads over this category.  }{}

\mikexercise{Consider a category of sorted sets, as in the previous exercise, but with infinitely many sort names. Define a finite algebra to be one that is finite on every sort. Show that Theorem~\ref{thm:syntactic-homomorphism-monad} fails.
}{}

\mikexercise{Show that a monad  in the category of sets is finitary if and only if it arises as a result of the construction described in Exercise~\ref{ex:term-modulo-equivalences}.}{}

\mikexercise{Recall the notion of \emph{regular elements} from Exercise~\ref{ex:monad-regular-elements}. Show that if $t$ is regular, then it is finitary. 
}{}

\mikexercise{Assume that the regular elements, as considered in the previous exercise, are closed under free multiplication in the following sense: if $t \in \monad X$ is a regular term operation, and $\eta : X \to \monad Y$ is a valuation of its variables that uses only regular elements, then $t^{\monad Y}(\eta)$ is a regular element. Under these assumptions, define a monad of regular elements.
}{}

\section{The Eilenberg Variety Theorem}
\label{sec:eilenberg}
In Chapter~\ref{chap:logics}, we proved several theorems of the kind
\begin{align*}
\text{class of languages} \qquad 
\sim 
\qquad
\text{class of semigroups}.
\end{align*}
For example, a language of finite words is definable in first-order logic if and only if it is recognised by an aperiodic semigroup. In this section  we prove that every class of languages with good closure properties will correspond to a class of algebras with good closure properties. The theorem was originally proved by Eilenberg for monoids\footcitelong[Theorem 13.2]{Eilenberg76}, but with some extra care one can make the proof work in the abstract setting of monads.

\subsection{Unary polynomials}
Before stating and proving the theorem, we describe unary polynomials, which are used in the definition of language varieties. 
For an algebra $A$, define a \emph{unary polynomial}\footnote{The terminology of ``terms'' and ``polynomials'' comes from universal algebra, see
\incite[Definition 13.3.]{sankappanavar1981course}
This terminology  can be explained -- or at least more easily remembered --  as follows. Consider the ring of the reals
$
(\mathbb R, +, -, \times, 0, 1).
$
A term  operation in this ring can only use the constants $0$ and $1$ which are given in the ring as an algebra, and therefore term operations correspond to polynomials with integer coefficients. If we want to get all polynomials, we need to  allow the terms to use arbitrary elements of $\mathbb R$ as constants.
} to be any function of the form 
\begin{align*}
a \in A \quad \mapsto \quad t^A(a,c_1,\ldots,c_n) \in A,
\end{align*}
which is obtained for some choice of $n \in \set{0,1,\ldots}$, some  term $t$ with $n+1$ variables\footnote{A more principled definition, which allows more variables and infinitely many constants, is discussed in Exercise~\ref{ex:vectorial-polynomials}. Since we use unary polynomials mainly for finite algebras, the more elementary definition given here is enough. } and some parameters $c_1,\ldots,c_n \in A$. 
\begin{lemma}
    In every algebra,  unary polynomials are closed under composition.
\end{lemma}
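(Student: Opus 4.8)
The plan is to show that if $p$ and $q$ are unary polynomials, then their composition $q \circ p$ is again a unary polynomial by explicitly exhibiting the required term and parameters. First I would write $p(a) = s^A(a, c_1, \ldots, c_m)$ for some term $s \in \monad X$ where $X = \set{x, y_1, \ldots, y_m}$, and $q(b) = t^A(b, d_1, \ldots, d_n)$ for some term $t \in \monad Z$ where $Z = \set{z, w_1, \ldots, w_n}$. Then $q(p(a)) = t^A(s^A(a, c_1, \ldots, c_m), d_1, \ldots, d_n)$, and the goal is to realise this as a single term operation applied to $a$ together with finitely many parameters drawn from $A$.

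The key step is to build the composite term using Kleisli composition of vectorial terms, as developed in Exercise~\ref{ex:vectorial-terms} (Claim~\ref{claim:Kleisli-functor}). Concretely, I would consider the vectorial term $f : Z \to \monad X$ which sends $z$ to $s$ (viewed as an element of $\monad X$) and sends each $w_i$ to the unit of a fresh variable; reorganising the variable sets so that the input variable $x$ is shared and the parameter variables of $s$ and $t$ are kept disjoint, the Kleisli composite yields a term $r$ over the variable set $\set{x, y_1, \ldots, y_m, w_1, \ldots, w_n}$ such that for every algebra $A$,
\begin{align*}
r^A(a, c_1, \ldots, c_m, d_1, \ldots, d_n) = t^A(s^A(a, c_1, \ldots, c_m), d_1, \ldots, d_n).
\end{align*}
This is exactly the commuting-diagram statement of Claim~\ref{claim:Kleisli-functor} specialised to the valuation that fixes the parameters. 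Taking $c_1, \ldots, c_m, d_1, \ldots, d_n$ as the parameters of the new unary polynomial, we get $q \circ p = (a \mapsto r^A(a, c_1, \ldots, c_m, d_1, \ldots, d_n))$, which is of the required form.

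The main obstacle — really just bookkeeping rather than a genuine difficulty — is the careful handling of variable names: one must ensure the parameter variables of $s$ and of $t$ are renamed apart before composing, and that the single distinguished input variable $x$ is threaded through correctly (it is the variable $z$ of $t$ that gets substituted by the term $s$, while $s$'s own distinguished variable $x$ becomes the distinguished variable of the composite). Once the monad's substitution machinery (the functoriality and associativity axioms, already invoked in the proof of Claim~\ref{claim:Kleisli-functor}) is applied, the identity above follows formally, and no further computation is needed. An alternative, more hands-on route avoiding the vectorial-term formalism would be to directly define the composite term as $(\monad g)(t)$ where $g : Z \to \monad(\set{x, y_1, \ldots, y_m, w_1, \ldots, w_n})$ sends $z \mapsto s$ and each $w_i \mapsto \unit{}(w_i)$, followed by free multiplication — this is literally the Kleisli composite unwound — and then verify the displayed identity by chasing the naturality and Eilenberg–Moore associativity diagrams; I would present whichever is shorter given what the text has already set up, which is the vectorial-term version.
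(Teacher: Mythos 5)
Your proof is correct and follows essentially the same route as the paper's: the paper builds the composite term $u$ as $t^{\monad(1+m+n)}$ applied to the valuation sending the distinguished variable to $s$ and the parameter variables to units, which is exactly the Kleisli composite you describe, and its key Claim is explicitly noted to be the same as Exercise~\ref{ex:vectorial-terms}. The only difference is notational (explicit variable names versus numbers-as-sets), so your argument matches the paper in substance.
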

\begin{proof}
    Consider two unary polynomials 
    \begin{eqnarray*}
        a \in A &\quad \mapsto \quad & t^A(a,c_1,\ldots,c_n)\\
a \in A  & \quad \mapsto \quad & s^A(a,d_1,\ldots,d_m).
        \end{eqnarray*}
    To prove that the composition of the above two unary polynomials is also a unary polynomial, we will show that there is a term $u$ with $1+m+n$ variables which satisfies the following equality:
\begin{align}\label{eq:the-term-u}
 \myunderbrace{t^A(s^A(a,d_1,\ldots,d_m),c_1,\ldots,c_n)}
 {a composition of two unary polynomials}
 = 
 \myunderbrace{u^A(a,d_1,\ldots,d_m,c_1,\ldots,c_n)}
 {a single unary polynomial}.
\end{align}
For the purposes of this proof, we treat a number such as $n$ as a set which has $n$ elements.
Define the term $u$ to be the result of applying the term operation  
\begin{align*}
    t^{\monad(1+m +n)} : (\monad(1+m+n))^{1+n} \to \monad(1+m+n)
\end{align*}
to the valuation $f$ 
which maps the variable in $1$ to the term $s$ (seen as a term over a larger set of variables that does not use the last $n$ variables), and which maps the variables in $n$ to their corresponding units.  The equality~\eqref{eq:the-term-u} follows from the following claim, in the case where the variables $X$ are  $1 + n$, the variables   $Y$ are $1 + m + n$, and the valuation $\eta \in A^Y$ is 
\begin{align*}
 (a,d_1,\ldots,d_m,c_1,\ldots,c_n) \in A^{1+m+n}.
\end{align*}

        \begin{claim}\label{claim:kleisli}
            For every $f : X \to \monad Y$ and  $t \in \monad X$, the term 
            \begin{align*}
            u \eqdef t^{\monad Y}(f)
            \end{align*}
            makes the following diagram commute for  every algebra $A$:
            \begin{align*}
            \xymatrix
            {
                A^Y 
                \ar[dr]^{u^A}
                \ar[d]_{\eta \in A^Y\  \mapsto\  x\in X \  \mapsto\  (f(x))^A(\eta)}
                \\
                A^X \ar[r]_{t^A} 
                &
                A
            }
            \end{align*}
        \end{claim}
        \begin{proof}
            This claim is the same as Exercise~\ref{ex:vectorial-terms}.
            Let $\eta \in A^Y$ be a valuation. If we apply the function in the vertical arrow from the diagram to $\eta$, then we get the valuation $\rho \in A^X$ that is the composition of the following functions:
            \begin{align*}
                 \xymatrix{
                     X \ar[r]^{f} &
                     \monad Y 
                     \ar@/^2pc/[rr]^{t \mapsto t^A(\eta)}
                     \ar[r]^{\monad \eta}
                     &
                     \monad A
                     \ar[r]^{\mu}
                     &
                     A,
                 }
            \end{align*}
             where $\mu$ is the multiplication operation of the algebra $A$. The diagram in the statement of the claim says that 
             \begin{align*}
             t^A(\rho) = u^A(\eta).
             \end{align*}
             To prove this, consider the following diagram:
            \begin{align*}
            \xymatrix@C=2cm{
                \monad X 
                \ar[rr]^{\monad \rho}
                \ar[d]_{\monad f}
                &&
                \monad{A}
                \ar[dd]^{\mu}
                \\
                \monad \monad Y
                \ar[r]^{\monad \monad \eta}
                \ar[d]_{\text{free multiplication}} 
                &
                \monad \monad A
                \ar[d]_{\text{free multiplication}} 
                \ar[ur]^{\monad \eta}
                \\
                \monad Y 
                \ar[r]_{\monad \mu}
                &
                \monad A
                \ar[r]_{\mu}
                &
                A
            }
            \end{align*}
            The upper-left face commutes by definition of $\rho$. The lower-left face commutes by naturality of free multiplication, and the lower-right face commutes by associativity of $\mu$. Therefore, the entire diagram commutes. If we apply the  top-most path  from $\monad X$ to $A$ in the diagram to the term $t \in \monad X$, then we get the result $t^A(\rho)$, while if we apply the bottom-most path to the same term, then we get the result $u^A(\eta)$. Since the diagram commutes, these results are equal, thus proving the claim. 
        \end{proof}
\end{proof}
Another result about unary polynomials that will be used in the proof of the Eilenberg Variety Theorem is the following characterisation of congruences in finite algebras. The finiteness assumption is important, see Exercise~\ref{ex:finitness-important-for-cong-poly}.
\begin{lemma}\label{lem:congruences-in-terms-of-polynomials}
    An equivalence relation $\sim$ in a finite algebra $A$ is a congruence if and only if it commutes with all unary polynomials, in the sense that
    \begin{align*}
    a \sim b \quad\text{implies} \quad f(a) \sim f(b) 
    \end{align*}
    holds for every $a,b \in A$ and every unary polynomial $f : A \to A$.
\end{lemma}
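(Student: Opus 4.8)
The statement is an equivalence between two notions: being a congruence, and commuting with all unary polynomials. One direction is immediate. If $\sim$ is a congruence, then by Lemma~\ref{lem:congruence-definition}\eqref{congruence:commutes-with-terms} it commutes with every term operation; in particular, for a unary polynomial $f(a) = t^A(a,c_1,\ldots,c_n)$ with fixed parameters $c_i$, applying the term operation $t^A$ to the valuations $(a,c_1,\ldots,c_n)$ and $(b,c_1,\ldots,c_n)$, which are pointwise $\sim$-related when $a \sim b$ (using reflexivity on the parameters), yields $f(a) \sim f(b)$. So the ``only if'' direction needs no finiteness.

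The content is in the ``if'' direction: assuming $\sim$ commutes with all unary polynomials and that $A$ is finite, I want to show $\sim$ commutes with every term operation $t^A : A^X \to A$. The plan is to reduce a general term operation to a finite composition of unary polynomials by changing one argument at a time. Given a term $t \in \monad X$ and two valuations $\eta_1, \eta_2 \in A^X$ with $\eta_1 \sim \eta_2$ pointwise, I first use finiteness of $A$: the valuations factor through the finite set $B = \mathrm{image}(\eta_1) \cup \mathrm{image}(\eta_2) \subseteq A$, and more importantly I want to replace the possibly infinite variable set $X$ by a finite one. Following the technique of Claim~\ref{claim:special-case-of-finite-image}, let $Y \subseteq B \times B$ be the set of $\sim$-related pairs that actually occur, and let $\eta : X \to Y$ send $x$ to $(\eta_1(x),\eta_2(x))$; then $\eta_i = \pi_i \circ \eta$, and $t^A(\eta_i) = (\monad\eta(t))^A(\pi_i)$, so it suffices to treat the term operation $s^A$ with $s = \monad\eta(t) \in \monad Y$ and the two valuations $\pi_1, \pi_2 : Y \to A$, where $Y$ is finite.

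Now with a finite variable set $Y = \{y_1,\ldots,y_k\}$, I interpolate between $\pi_1$ and $\pi_2$ by valuations $\rho_0 = \pi_1, \rho_1, \ldots, \rho_k = \pi_2$, where $\rho_j$ agrees with $\pi_2$ on $y_1,\ldots,y_j$ and with $\pi_1$ on the rest. Consecutive valuations $\rho_{j-1}$ and $\rho_j$ differ only in the coordinate $y_j$, and they differ by a $\sim$-related pair since $\pi_1(y_j) \sim \pi_2(y_j)$. The map $a \mapsto s^A(\rho_{j-1}[y_j \mapsto a])$ obtained by freezing all coordinates except $y_j$ is exactly a unary polynomial (the parameters being the frozen values, which lie in $A$), so by hypothesis $s^A(\rho_{j-1}) \sim s^A(\rho_j)$. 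Chaining these $k$ steps and using transitivity of $\sim$ gives $s^A(\pi_1) \sim s^A(\pi_2)$, hence $t^A(\eta_1) \sim t^A(\eta_2)$. By Lemma~\ref{lem:congruence-definition} this shows $\sim$ is a congruence.

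The main obstacle is the reduction from an arbitrary (possibly infinite) variable set to a finite one: a unary polynomial, as defined here, freezes all but one of \emph{finitely many} variables, so one cannot directly freeze infinitely many coordinates. The factorization-through-$Y$ trick from Claim~\ref{claim:special-case-of-finite-image} is precisely what handles this, and it is where finiteness of $A$ enters (it guarantees $Y$ is finite). Once the variable set is finite, the interpolation argument is routine. I should also double-check the subtle point that when I ``freeze'' coordinates of $\rho_{j-1}$ other than $y_j$, the resulting single-variable function is genuinely of the form $b \mapsto u^A(b, \text{parameters})$ for some term $u$ — this follows by substituting units for the frozen variables into $s$, exactly as in the composition lemma for unary polynomials proved just above (Claim~\ref{claim:kleisli}).
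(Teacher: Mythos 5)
Your proof is correct and uses exactly the same two ingredients as the paper's: the factorization-through-$Y$ trick from Claim~\ref{claim:special-case-of-finite-image} to pass from an arbitrary (possibly infinite) variable set to a finite one, and the coordinate-by-coordinate interpolation from Claim~\ref{claim:join-commutes-finitely-many-variables} to handle finitely many variables via unary polynomials; the paper merely presents these two reductions in the opposite order (unary $\to$ finite-variable $\to$ all, rather than all $\to$ finite-variable $\to$ unary), which is a purely expository difference. Your final ``subtle point'' is in fact not subtle: freezing all coordinates of $s^A$ except $y_j$ is literally the definition of a unary polynomial (with $t=s$, the free argument $y_j$, and the parameters being the frozen values), no appeal to Claim~\ref{claim:kleisli} needed.
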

\begin{proof}
    The  left-to-right implication is immediate, and does not need the assumption on finiteness of the algebra. If $\sim$ is a congruence, then it  commutes with all term operations, and so it must also commute with  unary polynomials, which are term operations with some arguments fixed. 
    
    The   right-to-left implication is proved similarly to Lemma~\ref{lem:join-congruences} about joins of congruences. Suppose that $\sim$ commutes with all unary polynomials. By the same argument as in Claim~\ref{claim:join-commutes-finitely-many-variables}, where arguments are replaced one by one in finitely many steps, it follows that $\sim$ commutes with all term operations that have finitely many variables. Since the algebra $A$ is finite, all valuations for term operations have finite image, and therefore we can use  Claim~\ref{claim:special-case-of-finite-image} to prove that $\sim$ commutes with all term operations, and therefore it is a congruence. 
\end{proof}

\exercisehead

\mikexercise{\label{ex:finitness-important-for-cong-poly}Show that the finiteness assumption in Lemma~\ref{lem:congruences-in-terms-of-polynomials} is  needed. Hint: use Example~\ref{ex:no-syntactic-algebra}.
}{
    Consider the language $L$ from Example~\ref{ex:no-syntactic-algebra}. We show that contextual equivalence for this language (viewed as a colouring with values ``yes'' and ``no'') is not a congruence. 
    A unary polynomial  in the free algebra $\set a^\cc$  corresponds to  a $\cc$-word over alphabet $\set{a,x}$. From the point of view of $L$, there are two kinds of unary polynomials. If $f$ is a unary polynomial such that the corresponding $\cc$-word over $\set{a,x}$ contains $a^n$ as an infix for every $n$, then   $f(w) \in L$  for every $w \in \set a ^\cc$. Otherwise, in the corresponding word there is some bound $n_0\in \set{0,1,\ldots}$ such that words $a^n$ with $n>n_0$ do not appear as infixes; in this case  $f(w) \in L\iff w \in L$. These observations imply that contextual equivalence for $L$ has two equivalence classes: namely $L$ and its complement. In particular,  contextual equivalence is not a congruence, since otherwise  $L$ would be recognisable.
}

\mikexercise{\label{ex:vectorial-polynomials}
Define a \emph{vectorial polynomial} in an algebra $A$ to be any operation 
\begin{align*}
f : A^X \to A^Y,
\end{align*}
for some sets $X$ and $Y$, 
which arises as follows: (a) transform an input valuation $A^X$ to a larger valuation $A^{X+Z}$ by mapping the variables from $Z$ to some fixed constants; and then (b) apply a vectorial term operation $A^{X+Z} \to A^Y$, as described in Exercise~\ref{ex:vectorial-terms}. Show that vectorial polynomials are closed under composition.

}{}{}

\subsection{Varieties}
The classes with good closure properties will be called varieties, in analogy with the  varieties that appear in Birkhoff's theorem from universal algebra.  
In this section, we define varieties, and give several examples of them. 
There will be two kinds of varieties: for algebras and for languages.  We begin with the algebras. In the following definition, a \emph{quotient} of an algebra is any image of that algebra under a surjective homomorphism. In other words, a quotient is a quotient under some congruence.

\begin{definition}
    [Algebra variety] \label{def:algebra-variety} Fix a $\monad$  in the category of sets.  An \emph{algebra  variety} is a class $\algclass$ of finite $\monad$-algebras  with the following closure properties:
    \begin{itemize}
        \item \emph{Quotients.} If $\algclass$ contains $A$, then it contains every quotient of $A$.
        \item \emph{Sub-algebras.} If $\algclass$ contains $A$, then it contains  every sub-algebra of $A$.
        \item \emph{Products.} If $\algclass$ contains $A$ and $B$, then it contains $A \times B$.
    \end{itemize}
    \end{definition}

\begin{myexample}\label{ex:algebra-pseudovariety}
    Consider the monad of finite words, where algebras are monoids. Examples algebra varieties include: finite groups, finite aperiodic monoids,  finite infix trivial monoids, or finite prefix trivial monoids.
\end{myexample}

\begin{myexample}
    Here is a non-example. Consider the monad of nonempty finite words, where algebras are semigroups. The class of monoids (i.e.~semigroups which have an identity element) is not an algebra variety, because it is not closed under sub-algebras.     
\end{myexample}

\begin{myexample}
    \label{ex:identities}
    Consider a monad $\monad$. Define an \emph{identity} to be a  pair of terms $s,t \in \monad X$ over a common set of variables $X$. An algebra $A$ is said to satisfy the identity if 
    \begin{align*}
    s^A(\eta) = t^A(\eta) \qquad \text{for every }\eta \in A^X.
    \end{align*}
    The class of finite algebras that satisfy a given identity (more generally, all identities in a given set of identities) is easily seen to be an algebra variety.  For example, the algebra variety of commutative semigroups arises from the identity
    \begin{align*}
    xy = yx
    \end{align*}   
    in the monad of nonempty finite words.
Some algebra varieties do not arise this way. For example, the varieties discussed in Example~\ref{ex:algebra-pseudovariety} do not arise from (even possibly infinite sets of) identities. Identities will be discussed in more detail in Section~\ref{sec:identities}.
\end{myexample}

We now describe language varieties. In Eilenberg's original formulation, this is a class of regular languages that is closed under Boolean combinations, inverse images of homomorphisms, and inverse images of operations of the form 
\begin{align*}
w \in \Sigma^+ \mapsto v_1 w v_2 \in \Sigma^+ \qquad \text{for  fixed $v_1,v_2 \in \Sigma^*$.}
\end{align*}
In the more abstract setting of monads, the role of these operations will be played by unary polynomials, as described in the following definition.

In the following definition, by recognisable languages we mean recognisable subsets of free algebras.
\begin{definition}
    [Language variety] Let $\monad$ be a monad in the category of sets.  A \emph{language variety} is a class $\langclass$ of recognisable languages with the following closure properties:
    \begin{itemize}
        \item \emph{Boolean combinations.} $\langclass$ is closed under Boolean combinations, including complementation.
        \item \emph{Inverses of homomorphisms.} If $h : \monad \Sigma \to \monad \Gamma$ is a homomorphism of free algebras, then $\langclass$ is closed under inverse images of $h$. 
        \item \emph{Inverses of unary polynomials.} If $f : \monad \Sigma \to \monad \Sigma$ is a unary polynomial in a free algebra $\monad \Sigma$, then $\langclass$ is closed under inverse images of $f$. 
    \end{itemize}
\end{definition}

\begin{myexample}
    Consider the monad of finite words, where algebras are monoids. We will that  languages definable in first-order logic are a language variety. Closure under Boolean combinations is immediate, because we are dealing with a logic. Closure under inverse images of homomorphism or unary polynomials can be proved using \ef games: if $f$ is either a homomorphism or a unary polynomial, then a strategy copying argument shows  that
     \begin{align*}
     \txt{\small Duplicator wins the \\ \small  $k$ round game on $w$ and $w'$} \qquad \text{\small implies} \qquad 
     \txt{\small  Duplicator wins the \\ \small  $k$ round game on $f(w)$ and $f(w').$}
     \end{align*}
     This implies that first-order definable languages are closed under inverse images of homomorphisms and unary polynomials.
     The same is true for first-order logic on $\cc$-words.
\end{myexample}

\begin{myexample}
    Consider again the monad of finite words, where algebras are monoids. The definite languages from Example~\ref{ex:definite} are not a variety, because the class of definite languages is not closed under inverse images of  the homomorphisms. Indeed, the language 
    \begin{align*}
    \myunderbrace{a\set{a,b}^* \subseteq \set{a,b}^*}{words that being with $a$}
    \end{align*}
    is definite. If we take the inverse image under the homomorphism 
    \begin{align*}
        h: \set{a,b,c}^* \to \set{a,b}^*,
    \end{align*}
    which erases the $c$ letters, then we get the language 
    \begin{align*}
    \myunderbrace{c^* a \set{a,b,c}^* \subseteq \set{a,b,c}^*,}{words that begin with $a$ if $c$ is erased}
    \end{align*}
    which is not definite. The problem is with homomorphism that erase letters. If we would consider the same class of languages but in the monad of nonempty finite words, where algebras are semigroups, then we would get a variety. 
    % The corresponding class of semigroups would be the ones that satisfy the equation $a^\momega b = a^\momega$, as discussed in Exercise~\ref{ex:definite}.
\end{myexample}

\exercisehead

\mikexercise{ Consider the monad of finite words. Show that a class of languages $\langclass$ is a variety if and only if it is closed under Boolean combinations, inverse images under homomorphisms, and inverse images of unary polynomials of the form:
\begin{align*}
     w \mapsto v_1 w v_2  \qquad \text{for every choice of parameters $v_1,v_2 \in \Sigma^*$}.
\end{align*}
}{}

\mikexercise{ Consider the monad of finite words. Show that there are uncountably many algebra varieties.
In particular, for some algebra varieties, the membership problem $A \stackrel ? \in \algclass$ is undecidable.
}{
Let $P$ be a set of prime numbers. Consider the class of finite groups, where the order is a number where all divisors are from $P$. This is an algebra variety.
}

\mikexercise{ Consider the monad of $\cc$-words. Show that a class of languages $\langclass$ is a variety if and only if it is closed under Boolean combinations, inverse images under homomorphisms, and inverse images of unary polynomials of the following forms:
\begin{align*}
     w \mapsto v_1 w v_2  & & \text{for every choice of parameters $v_1,v_2 \in \Sigma^\cc$}\\
     w \mapsto w^\omega \\
     w \mapsto w^\omegaop \\
     w \mapsto \text{shuffle of }\set{w,v_1,\ldots,v_n}
     && \text{for every choice of parameters $v_1,\ldots,v_n \in \Sigma^\cc$.}
\end{align*}
}{}

\mikexercise{\label{ex:potthof} Consider the  monad $\monad $ from Example~\ref{ex:term-monad}, where $\monad X$ describes terms over a fixed ranked set $\Sigma$ with variables $X$. We view term $t \in \monad \Gamma$ as a model, where the elements are the nodes of the corresponding tree, there is a binary ancestor relation $x \le y$, and for every $\sigma \in \Sigma + \Gamma$ there is a unary relation $\sigma(x)$ which selects nodes with label $\sigma$. Show that the class of languages definable in first-order logic is not a variety. Hint: read the exercises in Chapter~\ref{sec:forest-algebra}.   }{}

\subsection{Algebra varieties are the same as language varieties }
In this section we prove that the two notions of variety are equivalent.
\begin{theorem}[Eilenberg Variety Theorem] \label{thm:eilenberg}
    Let $\monad$ be a monad in the category of sets.
    Then the maps in the following diagram are mutually inverse bijections.
    \begin{align*}
    \xymatrix@C=6cm{
        \txt{algebra \\
        varieties}
        \ar@/^2pc/[r]^{\algclass \ \mapsto \ \text{languages recognised by at least one algebra from $\algclass$}}
        &
        \txt{language \\
        varieties}
        \ar@/^2pc/[l]^{\langclass \ \mapsto \ \text{finite algebras which recognise only  languages from $\langclass$}}
    }
    \end{align*}
\end{theorem}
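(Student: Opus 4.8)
The plan is to follow Eilenberg's classical argument, but phrased so that only the monad axioms and the results already developed in this section are used. The two maps in the statement will be written $\algmap \mapsto \langmap(\algmap)$ (languages recognised by some algebra in $\algmap$) and $\langmap \mapsto \algmap(\langmap)$ (finite algebras recognising only languages from $\langmap$). First I would check that each map lands where it is claimed: that $\langmap(\algmap)$ is a language variety whenever $\algmap$ is an algebra variety, and that $\algmap(\langmap)$ is an algebra variety whenever $\langmap$ is a language variety. The first direction uses that recognisability is preserved under Boolean combinations (products of recognising algebras), under inverse images of homomorphisms of free algebras $h:\monad\Sigma\to\monad\Gamma$ (compose with $h$; the image still lands in a subalgebra of the recognising algebra), and under inverse images of unary polynomials $f:\monad\Sigma\to\monad\Sigma$ (here one uses Lemma~\ref{lem:congruences-in-terms-of-polynomials}, or more directly that the syntactic congruence of $f^{-1}(L)$ refines that of $L$, so the syntactic algebra of $f^{-1}(L)$ is a quotient of a subalgebra of that of $L$ — hence stays in $\algmap$). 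The second direction is the elementary observation that quotients, subalgebras and finite products of algebras recognise only languages that were already recognised by the original algebras (for subalgebras and quotients this is immediate from the definitions; for products one uses that a language recognised by $A\times B$ is a Boolean combination of languages recognised by $A$ and by $B$).

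The core of the proof is the two round-trip identities $\algmap(\langmap(\algmap))=\algmap$ and $\langmap(\algmap(\langmap))=\langmap$. For $\langmap(\algmap(\langmap))=\langmap$: the inclusion $\supseteq$ is clear, since if $L\in\langmap$ then its syntactic algebra (which exists by Theorem~\ref{thm:syntactic-homomorphism-monad}, as $L$ is recognisable) lies in $\algmap(\langmap)$ — here I must check that the syntactic algebra of $L$ recognises only languages from $\langmap$, which is exactly the content of the standard ``every language recognised by the syntactic monoid of a variety language is again in the variety'' lemma; this is where Lemma~\ref{lem:congruences-in-terms-of-polynomials} and the closure of $\langmap$ under inverse images of homomorphisms and unary polynomials are combined, essentially as in Exercise~\ref{ex:half-eilenberg-star}. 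For the reverse inclusion $\subseteq$, if $L\in\langmap(\algmap(\langmap))$ then $L$ is recognised by some $A\in\algmap(\langmap)$, and by definition $A$ recognises only languages from $\langmap$, so $L\in\langmap$. For $\algmap(\langmap(\algmap))=\algmap$: again $\supseteq$ is easy because any $A\in\algmap$ recognises only languages in $\langmap(\algmap)$, and these are all; the work is in $\subseteq$, i.e.\ showing that if a finite algebra $A$ recognises only languages from $\langmap(\algmap)$ then $A\in\algmap$. The argument is Eilenberg's: present $A$ as a quotient of a free algebra $\monad\Sigma$ for a suitable finite $\Sigma$ (take $\Sigma=A$ with the identity colouring, or a generating set), then realise $A$ as a subalgebra of a finite product of syntactic algebras of languages recognised by $A$; each such language lies in $\langmap(\algmap)$, hence its syntactic algebra is a quotient of a subalgebra of a product of algebras from $\algmap$, hence in $\algmap$; closure of $\algmap$ under products and subalgebras then puts $A$ in $\algmap$.

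The technical device needed for that last embedding is: for a finite algebra $A$, the map $A\to\prod_{a\in A} B_a$, where $B_a$ is the syntactic algebra of the language (colouring) ``$x\mapsto[\,x=a\,]$'' pulled back along a surjection $\monad\Sigma\twoheadrightarrow A$, is injective and a homomorphism, so $A$ is a subalgebra of that product. I would spell this out as a short lemma, using that a finite algebra $A$ is generated by its own underlying set, i.e.\ the identity function $A\to A$ extends (Free Algebra Lemma~\ref{eq:free-monad-algebra}) to a surjective homomorphism $\monad A\twoheadrightarrow A$, and that for distinct $a,b\in A$ the singleton colourings separate them. The main obstacle I anticipate is precisely the ``syntactic algebra of a variety language stays in the variety'' step: one has to verify that the syntactic congruence of a language $L$ on a free algebra $\monad\Sigma$ is the join (equivalently, the coarsest congruence below the kernels) of congruences each arising as the kernel of a map obtained by composing $L$'s recognising homomorphism with inverse images of unary polynomials — so that closure of $\langmap$ under these operations, together with the lattice structure of congruences (Lemma~\ref{lem:join-congruences}), forces every language recognised by the syntactic algebra to be a Boolean combination of pullbacks of $L$ along homomorphisms and polynomials, hence in $\langmap$. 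Once this lemma is in place, the rest is bookkeeping with the closure properties. Everything else — checking that the two maps are well defined, and the easy inclusions — is routine diagram chasing of the kind already illustrated repeatedly in this section.
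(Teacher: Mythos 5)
Your proposal follows the same overall architecture as the paper's proof: check that both maps land in the right classes, then establish the two round-trip identities, with the genuine work concentrated in the implication ``the syntactic algebra of a variety language recognises only variety languages,'' handled via unary polynomials and Lemma~\ref{lem:congruences-in-terms-of-polynomials}. The decomposition, the role of the Free Algebra Lemma, and the reliance on Claim~\ref{claim:function-that-factors-is-homomorphism}-style factoring are all aligned with the text.

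There are two small deviations worth flagging, neither fatal. First, for closure of $\langmap\algclass$ under inverse images of unary polynomials you route through the syntactic algebra of $f^{-1}(L)$; the paper's argument is simpler and avoids syntactic algebras altogether in this step: if $h$ recognises $L$ then the kernel of $h$ is a congruence, congruences commute with unary polynomials (the easy direction of Lemma~\ref{lem:congruences-in-terms-of-polynomials}, no finiteness needed), so $h$ itself already recognises $f^{-1}(L)$. Also, you wrote that the syntactic congruence of $f^{-1}(L)$ refines that of $L$; it is the other way around (every unary polynomial $g$ composed with $f$ is again a unary polynomial, so $\sim_L$ is finer than $\sim_{f^{-1}(L)}$ and $\mathrm{synt}(f^{-1}(L))$ is the quotient). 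Your conclusion is right, the stated reason is reversed. Second, for $\algmap(\langmap(\algclass))=\algclass$ you embed $A$ as a subalgebra of $\prod_a \mathrm{synt}(\mu^{-1}(a))$; the paper instead takes arbitrary recognisers $B_a\in\algclass$ for the languages $\mu^{-1}(a)$ and presents $A$ as a quotient of a subalgebra of $\prod_a B_a$, without passing through syntactic algebras. Both arguments are valid; yours buys a clean embedding but at the cost of again invoking Theorem~\ref{thm:syntactic-homomorphism-monad}, where the paper's version only uses the closure properties of $\algclass$ directly. One last point you should make explicit when writing this out: after showing that all languages recognised by the \emph{syntactic homomorphism} $h:\monad\Sigma\to A$ lie in $\langclass$, you still need to handle an arbitrary homomorphism $g:\monad\Gamma\to A$ (the definition of $\algmap\langclass$ quantifies over all alphabets); the paper does this by factoring $g$ through $h$ using surjectivity of $h$ and the Free Algebra Lemma.
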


\begin{proof}
Let us write $\langmap$ for the left-to-right map, and $\algmap$ for the right-to-left map. 
  We first show that each of these two maps take varieties to varieties, and then we show that the two maps are mutual inverses. 

\begin{enumerate}
    \item  We first show that if the input to $\langmap$ satisfies a weaker assumption than being an algebra variety, namely it is closed under products, then the output   $\langmap \algclass$ is a language variety.  
    
    We begin with Boolean combinations. If $L$ is recognised by an algebra $A \in \algclass$, then its complement is recognised by the same algebra. If furthermore  $K$ is recognised by $B \in \algclass$, then  $L \cup K$ and $L \cap K$ are both recognised by the product $A \times B$, which belongs to $\algclass$ by closure under products.  
    
    Consider now the inverse images. Let  $L$ be a language that is  recognised by a homomorphism
    \begin{align*}
    h : \monad \Sigma \to A \in \algclass.
    \end{align*}
    We need to show that $\langmap \algclass$ contains all inverse images of $L$ under homomorphisms and unary polynomials.  Consider first the homomorphisms: let  $g : \monad \Gamma \to \monad \Sigma$ be a homomorphism, and consider  the inverse image of $L$ under $g$, which can be written as $L \circ g$ if we view $L$ as a function with outputs ``yes'' and ``no''. This inverse image is recognised by the homomorphism  $h \circ g$, which uses the algebra $A$, and therefore it belongs to $\langmap \algclass$. The  same kind argument applies to unary polynomials. Consider a unary polynomial  $f : \monad \Sigma \to \monad \Sigma$. As we have remarked in the proof of  Lemma~\ref{lem:congruences-in-terms-of-polynomials}, congruences  commute with unary polynomials, which means that $h \circ f = f \circ h$, and therefore  $h$ also recognises the inverse image $L \circ f$.
    \item  Similarly to the first step, we show that if the input to $\algmap$ satisfies a weaker condition than being a language variety, namely it is closed under under unions and intersections, then the output is an algebra variety\footnote{The first two steps of this proof establish that the maps $\langmap$ and $\algmap$ form  what is known as a Galois connection, between 
\begin{itemize}
    \item classes of finite algebras closed under products; and
    \item classes of recognisable languages closed under unions and intersections.
\end{itemize}
In the terminology of Galois connections, the  varieties of both kinds are the closed sets, with respect to this Galois connection.
}.   Every language recognised by a sub-algebra of $A$ is also recognised by $A$, and the same is true for quotients, and therefore $\algmap \langclass$ is closed under sub-algebras and quotients of $A$. Consider now products. Suppose that a language $L$ is recognised by a homomorphism
\begin{align*}
h : \monad \Sigma \to A \times B \qquad \text{with }A,B \in \algmap \langclass.
\end{align*}
For every $a \in A$, the inverse image 
\begin{align*}
L_a = h^{-1}( \set a \times B)
\end{align*}
is recognised by the homomorphism
\begin{align*}
    h_A : \monad \Sigma \to A,
\end{align*}
which is the 
composition of $h$ with the projection to $A$. Since the latter homomorphism has domain $A$, it follows that $L_a \in \langclass$.  For similar reasons, if $b \in B$ then $\langclass$ contains the language 
\begin{align*}
L_b = h^{-1}(A \times \set b).
\end{align*}
The intersection $L_a \cap L_b$ is the inverse image under $h$ of the pair  $(a,b)$.  Every  language recognised by $h$ is a finite union of such languages; and therefore it belongs to $\langclass$ by closure under unions and intersections.

\item We now show that the maps $\algmap$ and $\langmap$ are mutual inverses. We first show that  every algebra variety $\algclass$ satisfies
\begin{align*}
\algclass = \algmap \langmap \algclass,
\end{align*}
with the dual equality being proved in the next step. The above equality  is the same as showing that $A \in \algclass$ if and only if 
\begin{itemize}
    \item[(*)] every language recognised by $A$ is recognised by some algebra in $\algclass$.
\end{itemize}
Clearly every algebra  $A \in \algclass$ satisfies (*). We now prove the converse implication. Suppose that an algebra $A$ satisfies (*).  The multiplication operation 
\begin{align*}
\mu : \monad A \to A
\end{align*}
in the algebra $A$ is a homomorphism  from the free algebra $\monad A$ to $A$. By the assumption that $A$ satisfies (*),  every language recognised by this homomorphism is recognised by some algebra from  $\algclass$. In particular, for every $a \in A$ the language $\mu^{-1}(a)$ is  recognised by some homomorphism 
\begin{align*}
h_a : \monad A \to B_a \in \algclass.
\end{align*}
Consider the product homomorphism
\begin{align*}
h : \monad A \to  \prod_{a \in A}B_a \qquad t \mapsto (h_a(t))_{a \in A}.
\end{align*}
Define $B$ to be the image of $h$. The algebra $B$ is a sub-algebra of  a product of algebras from $\algclass$, and therefore it belongs to  $\algclass$. From now on,  we view $h$ as surjective homomorphism onto its image $B$. This  homomorphism  recognises all languages $\mu^{-1}(a)$, and therefore $\mu$ factors through $h$, i.e.~there is some function $f$ which makes the following diagram commute:
\begin{align*}
    \xymatrix{
        \monad A \ar[r]^\mu \ar[dr]_{h} & A \\
& B \ar[u]_f 
    }
\end{align*}
By Lemma~\ref{claim:function-that-factors-is-homomorphism}, $f$ is not just a function but also  a homomorphism of algebras. This means that $A$ is the image of $B$ under a surjective homomorphism. In other words, $A$ is a quotient of $B$, and therefore $A \in \algclass$. 

\item In the final step, we  show that every language variety $\langclass$ satisfies
\begin{align*}
\langclass = \langmap \algmap \langclass.
\end{align*}
This is the same as showing that $L \in \langclass$ if and only if 
\begin{itemize}
    \item[(*)] $L$ is recognised by an algebra that only recognises languages from $\langclass$.
\end{itemize}
Clearly (*) implies $L \in \langclass$, so we focus on the converse implication. Suppose that $L \in \langclass$, and its syntactic homomorphism, which exists by Theorem~\ref{thm:syntactic-homomorphism-monad}, is 
\begin{align*}
h : \monad \Sigma \to A.
\end{align*}
To prove (*), we will show that all languages recognised by the syntactic algebra $A$ belong to  $\langclass$.

\begin{claim}\label{claim:contextual-congruence}
    Let $F : A \to \set{\text{``yes'', ``no''}}$ be the accepting set in the syntactic algebra, which means that $L$ is equal to $F \circ h$. Then  two elements of $A$ are equal  if and only if they have the same values under  $F \circ f$ for every unary polynomial $ f : A \to A$.
\end{claim}
\begin{proof}
    Let $\sim$  be the equivalence relation on $A$  which identifies two elements that have the same image under $F \circ f$ for every unary polynomial $f$.  Because unary polynomials are closed under composition, it follows that $\sim$ commutes with all unary polynomials, and therefore it is a congruence by Lemma~\ref{lem:congruences-in-terms-of-polynomials}. 
     Because the identity is a special case of a unary polynomial, elements that are equivalent under $\sim$ have the same value under  $F$. This means that the quotient homomorphism of $\sim$ recognises $F$, and therefore $\sim$ must be the identity since otherwise $A$ would not be the syntactic algebra of $L$. 
\end{proof}
The following claim shows that unary polynomials in $A$ can be pulled back, along the homomorphism $h$, to unary polynomials in $\monad \Sigma$. 
\begin{claim}\label{claim:homomorphisms-on-unary-polynomials}
    For every unary polynomial $f : A \to A$ there is a unary polynomial $f^h : \monad \Sigma \to \monad \Sigma$ which makes the following diagram commute: 
    \begin{align*}
    \xymatrix{
        \monad \Sigma 
        \ar[r]^{f^h}
        \ar[d]_h &
        \monad \Sigma 
        \ar[d]^{h}\\
        A 
        \ar[r]_f &
        A 
    }
    \end{align*}
\end{claim}
\begin{proof}
    Consider a unary polynomial $f : A \to A$ of the form 
    \begin{align*}
    a \in A \quad  \mapsto \quad  t^A(a,c_1,\ldots,c_n).
    \end{align*}
    Because the syntactic homomorphism is surjective, for each $i \in \set{1,\ldots,n}$ there must be some $s_i \in \monad \Sigma$ which is mapped to $c_i$ by $h$. Since term operations commute with homomorphisms by Lemma~\ref{lem:terms-commute-with-homomorphisms}, the diagram in the claim commutes if we choose $f^h$ to be 
    \begin{align*}
    s \in \monad \Sigma \quad  \mapsto \quad  t^{\monad \Sigma}(s,s_1,\ldots,s_n).
    \end{align*}
\end{proof}
We are now ready to show that $\langclass$ contains all languages recognised by the syntactic algebra $A$.

We first show that $\langclass$ contains all languages recognised by the syntactic homomorphism $h : \monad \Sigma \to A$, and then we generalise this result to other homomorphisms into $A$. 
By Claim~\ref{claim:contextual-congruence} and finiteness of the algebra $A$, there is a finite set 
\begin{align*}
\Xx \subseteq A \to A
\end{align*}
of unary polynomials in the algebra $A$ such that two elements are equal if and only if they  have the same values for all functions from the set
\begin{align*}
\set{F \circ f : f \in \Xx}.
\end{align*}
Putting this together with Claim~\ref{claim:homomorphisms-on-unary-polynomials}, it follows that  two elements of $\monad \Sigma$ have the same image under $h$ if and only if they belong to the same sets from the finite family
\begin{align*}
    \set{\myunderbrace{L \circ f_h}{a language that belongs to $\langclass$ \\ 
    \scriptsize as the inverse image of $L$\\
    \scriptsize under the unary polynomial $f_h$} : f \in F}.
    \end{align*}
In other words, every inverse image $h^{-1}(a)$ is a finite Boolean combination of languages from the above family, and therefore it belongs to $\langclass$ by closure under Boolean combinations. This in turn means that all languages recognised by $h$ are in $\langclass$.

We now prove that not only does $\langclass$ contain every language recognised by the syntactic homomorphism $h$, as we have already shown,  but it also contains every language recognised by a homomorphism 
\begin{align*}
g : \monad \Gamma \to A
\end{align*}
which uses the same target algebra of the syntactic homomorphism. By surjectivity of the syntactic homomorphism and the universal property of the free algebra $\monad \Gamma$, we can choose some homomorphism $f$ which makes the following diagram commute
\begin{align*}
\xymatrix{
    \monad \Gamma \ar[dr]^g \ar[d]_f \\
    \monad \Sigma \ar[r]_h & A\\
}
\end{align*}
By the above diagram, every language recognised by $g$ is an inverse image, under $f$, of some language recognised by $h$. Since we have already proved that every language recognised by $h$ is in $\langclass$, and $\langclass$ is closed under inverse images of homomorphisms such as $f$, we see that every language recognised by $g$ is in $\langclass$.
\end{enumerate}
\end{proof}

\exercisehead

\mikexercise{
Let $S$ be a finite set, and consider the category 
\begin{align*}
\mathsf{Set}^S
\end{align*}
of $S$-sorted sets with sort-preserving functions. 
State and prove the Eilenberg Variety Theorem for monads over this category.}{}

\mikexercise{
\label{ex:weighted-varieties}    
Consider the monad 
from Example~\ref{ex:monad-algebra-over-field}, which corresponds to weighted automata. We adapt to varieties to the weighted setting as follows.  Define an algebra variety to be class of finite-dimensional algebras which is closed under sub-algebras, quotients and products. Define a language variety to be a class $\langclass$ of linear maps $\monad \Sigma \to \Rat$, recognised by finite-dimensional algebras, which is closed under inverse images of homomorphisms and polynomials, and which is closed under combinations in the following sense: if $\langclass$ contains 
\begin{align*}
\set{\lambda_i : \monad \Sigma \to U_i}_{i \in \set{1,2}},
\end{align*}
and $f : \Rat^2 \to \Rat$ is a linear map, then $\langclass$ contains also 
\begin{align*}
w \mapsto f(\lambda_1(w),\lambda_2(w)).
\end{align*}
Show that the Eilenberg Variety Theorem holds for varieties understood in this way. 
}{}

\mikexercise{
    Consider the weighted varieties from the previous example. 
    What is the weighted analogue of star-free languages? Hint: consider the concatenation of two linear maps
    \begin{align*}
    \lambda_1,\lambda_2 : \monad \Sigma \to U
    \end{align*}
to be the linear map which is defined as follows on $\Sigma^*$
    \begin{align*}
    (\lambda_1 \cdot \lambda_2) (a_1 \cdots a_n) = \sum_{i \in \set{0,\ldots,n}} \lambda_1(a_1 \cdots a_i) \cdot \lambda_2(a_{i+1} \cdots a_n),
    \end{align*}
and which is extended to $\monad \Sigma$ by linearity.
    }{}

\mikexercise{\label{ex:contextual-equivalence} For an algebra colouring $L : A \to U$, define \emph{contextual equivalence} to be the equivalence relation on $A$ which identifies two elements of $A$ if they have the same image under $L \circ f$ for every unary polynomial $f : A\to A$. Show that if $L$ is recognisable, then contextual equivalence is the syntactic congruence of $L$.
}{}{}

\mikexercise{Show that contextual equivalence, as defined in the previous exercise, need not be a congruence for algebra colourings that are not recognisable.}{Consider the language $L$ from Example~\ref{ex:no-syntactic-algebra}. We show that contextual equivalence for this language (viewed as a colouring with values ``yes'' and ``no'') is not a congruence. 
A unary polynomial  in the free algebra $\set a^\cc$  corresponds to  a $\cc$-word over alphabet $\set{a,x}$. From the point of view of $L$, there are two kinds of unary polynomials. If $f$ is a unary polynomial such that the corresponding $\cc$-word over $\set{a,x}$ contains $a^n$ as an infix for every $n$, then   $f(w) \in L$  for every $w \in \set a ^\cc$. Otherwise, in the corresponding word there is some bound $n_0\in \set{0,1,\ldots}$ such that words $a^n$ with $n>n_0$ do not appear as infixes; in this case  $f(w) \in L\iff w \in L$. These observations imply that contextual equivalence for $L$ has two equivalence classes: namely $L$ and its complement. In particular,  contextual equivalence is not a congruence, since otherwise  $L$ would be recognisable.}

\section{Identities and Birkhoff's Theorem}
\label{sec:identities}
In this section we return to the identities that were described in Example~\ref{ex:identities}.  Recall that an  \emph{identity} is a pair of  terms over a common set of variables. 
We say that an algebra $A$ satisfies an identity consisting of terms $s,t \in \monad X$ if 
\begin{align*}
h(s)=h(t) \qquad \text{for every homomorphism $h : \monad X \to A$}.
\end{align*}
This is equivalent to the definition given in Example~\ref{ex:identities}, which said that  an algebra $A$ satisfies the identity if the two term operations $s^A$ and $t^A$ are equal.

Below we present two theorems about classes of algebras that can be defined using identities.  There will be two theorems,  one for not necessarily finite algebras, and one for finite algebras.

\subsection*{Identities for varieties not necessarily finite algebras}
We begin with the first theorem about identities, which is a monad variant of Birkhoff's  Theorem from universal algebra\footcitelong[Theorem 10.]{Birkhoff35}. The theorem says that a class of not necessarily finite algebras can be described by identities if and only if it  is a \emph{Birkhoff  variety}, which means that it is closed under images of surjective homomorphisms, subalgebras and (not necessarily finite) products. Traditionally, Birkhoff  varieties are called simply algebra varieties, but the latter name has already been used in  this book for  classes of finite algebras that are described in Definition~\ref{def:algebra-variety}. To avoid confusion, for the purposes of this section where the two kinds of algebra varieties are used, we use the name \emph{Eilenberg  variety} for the varieties of finite algebras.

We say that a class of not necessarily finite algebras is \emph{defined by a set of identities $\eqclass$} if the algebras in the class are exactly those that satisfy all identities from $\eqclass$. 

\begin{theorem}[Birkhoff]\label{thm:birkhoff}
    Let $\monad$ be a monad in the category of sets. 
    A class $\algclass$ of not necessarily finite algebras is a Birkhoff variety if and only if it can be defined by some set of identities.
\end{theorem}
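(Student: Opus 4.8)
\textbf{Proof plan for Birkhoff's Theorem (Theorem~\ref{thm:birkhoff}).}

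The plan is to follow the classical proof of Birkhoff's theorem, transported to the monad setting, using the tools already developed in the excerpt: terms (elements of $\monad X$), congruences, the Free Algebra Lemma, and the characterisation of congruences via term operations (Lemma~\ref{lem:congruence-definition}). The easy direction is that every class defined by a set of identities is a Birkhoff variety. For this, I would fix a set $\eqclass$ of identities and let $\algclass$ be the algebras satisfying all of them; I would then check the three closure properties one at a time. An identity $(s,t)$ with $s,t \in \monad X$ holds in $A$ iff $h(s) = h(t)$ for every homomorphism $h : \monad X \to A$. If $g : A \to B$ is a surjective homomorphism and $A$ satisfies $(s,t)$, then for any homomorphism $k : \monad X \to B$ one lifts $k$ through $g$ using surjectivity and the universal property of $\monad X$ to get $h : \monad X \to A$ with $g \circ h = k$ (choose preimages of the generators, then extend freely), whence $k(s) = g(h(s)) = g(h(t)) = k(t)$. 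For subalgebras the inclusion is a homomorphism, so any homomorphism into the subalgebra composes with it to a homomorphism into $A$, and the equality is inherited. For products $\prod_i A_i$, a homomorphism into the product is the same as a tuple of homomorphisms into the factors (by the universal property applied coordinatewise), so $h(s) = h(t)$ holds iff it holds in each coordinate; hence $\algclass$ is closed under arbitrary products. This settles one implication.

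The substantial direction is: every Birkhoff variety $\algclass$ is defined by the set $\eqclass$ of all identities valid in every algebra of $\algclass$. Clearly every $A \in \algclass$ satisfies every identity in $\eqclass$, so it suffices to show the converse: if $B$ satisfies all identities in $\eqclass$, then $B \in \algclass$. The key construction is the \emph{relatively free algebra} $F_{\algclass}(X)$ over a suitably large set $X$ of generators: take the free algebra $\monad X$ and quotient by the congruence $\theta_X$ defined as the intersection of the kernels of all homomorphisms from $\monad X$ into algebras of $\algclass$. Since a congruence is, by Lemma~\ref{lem:congruence-definition}, exactly an equivalence relation commuting with all term operations, and an intersection of such relations again commutes with all term operations, $\theta_X$ is indeed a congruence, so $F_{\algclass}(X) \eqdef (\monad X)/\theta_X$ is an algebra. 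By construction $(s,t) \in \theta_X$ precisely means $(s,t)$ is an identity valid throughout $\algclass$, i.e.\ $(s,t) \in \eqclass$ (restricted to variables $X$). Moreover $F_{\algclass}(X)$ embeds into the product $\prod_{h} A_h$ ranging over all homomorphisms $h : \monad X \to A_h$ with $A_h \in \algclass$ — more precisely, it is the image of $\monad X$ under the product of these $h$, hence a subalgebra of a product of members of $\algclass$, hence $F_{\algclass}(X) \in \algclass$ by closure under products and subalgebras.

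Now let $B$ satisfy all identities in $\eqclass$. Choose $X$ large enough to surject onto $B$ and pick a surjection $e : X \to B$; by the Free Algebra Lemma this extends to a surjective homomorphism $\bar e : \monad X \to B$. I claim $\theta_X$ is contained in the kernel of $\bar e$: if $(s,t) \in \theta_X$, then $(s,t) \in \eqclass$, and since $B$ satisfies this identity and $\bar e$ is one of the homomorphisms $\monad X \to B$, we get $\bar e(s) = \bar e(t)$. Therefore $\bar e$ factors through the quotient $\monad X \to F_{\algclass}(X)$, yielding a surjective homomorphism $F_{\algclass}(X) \to B$ (surjectivity because $\bar e$ is surjective; that the induced map is a homomorphism follows from Claim~\ref{claim:function-that-factors-is-homomorphism}). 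Thus $B$ is a surjective-homomorphic image of $F_{\algclass}(X) \in \algclass$, so $B \in \algclass$ by closure under images of surjective homomorphisms. This completes the proof.

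\textbf{Main obstacle.} The only genuinely delicate point is set-theoretic: the collection of all homomorphisms out of $\monad X$, and the collection of all algebras in $\algclass$, form a proper class, so the "product over all such" needs to be tamed. The standard fix, which I would invoke, is to bound the cardinality of the generating set $X$ by (a cardinal depending on) $B$, and to note that every identity valid in $\algclass$ is already witnessed by algebras of bounded size generated by the relevant variables — or, more simply, to observe that it suffices to take $\theta_X$ as the intersection of kernels of homomorphisms into a \emph{set} of representatives of isomorphism classes of $\algclass$-algebras of cardinality at most $|\monad X|$, since any homomorphism from $\monad X$ has image of that size and images lie in $\algclass$. With that bookkeeping in place the argument goes through verbatim.
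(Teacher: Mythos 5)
Your proposal is correct and follows essentially the same route as the paper. The paper's key step (its Lemma~\ref{lem:hsp}) takes $X = B$ (the underlying set of $B$ itself) so that the surjective homomorphism $\monad B \to B$ is simply the multiplication $\mu$, forms the single homomorphism $H : \monad B \to A$ as the product of all homomorphisms into members of $\algclass$ restricted to its image, and shows $\mu$ factors through $H$ via Claim~\ref{claim:function-that-factors-is-homomorphism}; your relatively free algebra $F_\algclass(X)$ is exactly this $A$ up to isomorphism, and your $\theta_X$ is the kernel of $H$, so the two arguments coincide modulo the choice of generating set. Your explicit remark about taming the product over a proper class by bounding cardinalities is sound and is the standard fix; the paper leaves this implicit.
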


We will prove a slightly stronger result, which establishes a duality between algebras and identities. Under this duality, Birkhoff varieties will correspond to sets of identities that are closed under consequences, as described below.

A \emph{consequence} of  a set of identities $\eqclass$ is defined to be  any identity that is satisfied in every algebra that  satisfies all identities from $\eqclass$. We say that a set of identities  is \emph{closed} if it contains all of its consequences.  In symbols, a set of identities $\eqclass$ is closed if it satisfies the following:
\begin{align*}
    \myunderbrace{\forall e}{for every\\ \scriptsize identity} \ (\myunderbrace{\forall A}{for every \\ \scriptsize algebra} \myunderbrace{A \models \eqclass \Rightarrow A \models e}{if $A$ satisfies all 
    \\ \scriptsize identities 
     in $\eqclass$,\\ \scriptsize then it satisfies $e$}) \Rightarrow e \in \eqclass.
\end{align*}

By taking the right and then left arrow in the following theorem, we immediately get the Birkhoff's Theorem. 

\begin{theorem} \label{thm:birkhoff-duality}
    Let $\monad$ be a monad in the category of sets.
    Then the maps in the following diagram are mutually inverse bijections.
    \begin{align*}
    \xymatrix@C=6cm{
        \txt{Birkhoff\\ 
        varieties}
        \ar@/^2pc/[r]^{\algclass \ \mapsto \ \text{identities satisfied by all algebras in  $\algclass$}}
        &
        \txt{closed sets\\ of identities}
        \ar@/^2pc/[l]^{\eqclass \ \mapsto \ \text{algebras that satisfy all identities in  $\eqclass$}}
    }
    \end{align*}
\end{theorem}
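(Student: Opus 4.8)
The plan is to prove Theorem~\ref{thm:birkhoff-duality} by showing the two maps are well-defined (each sends its input to the stated kind of object) and mutually inverse, following the classical Birkhoff argument but carried out in the monad setting where ``term'' means ``element of $\monad X$'' and ``satisfies an identity $(s,t) \in (\monad X)^2$'' means ``$h(s) = h(t)$ for every homomorphism $h : \monad X \to A$''. First I would record two easy directions. Given a Birkhoff variety $\algclass$, the set $\eqmap \algclass$ of identities satisfied by all its members is trivially closed: if an identity $e$ holds in every algebra satisfying all of $\eqmap \algclass$, then in particular it holds in every member of $\algclass$ (since members satisfy $\eqmap \algclass$ by definition), so $e \in \eqmap \algclass$. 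Conversely, for any set of identities $\eqclass$, the class $\algmap \eqclass$ of all algebras satisfying $\eqclass$ is closed under surjective-homomorphic images, subalgebras, and arbitrary products: this uses that term operations commute with homomorphisms (Lemma~\ref{lem:terms-commute-with-homomorphisms}), that an identity failing in a subalgebra would fail in the ambient algebra, and that term operations in a product are computed coordinatewise, so an identity holding in each factor holds in the product. Thus both maps land where claimed.

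Next I would verify $\eqmap \algmap \eqclass = \eqclass$ for every closed set of identities $\eqclass$. The inclusion $\eqclass \subseteq \eqmap \algmap \eqclass$ is immediate. For the reverse, suppose $e = (s,t)$ with $s,t \in \monad X$ lies in $\eqmap \algmap \eqclass$, i.e.\ $e$ holds in every algebra satisfying $\eqclass$; we must show $e$ is a consequence of $\eqclass$, hence in $\eqclass$ by closedness — but that is exactly the definition of ``consequence'', so this direction is essentially a tautology once the definitions are unwound. The substantive content is therefore entirely in the other round-trip equation.

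The crux is $\algmap \eqmap \algclass = \algclass$ for every Birkhoff variety $\algclass$: an algebra satisfying all identities true throughout $\algclass$ must itself belong to $\algclass$. The standard device is the \emph{free algebra of $\algclass$ over a set $X$}, namely $\monad X$ quotiented by the congruence $\theta_X$ that identifies $s,t \in \monad X$ exactly when $(s,t) \in \eqmap \algclass$ (one checks $\theta_X$ is a congruence because $\eqmap \algclass$, being defined by equality under all homomorphisms into a fixed product of members of $\algclass$, commutes with all term operations via Lemma~\ref{lem:terms-commute-with-homomorphisms}; equivalently $\monad X / \theta_X$ embeds in a product of members of $\algclass$, hence lies in $\algclass$ by closure under products and subalgebras). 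This quotient $F_\algclass(X)$ has the universal property that every function $X \to A$ with $A \in \algclass$ extends uniquely to a homomorphism $F_\algclass(X) \to A$. Now take any $B \in \algmap \eqmap \algclass$; pick a generating set for $B$, say the whole underlying set $X = B$, giving a surjective homomorphism $q : \monad X \twoheadrightarrow B$ from the free $\monad$-algebra (Free Algebra Lemma). The key claim is that $q$ factors through the quotient map $\monad X \twoheadrightarrow F_\algclass(X)$: for this one shows $\ker(\monad X \twoheadrightarrow F_\algclass(X)) \subseteq \ker q$, i.e.\ every identity $(s,t) \in \eqmap\algclass$ satisfies $q(s) = q(t)$ — which holds precisely because $B$ satisfies all of $\eqmap\algclass$ by hypothesis. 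Hence $B$ is a surjective-homomorphic image of $F_\algclass(X)$, and since $F_\algclass(X) \in \algclass$ and $\algclass$ is closed under surjective images, $B \in \algclass$.

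The main obstacle I anticipate is the verification that $\theta_X$ is genuinely a congruence and that $\monad X / \theta_X$ lies in $\algclass$ — this is where one must be careful about possibly-large products (the ``arbitrary products'' clause of a Birkhoff variety is essential, since one wants to embed $\monad X/\theta_X$ into the product $\prod_{(A,\,h : \monad X \to A),\, A \in \algclass} A$, indexed by a proper class unless one quotients to a set of representatives; one handles this as usual by noting it suffices to use a set of isomorphism types of algebras of bounded cardinality, or by working with $\prod_A A^{A^X}$ over a skeleton). A secondary point of care is that in the monad setting ``term operation'' involves valuations $A^X \to A$ which for multi-sorted categories need not be morphisms (footnoted earlier in the excerpt), but since Theorem~\ref{thm:birkhoff-duality} is stated only for the category of sets this is not an issue here. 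Once the free-algebra-of-a-variety construction is in place, all four containments follow mechanically, and Theorem~\ref{thm:birkhoff} drops out by composing the two bijections.
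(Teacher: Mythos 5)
Your proposal is correct and follows essentially the same path as the paper: the two easy round-trip directions are handled identically, and the crux $\algmap\eqmap\algclass = \algclass$ is established by exactly the paper's argument, namely that $B$ is a surjective-homomorphic image of a subalgebra of a product of members of $\algclass$, obtained by taking the free algebra $\monad B$ on the underlying set of $B$ and factoring the multiplication map through the quotient by the identities of $\algclass$ (the paper packages this as Lemma~\ref{lem:hsp} via the single product homomorphism $H$, while you name the intermediate object $F_\algclass(X)$ explicitly — the same algebra). Your aside on proper-class products is a legitimate point the paper glosses over, but it does not change the structure of the argument.
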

\begin{proof}
    Let us write $\eqmap$ for the left-to-right map in the diagram from the theorem, and $\algmap$ for the right-to-left map. Almost by definition, applying the map $\eqmap$ to any set of algebras will produce a set of identities that is closed. It is also not hard to see that the map $\algmap$ produces Birkhoff varieties, because algebras satisfying a given identity are closed under surjective homomorphic images, subalgebras and possibly infinite products. 
    It remains to show that the maps are mutually inverse, which corresponds to the following two equalities:
    \begin{align*}
    \myunderbrace{\eqclass  = \eqmap \algmap \eqclass}
    {for every closed set of identities $\eqclass$}
    \hspace{2cm}
    \text{and}
    \hspace{2cm}
    \myunderbrace{\algclass  = \algmap \eqmap  \algclass}
    {for every Birkhoff variety $\eqclass$}.
    \end{align*}
    The first equality says that an identity belongs to $\eqclass$ if and only if it is satisfied by all algebras that satisfy all identities in $\eqclass$. This equality is simply the definition of a closed set of identities, and the equalities holds for closed sets. We are left with the second equality, which says that an algebra belongs to $\algclass$ if and only if (*) it satisfies all identities that are true in all algebras from $\algclass$. Clearly every algebra from $\algclass$ satisfies (*). The converse implication follows from the following lemma, and the closure properties of Birkhoff varieties.

\begin{lemma}\label{lem:hsp}
     Let $\algclass$  set of algebras. 
    If an algebra $B$ satisfies all identities that are true in all algebras from $\algclass$, then $B$ is a homomorphic image of a subalgebra of a (possibly infinite) product of algebras from $\algclass$. 
    % Furthermore, if $B$ is finite and $\algclass$ is a finite set that contains only finite algebras, then $A$ is a subalgebra of a finite product of algebras from $\algclass$.
\end{lemma}
\begin{proof} The key observation is that satisfying an identity can be interpreted in terms of homomorphisms, in the following way: an identity consisting of two terms $\monad B$ is true in all algebras from $\algclass$ if and only if the two terms have the same image  under  every homomorphism
\begin{align*}
h : \monad B \to A \in \algclass.
\end{align*}
Take the product of all possible homomorphisms $h$ as above, and  restrict the resulting homomorphism to its image,  yielding  a surjective   homomorphism
\begin{align*}
H : \monad B \to A \in \myunderbrace{\text{subalgebras}}{because we restricted to the image} \text{of products of algebras in $\algclass$,}
\end{align*}
such that two terms in $\monad B$ are an identity true in all algebras from $\algclass$ if and only if they have the same image under $H$.   If two terms  $s,t \in \monad B$ have the same image under $H$, then they form an  identity that is true in all algebras from $\algclass$, and therefore also an identity that is true in the algebra $B$, by assumption on $B$. If an identity is true in $B$, then the two terms in the identity must have the same result under the multiplication operation of the algebra $B$, since the latter is an example of a homomorphism of type $\monad B \to B$. Summing up, we have shown that if two terms in $\monad B$ have the same result under $H$, then they have the same multiplication. 
This means that the multiplication operation of the algebra $B$ factors through the surjective homomorphism $H$. 
\begin{align*}
    \red{\exists 
    \myunderbrace{f}{a function \\
    \scriptsize on underlying \\
    \scriptsize sets}}
      \qquad
    \xymatrix@C=2cm{ 
       \monad B
       \ar@{->>}[r]^H 
       \ar@{->>}[dr]_-{ \text{ multiplication of $B$\qquad}}
    & A 
    \ar@[red]@{->>}[d]^{\red f} 
    \\
     & B
      }
  \end{align*}
By Claim~\ref{claim:function-that-factors-is-homomorphism}, the function $\red f$ is actually a homomorphism. Therefore, $B$ is  is the image of $A$ under some surjective homomorphism.  
\end{proof}

\end{proof}

\subsection*{Identities for varieties of finite algebras}
We now turn to identities that characterise varieties of finite algebras, the same varieties that were use in the Eilenberg Variety Theorem. To avoid confusion with the Birkhoff varieties of possibly infinite algebras that are also discussed in this chapter, we use the name \emph{Eilenberg varieties} for varieties of finite algebras. 

\begin{theorem}[Eilenberg-\schutz]\label{thm:eilenberg-schutz}\footnote{This theorem is based on 
    \incite[Theorem 1]{eilenberg1975pseudovarieties}
    The theorem cited above differs in two ways from our Theorem~\ref{thm:eilenberg-schutz}: (a) our theorem works for any monad subject to the assumption on countably many finite algebras; (b) the characterisation in terms of identities from~\cite{eilenberg1975pseudovarieties} is different, because it gives a sequence of identities (and not sets of identities), and it requires satisfying all but finitely many identities from the sequence. Eliminating difference (b) seems to require some extra assumptions on the monad.
    }
    Let $\monad$ be a monad in the category of sets, such that there are countably many finite algebras up to isomorphism. The following conditions are equivalent for every class $\algclass$ of finite algebras:
    \begin{enumerate}
        \item \label{eil-schutz:variety} $\algclass$ is an Eilenberg variety, i.e.~it is closed under images of surjective homomorphisms, subalgebras and finite products;
        \item \label{eil-schutz:identities} there is a sequence of sets of identities
         \begin{align*}
            \myunderbrace{\eqclass_1 \supseteq \eqclass_2 \supseteq \cdots }{each $\eqclass_n$ is a set of identities}
            \end{align*}
            such that a finite algebra belongs to $\algclass$ if and only if for some $n \in \set{1,2,\ldots}$ it satisfies all identities in $\eqclass_n$. 
    \end{enumerate}
\end{theorem}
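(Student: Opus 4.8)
The plan is to prove the two implications \ref{eil-schutz:variety}~$\Rightarrow$~\ref{eil-schutz:identities} and \ref{eil-schutz:identities}~$\Rightarrow$~\ref{eil-schutz:variety} separately, with the first being the substantive one. Throughout I will use the countability assumption to enumerate all finite $\monad$-algebras (up to isomorphism) as $A_1, A_2, \ldots$. For the easier direction \ref{eil-schutz:identities}~$\Rightarrow$~\ref{eil-schutz:variety}: given a decreasing sequence $\eqclass_1 \supseteq \eqclass_2 \supseteq \cdots$, write $\algclass_n$ for the class of finite algebras satisfying all identities in $\eqclass_n$; by Example~\ref{ex:identities} each $\algclass_n$ is an algebra variety (closed under quotients, subalgebras, and \emph{arbitrary} products, in particular finite ones), and $\algclass = \bigcup_n \algclass_n$. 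Since the sequence is decreasing, $\algclass_1 \subseteq \algclass_2 \subseteq \cdots$, so $\algclass$ is an increasing union of varieties. Closure under quotients and subalgebras is immediate (a quotient or subalgebra of an algebra in some $\algclass_n$ stays in $\algclass_n$). For finite products, if $A \in \algclass_m$ and $B \in \algclass_n$ then $A \times B \in \algclass_{\max(m,n)}$ because the larger index has fewer identities. Hence $\algclass$ is an Eilenberg variety.

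For the hard direction \ref{eil-schutz:variety}~$\Rightarrow$~\ref{eil-schutz:identities}: let $\algclass$ be an Eilenberg variety. The idea is to ``approximate $\algclass$ from above'' by varieties defined by identities. For each $n$, let $\basis_n \subseteq \algclass$ be the set of finite algebras in $\algclass$ that are (isomorphic to) one of the first $n$ algebras $A_1, \ldots, A_n$ in the enumeration; so $\basis_1 \subseteq \basis_2 \subseteq \cdots$ is an increasing exhaustion of $\algclass$ by finite subsets. For each $n$, let $\algclass_n^{\mathrm{HSP}}$ denote the class of all (not necessarily finite) algebras that is the Birkhoff variety generated by $\basis_n$, i.e.~the closure of $\basis_n$ under surjective homomorphic images, subalgebras, and arbitrary products; by Birkhoff's Theorem~\ref{thm:birkhoff}, this class is defined by a set of identities $\eqclass_n$, namely the set of all identities satisfied by every algebra in $\basis_n$. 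Since $\basis_n \subseteq \basis_{n+1}$, every identity satisfied by all of $\basis_{n+1}$ is in particular satisfied by all of $\basis_n$, so $\eqclass_n \supseteq \eqclass_{n+1}$; thus we have the required decreasing sequence.

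It remains to verify that a finite algebra $A$ lies in $\algclass$ if and only if $A$ satisfies all identities in $\eqclass_n$ for some $n$. The forward direction is clear: if $A \in \algclass$ then $A$ is isomorphic to $A_k$ for some $k$, so $A \in \basis_k$, hence $A$ satisfies every identity in $\eqclass_k = \eqmap(\basis_k)$. For the converse, suppose the finite algebra $A$ satisfies all identities in $\eqclass_n$. Then $A$ satisfies all identities true in all of $\basis_n$, so by Lemma~\ref{lem:hsp} (the HSP lemma from the proof of Birkhoff's Theorem), $A$ is a surjective homomorphic image of a subalgebra of a possibly infinite product $\prod_{i \in I} B_i$ with each $B_i \in \basis_n \subseteq \algclass$. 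Here is where finiteness of $A$ is essential: since $A$ is finite, I can cut the product down to a finite one. Concretely, let $C \subseteq \prod_{i \in I} B_i$ be the subalgebra and $f : C \twoheadrightarrow A$ the surjection; choosing for each $a \in A$ a preimage $c_a \in C$, only finitely many of them are involved, and the coordinates on which the finitely many $c_a$ differ pairwise in a way that matters for separating the values $f(c_a)$ form a finite set $J \subseteq I$ (here one uses that $A$ is finite and $f$ factors through the projection to enough coordinates to separate the $f(c_a)$'s — this is the standard argument that any finite quotient of a product factors through a finite subproduct; it may need a brief lemma along the lines of ``$A$ finite, $A$ a quotient of a subalgebra of $\prod_{i\in I} B_i$ $\Rightarrow$ $A$ is a quotient of a subalgebra of $\prod_{i\in J} B_i$ for some finite $J$''). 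Then the image of $C$ under the projection $\prod_{i\in I}B_i \to \prod_{i\in J}B_i$ is a \emph{finite} subalgebra $C'$ of the finite product $\prod_{i\in J} B_i \in \algclass$ (finite product of algebras in $\algclass$), $C'$ is a subalgebra of an algebra in $\algclass$ hence in $\algclass$, and $A$ is a quotient of $C'$ hence in $\algclass$. This completes the proof.

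The main obstacle I anticipate is exactly that last bookkeeping step: extracting a \emph{finite} subproduct from the possibly infinite product produced by the HSP lemma, and checking carefully that the restriction of the surjection $f$ still lands on all of $A$. One must be slightly careful because $f$ is only surjective, not injective, so the coordinates needed are those that suffice to realise every value in the finite set $A$ as the image of some tuple in $C$; picking one preimage per element of $A$ and taking $J$ to index the coordinates where these finitely many tuples disagree (together with, say, one fixed coordinate) does the job, but the argument should be stated as a clean auxiliary lemma to avoid hand-waving. Everything else — the direction \ref{eil-schutz:identities}~$\Rightarrow$~\ref{eil-schutz:variety}, the construction of the decreasing $\eqclass_n$, and the appeals to Birkhoff's Theorem and Lemma~\ref{lem:hsp} — is routine given the results already established in the excerpt.
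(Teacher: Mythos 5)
Your overall plan matches the paper's: the easy direction is identical, and for the hard direction you build exactly the same sequence $\eqclass_n$ (identities satisfied by the first $n$ algebras of $\algclass$) and invoke Lemma~\ref{lem:hsp}. The difference, and the problem, is in how you handle the final finiteness step.

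You start from the conclusion of Lemma~\ref{lem:hsp} as a black box — ``$A$ is a surjective image of a subalgebra $C$ of a possibly infinite product $\prod_{i\in I}B_i$'' — and then try to cut the product down to a finite subproduct. But the argument you sketch is not sound as stated, and the ``standard argument that any finite quotient of a product factors through a finite subproduct'' that you appeal to is not a valid general principle. Choosing one preimage $c_a\in C$ per element $a\in A$ and letting $J$ collect coordinates that separate the $c_a$'s does \emph{not} show that $f$ factors through $\pi_J$; for that you would need that \emph{every} pair $c,c'\in C$ agreeing on $J$-coordinates satisfies $f(c)=f(c')$, which is a much stronger statement than the $c_a$'s being pairwise separated. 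And in full generality the factoring claim is false: for infinite products of finite groups, one can build surjections onto a finite group (e.g.\ via ultrafilters) that do not factor through any finite subproduct. In a non-finitary monad nothing in the abstract HSP statement rules this out, so the cut-down cannot be a routine bookkeeping lemma.

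The paper sidesteps this entirely by looking inside the proof of Lemma~\ref{lem:hsp} rather than treating it as a black box. There the product is indexed by the homomorphisms $h:\monad B\to A$ with $A\in\algclass_n$. By the Free Algebra Lemma, each such homomorphism is determined by a function $B\to A$, and since $B$ is finite and $\algclass_n$ is a finite set of finite algebras, there are only finitely many such functions — so the index set, and hence the product, is already finite. This makes the finiteness of the product a consequence of the construction, not something to be recovered afterwards. You should replace your cut-down paragraph with this observation; everything else in your write-up is fine and agrees with the paper.
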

\begin{proof}
    We begin with the implication \ref{eil-schutz:identities}$\Rightarrow$\ref{eil-schutz:variety}. If an algebra satisfies all identities in $\eqclass_n$, then the same is true for all of its subalgebras and images under surjective homomorphic images. For binary products the argument is the same: if an algebra satisfies all identities in $\eqclass_n$ and another algebra satisfies all identities in $\eqclass_m$, then their product satisfies all identities in $\eqclass_{\max(m,n)}$. This establishes that every class of algebras satisfying condition~\ref{eil-schutz:identities} has the closure properties required of an Eilenberg variety.

    Consider now the converse implication~\ref{eil-schutz:variety}$\Rightarrow$\ref{eil-schutz:identities}. Define $\algclass_n$ to the first $n$ algebras from $\algclass$, with respect to the countable enumeration from the assumption in the theorem. Define $\eqclass_n$  to be the identities that are satisfied by all algebras in $\algclass_n$. We claim that an algebra $B$ belongs to $\algclass$ if and only if for some $n \in \set{1,2,\ldots}$ it satisfies all identities in $\eqclass_n$. The left-to-right implication is immediate. For the right-to-left implication, we can apply Lemma~\ref{lem:hsp} to conclude that $B$ is the homomorphic image of a subalgebra of a product of algebras from $\algclass_n$. Furthermore, if we inspect the proof of Lemma~\ref{lem:hsp}, we will see that the product is finite, because the homomorphisms from $\monad B$ to algebras in $\algclass_n$, as used in~\eqref{eq:homo-as-identity}, can be chosen in finitely many ways. 
\end{proof}

 \exercisehead
 \mikexercise{Show that Theorems~\ref{thm:birkhoff} and~\ref{thm:eilenberg-schutz} are  also true for monads in categories  of sorted sets, even with infinitely many sorts.}{}

 \mikexercise{Give an example of a monad which violates the assumption on countably many finite algebras from Theorem~\ref{thm:eilenberg-schutz}. Hint: see Section~\ref{sec:infinite-forests}.}{}

 \part{Trees and graphs}

 \chapter{Forest algebra}
\label{sec:forest-algebra}

In this chapter, we present a monad that  models  trees\footnote{
    This section is based on
    \incite{bojanczykForestAlgebras2008}
}.  The trees are  finite, node labelled, unranked (no restriction on the number children for a given node), and without a sibling order. Other kinds of trees can be modelled by other monads. 

\section{The forest monad}
\label{sec:the-forest-monad}
In fact,  the monad will represent  slightly more general objects, namely forests (multisets of trees) and contexts (which are forests with a port that is meant to be replaced by a forest or context).
The algebras are going to be two-sorted, with the sort names being ``forest'' and ``context''. For the rest of this chapter, define a \emph{two-sorted set} to be a set together with a partition into elements of forest sort and elements of context sort.  We use a convention where forest-sorted elements are written in \red{red}, context-sorted elements are written in \blue{blue}, and black is used for elements  whose sort is not known or which come from a set without sorts.  

    A \emph{forest} over a two-sorted set $\Sigma$ consists of a set of \emph{nodes};  a partial \emph{parent} from  nodes to nodes, and   a \emph{labelling} from nodes to $\Sigma$. Here is a picture:
     \mypic{65}
The parent function must be acyclic, and the labelling function must respect the following constraint: leaves (nodes that are not parents of any other node) have labels of sort ``forest'', while non-leaves have labels of sort ``context''.

We assume that  forests  are  nonempty, i.e.~there is at least one node.  
Note that there is no order on siblings in our definition of forests. The definition 
% \endgroup

We use the usual tree terminology, such as root (a node without a parent), ancestor (transitive reflexive closure of the parent relation), child (opposite of the parent relation), descendant (opposite of ancestor) and sibling (nodes with the same parent). We assume that all roots are siblings.

Apart from forests, the forest monad will also talk about \emph{contexts}, which are  forests with an extra dangling edge that is attached to a node with a context label, as in the following picture:
\mypic{66}

 \paragraph*{The forest monad.} We now define a monad structure on  forests and contexts.
%  For this chapter, the two sorts are called ``forest'' and ``context''.
 
 \begin{figure}
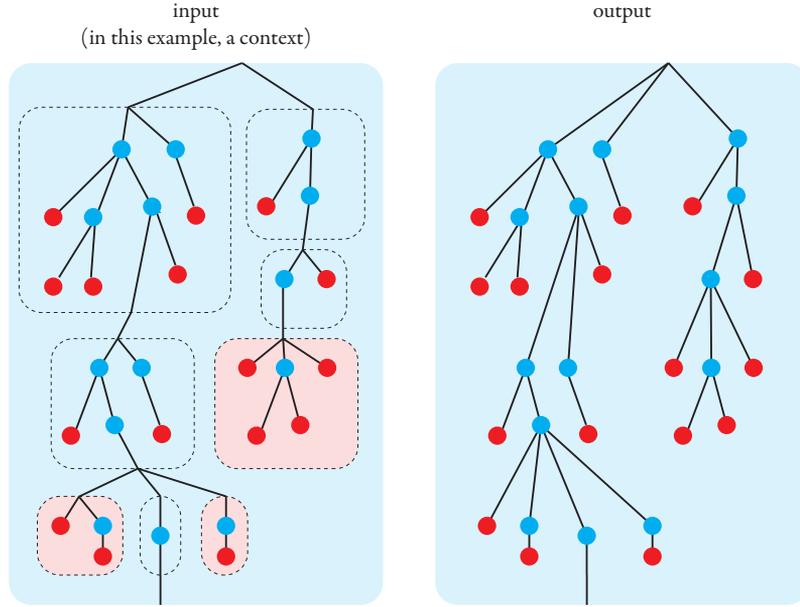

     \centering
     \mypic{69}
     \caption{Free multiplication in the forest monad}
     \label{fig:forest-multiplication}
 \end{figure}
\begin{definition}
    Define the \emph{forest} monad as follows. 
    \begin{itemize}
        \item  The underlying category  is two-sorted sets, where objects are two-sorted sets (with sorts ``forest'' and ``context'') and the morphisms are  sort-preserving functions between two-sorted sets. 
        \item For a two-sorted set $\Sigma$, the forest-sorted elements in $\fmonad \Sigma$ are forests over $\Sigma$, while the context-sorted elements  are    contexts over $\Sigma$. A sort-preserving function $f : \Sigma \to \Gamma$ is lifted to a sort-preserving function $\fmonad f : \fmonad \Sigma \to \fmonad \Gamma$ by  applying $f$ to the label of every node and leaving the rest of the structure unchanged.
        \item The unit operation maps a label $a \in \Sigma$ to the unique forest or context that has one node with  label $a$, as in the following pictures:
        \mypic{70}
        \item Free multiplication is the operation of type $\fmonad \fmonad \Sigma \to \fmonad \Sigma$ that is illustrated in Figure~\ref{fig:forest-multiplication}. More formally, the free multiplication of  $t \in \fmonad \fmonad \Sigma$ is defined as follows.  The nodes are pairs $(u,v)$ such that $u$ is a node of $t$ and $v$ is a node in the tree or context which is the label of $u$. The label is inherited from $v$, while the parent of a node $(u,v)$ is defined as follows (in the following $t_u \in \fmonad \Sigma$ is the label of node $u$ in $t$):
        \begin{align*}
            \begin{cases}
                (u,\text{$t_u$-parent of $v$}) & \text{if $v$ is not a root in $t_u$;}\\
                (\text{$t$-parent of $u$},\text{port of $t$-parent of $u$}) & \text{if $v$ is a root in $t_u$ and $u$ is not a root in $t$;}\\
                \text{undefined} & \text{otherwise}
            \end{cases}
            \end{align*}
        If $t$ is a context, then the port in the free multiplication is defined to  be the port of the context that labels the  port of $t$.
    \end{itemize}
\end{definition}
We leave it as an exercise for the reader to check that the monad axioms are satisfied by the above definition. We use the name \emph{forest algebras} for Eilenberg-Moore algebras over this monad.  

\section{Recognisable languages}
The rest of this chapter is devoted to a study of the languages recognised by forest algebras. We care mainly about languages recognised by  finite forest algebras, which are forest algebras that have finitely many elements on both sorts. We begin with some examples.

The notion of compositional function and Lemma~\ref{lem:monad-compositional} about compositional functions corresponding to homomorphisms is also true for monads in the category of sorted sets used by forest algebra. Therefore, we will mainly describe  homomorphisms using the terminology of compositional functions. 

\begin{myexample}  \label{ex:forest-exists-label}
    Let  $\Gamma \subseteq \Sigma$ be two-sorted alphabets. We can view 
    \begin{align*}
    \fmonad \Gamma \subseteq \fmonad \Sigma
    \end{align*}
    as a language, which only contains those  forests and context over alphabet $\Sigma$ where all labels are from $\Gamma$. Here is a homomorphisms into a finite algebra that recognises this homomorphisms  Consider the function  which inputs a forest or context in $\fmonad \Sigma$, and outputs the following information: (a) is it a forest or context; (b) are all labels from $\Gamma$?  This function is easily seen to compositional, and  therefore $h$ it can be viewed as a homomorphism of forest algebras.  The co-domain of the homomorphism $h$ is a forest algebra with two elements on the forest sort, and two elements on the context sort.
\end{myexample}

\begin{myexample}\label{ex:forest-counting-modulo}
    Let $\Sigma$ be a two-sorted alphabet, let $n \in \set{1,2,\ldots}$. Consider the function $h$ which inputs a forest or context in $\fmonad \Sigma$, and outputs the following information: (a) is it a forest or context; (b) what is the number of nodes modulo $n$. This function is compositional, and therefore $h$ can be viewed as a homomorphism of forest algebras. This homomorphism recognises the language of forests or contexts where the number of nodes is divisible by $n$. 
\end{myexample}

\begin{myexample}
    Consider an alphabet $\Sigma$ where all letters have context type. In this case, there are no forests over $\Sigma$, because there can be no leaves. For the same reason, every context over $\Sigma$ looks like this:
    \mypic{71}
    In other words, 
    $\fmonad \Sigma$ is empty on the forest sort, and is isomorphic to the free semigroup $\Sigma^+$ on the context sort. Since the monad structure of the free semigroup agrees with the monad structure of the forest monad, it follows that a forest algebra with an empty forest sort is the same thing as a semigroup.
\end{myexample}

\exercisehead

\mikexercise{Show that recognisable languages in the forest monad  are closed under images of (not necessarily letter-to-letter) homomorphisms 
\begin{align*}
h : \fmonad \Sigma \to \fmonad \Gamma.
\end{align*}
}
{
Consider an alphabet $\Sigma$ which has one forest-sorted letter $\red a$ and no context-sorted letters. There are no contexts over this alphabets and the forests look like this:
\begin{align*}
\red {a +  \cdots + a}
\end{align*}
Consider the homomorphism $h$ which maps $\red a$ to $\red {a+b}$, where $\red b$ is some new forest-sorted letter. The image of $\fmonad \Sigma$ under this homomorphism consists of forests where the number of $\red a$'s is equal to the number of $\red b$'s.

}

\mikexercise{Consider a variant of the forest monad, where we allow contexts where the port is a root, like in the following example:
\mypic{99}
Show that in this variant, recognisable languages are not closed under images of homomorphisms, but are closed under images of letter-to-letter homomorphisms.}{}

\subsection{A finite representation}
\label{sec:basic-operations-forest-algebra}
As usual with the monad approach, one needs to explain how  algebras can be finitely represented. 
Even if the underlying sorted set is finite, the multiplication operation 
\begin{align*}
    \mu : \fmonad A \to A
\end{align*}
is in principle an infinite object. We show below a finite representation for the multiplication operation, in analogy to semigroups, where one only needs to define  multiplication  for inputs of length two. When discussing this finite representation, we use as much as possible the abstract language of monads; this will allow us to see analogies with other finite representations in this book. 

\paragraph*{A term basis.} Like for any monad, a \emph{term} in the forest monad is defined to be an element of $\fmonad X$ for some two-sorted set of variables $X$. Here is a picture of a term:
\mypic{68} A difference with respect to terms for monads in the category of sets is that in the forest monad -- which lives in the category of two-sorted sets -- the variables are sorted, which means that there are forest variables, and context variables. Also, the term itself has a sort (call this the output sort). 
 When interpreted in an algebra $A$, a term $t \in \monad X$ induces a term operation $t^A$ defined by
\begin{align*}
\eta \in A^X \quad \mapsto \quad  \text{multiplication in $A$ applied to $(\fmonad \eta)(t)$}.
\end{align*}
The input to the term operation is a sort-preserving valuation of the variables, while the output is an element of the algebra whose sort is the output  sort of the term. For example, the term
\mypic{110}
 induces a term operation which inputs a context sorted $\blue x$ and a forest-sorted $\red y$, and outputs a forest sorted element. Note that term operations are not morphisms in the category of two-sorted sets, if only because there is no clear way of assigning a sort to the input valuation.

We distinguish the following terms in forest algebra. 

\begin{definition}
    Define the \emph{basic forest algebra terms} to be the following terms:
\mypic{72}
(These happen to be all  terms with exactly two nodes, modulo renaming variables.)
The \emph{basic operations} in  a forest algebra $A$ are defined to be  the  term operations that are induced in $A$ by these terms. We also use the following notation for the basic operations, listed in the order from the picture above (the colour of an operator is the colour of the output sort): 
\begin{align*}
\myunderbrace{\blue x \red \cdot \red x}
{inputs a \\
\scriptsize context $\blue x$\\
\scriptsize and forest $\red x$\\
\scriptsize and outputs\\
\scriptsize a forest}
\qquad
\myunderbrace{\blue x \blue \cdot \blue y}
{inputs a \\
\scriptsize context $\blue x$\\
\scriptsize and context $\blue y$\\
\scriptsize and outputs\\
\scriptsize a context}
\qquad 
\myunderbrace{\red x \red + \red y}
{inputs a \\
\scriptsize forest $\red x$\\
\scriptsize and forest $\red y$\\
\scriptsize and outputs\\
\scriptsize a forest}
\qquad 
\myunderbrace{\blue x \blue + \red x}
{inputs a \\
\scriptsize forest $\red x$\\
\scriptsize and context $\blue x$\\
\scriptsize and outputs\\
\scriptsize a context}
\qquad 
\myunderbrace{\blue x \blue \oplus \red x}
{inputs a \\
\scriptsize forest $\red x$\\
\scriptsize and context $\blue x$\\
\scriptsize and outputs\\
\scriptsize a context}
\end{align*}

\end{definition}

\begin{theorem}\label{thm:basic-operations-of-forest-algebra}
    The multiplication operation in a forest algebra is uniquely determined by the basic operations.
\end{theorem}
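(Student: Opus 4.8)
The statement is the forest-algebra analogue of Lemma~\ref{lem:commuting-diagram-semigroup} (for semigroups, binary multiplication determines everything) and of Theorem~\ref{thm:shelah-operations} (for $\cc$-semigroups, the \lale operations suffice). Following the pattern established there, the key is to show that the five basic operations $\red\cdot$, $\blue\cdot$, $\red+$, $\blue+$, $\blue\oplus$ form a \emph{term basis} in the sense of Exercise~\ref{ex:term-basis}: for every forest algebra $A$ and every subset $\Gamma\subseteq A$ (sorted), the sub-algebra generated by $\Gamma$ equals the least sorted subset of $A$ containing $\Gamma$ and closed under the five basic operations. Once this is known, the theorem follows by the same diagonal argument as the proof of Theorem~\ref{thm:shelah-operations}: if $A_1,A_2$ are two forest algebras with the same underlying sorted set and the same basic operations, apply the term-basis property to the diagonal $\Sigma=\{(a,a):a\in A\}\subseteq A_1\times A_2$ in the product algebra; the sub-algebra generated by $\Sigma$ is on one hand the diagonal (because the basic operations agree, so they stay on the diagonal) and on the other hand all of $A_1\times A_2$ restricted appropriately — hence the two multiplication operations $\fmonad A_i\to A_i$ coincide.

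\textbf{Key steps.} First I would set up the diagonal/product reduction precisely, exactly mirroring the proof of Theorem~\ref{thm:shelah-operations}: reduce ``multiplication determined by basic operations'' to ``the sub-algebra of $A_1\times A_2$ generated by the diagonal is the diagonal'', and reduce \emph{that} to the term-basis claim. Second, and this is the substantive part, I would prove the term-basis claim. One inclusion is trivial: the basic operations are term operations, so the sub-algebra generated by $\Gamma$ is closed under them and contains $\Gamma$. For the reverse inclusion, I must show that every element of the form $\mu((\fmonad\eta)(t))$ for a term $t\in\fmonad X$ and a valuation $\eta:X\to\Gamma$ can be obtained from $\Gamma$ by iterated application of the five basic operations. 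This is done by induction on the forest/context $t$ (on the number of nodes, say): a one-node forest or context is a unit, so $\mu((\fmonad\eta)(t))=\eta(x)\in\Gamma$; for the inductive step, decompose $t$ into smaller pieces glued by one of the five basic term-shapes — a forest with $\ge 2$ nodes is either a sum $t_1\red+ t_2$ of two smaller forests (split the roots into two nonempty groups), or a single-root forest which is a context applied to a forest $\blue{c}\red\cdot\red{f}$ where $\blue c$ is the one-node context at the root; a context is handled dually using $\blue\cdot$, $\blue+$, $\blue\oplus$ to peel off one node and expose a strictly smaller context. By the associativity axioms of an Eilenberg--Moore algebra (free multiplication followed by $\mu$ equals $\mu$ on the flattened object), $\mu$ distributes over these decompositions, i.e.\ $\mu$ of the glued term equals the corresponding basic operation applied to the $\mu$'s of the pieces, and the induction hypothesis finishes the step.

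\textbf{Main obstacle.} The delicate point is the case analysis in the inductive decomposition of a forest/context into its five basic shapes, together with checking that the five shapes listed in the definition of basic terms are genuinely enough to cover every way a forest or context with $\ge2$ nodes can be ``split at the top''. In particular one must be careful with the two context-building operations $\blue+$ and $\blue\oplus$ (adding a forest as a sibling on the port side versus the non-port side) and with the fact that $\fmonad$ forests are nonempty and unordered, so ``split the roots into two nonempty groups'' is available but must be invoked correctly. I also need to verify that passing to the product algebra $A_1\times A_2$ is legitimate — that the coordinatewise structure is again a forest algebra — but this is routine since products of Eilenberg--Moore algebras are computed coordinatewise, and the basic operations, being term operations, act coordinatewise. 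Everything else (functoriality, naturality, the associativity diagram chases) is bookkeeping of the same flavour as in the earlier proofs, so I would not grind through it in detail.
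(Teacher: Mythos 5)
Your proof is correct, but you have imported more machinery than the situation requires, and the extra machinery is precisely the part that is unnecessary here. The paper's own proof is a single sentence: ``Every forest or context can be constructed from the units by applying the basic operations.'' That is, the decomposition claim you call the ``substantive part'' of your proof — the induction showing that any $t\in\fmonad A$ with at least two nodes splits at the top into one of the five basic shapes applied to strictly smaller pieces — \emph{is} the whole proof. Once you know that $t$ can be built up from the units $\unit A(a)$ by iterating the basic term operations in the free algebra $\fmonad A$, the Eilenberg--Moore axioms (unit followed by $\mu$ is the identity; $\mu$ commutes with term operations, Lemma~\ref{lem:terms-commute-with-homomorphisms}) immediately force $\mu(t)$ to equal the corresponding composition of basic operations applied in $A$ to the elements $a\in A$. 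No detour through subalgebras, term bases, or the diagonal of a product is needed.

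The diagonal argument you borrow from Theorem~\ref{thm:shelah-operations} is there for a reason that does not apply here: for $\cc$-semigroups the \lale operations generate only the \emph{regular} $\cc$-words from units, not all of $S^\cc$, so one cannot argue ``every element of $S^\cc$ is built from units by \lale operations'' and instead must pass through Lemma~\ref{lem:shelah-subalgebra} (which concerns subalgebras) and then the diagonal of $S_1\times S_2$. The forest monad $\fmonad$ is finitary, and every finite forest or context genuinely is assembled from one-node pieces by the five operations, so the direct argument goes through. Your route is sound but proves a stronger-looking intermediate statement (the term-basis property for all subsets $\Gamma$) and then uses it in a roundabout way; the paper just uses the special case $\Gamma=A$ directly. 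If you strip away the diagonal wrapper and keep only the induction on the number of nodes of $t$, you have exactly the paper's argument.
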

\begin{proof}
    Every forest or context can be constructed from the units by applying the basic  operations. 
\end{proof}

The forest algebra in the above theorem  does not  need to be finite. If it is  finite, then it can be finitely represented by giving the multiplication tables for the basic operations. Using this representation, we can talk about algorithms that process finite forest algebras.

\begin{figure}
    \centering
    \begin{align*}
        \begin{array}{rclrl}
    \red x \red + \red(\red y \red + \red z\red) &\red =& \red(\red x \red + \red y\red ) \red + \red z
    && \text{(F1) \scriptsize forests with $\red +$ are a semigroup}
    \\
    \red x \red + \red y &\red =& \red y \red + \red x
    && \text{(F2) \scriptsize  the forest semigroup is commutative}\\
    \blue x  \blue \cdot \blue(\blue y \blue z\blue) &\blue =& \blue(\blue x \blue y\blue) \blue \cdot \blue z 
    && \text{(F3) \scriptsize  contexts with $\blue \cdot$ are a semigroup}\\
    \blue x \red \cdot \red(\blue y \red \cdot \red z\red ) &\red =& \blue(\blue x \blue y\blue) \red \cdot \red z
    && \text{(F4) \scriptsize $\red \cdot$ is an  action of contexts on forests}\\
    \blue x \blue+ \red(\red y \red + \red z\red) &\blue =& \blue(\blue x \blue + \red y\blue) \blue + \red z
    && \text{(F5) \scriptsize $\blue +$ is an  action of forests on contexts}\\
    \blue x \blue \oplus \red(\red y \red + \red z\red) &\blue =& \blue(\blue x \blue \oplus \red y\blue) \blue \oplus \red z
    && \text{(F6) \scriptsize $\blue \oplus$ is an  action of forests on contexts}\\
    \red(\blue x \red \cdot \red y\red) \red + \red z &\red=& \blue(\blue x  \blue + \red z\blue) \red \cdot \red y  
    && \text{(F7) \scriptsize  compatibility of the  actions}\\
     \blue(\blue x \blue \oplus \red y\blue) \red \cdot \red z &\red=& \blue x \red \cdot \red{ (y\red +z)}
    && \text{(F8) \scriptsize  compatibility of the  actions}\\
    \blue(\blue x \blue \oplus \red y\blue) \blue \cdot \blue z &\blue=& \blue x \blue \cdot \blue{ (z\blue + \red y)}
    && \text{(F9) \scriptsize  compatibility of the  actions}
        \end{array}
    \end{align*}
    \caption{Axioms of forest algebra. The colour of the brackets indicates the sort of the bracket, and the colour of the equality sign indicates the sort of the compared elements. }
    \label{fig:forest-axioms}
\end{figure}

We can also give simple list of axioms forest algebra, see Figure~\ref{fig:forest-axioms}. These axioms are sound (they are satisfied by the basic operations in every forest algebra) and complete (if one gives five operations on a two sorted set $A$ that satisfy the axioms, then these operations can be extended to a  forest algebra multiplication $\mu : \fmonad A \to A$). Using this  axiomatisation, we can effectively check if a finite representation  of a forest algebra is correct, i.e.~it comes from some forest algebra.

\exercisehead
\mikexercise{Show that for every $t \in \fmonad \Sigma$ there is a decomposition 
\begin{align*}
t = f(t_1,\ldots,t_n) 
\end{align*}
such that  $f$ is a term of size at most 4 (and therefore the number of arguments $n$ is at most 4), and all arguments $t_1,\ldots,t_n$ have at most half the size (number of nodes) of $t$. 
}{
    Define a \emph{factor} in $t$ to be a set of nodes that is obtained by choosing some set of siblings, and taking all of their descendants. 
    \begin{claim}
        There must be a  factor $X$ which contains between one third and two thirds of all nodes.
    \end{claim}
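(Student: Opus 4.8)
The plan is to establish the claim about the existence of a balanced factor by a greedy descent argument. First I would fix the forest or context $t$ with $N$ nodes, and argue as follows: if some node $v$ of $t$ has the property that the factor consisting of $v$ together with all its descendants has size at least $N/3$, then consider the smallest such $v$ in the descendant order (i.e., descend as deep as possible while keeping the subtree size $\ge N/3$). If the subtree at $v$ has size between $N/3$ and $2N/3$ we are done immediately with $X$ being that subtree. Otherwise the subtree at $v$ has size $> 2N/3$, and then among the children of $v$ we collect a maximal set of siblings whose combined subtrees have total size $\le 2N/3$; because each individual child subtree has size $< N/3$ (by minimality of $v$), adding children one at a time never jumps past $2N/3$, and since removing all children from $v$'s subtree leaves at most one node, the collected set has size $> 2N/3 - N/3 = N/3$. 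Hence the union of these sibling subtrees is a factor $X$ with $N/3 \le |X| \le 2N/3$.

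The remaining case is that no node has a subtree of size $\ge N/3$; but then, since the roots of $t$ are all siblings and their subtrees partition all $N$ nodes, I would greedily collect roots until the total exceeds $N/3$, which again (each root subtree being $< N/3$) gives a factor of size between $N/3$ and $2N/3$. This disposes of the claim in all cases.

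Given the balanced factor $X$, I would then produce the decomposition $t = f(t_1,\dots,t_n)$. Write $t_1$ for the infix (sub-forest or sub-context) induced by $X$; the complement of $X$ in $t$ is a context $c$ with its port at the site where $X$ was removed (if $X$ was a set of sibling subtrees, the port sits at their common parent, or, if $X$ consisted of root subtrees, we are in the forest case and $c$ is a forest-with-a-hole encoded via the $\blue +$ operation). Now $|t_1| = |X| \le 2N/3 \le N/2 + $ something — here I need to be a little careful: the bound requested is ``at most half the size,'' but $2N/3 > N/2$. The fix is to iterate the splitting argument once more inside whichever of the two pieces is still too large, or, more cleanly, to start from a factor of size between $N/4$ and $3N/4$ (obtained by the same greedy argument with different thresholds, which always works since the smallest ``large'' subtree has size $< N/4 \cdot$ impossible — actually the greedy argument only guarantees the gap $[\text{threshold}, \text{threshold} + \max\text{-child-size})$, so to force both pieces $\le N/2$ I would split once to get a factor $X$ of size in $[N/3, 2N/3]$, then if the complement context $c$ has more than $N/2$ nodes, split $c$ again by the same lemma, obtaining $c = g(c', c_2)$ with both halves of size $\le \tfrac23|c| \le \tfrac23 N$; composing, $t = g(c' \cdot t_1, c_2)$ — but this still may not halve).

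The honest resolution, and the step I expect to be the main obstacle, is the bookkeeping to guarantee \emph{all} arguments have size $\le N/2$ while keeping the combining term $f$ of size $\le 4$. The clean way is: apply the balanced-factor lemma to get $X$ with $N/3 \le |X| \le 2N/3$; this splits $t$ into the infix $t_1$ (on $X$) and a context $c$ (the rest), with $|t_1| \le 2N/3$ and $|c| = N - |X| \le 2N/3$. If $|t_1| > N/2$, reapply the lemma to $t_1$: the factor lemma gives $t_1 = h(a, b)$ with a context $h$ of size $\le 4$ and $|a|, |b| \le 2|t_1|/3 \le 4N/9 < N/2$; substituting, $t = (c \cdot h)(a, b)$ where $c \cdot h$ is still expressible by a term of size $\le 4$ \emph{only if} $c$ itself is a single basic context — which it need not be. So instead one keeps $c$ as a separate (possibly large) argument only when $|c| \le N/2$, and otherwise splits $c$ too, ending with at most four small pieces glued by a constant-size term built from the basic operations of Figure~\ref{fig:forest-axioms}. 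Carrying out this case analysis carefully — there are essentially three shapes depending on whether $X$ is a root-factor or an internal factor, and whether $t_1$ and $c$ need re-splitting — and checking that the glue term never exceeds four nodes, is the technical heart; everything else is the greedy counting above, which is routine.

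\medskip\noindent\textit{(Remark: the statement as displayed in the excerpt is cut off at the claim, so the above both proves the claim and indicates how it yields the size-$\le 4$, half-size decomposition asserted in the exercise.)}
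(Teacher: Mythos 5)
Your proof of the claim is correct and uses essentially the same greedy argument as the paper: the paper works down from an inclusion-minimal factor containing at least half the nodes (whose sibling subtrees, by minimality, each contain at most half), while you work up from a depth-maximal node whose subtree has at least a third, but both conclude by either returning a single large-enough subtree or greedily accumulating small sibling subtrees until they cross the one-third threshold without overshooting two thirds. The lengthy second half of your proposal is about completing the exercise from the claim (reducing the $2/3$ bound to $1/2$ and controlling the size of the glue term), which is beyond the displayed statement and which the paper's own exercise solution also leaves unfinished.
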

    \begin{proof} Take a forest factor  $X$ which contains at least half  of all nodes, and which is inclusion-wise minimal for this property.  Let $x_1,\ldots,x_n$ be the minimal nodes in $X$, these are all siblings. Define $X_i$ to be $x_i$ and its descendants.  By minimality, each of the sets $X_1,\ldots,X_n$ has at most half of the nodes. If some $X_i$ contains at least one third of the nodes, then we are done. Otherwise,  choose the smallest $i \in \set{1,\ldots,n}$ such that
        \begin{align*}
            X_1 \cup \cdots \cup X_i
        \end{align*} 
        contains at half of all nodes; because each of the  sets $X_1,\ldots,X_i$ has at most one third of all nodes it follows that the above union has at most two thirds of all nodes.
    \end{proof}
    Let $X$ be the set from the above claim, and define $t_X \in \fmonad \Sigma$ to be the result of restricting $t$ to the nodes of $X$. If $X$ contains the port then $t_X$ is a context, otherwise $t_X$ is a forest. 
}

\mikexercise{Fix some language $L \subseteq \fmonad \Sigma$ that is recognised by a finite forest algebra. Suppose that we begin with some forest $t \in \fmonad \Sigma$ and then we receive a stream of updates and queries. Each update changes a label of some node (the set of nodes and the parent function are  not changed by updates). Each query asks if the current forest belongs to $L$. Show that one can compute in linear time a data structure (at the beginning, when the first forest $t$ is given), such that updates can be processed in logarithmic time and queries can be processed in constant time. }{}

\mikexercise{Prove completeness for the axioms (F1)--(F6).}{}

\subsection{Syntactic algebras}
\label{sec:syntactic-algebras-forest-algebra}
We now discuss syntactic algebras must necessarily exist in the forest monad. This is shown by a minor adaptation of the results from Section~\ref{sec:syntactic-algebras}.
As mentioned in Section~\ref{sec:syntactic-algebras}, syntactic homomorphisms also make sense in other categories, such as the category of two-sorted sets used by the forest monad.

In the forest monad, an algebra colouring is a sort-preserving function from the underlying two-sorted set in a forest algebra to some two-sorted set of colours. A subset $L \subseteq A$  can be seen as  special case of algebra colouring which uses four colours
\begin{align*}
\myunderbrace{\red{\set{\text{yes, no}}}}{forest sort} \cup 
\myunderbrace{\blue{\set{\text{yes, no}}}}{context sort}.
\end{align*} 
 For the category of two-sorted sets, surjective functions are those which are surjective on both sorts.

The results on existence of syntactic homomorphisms from Section~\ref{sec:syntactic-algebras} can be easily adapted to the forest monad -- more generally, to every monad in every category of sorted sets --   as explained in the following theorem and its proof.
\begin{theorem}\label{thm:sorted-syntactic} Let $\monad$ be a monad in a category of sorted sets (there could be more than two sorts, even infinitely many).
    \begin{enumerate}
        \item If $\monad$ is finitary, then every algebra colouring has a syntactic homomorphism;
        \item If the monad is not necessarily finitary, but there are finitely many sort names, then every algebra colouring recognised by a finite algebra (finite on every sort) has a syntactic homomorphism.
    \end{enumerate}
     
\end{theorem}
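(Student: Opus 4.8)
The statement is a direct transcription of Theorems~\ref{thm:plotkin} and~\ref{thm:syntactic-homomorphism-monad} from the category of sets to a category $\setcat^S$ of $S$-sorted sets, and the plan is simply to rerun those proofs, checking that each ingredient survives the passage to sorted sets. The key observation is that a category of sorted sets behaves, for all the purposes of Section~\ref{sec:syntactic-algebras}, exactly like the category of sets: surjections (epimorphisms) are the sort-preserving functions that are surjective on each sort, they split (every surjection has a one-sided inverse, chosen sort by sort using the axiom of choice), and hence $\monad$ preserves surjectivity by the same argument as in Lemma~\ref{lem:monad-compositional} (apply $\monad$ to a one-sided inverse). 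Likewise, the lattice of congruences on a $\monad$-algebra $A$ is the lattice of sort-respecting equivalence relations commuting with all term operations, intersections are computed sortwise, and the join is still the transitive closure of the union. Thus Lemmas~\ref{lem:monad-compositional}, \ref{lem:terms-commute-with-homomorphisms}, \ref{lem:congruence-definition} and Claim~\ref{claim:function-that-factors-is-homomorphism} all go through verbatim.

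\textbf{Step 1 (finitary case).} I would reprove Lemma~\ref{lem:join-congruences} and then Theorem~\ref{thm:syntactic-homomorphism-monad} in the sorted setting; no new idea is needed, since Claim~\ref{claim:join-commutes-finitely-many-variables} and Claim~\ref{claim:special-case-of-finite-image} only use that term operations with finitely many variables are built from the basic operations step by step, and that valuations with finite image factor through a two-element-fibre set $Y$ — both statements make sense sortwise. When $\monad$ is finitary, every element of $\monad X$ is the image of some $t$ under $\monad f$ with $f$ of finite image, so every term operation is, up to renaming, one with finitely many variables; hence every join of congruences (even an infinite one) commutes with all term operations and is a congruence. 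Taking the join of all congruences recognising a given algebra colouring $L : A \to U$ yields the coarsest such congruence, and the quotient map is the syntactic homomorphism. This proves part~1, \emph{without} any assumption on the number of sorts.

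\textbf{Step 2 (finitely many sorts, recognisable colouring).} Here the monad need not be finitary, so I cannot take the join of \emph{all} recognising congruences. Instead I copy the proof of Theorem~\ref{thm:syntactic-homomorphism-monad}: since $L$ is recognised by a finite algebra, there is a recognising congruence $\approx$ of finite index — here "finite index" means finitely many classes, equivalently the quotient is finite on every one of the finitely many sorts. Let $\Cc_\approx$ be the recognising congruences that refine $\approx$... no, that \emph{contain} $\approx$; this is a finite set, because such a congruence is obtained by merging some of the finitely many $\approx$-classes (finiteness of the number of sorts is used precisely to know the quotient is finite overall, so that there are finitely many ways to merge classes). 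By the sorted version of Lemma~\ref{lem:join-congruences} — whose hypothesis "$\sim_1$ has finite index" is satisfied by $\approx$ — $\Cc_\approx$ is closed under finite joins, so it has a greatest element $\sim$, which by the same lemma is also the greatest element of the whole set of recognising congruences. The quotient $h : A \to A_{/\sim}$ is then the syntactic homomorphism: any surjective recognising $g : A \to B$ factors through $h$ as a \emph{function} by maximality of $\sim$, and that function is a homomorphism by Claim~\ref{claim:function-that-factors-is-homomorphism}.

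\textbf{Main obstacle.} There is no deep obstacle; the work is entirely in verifying that "finite" and "surjective" behave well in $\setcat^S$ and that the two finiteness uses in Step~2 (finitely many ways to merge classes; $\Cc_\approx$ finite) genuinely require finitely many sorts — which is exactly why part~2 is stated with that hypothesis, and why, as the companion exercises note, it fails with infinitely many sorts. The cleanest writeup is to state a single lemma "every result of Section~\ref{sec:syntactic-algebras} holds in $\setcat^S$ provided (a) $\monad$ is finitary, or (b) $S$ is finite and the colouring is recognisable", and then point to the two places in the original proofs where finiteness of $S$ enters.
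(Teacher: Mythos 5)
Your proposal is correct and follows essentially the same route as the paper, which itself simply says that the proofs of Theorems~\ref{thm:plotkin} (for item~1) and~\ref{thm:syntactic-homomorphism-monad} (for item~2) carry over verbatim, with the single new ingredient being that a congruence that has finitely many classes on each of finitely many sorts has finitely many classes overall. You correctly locate the finiteness-of-sorts assumption at the step where that observation is needed (so that $\approx$ has finite index in the sense required by Lemma~\ref{lem:join-congruences} and so that $\Cc_\approx$ is a finite set), which is exactly the point the paper's one-paragraph proof flags.
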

\begin{proof}
    For item (1) we use the same proof as in the left-to-right implication for Theorem~\ref{thm:plotkin}, while for item (2) we use the same proof as in Theorem~\ref{thm:syntactic-homomorphism-monad}. The only difference is that the variables in term operations have sorts. In item (2), the assumption on finitely many sort names is used\footnote{This assumption is indeed necessary, which can  be proved using ideas from \incite{DBLP:journals/lmcs/BojanczykK19}
    } in the extension of Lemma~\ref{lem:join-congruences} to say that there are finitely many equivalence classes of altogether of a congruence that has finitely many equivalence classes on each sort. 
    Apart from this difference, the rest of the proof is the same.
\end{proof}

In particular, since the forest monad is finitary, it follows that every language $L \subseteq \fmonad \Sigma$ in the forest monad has a syntactic algebra.  As discussed in the exercises, the syntactic algebra for a language recognised by a finite forest algebra can be computed. 

Also, the Eilenberg Variety Theorem holds for the forest monad. In the statement, the unary polynomial operations are the sorted version that is described in the proof of Theorem~\ref{thm:sorted-syntactic}, apart from this change the statement of the theorem and its proof are the same as in Section~\ref{sec:eilenberg}. More generally, the Eilenberg Variety Theorem works for every monad in every category of sorted sets, assuming that there are finitely many sorts. When generalising the proof  of the Eilenberg Variety Theorem to multi-sorted algebras, we use the assumption on finitely many sorts in step (4) of the proof, to show that there are finitely many possible unary polynomial operations in a finite algebra.

\exercisehead

\mikexercise{
Show that the syntactic algebra can be computed for a language $L \subseteq \fmonad \Sigma$ that is recognised by a finite forest algebra.  The input to the algorithm is a homomorphism
\begin{align*}
 h : \fmonad \Sigma \to A
\end{align*}
into a finite forest algebra, together with an accepting set $F \subseteq A$. The forest algebra is represented using its basic operations, as in Theorem~\ref{thm:basic-operations-of-forest-algebra}, and the homomorphism is represented by its values on the units. 
}{
    
       We assume that the homomorphism is surjective, since the image under $h$ can be computed as follows: take the images of the units, and close this subset under applying the basic operations. 
        \begin{claim}
            An equivalence relation in the underlying set of a forest algebra is a congruence if and only if it is compatible with the basic operations:
            \begin{align*}
            \red {a \sim a'} \quad \land \quad \red{b \sim b'} \qquad \text{implies} \qquad   \red{a + b \sim a' + b} 
            \end{align*}
        \end{claim}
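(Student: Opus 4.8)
The claim to be proved is that an equivalence relation on the underlying two-sorted set of a finite forest algebra is a congruence if and only if it is compatible with the five basic operations. This is the sorted analogue, for the specific case of forest algebra, of the characterisations of congruences in Lemma~\ref{lem:congruence-definition} and Lemma~\ref{lem:congruences-in-terms-of-polynomials}, and the plan is to reduce to exactly those abstract results.

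The plan is as follows. First I would observe that the ``only if'' direction is trivial: a congruence commutes with every term operation by Lemma~\ref{lem:congruence-definition}(\ref{congruence:commutes-with-terms}), and each of the five basic operations is a term operation (induced by the basic forest algebra terms of Theorem~\ref{thm:basic-operations-of-forest-algebra}), so compatibility with them is immediate. For the ``if'' direction, the key point is that the basic forest algebra terms form a term basis in the sense of Exercise~\ref{ex:term-basis}: by Theorem~\ref{thm:basic-operations-of-forest-algebra} every forest and context over a set of generators is built from the units by applying the basic operations, so the subalgebra generated by any subset $\Gamma$ of a forest algebra is the closure of $\Gamma$ under the basic operations. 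Hence, by Exercise~\ref{ex:term-basis-congruence} (the sorted version of which holds because the forest monad lives in the category of two-sorted sets, a category of sorted sets with finitely many sorts, so all the arguments of Section~\ref{sec:syntactic-algebras} and Theorem~\ref{thm:sorted-syntactic} apply), an equivalence relation on a finite forest algebra is a congruence precisely when it commutes with all term operations coming from this basis — which is exactly compatibility with the five basic operations.

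Concretely, I would mirror the argument of Exercise~\ref{ex:term-basis-congruence}. Given a finite forest algebra $A$ and an equivalence relation $\sim$ on its underlying two-sorted set that is compatible with the basic operations, form the subalgebra $B \subseteq A \times A$ generated by $\sim$ viewed as a set of (sort-matching) pairs. Since the basic operations form a term basis, $B$ is the smallest subset of $A\times A$ containing $\sim$ and closed under the basic operations; compatibility of $\sim$ with those operations gives $B \subseteq {\sim}$. On the other hand, projecting $A \times A$ onto each coordinate is a homomorphism, and term operations commute with homomorphisms by Lemma~\ref{lem:terms-commute-with-homomorphisms}, so $B$ is exactly the set of pairs $(t^A(\eta_1), t^A(\eta_2))$ with $\eta_1 \sim \eta_2$ ranging over sort-preserving valuations. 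Thus $B \subseteq {\sim}$ says precisely that $\sim$ commutes with every term operation, i.e. $\sim$ is a congruence by Lemma~\ref{lem:congruence-definition}(\ref{congruence:commutes-with-terms}).

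The main obstacle — really the only thing requiring care — is bookkeeping about sorts: the valuations are sort-preserving, the variables in the basic terms carry forest or context sorts, and one must check that the generated subalgebra of $A\times A$ is taken in the category of two-sorted sets. None of this is deep, but it is where a careless proof would go wrong; the abstract machinery of Theorem~\ref{thm:sorted-syntactic} and Exercise~\ref{ex:term-basis-congruence} already absorbs it, so after citing those the argument is short. Once the claim is established, it immediately yields an algorithm: enumerate all equivalence relations on the finite set $A$ (respecting sorts), keep those compatible with the five basic operation tables and with the accepting set $F$, and take the coarsest one — by Exercise~\ref{ex:contextual-equivalence} and the discussion after Theorem~\ref{thm:sorted-syntactic} this is the syntactic congruence, and quotienting by it gives the syntactic algebra.
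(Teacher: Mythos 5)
The paper states this claim in the exercise solution without proving it, so there is no author proof to compare against. Your argument is correct and is the one the surrounding machinery is designed for: the ``only if'' direction is immediate from Lemma~\ref{lem:congruence-definition}, and the ``if'' direction is a faithful transcription of the solution to Exercise~\ref{ex:term-basis-congruence} into the two-sorted forest setting, after observing via Theorem~\ref{thm:basic-operations-of-forest-algebra} that the five basic terms form a term basis. You correctly single out the sorted-versus-unsorted bookkeeping as the one thing requiring care: Exercises~\ref{ex:term-basis} and~\ref{ex:term-basis-congruence} are stated for monads in the category of sets, but their proofs carry over to sorted sets with finitely many sorts without change, by the same observation that underlies Theorem~\ref{thm:sorted-syntactic}. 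One minor point: the claim as printed displays only the $\red{+}$ compatibility condition and has a typo there ($\red{a+b\sim a'+b'}$ is surely intended); your reading of ``compatible with the basic operations'' as commuting with all five operations is the intended one, and the claim is implicitly about finite forest algebras, matching the hypothesis of Exercise~\ref{ex:term-basis-congruence}, which your proof correctly assumes.
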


}

\subsection{Infinite trees}
\label{sec:infinite-forests}
Define a monad $\fomega$ in the same way as the monad $\fmonad$, except that we allow the forests and contexts to be countably infinite. A node might have infinitely many children, and there might be infinite branches.  This monad is no longer finitary. 

We do not discuss this monad in more detail, apart from the following example, which shows that it is not clear what a ``finite algebra'' should be for this monad.

\begin{myexample}
    Consider  the two-sorted alphabet $\Sigma = \set{\blue a, \blue b}$. Even though there are no forest-sorted letters, it is still possible to construct an infinite forest over this alphabet, because there is no need for leaves.   We say that a language of $\omega$-words is \emph{prefix-independent} if it is stable under removing or adding a single letter as a prefix. Consider some  prefix-independent language $L$ of $\omega$-words over the alphabet $\Sigma$, not necessarily regular. For example 
    \begin{align*}
    L = \set{ \blue b^{n_1} \blue a \blue b^{n_2} \blue a \cdots : \text{the sequence $n_1,n_2,\ldots$ contains infinitely many primes}}.
    \end{align*}
    Define a \emph{branch} in a forest to be a set of nodes that is linearly ordered by the descendant relation, and which is maximal inclusion-wise for this property. An $L$-branch is a branch where the sequence of labels, starting from the root, belongs to $L$.

    Define $L' \subseteq \fomega \Sigma$ to be the set of infinite forests where  every node has at least two children and every node belongs to some $L$-branch. 
    We claim that $L'$ is recognised by a finite algebra, with at most 36 elements, regardless of the choice of $L$ (as long as it is prefix independent). Since there are uncountably many possible choices for $L$, it follows that there is no finite way of representing algebras in this monad that have at most 36 elements. In particular, ``finite on every sort'' is not a reasonable choice of ``finite algebra'' for this monad.

      Define a function $h$ from $\fomega \Sigma$ to a finite two-sorted set as follows. For forests, the function $h$ gives the  answers to the following questions:
\begin{itemize}
    \item[\red {1.}] is the forest in $L'$?
    \item[\red {2.}] are there are at least two roots?
\end{itemize}
For contexts, the function $h$ gives the  answers to the following questions:
    \begin{itemize}
        \item[\blue{1.}] is it possible to fill the port with some forest so that the result is in $L'$?
        \item[\blue{2.}] are there are at least two roots?
        \item[\blue{3.}] does the  port have a sibling?
        \item[\blue{4.}] is the context equal to the unit of $\blue a$? 
        \item[\blue{5.}] is  the context equal to the unit of  $\blue b$?
    \end{itemize}
    The red questions have at most 4 possible answers, and the blue questions have at most 32 possible answers, hence the number 36. In fact, this number can easily be reduced; for example in case of a ``no'' answer to question {\blue 1}, there is no need to store the answers for the remaining questions. 
We leave it as an exercise for the reader to check that the  function $h$ is compositional. It follows that the image of the function $h$, call it $A$, is a finite algebra for the monad $\fomega$. 
\end{myexample}

\exercisehead

\mikexercise{\label{ex:thin-monad} Consider the monad $\fomega$. We say that a forest or context in this monad is \emph{thin} if it has countably many branches. Define $\fthin \Sigma \subseteq \fomega \Sigma$ to be thin forests or contexts. Show that this is a monad.}{

}

\mikexercise{\label{ex:thin-ordinals} 
Show that a countable forest or context is thin, in the sense of Exercise~\ref{ex:thin-monad}, if and only if one can assign countable ordinal numbers to its children so that if a node is labelled by  ordinal number $\alpha$, then all of its children are labelled by ordinal numbers $\le \alpha$, and at most one child is labelled by $\alpha$. 
}{}{}

\mikexercise{Consider the monad $\fthin$ from Exercise~\ref{ex:thin-monad}. Show that a finite algebra over this monad is determined uniquely by its forest algebra operations (as in Theorem~\ref{thm:basic-operations-of-forest-algebra}) plus the following two term operations:
\mypic{79}

}{}

\section{Logics for forest algebra}
In Chapter~\ref{chap:logics}, we presented many examples of logics on finite words that could be characterised using structural properties of recognising monoids. In this section, we present some results of this type for forest algebra. Unfortunately, there are fewer interesting examples in the case of forest algebra, since the algebraic theory of forest languages is still not properly understood. A notable gap in our logic is first-order logic on trees, which is not known to have an algebraic characterisation, as will be discussed in Section~\ref{sec:first-order-logic-forests}.

\subsection{Monadic second-order logic} 
We begin with monadic second-order logic.  The idea is as usual: to each forest or context we associate a model, and then we  use monadic second-order  logic to describe properties of that model. There is one twist: because siblings in a forest or context are not ordered, we will need to extend \mso with modulo counting in order to make it expressively complete for all recognisable languages.

\begin{definition}
    Define the \emph{ordered model} of a forest or context as follows: the universe is the nodes, and it is equipped with the following relations:
    \begin{align*}
    \myunderbrace{x \le y}{ancestor} \qquad 
    \myunderbrace{a(x)}{$x$ has label $a \in \Sigma$} \qquad 
    \myunderbrace{port(x)}{$x$ is the port}
    \end{align*}
    The arguments to the relations are $x,y$, while the letter $a$ is a parameter. Each choice of $a \in \Sigma$ gives a different relation. 
\end{definition}

By using different logics on the ordered model, we get different classes of languages. 
We begin with monadic second-order logic.
A language $L \subseteq \fmonad \Sigma$ is called \emph{\mso definable} if it can be defined by a formula of monadic second-order logic using the ordered model.  The logic   \mso is not enough to define all recognisable languages, because it cannot count the number of nodes modulo two (or three, etc.). The problem is that there is no order on the siblings, so if we get a forest
\begin{align*}
\red {a + \cdots + a}
\end{align*}
that consists of $n$ nodes that are both roots and leaves, then we cannot use the usual trick of  selecting even-numbered nodes  to count parity. (A more formal argument will be given below.) For these reasons, we  extend \mso with modulo counting. In this extension -- called \emph{counting \mso} -- for every set variable $X$ and numbers $n \in \set{2,3,\ldots}$ and $\ell \in \set{0,\ldots,n-1}$  we can write a formula
\begin{align*}
|X| \equiv \ell \mod n
\end{align*}
which says that the size of the set $X$ is congruent to $\ell$ modulo $n$. This extension is expressively complete for the recognisable languages, as shown in the following theorem. 

\begin{theorem}\label{thm:forest-mso-with-counting}
    A language $L \subseteq \fmonad \Sigma$ is recognised by a finite forest algebra if and only if it is definable in counting \mso.  
\end{theorem}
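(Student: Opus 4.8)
The plan is to follow the same two-way strategy used for the Trakhtenbrot-B\"uchi-Elgot Theorem (Theorem in Section~\ref{sec:mso-finite-words}) and its $\cc$-word analogue (Theorems~\ref{thm:mso-to-cc} and~\ref{thm:mso-complete-cc-words}), adapted to the forest monad and to the counting extension. For the ``only if'' direction --- every language recognised by a finite forest algebra is definable in counting \mso --- the key point is that a homomorphism $h : \fmonad \Sigma \to A$ into a finite forest algebra $A$ can be simulated by a counting \mso formula. Concretely, for each element $a \in A$ (on either sort) I would write a formula $\varphi_a$ that holds in a forest or context $t$ exactly when $h(t) = a$; the language is then a finite union $\bigcup_{a \in F} \varphi_a$. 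The formula $\varphi_a$ guesses, as free set variables, a labelling of nodes of $t$ by elements of $A$ --- the intended value being, for each node $x$, the element obtained by multiplying out the context (or forest) hanging below $x$ --- and then checks local consistency. Consistency at a non-leaf node $x$ with children $x_1,\ldots,x_k$ requires computing the forest sum $\red{b_1} \red + \cdots \red + \red{b_k}$ of the children's values, then applying the label of $x$ to that sum via the $\red\cdot$ action. Here is where counting becomes essential: because the forest $\red +$ operation is commutative (axiom (F2)) and the siblings carry no order, the sum $\red{b_1} \red + \cdots \red + \red{b_k}$ depends only on the multiset of children-values, i.e.~on the numbers $\#\{i : b_i = b\}$ for $b$ ranging over the forest sort of $A$; and since the commutative sub-semigroup generated inside the finite forest sort of $A$ is eventually periodic, each such count matters only up to a fixed threshold and a fixed modulus. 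Thus counting \mso can express ``the children of $x$ carry value-multiset $M$'' using the predicates $|X| \equiv \ell \bmod n$ together with ordinary first-order thresholds, and hence can verify the consistency of the guessed labelling. Using the basic operations (Theorem~\ref{thm:basic-operations-of-forest-algebra}) I only ever need to reason about the five basic terms, so the local checks are a finite case analysis.

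For the ``if'' direction --- every counting-\mso-definable language is recognised by a finite forest algebra --- I would reuse verbatim the inductive, ``very generic'' proof of the hard part of the B\"uchi-Elgot-Trakhtenbrot Theorem: construct a recognising forest algebra by induction on the structure of the formula, carrying along formulas with free set variables (recast as languages over the product alphabet $\Sigma \times \set{0,1}^n$ on both sorts), with atomic predicates handled by small explicit algebras, Boolean combinations by products and complemented accepting sets, and existential set quantification by the powerset construction for the forest monad (Exercise~\ref{ex:powerset-algebra}, after checking the powerset of a forest algebra is a forest algebra). The only new ingredient is the counting predicate $|X| \equiv \ell \bmod n$: its language over $\Sigma \times \set 0,1$ is recognised by the homomorphism of Example~\ref{ex:forest-counting-modulo} (count nodes whose bit is $1$, modulo $n$), so it is another atomic case. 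First-order quantification reduces to set quantification as usual (a singleton set can be defined in \mso, or one can just keep element variables as primitives). Since the forest monad is finitary, all these constructions stay within finite algebras and the induction goes through.

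The main obstacle, and the step I would spend the most care on, is the ``only if'' direction's bookkeeping around the forest $\red +$ operation: one must make precise that the value of a (possibly large) multiset of children is a counting-\mso-definable function of thresholded-and-moduloed counts, and then that the resulting global consistency condition --- which must be simultaneously imposed at \emph{all} nodes --- is still a single counting-\mso formula rather than an unbounded conjunction. This is exactly the same ``guess a run, check it locally, quantify over the guess'' pattern as in the finite-word and $\cc$-word proofs, and the subtlety is localised to the commutativity of $\red +$; everything else (the action $\red\cdot$, composition $\blue\cdot$ of contexts along a branch, the port) is ordered and handled by ordinary \mso just as branches are handled in the word case. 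Once the ``guessed labelling'' formula is written, I would also note as a corollary --- matching the remarks after the word and $\cc$-word theorems --- that the construction is effective, so satisfiability of counting \mso over forests is decidable, and that plain \mso (without counting) is strictly weaker, witnessed by the parity-of-leaves language of Example~\ref{ex:forest-counting-modulo} with $n = 2$ over a one-letter forest alphabet, which is recognisable but whose \mso-inexpressibility follows from an Ehrenfeucht-Fra\"iss\'e argument on unordered forests $\red{a + \cdots + a}$.
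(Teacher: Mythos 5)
Your proposal is correct and follows essentially the same strategy as the paper's proof: the ``if'' direction by induction on formula structure using products and the powerset construction (with the modulo-counting predicate handled by the algebra of Example~\ref{ex:forest-counting-modulo}), and the ``only if'' direction by existentially guessing the labelling of each node with its type under the recognising homomorphism and checking local bottom-up consistency. Your observation that the unordered forest sum forces counting --- because a finite commutative semigroup's product depends only on thresholded-and-moduloed multiplicities of its arguments --- is exactly the content of the paper's Claim~\ref{claim:forest-local-computation}.
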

\begin{proof} Both implications in the theorem are proved in a similar way as for finite words, so wo only give a proof sketch. 
    \paragraph*{From counting MSO to a finite forest algebra.} Same proof as for finite words: we remove the first-order variables (by coding them as singleton sets), and then we show by induction that for every formula of \mso (possibly with free variables), its corresponding language is recognised by a finite forest algebra.  In the induction steps we use products and powersets, both of which are finiteness preserving constructions for forest algebras. 

    \paragraph*{From a finite forest algebra to counting MSO.}
     Suppose that $L \subseteq \fmonad \Sigma$ is recognised by a homomorphism
    \begin{align*}
    h : \fmonad \Sigma \to A
    \end{align*}
    into a finite forest algebra.     
    The idea is the same as for finite words: the defining formula  inductively computes the value under $h$ for every subtree in the input forest or context. The induction corresponds to a bottom-up pass through the input\footnote{This idea works for objects such as finite words or forest algebra, because they have a canonical way of parsing (left-to-right for words, or bottom-up for forest algebra), which can be defined in \mso. For $\cc$-words, we do not know  any simple  parsing method, which is the reason why  the implication from finite algebras to logic in Section~\ref{sec:mso-countable-words} was hard. A similar phenomenon will appear for graphs, which will be discussed in the next chapter, which also do not have any simple definable canonical way of parsing. }. 
    Define the \emph{subtree} of a node in a tree or context as explained in the following picture:
    \mypic{83}
    In a forest, all subtrees are trees, while in a context some subtrees are trees and others are contexts.
     Define the \emph{type} of a node  to be the image under $h$ of its subtree.   The following claim shows that the type of a node can be inferred from the types of its children using counting.
    
    \begin{claim}\label{claim:forest-local-computation} For every $t \in \fmonad \Sigma$ and every node $x$ in $t$, the type of $x$  depends only on the answers to the following questions:
        \begin{itemize}
            \item what is the  label of node $x$?
            \item are there exactly $n$ children of $x$ type $a$?
            \item does $n$ divide the number of children of $x$ with type $a$? 
        \end{itemize}
        where $a$ ranges over elements of the forest algebra $A$ and $n \in \set{0,\ldots,|A|}$.
    \end{claim}
    \begin{proof}
        Let $a_1,\ldots,a_m$ be the types of the subtrees of the children of $x$. At most one of these types is a context, because there is at most one port. We only consider the case where  all of the types are forests (and hence they will be written in red below); the case when one type is a context  is treated similarly. If $\blue a$ is the label of node $x$, then the type of $x$ is equal to 
        \begin{align*}
        \blue a \red{\cdot (a_1 + \cdots + a_m)}.
        \end{align*}
        The label $\blue a$ is known, while the red part is multiplication in the forest semigroup of $A$, which is a commutative semigroup. In a commutative semigroup, the result of multiplication depends only on the number of times each argument is used. Furthermore, since the forest semigroup has size at most  $|A|$, then the number of times an argument is used  needs to be remembered only up to threshold $|A|$ and modulo some number that is at most $|A|$, see Exercise~\ref{ex:commutative-regular-languages}.
    \end{proof}
    Consider some enumeration $A = \set{a_1,\ldots,a_n}$ of the elements in the algebra. Some of these elments have forest sort and some have context sort. Using the above claim, we can write a formula 
    \begin{align*}
    \varphi(X_1,\ldots,X_n)
    \end{align*}
    of counting \mso which holds if and only if for every $i \in \set{1,\ldots,n}$, the set $X_i$ is exactly the set of nodes with type $a_i$. The formula simply checks that the types for each node are consistent with the types of its children, as described in the  claim. Finally, the image $h(t)$ of a forest or context can be computed in counting \mso by guessing the sets $X_1,\ldots,X_n$ that satisfy the formula $\varphi$ above, and then inferring $h(t)$ from the types of the root nodes (with the same argument as in the above claim). Since the image $h(t)$ determines membership  of $t$ in the language, it follows that the language itself is definable in counting \mso.
\end{proof}

The construction of an algebra in the above theorem is effective: given a sentence of \mso, we can construct a recognising homomorphism 
\begin{align*}
h : \fmonad \Sigma \to A
\end{align*}
into a finite forest algebra (and compute an accepting set $F \subseteq A$). The finite forest algebra is represented by its basic operations, as discussed in Section~\ref{sec:basic-operations-forest-algebra}, and the homomorphism is represented by its images for the units.

 The exact role of counting is explained in the following theorem.

\begin{theorem}\label{thm:forest-mso-without-counting}
    A language $L \subseteq \fmonad \Sigma$ is definable in \mso (without counting) if and only if it is recognised by a forest algebra where the forest semigroup (forests equipped with $\red +$) is aperiodic. 
\end{theorem}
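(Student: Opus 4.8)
The plan is to adapt the Schützenberger–McNaughton–Papert–Kamp argument (Theorem on aperiodic monoids versus first-order logic) to the two-sorted forest setting, where the role of ``aperiodic monoid'' is played by a forest algebra whose forest semigroup $(\red A, \red +)$ is aperiodic. The two directions are quite different in character, so I would treat them separately.

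For the easier direction — from \mso definability to aperiodicity of the forest semigroup — I would argue by contrapositive, exactly as in the parity example for finite words. Suppose $L \subseteq \fmonad \Sigma$ is recognised by a forest algebra whose forest semigroup is \emph{not} aperiodic. By Exercise~\ref{ex:h-trivial}, a non-aperiodic finite semigroup contains a non-trivial group as a subsemigroup, so the forest semigroup of the syntactic algebra of $L$ (which exists by Theorem~\ref{thm:sorted-syntactic}, and is a quotient of a subalgebra of the given algebra — and forest-semigroup-aperiodicity is preserved under quotients and subalgebras on the forest sort) is also non-aperiodic. Then one can encode the parity (or modulo-$n$) language of Example~\ref{ex:forest-counting-modulo} as an inverse image of $L$ under a suitable homomorphism and unary polynomial: take a forest-sorted context pattern $\blue c$ whose $\red +$-image generates the group, and map the word $g_1 \cdots g_n$ over the group to the flat forest $\red{g_1 + \cdots + g_n}$ of group-valued leaves. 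Since \mso-definable languages of forests form a language variety (closed under Boolean combinations, inverse images of homomorphisms, and inverse images of unary polynomials — provable by an \ef strategy-copying argument as in Chapter~\ref{chap:logics}), it would follow that the parity language of flat forests is \mso definable, which it is not, because \mso cannot count the number of roots of a flat forest modulo two — this last fact can be proved by a standard \ef game on $\red{a^{2^k}}$ versus $\red{a^{2^k + 1}}$ (flat forests with no order on siblings), since Duplicator wins the $k$-round game.

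For the harder direction — from forest-semigroup-aperiodicity to \mso definability — I would follow the proof strategy of Theorem~\ref{thm:forest-mso-with-counting} (from a finite forest algebra to counting \mso), and observe that the \emph{only} place counting enters that proof is in Claim~\ref{claim:forest-local-computation}, where the type of a node is computed from the multiset of types of its children using the multiplication in the forest semigroup: one needs to know, for each forest element $a$, the number of children of type $a$ both up to a threshold and modulo some period. When the forest semigroup is aperiodic, by Exercise~\ref{ex:commutative-regular-languages} (applied to the commutative \emph{and} aperiodic forest semigroup) the multiplication of a multiset of semigroup elements depends only on the numbers of occurrences counted up to a threshold — no modular counting is needed, because in an aperiodic commutative semigroup the subsemigroup generated by a single element is eventually constant. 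Hence the formula $\varphi(X_1,\ldots,X_n)$ in the proof of Theorem~\ref{thm:forest-mso-with-counting}, which checks that each $X_i$ is precisely the set of nodes of type $a_i$, can be written in plain \mso, and the rest of that proof goes through verbatim. The one subtlety to handle carefully is that the forest semigroup of the \emph{recognising} algebra is aperiodic, not just that of the syntactic algebra; but this is exactly the hypothesis of the theorem, so nothing extra is required there.

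The main obstacle I anticipate is the easy direction, specifically producing a clean reduction of a modular-counting language to $L$ that stays within the language-variety closure operations: one must exhibit a concrete context pattern realising the group on the forest sort and check that the map ``word over the group $\mapsto$ flat forest of group-valued leaves'' is indeed an inverse image of a homomorphism composed with a unary polynomial in the forest monad. A careful reader might worry whether the forest semigroup of the image of a homomorphism is a subsemigroup of the target's forest semigroup — it is, since homomorphisms of forest algebras are sort-preserving and the image is a subalgebra — and whether aperiodicity of the forest semigroup is a variety-level property, which it is, being an identity $\red{x^\momega = x^\momega \red + x}$ on the forest sort in the sense of Example~\ref{ex:identities}. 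Once these bookkeeping points are settled, both directions reduce to results already established in the excerpt, and the proof is short.
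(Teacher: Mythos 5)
Your direction from forest-semigroup-aperiodicity to \mso definability matches the paper exactly: follow the ``from finite algebra to counting \mso'' argument for Theorem~\ref{thm:forest-mso-with-counting} and observe that Claim~\ref{claim:forest-local-computation} needs no modular counting when the forest semigroup is commutative and aperiodic. That part is fine.

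The other direction has a genuine logical flaw. You begin the contrapositive with ``Suppose $L$ is recognised by a forest algebra whose forest semigroup is not aperiodic'' and conclude that the syntactic algebra's forest semigroup is also non-aperiodic, citing preservation of aperiodicity under quotients and subalgebras. But that preservation runs the other way: it gives ``recognising algebra aperiodic $\Rightarrow$ syntactic algebra aperiodic'', not its negation. Your stated hypothesis is in fact trivially satisfiable for \emph{every} recognisable $L$ (take any recogniser and form its product with an algebra containing a group on the forest sort), so if your reasoning were sound it would show that no language at all is \mso-definable. The correct hypothesis for the contrapositive is that $L$ is \emph{not} recognised by any forest algebra with aperiodic forest semigroup, equivalently that the syntactic algebra's forest semigroup is non-aperiodic; starting there, the rest of your plan (pull back a modulo-counting language through a unary polynomial and a homomorphism, then use an \ef argument on flat forests to show \mso cannot count siblings modulo $|G|$) is a workable route. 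Even after this repair, note that you are relying on the unproved claim that \mso-definable forest languages form a variety in the sense of the Eilenberg Variety Theorem — closure under inverse images of unary polynomials in the forest monad needs a careful \ef strategy-copying argument, and the paper only poses the analogous statement for first-order logic as Exercise~\ref{ex:forest-fo-is-variety} (and even warns, in Exercise~\ref{ex:potthof}, that the corresponding claim fails for first-order logic over the term monad). The paper side-steps all of this by proving the direction constructively: it builds a recogniser by products and powersets of atomic algebras and observes that, since the modular-counting atomic predicate is never needed, every atomic forest semigroup is aperiodic, and both products and powersets preserve aperiodicity of the forest semigroup (the latter by Lemma~\ref{lem:powerset-for-commutative-semigroups}, which crucially uses commutativity). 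The constructive argument is shorter and avoids the unproved variety claim entirely, so even with your contrapositive fixed, the paper's route is cleaner.
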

A corollary of this theorem is that modulo counting is needed to define the language ``even number of nodes'', since this language cannot be defined by a forest algebra with an aperiodic forest semigroup.
\begin{proof} For the left-to-right implication, we use the same proof as in the left-to-right implication of Theorem~\ref{thm:forest-mso-with-counting}. The only difference is that in Claim~\ref{claim:forest-local-computation} we do not need modulo counting. This is because for every commutative aperiodic semigroup, the outcome of multiplication depends only on the number of times that each argument is used up to some finite threshold, without modulo counting. 

Consider now the right-to-left implication, which says that if a language is definable in \mso without counting, then it is recognised by a finite forest algebra with an aperiodic forest semigroup. Here, again, we use the same proof as in Theorem~\ref{thm:forest-mso-with-counting}, where a recognising forest algebra is constructed by starting with some atomic forest algebras, and then applying products and the powerset construction. Since we do not need the relation 
\begin{align*}
|X| \equiv \ell \mod n
\end{align*}
from the set model, all of the atomic forest algebras have forest semigroups that are aperiodic. Products clearly preserve aperiodicity of the forest semigroup, and the same is true powersets, as explained in the following lemma.

\begin{lemma}\label{lem:powerset-for-commutative-semigroups}
    If $S$ is a commutative\footnote{Commutativity is important in the proof of the  lemma. For example, in Exercise~\ref{ex:u2-powerset-generates} we showed that every finite semigroup, not necessarily aperiodic, can  be obtained by applying products and powersets to an aperiodic semigroup. } aperiodic semigroup, then the same is true for its powerset semigroup $\powerset S$.
\end{lemma}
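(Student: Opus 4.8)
The goal is to show that if $S$ is a commutative aperiodic semigroup, then its powerset semigroup $\powerset S$ (with multiplication $A\cdot B = \set{ab : a\in A,\ b\in B}$, as in Exercise~\ref{ex:powerset-semigroup}) is again commutative and aperiodic. Commutativity is immediate: for $A,B\subseteq S$ we have $A\cdot B = \set{ab : a\in A, b\in B} = \set{ba : a\in A, b\in B} = B\cdot A$, using commutativity of $S$ pointwise. So the content of the lemma is aperiodicity, i.e.\ that $A^\momega = A^{\momega}A$ for every $A\in\powerset S$, where $\momega$ is a suitable idempotent exponent. I would prove the stronger and cleaner statement: there is some $N$ such that $A^N = A^{N+1}$ for every $A\subseteq S$, which by the Idempotent Power Lemma immediately gives aperiodicity of $\powerset S$.

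The key computation is to understand $A^n$ explicitly. By definition, $A^n = \set{a_1 a_2\cdots a_n : a_i\in A}$, and since $S$ is commutative this equals $\set{\prod_{a\in A} a^{k_a} : k_a\ge 0,\ \sum_{a\in A}k_a = n}$; that is, $A^n$ is the set of products of multisets of size exactly $n$ drawn from $A$. The first thing I would record is a standard fact about commutative aperiodic semigroups: for each $a\in S$ the sequence $a, a^2, a^3,\ldots$ is eventually constant (this is exactly aperiodicity: $a^\momega = a^{\momega+1}$, so $a^k = a^\momega$ for all $k\ge\momega$, where here $\momega\le |S|$ works uniformly). Consequently, when multiplying a multiset of elements of $A$, only the ``truncated'' multiplicities matter: replacing any multiplicity $k_a$ by $\min(k_a,|S|)$ does not change the product $\prod_{a\in A} a^{k_a}$, provided we preserve the property that the total sum does not drop below a threshold. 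More precisely, if $\sum k_a = n$ is large enough, I can always redistribute: decrease an oversized multiplicity and increase another one that is still below $|S|$ (or below the threshold where it stabilises), keeping the product fixed by aperiodicity and keeping the total equal to $n$. This lets me show that for $n$ sufficiently large, $A^n$ stabilises.

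Concretely, set $N = |S|^2$ (any large enough explicit bound works; I will not optimise). I claim $A^N = A^{N+1}$ for all nonempty $A$ (the case $A=\emptyset$ is trivial since $\emptyset^n = \emptyset$). For the inclusion $A^{N+1}\subseteq A^N$: given a product of $N+1$ elements of $A$, some element $a\in A$ occurs at least $N/|A|+1 \ge |S|+1$ times among the $N+1$ factors (using $|A|\le|S|$); since $a^k=a^{|S|}$ for $k\ge|S|$ by aperiodicity, we may drop one occurrence of $a$ without changing the product, yielding the same element as a product of $N$ factors. For the inclusion $A^N\subseteq A^{N+1}$: given a product of $N$ elements of $A$, again some $a\in A$ occurs $\ge|S|+1$ times, and we may \emph{add} one more occurrence of $a$ without changing the product (same aperiodicity fact, now applied in the direction $a^{k+1}=a^k$ for $k\ge|S|$), giving the element as a product of $N+1$ factors. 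Hence $A^N = A^{N+1}$, so every element of $\powerset S$ has $A^N$ idempotent, and by the Idempotent Power Lemma $A^N = A^{N}\cdot A$, i.e.\ $\powerset S$ is aperiodic.

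The main thing to be careful about — the only real obstacle — is the counting argument guaranteeing that \emph{some} letter appears at least $|S|+1$ times: this needs $|A|\le|S|$ (true since $A\subseteq S$) and $N+1 > |S|\cdot|S| \ge |S|\cdot|A|$, so by pigeonhole among $N+1 > |S|\cdot|A|$ factors drawn from the $|A|$ elements of $A$, at least one element occurs more than $|S|$ times; the same pigeonhole works for $N = |S|^2 \ge |S|\cdot|A|$ factors as long as $|A|\ge 1$ and $|S|\ge 1$, after noting $|S|^2 > |S|\cdot(|S|-1) \ge |S|\cdot|A|$ unless $|A|=|S|$, in which case one just takes $N = |S|^2 + 1$ or simply observes that if every element occurred at most $|S|$ times the total would be at most $|S|^2$, forcing equality and hence each element occurring exactly $|S|$ times, which still suffices since $|S|\ge|S|$ makes the aperiodicity reduction available. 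I would tidy this edge case by just choosing $N = |S|^2 + 1$ from the start so that strict inequality always holds and the pigeonhole is clean.
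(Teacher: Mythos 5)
Your proof is correct and takes essentially the same approach as the paper: a pigeonhole argument to find a factor appearing at least the idempotent exponent many times, then aperiodicity (together with commutativity) to add or drop one occurrence, giving $A^N = A^{N+1}$ for a bound $N$ depending only on $|S|$. The paper uses $N = |S|\cdot(\momega+1)$ and you use $N=|S|^2$ (implicitly taking $\momega = |S|$), but these are the same idea.
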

\begin{proof}
    In this proof, we use multiplicative notation for the semigroup operation.
    By aperiodicity of $S$, there is some   $\momega \in \set{1,2,\ldots}$  such that every element  of $b \in S$ satisfies $b^\momega = b^\momega b$. To establish aperiodicity of the powerset semigroup, we will show that every element $A \subseteq S$ of the power set semigroup satisfies 
    \begin{align*}
    A^{n} = A^{n+1}
    \end{align*}
    where $n$ is the size of $S$ times $\momega +1$. We only show the inclusion $A^{n+1} \subseteq A^n$, the same proof can be used to establish the opposite inclusion.   Let 
    \begin{align*}
    a = a_1 \cdots a_{n+1} \in A^{n+1}.
    \end{align*}
    By the pigeon-hole principle and choice of $n$,  some  $b \in A$ must appear at least $\momega+1$ times in the sequence $a_1,\ldots,a_{n+1}$. By commutativity and aperiodicity of $S$, one extra occurrence of $b$ can be eliminated from the multiplication, proving  $A^{n+1} \subseteq A^{n}$.
\end{proof}
\end{proof}

\begin{corollary}
    A language is definable in \mso without counting if and only if its syntactic forest algebra is finite and has an aperiodic forest semigroup.
\end{corollary}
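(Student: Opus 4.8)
The corollary is an immediate consequence of Theorem~\ref{thm:forest-mso-without-counting} once one knows that the property ``the forest semigroup is aperiodic'' is preserved under the algebraic operations that relate a language to its syntactic algebra. The plan is therefore to assemble three ingredients: first, the existence of the syntactic forest algebra; second, the fact that the syntactic algebra is obtained from any recognising algebra by taking a sub-algebra and then a quotient; and third, the fact that aperiodicity of the forest semigroup is inherited by sub-algebras and quotients.

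For the first ingredient I would invoke Theorem~\ref{thm:sorted-syntactic}: the forest monad is finitary (each finite forest or context uses only finitely many labels), so every language $L \subseteq \fmonad \Sigma$ has a syntactic homomorphism into a syntactic forest algebra, and moreover by part~(2) of that theorem, since there are only two sort names, this syntactic algebra is finite whenever $L$ is recognised by a finite forest algebra. So ``definable in \mso without counting'' implies, via Theorem~\ref{thm:forest-mso-without-counting}, that $L$ is recognised by some finite forest algebra with aperiodic forest semigroup, which in particular forces the syntactic forest algebra to be finite.

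For the second and third ingredients I would argue as follows. If a finite forest algebra $A$ recognises $L$, then the image of the recognising homomorphism is a sub-algebra $B \subseteq A$, and the syntactic algebra is a quotient of $B$ by the syntactic congruence (this is exactly the minimality property in the definition of the syntactic homomorphism, using that syntactic homomorphisms are surjective). Now the forest semigroup of a sub-algebra is a sub-semigroup of the forest semigroup of $A$, and the forest semigroup of a quotient algebra is a quotient (homomorphic image) of the forest semigroup of $B$; both of these follow at once because the operation $\red +$ is one of the basic operations and homomorphisms and sub-algebras of forest algebras restrict to semigroup homomorphisms and sub-semigroups on the forest sort. Since aperiodic semigroups are closed under taking sub-semigroups and homomorphic images (an instance of the general fact used repeatedly in Chapter~\ref{chap:logics}, e.g.\ in the argument that the parity language is not aperiodic), the syntactic forest algebra has an aperiodic forest semigroup whenever some recognising one does. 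This gives the forward direction of the corollary.

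The converse direction is the trivial one: if the syntactic forest algebra of $L$ is finite and has aperiodic forest semigroup, then $L$ is in particular recognised by a finite forest algebra with aperiodic forest semigroup (namely its own syntactic algebra), so Theorem~\ref{thm:forest-mso-without-counting} immediately yields that $L$ is definable in \mso without counting. I do not anticipate a genuine obstacle here; the only point requiring a line of care is the observation that taking sub-algebras and quotients of a forest algebra acts coordinate-wise on the forest sort as taking sub-semigroups and quotients of the forest semigroup, together with the recollection that the syntactic algebra sits as a quotient of a sub-algebra of any recogniser — both of which are routine given the machinery already developed in Sections~\ref{sec:syntactic-algebras} and~\ref{sec:syntactic-algebras-forest-algebra}.
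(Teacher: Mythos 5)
Your proof is correct and takes essentially the same route as the paper: observe that the syntactic forest algebra is a quotient of a sub-algebra of any recognising algebra, that aperiodicity of the forest semigroup is preserved under sub-algebras and quotients, and then apply Theorem~\ref{thm:forest-mso-without-counting} in both directions. The extra detail you supply (why the forest sort of a sub-algebra or quotient is a sub-semigroup or quotient of the forest semigroup, and the explicit appeal to Theorem~\ref{thm:sorted-syntactic} for existence of the syntactic algebra) is all correct and merely spells out what the paper's two-sentence proof leaves implicit.
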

\begin{proof}
Aperiodicity of the forest semigroup is preserved when taking subalgebras and quotients (images under surjective homomorphisms). Since the  syntactic forest algebra can be obtained from any recognising forest algebra by taking a subalgebra and then a quotient, the result follows from Theorem~\ref{thm:forest-mso-without-counting}.
\end{proof}
Since the syntactic forest algebra can be computed for recognisable languages, it follows that given a sentence of counting \mso, we can decide if there is a  sentence of \mso which does not use counting and which is equivalent on forests and contexts.

\subsection{First-order logic}
\label{sec:first-order-logic-forests}
For finite words, the king of algebraic characterisations was  the \schutz-McNaughton-Papert-Kamp Theorem, which described the languages of finite words that can be defined in first-order logic (using the ordered model). Unfortunately, finding a generalisation of this theorem to forest algebra (or any other algebra modelling trees) remains an open problem\footnote{This is in contrast to first-order logic on trees with the child relation (and not the descendant ordering), which has an algebraic characterisation, see 
\incite[Theorem 1.]{benediktSegoufin2009}
}.  Our discussion of first-order logic in the forest monad is limited to some remarks and one example. 

As discussed in Section~\ref{sec:syntactic-algebras-forest-algebra}, 
The Eilenberg Variety Theorem works also for the forest monad. One can show that, in the forest monad, the class of languages definable in first-order logic is a language variety, see Exercise~\ref{ex:forest-fo-is-variety}. Therefore,  from the Eilenberg Variety Theorem  it follows that whether or not a language $L \subseteq \fmonad \Sigma$ is definable in first-order logic depends only on the syntactic algebra of the language. However, it is not known if the corresponding property of syntactic algebras is decidable.  Here is an example  which shows that aperiodicity -- which characterised the syntactic algebras for first-order definable languages of finite words -- is not enough for forest algebra. 

\begin{myexample} \label{ex:boolean-expressions} Consider an alphabet
    \begin{align*}
    \Sigma = \set{ 
        \myunderbrace{\blue \lor, \blue \land}{context sort}, \quad  
        \myunderbrace{\red 0, \red 1}{forest sort}
    }.
    \end{align*}
    A forest over this alphabet is the same thing as a multiset of positive Boolean formulas, as in the following picture:
    \mypic{73}
    We define the \emph{value} of a node in a forest over this alphabet to be the value of the  Boolean formula in the subtree of the node. 
    Consider the language 
    \begin{align*}
    L = \set{t \in \fmonad \Sigma : \text{$t$ is a forest where all roots have value 1}}
    \end{align*}
    If we look at the syntactic forest algebra of this language, then both the forest semigroup and the context semigroup are aperiodic (in fact, they are idempotent). Nonetheless, the language is not definable in first-order logic, see Exercise~\ref{ex:boolean-not-first-order}.
\end{myexample}

\exercisehead

\mikexercise{\label{ex:forest-fo-is-variety} Prove that first-order logic, as discussed in Section~\ref{sec:first-order-logic-forests}, is a variety in the sense of the Eilenberg variety theorem.}{}

\mikexercise{Show that the language of forests where some leaf has even depth is not definable in first-order logic.}{}

\mikexercise{\label{ex:boolean-not-first-order} Show that the language from Example~\ref{ex:boolean-expressions} is not definable in first-order logic. }{}{}

\mikexercise{\label{ex:potthoff} Consider a two-sorted alphabet 
\begin{align*}
\Sigma = \set{\blue{\text{left, right}}, \red{\text{left, right}}}.
\end{align*}
A \emph{binary tree} over this alphabet is a tree where every node is either a leaf, or it has exactly two children, with labels ``left'' and ``right'' in the appropriate sort. There are no constraints on the root label. 
Define $L \subseteq \fmonad \Sigma$ to be the set of binary trees where all leaves are at even depth. Show that this language is first-order definable. Hint: show first that there is a first-order language which separates $L$ from the set of binary trees where all leaves are at odd depth.
}{}{}

\mikexercise{Define \emph{anti-chain logic} to be the variant of \mso where set quantification is restricted to anti-chains, i.e.~sets of nodes that are pairwise incomparable with respect to the descendant relation. Show that anti-chain logic can define all recognisable languages that contain only binary trees, as defined in Example~\ref{ex:potthoff}.
}{}

\mikexercise{A unary node in a forest or context is defined to be a node with exactly one child. 
    Show that anti-chain logic with modulo counting can define every recognisable language where every element has no unary nodes.}{}

\subsection{Branch languages}
In this section, we discuss  languages which are defined only by looking at branches in a forest or context. We say that a forest or context is a  \emph{branch} if all nodes are linearly ordered by the ancestor relation. Here is a picture: 
\mypic{75}
A branch can be viewed as a word, consisting of the labels of the nodes in the branch, listed in root-to-leaf order. 
 For a forest or context $t$, define a \emph{branch of $t$} to be any branch that can be obtained from  selecting some $x$ which is either the port or a leaf, and restricting $t$ to the ancestors of $x$. The branch is a context if $x$ is the port, otherwise the branch is a forest. Here is a picture:
 \mypic{74}
The following theorem gives a characterisation  of languages that are determined by only looking at branches. 
\begin{theorem}\label{thm:path} 
    For every language    $L \subseteq \fmonad \Sigma$,  not necessarily recognisable, the following conditions are equivalent
    \begin{itemize}
        \item[(1)] membership $t \in L$ depends only on the set of branches in $t$;
        \item[(2)]  the syntactic forest algebra  of $L$ satisfies the identities 
        \begin{align*}
       \myunderbrace{\blue a \red{ \cdot (b + c)} \red = \red(\blue a \red{\cdot b}\red) \red + \red(\blue a \red{ \cdot c}\red)}{distributivity}
         \quad \text{and} \quad  
         \myunderbrace{\red{b + b  = b}}{
             idempotence of \\ \scriptsize the forest semigroup
         }
         \qquad 
         \text{for every }\myunderbrace{\blue{a}, \red{b}, \red{c}}{the colour indicates the \\ \scriptsize sort of the variable} \in A.
        \end{align*}
    \end{itemize}
    If $L$ is recognisable, then the above conditions are also equivalent to:
    \begin{itemize}
        \item[(3)] $L$ is a finite Boolean combination of languages of the form ``for some branch, the corresponding word is in $K\subseteq \Sigma^*$'', where $K$ is regular. Different languages $K$ can be used in different parts of the Boolean combination.
    \end{itemize}
\end{theorem}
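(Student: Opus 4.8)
The strategy is to prove a cycle of implications $(1)\Rightarrow(2)\Rightarrow(3)\Rightarrow(1)$, with the recognisability assumption only entering where $(3)$ is involved; for the first two implications it is cleanest to argue $(1)\Rightarrow(2)$ and then, for recognisable $L$, $(2)\Rightarrow(3)\Rightarrow(1)$, and separately $(2)\Rightarrow(1)$ in the general (not necessarily recognisable) case so that the first two conditions are genuinely equivalent without regularity. The central technical object is the notion that two forests or contexts $s,t$ are \emph{branch-equivalent} if they have the same set of branches (as words over $\Sigma$, taken up to the equivalence induced by $L$ where appropriate). Condition $(1)$ says precisely that $L$ is a union of branch-equivalence classes, equivalently that branch-equivalence refines the syntactic congruence of $L$.

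For $(1)\Rightarrow(2)$: I would show that the two defining identities hold already in the free algebra modulo branch-equivalence, and then transfer them to the syntactic algebra using the fact (Lemma~\ref{lem:congruences-in-terms-of-polynomials} and the discussion of syntactic congruences in Section~\ref{sec:syntactic-algebras-forest-algebra}) that the syntactic algebra is a quotient of $\fmonad\Sigma$ by the syntactic congruence, which by assumption is coarser than branch-equivalence. Concretely: the branches of $\blue a\red\cdot(b+c)$ are exactly the branches of $\blue a$ prefixed onto the branches of $b+c$, i.e. the branches of $(\blue a\red\cdot b)+(\blue a\red\cdot c)$; and the branches of $\red{b+b}$ are the branches of $\red b$. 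So these pairs of terms are branch-equivalent for every valuation, hence identified by the syntactic congruence, hence equal in the syntactic algebra. A small care is needed because the identities quantify over abstract algebra elements $\blue a,\red b,\red c\in A$, not over forests: here one uses surjectivity of the syntactic homomorphism to pull each element back to a forest/context, then applies branch-equivalence there, then pushes forward. This step is routine once branch-equivalence is set up properly.

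For $(2)\Rightarrow(3)$ (assuming $L$ recognisable, with syntactic homomorphism $h:\fmonad\Sigma\to A$): using distributivity and idempotence of $\red+$ repeatedly, one shows that for every $t\in\fmonad\Sigma$ the value $h(t)$ depends only on the set $\{h(\beta):\beta \text{ a branch of }t\}$ — essentially, $h(t)$ equals the ``sum'' (in the idempotent commutative forest semigroup) over branches of the context-semigroup products read along each branch, and idempotence + commutativity collapse this to a function of the \emph{set} of branch-values. For each element $a\in A$, the set of branches $\beta$ with $h(\beta)=a$ is a regular language $K_a\subseteq\Sigma^*$ (it is recognised by the restriction of the forest-algebra structure to the context semigroup, which is a finite semigroup acting on words — formally one invokes Theorem~\ref{thm:regular-languages-monoids} after extracting the appropriate word automaton/semigroup from $A$). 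Then $L=h^{-1}(F)$ is the set of $t$ for which the set $\{a : \exists\beta\text{ branch of }t,\ h(\beta)=a\}$ lies in a fixed family of subsets of $A$, and ``$\exists$ branch with value in $K_a$'' is exactly one of the basic languages allowed in $(3)$; a finite Boolean combination over the subsets of $A$ realising $F$ gives the required description. The implication $(3)\Rightarrow(1)$ is essentially immediate: ``there exists a branch whose word is in $K$'' depends only on the set of branches, and the property is closed under Boolean combinations, so any language of the form in $(3)$ depends only on the set of branches. Finally $(2)\Rightarrow(1)$ in the general case: the computation in $(2)\Rightarrow(3)$ showing $h(t)$ depends only on the branch-value multiset was purely algebraic and used only the two identities, so it applies to the (possibly infinite) syntactic algebra as well, giving that membership in $L$ depends only on the branches of $t$; one only needs that in an idempotent commutative semigroup a product over a finite indexed family depends only on the underlying set of values, which holds without any finiteness assumption.

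\textbf{Main obstacle.} I expect the delicate point to be $(2)\Rightarrow(3)$, specifically extracting from an abstract finite forest algebra satisfying distributivity the statement that $h(t)$ is literally a function of the \emph{set} (not multiset, not sequence) of branch-values, and simultaneously that each ``branch has value $a$'' is a regular word language. The distributivity identity must be applied in the right direction and iterated — intuitively one ``pushes products down to the leaves'' — and one must check that contexts with a port (branches of context sort) are handled correctly alongside forest branches, since a context $t$ has exactly one branch of context sort (the one through the port) plus possibly many forest-sorted branches, and the identities interact differently with the context action $\red\cdot$, the context multiplication $\blue\cdot$, and the mixed operations $\blue+,\blue\oplus$ from Figure~\ref{fig:forest-axioms}. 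Getting a clean induction on the structure of $t$ (via the basic operations of Theorem~\ref{thm:basic-operations-of-forest-algebra}) that simultaneously tracks the context-sorted branch and the set of forest-sorted branch-values, and verifying it is preserved by all five basic operations, is where the real work lies; everything else is bookkeeping.
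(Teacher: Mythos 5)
Your plan is correct and follows essentially the same route as the paper: $(1)\Rightarrow(2)$ by observing the identities preserve branches, $(2)\Rightarrow(1)$ by using distributivity to collapse to branch information, and the extraction of the regular word languages $K_a$ from the syntactic homomorphism for $(3)$. The ``main obstacle'' you flag — carefully iterating distributivity without getting lost in the five-operation case analysis — is resolved in the paper by a single clean device: for each $t$, define its \emph{branch normal form} as the forest or context given by the $\red{+}$-sum of all branches of $t$, and observe that the distributivity identity alone forces $t$ and its branch normal form to have the same syntactic image (no structural induction on the basic operations is needed, because the branch normal form is reached by repeated left-to-right applications of the distributivity equation); since the branch normal form is a function of the multiset of branches, the syntactic image is too, and then the idempotence identity reduces multiset to set. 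Adopting that intermediate object would turn your sketched ``push products to the leaves'' argument into a two-line verification.
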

\begin{proof}\ 
    \begin{itemize}
        \item (1)$\Rightarrow$(2) In the free algebra $\fmonad \Sigma$, applying the identities from condition (2)  does not affect the set of branches.  
\item (2) $\Rightarrow$ (1) For a forest or context  $t \in \fmonad \Sigma$, define its \emph{branch normal form} to be the forest or context that is the union of all branches in $t$, as described in the following picture:
\mypic{76}
If the syntactic algebra satisfies the  distributivity identity  in the theorem, then a forest or context has the same image under the syntactic homomorphism as its branch normal form. Since the  branch normal form is determined uniquely by the multiset of branches, it follows that the image under the  syntactic homomorphism depends only on the multiset of branches\footnote{One could think that the  distributivity  identity alone (without the identity for idempotence) characterises exactly the languages where membership depends only on the multiset (and not just the set) of branches. This is not true, see Exercise~\ref{ex:multiset-of-branches}.}. Thanks to the idempotence identity, it is only the set of branches that matters for membership in the language, and therefore the language must be branch testable.
\item (3) $\Leftrightarrow$ (1) for recognisable languages.  Clearly (3) implies (1). Consider now the converse implication.  Let $h$ be the syntactic homomorphism of a recognisable language.
By condition (1) and the definition of a syntactic homomorphism,  membership $t \in L$ depends only on the set 
\begin{align*}
H(t) = \set{h(s) : \text{$s$ is a branch in $t$}}.
\end{align*}
For every $a$ in the syntactic algebra, define $K_a \subseteq \Sigma^+$ to be the words that correspond to branches which have value $a$ under the syntactic homomorphism. This language is recognised by a finite semigroup (which is easily constructed from the syntactic  forest algebra), and therefore it is regular. Finally, $a \in H(t)$ if and only if for some branch the corresponding word is in $K_a$. Therefore, $H(t)$ can be described using a finite Boolean combination of languages of the form $K_a$. 
    \end{itemize}
\end{proof}

Condition (2) in the above theorem can be effectively checked given the syntactic algebra. Since the syntactic algebra can be computed for recognisable languages, it follows that one can decide if a   recognisable language satisfies any of the conditions in the above theorem. 

\exercisehead
\mikexercise{\label{ex:multiset-of-branches} Give an example of a language $L \subseteq \fmonad \Sigma$ where membership depends only on the multiset of branches, but where the syntactic algebra violates the distributivity identity from Theorem~\ref{thm:path}.}{}

\mikexercise{Give an algorithm which decides if a recognisable  language $L \subseteq \fmonad \Sigma$ is of the form: ``for some branch, the corresponding word is in $K \subseteq \Sigma^*$'', for some regular $K$.  }{
\begin{theorem}\label{thm:universal-path}
    Let  $L \subseteq \fmonad \Sigma$ be a language, not necessarily recognisable, which contains only forests. The following conditions are equivalent:
    \begin{enumerate}
        \item \label{universal-path-1} There is some language $K \subseteq \Sigma^*$ such that 
        \begin{align*}
        L = \set{t \in \fmonad \Sigma : \text{$t$ is a forest and all of its path words belong to $K$}}.
        \end{align*}
        \item \label{universal-path-2} Every context $\blue a \in \fmonad \Sigma$ and forests $
    \red b, \red c \in \fmonad \Sigma$ satisfy the following equivalence:
     \begin{align*}
            \blue{a}\red{\cdot (b + c)} \in L \quad \text{iff} \quad \blue{a} \red{\cdot b} \in L \text{ and } \blue a \red{\cdot b} \in L.
            \end{align*}
    \end{enumerate}
    Furthermore, if $L$ is recognisable then one can check if the above conditions hold, and if they do, then $K$ can be chosen so that it is a regular word language. 
\end{theorem}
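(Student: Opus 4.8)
\textbf{Proof plan for Theorem~\ref{thm:universal-path}.}

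The implication \ref{universal-path-1}$\Rightarrow$\ref{universal-path-2} is the easy direction: if $L$ is defined by ``all path words lie in $K$'', then the set of path words of $\blue a \red{\cdot (b+c)}$ is the union of the sets of path words of $\blue a \red{\cdot b}$ and $\blue a \red{\cdot c}$, since a path in the combined forest either stays inside the $\blue a$-context and then enters $\red b$, or enters $\red c$ (and context paths of $\blue a$ alone do not arise because $\blue a \red{\cdot(b+c)}$ is a forest, so every maximal branch runs all the way to a leaf). Hence the conjunction on the right is equivalent to membership on the left. I would spell this out by first recording the elementary fact that the multiset (indeed the set) of path words of $\blue a \red{\cdot(b+c)}$ is the disjoint union of those of $\blue a\red{\cdot b}$ and $\blue a\red{\cdot c}$, which is immediate from the definition of free multiplication in the forest monad.

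For \ref{universal-path-2}$\Rightarrow$\ref{universal-path-1}, the plan is to define $K$ directly from $L$. Set
\begin{align*}
K = \set{w \in \Sigma^* : \text{$w$ is the path word of some single branch $s$ with $s \in L$}},
\end{align*}
more precisely: a branch, viewed as a forest (one leaf, a chain up to the root), belongs to $L$ or not, and $K$ collects the path words of those that do. I then claim $L$ equals the set of forests all of whose path words lie in $K$. The inclusion from left to right: if $t \in L$ and $s$ is a branch of $t$ obtained by restricting $t$ to the ancestors of some leaf $x$, then repeatedly applying condition~\ref{universal-path-2} (at each node of $t$, discarding all subtrees hanging off the path to $x$ except the one containing $x$) reduces $t$ to the single branch $s$ while preserving membership in $L$; hence $s \in L$, so its path word is in $K$. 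One must be mildly careful that condition~\ref{universal-path-2} as stated peels off one summand at a time at the roots, so the reduction is an induction on the number of nodes of $t$ outside the chosen branch, using at each step that some node has a child-subtree not on the branch and splitting the forest there via $\blue a \red{\cdot(b+c)}$. The reverse inclusion: suppose every path word of $t$ is in $K$; I must show $t \in L$. Here I again use condition~\ref{universal-path-2}, this time to \emph{glue}: each branch $s$ of $t$ has its path word in $K$, so by definition of $K$ there is \emph{some} branch $s'$ with the same path word and $s' \in L$; but a branch is determined up to isomorphism by its path word, so $s \in L$ itself. Then reconstruct $t$ from its branches by iterated application of \ref{universal-path-2} in the ``glue'' direction — combining two forests already known to be in $L$ into their sum, and prepending contexts — so that $t \in L$. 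The main subtlety is that the reconstruction must proceed through the branch normal form: $t$ need not equal the union of its branches, but $t$ and its branch normal form have the same path-word set, and one should check that condition~\ref{universal-path-2} forces $t$ and its branch normal form to be simultaneously in or out of $L$ (this is essentially the distributivity argument from the proof of Theorem~\ref{thm:path}, specialized to the case where the ``all branches'' quantifier is used rather than ``some branch''). Note that the idempotence identity is \emph{not} needed here, because the defining condition ``all path words in $K$'' is already insensitive to multiplicity of branches: $\blue a \red{\cdot(b+b)} \in L \iff \blue a \red{\cdot b} \in L$ follows from \ref{universal-path-2} with $\red c = \red b$.

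For the ``furthermore'' part, assume $L$ is recognisable, with syntactic forest algebra $A$ and syntactic homomorphism $h : \fmonad\Sigma \to A$. Condition~\ref{universal-path-2}, being a universally quantified identity-like statement over the free algebra, factors through $h$, hence can be checked on $A$ by quantifying over its finitely many elements and using the basic operations of Theorem~\ref{thm:basic-operations-of-forest-algebra}; so decidability of the condition is immediate once the syntactic algebra is computed (which is possible by the discussion in Section~\ref{sec:syntactic-algebras-forest-algebra}). When the condition holds, the language $K$ defined above is recognised by a finite semigroup built from $A$: a branch of a forest, read root-to-leaf, is a word over $\Sigma$ whose $h$-value can be computed by a semigroup obtained from the context operation $\blue\cdot$ of $A$ together with the leaf-insertion operation $\blue x \red{\cdot} \red y$ (exactly as in the proof of Theorem~\ref{thm:path}, where $K_a \subseteq \Sigma^+$ was shown regular), and $K$ is the union over those $a \in A$ with (single-branch value $a$) $\in h(L)$ of the corresponding $K_a$, together with the one-letter forest words handled separately. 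Hence $K$ is regular. The main obstacle I anticipate is the careful bookkeeping in the two directions of the reconstruction argument — making the inductions on ``nodes off the branch'' and on the branch normal form precise, and confirming that a single application of the stated form of \ref{universal-path-2} suffices at each step — but no genuinely new idea beyond what already appears in Theorem~\ref{thm:path} is required.
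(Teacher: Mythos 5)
Your plan matches the paper's solution to this exercise: define $K$ from the linear trees (single-branch forests) lying in $L$, then show that $L$ consists of exactly those forests all of whose path words are in $K$, using one direction of condition~(2) to peel away subtrees and the other direction to glue branches back; your forward direction, the ``furthermore'' bookkeeping on the syntactic algebra, and the observation that idempotence is not needed are all right.

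However, both your plan and the paper's terse sketch share the same gap at the root level of the forest, and it is a real one. Condition~(2) compares $\blue{a}\red{\cdot(b+c)}$ with $\blue{a}\red{\cdot b}$ and $\blue{a}\red{\cdot c}$, and since the port of a context in the forest monad is always attached as a child of some node (it is never a root), all three of these forests have the same roots as $\blue a$. Condition~(2) therefore says nothing about how $\red{b+c}$ relates to $\red b$ and $\red c$ as stand-alone forests, and your steps that ``peel off one summand at a time at the roots'' and ``combine two forests already known to be in $L$ into their sum'' are exactly the moves it does not license. Concretely, take $L$ to be the forests with at least two roots: membership of $\blue{a}\red{\cdot(b+c)}$, $\blue{a}\red{\cdot b}$ and $\blue{a}\red{\cdot c}$ in this $L$ depends only on $\blue a$, so condition~(2) holds vacuously, but $L$ is not of the form in condition~(1), since the single-node forest $\red\sigma$ (for a forest-sorted letter) is not in $L$, which forces $\sigma\notin K$, whereas $\red{\sigma}\red{+}\red{\sigma}\in L$ has $\sigma$ as its only path word. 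To repair the statement (and both proofs), condition~(2) needs the extra root-level clause that $\red{b+c}\in L$ if and only if both $\red b\in L$ and $\red c\in L$; with that added, your two inductions close, and your construction of a regular $K$ from the context semigroup together with the operation $\blue x\red{\cdot}\red y$ is correct as described.
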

\begin{proof}
    The implication \ref{universal-path-1} $\Rightarrow$ \ref{universal-path-2} is immediate, since the same paths words appear on both sides of the equivalence.  Consider now the converse implication \ref{universal-path-1} $\Rightarrow$ \ref{universal-path-2}.
    Clearly every universal path language is closed under eliminating subtrees,  because eliminating a subtree makes the set of path words smaller.

    For the converse implication, suppose that $L$ contains only forests and is closed under eliminating subtrees. Define a \emph{linear tree} to be a forest with one leaf.  Consider the following word language 
    \begin{align*}
    K = \set{\text{unique path word of $t$} : \text{$t$ is a linear tree in $L$}} \subseteq \Sigma^*.
    \end{align*}
    Because $L$ is closed under eliminating subtrees, we have 
\begin{align*}
L = \forall K.
\end{align*}

    A root-to-leaf path in a forest $t$ can be seen as a linear tree which is obtained from $t$ by keeping only the nodes that appear on the path. 
\end{proof}
}

\subsection{Modal logic}

We finish  this section with a  discussion of tree variants for some of the temporal logics that  were discussed in Chapter~\ref{chap:logics}. When working with trees and forests, we  use the terminology of modal logic, described as follows. 

Define a \emph{Kripke model} to be a directed graph with vertices labelled by some alphabet $\Sigma$. Here is a picture of a Kripke model:
\mypic{77}
Vertices of the Kripke model are called \emph{worlds}, and the edge relation is called \emph{accessibility}. Accessibility does not need to be  transitive. In this section, we  only study Kripke models where accessibility is acyclic. 
To express properties of worlds in  Kripke models, we use modal logic,
whose formulas are constructed as follows:
\begin{align*}
\myunderbrace{a}
{the current \\ \scriptsize world has\\
\scriptsize  label $a \in \Sigma$}
\qquad 
\myunderbrace{
    \Diamond \varphi
}{some \\ \scriptsize accessible \\
\scriptsize world \\ 
\scriptsize satisfies $\varphi$}
\qquad 
\myunderbrace{
    \Box \varphi
}{every \\ 
\scriptsize accessible \\
\scriptsize world \\ 
\scriptsize satisfies $\varphi$}
\qquad 
\myunderbrace{\neg \varphi \quad \varphi \land \psi \quad \varphi \lor \psi}{Boolean combinations}.
\end{align*}
We use the following notation for the semantics of modal logic:
\begin{align*}
\myunderbrace{M}{Kripke \\ \scriptsize model}, \myunderbrace{v}{world\\ \scriptsize of $M$} \models \myunderbrace{\varphi}{formula \\ \scriptsize of modal \\ \scriptsize logic}.
\end{align*}
We  use modal logic to define properties of forests, by assigning a Kripke model to each forest, as explained in the following picture:
\mypic{78}
One could also assign a Kripke model to a context, by doing the same construction, except with a special marker for the port node. We choose not to do this, without any deeper reasons,  and therefore in what follows we only discuss languages that contain only forests.

\begin{definition}[Forest languages definable in modal logic]
    We say that a formula of modal logic is true in a forest if it is true in the initial world of its Kripke model. 
    A  forest language is called \emph{definable in modal logic} if there is a formula of modal logic that is true in exactly the forests from the language.
\end{definition}

The following theorem characterises  modal logic in terms of two identities. A corollary of the theorem is that one can decide if a language is definable in  modal logic, because it suffices to check if the identities hold in the syntactic algebra of a language. 

\begin{theorem}\label{thm:modal-logic-child}
    Let $L \subseteq \fmonad \Sigma$ be a language that contains only forests. Then  $L$  is definable in modal logic if and only if its syntactic forest algebra  is finite and satisfies the identities
    \begin{align*}
        \red{a + a = a}  \qquad 
        \blue{c^{\momega}}\red { \cdot a } \red =
    \blue{c^{\momega}}\red { \cdot b } \qquad 
    \text{for all $\red a, \red b, \blue c$}
    ,
    \end{align*}
    where $\blue \momega \in \set{1,2,\ldots}$ is the idempotent exponent of the context semigroup.
\end{theorem}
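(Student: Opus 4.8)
\textbf{Proof plan for Theorem~\ref{thm:modal-logic-child}.}
The plan is to follow the same two-way pattern as in the other logic characterisations in this chapter, using the Eilenberg-style machinery that is already available: recognisability is characterised by finiteness of the syntactic forest algebra, and the two identities will be shown to hold exactly when the language is modal-definable. For the easier direction (modal $\Rightarrow$ the identities), I would first check that every language definable in modal logic is recognised by a finite forest algebra; this is a sub-case of Theorem~\ref{thm:forest-mso-with-counting}, since modal logic is contained in counting \mso (in fact in \mso without counting, via the descendant-order model together with the fact that ``accessible'' means ``child'', which is \mso-definable). Then I would verify that the two identities are sound for modal logic. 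The idempotence identity $\red{a+a=a}$ holds because the truth value of a modal formula at the root of a forest depends only on the \emph{set} of children's subtrees, not on multiplicities --- $\Diamond$ and $\Box$ quantify over accessible worlds, and duplicating a child changes neither the set of accessible worlds nor any world's behaviour. The identity $\blue{c^\momega}\red\cdot\red a = \blue{c^\momega}\red\cdot\red b$ is the key one: it says that once one pushes any forest down through sufficiently many ($\momega$ being the idempotent exponent of the context semigroup) nested copies of a context $\blue c$, the result no longer depends on that forest. Soundness here rests on the bounded modal depth of a fixed formula: a modal formula of depth $k$ cannot ``see'' past depth $k$, so after passing through more than $k$ copies of $\blue c$ the plugged-in forest is invisible. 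I would make this precise by an induction on the nesting depth of modal operators, combined with the usual semigroup fact that $\blue c^\momega$ absorbs: once we translate ``depth $\le k$'' into the algebra, a long enough power of $\blue c$ swallows the difference. To pass from ``holds in some recognising algebra'' to ``holds in the syntactic algebra'' I would use that these identities are preserved under subalgebras and quotients (as noted for similar identities earlier in the chapter), so the syntactic forest algebra --- being a quotient of a subalgebra of a recogniser --- satisfies them.

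For the harder direction (the identities $\Rightarrow$ modal-definable), suppose $L \subseteq \fmonad \Sigma$ has finite syntactic forest algebra $A$ satisfying both identities, with syntactic homomorphism $h : \fmonad\Sigma \to A$. I would show that for every $\red a \in A$ of forest sort, the language $h^{-1}(\red a)$ restricted to forests is definable in modal logic; the conclusion follows by taking the finite disjunction over the accepting forest-sorted elements. The plan is to compute the type $h(t)$ of a forest by a bottom-up-like argument, as in the proof of Theorem~\ref{thm:forest-mso-with-counting}, but exploiting the two identities to replace counting/unbounded-depth information by bounded modal information. By the idempotence identity, the forest semigroup of $A$ is idempotent and commutative, so the value $h(t)$ of a forest depends only on the \emph{set} of types of its children's subtrees (a finite set, hence expressible by a Boolean combination of ``some child has type $\ldots$'' / ``every child has type $\ldots$'', which are exactly $\Diamond$ and $\Box$ applied to type-formulas one level down). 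By the identity $\blue c^\momega \red\cdot \red a = \blue c^\momega \red\cdot \red b$, long nested contexts collapse: I would argue that there is a bound $k$ (depending only on $A$, essentially $|A|$ times $\momega$) such that the type of any node is determined by its $k$-neighbourhood below it --- more precisely, by the finite tree of types truncated at depth $k$. This truncation is what lets me build, by induction on $k$, a family of modal formulas $\set{\varphi_{\red a,k}}_{\red a}$ of modal depth $k$ such that $\varphi_{\red a,k}$ holds at a node iff that node has type $\red a$, wherever the subtree below it has depth $\le k$; and then the collapse identity guarantees that depth-$k$ information suffices for arbitrarily deep subtrees.

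The main obstacle I expect is making the ``long contexts collapse to bounded depth'' argument rigorous in a way that interacts correctly with the \emph{branching} structure, not just with linear chains. The identity $\blue c^\momega \red\cdot\red a = \blue c^\momega\red\cdot\red b$ is stated for a single repeated context $\blue c$, but in a general forest the path from the root to a deep node passes through many \emph{different} context pieces. The standard fix is a pigeonhole/Simon-tree style argument: along any root-to-node path of length $> |A|$, some context value in the context semigroup of $A$ repeats, and by the Idempotent Power Lemma (Lemma~\ref{lem:idempotent-lemma}) applied to the context semigroup one finds an idempotent context factor $\blue e$; then $\blue e = \blue e^\momega$ (up to the algebra's idempotent exponent) and the collapse identity kicks in. Combined with the idempotence of the forest semigroup --- which lets one freely duplicate or delete sibling subtrees, hence normalise the branching --- this should yield that types stabilise at bounded depth. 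I would organise this by defining, as in Theorem~\ref{thm:path}, a normal form for forests (truncate at depth $k$, delete duplicate siblings) and proving $h$ is invariant under passing to the normal form; once that is established, $h^{-1}(\red a)$ on forests is a Boolean combination of bounded-depth conditions on the tree of types, each of which unwinds directly into a modal formula. Decidability of the criterion is then immediate, since the syntactic forest algebra is computable and the two identities are finitely checkable on it.
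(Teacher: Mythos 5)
Your proposal is correct and, in broad outline, the same as the paper's: prove soundness of the two identities using bounded modal depth, then for completeness show that forests whose truncations at some fixed depth are bisimilar get the same image under the syntactic homomorphism, and express that normal form in modal logic. The one place where your plan is genuinely more careful than the paper's write-up is exactly the obstacle you single out. In the proof of Claim~\ref{claim:modal-logic-fixed-depth} the paper asserts in a single sentence that the identity $\blue{c^{n}}\red{\cdot a}\red{=}\blue{c^{n}}\red{\cdot b}$ ``says that the forest language is invariant under $n$-bisimilarity'', but that identity literally only collapses deep contexts that are $n$-fold powers of a \emph{single} $\blue c$, whereas a deep node in a general forest sits under a mixed composition $\blue{c_1 c_2 \cdots c_m}$ of distinct one-level contexts. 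Your Ramsey/pigeonhole step --- along a long enough root-to-node path some infix of the context product is an idempotent $\blue e$ (Exercise~\ref{ex:has-some-idempotent-ramsey} applied to the context semigroup), after which the collapse identity taken with $\blue c\blue{=}\blue e$ absorbs whatever sits below --- is precisely what is needed to close this gap. A side effect is that the modal rank you actually obtain is the Ramsey/factorisation-forest bound rather than $\momega$ itself; that bound is still finite, so the theorem statement is unaffected, though the paper's claim of rank exactly $\momega$ is a little optimistic. The remaining differences are cosmetic: the paper enumerates $n$-bisimilarity classes while you compute the syntactic type bottom-up from the set of children's types, but since $\red{a+a=a}$ makes the forest semigroup commutative and idempotent, ``set of children's types'' is exactly what Boolean combinations of $\Diamond$ and $\Box$ applied to lower-rank type-formulas express, and the two presentations produce the same formulas.
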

\begin{proof} The rough idea is that the identities say that the membership in the language is invariant under bisimulation (the first identity) and depends only on nodes at constant depth (the second identity). These are exactly the properties that characterise modal logic.  A more detailed proof is given below.

    Define the \emph{modal rank} of a formula to be the nesting depth of the modal operators $\Diamond$ and $\Box$. Here are some examples:
    \begin{align*}
      \myunderbrace{a}{modal rank 0} \qquad 
      \myunderbrace{(\Diamond a) \land (\Diamond b)}{modal rank 1} \qquad 
      \myunderbrace{(\Diamond (a \land \Box b) \land (\Diamond b)}{modal rank 2}.
    \end{align*}
    When the alphabet is finite and fixed, then there are finitely many formulas of given modal rank, up to logical equivalence. This is because, up to logical equivalence, there are finitely many Boolean combinations of formulas from a given set.
    To prove the theorem, we use a slightly more refined result, in the following claim, which characterises  the expressive power of modal logic of given modal rank. 
    \begin{claim}\label{claim:modal-logic-fixed-depth}
        A forest language can be defined by a formula of modal rank $n \in \set{0,1,\ldots}$ if and only if its syntactic algebra satisfies the identities
    \begin{align*}
            \red{a + a = a } 
\qquad
            \blue{c^{n}}\red { \cdot a } \red =
        \blue{c^{n}}\red { \cdot b } 
        \end{align*}
    \end{claim}
    \begin{proof}
        We say that a Kripke model is tree-shaped  if the accessibility relation gives a finite tree, with edges directed away from the root (this is the case for the Kripke models that we assign to forests).  We say that two tree-shaped Kripke models are \emph{bisimilar} if one can be transformed into the other by applying the identity $\red {a+a=a}$, i.e.~duplicating or de-duplicating identical sibling subtrees\footnote{For tree-shaped Kripke models this notion coincides with the usual notion of bisimulation for general Kripke models.}. For $n \in \set{0,1,\ldots}$, we say that two  tree-shaped  Kripke models are $n$-bisimilar if,  after removing all worlds that are separated by more than $n$ edges from the root,  they  are bisimilar.  By induction on $n$ one shows that every equivalence class of $n$-bisimilarity can be defined by a formula of modal logic with modal rank $n$; and conversely formulas of modal rank $n$ are invariant under $n$-bisimilarity. The identities in the statement of the claim say that the forest language is invariant under $n$-bisimilarity, and hence the claim follows.  
    \end{proof}
    The theorem follows immediately from the above claim. Indeed, if the identities in the theorem are satisfied, then the language can be defined by a formula of modal logic with modal rank $\blue \momega$. Conversely, if the language is defined by a formula of nesting depth $n$, then membership in the language is not affected by nodes which are more than $n$ edges away from the root, and therefore the syntactic algebra must satisfy
    \begin{eqnarray*}
        \blue{c^{\momega}} \red a  \equalbecause{because $\blue{c^\momega}$ is idempotent}\\
         \blue{c^{\momega n}} \red a \equalbecause{Claim~\ref{claim:modal-logic-fixed-depth}}\\
          \blue{c^{\momega n}} \red b  
          \equalbecause{because $\blue{c^\momega}$ is idempotent}\\ \blue{c^{\momega}} \red b.    
    \end{eqnarray*}
\end{proof}

\paragraph*{Transitive modal logic.}  A formula of  modal logic as discussed above can only talk about nodes that are at some constant distance from the root. We now discuss a variant of modal logic which can talk about arbitrarily deep nodes. The formulas stay the same, only the interpretation of forests as Kripke structures changes.

For a forest, define its  \emph{transitive Kripke model}  in the same way as the Kripke model, except that the accessibility relation now describes the transitive closure of the child relation. In other words, accessibility now represents the proper descendant relation. 

\begin{definition}[Forest languages definable in  transitive modal logic]
    A   language that contains only forests  is called \emph{definable in transitive modal logic}\footnote{In the terminology of  temporal logic, this logic is also called $\mathsf{EF}$, which refers to the ``exists finally'' operator of (branching time) temporal logic. Theorem~\ref{thm:modal-logic-transitive} is 
    \incite[Theorem 5.3]{bojanczykForestAlgebras2008}.
    } if there is a formula of modal logic that is true in (the initial world) of exactly the forests from the language.
\end{definition}

The following theorem characterises transitive modal logic in terms of two identities. A corollary of the theorem is that one can decide if a language is definable in transitive modal logic.
\begin{theorem}\label{thm:modal-logic-transitive}
    Let $L \subseteq \fmonad \Sigma$ be a language that contains only forests. Then  $L$  is definable in transitive modal logic if and only if its syntactic forest algebra is finite and  satisfies the identities
        \begin{align*}
            \red{a + a = a } 
        \qquad 
            \blue{c}\red { \cdot a } \red =
        \red (\blue{c}\red { \cdot a) } \red{+ a} \qquad\text{for all }\red a, \blue c.
        \end{align*}
\end{theorem}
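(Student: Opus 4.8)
\textbf{Proof plan for Theorem~\ref{thm:modal-logic-transitive}.}

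The plan is to follow the same two-part strategy that worked for ordinary modal logic in Theorem~\ref{thm:modal-logic-child}, isolating the extra algebraic feature coming from the transitive closure. For the easy direction (definable in transitive modal logic $\Rightarrow$ the identities hold), I would first observe that transitive modal logic cannot distinguish bisimilar forests, so duplicating or de-duplicating identical sibling subtrees has no effect on membership; unravelling the definition of the syntactic forest algebra this gives the idempotence identity $\red{a+a=a}$. For the second identity, the key point is that in the transitive Kripke model, a node and the nodes strictly below it are \emph{all} mutually accessible from any common ancestor in exactly the same way; more precisely, inserting a copy of a subtree $\red a$ directly under the root of a context $\blue c \red{\cdot a}$, as a new sibling of the existing material, does not change which worlds are accessible from the initial world (a formula of transitive modal logic evaluated at the root sees the same set of reachable labelled worlds). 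Translating ``same membership for all contexts and forest-valued parameters'' through the syntactic homomorphism yields $\blue c \red{\cdot a} \red = \red{(\blue c \red{\cdot a)} \red{+ a}}$.

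For the hard direction (the identities hold $\Rightarrow$ $L$ is definable in transitive modal logic) I would build the defining formula by a bottom-up computation of the image under the syntactic homomorphism $h : \fmonad\Sigma \to A$, just as in the proof of Theorem~\ref{thm:forest-mso-with-counting}, but showing that only transitive-modal constructs are needed. The first step is a normal-form lemma: using $\red{a+a=a}$, every forest is equivalent under $h$ to its ``support'', where at each node we keep only one copy of each isomorphism type of child-subtree-value; using the second identity, $\blue c \red{\cdot a}$ and $\red{(\blue c \red{\cdot a)}\red{+a}}$ have the same $h$-image, which means the value of a node is unchanged if we additionally attach, directly under it, copies of the values occurring anywhere in its subtree. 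Iterating, the $h$-image of a node depends only on the \emph{set of labels reachable from it together with which $A$-values those reachable subtrees realise}, and this set of reachable values is itself computable by a least fixpoint from the children. The plan is then to encode this fixpoint in transitive modal logic: for each $a \in A$ introduce a formula $\varphi_a$ asserting ``the subtree at the current node has value $a$'', built by mutual recursion, where the recursion over descendants uses $\Diamond$ and $\Box$ against the transitive accessibility relation (so a single $\Diamond\varphi_a$ already captures ``some, possibly deep, descendant has value $a$''). Because the monad is finitary and $A$ is finite, this recursion closes after finitely many steps, giving genuine modal formulas; the defining formula for $L$ is then the disjunction of the $\varphi_a$ with $a$ in the accepting set.

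\textbf{Main obstacle.} The routine parts (the two normal-form reductions, the bottom-up type computation, the finiteness of the recursion) are all direct analogues of earlier arguments in this chapter. The genuinely delicate step is verifying that the value of a node really is determined by the \emph{set} of $A$-values realised among its descendant subtrees, with \emph{no} dependence on multiplicities or on the depth/branching structure below it — in other words, that the two identities are not merely necessary but actually collapse the forest algebra down to this ``reachable-set'' invariant. I expect this to require a careful induction on the structure of a forest, repeatedly rewriting with $\red{a+a=a}$ and $\blue c \red{\cdot a}\red = \red{(\blue c \red{\cdot a)}\red{+a}}$ (together with the forest-algebra axioms (F1)--(F9) from Figure~\ref{fig:forest-axioms}, especially distributivity-style manipulations and the action compatibilities), to push every descendant value up to become a root-level summand. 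Handling contexts (where one subtree carries the port) inside this induction, and making sure the rewriting terminates and is confluent enough to reach a canonical form, is where the real work lies; once that ``reachable-set normal form'' is established, the translation into transitive modal logic is essentially bookkeeping.
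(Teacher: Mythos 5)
Your easy direction is fine and matches the paper. The normal-form observation you flag as the main obstacle — that the identities force the type of a node to depend only on the set of $A$-values realised by descendant subtrees — is correct, and it is actually not the hard part: from $\blue c \red{\cdot a} \red = \red{(\blue c \red{\cdot a)}\red{+a}}$ one gets directly that whenever $\red b$ is the value of any descendant subtree of a tree of value $\red a$, then $\red{a = a+b}$, and idempotence then collapses all multiplicities.

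The genuine gap is in the sentence ``once that reachable-set normal form is established, the translation into transitive modal logic is essentially bookkeeping.'' It is not. Your plan is a mutual recursion in which $\varphi_a$ (``subtree has value $a$'') is defined from $\Diamond\varphi_b$ for various $b$, i.e.\ from the same family of formulas evaluated at proper descendants; this is a bottom-up \emph{fixpoint} computation, and plain modal logic has no fixpoint operator. Unwinding the recursion gives formulas whose modal rank grows with the depth of the input forest, which is unbounded, so it does not ``close after finitely many steps.'' The appeal to the monad being finitary and $A$ being finite does not touch this: neither bounds tree depth, and neither bounds modal rank. (Contrast with Theorem~\ref{thm:modal-logic-child}, where the identity $\blue{c^\momega}\red{\cdot a} \red = \blue{c^\momega}\red{\cdot b}$ explicitly bounds the relevant depth, so a fixed modal rank suffices; the second identity here has no such bound, which is exactly why the direct translation breaks.) The paper circumvents this by a structurally different argument: an induction on the size of the algebra $A$, using the reachability pre-order (made anti-symmetric on the forest sort by the two identities) and quotients by ideals to invoke the induction hypothesis on smaller algebras, plus a dedicated existential characterisation (Claim~\ref{claim:ef-induction}) of the maximal forest-sorted element in the residual case where exactly one sub-maximal element remains. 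That characterisation is what lets a \emph{single} $\Diamond$ capture the relevant descendant information without circularly invoking $\varphi_a$ or $\varphi_b$ themselves. Your proposal is missing this entire mechanism, and without it — or some substitute argument that bounds the modal rank of the recursion — the hard direction does not go through.
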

\begin{proof}
    It is easy to see that the identities  must be true in the syntactic algebra of every  language  definable in transitive modal logic. The first identity says that the language must be invariant under bisimulation, which is clearly true for transitive modal logic. For the second identity, we observe that going from $\blue c \red{\cdot a}$ to $\red{(\blue c \cdot a)+a }$ does not affect the transitive Kripke model, up to bisimulation.

    The rest of this proof is devoted to the right-to-left implication. Let      \begin{align*}
    h : \fmonad \Sigma \to A
    \end{align*}
    be a homomorphism into an algebra $A$ that satisfies the identities.  By induction on the size of $A$, we will show that for every forest-sorted $\red a \in A$, the inverse image $h^{-1}(\red a)$ is definable in transitive temporal logic (we say that such $\red a$ is definable in the rest of the proof).  This immediately yields the right-to-left implication. 
    In the rest of the proof, we define the \emph{type} of an element of $\fmonad \Sigma$ to be its  image under $h$. 

    In the proof, we  use  a reachability ordering on the algebra $A$  defined as follows. We say that $a \in A$ is \emph{reachable} from $b \in A$, denoted by $a \ge b$, if there is some $t \in \fmonad A$ which uses $b$ at least once, and which gives $a$ under the multiplication operation of $A$.  (Reachability can be seen as the  forest algebra variant of the infix relation for semigroups.)  Reachability is easily seen to be a pre-order, i.e.~it is transitive and reflexive.   We draw the reachability ordering in red when comparing forest-sorted elements. 
     Thanks to the identities in the theorem, reachability is  anti-symmetric when restricted to the forest sort, as explained in the following claim. 
    \begin{claim}\label{claim:forest-ef-antisymmetric}
        If  forest-sorted $\red a, \red b \in A$ are reachable from each other, then $\red {a = b}$.
    \end{claim}
    \begin{proof}
        For forest-sorted $\red a, \red b \in A$, reachability $\red{ a \ge  b}$ is equivalent to 
    \begin{align*}
    \myunderbrace{\red {a = \blue c\cdot \red b}}{for some context-sorted $\blue c$} \qquad \text{or} \qquad 
    \myunderbrace{\red{ a =  c +b}}{for some forest-sorted $\red c$} \qquad \text{or} \qquad 
    \red{ a = b}.
\end{align*} In the presence of the identities from the assumption of the theorem, all three conditions above imply  $\red {a =a+b}$. For the same reason, $\red{a \le b}$ implies  $\red{b=a+b}$, and therefore $\red{a=b}$.
    \end{proof}

In every finite forest algebra there is a maximal forest-sorted element with respect to reachability, because every two forests can be combined using $\red +$, into a forest that is bigger than both of them in the reachability ordering. 
By Claim~\ref{claim:forest-ef-antisymmetric}, the maximal element is unique. Fix the maximal element $\red a$ for the rest of the proof.  

The following claim uses the induction assumption on algebra size to give a sufficient condition for definability. 

    \begin{claim}\label{claim:sub-maximal-definable}
        Let $\red b\in A$ be non-maximal and forest-sorted. Every  forest-sorted $ \red c \in A$ that is not reachable from $\red b$ is definable. 
    \end{claim}
    \begin{proof} Let $I$ be the set of elements in $A$ that are reachable from $\red b$.  This is an ideal, which means that if  $t \in \fmonad A$ contains at least one letter from $I$, then its multiplication is in $I$. Furthermore, this ideal contains at least two forest-sorted elements, by assumption that $\red b$ is non-maximal. Define $\sim$ to be the equivalence relation on $A$ which identifies two elements if they are equal, or both have the same sort and belong to the ideal $I$. Because $I$ is an ideal, $\sim$ is a  congruence. Because $\red b$ is non-maximal, the congruence is non-trivial, every forest-sorted equivalence class is definable thanks to the  induction assumption on algebra size. Because $\red c$ is not reachable from $\red b$, it does not not belong to the ideal $I$, and thus its equivalence class  consists of $\red c$ only, and hence $\red c$ is definable. \end{proof}

    We  use the above claim to show that, with at most two exceptions, all forest-sorted elements of $A$ are definable. 
    Call an element $\red b$ \emph{sub-maximal} if it is not maximal and  $\red c \red > \red b$ implies that $c$ is maximal.   If $\red c$ is neither maximal nor sub-maximal then  it is definable by the above claim, because  there is some sub-maximal $\red b$ from which $\red c$ is not reachable. For the same reason, if there are at least two sub-maximal elements, then all sub-maximal elements are definable.  In particular, if there are at least two sub-maximal elements, then all elements are definable: the non-maximal elements are all definable, and the maximal element is definable as the complement of the  remaining elements. 

    We are left with the case when there is exactly one sub-maximal element, call it $\red b$. We will show that the maximal element $\red a$ is definable, and therefore $\red b$ is definable (as the complement of the remaining elements).  Define the \emph{descendant forest} of a node  $x$ to be the forest that is obtained by keeping only the proper descendants of $x$.  Since we do not allow empty forests, the descendant forest is defined only when $x$ is not a leaf.   
    \begin{claim}\label{claim:ef-induction}
        A forest has type $\red a$ if and only if it contains a node $x$, with label $\sigma \in \Sigma$, such that  one of the following conditions hold:
        \begin{enumerate}
            \item $x$ is not a leaf, its  descendant  forest  has type $\red {d < b}$, and  $h(\sigma) \cdot \red d  \red = \red a$;  or
            \item $x$ is not a leaf, its descendant forest has type  $\red {d \ge b}$, and $h(\sigma) \cdot \red b  \red = \red a$; or
            \item the node $x$ is a leaf and $h(\sigma) = \red a$.
        \end{enumerate}
    \end{claim}
    \begin{proof}
        To prove the bottom-up implication, we observe  that each of the conditions (1, 2, 3) implies that the subtree of $x$ has type $\red a$; by  maximality of $\red a$ the entire forest must then also have type  $\red a$.  For conditions (1, 3) the observation is immediate. For condition  (2), there are two cases to consider: either  the descendant forest has type $\red a$ and the subtree of $x$ has type $\red a$ by maximality, or the descendant forest  has type $\red b$ and   the subtree of $x$ has type  $\red a$ by $h(\sigma) \cdot \red b  \red = \red a$. 
        
        We now prove the top-down implication. 
        We show that if every node in a forest violates all of the conditions   (1, 2, 3), then the forest has type  $\red{ \le b}$. This is proved by induction on the number of nodes. If the forest has only one node, then we use condition (3). The second case is when  the forest has at least two trees, i.e.~it can be decomposed as 
           $ \red {t = t_1 + t_2}.$
        By induction assumption, both $\red{t_1}$ and $\red{t_2}$ have types $\red{ b_1,b_2 \le \red b}$. By the identity  in the statement of the  theorem, we get 
        \begin{align*}
        \red {b =  b +  b_1 + b_2},
        \end{align*}
        which implies that $\red{b_1 + b_2 \le b}$.  The final case is when $\red t$ is a tree, whose root $x$ has label $\blue \sigma$ and descendant forest $\red s$.  By induction assumption, $\red s$ has a type $\red d \red \le \red b$.  If $\red {d < b}$ then we use condition (1) to infer that $\red t$ has type $\red \le \red b$, otherwise we use condition (2).
    \end{proof}
    To finish the proof of the theorem, it remains to show that the conditions in  the above claim can be expressed using transitive modal logic.  Condition (3) can easily be checked. In condition (1), the element $\red d$ is definable because it is neither maximal nor sub-maximal. Therefore, 
     there is a formula of modal logic which is true in the world corresponding to a node  $x$ (in the descendant Kripke model) if and only  if the descendant forest of $x$ has type $\red d$. Therefore, there is a formula of modal logic which is true in the world corresponding to $x$ if and only if it satisfies (1). For similar reasons, we can define condition (2), since the union of the languages for $\red a$ and $\red b$ is definable, by taking the complement of the remaining definable languages. 
\end{proof}

\exercisehead

\mikexercise{Consider transitive modal logic for the monad $\fomega$ of infinite forests and contexts, as discussed in Section~\ref{sec:forest-algebra}. Show that for infinite forests, the equations from Theorem~\ref{thm:modal-logic-transitive} are sound (i.e.~if a language is definable in transitive modal logic, then the syntactic algebra satisfies the equations) but not complete (i.e.~the converse implication to soundness fails). }{}

  \chapter{Hypergraphs of bounded treewidth}

In this chapter, we study  algebras for graphs. Although in principle the algebras can describe arbitrary graphs, the more interesting results will assume bounded treewidth.

\section{Graphs, logic, and treewidth}
We begin by discussing graphs, but later we will move to a slightly more general notion, called hypergraphs,  which will provide the necessary structure to define a monad. In this chapter, the graphs and hypergraphs are assumed to be finite. 

\begin{definition}[Graph]\label{def:graphs}
    A \emph{graph} consists of a set of a finite set of vertices, together with a binary symmetric edge relation. 
\end{definition}
Here is a picture of a graph, with dots representing vertices and lines representing edges:
\mypic{49}

We  use logic, mainly \mso, to define properties of graphs, with graphs represented as models according to the following definition.
\begin{definition}[Graph languages definable in \mso]\label{def:incidence-model}
    Define the \emph{incidence model} of a graph as follows. The universe is the disjoint union of the vertices and the edges, and there is a  binary \emph{incidence} relation, which is interpreted as 
    \begin{align*}
    \set{(v,e) : \text{vertex $v$ is incident with edge $e$}}.
    \end{align*}
\end{definition}
The two kinds of elements in the universe of the incidence model -- vertices and edges -- can be distinguished using first-order logic: an edge is an element of the universe that is incident to some vertex, the remaining elements of the universe are vertices.

Monadic second-order logic over the incidence model defined above, which is the main logic of interest in this chapter, is sometimes called  \msotwo. A related logic is monadic second-order logic over a representation of graphs where the universe consists only of the vertices, and there is a binary relation for the edges. The related logic is sometimes called \msoone. The difference is that \msotwo can quantify over sets of vertices and edges, while \msoone can only quantify over sets of vertices. (For first-order logic, the two ways of representing graphs as models does not affect the expressive power, since on first-order quantification over edges can be replaced by two first-order quantifications over vertices.)   
The difference between \msoone and \msotwo is explained in the following example.
\begin{myexample}\label{example:grid}
    A \emph{clique} is a   graph where every two vertices are connected by an edge.
A  \emph{rectangular grid} is a graph that looks like this:
    \mypic{63}
    Both cliques and rectangular grids can be defined both in  \msoone and in \msotwo. Consider now the set of graphs which are {cliques} of prime size. A clique has prime size if and only if it satisfies the following property: (*) one cannot remove edges so as to get a rectangular grid which has at least two rows and at least two columns. Property (*) can  be directly expressed in \msotwo, but it cannot be expressed in \msoone, see Exercise~\ref{ex:no-cliques-of-prime-size}.
\end{myexample}

In this chapter, we are mainly interested in monadic second-order logic. First-order logic can only define properties that are local\footnote{The notion of locality is made precise by the Gaifman Theorem, see
\incite[Theorem 2.5.1]{ebbinghausFlumFinite}
}, e.g.~the existence of  a cycle of length three:
\begin{align*}
\exists u\ \exists v \ \exists v  \qquad \ E(u,v)   \land E(v,w) \land E(w,u).
\end{align*}
A classical example of a property that is non-local, and therefore cannot be expressed in first-order logic, is graph connectivity.  Using an \ef argument, one can show that a sentence of first-order logic cannot distinguish between a large cycle and a disjoint union of two large cycles:
\mypic{50}
On the other hand, connectivity  can be expressed in monadic second-order logic, already in the \msoone model,  as witnessed by the following sentence
\begin{align*}
\myunderbrace{\exists X}{there is a set\\ \scriptsize of vertices,}
\qquad 
    \myunderbrace{(\exists v\ v \in X)\ \land (\exists v \  v \not \in X) }{which is neither empty nor full,} \ \land
    \myunderbrace{
        (\forall v \ \forall w  \ E(v,w) \land v \in X  \Rightarrow w \in X).}{and which is closed under taking edges}            
\end{align*}

Already first-order logic is undecidable on graphs, in the following sense: it is undecidable whether or not a sentence of first-order logic is true in some graph. This undecidability is explained in the following example.
\begin{myexample}\label{ex:grids-first-order-logic}
    Consider directed graphs with coloured vertices and edges. These extra features can be easily encoded, using  first-order logic, in the undirected and unlabelled graphs that are discussed in this section, see the exercises.  A computation of a Turing machine can be visualised as a coloured rectangular grid, where each vertex represents a tape cell in a given moment of the computation, as in the  following picture:
    \mypic{51}
    By formalising the definition of a computation of a Turing machine,  one can write a sentence of first-order logic, which is true in a connected graph if and only if it represents an accepting computation of a given Turing machine. From this, one can deduce that the  halting problem reduces to satisfiability for first-order logic over finite graphs, see Exercise~\ref{ex:halting-grid}.  
\end{myexample}

We will no longer discuss first-order logic for graphs. Also, from now on, when talking about \mso, we mean the \msotwo variant that uses the incidence model from Definition~\ref{def:incidence-model}\footnote{This difference is not so important in the context of this chapter. This because we will be mainly interested in graphs of bounded treewidth, and for bounded treewidth the logics \msoone and \msotwo models are equivalent, see Exercise~\ref{ex:msoone-msotwo-bounded-tw}}.

\exercisehead

\mikexercise{\label{ex:no-edges} Show that for graphs without edges, \msoone and \msotwo has the same expressive power as first-order logic.}{}

\mikexercise{\label{ex:no-cliques-of-prime-size} Show that the set of cliques of prime size from Example~\ref{example:grid} cannot be defined in \msoone. }{}

\mikexercise{\label{ex:grids-exercise} For the purposes of this exercise, we consider directed graphs with two types of edges, blue and red. For such a graph, the associated model has the vertices as the universe, and two binary predicates for the red and blue edges. 
Show that rectangular  grids, as described in Example~\ref{ex:grids-first-order-logic}, can be defined in first-order logic. We assume that the input graph has one connected component.
}{}

\mikexercise{\label{ex:halting-grid}Show that the following problem is undecidable: given a sentence of first-order logic, decide if it is true in at least one finite graph.}{Take the sentence from Exercise~\ref{example:grid}, and add a further requirement that there is at least one corner. }

\mikexercise{ Unlike for the rest of this chapter, this exercise and the next one consider possibly infinite graphs.
    Consider two decision problems: (a) is a sentence of first-order logic true in at least one  finite graph; (b) is a sentence of first-order logic true in at least one  possibly infinite graph. Show that (a) is recursively enumerable (there is a Turing machine that accepts yes-instances in finite time, and does not halt on no-instances), while (b) is a co-recursively enumerable (there is a Turing machine that does not halt on yes-instances, and rejects no-instances in finite time).}{}

    \mikexercise{Show that for every $k \in \set{1,2,\ldots}$ the following property of graphs is definable in \mso using the incidence model: ``the graph is connected, infinite, and has degree at most $k$''. Show that ``the graph is connected and infinite'' is not definable. }{}

\mikexercise{Show that the existence of a Hamiltonian cycle cannot be expressed in \mso,  using the \msoone representation of graphs as models.}{\begin{myexample}[Hamiltonian cycles]\label{ex:hamiltonian}
    Using \mso over the incidence model, we can say that a graph has a Hamiltonian cycle, i.e.~a  cycle that visits every vertex exactly once. A Hamiltonian cycle can be seen as a set of edges such that: (a) every vertex in the graph is incident to exactly two edges  one incoming and one outgoing edge from the set; (b) the subgraph induced by the chosen edges is  connected. 
    The incidence model is important here. In the \msoone model, where quantification over sets of edges is not allowed, one cannot express the existence of a Hamiltonian path\footcitelong[Proposition 5.13]{courcelleGraphStructureMonadic2012}.
\end{myexample}
}

\mikexercise{Show that the existence of an Euler cycle (every edge is visited exactly once) cannot be expressed in \mso, using the \msoone representation of graphs as  models.}{}

\mikexercise{Consider graphs which allow parallel edges (i.e.~multiple edges connecting the same two vertices). The incidence model makes sense for such graphs as well. For $\ell \in \set{1,2,\ldots}$ define the $\ell$-reduction of a graph to be the result the following operation: for each pair of vertices $v,w$ we only keep the first $\ell$ edges that go from $v$ to $w$. Show that for every \mso sentence $\varphi$ there is some $\ell$ such that $\varphi$ is true in a graph (with parallel edges) if and only if it is true in its $\ell$-reduction.
}{}

\subsection{Treewidth}
The undecidability problems described in Example~\ref{ex:grids-first-order-logic} are avoided if we consider graphs that are similar to trees. The notion of similarity that we care about is treewidth, as defined below\footnote{For an introduction to treewidth, including a brief history, see 
\incite[Section 12.]{diestel}
}.
\begin{definition}[Tree decompositions]\label{def:tree-decomposition}
    A \emph{tree decomposition}  consists of:
    \begin{itemize}
        \item a graph, called the \emph{underlying graph};
        \item a set of \emph{nodes}, equipped with a tree ordering (i.e.~there is a least node called the root, and for every node $x$, the set of nodes $<x$ is totally ordered);
        \item for each node, an associated nonempty set of vertices called its \emph{bag}.
    \end{itemize}
    These should satisfy the following constraints:
        \begin{enumerate}
            % \item \label{tree-decomp:tree} the  parent relation  induces a tree structure on nodes, i.e.~it is acyclic, every node   has at most one parent, and there is exactly one root (node without a parent);
             \item  \label{tree-decomp:cover}  every edge in the underlying graph is covered by some bag, i.e.~there is some bag that contains  both endpoints of the edge;
            \item   \label{tree-decomp:introduce} every vertex $v$ of the underlying graph is \emph{introduced} in exactly one node, which means there is exactly one node $x$ such that $v$ is in the bag of $x$ and either $x$ is  the root or $v$ is not in the bag of the parent of $x$.\end{enumerate}
\end{definition}
Here is a picture of a tree decomposition:
\mypic{112}
In the picture, the gray circles are bags, and the dotted lines connect appearances of the same vertex in several bags.
The \emph{width} of a decomposition is defined to be the maximal size of bags, minus one. For example, the tree decomposition in the above picture has width two, because its maximal bag size is three. The \emph{treewidth} of a graph is the minimal width of a tree decomposition for the graph. 

The reason for the minus one in the definition of width is so that trees, where the bags in the natural have tree decomposition have size two, get assigned treewidth one. This is illustrated in the following picture:
\mypic{55}
 Forests (i.e.~disjoint unions of trees) are the only graphs of  treewidth one. 

Cycles have treewidth two, as illustrated in the following example:
\mypic{56}
The tree decomposition in the above picture is a \emph{path decomposition}, i.e.~every node in the tree decomposition has at most one child. Path decompositions will play an important role in Section~\ref{sec:definabl-tree-decompositions}.

If a graph has $k+1$ vertices, then it has treewidth at most $k$, since one can always use a trivial tree decomposition where all vertices of the graph are in the same bag. For cliques, the trivial tree decomposition is optimal, as explained in the following example.

    \begin{myexample}
         We show that for cliques, every tree  decomposition must have a bag which contains all vertices. Consider a tree decomposition of a clique. If two vertices in a tree decomposition are connected by an edge, then the nodes which introduce these two vertices must be related by the ancestor relation (if they would be unrelated, then there could be no bag that contains both vertices). Therefore, in a tree  decomposition of a clique,  the nodes that introduce the clique vertices  must be linearly ordered by the ancestor relation. The maximal, i.e.~furthest from the root, node in this linear order must have all vertices of the clique in its bag.
    \end{myexample}

    Another example of graphs with unbounded treewidth is rectangular grids, see the exercises. In fact, the Grid Theorem\footnote{
        For a recent paper about the Grid Theorem, see
        \incite{ChuzhoyT19}
    }, which is stated but not proved in the exercises, says that a class of graphs has unbounded treewidth if and only if it contains all rectangular grids as minors. 
    We will show later in this chapter that for  every $k \in \set{1,2,\ldots}$, the class of graphs of treewidth at most $k$ has a decidable \mso theory. In the exercises we also discuss a corollary of the Grid Theorem, which says that decidability of \mso for bounded treewidth is  optimal:  if the \mso theory of a class of graphs is decidable, then the class has bounded treewidth.

\exercisehead

\mikexercise{
    We say that a graph $G$ is a \emph{minor} of a graph $H$ if one can find a family of disjoint vertex sets 
    \begin{align*}
    \set{X_v \subseteq \text{vertices of $H$}}_{v \in \text{vertices of $G$}}
    \end{align*}
    such that every set of vertices $U$ in $G$ satisfies:
    \begin{align*}
    \myunderbrace{\text{$U$ is connected in $G$}}
    {a subset of vertices is connected if \\
    \scriptsize the induced subgraph is connected} \qquad \text{implies} \qquad (\bigcup_{v \in U} X_v) \text{ is connected in $H$.}
    \end{align*}
    (It is enough to check the implication for sets $U$ with at most two vertices; and we assume that one vertex sets are connected).
    Show that if $L$ is a property of  graphs that is definable in \mso using the incidence model (we use the incidence model for the remaining exercises), then the same is true for  ``some minor satisfies $L$''.

}{}

\mikexercise{The Grid Theorem says that if a class of  graphs has unbounded treewidth, then for every $n \in \set{1,2,\ldots}$ there is some graph in the class which has an $n\times n$ grid as a minor. Using the Grid Theorem, prove that if a class of  graphs has decidable \mso theory, then it has bounded treewidth.
}{}

\mikexercise{Show a class of graphs that has undecidable \mso theory and bounded treewidth. }{ Graphs where the number of vertices belongs to some undecidable set of numbers.}
\mikexercise{Show that the $n \times n$ grid has treewidth at least $n$.}{}

\mikexercise{Show that for every language $L \subseteq \set a^*$ recognised in linear time by a (possibly nondeterministic) Turing machine, the language 
\begin{align*}
\set{ G : \text{$G$ is an $n \times n$ grid such that $a^n \in L$}}
\end{align*}
is definable in \mso.}{}

\mikexercise{Show that if a graph has treewidth at most $k$, then one can choose an orientation of its edges so that every vertex has at most $k+1$ outgoing edges. }{Take a tree decomposition, and direct each edge  so that it points toward the vertex that is introduced closer to the root. Under this orientation, the edges that point away from a vertex $v$ must necessarily point to vertices that are in the bag when $v$ is introduced, and there are at most $k+1$ such vertices (including $v$).}
\mikexercise{\label{ex:msoone-msotwo-bounded-tw} Show that for every $k \in \set{1,2,\ldots}$, the logics \msoone and \msotwo have the same expressive power for graphs of bounded treewidth.}{}{}

\section{The hypergraph monad and Courcelle's Theorem}
In this section, we introduce algebras for graphs. These algebras are defined in terms of  a monad  that describes graphs\footnote{This monad is based on the hyperedge replacement algebras of Courcelle. A discussion of these algebras can be found in
\incite[Section 2.3.]{courcelleGraphStructureMonadic2012}
The presentation of  hyperedge replacement that uses monads is based on
\incite{bojanczykTwoMonadsGraphs2018}
.}. In order to define the monad, we will need to add more structure to graphs, namely labels, directed hyperedges (i.e.~edges that connect a number of vertices that is not necessarily two), and distinguished vertices called ports.  We use the name \emph{hypergraph} for graphs with such extra structure. 

  Like any monad, the hypergraph monad will allow us to talk about  algebras, homomorphisms, terms, recognisable languages, syntactic algebras, etc. The main result of this section is  Courcelle's Theorem, which says that every graph property definable in \mso is necessarily recognisable. In the next Section~\ref{sec:definabl-tree-decompositions}, we prove a converse to Courcelle's Theorem, which  says that for bounded treewidth, recognisability implies definability in \mso.

% The monad uses hypergraphs instead of graphs, in order  substitution structure of a monad, we will need to add extra structure to graphs, namely hyperedges, labels, and ports. This leads us to the following notion of hypergraphs.

% In the notation of category theory, this category is 
% \begin{align*}
% \setcat^{\set{0,1,\ldots}}
% \end{align*}

% If $X$ is a ranked set and $i \in \set{0,1,\ldots}$, then we write $X[i]$ for the set of elements in $X$ that have arity $i$. 
\newcommand{\incfun}{\ranked{\text{incident}}}
\newcommand{\rankstar}[1]{{\color{black}#1}^{\ranked *}}
% If $V$ is a set (not ranked), then we write $\rankstar V$ for the ranked set whose $n$-ary elements are $V^n$.

We begin with a formal definition of hypergraphs.
\begin{definition}
    A \emph{hypergraph} consists of:
    \begin{itemize}
        % \item An \emph{arity}  $k \in \set{0,1,\ldots}$. Elements of $\set{1,\ldots,k}$ are called \emph{ports}.
        \item A set $V$ of \emph{vertices}.
        \item A set $E$ of \emph{hyperedges}. Each hyperedge has an associated arity in $\set{0,1,\ldots}$.
        \item A  set $\Sigma$  of \emph{labels}. Each label has an associated arity in $\set{0,1,\ldots}$.
        \item A non-repeating sequence of distinguished vertices called \emph{ports};
        \item For each hyperedge $e$, an associated  label in $\Sigma$ of same  arity, and a non-repeating sequence of \emph{incident vertices} whose length is the arity of $e$.
    \end{itemize}
\end{definition}
In the end, we care mainly about hypergraphs that have no ports, i.e.~the sequence of ports is empty, but the ports will appear when decomposing hypergraphs into parts.  We use the name  \emph{non-port vertices} for  vertices that are not ports.
 For a hyperedge $e$ of arity $n$, we write 
$e[1],\ldots,e[n]$ for the sequence of incident vertices, and use the name \emph{incidence list} for this sequence.
In this chapter, all hypergraphs are assumed to be finite, which means that there are finitely many vertices and hyperedges.  
% If $e$ is a hyperedge, then we write $e[i]$ for the $i$-th element of its incidence list.
% An \emph{isomorphism} between two hypergraphs of same arity consists of two bijections -- one for the vertices and one for the hyperedges -- which are consistent with the labelling and incidence lists in the natural way. We will identify hypergraphs up to isomorphism.
We draw hypergraphs  like this:
    \mypic{42}
To avoid clutter in the pictures,  we  skip the gray numbers on the edges and the numbers of the ports, in situations where  they are not important for the picture or implicit from the context. 
% \mypic{45}

% The hyperedges will be ranked, i.e.~each hyperedge will have an arity in $\set{0,1,\ldots}$. For this reason, the monad $\hmonad$ will use the category of ranked sets and arity preserving functions, as defined below. 

        A  graph can be represented  as a  hypergraph. The representing hypergraph has no ports, and the vertices are the same as in the graph. Each edge of the graph is represented by two binary hyperedges (with some fixed label), one in each direction.   Here is a picture:
        \mypic{48}
        Directed graphs can be represented in the same way, but with the hyperedges not necessarily using both opposing directions. 
        \paragraph*{The hypergraph monad.} We now describe the monad structure of hypergraphs. The main idea behind free multiplication is that  a hyperedge can be replaced by a hypergraph of matching arity\footnote{This is the reason why Courcelle uses the name \emph{hyperedge replacement} for the corresponding algebras. }.  This replacement, which will be the free multiplication in the monad, is illustrated in the Figure~\ref{ex:grids-first-order-logic}.

\begin{figure}[]
    \centering
    \mypic{43}
\mypic{44}
    \caption{Free multiplication in the hypergraph monad.}
    \label{fig:free-multiplication-hmonad}
\end{figure}

\begin{definition}
    [Hypergraph monad] The hypergraph monad, denoted by $\hmonad$, is defined as follows.
    \begin{itemize}
        \item The underlying category is the \emph{category of ranked sets}
        \begin{align*}
        \mathsf{Set}^{\set{0,1,\ldots}},
        \end{align*}
        which is the category of sorted sets, where the sort names are natural numbers.
        Objects  in this category are \emph{ranked sets}, i.e.~sets where every element has an associated arity in $\set{0,1,\ldots}$. Morphisms are arity-preserving functions between ranked sets.

        \item For a ranked set $\rSigma$, the ranked set  $\hmonad \rSigma$ consists of finite hypergraphs labelled by $\rSigma$, modulo isomorphism. The arity of a hypergraph is  the number of ports.
        \item For a  function $\ranked{f : \Sigma \to \Gamma}$,  the function $\ranked{\hmonad f : \hmonad \Sigma \to \hmonad \Gamma}$
applies $\ranked f$ to the  labels, without changing the rest of the hypergraph structure.
\item The unit operation in the monad associates to every  letter $a \in \rSigma$ of arity $n$  a  hypergraph  which has $n$ ports, no other vertices, and one    hyperedge labelled by $a$ which is incident to all ports (in increasing order).  Here is a picture:
\mypic{64}
\item Let $G \in \hmonad \hmonad \rSigma$ be a hypergraph labelled by hypergraphs. Its free multiplication  is defined as follows. The vertices are vertices of $G$, plus pairs $(e,v)$ such that $e$ is a hyperedge of $G$ and $v$ is a non-port vertex in the hypergraph $G_e$ that is the label of the hyperedge $e$.  The  hyperedges are pairs $(e,f)$, where $e$ is a hyperedge of $G$ and $f$ is a hyperedge in $G_e$. The arities and labels of hyperedges are inherited from the second coordinate, while the incidence lists are defined by 
\begin{align*}
(e,f)[i] = \begin{cases}
    f[i] & \text{if $f[i]$ is a non-port vertex}\\
    e[j] & \text{if $f[i]$ is the $j$-th port.}
\end{cases}
\end{align*}
% as follows. Consider  the  incidence lists
%     \begin{align*}
%     \myunderbrace{(v_1,\ldots,v_n)}{incidence list of $e$ in $G$} \qquad \qquad
%     \myunderbrace{(w_1,\ldots,w_m)}{incidence list of $f$ in the label of $e$}.
%     \end{align*}
%     The incidence list of $(e,f)$ is defined to be:
% \begin{align*}
% i \in \set{1,\ldots,m} \quad \mapsto \quad
% \begin{cases}
%     w_i & \text{if $w_i$ is not a port in $G_e$} \\
%     v_{w_i} & \text{otherwise.}
% \end{cases}
% \end{align*}
\end{itemize}
\end{definition}

We leave it as an exercise for the reader to check that the above definition satisfies the monad axioms. This completes the definition of the hypergraph monad. 

The hypergraph monad generalises the forest monad, as shown in the following example.
\begin{myexample}
     A forest can be represented as a hypergraph of arity one, as explained in the following picture:
\mypic{80}    
Nodes of forest type in the forest are represented by hyperedges of arity one, while nodes of context type are represented by hyperedges of arity two.

A context can be represented as a hypergraph of arity two:
\mypic{81}
This representation is consistent with the monad structures of the forest monad and the context monad. Therefore, we can think of the forest monad as being a sub-monad of the hypergraph monad (when we identify the forest sort with arity 1, and the context sort with arity 2). In particular, from every algebra of the hypergraph monad we can extract an algebra of the forest monad.
\end{myexample}

The rest of this section is devoted to discussing the algebraic notions that arise from the hypergraph monad, such as   algebras, homomorphisms, recognisability, and terms. 

\exercisehead

\mikexercise{
Show that $\hmonad$ satisfies the monad axioms.
}{
Let us focus on associativity of free multiplication:
\begin{align*}
\xymatrix @R=2pc @C=7pc { \hmonad \hmonad \hmonad \rSigma \ar[r]^{\text{free multiplication on $\hmonad \rSigma$}} \ar[d]_{\hmonad \text{(free multiplication on $\rSigma$)}} & \hmonad \hmonad \rSigma \ar[d]^{\text{free multiplication on $\rSigma$}} \\
\hmonad \hmonad \rSigma \ar[r]_{\text{free multiplication on $\rSigma$}} & \hmonad \rSigma
}
\end{align*}
Let $G \in \hmonad \hmonad \hmonad \rSigma$. If we apply to $G$ the functions on the down-right path  in the above diagram, then the resulting  hypergraph will have hyperedges  of the form
\begin{align*}
(e,(f,g)),
\end{align*}
where $e$ is a hyperedge of $G$, $f$ is a hyperedge in the label of $e$, and $g$ is a hyperedge in the label of $f$. If we apply the functions on the right-down path, then the resulting   hypergraph will have hyperedges of the form
\begin{align*}
((e,f),g),
\end{align*}
where the conditions on $e,f,g$ are the same as above. We will show that 
\begin{align*}
(e,(f,g)) \mapsto ((e,f),g)
\end{align*}
is going to be an isomorphism between the two portd hypergraphs. 
}

\mikexercise{Show that connected hypergraphs are also a monad.}

\subsection{Recognisable languages}
In this chapter, we are mainly interested in languages recognised by algebras in the hypergraph monad. 
We use the name \emph{hypergraph algebra} for such algebras. We  are especially  interested in languages recognised by hypergraph algebras that are finite in the sense that they are finite on every arity.

\begin{myexample}\label{ex:commutative-monoid-hypergraph-algebra}
    Let $M$ be a commutative monoid.  Define a hypergraph  algebra as follows. The underlying ranked set  $A$ has a copy of $M$ on each arity, i.e.~the underlying set is 
    \begin{align*}
    M \times \myunderbrace{\set{0,1,\ldots}}{the arity}.
    \end{align*}
    The multiplication operation in the hypergraph algebra inputs  a hypergraph  in $\hmonad A$ and outputs the multiplication -- in the monoid $M$ -- of all the labels of its hyperedges. Because the monoid is commutative, the order of multiplication is not important.  The result of this multiplication is viewed as an element of the copy of $M$ that corresponds to the arity of the input hypergraph.
    It is not hard to see that this operation is associative, i.e.~it satisfies the axioms of  Eilenberg-Moore algebras.
    
    The hypergraph algebra constructed this way can be used to  recognise some simple languages of hypergraphs.   Apply the above construction to   the commutative  monoid
    \begin{align*}
   M =  (\set{0,1},\lor),
    \end{align*}
    yielding a hypergraph algebra  $A$. This  algebra  recognises the  language 
    \begin{align*}
    \set{ G \in \hmonad \Sigma : \text{some  hyperedge has label in $\Gamma$}} \qquad \text{for ranked sets $\Gamma \subseteq \Sigma$.}
    \end{align*}
    The homomorphism maps a hypergraph to $1$ if it belongs to the language, and to $0$ otherwise, with the number stored in the copy of the monoid that matches the arity of the input graph.
    Another application of this construction is recognising  the language of hypergraphs with an even number of hyperedges;  here the appropriate monoid is the two element group.
\end{myexample}

The hypergraph algebras in the above example are infinite, but finite on every arity.  This is the best we can do in the hypergraph monad, because it is impossible for a hypergraph algebra to have an underlying set that is finite altogether. The reason is that the multiplication operation $\mu : \hmonad A \to A$ in a hypergraph algebra is arity-preserving, and  $\hmonad A$ is nonempty on every arity (as witnessed by hypergraphs without hyperedges). Therefore,  the underlying set of a hypergraph algebra must be  nonempty on every arity. 
 In the following definition, and for the rest of this chapter, we assume that  ``finite hypergraph algebras'' are those which have finitely many elements for each arity. 
\begin{definition}[Recognisable language of hypergraphs]
    We say that a language $L \subseteq \hmonad \rSigma$ is \emph{recognisable}  if it is recognised by a homomorphism into a hypergraph algebra which has finitely many elements on every arity.
\end{definition}
This definition will turn out to be not restrictive enough, as far as general hypergraphs are concerned, see Example~\ref{ex:cliques-of-prime-size}. In fact, no entirely satisfactory definition of ``finite algebra'' for general hypergraphs is known, and possibly does not exist. However, for hypergraphs of  bounded treewidth, hypergraph algebras that are finite on every sort will  be a satisfactory definition that is equivalent to \mso, as we will see in Section~\ref{sec:definabl-tree-decompositions}.

 In Example~\ref{ex:commutative-monoid-hypergraph-algebra}, we already saw some examples of recognisable languages of hypergraphs. Here are some more  examples.

\begin{myexample}\label{ex:paths-in-a-hypergraph}
    Define a \emph{path} in a hypergraph to be sequence of the form
    \begin{align*}
        v_0 \stackrel{e_1}\to v_1 \stackrel{e_2}\to \cdots \stackrel{e_{n-1}}\to v_{n-1} \stackrel{e_n}\to v_n,
    \end{align*}
    where $v_0,\ldots,v_n$ are vertices  and $e_1,\ldots,e_n$ are hyperedges, such that each hyperedge $e_i$ is incident with both $v_{i-1}$ and $v_i$. Note that the notion of path does not depend on the order of the incidence lists for the hyperedges. The \emph{source} of the path is the vertex  $v_0$, its \emph{target} is  the vertex $v_n$, and we say that  the path \emph{connects} the source with the target. A hypergraph is called \emph{connected} if every vertex  can be connected to every other vertex  via a path. Define $h$ to be the function 
    which maps a hypergraph to the following information: (a) its arity; (b) is there a pair of non-connected  vertices such that at least one of them is not a port; and (c) which pairs of ports are connected.   One can check that this function is compositional, and therefore its image can be equipped with the structure of a hypergraph algebra so that $h$ is a homomorphism.  The corresponding hypergraph algebra is finite on every arity. Therefore, the language of connected hypergraphs is recognisable.
\end{myexample}

\begin{myexample}
    \label{ex:three-colouring}
    In this example, we show that the language of $k$-colourable hypergraphs is recognisable.
    Define a \emph{$k$-colouring} of  a  hypergraph  to be a function from vertices to $\set{1,\ldots,k}$ 
    such that no hyperedge has an incidence list that   uses some  colour twice. (In particular,  all  hyperedges have arity at most $k$.) Define $h$ to be the function which maps a hypergraph to the following information: (a) its arity; (b) which functions from the ports to $\set{1,\ldots,k}$ can be extended to $k$-colourings. If the hypergraph has arity zero, then (b) is just one bit of information: is there a $k$-colouring or not. This function is compositional, and has finitely many values for each arity, and therefore the language of $k$-colourable hypergraphs is recognisable. 
\end{myexample}

The following example illustrates a problem with of our notion of recognisability, which is that it allows for too many algebras, at least as long as hypergraphs of unbounded treewidth are allowed.
\begin{myexample}
    \label{ex:cliques-of-prime-size}
    We say that a hypergraph is a \emph{clique} if every two vertices are adjacent (i.e.~connected by some hyperedge).  Let 
    \begin{align*}
    P \subseteq \set{0,1,\ldots}
    \end{align*}
    be any set of natural numbers, possibly undecidable.  We will show that  the  language ``cliques with no ports, where the number of vertices is in $P$'' is recognisable.  
Let $h$ be the function which maps a hypergraph to the following information: (a) its arity; (b) is there a pair of non-adjacent vertices such that at least one of them is not a port; (c) which ports are adjacent.  If the arity is zero, then $h$ also stores: (d) is the number of vertices in $P$. This function is compositional, and has finitely many values for each arity, and therefore the language  ``cliques whose size is in $P$'' is recognisable.  \end{myexample}

As we will see later on, the problem from the above example will disappear once we restrict attention to hypergraphs of bounded treewidth. 

\exercisehead

\mikexercise{Show that every recognisable language in the hypergraph monad has a syntactic algebra. }{}

\subsection{Terms and tree decompositions.} Tree decompositions and treewidth can be naturally extended to hypergraphs, as formalised in Definition~\ref{def:treewidth-hypergraphs} below, and illustrated in the following picture:
\mypic{111}
In this section we discuss an alternative perspective on treewidth, which is defined using monad terminology.

\begin{definition}[Tree decompositions for hypergraphs]
    \label{def:treewidth-hypergraphs} Tree decompositions are defined for hypergraphs in the same way as for graphs, with the following differences: (a) for every hyperedge there must be some bag which contains  its entire incidence list (we say that such a bag \emph{covers} the hyperedge); (b)  every port  of the hypergraph appears in the root bag.
\end{definition}

  As before, the width of a tree decomposition is the maximal bag size minus one, and the treewidth of a hypergraph is the minimal width of a tree decomposition.  For hypergraphs which represent graphs (i.e.~no ports, and every edge is represented by two binary hyperedges in opposing directions), the above  definition coincides with Definition~\ref{def:tree-decomposition}.

A bag in a tree decomposition can cover an unbounded number of hyperedges. This will not be a problem for our intended applications, since the properties of hypergraphs that we study will not depend in an important way on parallel hyperedges (i.e.~hyperedges with the same incidence lists).

\paragraph*{Tree decompositions as terms.}
  The algebraic structure of the hypergraph monad can be used to give an alternative description of treewidth.  
Recall the notion of terms from  Section~\ref{sec:terms}: a term  over variables $X$ is any element of $\hmonad X$. As was the case for the forest monad,  terms in the  hypergraph monad are sorted, which means that each variable used by the term has an arity, and the term itself has an arity. If $A$ is a hypergraph algebra, then a term $t \in \hmonad X$  induces a \emph{term operation} $t^A§$ and defined by
\begin{align*}
\myunderbrace{\eta \in A^X}{an arity-preserving\\
\scriptsize valuation of the variables} \qquad  \mapsto \qquad  \myunderbrace{\text{multiplication in $A$ applied to $(\hmonad \eta)(t)$.}}{an element of the hypergraph algebra $A$, \\ \scriptsize whose arity is the same as the arity of $t$}
\end{align*}
§§As was the case for forest algebra,   term operations  are in general  not   arity-preserving, if only because their inputs do not have a well-defined arity.

Since a term is a hypergraph, it has some treewidth. The following lemma shows that hypergraphs of treewidth at most $k$ are closed under applying (term operations induced by) terms of treewidth at most $k$. The  hypergraph algebra used in the lemma is the  free hypergraph algebra.
\begin{lemma}\label{lem:treewidth-closed-under-terms}
    Let $t \in \hmonad X$ be a term and let $\eta \in (\hmonad \Sigma)^X$ be a valuation of its variables. If the hypergraphs $t$ and $\set{\eta(x)}_{x \in X}$ have treewidth at most $k$, then the same is true for the result of applying the term operation $t^{\hmonad \Sigma}$  to $\eta$.
\end{lemma}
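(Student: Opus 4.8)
The plan is to construct an explicit tree decomposition of the result of free multiplication by combining the tree decomposition of the term $t$ with the tree decompositions of the hypergraphs $\eta(x)$. Recall that the result, call it $G = t^{\hmonad \Sigma}(\eta)$, is obtained by replacing each hyperedge $e$ of $t$ (labelled by some variable $x \in X$) with the hypergraph $\eta(x)$, gluing the ports of $\eta(x)$ to the incidence list $e[1],\ldots,e[n]$. So the vertices of $G$ are the vertices of $t$, together with the non-port vertices of each $\eta(x)$ occurring at each hyperedge.

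First I would fix a width-$k$ tree decomposition $(T_t, \set{B_u}_u)$ of $t$, and for each variable $x$ a width-$k$ tree decomposition $(T_x, \set{B^x_v}_v)$ of $\eta(x)$ whose root bag contains all ports of $\eta(x)$ (the latter is guaranteed by Definition~\ref{def:treewidth-hypergraphs}). For each hyperedge $e$ of $t$ with label $x$, pick a node $u_e$ in $T_t$ whose bag $B_{u_e}$ covers $e$, i.e.\ contains all of $e[1],\ldots,e[n]$; such a node exists by the covering constraint. Now build a new tree decomposition $(T, \set{B_w}_w)$ of $G$: start from $T_t$ with the same bags (viewing the vertices of $t$ as vertices of $G$), and for each hyperedge $e$ of $t$ with label $x$, attach a fresh copy of $T_x$ as a subtree hanging below $u_e$. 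In this attached copy, each bag $B^x_v$ is modified by substituting, for every port of $\eta(x)$ appearing in $B^x_v$, the corresponding glued vertex $e[i]$ of $t$; the non-port vertices of $\eta(x)$ keep their own fresh identities. Since the root bag of $T_x$ contains all ports of $\eta(x)$, after substitution it is a subset of $B_{u_e}$ (which contains all the $e[i]$), so the connectivity condition across the attachment edge is respected.

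Then I would verify the three conditions of Definition~\ref{def:treewidth-hypergraphs} for $(T, \set{B_w}_w)$. Width: each original bag $B_u$ has size $\le k+1$; each substituted bag from a copy of $T_x$ has the same size as $B^x_v \le k+1$, since substitution is injective (ports map to distinct glued vertices, which are distinct from the fresh non-port vertices). Covering of hyperedges: every hyperedge of $G$ is a pair $(e,f)$ with $f$ a hyperedge of $\eta(x)$; it is covered by the substituted image of a bag of $T_x$ covering $f$. Ports of $G$: the ports of $G$ are exactly the ports of $t$, which appear in the root bag of $T_t$, which is also the root bag of $T$. Connectivity/introduction of vertices: for each vertex the set of bags containing it must form a connected subtree. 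For vertices of $t$ this requires care — a vertex $v$ of $t$ appears both in the original bags of $T_t$ containing it and in substituted copies of $T_x$ for hyperedges $e$ incident to $v$ with $u_e$ possessing $v$; these copies hang below $u_e$, and $u_e$ is among the original bags containing $v$, so connectivity is preserved. For non-port vertices of the $\eta(x)$'s, their bags live entirely within one attached copy of $T_x$ and inherit connectivity from $T_x$.

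I expect the main obstacle to be the bookkeeping around the connectivity condition for vertices of $t$: one must check that whenever a hyperedge $e$ of $t$ is incident to a vertex $v$, the chosen covering node $u_e$ actually has $v$ in its bag (true, since $B_{u_e}$ covers $e$ hence contains all of $e$'s incidence list, which includes $v$), and that the attached subtree $T_x$ indeed hangs off a node where $v$ is present — so the union of ``old'' bags containing $v$ and ``new'' substituted bags containing $v$ stays connected. A secondary subtlety is that a vertex of $t$ might be glued to ports of several different $\eta(x)$'s (when several hyperedges are incident to it), so one must make sure all the relevant attachments are below nodes containing that vertex; choosing $u_e$ to cover $e$ handles this uniformly. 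Everything else is routine, so I would not belabour it; the construction itself is the content, and the verification is a finite check against the three axioms.
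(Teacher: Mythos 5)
Your proof is correct and takes essentially the same approach as the paper: fix a tree decomposition of $t$, and for each hyperedge $e$ labelled by a variable $x$, attach a tree decomposition of $\eta(x)$ as a new subtree hanging off a node whose bag covers $e$. The paper states this in one sentence and leaves the verification implicit; your write-up supplies the bookkeeping (substituting ports by glued vertices, checking width, coverage, ports, and the introduction/connectivity condition), all of which is correct.
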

\begin{proof}
    Take a tree decomposition for the term $t$. For every hyperedge $e$ which is labelled by a variable $x$, find a node of the tree decomposition whose bag contains the incidence list of the hyperedge, remove the hyperedge,  and add a child to this node with a tree decomposition of $\eta(x)$.
\end{proof}

A corollary of the above lemma is that there is a well-defined monad for hypergraphs of treewidth at most $k$. 
% \begin{definition}[Monad of hyperegraphs of treewidth at most $k$] Define $\hmonad_k$ to be the following monad. The underlying category is
%     \begin{align*}
%     \mathsf{Set}^{\set{0,\ldots,k+1}}
%     \end{align*}
%     ranked sets with arities at most $k+1$. For a ranked   $\Sigma$, the     
% \end{definition}
This monad, call it $\hmonad_k$, uses only hypergraphs with treewidth at most $k$, with all the monad structure inherited from $\hmonad$.  The underlying category is ranked sets with arities at most $k+1$; since hypergraphs with bigger arities will have treewidth at least $k+1$. 

\paragraph*{The treewidth terms.} We now show a family of terms which can be used to generate all hypergraphs of given treewidth.  Define the \emph{treewidth terms} to be the terms\footnote{There is an  inconsistency in our use of the words ``introduce'' and ``forget''. When we say that a node in a tree decomposition introduces a vertex, we take a top-down perspective on tree decompositions. On the other hand, the name of the ``forget'' term in Figure~\ref{fig:treewidth-terms} is based on a bottom-up perspective of the same phenomenon. } from Figure~\ref{fig:treewidth-terms}. For  a hypergraph algebra, §define its \emph{treewidth $k$ operations} to be the term operations induced in the algebra by treewidth terms that have treewidth at most $k$. 
The following theorem shows that the treewidth terms can be used to generate all hypergraphs of given treewidth.
%  The proof of the theorem is a straightforward transformation of tree decompositions into terms; but it allows us to develop some terminology on tree decompositions that will be used in more advanced results from Section~\ref{sec:definabl-tree-decompositions}.

\newcommand{\treewidthterm}[3]{
    
    \begin{minipage}{7cm}
       {\it #1. } #2        
        \end{minipage}  \quad
        \begin{minipage}{3cm}
         \mypic{#3}   
        \end{minipage}\\\\
}
\noindent

\begin{figure}
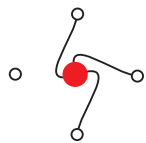

    \centering
    \begin{tabular}{l}
        \treewidthterm{Forgetting} 
        {Let $x$ be a variable of arity $k+1$.  The \emph{$k$-forgetting term} is defined  to be the hypergraph  in $\hmonad \set x$ which has  ports $\set{1,\ldots,k}$, one non-port vertex $v$, and one hyperedge  with label $x$ and incidence list $(1,\ldots,k,v)$.
        }{59}
        \treewidthterm{Fusion} 
        {Let   $x,y$ be two variables of arity $k$. The \emph{$k$-fusion term} is defined to be the hypergraph in $\hmonad \set{x,y}$ which has $k$ ports, no vertices, and two hyperedges with labels $x$ and $y$ and incidence list  $(1,\ldots,k)$.
        }{57    }
        \treewidthterm{Rearrangement} 
        {Let $f : \set{1,\ldots,k} \to \set{1,\ldots,\ell}$ be an injective function. Let $x$ be a  variable of arity $k$. The \emph{$f$-rearrangement term}  is defined to be the hypergraph in $\hmonad \set x$ which has $\ell$ ports, no vertices, and one hyperedge with label $x$ and incidence list $(f(1),\ldots,f(k))$.
        }{58}
    \end{tabular}
    \caption{ The treewidth terms. The parameters $k, \ell$ are from $\set{0,1,\ldots}$. In the pictures, the ports are ordered clockwise from top, and the same is true for the vertices incident to a hyperedge.  }
    \label{fig:treewidth-terms}
\end{figure}

% \begin{myexample}[Forgetting]\label{example:forgetting}
%     Consider the following term, which we call $0$-forgetting. It uses one variable $x$ of arity 1, and looks like this:
%     \mypic{60}
%     The term operation induced by this term inputs a hypergraph of arity 1, and returns the hypergraph of arity 0 which is obtained by converting the port into a vertex.  This term can be generalised to higher arities. Let $k \in \set{0,1,\ldots}$, and let $x$ be variable of arity  $k+1$. 
% \end{itemize}

\begin{theorem} \label{thm:treewidth-in-terms-of-terms} Let $k \in \set{1,2,\ldots}$. A hypergraph has treewidth  at most $k$ if and only if it can be generated (in the free hypergraph algebra) from  hypergraphs with no vertices and at most $k+1$ ports,  by applying treewidth $k$ operations.
\end{theorem}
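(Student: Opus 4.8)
I would prove the two implications separately; the direction from ``generated by treewidth-$k$ operations'' to ``treewidth at most $k$'' is easy, and the converse is the substantial one. For the easy direction, suppose $G$ is obtained from base hypergraphs $B_1,\dots,B_m$ (each with no non-port vertices and at most $k+1$ ports, hence of treewidth at most $k$) by repeatedly applying term operations coming from treewidth terms of treewidth at most $k$. Every such step is an instance of the situation in Lemma~\ref{lem:treewidth-closed-under-terms}: evaluating a term of treewidth at most $k$ on arguments of treewidth at most $k$ yields a hypergraph of treewidth at most $k$. An induction on the structure of the generating expression then gives that $G$ has treewidth at most $k$. Here one also checks that the three treewidth terms of Figure~\ref{fig:treewidth-terms}, under the stated arity restrictions, really do have treewidth at most $k$: the $k$-forgetting term has the single bag $\{1,\dots,k,v\}$ of size $k+1$, and the $m$-fusion term (with $m\le k+1$) and the rearrangement term onto $\ell\le k+1$ ports are each covered by a single bag, of size $m$ resp.\ $\ell$.

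\textbf{The hard direction.} I would translate a tree decomposition into a generating expression. Fix a tree decomposition of $G$ of width at most $k$ in the sense of Definition~\ref{def:treewidth-hypergraphs}, and fix an assignment $e\mapsto\nu(e)$ sending each hyperedge to some node whose bag covers it. To each node $x$ associate the hypergraph $G_x$ whose ports are the vertices in $\mathrm{bag}(x)$ (listed in some fixed order), whose non-port vertices are the vertices introduced at strict descendants of $x$, and whose hyperedges are $\{e : \nu(e)\text{ lies in the subtree of }x\}$; here one uses the connectivity property built into Definition~\ref{def:tree-decomposition} (the set of nodes whose bag contains a given vertex is a subtree) to check that $G_x$ is well defined, i.e.\ that every endpoint of every hyperedge of $G_x$ is a port or a non-port vertex of $G_x$. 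By induction on the size of the subtree rooted at $x$ I would show that $G_x$ is generated from base hypergraphs by treewidth-$k$ operations. For a leaf $x$: start from the edge-free hypergraph on $|\mathrm{bag}(x)|\le k+1$ ports, and for each hyperedge $e$ with $\nu(e)=x$ rearrange the base hypergraph $\mathrm{unit}(\text{label of }e)$ (of arity $\le|\mathrm{bag}(x)|$) onto the port positions prescribed by the incidence list of $e$ and fuse it in. For an internal node $x$ with children $y_1,\dots,y_r$: first, in each $G_{y_i}$ forget the vertices lying in $\mathrm{bag}(y_i)$ but not in $\mathrm{bag}(x)$ (by connectivity these never reappear higher up, so this is legitimate), each forgetting step reducing the arity from some $j+1\le k+1$ by a $j$-forgetting term; then rearrange the result onto $\mathrm{bag}(x)$; then fuse together all the rearranged $G_{y_i}$, the edge-free hypergraph on $\mathrm{bag}(x)$, and, as in the leaf case, the units for the hyperedges $e$ with $\nu(e)=x$. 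Since $|\mathrm{bag}(x)|\le k+1$, all fusion and rearrangement terms used here have treewidth at most $k$. Finally, at the root $r$ the hypergraph $G_r$ equals $G$ except possibly for extra ports: forget every vertex of $\mathrm{bag}(r)$ that is not a port of $G$ (the definition of a tree decomposition forces all ports of $G$ to lie in $\mathrm{bag}(r)$), and then rearrange into the prescribed port order; this produces $G$ itself.

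\textbf{Main obstacle.} The step requiring the most care, and the one I expect to be the real obstacle, is the bookkeeping in the internal-node case: verifying precisely that the vertex set and hyperedge set produced by the chosen sequence of forget/rearrange/fuse operations is exactly $G_x$, and that each operation stays within treewidth $k$. The treewidth bound is in the end a consequence of only two facts, which I would isolate as small auxiliary observations: bags have size at most $k+1$, and for each vertex the set of bags containing it is connected. The first fact also guarantees every hyperedge has arity $\le k+1$, so all the units used really are admissible base hypergraphs. One can streamline the whole induction by first normalising the tree decomposition to a \emph{nice} one — every node is a leaf, introduces a single vertex, forgets a single vertex, or is the join of two children with identical bags — after which a leaf, introduce, forget, and join node correspond respectively to a base hypergraph, a rearrangement term, a forgetting term, and a fusion term, and the inductive step becomes essentially immediate.
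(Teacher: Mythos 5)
Your proof is correct and takes essentially the same approach as the paper: the easy direction via Lemma~\ref{lem:treewidth-closed-under-terms}, and the hard direction by a structural induction over the tree decomposition, building the hypergraph bottom-up with forget/rearrange/fuse operations (the paper phrases this via a mild normalisation of the decomposition followed by a ``simple induction on the number of nodes'', which is exactly the nice-tree-decomposition streamlining you sketch at the end).
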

\begin{proof}
    The right-to-left implication follows from Lemma~\ref{lem:treewidth-closed-under-terms}.
    
    Consider now the left-to-right implication.
    Every tree decomposition can easily be modified, without affecting its width, into a tree decomposition which satisfies: (*) the root bag contains the ports and no other vertices, and if a node has at least two children, then the node and all of its children have the same bag.  To ensure condition (*), we can insert an extra node on every parent-child edge which has the same bag as the parent. By a simple induction on the size number of nodes, one shows that for every width $k$ tree decomposition satisfying (*), the underlying hypergraph can be generated using the treewidth $k$ operations as in the statement of the lemma. 
\end{proof}

Using the above theorem, and the same  argument as in Theorem~\ref{thm:shelah-operations}, we get the following corollary, which gives a finite representation for algebras in the monad $\hmonad_k$ of hypergraphs of treewidth at most $k$. 

\begin{corollary}
    Let $k \in \set{1,2,\ldots}$ and consider the monad $\hmonad_k$ of hypergraphs with treewidth at most $k$. The multiplication operation in an algebra over this monad is uniquely determined by its treewidth $k$ operations.
\end{corollary}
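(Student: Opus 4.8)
The plan is to mimic exactly the proof of Theorem~\ref{thm:shelah-operations}, which was carried out for the \lale operations on $\cc$-semigroups by reducing everything to a ``generation'' statement (Lemma~\ref{lem:shelah-subalgebra}). Here the generation statement has already been supplied: Theorem~\ref{thm:treewidth-in-terms-of-terms} tells us that in the free algebra $\hmonad_k \rSigma$, the subalgebra generated by a subset $\rSigma$ (more precisely, by the hypergraphs with no vertices and at most $k+1$ ports, which are exactly the units together with the arity-$\le k{+}1$ ``basic'' hypergraphs) is the closure of $\rSigma$ under the treewidth $k$ operations. So the corollary follows from this by a pure diagonal-subalgebra argument, word for word as in the proof of Theorem~\ref{thm:shelah-operations} given in the excerpt.

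Concretely, first I would fix $k$ and suppose $A_1$ and $A_2$ are two $\hmonad_k$-algebras with the same underlying ranked set $A$ (finite on each arity $\le k{+}1$) whose multiplication operations induce the same treewidth $k$ operations, i.e.\ $t^{A_1}=t^{A_2}$ for every treewidth term $t$ of treewidth at most $k$. The goal is to show the two multiplication operations $\mu_1,\mu_2:\hmonad_k A\to A$ coincide. Form the product algebra $A_1\times A_2$ in $\hmonad_k$, defined coordinatewise (here one uses that products are computed coordinatewise in any category of sorted sets, hence the treewidth $k$ operations of $A_1\times A_2$ are the coordinatewise pairs of those of $A_1$ and $A_2$). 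Consider the diagonal
\begin{align*}
\Delta = \set{(a,a) : a \in A} \subseteq A_1 \times A_2,
\end{align*}
taken on each arity. Since $t^{A_1}=t^{A_2}$ for every treewidth term of treewidth $\le k$, the diagonal $\Delta$ is closed under all treewidth $k$ operations of $A_1\times A_2$; it also contains all the arity-$\le k{+}1$ ``no-vertex'' hypergraphs diagonally (because the units and basic hypergraphs are multiplied identically in $A_1$ and $A_2$, again by the hypothesis applied to the relevant rearrangement/fusion terms and to the units). By Theorem~\ref{thm:treewidth-in-terms-of-terms}, the subalgebra of $A_1\times A_2$ generated by the diagonal set of arity-$\le k{+}1$ no-vertex hypergraphs is the closure of that set under the treewidth $k$ operations — which we have just argued stays inside $\Delta$. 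Hence the subalgebra generated by the diagonal is contained in $\Delta$, so it equals $\Delta$. But the subalgebra generated by a set $\Sigma$ in a free algebra is exactly $\set{\mu(w) : w \in \hmonad_k \Sigma}$; applying this with $\Sigma=A$ (viewed diagonally) shows that for every $w \in \hmonad_k A$ the element $\mu_{A_1\times A_2}(w)=(\mu_1(w),\mu_2(w))$ lies in $\Delta$, i.e.\ $\mu_1(w)=\mu_2(w)$. Thus the multiplication is determined by the treewidth $k$ operations.

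The only mild subtlety — and the step I would be most careful about, though it is not really an obstacle — is matching the phrase ``hypergraphs with no vertices and at most $k+1$ ports'' in Theorem~\ref{thm:treewidth-in-terms-of-terms} with the generating set used in the subalgebra argument: one must note that every such hypergraph is itself obtained from units of arity $\le k{+}1$ by a single fusion/rearrangement step (it is a multiset of hyperedges all incident to a subset of a fixed list of $\le k{+}1$ ports with no other vertices), so ``generated from no-vertex hypergraphs by treewidth $k$ operations'' and ``generated from units by treewidth $k$ operations'' describe the same subalgebra, and the diagonal of this generating set is clearly fixed by the hypothesis. With that observation in place, the argument is identical to the proof of Theorem~\ref{thm:shelah-operations} already in the text, so I would simply write ``Using Theorem~\ref{thm:treewidth-in-terms-of-terms} in place of Lemma~\ref{lem:shelah-subalgebra}, the proof is the same as for Theorem~\ref{thm:shelah-operations}'' and then spell out the two displays above for completeness.
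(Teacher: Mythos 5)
Your approach is the same as the paper's: form the product $A_1\times A_2$, take the diagonal, and invoke Theorem~\ref{thm:treewidth-in-terms-of-terms} in the role that Lemma~\ref{lem:shelah-subalgebra} played for $\cc$-semigroups. The skeleton is right, and you correctly identify that the only thing needing care is matching the starting points of Theorem~\ref{thm:treewidth-in-terms-of-terms} (hypergraphs with no non-port vertices and at most $k+1$ ports) against the generating set that actually sits in the diagonal.

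Your resolution of that point, however, contains a genuine error. You claim that every hypergraph with no non-port vertices and at most $k+1$ ports is obtained from units by fusion and rearrangement. This fails when the hypergraph has no hyperedges: a unit has exactly one hyperedge, rearrangement preserves the hyperedge set, and fusion takes the disjoint union of the two hyperedge sets, so anything built from units by these operations has at least one hyperedge. In particular, for each $n\le k+1$ the hypergraph consisting of $n$ ports and nothing else is among the starting points of Theorem~\ref{thm:treewidth-in-terms-of-terms} but is not generated from units, so your sentence ``the diagonal of this generating set is clearly fixed by the hypothesis'' does not cover it. The repair is short: if $I_n$ denotes this hypergraph, then associativity forces $\mu(I_n)$ to be a neutral element for the arity-$n$ fusion, since $\mu(I_n) + \mu(w) = \mu(I_n + w) = \mu(w)$ for every $w$ of arity $n$; neutral elements in a commutative semigroup are unique, so once the fusion operations of $A_1$ and $A_2$ agree, $\mu_1(I_n)=\mu_2(I_n)$ is forced. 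Add that sentence and your diagonal argument closes; without it, the no-hyperedge starting points are unaccounted for.
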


\subsection{Courcelle's Theorem}
In this section, we prove Courcelle's Theorem\footcitelong[Theorem 4.4]{courcelleMonadicSecondorderLogic1990}, which says that all languages definable in \mso are recognisable. 
To define properties of hypergraphs in \mso, we use a hypergraph version of  the incidence model,  defined as follows. 
\begin{definition}
    [Incidence model] The incidence model of a hypergraph is defined as follows. The universe is vertices and hyperedges, 
    and it is equipped with the following relations:
    \begin{align*}
    \myunderbrace{e[i]=v}{$v$ is the $i$-th\\
    \scriptsize vertex incident to $e$}
    \qquad 
    \myunderbrace{\mathrm{port}_i(v)}{$v$ is the \\ \scriptsize $i$-th port}
    \qquad 
    \myunderbrace{a(e)}{hyperedge $e$\\
    \scriptsize has label $a$.}.
    \end{align*}
    The arguments of the relations are $e$ and $v$, while $i \in \set{1,2,\ldots}$ and $a \in \Sigma$ are parameters. Each choice of parameters gives a different relation.
\end{definition}
As was the case for forest algebra, recognisability holds also for \emph{counting \mso}, which extends \mso by allowing the following form of modulo counting: for every $n \in \set{2,3,\ldots}$ and every $\ell \in \set{0,1,\ldots,n-1}$ there is a predicate
\begin{align*}
|X| \equiv \ell \mod n,
\end{align*}
which inputs a set and says if the size of this set is congruent to $\ell$ modulo $n$.  The modulo counting predicate is a second-order predicate, since it inputs a subset of the universe, and not an element (or tuple of elements) in the universe. 
Counting \mso  is more powerful than  \mso without counting, e.g.~``the number of vertices is even'' can be defined in counting \mso but not in \mso, see Exercise~\ref{ex:no-edges}.   As we will show in Section~\ref{sec:definabl-tree-decompositions}, counting \mso is enough to describe all recognisable properties of hypergraphs, assuming bounded treewidth.

\begin{theorem}[Courcelle's Theorem]
    If a language  $L \subseteq \hmonad \Sigma$ is definable in counting \mso, over the incidence model, then it is recognisable, i.e.~recognised by a homomorphism
    into a hypergraph algebra that is finite on every arity.
\end{theorem}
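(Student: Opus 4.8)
The plan is to mimic the proof of the Trakhtenbrot--B\"uchi--Elgot Theorem (Lemma~\ref{lem:bet-induction}) and its $\cc$-word analogue (Theorem~\ref{thm:mso-to-cc}), doing an induction on the structure of the counting \mso formula. As in those proofs, the first move is to reduce to a setting with only set variables: first-order variables are coded as singleton sets, and the atomic relations $e[i]=v$, $\mathrm{port}_i(v)$, $a(e)$ together with the counting predicates $|X|\equiv\ell\bmod n$ are re-expressed using second-order predicates over sets. The language of a formula $\varphi(X_1,\dots,X_m)$ with free set variables is, as usual, a language over the ranked alphabet $\rSigma\times\set{0,1}^m$ (a bit-vector on each vertex/hyperedge recording membership in each $X_i$); here one must be a little careful because the incidence model has two kinds of elements, so the "marking" alphabet has to mark both hyperedges and vertices --- but vertices carry no label in a hypergraph, so one enlarges the notion of labelled hypergraph to allow vertices to carry a bit-vector as well, or equivalently one works with hypergraphs over $\rSigma$ together with a monadic colouring of vertices. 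Either way this is bookkeeping, not a real obstacle.

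With that reduction in place, the induction has the familiar shape. For each atomic formula one exhibits a concrete finite-on-every-arity hypergraph algebra recognising its language; these are built from the commutative-monoid construction of Example~\ref{ex:commutative-monoid-hypergraph-algebra} (for "some hyperedge satisfies a local condition" and for parity/modulo counting of hyperedges or vertices) together with small hand-crafted algebras for the incidence relations $e[i]=v$ and $\mathrm{port}_i(v)$. Boolean combinations are handled by products of algebras with the accepting set adjusted (intersection for $\wedge$, complementation of the accepting set for $\neg$), exactly as in Lemma~\ref{lem:bet-induction}; here one uses that finite-on-every-arity algebras are closed under finite products, which is immediate since products are computed arity-wise. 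Existential set quantification $\exists X_m\,\varphi$ is handled by the powerset construction: if $h:\hmonad(\rSigma\times\set{0,1}^m)\to A$ recognises $\varphi$, and $\pi$ is the arity-preserving homomorphism erasing the last bit from every label, then $H(G)=\set{h(G'):\pi(G')=G}$ takes values in $\powerset A$, and one checks $H$ is compositional, so by the two-sorted version of Lemma~\ref{lem:monad-compositional} its image carries an algebra structure making $H$ a homomorphism. The accepting set is $\set{B\subseteq A: B\cap F\neq\emptyset}$. Crucially $\powerset$ preserves finiteness on each arity, so the induction stays within finite-on-every-arity algebras.

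The main obstacle --- and the only place where the hypergraph monad genuinely differs from the word monads --- is verifying that the powerset construction actually yields a well-defined algebra in the hypergraph monad, i.e.\ that the candidate operation $\mu_{\powerset A}:\hmonad\powerset A\to\powerset A$ defined by $t\mapsto\set{\mu_A(s):s\in^{\hmonad}t}$ is associative. This is not automatic for arbitrary monads (cf.\ Exercise~\ref{ex:powerset-algebra}), so one cannot cite a general powerset lemma; the cleanest route is the one already used implicitly in Lemma~\ref{lem:bet-induction} and Theorem~\ref{thm:mso-to-cc}: rather than building $\powerset A$ abstractly, build $H$ directly as above and invoke surjectivity-onto-its-image plus compositionality, so that associativity of the induced multiplication comes for free from Lemma~\ref{lem:monad-compositional}. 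Compositionality of $H$ itself reduces to the statement that $h(\text{free multiplication of }G)$ depends only on $(\hmonad h)(G)$ together with, for each hyperedge, which marked hypergraphs in the preimage are being glued in --- which is exactly the associativity axiom satisfied by $h$ in the algebra $A$, applied to the preimages. I would write this verification out carefully once and for all, since it is the conceptual heart of the argument; everything else (the atomic cases, the reduction of counting predicates, the treatment of the two sorts of universe elements) is routine adaptation of the earlier proofs. Note finally that, unlike in Theorem~\ref{thm:mso-to-cc}, no effectivity claim is made, so one need not worry about finite representations of the algebras here; effectivity will only be recovered in Section~\ref{sec:definabl-tree-decompositions} under the bounded-treewidth assumption.
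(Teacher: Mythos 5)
Your high-level plan — induction on formula structure, products for Boolean connectives, powerset for existential set quantification — matches the paper's, and your sense that associativity of the powerset multiplication is the delicate point is accurate (the paper verifies it via a naturality/distributivity diagram chase). But there is a genuine gap earlier in your argument, at a place you dismiss as ``bookkeeping'': the encoding of set variables over the incidence model.

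In the hypergraph monad $\hmonad$, only hyperedges carry labels; vertices are anonymous. Your plan is to mark each universe element with a bit-vector in $\set{0,1}^m$, producing ``hypergraphs over $\rSigma \times \set{0,1}^m$ with a monadic colouring of vertices''. Such an object is not an element of $\hmonad\Gamma$ for any ranked alphabet $\Gamma$. To make it one you would need a different monad (vertices labelled too), together with a verification that recognisability in that monad transfers back to $\hmonad$ — and the natural forgetful map between them is not a letter-to-letter relabelling of the kind Lemma~\ref{lem:hypergraph-mso-language-closure} handles. Alternatively, if you encode a vertex's bit-vector by attaching a fresh unary hyperedge, the projection that erases those marker hyperedges is not letter-to-letter either, and by Exercise~\ref{ex:ltl-homo-bad} recognisable languages are \emph{not} closed under images of such homomorphisms. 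Either way the step you label routine is exactly where the argument breaks.

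The paper resolves this by first restricting to \emph{hyperedge counting \mso}, where set quantifiers range only over sets of hyperedges — those can legitimately be marked by doubling the ranked alphabet to $\Sigma \times \set{0,1}$, staying inside $\hmonad$. It then recovers vertex sets via Lemma~\ref{lem:non-isolated}: on hypergraphs with no isolated vertices, any set of vertices $X$ equals $X_1[1] \cup \cdots \cup X_n[n]$ for sets $X_i$ of hyperedges, so vertex quantification is simulated by hyperedge quantification. Isolated vertices are then peeled off and handled separately (Exercises~\ref{ex:isolated-and-not} and \ref{ex:isolated-vertices}). This is not bookkeeping — it is the step that makes the whole induction well-typed in the hypergraph monad — and your proposal has no substitute for it.
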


We use the same construction as in previous chapters. The main step of the proof, which deals with set quantification, is presented in the following lemma.
\begin{lemma}\label{lem:hypergraph-mso-language-closure}
    The recognisable languages images  under  functions of the form\footnote{In Exercise~\ref{ex:ltl-homo-bad} we show that the assumption on letter-to-letter homomorphisms is important.} 
    \begin{align*}
    \myunderbrace{\hmonad f : \hmonad \Sigma \to \hmonad \Gamma \qquad \text{for $f : \Sigma \to \Gamma$}}
    {such functions are called letter-to-letter homomorphisms.}
    \end{align*}
\end{lemma}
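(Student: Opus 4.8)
The plan is to use a powerset construction, exactly as in the proof of the Trakhtenbrot--B\"uchi--Elgot Theorem (Lemma~\ref{lem:bet-induction}) and its $\cc$-word and forest-algebra analogues. Fix a recognisable $L \subseteq \hmonad \Sigma$, recognised by a homomorphism $h : \hmonad \Sigma \to A$ into a hypergraph algebra $A$ that is finite on every arity, with accepting set $F \subseteq A$; write $\mu_A : \hmonad A \to A$ for its multiplication. On the ranked set $\powerset A$ (taking $\powerset(A_n)$ on arity $n$) define $\mu : \hmonad \powerset A \to \powerset A$ by $\mu(t) = \set{\mu_A(s) : s \in^\hmonad t}$, where $\in^\hmonad$ is the lifted relation of Exercise~\ref{ex:lift-binary-relation}; concretely, $s$ ranges over all hypergraphs obtained from $t$ by replacing the label $B$ of each hyperedge by some chosen element of $B$. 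Define $H : \hmonad \Gamma \to \powerset A$ by $H(w) = \set{h(v) : v \in \hmonad \Sigma,\ (\hmonad f)(v) = w}$; this lands in $\powerset A$, which is finite on every arity since $A$ is.

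The heart of the proof is to check that $H$ is compositional with respect to $\mu$, i.e.\ that $H(\mathrm{freemult}(w)) = \mu\bigl((\hmonad H)(w)\bigr)$ for every $w \in \hmonad\hmonad\Gamma$, where $\mathrm{freemult}$ is the free multiplication of $\hmonad\Gamma$; this is precisely the commuting square in the definition of a compositional function for the free algebra $\hmonad\Gamma$. The inclusion $\supseteq$ is easy: a choice $s \in^\hmonad (\hmonad H)(w)$ picks, for each hyperedge $e$ of $w$, an element $h(v_e) \in H(w_e)$ with $(\hmonad f)(v_e) = w_e$; assembling the $v_e$ into $\tilde v \in \hmonad\hmonad\Sigma$ (the shape of $w$ with $e$ relabelled by $v_e$) and using naturality of free multiplication gives $(\hmonad f)(\mathrm{freemult}(\tilde v)) = \mathrm{freemult}(w)$, while $h$ being a homomorphism gives $h(\mathrm{freemult}(\tilde v)) = \mu_A(s)$, so $\mu_A(s) \in H(\mathrm{freemult}(w))$. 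For the inclusion $\subseteq$ we use the explicit combinatorial description of free multiplication in $\hmonad$: if $v \in \hmonad\Sigma$ satisfies $(\hmonad f)(v) = \mathrm{freemult}(w)$, then since $\hmonad f$ leaves vertices, hyperedges and incidences untouched, $v$ shares the underlying structure of $\mathrm{freemult}(w)$, and hence factors as $v = \mathrm{freemult}(\tilde v)$ for some $\tilde v \in \hmonad\hmonad\Sigma$ whose shape is $w$, with each hyperedge $e$ relabelled by a hypergraph $v_e$ over $\Sigma$ satisfying $(\hmonad f)(v_e) = w_e$; then $(\hmonad h)(\tilde v) \in^\hmonad (\hmonad H)(w)$ and $h(v) = \mu_A\bigl((\hmonad h)(\tilde v)\bigr)$, so $h(v) \in \mu\bigl((\hmonad H)(w)\bigr)$.

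Having established compositionality, let $B \subseteq \powerset A$ be the image of $H$. It is closed under $\mu$ (choose a section of $H$ onto $B$ and use compositionality) and is finite on every arity. By the version of Lemma~\ref{lem:monad-compositional} valid for monads in categories of sorted sets --- here, ranked sets --- $B$ carries an algebra structure making $H : \hmonad\Gamma \to B$ a surjective homomorphism. Finally, $w \in (\hmonad f)(L)$ iff some $\hmonad f$-preimage $v$ of $w$ lies in $L$, iff $h(v) \in F$ for some such $v$, iff $H(w) \cap F \neq \emptyset$; so $(\hmonad f)(L)$ is recognised by $H$ together with the accepting set $\set{S \in B : S \cap F \neq \emptyset}$, an algebra finite on every arity.

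The main obstacle is the $\subseteq$ direction of the compositionality check: it is the only place where one reasons about the internal structure of the hypergraph monad rather than manipulating monad axioms, and it is essential that $\hmonad f$ is a letter-to-letter homomorphism, so that it creates no new vertices or hyperedges and the ``shape'' of a preimage $v$ is forced. This is exactly what breaks for general (label-merging or edge-creating) homomorphisms, cf.\ Exercise~\ref{ex:ltl-homo-bad}.
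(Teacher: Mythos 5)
Your proof is correct and uses the same powerset construction as the paper. There is a small difference in organisation: the paper explicitly names a distribution natural transformation $\hmonad\powerset X \to \powerset\hmonad X$, verifies its naturality and a distributive-law-style compatibility with free multiplication, and then shows compositionality of an auxiliary map $g : \hmonad\powerset\Sigma \to \powerset A$ by an abstract diagram chase, relating $g$ to the recognising map $H : \hmonad\Gamma \to \powerset A$ only afterwards. You instead verify compositionality of $H$ directly, making explicit the combinatorial content --- a $\hmonad f$-preimage of a free multiplication factors through a free multiplication of matching shape, because $\hmonad f$ preserves the underlying vertex/hyperedge structure --- that the paper packages into the claim that distribution is natural and into a ``simple check'' of one remaining face. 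Your version is more elementary; the paper's isolates where the hypergraph-specific reasoning enters, which makes clearer which monads admit this powerset argument (those with a distributive law of $\powerset$ over the monad, which excludes for instance the group monad). Both arguments rest on the same structural observation about letter-to-letter homomorphisms, which you correctly flag as the essential point and as what fails for general homomorphisms.
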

\begin{proof}
    We  use a powerset construction for hypergraph algebras. Since we have already used powerset constructions before, we  take this opportunity     to discuss powerset constructions  in more detail and  generality, so that we can  think about the kinds of monads that allow a powerset construction  (these are not all monads, e.g.~the group monad does not have a powerset construction).

    % As in.., the closure under Boolean combination is proved using products of algebras. For inverse images, we observe  that letter-to-letter homomorphisms are a special case of a homomorphism, and recognisable languages are closed under inverse images of arbitrary homomorphisms:  if a language is recognised by a homomorphism $h$ into a finite algebra,
    % then its inverse image under a homomorphism $g$ is recognised by the composition $h \circ g$. 
    For a ranked set $X$, define its \emph{powerset} to be the ranked set $\powerset X$ where elements of arity $n$ are sets of  elements from $A$ that have arity $n$. For an arity-preserving function $f : X \to Y$  on ranked sets, define 
    \begin{align*}
        \powerset f : \powerset X \to \powerset Y
    \end{align*}
    to be the arity-preserving function that maps a set to its image\footnote{In the language of category theory, $\powerset$ is the co-variant powerset functor, as opposed to the contra-variant powerset functor which uses inverse images instead of forward images.}. For a ranked set $X$, define \emph{distribution on $X$} to be the function of type 
\begin{align*}
\hmonad \powerset X \to \powerset \hmonad X
\end{align*}
which inputs a hypergraph $G$, and outputs the set of hypergraphs that can be obtained from $G$ by choosing for each edge an element of its label.  

Although trivial in the hypergraph monad, the following  claim is not true in general for all monads, e.g.~it would be false in the group monad for a naturally defined distribution.  
\begin{claim}
    \label{claim:naturality-of-distribution} Distribution is a natural transformation, which means that the following diagram commutes for every arity-preserving function $f : X \to Y$
    \begin{align*}
    \xymatrix{
        \hmonad \powerset X
        \ar[r]^{\hmonad \powerset f}
        \ar[d]_{\text{distribute on $X$}} 
        &
        \hmonad \powerset Y
        \ar[d]^{\text{distribute on $Y$}}
        \\
        \powerset \hmonad X 
        \ar[r]_{\powerset \hmonad f}
        &
        \powerset \hmonad Y
    }
    \end{align*}
\end{claim}
\begin{proof}
     The right-down path corresponds to the following procedure: for each hyperedge, choose an element of its label, and then apply $f$. The down-right path corresponds to the following procedure: for each hyperedge, take the image under $f$ of its label, and then choose an element.  The two procedures give the same result. This is true thanks to the following  property of distribution for hypergraphs: if we apply distribution on $X$ to some hypergraph, then every hypergraph in the resulting set will have the same vertices, ports and hyperedges as the original hypergraph. This property would not hold, for example, in the group monad, and  the claim would be false in the group monad\footnote{In fact, there is no powerset construction for algebras in the group monad. Nevertheless, by a proof that does not use powerset algebras, one can show that in the group monad the recognisable languages are closed under images of letter-to-letter homomorphisms.}.
\end{proof}

We  use the powerset and distribution to prove the lemma. 
    Suppose that a language $L$ is recognised by a homomorphism
    \begin{align*}
    h : \hmonad \Sigma \to A,
    \end{align*}
    and consider a letter-to-letter homomorphism
    \begin{align*}
    \hmonad f : \hmonad \Sigma \to \hmonad \Gamma.
    \end{align*}
    Define $g$ to be the composition of the following two functions:
    \begin{align*}
    \xymatrix@C=3cm{
        \hmonad \powerset \Sigma \ar[r]^{\text{distribute on $\Sigma$}} &
        \powerset \hmonad \Sigma \ar[r]^{\powerset h} &
        \powerset A.
    }
    \end{align*}
    \begin{claim}
        The function $g$ is compositional.
    \end{claim}
    \begin{proof} Consider the following diagram, with  red letters being labels of  faces:
        \begin{align*}
            \xymatrix
            @C=3cm{
                \hmonad \hmonad \powerset \Sigma
                \ar[rr]^{\hmonad g}
                \ar[dddd]_{\txt{\scriptsize free \\
                \scriptsize multiplication\\ \scriptsize on $\powerset \Sigma$}}
                \ar[dr]_{\txt{\scriptsize  $\hmonad$(distribute  on $\Sigma$)}\qquad}
                &&
                \hmonad \powerset A
                \ar[dd]^{\txt{\scriptsize distribute \\ 
                \scriptsize  on $A$}}
                \\
                &
                \hmonad \powerset \hmonad \Sigma 
                \facelabel u A
                \ar[ur]_{\hmonad \powerset h}
                \ar[d]_{\text{distribute on $\hmonad \Sigma$ }} &
                \\
                \facelabel r B
                & 
                \powerset \hmonad \hmonad \Sigma
                \facelabel {ur} C
                \facelabel {dr} D
                \ar[d]_{\powerset\text{(free multiplication on $\Sigma$)}}
                \ar[r]_{\powerset \hmonad h}
                & 
                \powerset \hmonad A
                \ar[dd]^{\txt{ \scriptsize $\powerset$(multiplication \\ \scriptsize in the algebra $A$)}}
                \\
               & 
                \powerset \hmonad \Sigma 
                \ar[dr]_{\powerset h}
                \facelabel d E
                &
                \\
                \hmonad \powerset \Sigma
                \ar[rr]_g
                \ar[ur]^{\text{distribute on $\Sigma$}\qquad}
                &&
                \powerset A
            }
            \end{align*}
            If we prove that the perimeter of the diagram commutes, then we will prove that $g$ is compositional (the composition of the two arrows on the right-most side will be the multiplication operation in the powerset algebra $\powerset A$). Faces \faceref A and \faceref E commute by definition of $g$. Face \faceref{C} commutes by naturality of distribution from Claim~\ref{claim:naturality-of-distribution}, and face \faceref{D} commutes because $h$ is a homomorphism. It remains to show that face \faceref{B} commutes. This again, is proved via simple check\footnote{In the language of category theory,   face \faceref{B} is the main axiom of a distributive law of a monad over a functor. }, similarly to Claim~\ref{claim:naturality-of-distribution}.
    \end{proof}
    
    Like for any compositional function, the image of $g$ can be equipped with a multiplication operation which turns $g$ into a homomorphism.  (As mentioned in the proof of the claim above, this multiplication operation is the composition of the two arrows on the right-most side side from the diagram.) We will use the homomorphism $g$ to recognise the image of $L$ under $\hmonad f$.
    Consider the following diagram:
    \begin{align*}
        \xymatrix@C=5cm{
            \hmonad \Gamma 
            \ar[r]^{\hmonad (\text{inverse image under $f$})} 
            \ar[d]_{\text{inverse image under $\hmonad f$}} &
            \hmonad \powerset \Sigma 
            \ar[dl]_{\text{distribute on $\Sigma$}\qquad}
            \ar[d]^{g}\\
            \powerset \hmonad \Sigma 
            \ar[r]_{\powerset h}
            &
            \powerset A
        }
    \end{align*}
    The top-left  face in the diagram commutes by definition of distribution, and the bottom-right  face  commutes by definition of $g$.  A hypergraph $G \in \hmonad \Gamma$ belongs to the image of the language  $L$ under the function $\hmonad f$ if and only if applying the function on the down-right path in the diagram gives a set that intersects the image $h(L)$.  Since the diagram commutes, it follows that the right-down path in the diagram recognises the image of the language $L$ under the function $\hmonad L$. The right-down path is a homomorphism, as a composition of two homomorphisms. Also, $\powerset A$ is finite on every arity, because finiteness on every arity is preserved by powersets. 
\end{proof}

The above lemma implies that recognisable languages are closed under quantification of sets of hyperedges (since a subset of the hyperedges can be seen as a colouring of hyperedges with two colours ``yes'' and ``no''). This motivates the following logic.

\begin{definition}
    Define \emph{hyperedge  counting \mso} to be the following variant of \mso. There is no first-order quantification, and set quantifiers range over sets of hyperedges. The logic allows the following relations on sets of hyperedges:
\begin{align*}
    \myunderbrace{X \subseteq Y}
    {set inclusion} \qquad     
    \myunderbrace{
        X \subseteq a
    }{every \\ \scriptsize hyperedge  \\
    \scriptsize in $X$ has \\ \scriptsize label $a \in \Sigma$}
    \qquad  
    \myunderbrace{
        i \in X[j]
    }{there exists a \\ \scriptsize hyperedge \\ \scriptsize $e \in X$ 
    such \\ \scriptsize that $e[j]$ is\\
    \scriptsize  the $i$-th port}
    \qquad  
    \myunderbrace{
        X[i] \cap Y[j] \neq \emptyset
    }{there exist hyperedges \\ \scriptsize $e \in X$ and $f \in Y$\\
    \scriptsize such that $e[i]=f[j]$}
    \qquad 
    \myunderbrace{|X| \equiv 0 \mod n.}{the number of hyperedges \\ \scriptsize in $X$  is dvisible by $n$}
\end{align*}
In the above relations, the arguments are the sets $X, Y$. The  labels $a \in \Sigma$ and numbers $i,j,n \in \set{1,2,\ldots}$ are parameters. Each choice of parameters gives a different relation.
\end{definition}

Our usual proof of the translation of \mso to algebras shows the following result, which is almost Courcelle's theorem, except that the logic is hyperedge \mso instead of counting \mso. The minor difference between the two logics -- which boils down to isolated vertices -- will be treated later on.
\begin{lemma}\label{lem:hypergraph-logic-to-language}
    If a language $L \subseteq \hmonad \Sigma$ is definable in hyperedge counting \mso, then it is recognisable.
\end{lemma}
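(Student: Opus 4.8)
The plan is to follow the same induction-on-formula-size recipe that was used for the Trakhtenbrot-B\"uchi-Elgot Theorem, for Theorem~\ref{thm:mso-to-cc} on $\cc$-words, and for Theorem~\ref{thm:forest-mso-with-counting} on forest algebra. Since first-order quantification has already been eliminated from the logic (hyperedge counting \mso\ quantifies only over sets of hyperedges, and uses the second-order-style predicates $X\subseteq Y$, $X\subseteq a$, $i\in X[j]$, $X[i]\cap Y[j]\neq\emptyset$, $|X|\equiv 0 \bmod n$), we work directly with formulas $\varphi(X_1,\dots,X_n)$ whose free variables are all set-of-hyperedge variables. As usual, the language of such a formula lives over the ranked alphabet $\Sigma\times\set{0,1}^n$, where the extra bits record membership in the sets $X_1,\dots,X_n$; the crucial point that makes this reduction legitimate is that in a hypergraph, tagging hyperedges with bits is a letter-to-letter relabelling, i.e.\ an operation of the form $\hmonad f$.

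First I would handle the induction base: for each atomic predicate of the logic, I must exhibit a hypergraph algebra, finite on every arity, that recognises its language. For $X\subseteq Y$ and $X\subseteq a$ the relevant conditions are ``local'' to each hyperedge and are recognised by the commutative-monoid construction of Example~\ref{ex:commutative-monoid-hypergraph-algebra} applied to $(\set{0,1},\land)$: map a hyperedge to $1$ iff its bit-decorated label satisfies the implication, and multiply. For the incidence predicates $i\in X[j]$ and $X[i]\cap Y[j]\neq\emptyset$ one needs a slightly richer algebra that remembers, for each port $p\le k+1$ and each $j$, which decorated hyperedges incident to $p$ lie in $X$ (and symmetrically for $Y$); this information is bounded for each arity because the number of ports is bounded, and compositionality is checked directly against the free multiplication, using that free multiplication only substitutes hypergraphs for hyperedges and never identifies ports in unexpected ways. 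The modulo predicate $|X|\equiv 0 \bmod n$ is recognised by the construction of Example~\ref{ex:commutative-monoid-hypergraph-algebra} applied to $\mathbb Z_n$. For the Boolean combinations I use complementation (same algebra, complemented accepting set) and conjunction (product algebra $A_1\times A_2$, which is finite on every arity); disjunction reduces to these by De Morgan.

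The heart of the argument is the inductive step for existential set quantification $\exists X_n\,\varphi(X_1,\dots,X_n)$. Here a subset of the hyperedges is the same as a colouring of hyperedges by the two-element alphabet $\set{\text{yes},\text{no}}$, so projecting away $X_n$ is exactly taking the image of the language of $\varphi$ under the letter-to-letter homomorphism $\hmonad f$ that forgets the last bit, $f:\Sigma\times\set{0,1}^n\to\Sigma\times\set{0,1}^{n-1}$. By Lemma~\ref{lem:hypergraph-mso-language-closure}, recognisable languages are closed under images of such letter-to-letter homomorphisms, so the projected language is again recognised by a hypergraph algebra finite on every arity (this is the powerset-plus-distribution construction carried out in that lemma). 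Thus the induction goes through, and every language definable in hyperedge counting \mso\ is recognisable. The main obstacle is not conceptual but bookkeeping: I expect the verification of compositionality for the incidence atoms $i\in X[j]$ and $X[i]\cap Y[j]\neq\emptyset$ to be the fiddliest part, because one has to track port-to-hyperedge incidences carefully through free multiplication and be sure that the information retained is genuinely finite on each fixed arity and genuinely determined by the information on the pieces. Since all of the real work on quantification has been packaged into the already-proven Lemma~\ref{lem:hypergraph-mso-language-closure}, nothing harder than this atom-level check remains.
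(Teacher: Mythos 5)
Your proof is correct and is essentially the argument the paper gives: the same induction on formula size over the extended alphabet $\Sigma\times\set{0,1}^n$, with Boolean connectives handled by product algebras, existential set quantification delegated to Lemma~\ref{lem:hypergraph-mso-language-closure}, and the atoms treated by exhibiting explicit compositional functions (the commutative-monoid construction for $X\subseteq Y$, $X\subseteq a$, and the counting predicate; a bounded-per-arity record of port-to-$X[j]$ incidence for $i\in X[j]$ and $X[i]\cap Y[j]\neq\emptyset$). One small wording slip: for the incidence atoms the algebra should remember, per arity, \emph{which ports} lie in $X[j]$ rather than ``which hyperedges'' are incident to a port, but you clearly intend the bounded version and say so a line later.
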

\begin{proof}
    Same proof as for the monads for words and forests.  Consider a  formula of hyperedge counting \mso 
    \begin{align*}
    \varphi(\myunderbrace{X_1,\ldots,X_n}{the free variables represent\\ \scriptsize ses of hyperedges}),
    \end{align*}
    where $\Sigma$ is the ranked set of labels used by the underlying hypergraphs. Define the  \emph{language} of this formula  to be the set hypergraphs over an extended alphabet that consists of  $2^n$ disjoint copies of $\Sigma$. This language is defined in the same way as for words and forests: for each hyperedge, the bits from $2^n$ in its label determine which of the sets $X_1,\ldots,X_n$ contain the hyperedge. By induction on formula size, we prove that every formula has a  recognisable language. 
    The induction step is proved in the same way as for words and forests: for Boolean combinations we use homomorphisms into product algebras, while for the quantifiers we use the powerset construction from  Lemma~\ref{lem:hypergraph-mso-language-closure}.
    
    We are left with the induction base. 
    For the formulas  $X \subseteq Y$ and $X \subseteq a$, the corresponding hypergraph language is of the form ``every hyperedge has a label in $\Gamma \subseteq \Sigma$''. Such languages were shown to be recognisable in Example~\ref{ex:commutative-monoid-hypergraph-algebra}. In the same example, we showed how to count hyperedges modulo some number, thus showing recognisability of the modulo counting relation. Consider now the language which corresponds to the  relation
    \begin{align*}
    i \in X[j].
    \end{align*}
    Let $h$ be the function $h$ which maps a hypergraph to the following information: (a) its arity; (b) which ports belong to $X[j]$. This function is easily seen to be compositional, and it has finite image on every arity, and therefore it witnesses recognisability of the language corresponding to $i \in X[j]$. A similar argument works for 
\begin{align*}
 X[i] \cap Y[j] \neq \emptyset.
\end{align*}
\end{proof}

\begin{lemma}\label{lem:non-isolated}
    For every sentence of counting \mso, there is a sentence of hypergraph counting \mso which gives the same results on hypergraphs without isolated vertices.
\end{lemma}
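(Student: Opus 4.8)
The plan is to show how to simulate, inside hyperedge counting \mso, the quantification over vertices and sets of vertices that ordinary counting \mso uses, at the cost of ignoring isolated vertices. The key observation is that every non-isolated vertex is incident to at least one hyperedge, so a vertex can be represented (non-uniquely) by a pair $(e,i)$ where $e$ is an incident hyperedge and $i$ is the position of the vertex in the incidence list of $e$, i.e.\ $e[i]=v$. Since the arities of hyperedges are bounded in any fixed hypergraph algebra context --- but here the input hypergraph is arbitrary --- we should be careful: arities of hyperedges are unbounded in general. However, a single \mso formula uses only finitely many label/arity parameters, and more importantly the relation ``$e[i]=v$'' is available for each fixed $i$; so a vertex is coded by a hyperedge $e$ together with an index $i$ ranging over $\set{1,\ldots,\arity(e)}$. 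To handle the unbounded index cleanly, I would instead code a \emph{set of vertices} directly as a set of hyperedges together with, for each chosen hyperedge, a marked incidence position; but a cleaner route avoids indices entirely.

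Here is the route I would actually carry out. First I would observe that a set $U$ of non-isolated vertices can be represented by the family of sets $\set{X_{(i,j)}}$ where $X_{(i,j)}$ is a set of hyperedges, and a vertex $v$ is declared to be in $U$ precisely when there exist $e\in X_{(i,i)}$ with $e[i]=v$ --- this is getting complicated because one vertex can be incident to many hyperedges. A genuinely clean encoding: represent a vertex-set $U$ by a single hyperedge-set $X$ with the intended meaning ``$v\in U$ iff $v$ is incident to some hyperedge in $X$ at a position that $X$ marks''; to mark positions, split $X$ into finitely many pieces $X=X_1\cup\cdots\cup X_m$ won't work because arities are unbounded. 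The correct and standard trick: since we only need to decide, for a fixed formula, properties that are invariant under which particular incidence witnesses a vertex, it suffices to encode a vertex-set $U$ by the hyperedge-set $X_U = \set{e : \text{all incident vertices of } e \text{ are in } U}$ together with the hyperedge-set $Y_U = \set{e : \text{some incident vertex of } e \text{ is in } U}$, and then express the primitive operations (membership, inclusion, the incidence relations, and modular counting of vertices) in terms of $X_U,Y_U$ using the available predicates $X\subseteq Y$, $X[i]\cap Y[j]\neq\emptyset$, $i\in X[j]$, and $|X|\equiv 0 \bmod n$. For modular counting of vertices I would exploit that, in a hypergraph without isolated vertices, each vertex appears as $e[i]$ for at least one $(e,i)$; counting vertices modulo $n$ reduces to a modular-counting statement about a carefully chosen hyperedge-set (one canonical incidence per vertex, definable in \mso since one can take, for each vertex, the incident hyperedge that is minimal in some \mso-definable linear preorder on hyperedges, or more simply count incidences with inclusion--exclusion over the finitely many positions). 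Then I would translate an arbitrary counting-\mso sentence $\varphi$ by replacing: each first-order vertex quantifier $\exists v$ by a hyperedge quantifier plus a position choice (finitely many cases, one per position $i$, using $e[i]=v$), each set-of-vertices quantifier $\exists V$ by a hyperedge-set quantifier with the $X_V,Y_V$ encoding (plus a consistency check that the pair indeed arises from a genuine vertex-set, which is itself expressible), each atom $v\in V$, $a(e)$, $e[i]=v$, $\mathrm{port}_i(v)$, and $|V|\equiv\ell\bmod n$ by its hyperedge-level surrogate, and each first-order edge quantifier by a singleton hyperedge-set quantifier. The resulting sentence $\varphi'$ of hyperedge counting \mso agrees with $\varphi$ on every hypergraph in which every vertex is non-isolated.

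The main obstacle, and the step I would spend the most care on, is the translation of the \emph{modular counting} predicate $|V|\equiv\ell\bmod n$ on vertices into the hyperedge world, together with the handling of unbounded hyperedge arities: a naive encoding miscounts a vertex once per incident hyperedge. The fix is to make the counting canonical --- pick for each vertex exactly one witnessing incidence, in an \mso-definable way --- and then observe that ``the canonical witness of $v$ lies at position $i$ of hyperedge $e$'' partitions the chosen vertices according to finitely many positions $i$, reducing vertex-counting mod $n$ to a finite Boolean combination of hyperedge-counting-mod-$n$ statements. Once that is in place, everything else --- the straightforward atom-by-atom and quantifier-by-quantifier translation, and the verification that the translation is faithful on isolated-vertex-free hypergraphs --- is routine bookkeeping, and I would present it compactly rather than in full detail. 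Finally, combining Lemma~\ref{lem:non-isolated} with Lemma~\ref{lem:hypergraph-logic-to-language} and the trivial observation that ``having no isolated vertices'' is itself recognisable (and that isolated vertices can be counted modulo $n$ by a simple compositional function, or absorbed into a product construction), yields Courcelle's Theorem for all of counting \mso, not just its hyperedge fragment.
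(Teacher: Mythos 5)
Your proposal repeatedly stalls on a worry that does not exist in this setting: you assume hyperedge arities are unbounded, but the alphabet $\Sigma$ is a fixed \emph{finite} ranked set, so every hyperedge has arity at most $n = \max\{\arity(a) : a \in \Sigma\}$. This is precisely the observation the paper's proof opens with (``let $n$ be the maximal arity of letters in the finite alphabet''), and it immediately validates the encoding you considered and wrongly dismissed as impossible: a set $U$ of non-isolated vertices is represented by $n$ hyperedge-sets $X_1,\ldots,X_n$ with $X_i = \set{e : e[i] \in U}$, so that $U = X_1[1] \cup \cdots \cup X_n[n]$. With this in hand, the translation of quantifiers, atoms, and modular counting (your canonical-witness idea, existentially guessing one witnessing position per vertex, restricted to positions $\le n$) is routine, and indeed this is essentially all the paper says before leaving the bookkeeping to the reader.

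The detour you took in place of this is not a correct alternative. Encoding a vertex-set $U$ by the pair $X_U = \set{e : \text{all vertices incident to } e \text{ are in } U}$ and $Y_U = \set{e : \text{some vertex incident to } e \text{ is in } U}$ is genuinely lossy: if a ternary hyperedge $e$ has exactly one of its three incident vertices in $U$, then $(X_U,Y_U)$ does not say which one, and the sets $U$ and $U'$ obtained by swapping that vertex for one of the others yield identical pairs $(X_U,Y_U)=(X_{U'},Y_{U'})$ while being distinguishable by simple counting-\mso formulas (e.g.\ a port membership test). The claim that the needed properties are ``invariant under which particular incidence witnesses a vertex'' is therefore false, and the subsequent translation built on this encoding would not be faithful. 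The fix is not a cleverer pair of hyperedge sets; it is to notice that boundedness of arities makes the straightforward positional encoding work.
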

\begin{proof}
    Let $n$ be the maximal arity of letters in the finite alphabet.  Every set $X$ of  non-isolated vertices can be represented by $n$ sets of hyperedges as 
    \begin{align*}
    X = X_1[1] \cup \cdots \cup X_n[n],
    \end{align*}
    where $X_i$ is the set of hyperedges whose $i$-th incident vertex is in $X$. Using this representation, we can quantify over sets of non-isolated vertices by using quantification over sets of hyperedges. 
\end{proof}

A corollary of Lemmas~\ref{lem:hypergraph-logic-to-language} and~\ref{lem:non-isolated} is that for every language definable in counting \mso, there is a recognisable language that agrees with it on hypergraphs without isolated vertices. To finish the proof of Courcelle's Theorem, we need to take into account the isolated vertices, which is a minor inconvenience that is left as an exercise for the reader, see Exercises~\ref{ex:isolated-and-not} and~\ref{ex:isolated-vertices}.

    This completes the proof of Courcelle's Theorem. 

    \exercisehead

\mikexercise{
    Consider graphs (not hypergraphs). Show that the existence of an Eulerian cycle can be defined in counting \mso, but not in \mso.
}{}

\mikexercise{Show that the existence of a Hamiltonian cycle cannot be defined in counting \mso with set quantification restricted to sets of vertices (and not hyperedges).}{}

\mikexercise{\label{ex:ltl-homo-bad} Show  that Lemma~\ref{lem:hypergraph-mso-language-closure} ceases to be true if we allow homomorphisms that are not necessarily letter-to-letter.}{}

\mikexercise{Show that for every \mso formula $\varphi(X)$ with one free set variable, the following problem can be solved in linear time: 
\begin{itemize}
    \item \emph{Input.} A tree decomposition $T$;
    \item \emph{Output.} The maximal size of a set of vertices $X$, such that $\varphi(X)$ is true in the underlying hypergraph.
\end{itemize}
}{}

\mikexercise{\label{ex:isolated-and-not}
For a hypergraph $G$, define two hypergraphs  $\alpha(G)$ and $\beta(G)$ as follows:
\begin{itemize}
    \item $\alpha(G)$: remove all isolated vertices;
    \item $\beta(G)$: remove all hyperedges and non-isolated vertices.
\end{itemize}
The functions $\alpha$ and $\beta$ are not arity-preserving, since the arity of $G$ is equal to the sum of arities of $\alpha(G)$ and $\beta(G)$.
Show that every sentence of counting \mso is equivalent to a finite Boolean combination of sentences of counting \mso, each of  which talks about only   $\alpha(G)$ or $\beta(G)$. 
}{}

\mikexercise{\label{ex:isolated-vertices} Recall the functions $\alpha$ and $\beta$ from the previous exercise.  Show that if $L$ is a language of hypergraphs that is definable in \mso, then the same is true for the languages
\begin{align*}
\set{G : \alpha(G) \in L } \qquad \text{and} \qquad \set{G : \beta(G) \in L }.
\end{align*}
Together with Exercise~\ref{ex:isolated-and-not}, this observation completes the proof of Courcelle's Theorem.}
{
    Consider a language defined by the logic in the 
, we observe that it can be defined by checking an ultimately periodic property of the numbers of ports and isolated vertices. For a hypergraph $G \in \hmonad \Sigma$, define $\gamma(G)$ to be the word $a^n b^m$ where $n$ is the number of ports and $m$ is the number of isolated vertices.  This  word is roughly the same thing as $\beta(G)$,  except that it has an order. Therefore, for every sentence $\varphi$ of counting \mso on graphs, one can easily find a sentence $\psi$ of counting \mso over finite words which makes the following diagram commute:
\begin{align*}
\xymatrix{
    \hmonad \Sigma \ar[r]^-\gamma
    \ar[d]_{\beta} 
    &
    a^* b^*
    \ar[d]^{\psi} \\
    \hmonad \Sigma \ar[r]_\varphi
     &
    \set{\text{yes, no}}
}
\end{align*}
For finite words, counting \mso has the same expressive power as \mso, because modulo counting can be expressed using the order of the word. Since \mso on finite words can only define regular languages, it follows that $\psi$ defines a regular language contained in $a^* b^*$. Every such regular language is defined as the intersection of $a^*b^*$ with   a finite Boolean combination of constraints of the form 
\begin{align*}
\myunderbrace{\#_\sigma = n}
{letter $\sigma \in \set{a,b}$ appears\\
\scriptsize exactly $n$ times }
\qquad \qquad 
\myunderbrace{\#_\sigma \equiv k \mod n}
{the number of \\ \scriptsize appearances of  $\sigma \in \set{a,b}$ \\
\scriptsize is congruent to $k$ modulo $p$}
\end{align*}
It follows that $L_\beta$ is a finite Boolean combination of languages of the form ``exactly $n$ ports'', ``exactly $n$ isolated vertices'', ``the number of ports is congruent to $k$ modulo $n$'', ``the number of isolated vertices is congruent to $k$ modulo $n$''. All of these languages are easily seen to be recognisable, using a construction similar to Example~\ref{ex:commutative-monoid-hypergraph-algebra}.}

\subsection{Satisfiability for bounded treewidth}
    We finish this section with an algorithm for deciding satisfiability of counting \mso, assuming bounded treewidth. Recall that already first-order logic on graphs has undecidable satisfiability, and this  undecidability carries over to the more general setting of hypergraphs and counting \mso.
    We recover decidability if we restrict attention to hypergraphs of bounded treewidth.

\begin{theorem}
    The following problem is decidable:
    \begin{itemize}
        \item {\bf Input.} A sentence of counting \mso  and $k \in \set{1,2,\ldots}$.
        \item {\bf Question.} Is the sentence true in  some hypergraph of treewidth at most $k$?
    \end{itemize}
\end{theorem}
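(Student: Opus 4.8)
The plan is to reduce satisfiability of counting \mso over hypergraphs of treewidth at most $k$ to two ingredients that are already available: Courcelle's Theorem (which turns the sentence into a recognisable language, hence into a finite hypergraph algebra finite on every arity) together with the finite representation of algebras in the monad $\hmonad_k$ of hypergraphs of treewidth at most $k$ given by Theorem~\ref{thm:treewidth-in-terms-of-terms} and its corollary. Concretely, given a sentence $\varphi$ of counting \mso and a number $k$, I would first apply Courcelle's Theorem to obtain a homomorphism $h : \hmonad \Sigma \to A$ into a hypergraph algebra $A$ that is finite on every arity, together with an accepting set $F \subseteq A$, so that a hypergraph $G$ satisfies $\varphi$ if and only if $h(G) \in F$. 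The subtle point is that Courcelle's construction produces $A$ together with an effective description of its structure: as remarked after the statements of Lemma~\ref{lem:hypergraph-mso-language-closure} and the surrounding lemmas, all the constructions (products, powersets via distribution, the atomic algebras from Example~\ref{ex:commutative-monoid-hypergraph-algebra}, the isolated-vertex bookkeeping) are effective, and by the corollary to Theorem~\ref{thm:treewidth-in-terms-of-terms} the restriction of $A$ to arities $\le k+1$ is uniquely and finitely represented by its treewidth $k$ operations (the forgetting, fusion and rearrangement term operations of width $\le k$). So from $\varphi$ and $k$ I can compute the multiplication tables of these finitely many operations on the finite ranked set $A$ restricted to arities $0,\dots,k+1$.

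Once I have this finite representation, the question ``does some hypergraph of treewidth $\le k$ satisfy $\varphi$?'' becomes the question ``is some element of $F$ reachable, in the $\hmonad_k$-algebra $A$, from the atomic elements corresponding to hypergraphs with no vertices and at most $k+1$ ports, by repeatedly applying the treewidth $k$ operations?'' This equivalence is exactly Theorem~\ref{thm:treewidth-in-terms-of-terms}: a hypergraph has treewidth at most $k$ iff it is generated in the free hypergraph algebra from vertexless hypergraphs with $\le k+1$ ports using the treewidth $k$ operations; applying the homomorphism $h$, the set of values $h(G)$ over such $G$ is precisely the subalgebra (with respect to the treewidth $k$ operations) of the $\le(k+1)$-arity part of $A$ generated by the images of those atoms. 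Note that the atoms themselves are finite in number up to isomorphism — a vertexless hypergraph with $\le k+1$ ports labelled by the finite alphabet $\Sigma$ is determined by its multiset of hyperedges, and since parallel hyperedges do not affect $h$ one only needs boundedly many of each, so the set of atom-images in $A$ is a computable finite subset. Then I run an obvious saturation: start with the atom-images, and close under the finitely many (finitary, since $A$ is finite on each of the finitely many relevant arities) treewidth $k$ operations until no new element appears; this terminates because each arity component of $A$ is finite. Finally, answer ``yes'' iff the saturated set meets $F$.

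The main obstacle — and the step that needs the most care — is making the effectiveness of Courcelle's Theorem airtight at the level of detail needed here, namely producing not just ``some finite algebra'' but a finite representation of $A$ restricted to arities $\le k+1$ in terms of its treewidth $k$ operations. Walking through the proof of Courcelle's Theorem, the atomic algebras of Lemma~\ref{lem:hypergraph-logic-to-language} and Example~\ref{ex:commutative-monoid-hypergraph-algebra} are explicitly described and their basic/treewidth operations are computable; products are handled componentwise; the powerset-via-distribution step of Lemma~\ref{lem:hypergraph-mso-language-closure} is effective because distribution and the homomorphism $h$ are explicit and the powerset of a finite-on-each-arity algebra is again finite on each arity with computable operations; and the reduction from counting \mso to hyperedge counting \mso (Lemmas~\ref{lem:non-isolated}, and Exercises~\ref{ex:isolated-and-not}, \ref{ex:isolated-vertices}) only introduces further effective constructions. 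Restricting everything to arities $\le k+1$ throughout keeps all these ranked sets genuinely finite. The remaining routine steps — bounding the number of atoms, verifying termination of saturation, checking the reachability characterisation against Theorem~\ref{thm:treewidth-in-terms-of-terms} — are straightforward and I would not belabour them. I would also remark that the input treewidth bound $k$ is essential and cannot be dropped, since first-order logic over arbitrary graphs already has undecidable satisfiability by Example~\ref{ex:grids-first-order-logic}; and that the same argument, by inspecting the saturation, yields an explicit small hypergraph of treewidth $\le k$ witnessing satisfiability when the answer is yes.
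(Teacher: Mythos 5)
Your proposal is correct and follows essentially the same route as the paper: invoke the effectiveness of Courcelle's construction (the atomic algebras, products, powerset-via-distribution, and isolated-vertex bookkeeping are all computable and preserve ``finite on each arity''), then use the treewidth-$k$ term characterisation to saturate the arity-$\le k+1$ part of $A$ from the generator images and test the result against $F$. The only small imprecision is the phrase ``parallel hyperedges do not affect $h$'' — they can (e.g.~via modulo counting); what is actually true, and what you need, is that the images of the vertexless generators form a computable subset of the finite commutative fusion semigroups of $A$ on each arity, so only boundedly many parallel copies are ever relevant — but this does not affect the argument's soundness.
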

\begin{proof} We use the proof of Courcelle's Theorem, with an emphasis on computability.
    We say that a ranked set is \emph{computable} if its elements can be represented in a finite way, and there is an algorithm which inputs an arity $k$ and either outputs the finite list of all elements of arity $k$ (if there are finitely many), or starts enumerating these elements (if there are infinitely many). The algorithm also says if there are finitely many elements of arity $k$ or not. 
    We say that a hypergraph algebra $A$ is \emph{computable} if its underlying ranked set is computable, and its multiplication operation is also computable (the inputs to the multiplication are finite hypergraphs, which can be represented in a finite way).  
    
    Free hypergraph algebras over computable alphabets are computable,  all of the hypergraph algebras that we used as recognisers for the atomic relations in the  proof of Courcelle's Theorem are computable, and computability is preserved under the products and the powerset construction. Therefore, we get the following computable strengthening of Courcelle's Theorem: given a sentence of counting \mso, which defines a property of hypergraphs over a finite alphabet $\Sigma$, we can compute a recognising homomorphism 
    \begin{align*}
    h : \hmonad \Sigma \to A
    \end{align*}
    into a computable hypergraph algebra. The hypergraph algebra, homomorphism, and accepting set are  represented by the corresponding algorithms.

    Let $A_k \subseteq A$ be the image under $h$ of all hypergraphs with treewidth at most $k$.  By Theorem~\ref{thm:treewidth-in-terms-of-terms}, $A_k$ is equal to the smallest subset of $A$ that contains the letters and which is closed under applying the treewidth $k$ operations in the hypergraph algebra $A$.  Since $A_k$ is contained in the finite part of $A$ which has arity at most $k+1$, and there are finitely many treewidth $k$ operations, it follows that $A_k$ can be computed. Finally, we check if $A_k$ contains at least one element of the accepting set. 
\end{proof}

\exercisehead
\mikexercise{Show that the following problem is decidable: given a first-order formula, decide if it is true in some rectangular grid. Here we are talking about unlabelled rectangular grids as in Example~\ref{example:grid}, and not labelled rectangular grids as in Example~\ref{ex:grids-first-order-logic}.}{}

\mikexercise{\label{ex:compute-syntactic-hypergraph-algebra} Show that if a hypergraph language $L$ has bounded treewidth and is definable in counting \mso, then its syntactic algebra is computable.}{}

\mikexercise{Show that the following problem is decidable: given $k \in \set{1,2,\ldots}$ and an \mso sentence $\varphi$, decide if $\varphi$ is true in infinitely many hypergraphs of treewidth at most $k$.}{}

\section{Definable tree decompositions}
\label{sec:definabl-tree-decompositions}

In this section\footnote{The results of this section are based in \incite{bojanczykDefinabilityEqualsRecognizability2016a}}, we show that for hypergraphs of bounded treewidth, tree decompositions can be defined in \mso. One application of this result is going to be a converse of Courcelle's Theorem for bounded treewidth: every recognisable property is definable in counting \mso for hypergraphs of bounded treewidth.

 We begin by explaining how a tree decomposition can be defined in  \mso. 
This is split into two ingredients: in Definition~\ref{def:introduction-ordering} we represent a  tree decomposition  using a  binary relation on vertices, and then in Definition~\ref{def:set-parameters} we show how such a binary relation can be defined in \mso. 

\paragraph*{The introduction ordering.} We represent a tree decomposition using  the order in which vertices of the underlying hypergraph are introduced.

\begin{definition} [Introduction ordering] \label{def:introduction-ordering} Define the \emph{introduction ordering} of a tree decomposition to be the following binary relation on vertices in the underlying hypergraph:
        \begin{align*}
            \myunderbrace{\text{
                $v$ is  introduced in the same node as, or an ancestor of, the  node    introducing  $w$.
            }}{ we say that  \emph{$v$ is introduced before vertex $w$} of the tree decomposition}
            \end{align*}
\end{definition}

The introduction ordering is a pre-order, i.e.~it is transitive and reflexive. It need not be anti-symmetric, because several vertices might be introduced in the same node. 

\begin{myexample}\label{example:isolated-vertices-introduction-ordering}
    Consider a hypergraph which has no ports or hyperedges, but only isolated vertices, like in the following picture:
    \mypic{120}
    One tree decomposition for this hypergraph has a node for each vertex, with the bag containing only that vertex, and with the nodes ordered left-to-right. Its introduction ordering is  in the following picture:
    \mypic{121}
    An alternative tree decomposition, has the same nodes and bags. However, this time we have some chosen root, and the remaining nodes are its children. Here is the introduction ordering for the alternative tree decomposition:
    \mypic{123}
    % When defining tree decompositions in \mso, we will use the second tree decomposition, since the first one would require imposing a linear order on the vertices.
In both pictures above, the introduction ordering is anti-symmetric, because each node of the tree decomposition introduces a single vertex. Here is a picture of an  introduction ordering  which is not anti-symmetric (and has two components):
    \mypic{122}
\end{myexample}

We now explain how a tree decomposition can be recovered from its introduction ordering. To do this, we  use two mild assumptions on tree decompositions (in the following, we say that a hyperedge is \emph{introduced} in node $x$ if $x$ is the least node that contains the incidence list of the hyperedge):
\begin{description}
    \item[(A)]every node introduces at least one vertex;
    \item[(B)] if a vertex $v$ is in the bag of node $x$, then it is incident to some  hyperedge that is introduced in $x$ or its descendants. 
\end{description}
Every tree decomposition can be transformed into a tree decomposition that satisfies (A) and (B), without increasing the width. In order to satisfy (A), we merge every node that does not introduce any vertices with its parent. In order to satisfy (B), we remove a vertex $v$ from all bags that violate condition (B).

If a tree decomposition satisfies (A) and (B), then it can be recovered from its introduction ordering as follows. Thanks to condition (A),  the nodes of the  tree decomposition  are equivalence  classes of vertices with respect to the equivalence ``$v$ is introduced before $w$ and vice versa'',  and the tree order on nodes is the inherited from the introduction ordering. Thanks to condition (B), a vertex $v$ is present in the bag of a node $x$ if and only if there is hyperedge that is incident to $v$ and a vertex that is introduced in $x$ or its descendants. The way that we recover a tree decomposition from its introduction ordering can be formalised  in \mso.

For the rest of Section~\ref{sec:definabl-tree-decompositions}, we only consider tree decompositions that satisfy (A) and (B). 
% We now explain how the introduction ordering can be represented, using \mso, in the underlying hypergraph. 

\paragraph*{Relations definable using set parameters.}
To represent the introduction ordering of a tree decomposition, we will use a formula of \mso that is equipped with extra set parameters, as described in the following definition. In the definition, when evaluating an \mso formula in a hypergraph, we use the incidence model from Definition~\ref{def:incidence-model}, where the universe is both vertices and hyperedges. 

% To represent a introduction and covering relations in the underlying graph, we will define them using  an \mso formula of constant size with  set parameters, as described in the following definition. 

\begin{definition}[Definable tree decompositions] \label{def:set-parameters}  An \emph{\mso formula with set parameters} is  an \mso formula of the form 
        \begin{align*}
        \varphi(
        \myunderbrace{    
        Y_1,\ldots,Y_n}{set variables \\ \scriptsize called\\ \scriptsize \emph{set parameters}}
        ,\quad
        \myunderbrace{    
            x_1,\ldots,x_m}{element variables \\ \scriptsize called\\ \scriptsize \emph{arguments}}
            ).
        \end{align*}
        We say that an $m$-ary relation $R$ in a hypergraph  is \emph{definable}  by $\varphi$ if 
        \begin{align*}
        \myunderbrace{\exists Y_1 \cdots \exists Y_n}
        {there is a choice \\
        \scriptsize of set parameters}\ 
        \myunderbrace{
        \forall x_1 \ \cdots \forall x_m \quad  R(x_1,\ldots,x_m) \iff \varphi(Y_1,\ldots,Y_n,x_1,\ldots,x_m).}{such that after fixing these set parameters in $\varphi$, \\ 
        \scriptsize we get exactly the relation $R$}
        \end{align*}
        We say that a tree decomposition is \emph{definable by $\varphi$} if its introduction ordering is definable by $\varphi$ in the underlying graph. We say that a set of hypergraphs has \emph{definable tree decompositions of bounded width} if there is an \mso formula $\varphi$ with set parameters, and a width $k \in \set{0,1,\ldots}$, such that every hypergraph from the set has a tree decomposition that is definable by $\varphi$ and has width at most $k$. 
\end{definition}

Since the above definition uses the  incidence model for hypergraph,  the set parameters can use hyperedges, even though the introduction ordering itself uses only vertices.  Note also that the definition uses  \mso, and not counting \mso. We will show that bounded treewidth implies definable tree decompositions of bounded width; not using counting will make the result stronger.

\begin{myexample} \label{example:independent-sets-definable-tree-decompositions} Define an \emph{independent set} to be a hypergraph that has  only vertices and no ports or hyperedges, as discussed in Example~\ref{example:isolated-vertices-introduction-ordering}. We will show that independent sets have definable tree decompositions of bounded width. There is a minor difficulty, which is that we need to avoid the path decompositions where the introduction ordering looks like this: \mypic{121} 
    The reason is that there is no single \mso formula with set parameters that can define a linear order on every   independent set, see Exercise~\ref{ex:definable-order}. The solution is to consider tree decompositions of depth two, where the introduction ordering looks like this:
    \mypic{123}
The introduction ordering for such a tree decomposition is definable by an \mso formula, which has one set parameter that describes the root.
\end{myexample}
\begin{myexample}\label{example:cycles-definable-tree-decompositions} In this example, we show that cycles have definable tree decompositions of bounded width. By a cycle, we mean a hypergraph that looks like this:
\mypic{117}
As in Example~\ref{example:independent-sets-definable-tree-decompositions}, we need careful with the choice of decomposition. Consider first  a tree decomposition that looks like this:
\mypic{86}
The introduction ordering for the above tree  decomposition looks like this:
\mypic{119}
Note how the successor relation of this introduction ordering connects vertices which are far away in the cycle. For this reason, in order to define this introduction ordering, we would need an \mso with set parameters whose size would depend on the length of the cycle. 

To get definable tree decompositions for cycles, we use tree decompositions where the introduction ordering looks like this:
\mypic{118}
The idea behind such a tree decomposition is that that nodes of the tree decomposition correspond to a clockwise traversal of the cycle, with all bags containing the first vertex  (in the above picture, the first vertex is the bottom-left corner). The introduction ordering for this tree decomposition can be defined by an \mso formula with two set parameters, one to indicate the first vertex, and another one to indicate the direction (clockwise or not) of the traversal. 
\end{myexample}

% In the above definition, we use two formulas, one for the introduction ordering and one for the bag relation. In fact, these formulas could be merged into a single formula, thanks to the following observation.

\exercisehead

\mikexercise{\label{exercise:one-defining-formula}
    Suppose that $\varphi_1,\varphi_2$ are two \mso formulas with set parameters, with the same number of arguments. There is a single \mso formula $\varphi$ with set parameters which defines every relation definable by either $\varphi_1$ or $\varphi_2$.}
{ Let the formulas from the assumption be
    \begin{align*}
    \set{\varphi_i(Y_1,\ldots,Y_{n_i}, x_1,\ldots,x_m)}_{i \in \set{1,2}}.
    \end{align*}
     The formula from the conclusion adds an extra set parameter $Y_0$, whose emptiness determines which of the two formulas $\varphi_1$ or $\varphi_2$ should be used:
    \begin{align*}
    \myunderbrace{
        (Y_0= \emptyset \Rightarrow \varphi_1)}{if $Y_0$ is empty, then use $\varphi_1$} \quad \land \quad 
        \myunderbrace{
        (Y_0 \neq \emptyset \Rightarrow \varphi_2).}
        {otherwise use $\varphi_2$}
    \end{align*}
    Since there is no need to use separate set parameters for $\varphi_1$ and $\varphi_2$, 
    the number of parameters is one plus the maximal number of parameters in $\varphi_1$ and $\varphi_2$.
}

\mikexercise{Define \msoone to be the variant of \mso where set quantification is restricted to sets of vertices (and not hyperedges). Show that for hypergraphs of bounded treewidth, \msoone has the same expressive power as \mso. 
}{}

\mikexercise{Let $k \in \set{0,1,\ldots}$. Show that there is an \mso formula $\varphi$ with set parameters, such that for every hypergraph of treewidth at most $k$,  every unary relation (i.e.~a set of vertices and hyperedges) is definable by $\varphi$ using set parameters that contain only vertices.  }{}{}

% \mikexercise{\label{ex:recover-bag-relation} Show that there is an \mso formula, which inputs a hypergraph and the introduction ordering of some tree decomposition $T$, and outputs the bag relation of some tree decomposition  which has the  same introduction ordering as $T$, and whose width is smaller or equal to the width of $T$. }{}{}
\mikexercise{
\label{ex:definable-order}   Show that there is no \mso formula $\varphi$ with set parameters, such that  every hypergraph has  a linear order  definable by $\varphi$. Hint: consider independent sets.  }{}{}

% \mikexercise{Show that for every $\ell \in \set{0,1,\ldots}$ there is no \mso formula which defines every tree decomposition of width $\ell$. Hint: use Exercise~\ref{ex:definable-order}.}{}

\mikexercise{Show that a formula as in Exercise~\ref{ex:definable-order} can be found, if we want the linear order only for connected hypergraphs with  degree at most $k$ (every vertex is adjacent to at most $k$ hyperedges).}{}

\mikexercise{Show that a formula as in Exercise~\ref{ex:definable-order} can be found, if we want a  spanning forest instead of a linear order.}{}

\mikexercise{\label{ex:tree-depth} 
We say that a set of hypergraphs $L$ has \emph{bounded treedepth} if there is some $\ell$ such that every hypergraph in $L$ has a tree  decomposition of width and height at most $\ell$ (the height is the maximal depth of nodes). Without using Theorem~\ref{thm:definable-tree-decompositions}, show  that if  $L$ is recognisable and has bounded treedepth, then it is definable in counting \mso.}{}

\subsection{Bounded treewidth implies definable tree decompositions}
We are now ready to state the main result of Section~\ref{sec:definabl-tree-decompositions}.

\begin{theorem}\label{thm:definable-tree-decompositions}
    If $L \subseteq \hmonad \Sigma$   has bounded treewidth, then it has definable tree decompositions of bounded width. 
    % For every $k \in \set{1,2,\ldots}$ and ranked set $\Sigma$  there is some $\ell \in \set{1,2,\ldots}$ and an \mso formula 
    % \begin{align*}
    %     \varphi(
    %     Y_1,\ldots,Y_n,x_1,x_2)
    %     \end{align*}
    % such that for every hypergraph in $\hmonad \Sigma$ with treewidth at most $k$ there is a tree decomposition of width at most $\ell$ whose  introduction ordering is defined by $\varphi$.
\end{theorem}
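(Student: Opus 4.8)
The statement is that bounded treewidth implies the existence of a single \mso formula with set parameters, together with a width bound, such that every hypergraph in $L$ admits a tree decomposition of that width whose introduction ordering is definable by the formula. The plan is to fix $k$ so that every hypergraph in $L$ has treewidth at most $k$, and then to produce, uniformly in the hypergraph, a tree decomposition of bounded width (some $f(k)$, not necessarily $k$ itself) whose introduction ordering can be read off from a bounded tuple of \mso-definable sets. The overall strategy mirrors the examples already worked out in the excerpt (independent sets in Example~\ref{example:independent-sets-definable-tree-decompositions}, cycles in Example~\ref{example:cycles-definable-tree-decompositions}): the ``obvious'' optimal decomposition is typically not definable — a path decomposition would require defining a linear order, which by Exercise~\ref{ex:definable-order} is impossible with bounded set parameters — so one must instead build a more balanced, shallower decomposition that trades a small constant increase in width for definability.

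First I would recall the structure theory of bounded treewidth. A key tool is that a hypergraph of treewidth $k$ has a tree decomposition which is ``balanced'' in the sense that each subtree handles at most a constant fraction of the vertices; iterating balanced separators gives a decomposition of logarithmic depth. The heart of the argument is then: (i) the balanced-separator structure can be described by a bounded amount of colouring information attached to vertices and hyperedges, and (ii) given such a colouring as set parameters, the introduction ordering of the corresponding decomposition is \mso-definable in the incidence model. For step (i) one picks, recursively, a bounded-size separator $S$ of the current piece, colours its vertices with a bounded palette recording their role (which bag-slot they occupy, whether they are ``active'' at the boundary, etc.), and recurses on the connected components of the complement. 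The crucial point is that because the depth of the recursion is logarithmic, one cannot afford a fresh colour at each level — instead one must show that only boundedly many ``types'' of situation arise along any root-to-leaf path, and reuse colours. This is exactly the combinatorial content one has to extract; I expect it to rely on a pigeonhole / Ramsey-style argument combined with the fact that all boundary information lives in bags of size $\le k+1$.

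For step (ii), once the colouring is available as set parameters $Y_1,\dots,Y_n$, the formula $\varphi(Y_1,\dots,Y_n,x,y)$ expressing ``vertex $x$ is introduced before vertex $y$'' has to navigate the implicit tree of separators: it should say that, reading off the hierarchy encoded by the colours along the paths from $x$ and from $y$, the piece containing $x$ is an ancestor of the piece containing $y$ (or they coincide and some tie-break colour puts $x$ first). Navigating an \mso-definable hierarchy is standard provided the hierarchy has a definable reachability/ancestor relation — and here we arrange that the pieces of the decomposition are connected components after deleting the colour-$S$ vertices at the appropriate level, so ``same piece'' and ``ancestor piece'' become \mso-expressible via connectivity after removing a definable vertex set. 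One also has to check that the resulting decomposition satisfies conditions (A) and (B) (every node introduces a vertex; each bag vertex is incident to a hyperedge introduced below) so that it is genuinely recovered from its introduction ordering; this is a routine normalisation as discussed just before Definition~\ref{def:set-parameters}.

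The main obstacle, as flagged, is the tension between \emph{depth} and \emph{definability}: balanced decompositions are shallow (good for uniform definability, avoiding the need to define a long linear order) but their explicit description a priori seems to need per-level information, which would not fit into a fixed number of set parameters. Resolving this — showing a bounded-palette colouring suffices to encode a bounded-width, definably-navigable decomposition of a treewidth-$k$ hypergraph — is the technical core of the theorem. Secondary care is needed for hyperedges of arity up to $k+1$ (so that separators ``cut through'' hyperedges correctly, which is why the bags must cover incidence lists) and for isolated vertices, handled as in Example~\ref{example:independent-sets-definable-tree-decompositions} by passing to a depth-two decomposition with a set parameter marking a root. Once Theorem~\ref{thm:definable-tree-decompositions} is in place, the converse of Courcelle's Theorem for bounded treewidth follows by evaluating a recognising homomorphism bottom-up along the \mso-defined decomposition, exactly as types were propagated in the forest-algebra proof of Theorem~\ref{thm:forest-mso-with-counting}.
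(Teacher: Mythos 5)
Your proposal takes a genuinely different route from the paper's, and the step you flag as ``the technical core'' --- encoding a tree decomposition of unbounded depth with a fixed palette of colours --- is left entirely open; it is not plausible that ``a pigeonhole / Ramsey-style argument'' resolves it in the form you describe. The balanced-separator decomposition has depth growing logarithmically with the hypergraph, so its level structure is a priori a linear order of unbounded length, and Exercise~\ref{ex:definable-order} already shows that such orders cannot in general be defined with bounded set parameters. Reusing colours across levels is not a routine normalisation: you would need to ensure that, after colouring, \mso can still recover the ancestor relation on pieces, and this requires a structural argument about how separators at distant levels interact that your outline does not supply. This is precisely where the paper invests its effort, and it does so with tools quite different from balanced separators.

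Concretely, the paper's proof never uses balanced separators. It proceeds in three stages. First, a \emph{Merging Lemma} lets one compose a definable external decomposition with definable internal decompositions of its torsos, giving a clean way to glue definability results. Second, the case of \emph{bounded pathwidth} is handled by treating path decompositions as a finitely generated semigroup, mapping into a finite ``reachability'' semigroup, and applying the Factorisation Forest Theorem (Theorem~\ref{thm:simon-factfor}): Simon trees have height bounded by a function of the semigroup alone, which is exactly the bounded-depth recursion your plan lacks. (The nodes of a Simon tree have unbounded degree, but the idempotent structure plus a modulo-$5$ colouring of the factors makes the successor relation and hence the introduction ordering \mso-definable --- this is where the combinatorics really lives.) Third, the \emph{general treewidth case} is reduced to bounded pathwidth via a \emph{Two Path Lemma} (a hypergraph Menger argument) and a \emph{Bounded Overlap Lemma}: the tree decomposition is partitioned into factors whose torsos have bounded pathwidth, and a bounded-overlap family of paths connecting the factors makes the coarse tree ordering on factors \mso-definable, after which the Merging Lemma applies. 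The dichotomy pathwidth/treewidth and the role of Simon's theorem as the source of \emph{bounded} (not merely logarithmic) depth are the two ideas missing from your plan; without them the approach as sketched does not go through.

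Your observations about conditions~(A) and~(B), about arity-$\le k{+}1$ hyperedges forcing bags to cover incidence lists, about the isolated-vertex/independent-set issue, and about the application to the converse of Courcelle's Theorem via bottom-up type propagation are all correct and match the paper.
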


The width of the tree decompositions in the assumption and in the conclusion of the above theorem will be different. An analysis of the proof would show that if all hypergraphs in $L$  have treewidth at most $k$, then the definable  tree decompositions from the conclusion of the theorem will have width at most  doubly exponential in $k$. With more care in the proof, we could produce optimal width tree decompositions\footnote{
    Definability of  tree decompositions of optimal width is shown in
    \incite[Theorem 2.]{bojanczykOptimizingTreeDecompositions2017a}
}, but the sub-optimal width will be enough for our intended application,  which is the converse of Courcelle's Theorem that will be presented in Section~\ref{sec:application-to-recognisability}.

% Define an \emph{undirected hypergraph} in the same way as a hypergraph, except that the ports are not ordered, the hyperedges do not have labels, and the incidence lists are not ordered. For a hypergraph $G$, define the \emph{corresponding undirected hypergraph}  to be the undirected hypergraph that is obtained from $G$ by forgetting the labels, and the ordering on ports and incidence lists. The following lemma shows that 
% \begin{lemma}\label{lemma:undirected-hypergraphs} If  $L \subseteq \hmonad \Sigma$ has definable tree decompositions, then the same is true for the  the set of all hypergraphs which have the same corresponding undirected hypergraph as some hypergraph from $L$.
% \end{lemma}

When defining  tree decompositions in \mso, we will not care about the labelling relation ``hyperedge $e$ has label $a \in \Sigma$''. For this reason, we will not specify the alphabet $\Sigma$ for the rest of this section. 
% Also, note that if a hypergraph has treewidth at most $k$, then it can only use labels of arity at most $k+1$, since hyperedges of higher arity cannot be covered by any bag (recall that we assume that incidence lists are non-repeating). Therefore, when talking about languages of bounded treewidth, the arity of the alphabet is bounded.  
%  We now state the main result of this section.
% The main result of this section is that hypergraphs of bounded treewidth have definable tree decompositions.

Here is a plan for the rest of  Section~\ref{sec:definabl-tree-decompositions}:

\begin{itemize}
    \item In Section~\ref{sec:nesting-lemma}, page~\pageref{sec:nesting-lemma}, we state and  prove the Merging Lemma, which shows how a  definable tree decomposition of definable  tree decompositions can be merged into a single definable tree decomposition.
    \item In Section~\ref{sec:bounded-pathwidth}, page~\pageref{sec:bounded-pathwidth}, we prove a  special case of Theorem~\ref{thm:definable-tree-decompositions}, which says that   hypergraphs of bounded pathwidth have  definable tree decompositions.
    \item In Section~\ref{sec:general-case-treewidth}, page~\pageref{sec:general-case-treewidth},  we complete the proof of the theorem.
    \item In Section~\ref{sec:application-to-recognisability}, page~\pageref{sec:application-to-recognisability}, we apply the theorem to get a converse of Courcelle's Theorem for hypergraph of bounded treewidth. 
\end{itemize}

% The Merging Lemma will be used frequently in Sections~\ref{sec:bounded-pathwidth} and~\ref{sec:general-case-treewidth} to combine smaller tree decompositions, typically obtained from an induction assumption, into bigger ones. 

Before proceeding with the proof, we define torsos.  Torsos will  be used frequently in the proof. 
\paragraph{Torsos.} Torsos are used restrict a tree decomposition to the hypergraph corresponding to some subset of the nodes.  The subsets  that we care about  are called \emph{factors}, and are  defined in the following picture (where dots indicate nodes of a tree decomposition):
\mypic{116}
Suppose that $T$ is a tree decomposition and $X$ is a factor. We  define below a hypergraph, called the {torso of $X$ in $T$}. The torso will contain vertices and hyperedges of the underlying  hypergraph  of $T$ that appear in the factor, plus extra hyperedges corresponding to the border of the factor. Before defining torsos formally, we need to overcome one more difficulty. In the torso, we will need a linear ordering for its ports, and for  incidence lists in the  extra hyperedges (because ports are ordered, and incidence lists are also ordered). To get such linear orders, we will use a local colouring of the underlying hypergraph, as defined below.

\begin{definition}[Local colouring]
    Define a \emph{local colouring} of a tree decomposition of width $k$ to be a colouring of vertices in the underlying hypergraph with colours $\set{0,\ldots,k}$ such that in every bag, all vertices have different colours. 
\end{definition}

Every tree decomposition has a local colouring, which can be obtained in a greedy way by colouring the root bag, then colouring the bags of the children, and so on.  If a tree decomposition is equipped with a local colouring, then every bag has an implicit linear order, from the smallest colour to the biggest colour.   We also assume that the local colouring is chosen so that the implicit linear order is consistent with the ordering of the ports, i.e.~the colours of the ports are increasing.

For a node $x$ in a tree decomposition, define its \emph{adhesion} to be the vertices from the bag of $x$ that are  either ports of the underlying hypergraph, or which appear also in parent of $x$.

% Using the local colourings, we are now ready to give a precise definition of torsos.

\begin{definition}[Torso]
    \label{def:torso} 
    Let $T$ be a tree decomposition, together with a local colouring, and let $X$ be a factor.
    The \emph{torso of $X$ in $T$}, denoted by $T/X$, is defined to be the following hypergraph. The vertices and hyperedges are those which  are introduced in nodes from $X$, plus the adhesion of the root node of the factor. 
    The ports are the adhesion factor's root, ordered according to the local colouring.  Furthermore, for  every node $x$ in the border of $X$, we add  a hyperedge (called a \emph{border hyperedge}) whose incidence list is the adhesion of $x$, ordered  according to the local colouring.
\end{definition}

If a factor is a subtree, i.e.~its border is empty, then the torso will have no border hyperedges.
For the rest of Section~\ref{sec:definabl-tree-decompositions}, we assume that every tree decomposition comes with an implicit local colouring. This way, we can simply talking about torsos in a tree decomposition, without indicating explicitly the local colouring which is needed to define the torsos.

\subsection{The Merging Lemma}
\label{sec:nesting-lemma}
In this section we state and prove the Merging Lemma, which  is based on the following simple idea. Suppose that we have an ``external''  tree decomposition, possibly of unbounded width,   where every bag has an accompanying ``internal'' tree decomposition,  of  width at most $k$.   We will show that these tree decompositions can be merged, in an \mso definable way, into a single tree decomposition of width at most $k$. 
 This lemma will be used several times in the proof, with the internal tree decompositions typically  obtained by applying some kind of induction assumption. 
 
 The internal tree decompositions are formalised using  torsos for factors with one node. Torsos for  larger factors will be used later, in Section~\ref{sec:general-case-treewidth}.

\begin{lemma}[Merging Lemma] Let $k \in \set{1,2,\ldots}$.
    Let $T$ be a tree decomposition (call it external) such that for every  every node $x$, the torso $T/\set x$ has a tree decomposition $T_x$ (call it internal) of width at most $k$.  Then:
    \begin{enumerate}
        \item\label{merging:width} The underlying hypergraph of $T$ has a tree decomposition of width at most~$k$.
        \item \label{merging:definable} Suppose that  $\psi, \varphi$ are \mso formulas with set parameters, such that  the external tree decomposition are definable by $\psi$ and all internal tree decompositions are definable  by $\varphi$. Then the tree decomposition from~\ref{merging:width}  is definable by an \mso formula with set parameters, which depends only on $\psi,\varphi$ and~$k$, and which does  not depend on the external and internal tree decompositions.
    \end{enumerate}
\end{lemma}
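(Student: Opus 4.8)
The plan is to prove part~\ref{merging:width} by an explicit \emph{replace-and-glue} construction, and then to obtain part~\ref{merging:definable} by writing the introduction ordering of the resulting decomposition as a case analysis that refers only to the external introduction ordering, to the internal introduction orderings of the $T_x$, and to the adhesions of nodes of $T$ --- all of which are definable.

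For part~\ref{merging:width}: replace each external node $x$ by a fresh copy of its internal decomposition $T_x$, delete the border hyperedges (they are not hyperedges of the underlying hypergraph $G$), and glue as follows: for each child $z$ of $x$ in $T$ pick a bag $b_z$ of $T_x$ covering the border hyperedge for $z$ --- such a bag contains the adhesion of $z$, which is exactly the set of ports of the torso $T/\set z$, hence exactly the contents of the root bag of $T_z$ --- and make the root of $T_z$ a child of $b_z$. One then checks the tree-decomposition axioms. Bags of the merged object are bags of the $T_x$, so the width stays at most $k$. Every hyperedge of $G$ is introduced in some external node $x$, hence is a non-border hyperedge of $T/\set x$, hence is covered inside $T_x$. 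For the connectivity of a vertex $v$ introduced in the external node $x$: the set $S_v$ of bags of $T$ containing $v$ is a subtree of $T$ with top $x$, and for each $y\in S_v$ other than $x$ the vertex $v$ lies in the adhesion of $y$, hence in the incidence list of the border hyperedge for $y$, hence in $b_y$, and also in the root bag of $T_y$; gluing the connected sets of appearances of $v$ in the various $T_y$ along the attachment edges joining $b_y$ to the root of $T_y$ (following the shape of $S_v$) shows that the appearances of $v$ in the merged decomposition form a subtree whose top is the node of $T_x$ introducing $v$. Finally all ports of $G$ lie in the adhesion of the root of $T$, hence in the root bag of the copy of $T_{\mathrm{root}}$, which is the root of the merged decomposition. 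After the routine post-processing enforcing assumptions (A) and (B) (merge vertex-free nodes into their parents, discard superfluous vertices from bags --- neither step increases bag size), we have the decomposition claimed in part~\ref{merging:width}, and we record its introduction ordering $\prec$.

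For part~\ref{merging:definable}: given vertices $v,w$ of $G$, let $x$ be the external node introducing $v$ and $y$ the one introducing $w$; from the external introduction ordering (defined by $\psi$) one can identify the classes ``external node of $v$'' and ``external node of $w$'' and the ancestor relation between them. Then: if $x=y$, set $v\prec w$ iff $v$ precedes $w$ in the introduction ordering of $T_x$; if $x$ is a proper ancestor of $y$ in $T$, let $z$ be the child of $x$ with $z$ an ancestor of (or equal to) $y$, and set $v\prec w$ iff $v$ precedes, in the introduction ordering of $T_x$, every vertex of the adhesion of $z$. The correctness of the second clause uses that the topmost bag of $T_x$ covering a hyperedge is the introduction node (in $T_x$) of whichever of that hyperedge's endpoints is introduced last; applied to the border hyperedge for $z$, this says ``the internal node of $v$ is an ancestor of, or equal to, the attachment bag $b_z$'' is equivalent to ``$v$ precedes every vertex of the adhesion of $z$ in $T_x$'', and once the internal node of $v$ sits above $b_z$ it sits above the entire copy of $T_z$ and, recursively, above all of $T_y$, hence above the node introducing $w$. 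In the remaining cases ($y$ an ancestor of $x$, or $x$ and $y$ incomparable in $T$) one checks $v\not\prec w$ because the copy of $T_x$ hangs entirely inside a subtree attached below a node of the copy of $T_y$. Each clause mentions only the external introduction ordering (and the adhesions of nodes of $T$, which it determines) and the internal introduction orderings, and the finitely many clauses are assembled into one formula by Exercise~\ref{exercise:one-defining-formula}.

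The main obstacle is the remaining ingredient above: converting the \emph{per-node} hypothesis ``$\varphi$ defines the introduction ordering of $T_x$ in the torso $T/\set x$'' --- a statement about a model containing synthetic border hyperedges absent from $G$, with set parameters depending on $x$ --- into a \emph{single} \mso formula with one tuple of set parameters over $G$. The plan is: first, using one extra set parameter that picks a representative vertex inside each external node (possible by assumption (A)), \mso-interpret each torso $T/\set x$ inside $G$, representing the border hyperedge for a child $z$ by the representative vertex of $z$ and reading its incidence list (the adhesion of $z$) off the external structure, and pull $\varphi$ back along this interpretation; second, combine the $x$-dependent parameter tuples: on the ``introduced in $x$'' part of each torso (its non-port vertices, its real hyperedges, and the representatives of its children) the torsos are pairwise disjoint, so these parts of the parameters coexist without interaction, while on the shared adhesion part --- which has size at most $k+1$, being covered by a single bag of a width-$k$ decomposition --- one disambiguates the at most $k+1$ torsos meeting at a given vertex by means of the local colouring, at the cost of multiplying the number of parameters by a bounded factor. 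This bookkeeping, rather than any conceptual difficulty, is where the real work lies; everything else is the case analysis above and the straightforward verification of the tree-decomposition axioms.
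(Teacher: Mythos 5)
Your construction of the merged decomposition and your case analysis for the introduction ordering are essentially the paper's, only presented differently: the paper packages the internal case analysis into a single ternary relation $I(u,v,w)$ = ``$u$ is introduced in a node $x$ of $T$ and $v$ is introduced before $w$ in $T_x$'' and then combines it with the external introduction ordering, which amounts to the same clause-by-clause check you write out. Your split of the parameter-combination problem into a ``disjoint'' part (things introduced in $x$, including border-hyperedge representatives) and an ``adhesion'' part is also the paper's. The gap is in how you handle the adhesion part.

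You claim there are ``at most $k+1$ torsos meeting at a given vertex'' and that the local colouring disambiguates them. This is false, and the local colouring cannot do what you want it to. A vertex $v$ introduced at an external node $z$ lies in the bags of a connected subtree of $T$ rooted at $z$, and for \emph{every} node $x\neq z$ of that subtree $v$ lies in both the bag of $x$ and the bag of $x$'s parent, hence in the adhesion of $x$, hence $v$ is a port of $T/\set{x}$. That subtree can be arbitrarily large, so $v$ can be a port of arbitrarily many torsos. Moreover the local colouring assigns a single colour to $v$, the same one regardless of which torso you are looking at, so it cannot distinguish between ``$v\in Y_{i,x}$'' and ``$v\in Y_{i,x'}$'' for two nodes $x,x'$ both having $v$ in their adhesion. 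Simply multiplying the number of parameters by a bounded factor does not resolve this.

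The observation that does the work is the dual one: index by the \emph{node}, not by the adhesion vertex. Fix a vertex $u$ introduced in $x$. The adhesion of $x$ equals the set of ports of $T/\set{x}$ and therefore lies in the root bag of $T_x$, which is a single bag of the \emph{merged} decomposition; since the local colouring is a local colouring of the merged decomposition, all vertices of that bag carry distinct colours. Hence for each colour $c$ there is at most one adhesion vertex $f_c(u)$ of colour $c$, and the relation ``$v$ is in the adhesion of the node introducing $u$ and $v\in Y_{i,x}$'' is equivalent to ``$v=f_c(u)$ for some $c$ such that $u\in Z_{i,c}$'', where $Z_{i,c}=\set{u : f_c(u)\in Y_{i,x}}$ is a new set parameter ranging over representatives $u$, one per colour $c$. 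This replaces a relation about the unboundedly-shared vertices $v$ with $k+1$ set parameters over the uniquely-owned representatives $u$, which is where the bounded blow-up actually comes from. Without this reorientation your combination step does not go through.
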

\begin{proof}   
    Here is a picture of the external and internal tree decompositions.  
\mypic{113}\label{picture:merging-lemma}
In the picture above, red circles are used for  border hyperedges of the torsos, and blue circles are used  for the remaining hyperedges. Note that the red border hyperedges are only present in the torsos, and not  in the underlying hypergraph of the external tree decomposition.

To prove item~\ref{merging:width}, we define a \emph{merged tree decomposition} of width $k$ as follows.  First, take the disjoint union of the internal tree decompositions, which gives a forest. Next,  convert this forest into a tree, by selecting parents for roots in the following way. Suppose that  $x$ is a root  node in this forest, which corresponds to the root node of some internal tree decomposition $T_y$. If $y$ has a defined parent $z$ in the external tree decomposition, then the parent of $x$ in the merged tree decomposition is defined to be the node in the internal tree decomposition $T_z$ that introduces the border hyperedge of the torso $T/\set z$ which corresponds to node $y$. Otherwise, if $y$ is the root of  the external tree decomposition, then $x$ is the root of the merged tree decomposition. Here is a picture of the merged tree decomposition:
\mypic{114}

We leave it as an exercise for the reader to check that the merged tree decomposition defined this way is indeed a tree decomposition.     Because   bags of the merged tree decomposition are inherited from  bags of the internal tree decompositions, its width is at most $k$, thus proving item~\ref{merging:width}. 
    
    We now prove item~\ref{merging:definable} about definability.  Let $G$ be the underlying hypergraph of the external tree decomposition. An inspection of the definition of the merged tree decomposition shows that its introduction ordering can be defined in  $G$ using \mso formulas that refer to the following relations:  (a) the introduction ordering of the external tree decomposition; and (b) the following  ternary relations on vertices of $G$ that uniformly describes all of the introduction orderings for the  internal tree decompositions:
    \begin{eqnarray*}
        I(u,v,w)&\eqdef&
        \myoverbrace{\text{$u$ is introduced in a node $x$}}{in the external tree decomposition} \text{such that} \myoverbrace{\text{$v$ is introduced before  $w$}}{in the internal tree decomposition $T_x$}.
        \end{eqnarray*}
    The relations from (a) are definable  by the assumption of~\ref{merging:definable}. Therefore to prove definability of the  merged tree decomposition, it remains to show definability of the  ternary relation $I$ from item (b). This is done in the remainder of this proof. 
    
    Let   $\varphi$ be the \mso formula with set parameters, which defines  the internal tree decompositions. Suppose that $\varphi$  has set parameters $Y_1,\ldots,Y_\ell$.  For   every node $x$ of the external tree decomposition, there is a choice of set parameters
    \begin{align*}
    Y_{1,x},\ldots,Y_{\ell,x}  \subseteq \text{vertices and hyperedges in the torso $T/\set x$},
    \end{align*}
    such that the introduction ordering in the internal tree decomposition $T_x$ is obtained by  fixing this choice of set parameters and  evaluating the formula $\varphi$ in the torso $T/\set x$. The main step in the proof will be showing how each family  
    $\set{Y_{i,x} }_x$
    can be represented in the hypergraph $G$ using a constant size formula of \mso with set parameters. 
    % As in the claim, we will represent this  family using a binary relation in the hypergraph $G$, consisting of pairs 
    % \begin{align*}
    % \text{(} \myunderbrace{\text{node $x$}}{represented by a vertex}, \text{element of $Y_{i,x}$)}.
    % \end{align*}
    
    An issue  is that the set parameters might use border hyperedges, which are not present in the hypergraph $G$.  To solve this issue, we represent border hyperedges using vertices in the following way. We say that a vertex $v$ of $G$ \emph{represents}  a border hyperedge $e$ in a torso $T/\set x$ if:   $v$ is introduced by the external tree decomposition in a node $y$ such that $y$ is the  child node of $x$ that corresponds to the border hyperedge $e$ in the torso $T/\set x$. Every vertex represents at most one border hyperedge, and every border hyperedge is represented by some vertex.
    
    % A vertex $v$ represents no border hyperedges if it is in the root bag of the external tree decomposition, and otherwise it represents exactly one border hyperedge, which belongs to the torso that corresponds to the parent of the node that introduces $v$.  
    
    Let $i \in \set{1,\ldots,\ell}$.
    Using the above representation, we will show in the following claim  that the set parameters used to define the internal tree decompositions can be represented in a uniform way in the hypergraph $G$. 
    \begin{claim}
        Let $i \in \set{1,\ldots,\ell}$. Each of the following  relations is definable by an \mso formula with set parameters, which depends only on $k$:
            \begin{eqnarray*}
                A_i\myoverbrace{(u,y)}{
                $u$ is a vertex and \\
                \scriptsize $y$ is a vertex or\\
                \scriptsize hyperedge of $G$
            } 
            &\quad \eqdef \quad &
             \myoverbrace{\exists x }{node of the \\
             \scriptsize external tree \\
             \scriptsize decomposition}
             \quad
             \text{$u$ is introduced in $x$ and $y \in Y_{i,x}$} \\
             B_i\myunderbrace{(u,v)}{
                $u,v$ are  \\ \scriptsize vertices \\
                \scriptsize of $G$
            } 
            &\quad \eqdef \quad & 
             \myunderbrace{\exists x }{node of the \\
             \scriptsize external tree \\
             \scriptsize decomposition}
             \quad 
             \txt{$u$ is introduced in $x$ and the border \\ hyperedge represented by $y$ is in $Y_{i,x}$}
            \end{eqnarray*}
    \end{claim}
    Before proving the claim, we show how it implies definability of $I$, and therefore finishes the proof of the Merging Lemma. 
    In order to check   $I(u,v,w)$,  we  run the formula $\varphi$ in the torso corresponding to the node that introduces vertex $u$, with calls to the $i$-th set parameter replaced by calls to its representation in terms of $A_i$ and $B_i$. Therefore, definability of  $A_i$ and $B_i$ implies definability of $I$. The remainder of the proof of the Merging Lemma is devoted to proving the claim.
\begin{proof}
    We begin with $B_i$. The key observation is that, when restricted to border hyperedges, the sets in the family $\set{Y_{i,x}}_x$ are disjoint, because  each border hyperedge belongs to exactly one torso $T/\set x$. For this reason,  $B_i$ can  be viewed as a subset of the  border hyperedges, and is thus definable. More formally,  $B_i$ can be defined in \mso using a single extra set parameter, because  $B_i(u,v)$ holds if and only if 
    \begin{align*} 
    \myunderbrace{\txt{\small $u$ is introduced in the node whose torso \\ \small  contains  the border hyperedge  represented by $v$ }}{definable in \mso using the  external tree decomposition} 
    \ \text{and} \ 
    v \in \myunderbrace{\set{v : B_i(u,v) \text{ for some $u$}}}{a single extra  set parameter}.
    \end{align*}

    We now turn to  $A_i$.  Here the difficulty is that a vertex might  belong  to several  sets from the family  $\set{Y_{i,x}}_x$. This will happen for  vertices from the adhesions in the external tree decomposition.  To solve this difficulty, we will use a colouring of the adhesions that uses at most $k+1$ colours. 

    Partition the  binary relation $A_i$  into two parts as follows:
    \begin{align*}
        \myunderbrace{C_i}{
            pairs $(u,v) \in A_i$ such that\\
            \scriptsize $v$ is in the adhesion of \\
            \scriptsize the node represented by $u$ 
        }
    \qquad \qquad \cup \qquad \qquad
    \myunderbrace{D_i}{
        pairs $(u,y) \in A_i$ such that\\
        \scriptsize $y$ is a vertex or hyperedge \\
        \scriptsize that is  introduced in \\
        \scriptsize the node represented by $u$ 
    }. 
    \end{align*}
    The part $D_i$ does not raise any problems, since every vertex or hyperedge is introduced in exactly one node of the external tree decomposition, and therefore we can use the same argument as for $B_i$ to prove that $D_i$ can be defined in \mso using a single extra set parameter. 
    
    We 
    are left with  $C_i$.     
     Choose a local colouring for the merged  tree decomposition, which maps vertices of $G$ to colours $\set{0,\ldots,k}$ so that no colour is used twice in a  bag of the merged tree decomposition. We will show:
    \begin{description}
        \item[(*)] for every $u$, all vertices that $v$ that satisfy $C_i(u,v)$ have different colours.
    \end{description}
    To see why (*) is true, suppose that in the external tree decomposition, vertex   $v$ is in the adhesion of the node $x$ that introduces $u$. It follows that  $v$ is a port  of the torso $T/\set{x}$, and therefore it must be  in the root bag of the internal tree decomposition $T_{x}$. This bag is also a bag of  the merged tree decomposition, and hence all vertices in this bag must have different colours under the local colouring, thus proving (*).  

    We now use (*) to finish the proof of the claim.
    For a colour $c \in \set{0,\ldots,k}$ and a vertex $u$, define $f_c(u)$ to be the vertex of colour $c$ in the  adhesion of the node that introduces $u$. There is at most one such vertex by (*), and hence $f_c$ can be viewed as a partial function.  The relation $C_i$ can be defined in \mso using $k+1$ extra set parameters, because $C_i(u,v)$ holds if and only 
    \begin{align*}
     \bigvee_{c \in \set{0,\ldots,k}} \qquad 
     \myunderbrace{v = f_c(u)}{can be defined in \mso \\
    \scriptsize  using the external\\
    \scriptsize tree decomposition and\\
    \scriptsize its local colouring} \ \text{ and }\   u \in \myunderbrace{\set{u : C_i(u,f_c(u)) }}{an extra set parameter\\ 
    \scriptsize for each colour $c$}.
    \end{align*}
\end{proof}
\end{proof}

\exercisehead

\mikexercise{ A cut hyperedge in a hypergraph is a hyperedge $e$ such that for some two vertices, every path connecting them must pass through $e$.  Let $L,K$ be sets of hypergraphs, such that for every $G \in K$, if all cut hyperedges are removed from $G$, then the resulting hypergraph is in $L$. Show that if $L$ has definable tree decompositions, then the same is true for $K$. }{}

% \subsection{Rest of the proof}
% The rest of the proof of Theorem~\ref{thm:definable-tree-decompositions} will appear next week.

\subsection{Bounded pathwidth}
\label{sec:bounded-pathwidth}
\newcommand{\pathmonoid}{\mathbb M}
\newcommand{\polgen}[1]{\langle #1 \rangle}
We now proceed to the second step in the proof of Theorem~\ref{thm:definable-tree-decompositions}. In this step,  we prove that bounded pathwidth  implies  definable tree decompositions of bounded width. In other words, we prove a weaker version of Theorem~\ref{thm:definable-tree-decompositions}, where the assumption is strengthened from bounded treewidth to bounded pathwidth (recall that pathwidth is the minimal width of a path decomposition, i.e.~a tree decomposition where all nodes are on a single root-to-leaf path).

\begin{theorem}\label{thm:special-case-of-path-width}
    If a set of hypergraphs has bounded pathwidth, then it has definable tree decompositions of bounded width. 
\end{theorem}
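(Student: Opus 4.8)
The plan is to reduce path decompositions to a combinatorial structure on a \emph{word}, and then invoke the Merging Lemma to assemble definable tree decompositions. First I would observe that a path decomposition of width $k$ of a hypergraph $G$ is essentially a sequence of bags $B_1, B_2, \ldots, B_n$, each of size at most $k+1$, such that every vertex occupies a contiguous block of bags, and every hyperedge is covered by some bag. Using the local colouring machinery (colour each vertex by its position in the bag, $\set{0,\ldots,k}$, consistently along the path), such a path decomposition can be encoded as a word over a \emph{finite} alphabet: position $i$ in the word records the colours of the vertices that are introduced at bag $i$, the colours of the vertices forgotten at bag $i$, and the labels/incidence patterns of the hyperedges introduced at bag $i$ (all expressed in terms of the colours $\set{0,\ldots,k}$ of the incident vertices). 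The key point is that this is a bounded amount of information per position, so the set of "valid path-decomposition codes" is a regular language over this finite alphabet.

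Next I would transfer this to the setting of \mso with set parameters. The underlying hypergraph $G$ does not come with the path decomposition; we must \emph{guess} it. The guess consists of: (i) a linear order on some auxiliary "time" structure, and (ii) a local colouring of the vertices. The obstacle here is exactly the one flagged in Example~\ref{example:independent-sets-definable-tree-decompositions} and Exercise~\ref{ex:definable-order} --- a single \mso formula with set parameters cannot define a linear order on an arbitrary hypergraph (e.g.\ an independent set). The resolution is the same as in that example and in the treatment of bounded treedepth in Exercise~\ref{ex:tree-depth}: rather than defining the path decomposition directly, we use the Merging Lemma. Concretely, I would first handle the pathwidth-$k$ hypergraphs that are \emph{connected} and have \emph{degree bounded} by some function of $k$ --- for these, a spanning-tree/DFS argument (cf.\ the exercises after Example~\ref{example:cycles-definable-tree-decompositions} about definable linear orders for bounded-degree connected graphs) lets us define a linear order, hence the path decomposition, by an \mso formula with set parameters. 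Then I would reduce the general bounded-pathwidth case to this case: take an arbitrary hypergraph $G$ of pathwidth at most $k$, build an "external" path(=tree) decomposition whose bags are the adhesions, where inside each bag-torso we can apply the bounded-degree argument to get an "internal" definable decomposition of bounded width, and merge via the Merging Lemma, whose conclusion item~\ref{merging:definable} gives a single \mso formula with set parameters defining the merged decomposition of bounded width.

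The key steps, in order, would be: (1) set up the finite-alphabet encoding of width-$k$ path decompositions via local colourings, and note that "being a valid code for $G$" is expressible in \mso once the order and colouring are given as parameters; (2) prove the bounded-degree connected case, where the order can itself be guessed as a set parameter because a DFS ordering of a connected bounded-degree hypergraph is \mso-definable with parameters; (3) reduce the general case by peeling off adhesions into an external coarse decomposition and applying the Merging Lemma to glue the internal definable decompositions; (4) bound the resulting width (it will blow up, but only by a function of $k$, which is all the theorem asks). I expect step (2)-(3), the passage from "bounded pathwidth in the abstract" to "we can actually guess the ordering with finitely many set parameters", to be the main obstacle: one has to be careful that the ambiguity in choosing the path decomposition does not require unbounded information, and the trick --- already visible in the cycle example and the independent-set example --- is that we never need to pin down the whole order at once, only enough local structure that the Merging Lemma can stitch together pieces each of which is individually definable. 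Once this is in place, Theorem~\ref{thm:special-case-of-path-width} follows, and it will serve as the base case for the general bounded-treewidth argument in Section~\ref{sec:general-case-treewidth}.
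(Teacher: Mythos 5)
There is a genuine gap in step (3), and it is the crux of the theorem. The Merging Lemma requires the \emph{external} tree decomposition to be definable (item~\ref{merging:definable} explicitly assumes a formula $\psi$ defining it). Your reduction ``peel off adhesions into an external coarse decomposition'' does not explain how that external decomposition becomes definable, and as stated it cannot work: if the factors are single nodes of the path decomposition, the external decomposition \emph{is} the path decomposition, and defining its order is precisely the problem you started with (Exercise~\ref{ex:definable-order} says you cannot do this in general). If instead you coarsen to make the external decomposition small enough to be trivially definable, the internal torsos are as hard as the original hypergraph. Your appeal to the bounded-degree connected case also does not help, because bounded pathwidth does not bound degree --- e.g.\ a caterpillar (a path with $n$ pendant leaves at each path vertex) has pathwidth $1$ but unbounded degree, and its one-bag-per-node torsos are trivial while the path ordering remains undefinable.

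The missing idea is the Factorisation Forest Theorem. The paper treats the path decomposition as a word in the (finitely generated) semigroup of width-$k$ path decompositions, maps it through a finite ``reachability'' homomorphism recording which coloured boundary vertices are connected by inner paths, and applies Simon's theorem to get a Simon tree of height bounded only by the finite semigroup. The induction is then on Simon-tree height, and the crucial step is the idempotent level: because all factors have the same idempotent reachability type, one can prove (Claim~\ref{claim:index-paths}) that an inner path between a vertex introduced in factor $i$ and one introduced in factor $j$ must pass through all factors between them and need not stray more than one factor past either end. This is what makes a modulo-$5$ colouring of factors sufficient to define the external order with a \emph{constant} number of set parameters, which is exactly what your DFS argument cannot supply. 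Your step (1), encoding path decompositions over a finite alphabet, is correct and is implicitly present in the paper's semigroup view; but without the factorisation-forest layering there is no bounded-height recursion for the Merging Lemma to glue, and the construction would require a formula whose size depends on the length of the path decomposition.
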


Note the  asymmetry in the above theorem: the assumption uses path decompositions but the conclusion uses tree decompositions. To see the reason for this asymmetry, recall the independent sets from  Example~\ref{example:isolated-vertices-introduction-ordering}. Independent sets have pathwidth zero. Path decompositions of width zero  cannot be defined in independent sets using a constant size formula of \mso, in contrast to tree decompositions of width zero. 

The rest of Section~\ref{sec:bounded-pathwidth} is devoted to proving Theorem~\ref{thm:special-case-of-path-width}. 
To find definable tree decompositions, we will view path decompositions as semigroup, and  use the Factorisation Forest Theorem. 

\paragraph*{Path decompositions as a semigroup.} Fix $k \in \set{1,2,\ldots}$  for the rest of this section. As  was the case for tree decompositions, we assume that every path decomposition comes together with a local colouring that uses colours $\set{0,\ldots,k}$.  Here is a picture of a path decomposition together with a local colouring:
\mypic{104}
 For path decompositions defined this way, we define a semigroup product, as explained in the following picture:
\mypic{105}
It is easy to see that the operation described above  is associative, and hence  path decompositions of width at most $k$ are a semigroup.  We call this the ``semigroup of path decompositions''.
% This semigroup is generated by the finite set of path decompositions which have width at most $k$ and at most two bags.

\paragraph*{The reachability   homomorphism.}  On the semigroup of path decompositions, we define a semigroup homomorphism, which stores a finite amount of information about paths.
Define an \emph{inner path} in a hypergraph to be a path of the form 
\begin{align*}
    v_0 \stackrel{e_1}\to 
    \myunderbrace{v_1 \stackrel{e_2}\to \cdots \stackrel{e_{n-1}}\to v_{n-1}}{these vertices are not ports} \stackrel{e_n}\to v_n.
\end{align*}
In other words, an inner path is a path that avoids ports, with the possible exception of the source and target of the path. Define the \emph{reachability homomorphism} to be the function which maps a  path decomposition  of width at most $k$  to the  answers to the following questions, for all $i,j \in \set{0,\ldots,k}$ and $\sigma,\tau \in \set{\text{first, last}}$:
\begin{itemize}
    \item does the $\sigma$ bag contain a vertex with local colour $i$?
    \item is there an inner path from    a  vertex with local colour $i$ in the $\sigma$ bag  to a vertex with local colour $j$ in the  $\tau$ bag?
    \item is there are vertex with local colour $i$ that is both in the first and last bag?
\end{itemize}
The  reachability homomorphism is easily seen to be   compositional in the semigroup sense, and therefore it can be viewed as a  semigroup homomorphism  from the  semigroup of path decompositions into a finite semigroup (which consists of all possible sets of answers to the questions described above).
 
 \paragraph*{Using the Factorisation Forest Theorem.} The semigroup of path decompositions is finitely generated, namely by
 \begin{align*}
 \Delta = \text{path decompositions with at most two nodes.}
 \end{align*}
 We  view the reachability homomorphism as a homomorphism
 \begin{align*}
 h : \Delta^+ \to S.
 \end{align*}
 For $\ell \in \set{0,1,\ldots}$, consider the path  decompositions (corresponding to words in) in $\Delta^+$ that have a Simon tree -- as defined  in the Factorisation Forest Theorem --  of height at most $\ell$, with respect to the homomorphism $h$. Let $L_\ell$ be the underlying hypergraphs of these path decompositions. 
  By the Factorisation Forest Theorem, there is some $\ell$ such that all hypergraphs of pathwidth at most $k$ belong to $L_\ell$. Therefore, to prove that all hypergraphs of pathwidth at most $k$ have  definable tree decompositions of bounded width,  it remains to show the following lemma. 
 \begin{lemma}\label{lem:graphs-simon-induction}
     For every $\ell \in \set{0,1,\ldots}$,  $L_\ell$ has definable tree decompositions of bounded width.
 \end{lemma}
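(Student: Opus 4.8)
The plan is to prove Lemma~\ref{lem:graphs-simon-induction} by induction on $\ell$, where at each step we use the structure of a Simon tree of height $\ell$ (with respect to the reachability homomorphism $h : \Delta^+ \to S$) together with the Merging Lemma to assemble a definable tree decomposition of bounded width.

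\textbf{Base case ($\ell = 0$).} A Simon tree of height zero is a single leaf, so the corresponding path decomposition lies in $\Delta$, i.e.\ it has at most two nodes. The underlying hypergraph therefore has a bounded number of vertices (at most $2(k+1)$), and such hypergraphs have trivial tree decompositions of bounded width consisting of a single bag containing all vertices. A single-bag tree decomposition is trivially definable: its introduction ordering relates every pair of vertices, which is defined by the formula $\varphi(x_1,x_2)$ that is always true, with no set parameters needed.

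\textbf{Induction step.} Suppose $L_{\ell}$ has definable tree decompositions of width $\le k_\ell$ for some bound $k_\ell$; we handle $L_{\ell+1}$. Take a path decomposition $P \in \Delta^+$ with a Simon tree of height $\ell+1$ whose underlying hypergraph is $G$, and let $P = P_1 P_2 \cdots P_n$ be the factorisation given by the children of the root of the Simon tree. The root is either a binary node (so $n \le 2$) or an idempotent node (so all $P_i$ have the same image $e = h(P_i) \in S$, an idempotent). Each factor $P_i$ has a Simon tree of height $\le \ell$, so its underlying hypergraph $G_i$ lies in $L_\ell$, and by the induction hypothesis $G_i$ has a tree decomposition definable by a fixed \mso formula $\varphi$ with set parameters, of width $\le k_\ell$. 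The idea is now to build an \emph{external} tree decomposition of $G$ — in fact a path decomposition of bounded width — whose $x$-th node has torso $T/\set{x}$ essentially equal to (a bounded-width modification of) $G_i$, and then invoke the Merging Lemma. Concretely: the external path decomposition has one node per factor $P_i$; its bag is the union of the \emph{interface} vertices of $P_i$ — the vertices of $G_i$ that lie in the first or last bag of $P_i$ — which has size at most $2(k+1)$, so the external decomposition has width at most $2k+1$. The torso $T/\set{P_i}$ is then $G_i$ together with a bounded number of border hyperedges recording the adhesions to $P_{i-1}$ and $P_{i+1}$; since these border hyperedges are incident only to the bounded interface of $G_i$, the tree decomposition of $G_i$ given by the induction hypothesis extends to a tree decomposition of the torso of width $\le k_\ell + O(k)$ by adding the border hyperedges to bags that already contain the relevant interface vertices (here we use condition (B) on tree decompositions, which guarantees such bags exist). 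These internal tree decompositions are uniformly definable by an \mso formula obtained from $\varphi$. The external path decomposition is itself definable by a bounded \mso formula with set parameters: the key point is that the factorisation $P = P_1 \cdots P_n$ into Simon-tree children — and more generally the whole Simon tree — can be recovered from the reachability information about inner paths, which is \mso-expressible, by the same reasoning used in the Factorisation Forest Theorem to reconstruct factorisations; the set parameters encode the factor boundaries and the associated local colourings. Applying the Merging Lemma then yields a tree decomposition of $G$ of width $\le k_\ell + O(k)$ that is definable by an \mso formula depending only on $\varphi$, the external formula, and $k$. Since there are only finitely many values of $\ell$ needed to cover all hypergraphs of pathwidth $\le k$, we get a single bound and (via Exercise~\ref{exercise:one-defining-formula}) a single defining formula.

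\textbf{Main obstacle.} The delicate point is the definability of the \emph{external} path decomposition, i.e.\ showing that the Simon-tree factorisation can be described inside the hypergraph $G$ by a fixed-size \mso formula with set parameters, and in a way that is compatible with the local colourings needed to define torsos. The Factorisation Forest Theorem produces the Simon tree abstractly from the homomorphism $h$, but we must witness the nodes and their ordering by set parameters over the \emph{positions} of the path decomposition, which in turn must be identified with vertices or hyperedges of $G$; one must check that the reachability homomorphism's value on a factor — which is all that a Simon node "knows" — is indeed recoverable from \mso-definable inner-path relations in $G$, and that the nesting of Simon nodes of bounded height can be encoded with a bounded number of set parameters (using that the Simon tree has bounded height, one colour layer per level). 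Threading the local colourings through this encoding, so that the adhesions in the external decomposition can be linearly ordered definably (as in the proof of the Merging Lemma, item~\ref{merging:definable}), is the part that requires the most care.
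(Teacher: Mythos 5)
Your overall skeleton -- induction on $\ell$, one external node per factor of the Simon-tree root, then the Merging Lemma -- matches the paper's plan, and your ``main obstacle'' paragraph correctly locates where the difficulty lies. But you do not close that gap, and you also omit a reduction without which the gap cannot be closed.

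The paper does not run the induction on $L_\ell$ directly. It first reduces, via Claim~\ref{claim:reduce-to-inner-connected}, to the language $K_\ell$ of \emph{inner components} of hypergraphs in $L_\ell$, and proves definable decompositions for $K_\ell$ instead. This is not cosmetic. The key technical fact used to define the external decomposition (Claim~\ref{claim:index-paths}) is that every inner path from a vertex introduced in factor $x$ to one introduced in factor $y$ must pass through all intermediate factors, and that \emph{some} inner path stays inside the window $\set{x-1,\ldots,y+1}$. The second half presupposes that such inner paths exist at all, i.e.\ that $G$ is a single inner component; without that reduction the argument has nothing to work with, since two vertices of $G$ may lie in different inner components and admit no connecting inner path.

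Second, the heart of the argument -- defining the introduction ordering of the external path decomposition by a fixed \mso formula with set parameters -- is exactly what you defer to ``the same reasoning used in the Factorisation Forest Theorem to reconstruct factorisations,'' but the Factorisation Forest Theorem produces a Simon tree as an abstract combinatorial object, not as something \mso can recover, and there is no straightforward way to make \mso reconstruct it from reachability data. The paper's actual argument is quite different: since all factors in an idempotent node have the same image under the reachability homomorphism, every non-port vertex appears in at most two consecutive bags (this is the (*) argument in the paper, exploiting the homomorphism's question about a vertex in both the first and last bag); combined with inner connectedness this yields Claim~\ref{claim:index-paths}; and then colouring vertices by their factor index modulo $5$ (five set parameters) lets \mso test ``same factor'' and ``successor factor'' using the existence of an inner path that avoids some colour, with the ordering taken as the \mso-definable transitive closure. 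Your proposal contains neither this idea nor a working substitute.

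A smaller inconsistency: you set the external bags to be only the interface vertices of $P_i$ so as to bound the external width by $2k+1$, but then you also want the torso of node $i$ to be essentially $G_i$; these two desiderata conflict (the torso of a singleton factor only sees vertices that are actually in some external bag). In fact the external decomposition in the Merging Lemma need not have bounded width at all -- only the torsos do -- and the paper's external bag at node $x$ is all vertices of $G$ appearing in $P_x$, which is unbounded but harmless.
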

\begin{proof}
    In the proof, we  work  with hypergraphs  where all non-ports can be connected by inner paths, as described in the following definition.  
    For a vertex or  hyperedge $x$ in a hypergraph $G$, define its  \emph{inner component} to be hypergraph obtained from $G$ by restricting to the ports, plus vertices and hyperedges that can be reached from $x$ via an inner path. Here is a picture:
    \mypic{115}
    Every hypergraph is equal to the fusion of its  inner components, where fusion is the operation on hypergraphs of same arity  that was described in Figure~\ref{fig:treewidth-terms}.
We will prove that for every  $\ell \in \set{0,1,\ldots}$,  the language
\begin{align*}
K_\ell = \set{G : \text{$G$ is an inner component of some hypergraph in $L_\ell$}}
\end{align*}
has definable tree decompositions of bounded width. This will imply the lemma, as explained in the following claim.

\begin{claim}\label{claim:reduce-to-inner-connected}
    Let $L$ be a set of hypergraphs. If 
    \begin{align*}
        K = \set{G : \text{$G$ is an inner component of some hypergraph in $L$}}
        \end{align*}
     has definable tree decompositions of bounded width, then so does $L$.
\end{claim}
\begin{proof}
    We use the Merging Lemma to get  a definable tree decomposition for every hypergraph $G \in L$. 
      Define an external tree decomposition for $G$ as follows. For every inner component we have a node, and these nodes are connected by a common root, whose bag is the ports. This external tree decomposition is clearly definable in \mso, even without set parameters.  The torso of the root node has constant size, while the torsos of the  remaining nodes have definable tree decompositions of bounded width thanks to the assumption on $L$.    Therefore, we can apply the Merging Lemma to get a definable tree decomposition for~$G$. \end{proof}

It remains to prove that, for every $\ell \in \set{0,1,\ldots}$, the set   $K_\ell$ has definable tree decompositions of bounded width. 
The proof is by induction on $\ell$. 

% In the remainder of the proof, we say that a relation in a hypergraph is \emph{definable} if the relation can be defined by an \mso formula with set parameters, such that the formula does not depend on the hypergraph or the relation, but only on $\ell$ and the fixed treewidth $k$. 

In the induction base of $\ell=0$, the hypergraphs  from $K_0$ are inner components of the finitely many generators from $\Delta.$ Therefore, we can use trivial definable tree decompositions where all vertices are in the same bag.

We are left with  the induction step.  Consider a hypergraph $G \in K_{\ell+1}$, which by definition is an inner component of some  hypergraph $H \in L_{\ell+1}$. Again by definition, $H$ has  a path decomposition  which can be factorised in the semigroup of path decompositions as
$P_1 \cdots P_n$,
so that:
\begin{description}
    \item[(i)]   $P_1,\ldots,P_n$ have   their underlying hypergraphs in $L_\ell$;
    \item[(ii)] either $n=2$, or  all $P_1,\ldots,P_n$ have the same value under the  reachability homomorphism, which is furthermore idempotent.
\end{description}
To define a   tree decomposition for $G$, we  use the Merging Lemma. 
 The internal decompositions will be definable thanks to  (i), and the external decomposition will be  definable thanks to  (ii).

The external decomposition $T$, which is a path decomposition\footnote{We are able to produce a path decompositions, and not just a tree decomposition, thanks to the  assumption that $G$ is an inner component. The place in the proof where tree decompositions are needed is the application of Claim\ref{claim:reduce-to-inner-connected} which splits the hypergraph into inner components.},  is defined as follows. The bags are those  $x \in \set{1,\ldots,n}$ such that the path decomposition $P_x$ contains at least one vertex of $G$, ordered in the natural way. The bag of node $x \in \set{1,\ldots,n}$  consists of those vertices of $G$ that appear in $P_x$. The following claim takes care of the internal tree decompositions.

\begin{claim} \label{claim:path-torsos} For all nodes of $T$, the corresponding torsos come from a set of hypergraphs with definable tree decompositions of bounded width. 
\end{claim}
\begin{proof}
    Let $x$ be a node of $T$. The torso $T/\set x$ is obtained as follows: take the underlying hypergraph  $H_x$  of the path decomposition $P_x$, restrict it to vertices that appear in $G$, and then add a border hyperedge. Since every inner path of $H_x$ is also an inner path of $H$, it follows that every inner component of $H_x$ is contained in some inner component of $H$. Therefore, if we define $G_x$ to be  the restriction of $H_x$ to the vertices and hyperedges of the inner component of $G$, then $G_x$ is   a union of inner components of $H_x$.   
    By assumption, every inner component of $H_x$ belongs to $K_{\ell}$, and therefore has a definable tree decomposition thanks to the induction assumption. Thus $G_x$ has a definable tree decomposition, as a fusion of hypergraphs with definable tree decompositions. Finally, adding a single border hyperedge of constant arity does not affect definable tree decompositions, since we can modify the definable tree decomposition by adding to all bags the vertices that are incident with the added border hyperedge.
\end{proof}

Thanks to the above claim and  the Merging Lemma, in order to finish the proof of the lemma, it remains to show that the introduction ordering in the  external tree decomposition $T$ is definable. 

If $n=2$ then there is not much to do: the external path decomposition has two nodes, and therefore  it is definable by an  \mso  formula with  two set parameters that say which vertices are in which bags. 
We are left with the case where $n > 2$.
The key to defining the external tree decomposition  is given in the following claim.   In the claim, the \emph{profile} of a path in the hypergraph $G$ is defined to be  the nodes of the external tree decomposition that introduce at least one vertex used by the path. 

\begin{claim}\label{claim:index-paths}
    Let $v$ and $w$ be non-port vertices of $G$ that are introduced, respectively,  in nodes $x \le y$ of the external tree decomposition $T$. If $\Pi$ is the set of inner paths in $G$ with source $v$ and target $w$,  then:
    \begin{enumerate}
        \item every inner path  in $\Pi$ has  profile that contains   $\set{x,x+1,\ldots,y-1,y}$;
        \item some inner path in $\Pi$ has profile  contained in  $\set{x-1,x,\ldots,y,y+1}$.
    \end{enumerate}
\end{claim}
\begin{proof}
    The proof  crucially depends on the assumptions that: (a)  all path decompositions  $P_1,\ldots,P_n$ have the same idempotent image under the  reachability homomorphism; and (b) the hypergraph $G$ is an inner component.
    \begin{enumerate}
        \item To prove item (1), we will show:
        \begin{description}
            \item[(*)] every non-port vertex $u$ appears in at most two bags of $T$.
        \end{description}
        Suppose that we have proved (*). Because bags containing $u$ must be consecutive by definition of path decompositions, it follows that  consecutive vertices in an inner path must be introduced in consecutive nodes of $T$. This implies that the profile of an inner path is an interval contained in $\set{1,\ldots,n}$. Such an interval must contain all numbers between $x$ and $y$, if the inner path has source introduced in $x$ and target introduced in $y$, thus proving (1).
        
        It remains to  prove (*). Toward a contradiction,  suppose that $u$ appears in at least three bags of the external tree decomposition $T$, which means that it appears in at least three nodes of the path decompositions $P_1,\ldots,P_n$. If we take  $x \in \set{1,\ldots,n}$ to be the second node where $u$ appears, then $u$  appears in the path decompositions both before and after $P_x$.  From the way that path decompositions are composed in the semigroup of path decompositions, it follows that $u$ must appear  in  both the first and the last bags of the path decomposition $P_x$. Therefore,  the reachability homomorphism gives a positive answer to the question ``is there are vertex with local colour $c$ that is both in the first and last bag?'', where $c$ is the colour of $u$ under the local colouring corresponding to the path decomposition $P$.  Since all of the path decompositions $P_1,\ldots,P_n$ have the same abstraction under the reachability homomorphisms, a positive answer to this question is given for all of these path decompositions, and therefore $u$ appears in all of them. This means that  $u$ is a port, contradicting the assumption of (*).
         \item   Consider the following partition of the path decomposition $P_1 \cdots P_n$: 
         \begin{align*}
         \myunderbrace{P_1 \cdots P_{x-1}}{left part}
         \myunderbrace{P_x \cdots P_{y}}{middle part}
         \myunderbrace{P_{x+1} \cdots P_{n}}{right part}
         \end{align*}
         By the assumption, all of the path decompositions $P_1,\ldots,P_n$ have the same image under the reachability homomorphism, and this is also the same as the image of the left, right and middle parts described above. 
         Consider now an inner path in $G$ that connects vertices $v$ and $w$, which must exist because  $G$ is a single inner component. 
           By  the previous item, this path can be split into segments of  three kinds: (a) paths in the left part that begin and end in its last bag; (b) paths in middle part; (c) paths in the right part that begin and end its first bag. 
         Each of these segments can be modified, without changing its source or target, so that its profile is contained in  $\set{x-1,x\ldots,y,y+1}$. For segments of the kind (b), there is nothing to do. For segments of kind (a), we use  the fact  that $P_{x-1}$ has the same image under the reachability homomorphisms  as the left part. A  similar argument holds for segments of kind (c).
    \end{enumerate}
\end{proof}

Using the above claim, we  define  in \mso the introduction ordering  of the external tree decomposition, thus completing the proof of the lemma.
    For a  vertex of $G$ that is introduced in node $x \in \set{1,\ldots,n}$ of the external tree decomposition,  define its \emph{colour} to be $x$ modulo 5. The colours of vertices can be represented using 5 set parameters. We begin with the following observation, which implies that the relation ``introduced in the same node of the external tree decomposition'' is definable by a constant \mso formula with set parameters:
    \begin{description}
        \item[(0)] two vertices $v$ and $w$ are introduced in the same node of the external tree decomposition if and only if they have the same colour, and they can be connected by an inner path that does not use all colours.
    \end{description}
    The right-to-left implication in (0) follows from the first item in Claim~\ref{claim:index-paths}, since vertices introduced in the same node can be connected by a path that uses at most three colours. The left-to-right implication of  (0) follows from the second item.

    A similar argument proves the following characterisation of the successor relation on nodes in the external tree decomposition:
    \begin{description}
        \item[(1)] the node introducing vertex $w$ is the successor of the node introducing $v$  if and only if the colour of $w$ is one plus the colour of $v$ (modulo 5), and there is a path connecting them that does not use all colours.
    \end{description}
    Again, the left-to-right implication uses the first item of Claim~\ref{claim:index-paths}, because vertices introduced in nodes $x$ and $x+1$ can be connected using a path with at most 4 colours. For the right-to-left implication, we use the second item of Claim~\ref{claim:index-paths},  and  the following observation, which explains the need for counting modulo 5:
    \begin{align*}
    \myunderbrace{y \equiv x +1}{modulo 5} \qquad \text{implies} \qquad y=x+1 \quad \text{or} \quad 
    \myunderbrace{y \le x-4 \quad \text{or} \quad y \ge x+6}{any connecting path must use all 5 colours}.
    \end{align*}
    The introduction ordering of the external tree decomposition is the transitive closure of the union of the two relations defined in (0) and (1), and therefore it is definable. 
\end{proof}

\exercisehead
\mikexercise{We say that a hypergraph is a tree if removing any hyperedge increases the number of connected components. Show that trees have unbounded pathwidth.}{}{}

\mikexercise{Give an algorithm which inputs a tree hypergraph, and computes its pathwidth.}{}

% \mikexercise{Consider two contexts $C,D$ of type $k  \to k$. Define their merge to be the result of fusing corresponding port vertices and the corresponding interface vertices. Is the pathwidth of $C,D$ enough to know the pathwidth of the merge? If not,  }{}

\subsection{Unbounded pathwidth}
\label{sec:general-case-treewidth}
In this section, we complete the proof of  Theorem~\ref{thm:definable-tree-decompositions}, using the    Merging Lemma  and the case of bounded pathwidth shown above. The idea is to show that for every tree decomposition of width at most $k$, its nodes can be partitioned into factors in a way that is depicted in the following picture:
\mypic{101}
Given such a factorisation, we will apply  the Merging Lemma, with the external tree decomposition having the factors as nodes, and with the  internal tree decompositions being definable thanks to  the results on bounded pathwidth.

To define the tree ordering on the factors from item (2) in the picture, we will use   paths with bounded overlap that connect the factors. Paths with bounded overlap, and their application to \mso definability, are explained in the following lemma.

\begin{lemma}[Bounded Overlap Lemma]\label{lem:guidance-system} 
    Let $\Pp$ be a family of paths in a hypergraph $G$ such that every vertex is used by at most $\ell \in \set{1,2,\ldots}$ paths. Then the binary relation
    \begin{align*}
    \set{(s,t) : \text{$s,t$ are vertices such that some path in $\Pp$ has source $s$ and target $t$}}
    \end{align*}
    is definable by  an \mso formula with set parameters that depends only on $\ell$ and the maximal arity of hyperedges, and not on the hypergraph $G$ or the family of paths $\Pp$.
\end{lemma}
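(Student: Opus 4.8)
The plan is to prove the Bounded Overlap Lemma by turning the family $\Pp$ into a bounded ``navigation system'' that an \mso formula can follow. First I would reduce to the case where all paths in $\Pp$ are simple: replacing a non-simple path by a shortcut (following a hyperedge from a repeated vertex so as to skip the intervening loop) preserves the source and target and cannot increase the overlap, since the shortcut visits a subset of the original vertices. Each path of $\Pp$ is oriented from its source to its target. Now build a multigraph $M$ on the vertices of $G$: for every path $v_0\xrightarrow{e_1}\cdots\xrightarrow{e_n}v_n$ of $\Pp$ and every index $i$, put a ``super-edge'' from $v_{i-1}$ to $v_i$ that remembers the hyperedge $e_i$ and the two positions $p,q$ with $e_i[p]=v_{i-1}$ and $e_i[q]=v_i$ (these positions are unique, since incidence lists are non-repeating). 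Since every vertex lies on at most $\ell$ paths and contributes at most two super-edges to each of them, $M$ has maximum degree at most $2\ell$; since the paths are simple, at most $\ell$ super-edges join any two vertices. By Vizing's theorem for multigraphs (or a crude greedy argument on the line graph) $M$ admits a proper edge-colouring using a number of colours bounded by a function of $\ell$ alone; fix such a colouring, so that every super-edge carries a colour and super-edges sharing an endpoint get distinct colours.

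Next I would record the local navigation data. For a hyperedge $e$, let $S_e$ be the set of triples $(p,q,c)$ such that some super-edge of colour $c$ uses $e$ directed from $e[p]$ to $e[q]$. For a vertex $v$, let $\mathsf{start}(v)$ (resp.\ $\mathsf{end}(v)$) be the set of colours of first (resp.\ last) super-edges of paths of $\Pp$ issuing from (resp.\ ending in) $v$, and let $\mathsf{next}_v$ be the partial map sending the colour of an incoming internal super-edge at $v$ to the colour of the outgoing super-edge continuing the same path. Properness of the colouring makes all these well defined: at a vertex there is at most one super-edge of a given colour, so ``the super-edge of colour $c$ at $v$'' is unambiguous. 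Since the number of colours is bounded in terms of $\ell$ and the number of positions is bounded by the maximal arity $r$ of hyperedges, each $S_e$ ranges over a fixed finite set, and so do $\mathsf{start}$, $\mathsf{end}$, $\mathsf{next}_v$; hence all of this information can be encoded by finitely many set parameters (partitioning vertices and hyperedges by their datum), the number of parameters depending only on $\ell$ and $r$.

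The point of the colouring is that it makes navigation deterministic. Consider \emph{configurations} $(v,c)$ and the step relation ``from $(v,c)$ follow the unique super-edge of colour $c$ outgoing from $v$ to the vertex $v'$ it reaches, and move to $(v',\mathsf{next}_{v'}(c))$''. Following a path $v_0\cdots v_n$ of $\Pp$ with super-edge colours $c_1,\dots,c_n$ produces exactly the run $(v_0,c_1),\dots,(v_n,c_n)$, which then has no successor (the last super-edge is not internal), with $c_1\in\mathsf{start}(v_0)$ and $c_n\in\mathsf{end}(v_n)$; conversely every maximal walk from a configuration $(s,c_1)$ with $c_1\in\mathsf{start}(s)$ arises in this way from a unique path of $\Pp$, and the first configuration of the form $(t,c)$ with $c\in\mathsf{end}(t)$ that it meets is its last one (each such uniqueness claim is again an immediate consequence of properness). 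Therefore $(s,t)$ lies in the target relation $R$ iff either $s=t$ and $s$ carries a length-zero path of $\Pp$ (recorded by one extra set parameter), or there is a colour $c_1\in\mathsf{start}(s)$ such that the deterministic walk from $(s,c_1)$ reaches a configuration $(t,c)$ with $c\in\mathsf{end}(t)$. Since there are boundedly many colours, a set of configurations is just a bounded tuple of vertex sets, the step relation is first-order definable from the set parameters encoding the navigation data, and deterministic reachability in the configuration space is expressible in \mso by the standard trick: $(t,c)$ is reachable from $(s,c_1)$ iff every tuple of sets containing $(s,c_1)$ and closed under the step relation contains $(t,c)$. This produces the required \mso formula $\varphi$ with set parameters, depending only on $\ell$ and $r$.

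The main obstacle is the middle step: recognising that a proper edge-colouring of the super-edge multigraph is exactly what is needed to make path-following a deterministic, \mso-readable process, and then verifying faithfully the walk-to-path correspondence in the hypergraph setting, where a single hyperedge may be shared by many paths and one must repeatedly lean on properness of the colouring. The purely combinatorial input — bounded overlap gives bounded degree and bounded multiplicity, hence bounded edge-chromatic number — is standard, and the encoding of the local data by set parameters is routine bookkeeping.
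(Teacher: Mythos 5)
Your proof is correct, and it arrives at the lemma by a route that is related to but noticeably more elaborate than the paper's. Both proofs equip each vertex with a bounded palette of colours and then cast the source-to-target relation as reachability in an \mso-definable configuration graph, using the standard second-order least-fixpoint formula for transitive closure. The difference is in the colouring scheme. The paper colours the vertices \emph{of each path separately}: for every $P\in\Pp$ it picks a map from the vertices used by $P$ into $\{1,\ldots,\ell\}$, subject only to the constraint that a vertex shared by several paths receives pairwise distinct colours from them --- a greedy assignment using exactly $\ell$ colours. Its one-step relation $(v,i)\to(w,j)$ is then simply ``some path $P$ has consecutive vertices $v,w$ that $P$'s own map colours $i$ and $j$''. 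You instead reduce to simple paths, pass to the directed multigraph of super-edges, properly \emph{edge}-colour it with a bounded palette (Vizing, or a greedy bound on the line graph), and record navigation tables $\mathsf{start}$, $\mathsf{end}$, $\mathsf{next}_v$ in set parameters so that the walk becomes deterministic. The determinism makes the walk-to-path correspondence nearly immediate, at the cost of the extra machinery (super-edges remembering positions in incidence lists, Vizing's bound, the $\mathsf{next}$ tables). The paper dispenses with determinism entirely: its step relation may branch when a path revisits a vertex, but nondeterministic reachability is all that \mso needs. In both cases the essential observation is the same --- bounded overlap gives a bounded labelling that disambiguates paths locally, after which the source-target relation is a reachability property in a configuration graph of bounded width.
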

\begin{proof} Choose a family of colourings 
    \begin{align*}
    \myunderbrace{\text{vertices used by $P$} \quad
    %  \stackrel {\lambda_P} 
     \to   \quad \set{1,\ldots,\ell}}{one colouring for each path $P \in \Pp$},
    \end{align*}
     so that   if a vertex appears in two different paths from $\Pp$, then it  has different colours under the corresponding colourings. Such a family of colourings can easily be obtained using a greedy algorithm, thanks to the assumption that every vertex is used by at most $\ell$ paths.
      Define  $\to$ to be the  binary relation on   pairs of the form (vertex of $G$, number in $\set{1,\ldots,\ell$}), such that  
      \begin{align}\label{eq:one-step-paths}
      (v,i) \to (w,j)
      \end{align}
      holds if there is some path $P \in \Pp$ where $v$ and $w$ are consecutive vertices which have have colours $i$ and $j$  under the colouring corresponding to $P$. A vertex pair $(s,t)$ belongs to the  relation in the statement of the lemma if and only if there exist  colours $i,j \in \set{1,\ldots,\ell}$ such that:   $(s,i)$  can reach  $(t,j)$ using finitely many steps of $\to$,   $(s,i)$ has no incoming edges with respect to $\to$, and $(t,j)$ has no outgoing edges with respect to $\to$.  Therefore, to prove the  lemma, it will be enough to show that both $\to$ and its transitive closure can be defined in \mso, as described below.

      For every fixed choice of  colours $i$ and $j$, we can view~\eqref{eq:one-step-paths} as a binary relation on vertices. This relation can be described in \mso using set parameters that range over sets of hyperedges, as described below:
      \begin{align*}
      \bigvee_{n,m}\quad  \exists  e \ e \in \myunderbrace{E_{i,j,n,m}}
    {set of hyperedges $e$ such that \\
    \scriptsize for some  path $P \in \Pp$,\\ 
    \scriptsize 
    $P$ uses a step of the form \\
    \scriptsize $e[n] \stackrel e \to e[m]$\\
    \scriptsize  and  the colouring of $P$ satisfies
    \\
    \scriptsize $e[n]\mapsto i$ and $e[m] \mapsto j$. }\ 
    \land\  e[n]=v \ \land\  e[m]=w .
      \end{align*}
      In the above, $n$ and $m$ range over positions in incidence lists, and therefore these numbers are bounded by the maximal arity of hyperedges. 
      Consider now the transitive reflexive closure 
      \begin{align}\label{eq:many-step-paths}
        (v,i) \to^* (w,j)
        \end{align}
       of the relation $\to$. To define this transitive closure, we use the usual \mso formalisation of least fix-points. For every choice of colours $i$ and $j$, the relation~\eqref{eq:many-step-paths} can be defined using \mso in terms of the relations from~\eqref{eq:one-step-paths}, as follows:
      \begin{align*}
        \forall \myoverbrace{V_1,\ldots,V_\ell}{sets of \\
        \scriptsize vertices}\ \myoverbrace{ \big(v \in V_i  \land \bigwedge_{n,m} \forall x\  (x \in V_n  \land (x,n) \to (y,m)) \Rightarrow y \in V_m \big)  \Rightarrow v \in V_j}
        {if the set $\set{(x,i) : x \in V_i}$ contains $(v,i)$ and is closed under $\to$, then it contains $(w,j)$ }
    \end{align*}
      It follows that the transitive closure in~\eqref{eq:many-step-paths} can be defined using a constant formula of \mso with set parameters, for every choice of $i,j \in \set{1,\ldots,\ell}$. 
\end{proof}

\paragraph*{The Two Path Lemma.} The next step in the proof is the  Two Path Lemma, which will provide the paths with small overlap that are needed to connect factors in a tree decomposition. The Two Path Lemma says that for every tree decomposition of width at most $k$, and every choice of source and target vertices from the same  bag in the tree decomposition, one can find a factor of the tree decomposition, such that the corresponding  torso has bounded pathwidth and  contains two roughly disjoint paths connecting  the source and target.  

The Two Path Lemma  uses a mild assumption on tree decompositions, defined as follows. 
We say that a tree decomposition  is \emph{sane} if for every subtree, the  corresponding torso is inner connected, which means that it consists of a single inner component.  If a hypergraph is inner connected, then it has a sane tree decomposition of optimal width. Indeed, if we take any tree decomposition, and we find a subtree whose  torso is not inner connected, then we can distribute  vertices introduced by the subtree into separate subtrees, one for each inner  component. 
% Finally, if a hypergraph is not inner connected, then it is the fusion of its inner components. For these reasons, we can consider without loss of generality tree decompositions which are sane. 

\begin{lemma}[Two Path Lemma]\label{lem:biconnected}
    Let $T$ be a sane tree decomposition of width $k$, with two distinguished vertices (call them source and target) that belong to the bag of some  node  $x$.   Then there exists a factor $X$ with root $x$ such that:
    \begin{enumerate}
        \item the torso  $T/X$ has pathwidth at most $2k$;
        \item the source and target can be connected by  two inner paths in the torso  $T/X$, such that every border hyperedge of $T/X$ is used by at most one of the paths.
    \end{enumerate}
\end{lemma}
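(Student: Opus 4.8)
The plan is to reduce the statement to a purely combinatorial fact about tree decompositions, phrased in terms of a single node $x$ and its neighbourhood in the decomposition tree. Fix the sane tree decomposition $T$ of width $k$ and the node $x$ whose bag contains the source $s$ and the target $t$. The key observation is that any two internally-disjoint-ish paths between $s$ and $t$ must, by the connectivity properties of tree decompositions, pass through the adhesions of the subtrees hanging off of $x$ and off of the branch from the root down to $x$; the paths ``see'' a subtree only through its (at most $k+1$)-element adhesion, so we can collapse each subtree into a single border hyperedge and work inside a bounded-width object.

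First I would make precise the candidate factor $X$. Starting from the root node $x$ of $X$, I would greedily decide, for each subtree hanging below $x$, whether to include it in $X$ or to leave it as a border of $X$: include a subtree in $X$ exactly when the source and target are ``separated by it'' in the sense that every $s$--$t$ path must enter it, i.e.\ when the adhesion of that subtree lies on some $s$--$t$ path in an essential way; otherwise cut it off and represent it by a border hyperedge. Since $T$ is sane, each torso that gets cut off is inner connected, so cutting produces a legitimate border hyperedge of bounded arity $\le k+1$. For item (1), pathwidth at most $2k$: I would argue that the factor $X$ constructed this way has a \emph{path} decomposition, not just a tree decomposition, because the subtrees that remain in $X$ must be linearly ordered along the $s$--$t$ connectivity — any two of them whose adhesions are both essential for connecting $s$ to $t$ must be comparable in the tree order (otherwise there would be a bag of $T$ containing both adhesions, forcing them into a single subtree). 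Linearising the remaining subtrees and taking, in each bag, the union of the two relevant adhesions gives bags of size at most $2(k+1)$, hence pathwidth at most $2k$ after the usual minus-one accounting. This is essentially the same idea that was used in Example~\ref{example:cycles-definable-tree-decompositions} and in the cycle/clique analyses, where a linear traversal keeping one ``anchor'' vertex in every bag gives a path decomposition.

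For item (2), the two paths: inside the torso $T/X$, the vertices $s$ and $t$ lie in a common bag (the root bag), and $T/X$ is inner connected by saneness, so there is at least one inner $s$--$t$ path. To get two paths that share no border hyperedge, I would use a Menger-type argument on the torso: either there is a single border hyperedge that is a cut hyperedge separating $s$ from $t$, in which case I shrink the factor further (move that border hyperedge's subtree into $X$, or rather restructure so the cut disappears — here is where the construction of $X$ must be done carefully, choosing $X$ to be \emph{inclusion-minimal} among factors with root $x$ whose torso still admits two border-disjoint paths), or else no single border hyperedge is a cut, and two border-disjoint inner $s$--$t$ paths exist. Concretely I would take $X$ to be a minimal factor rooted at $x$ such that $s$ and $t$ are still ``$2$-border-connected'' in $T/X$, prove that minimality forces the pathwidth bound (the torso cannot be ``too branchy'' without a redundant border), and prove existence of the two paths directly from the non-cut property.

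The main obstacle I anticipate is item (2), and specifically the interaction between the two requirements: the factor $X$ must simultaneously be small enough to have bounded pathwidth \emph{and} large enough to carry two border-disjoint paths. These pull in opposite directions, so the construction of $X$ cannot be a naive greedy sweep; I expect to need a careful minimality/exchange argument (pick $X$ minimal subject to the two-path condition, then show minimality automatically yields the pathwidth bound), together with a Menger-style lemma for hypergraphs where ``cut'' means cut \emph{hyperedge} rather than cut vertex. The pathwidth bound of $2k$ (rather than $k$) is the slack that makes this possible: it lets both relevant adhesions coexist in every bag, which is exactly what is needed to route two paths that each touch the border at most once. Once the Two Path Lemma is in place, it will be fed — together with the Bounded Overlap Lemma (to connect factors) and Theorem~\ref{thm:special-case-of-path-width} (to get definable tree decompositions of the bounded-pathwidth torsos) — into the Merging Lemma to finish the proof of Theorem~\ref{thm:definable-tree-decompositions}.
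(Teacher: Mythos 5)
Your high-level plan — grow a factor rooted at $x$, apply a hypergraph Menger argument, and control the width while you do it — matches the shape of the paper's proof, and you correctly identify both the role of Menger's theorem and the central tension (the factor must be big enough to support two paths but thin enough to have small pathwidth). But the concrete argument for item (1), the pathwidth bound, does not hold up, and this is precisely the hard part of the lemma.

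Your linearization claim — that subtrees kept inside $X$ because they are ``essential'' for $s$--$t$ connectivity must be pairwise comparable in the tree order, since otherwise a single bag would contain both adhesions and force them into one subtree — is false as stated. Two distinct subtrees hanging off the very node $x$ itself are incomparable in the tree order, and their adhesions both lie in the bag of $x$; nothing forces them into one subtree, yet both can plausibly sit between $s$ and $t$. Your fallback, taking $X$ inclusion-minimal subject to the two-border-disjoint-paths condition and deducing the pathwidth bound from minimality alone, is asserted rather than proved, and I do not see why it should be true: a minimal such factor can still branch, because the two paths may be forced to diverge through different subtrees. Neither version of the pathwidth argument is airtight, so the proof has a genuine gap exactly where the lemma has content.

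The paper's proof handles this differently, and the difference matters. It grows $X$ one node at a time starting from $\set{x}$, and the invariant it maintains is not a structural property of $X$ but an \emph{explicit path decomposition} of the port-forgotten torso, of width at most $k$, with the source in the first bag, the target in the last bag, and every separating border hyperedge covered by some internal node of the path decomposition that cleanly splits the rest of the vertices to its two sides. When a border hyperedge $e$ (corresponding to a node $y$ outside $X$) separates $s$ from $t$, this invariant guarantees a slot $z$ in the current path decomposition into which the torso $T/\set y$ — itself of width at most $k$ — can be spliced, yielding a new path decomposition that still satisfies the invariant. The iteration terminates when no separating border hyperedge remains; adding the ports back to the path decomposition costs at most $k$ extra per bag, giving width $2k$; and Menger then supplies the two border-disjoint paths because the torso is inner connected (by saneness) and no border hyperedge is a cut. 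So the paper does not deduce a path decomposition from an abstract ordering of subtrees; it \emph{carries} the path decomposition through the construction, using the separating-hyperedge condition to know where to insert each new piece. That carried invariant is the ingredient your proposal is missing, and without it neither the greedy sweep nor the minimality argument closes the gap.
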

\begin{proof}
    To create the two paths from the conclusion of the lemma, we  use  (a hypergraph version of) the Menger's Theorem about cuts and disjoint paths, as described below. For a hypergraph with distinguished  source and target vertices, define a   \emph{separating hyperedge} to be any hyperedge that must appear on every path  from the source to the target. One could also talk about separating vertices, but we only need separating hyperedges here. Here is a picture:
\mypic{95}
Menger's Theorem\footnote{
    The hypergraph version of Menger's Theorem that we use here can be easily inferred from the classical statement of undirected graphs, which can be found here:
    \incite[Theorem 3.3.3.]{diestel}
    For a self-contained proof, see the exercises. 
} says that if the source and target in a hypergraph can be connected by some path, then they can be connected by two paths such that every non-separating hyperedge is used by at most one of the two paths.

    To prove the lemma, we  iterate the following claim starting with $X = \set x$. In the claim, use an operation ``forget all ports'' which inputs a hypergraph, and outputs the same hypergraph, except that no vertices are ports any more. If we forget all ports in a torso, then the accompanying path decompositions do not need to have the adhesion of the torso in the first bag.
    \begin{claim}
        Let $X$ be a factor of the tree decomposition $T$ such that:
        \begin{description}
            \item[(*)] the root of $X$ is $x$ and  there is a path decomposition of 
            \begin{center}
                forget all ports in $T/X$
            \end{center}
            which has width at most $k$ and satisfies the following property:
            \begin{description}
                \item[(**)] the source and target vertices from the assumption lemma are, respectively, in the first and last bags, and every separating hyperedge is covered by some node  such that:
                \mypic{125}
            \end{description}

        \end{description}
        Then either $T/X$ has no border hyperedges  which separate the source from the target, or otherwise one can add a new node to $X$ so that it still  satisfies (*).        
    \end{claim}
    % Before proving the claim, we explain the reason why $2k$ is used instead of $k$. Indeed, suppose that we want to start the induction from the claim, by applying the claim to the factor $X = \set x$. 
    \begin{proof}
        Suppose that there is a border hyperedge  $e$ in the torso $T/X$ which separates the source from the target. This border hyperedge corresponds to some node $y$ of the tree decomposition that is in   the border of $X$. We add $y$ to the factor $X$, while preserving the invariant (*).  Take the path decomposition from the assumption that $X$ satisfies (*).  By  (**), there is a  node $z$ in this path decomposition  that covers the separating hyperedge $e$, and such that every vertex  of the torso $T/X$ appears either only to the left, or only to the right of $z$. Replace node $z$ by the torso  $T/\set y$, with the separating hyperedges put into separate bags so that (**) is still satisfied. 
    \end{proof}

    Start with $\set x$ and keep iterating the  above claim, until a factor $X$ is reached such that $T/X$ has no border hyperedge that separates the source from the target.  Take the path decomposition from the claim and, if necessary, add the ports of the torso to a prefix the path decomposition so that they are also present in the first bag (as required for path decompositions). This modification can make the width of the path decomposition go up from $k$ to $2k$. 
    
    Because the tree decomposition is sane, the torso  $T/X$ contains at least one path that connects the source with the target. Therefore, thanks to Menger's Theorem there are two paths from the source to the target  in $T/X$ which are disjoint on border hyperedges. 
    \end{proof}

\paragraph*{Partitioning a tree decomposition into factors of bounded pathwidth.} To finish the proof of Theorem~\ref{thm:definable-tree-decompositions}, we will iterate the Two Path Lemma to show that every sane  tree decomposition can be partitioned into  factors, so that each factor has a torso with bounded pathwidth, and the partition into factors  can be defined by a constant \mso formula with set parameters. The picture is the same as in the beginning of the section, but we repeat it for the reader's convenience:
\mypic{101}
\begin{lemma}\label{lem:family-of-connected-factors}
    Let $T$ be a sane tree decomposition of width $k$. There is a family of factors $\Xx$ which partitions the nodes of $T$ such that:
    \begin{enumerate}
        \item For every factor $X \in \Xx$, the  torso $T/X$ has pathwidth at most $2k$.
        \item There is  an \mso formula with set parameters, which only depends on $k$, and which defines the following binary relation on vertices:
        \begin{align*}
        \text{factor from $\Xx$ that introduces $v$} 
        \quad \myunderbrace{\le}{tree ordering on $\Xx$ inherited from $T$}\quad
        \text{factor from $\Xx$ that introduces $w$}.
        \end{align*}
    \end{enumerate}
\end{lemma}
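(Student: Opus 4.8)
The plan is to build $\Xx$ recursively from the root of $T$, using the Two Path Lemma (Lemma~\ref{lem:biconnected}) to create each factor, and then to recover the tree order on $\Xx$ inside the underlying hypergraph $G$ by feeding a suitable family of paths of $G$ to the Bounded Overlap Lemma (Lemma~\ref{lem:guidance-system}). Concretely, I would attach to every node of $T$ that becomes a factor root an ordered pair of \emph{portal} vertices from its bag; at the root of $T$ one can take any vertex twice (or simply treat the first factor separately). Given a node $x$ with portals $s,t$ lying in its bag, apply the Two Path Lemma to $T$ and $x$: this yields a factor $X$ with root $x$ such that $T/X$ has pathwidth at most $2k$, together with two inner paths $P^X_1,P^X_2$ in $T/X$ from $s$ to $t$ with every border hyperedge of $T/X$ used by at most one of them. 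Put $X$ into $\Xx$; each border hyperedge of $T/X$ corresponds to a node $y$ in the border of $X$, and we recurse into the subtree of $y$, taking as the portals of $y$ the two adhesion vertices of $y$ through which the (unique, if any) path $P^X_i$ using $y$'s border hyperedge enters and leaves the subtree of $y$, and arbitrary portals of $y$ if neither $P^X_1$ nor $P^X_2$ uses that border hyperedge. Every node of $T$ lands in exactly one factor, so $\Xx$ partitions the nodes of $T$, and condition~(1) of the lemma is exactly the pathwidth bound delivered by the Two Path Lemma.

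For condition~(2), note first that, since the factors partition the nodes of $T$, a factor $X$ is an ancestor of a factor $Y$ in the tree order on $\Xx$ if and only if the root of $X$ is a $T$-ancestor of the root of $Y$; writing $r(v)$ for the root of the factor introducing a vertex $v$, condition~(2) asks us to define, in $G$, the relation $r(v)\le_T r(w)$. I would obtain this as the \mso-definable transitive--reflexive closure of the union of two base relations: (a) ``$v$ and $w$ are introduced in the same factor'', and (b) ``the factor introducing $w$ is a child in $\Xx$ of the factor introducing $v$'' (using that every node, hence every factor, introduces at least one vertex, any chain in the factor tree can be realised by a chain of vertices). It therefore remains to define (a) and (b). The mechanism is to turn the two paths of each factor into actual paths of $G$: one keeps only the portion of each $P^X_i$ lying in the solid part of $X$ (the vertices and hyperedges introduced in $X$), so that whenever $P^X_i$ jumps over a child $Y$ through $Y$'s border hyperedge, that portion ends at $Y$'s portal $s_Y$ and the corresponding solid portion of one of $Y$'s own paths starts there too. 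Let $\Pp$ be the family of all these solid portions. A vertex introduced in a factor $X$ lies on at most two paths of $\Pp$ (the two of $X$), and the only vertices shared between the path portions of different factors are root-adhesion vertices, of which each factor has at most $k+1$; colouring these adhesion vertices with $O(k)$ colours in the spirit of the Merging Lemma so that no colour repeats along a relevant chain, one records $\Pp$ together with its portals and adhesions using only $O(k)$ set parameters, and the Bounded Overlap Lemma then makes the ``common endpoints of a $\Pp$-path'' relation \mso-definable. From this, relation (a) is definable because two vertices lie in the same factor exactly when, relative to these landmarks, they are caught between the same root-adhesion from above and the same child-adhesions from below; and relation (b) is definable because passing from a parent factor to a child factor is witnessed by a $\Pp$-path portion whose endpoint is a marked child-portal. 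Taking the usual \mso least-fixpoint formula for the transitive closure of (a)$\cup$(b) yields condition~(2), with a formula depending only on $k$.

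The step I expect to be the main obstacle is precisely the bounded-overlap analysis of $\Pp$ and the consequent definability of the ``same factor'' and ``child factor'' relations: one must argue carefully that, thanks to sanity of $T$, to the paths being inner, and to each border hyperedge of a torso being hit by at most one of its two paths, the solid portions really do overlap only in the small root-adhesions, and that a bounded-colour encoding of all these adhesions suffices to pin the partition down in $G$. The remaining ingredients --- the recursion itself, the reduction of condition~(2) to the two base relations, and the transitive-closure formula --- are routine given the Two Path Lemma and the Bounded Overlap Lemma.
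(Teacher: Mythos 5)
Your plan overlaps with the paper's --- iterate the Two Path Lemma to produce the factors, then recover the tree order from a family of paths via the Bounded Overlap Lemma --- but two concrete gaps break the argument as written. First, your path family is too thin. For each factor $X$ you keep only fragments of the two $s$--$t$ paths, which touch at most two vertices of $X$'s root bag. The paper's invariant instead maintains, for every factor $X$, a \emph{star}: a central vertex $u_X$ introduced in $X$ together with a path in $\Pp$ from $u_X$ to \emph{every} vertex of the root bag of $X$. It is this star that lets the Bounded Overlap Lemma deliver the relation ``$(v_X,w)$: $w$ lies in the adhesion of the root node of $X$'', after which sanity of $T$ gives the subtree relation (a vertex lies below $x$ iff it reaches $v_X$ by an inner path avoiding the adhesion of $x$), and hence the tree order on factors. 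With only two portals per factor you cannot locate the rest of the adhesion, so the step in which you bracket vertices between ``the same root-adhesion from above and the same child-adhesions from below'' has nothing to anchor it. Second, orphaned children: when a border hyperedge of $T/X$ is used by \emph{neither} $P^X_1$ nor $P^X_2$, your construction gives the corresponding child arbitrary portals, and the paths you build inside that child are disconnected from those of $X$. Nothing in $\Pp$ then witnesses that this child lies below $X$, so your base relation~(b) is not definable for it.

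The paper's proof avoids both problems by maintaining $\Pp$ as a single global family of paths in the growing torso $T/(\bigcup\Xx)$, rather than computing each factor's paths once and stitching solid portions. When a new factor $X$ is added, the paper adds its star paths and then \emph{reroutes} every existing path that previously crossed the removed border hyperedge $e$, replacing the single $e$-step by an inner path through $T/X$; the constant overlap bound ($2k^3+k$ on vertices, $2k^3$ on border hyperedges) is obtained by choosing the most heavily used profile $(s,t)$ across $e$, applying the Two Path Lemma to exactly that source/target, and splitting the profile-$(s,t)$ paths evenly between the two resulting paths $P_1$ and $P_2$. This rerouting-and-splitting bookkeeping is precisely what your ``main obstacle'' paragraph is pointing at, and the solid-portion stitching you propose does not supply it.
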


Before proving the lemma, we use it to complete the proof of Theorem~\ref{thm:definable-tree-decompositions}.  By Claim~\ref{claim:reduce-to-inner-connected}, it is enough to show that for every $k \in \set{0,1,\ldots}$, the inner connected hypergraphs of treewidth at most $k$ have 
definable tree decompositions of bounded width. If a hypergraph is inner connected, then it has  a sane tree decomposition of optimal width. Apply Lemma~\ref{lem:family-of-connected-factors} to this tree decomposition, yielding a family of factors $\Xx$. Define a new tree decomposition, where the nodes are the factors from $\Xx$, with the tree ordering inherited from the original tree decomposition. By item (2) of  the lemma, the new tree decomposition is  definable, and by item (1) of the lemma, each node  has a torso of pathwidth at most $2k$.  Therefore, we can apply the Merging Lemma to get a definable tree decomposition of bounded width, with the internal tree decompositions coming from Theorem~\ref{thm:special-case-of-path-width} about bounded pathwidth. 

It
remains to prove the lemma. 

\begin{proof}
    We iterate the  Two Path Lemma, to get the family of factors that satisfy item (1), together with a family of paths with bounded overlap that will be used to ensure item (2).  One  step of the iteration is described in the following claim. 
\begin{claim}
    Suppose that $\Xx$ is a family of disjoint factors in $T$  which satisfies:
    \begin{description}
        \item[(*)] the union $\bigcup \Xx$ is a prefix  of the tree decomposition, i.e.~it is closed under ancestors,  and:
        \begin{enumerate}
            \item for every factor $X \in \Xx$, the  torso $T/X$ has pathwidth at most $2k$;
            \item there is a   family of paths $\Pp$ in the torso $T/(\bigcup \Xx)$ such that:
            \begin{enumerate}
                \item \label{fact-family:stars} for every factor $X \in \Xx$  there is some vertex $u$ introduced in $X$  such that for every vertex $v$ in the root bag of $X$ there is a path from $u$ to $v$ in $\Pp$;
                \item \label{fact-family:disjoint}  every vertex  of $T/(\bigcup \Xx)$ is used by at most $2k^3+k$ paths from $\Pp$;
                \item \label{fact-family:border-disjoint} every border hyperedge of $T/(\bigcup \Xx)$ is used by at most $2k^3$ paths from~$\Pp$.
            \end{enumerate}
        \end{enumerate}
    \end{description}
    Then either $\bigcup \Xx$ is  all nodes of the tree decomposition, or otherwise one can add a new factor to $
    \Xx$ so that the resulting family  still satisfies (*).
\end{claim}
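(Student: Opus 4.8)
The plan is to add a factor rooted at a carefully chosen node and to update the path family in lockstep. If $\bigcup\Xx$ is already all of $T$ there is nothing to prove, so assume it is a proper prefix (when $\bigcup\Xx=\emptyset$ the construction below degenerates to adding the first factor, with no old paths to worry about). Since $\bigcup\Xx$ is ancestor-closed and proper, I would first pick a node $x\notin\bigcup\Xx$ all of whose proper ancestors lie in $\bigcup\Xx$. Every factor $X$ rooted at $x$ is then automatically disjoint from $\bigcup\Xx$ — no descendant of $x$ can lie in the ancestor-closed set $\bigcup\Xx$ — and $\bigcup\Xx\cup X$ is again a prefix with the inherited tree order. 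So it is enough to produce a factor $X$ rooted at $x$ with $T/X$ of pathwidth at most $2k$, together with an extension of $\Pp$ to the enlarged torso $T/(\bigcup\Xx\cup X)$ that satisfies the star condition for $X$ and for the old factors and that still has bounded overlap.

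To build $X$, let $B$ be the root bag of a factor rooted at $x$, that is, the adhesion of $x$; it has at most $k+1$ vertices, all lying in the bag of $x$. For each pair $\{v,w\}\subseteq B$ I would apply the Two Path Lemma (Lemma~\ref{lem:biconnected}) to the sane tree decomposition $T$ with $v$ as source and $w$ as target, obtaining a factor $X_{v,w}$ rooted at $x$ with $T/X_{v,w}$ of pathwidth at most $2k$ and two inner paths from $v$ to $w$ inside $T/X_{v,w}$, each border hyperedge of $T/X_{v,w}$ used by at most one of them. Then set $X:=\bigcup_{\{v,w\}\subseteq B}X_{v,w}$; a union of factors with a common root is again a factor rooted at that node. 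The case $|B|\le 1$ is degenerate and handled separately: one takes a trivial factor formed from $x$ and a hyperedge witnessing condition (B), and the star condition is then immediate.

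Two steps carry the real content. \textbf{First}, one must check that $T/X$ still has pathwidth at most $2k$. The idea is that each $T/X_{v,w}$ carries a path decomposition of width at most $2k$ whose first bag is $B$, and that every node of $X\setminus X_{v,w}$ lies strictly below the border of $X_{v,w}$ inside some other $X_{v',w'}$; restricting the path decomposition of that $T/X_{v',w'}$ to the sub-factor hanging below the relevant border node yields a path decomposition of the corresponding sub-torso, again of width at most $2k$ (sub-factors of a factor produced by the Two Path Lemma inherit thin path decompositions). Splicing these path decompositions along the tree structure of $X$, while keeping $B$ in the first bag throughout, assembles a path decomposition of $T/X$ of width at most $2k$. \textbf{Second}, one extends $\Pp$: add the $2\binom{|B|}{2}$ inner paths just produced (which are inner paths of the enlarged torso, since $B$ was present before $X$ was added); fix $v_1\in B$, choose a vertex $u$ introduced in $X$ that lies on one of the Two-Path cycles through $v_1$, and add, for each $v_i\in B$, the concatenation of a sub-path of that cycle with a $v_1\leftrightarrow v_i$ path, so that $u$ witnesses the star condition for $X$; finally, every old path that passed through the former border hyperedge $e_x$ of $T/(\bigcup\Xx)$ used it as a single step between two vertices of $B$, and this step is rerouted along one of the new $v\leftrightarrow w$ paths inside $T/X$, which leaves the star condition for the old factors intact.

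It remains to check the overlap bounds for the new family. Vertices and border hyperedges of $T/(\bigcup\Xx)$ that lie outside $T/X$ keep their old, admissible loads; a border hyperedge of $T/X$ is used by at most one path per pair among the at most $\binom{k+1}{2}$ new pair-paths, plus the star-paths and the rerouted old paths, and bounding the number of old paths through $e_x$ by the previous overlap bound at $e_x$ gives a total below $2k^3$, while for vertices the analogous count stays below $2k^3+k$, the extra $+k$ absorbing the at most $k+1$ root-bag vertices of $X$, which are touched by both old and new paths. This re-establishes (*) for $\Xx\cup\{X\}$. The hard part will be precisely this last stretch: proving that the union of the Two-Path-Lemma factors still has torso pathwidth at most $2k$ (which rests on sub-factors of a thin factor being thin), and simultaneously controlling, across the rerouting of the old paths, the constants $2k^3$ and $2k^3+k$ — these are exactly the numbers that make the induction of Lemma~\ref{lem:family-of-connected-factors} close, so pinning them down is the delicate bookkeeping of the proof.
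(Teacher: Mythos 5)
There is a genuine gap, and it sits exactly where you flagged as ``the hard part.''  Your plan applies the Two Path Lemma once per pair $\{v,w\}\subseteq B$ and then takes $X=\bigcup_{v,w}X_{v,w}$, but the torso of such a union need not have bounded pathwidth. Pathwidth, unlike treewidth, is not stable under gluing several factors at a common root: if three of the $X_{v,w}$ branch off into disjoint long subtrees below $x$, the natural decomposition of $T/X$ is a tree, not a path, and no splicing along ``the tree structure of $X$'' can force it back into a path without blowing up the width. The sketch that ``sub-factors of a factor produced by the Two Path Lemma inherit thin path decompositions'' is true of a single $X_{v,w}$, but it does not control the union. The paper sidesteps this entirely by invoking the Two Path Lemma only \emph{once}, for a single carefully chosen pair $(s,t)$, so that $X$ itself is the output of the lemma and its torso has pathwidth at most $2k$ by fiat.

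The overlap accounting does not close either. Rerouting every old path through $e_x$ using the $2\binom{|B|}{2}$ new pair-paths, adding $\le k$ star paths, and inheriting up to $2k^3$ old paths through $e_x$ gives, at a border hyperedge of $T/X$, a load of roughly $2k^3+k^2+O(k)$, which exceeds the required bound $2k^3$. The whole point of the paper's argument is that no such naive addition happens: it selects $(s,t)$ as the \emph{heaviest} profile (say with $\ell$ paths), only those $\ell$ paths are rerouted through the two border-disjoint paths $P_1,P_2$ (each carrying $\le\lceil\ell/2\rceil$), and paths with other profiles get one arbitrary inner path. The bound at a border hyperedge then splits into the case $\ell<2k$ (everything is light, total $<k^2(2k-1)+k\le 2k^3$) and $\ell\ge 2k$ (then $\lceil\ell/2\rceil+k\le\ell$, so the count does not grow past what $e$ carried before). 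Without picking the heaviest pair and splitting its load across two edge-disjoint paths, that cancellation is unavailable.  So both the pathwidth claim and the arithmetic need the single-application, heaviest-pair strategy; your multi-application strategy cannot be patched to deliver the same constants.
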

\begin{proof}
    Suppose that the prefix $\bigcup \Xx$ is not all nodes in the tree decomposition. Choose some minimal node $x$ outside this prefix; this node corresponds to a border hyperedge $e$ in the  torso $T/(\bigcup \Xx)$. By definition of torsos, the vertices incident to this border hyperedge are the adhesion of $x$. The adhesion has size at most $k$, since bags have size at most $k+1$ and the adhesion of a non-root node is a proper subset of its bag (otherwise a node would have the same bag as its parent). For two  vertices $v,w$  from the adhesion of $x$, we say that a path $P \in \Pp$ has \emph{profile} $(v,w)$ if it contains an infix of the form
    \begin{align*}
    v \stackrel e \to w.
    \end{align*}
    There  are less than $k^2$ choices for $v,w$. In general a path might have several profiles if it uses the hyperedge $e$ several times. However, by eliminating loops, we can assume without affecting the invariant (*)  that every path from $\Pp$ uses each hyperedge at most once, and therefore every path has at most one profile.
    Choose  a pair of vertices $(s,t)$ in the adhesion of $y$ so that the number of paths with this profile is maximal; let $\ell$ be the number of such paths.  Apply the Two Path Lemma with $s$ being the source and $t$ being the target, yielding  a factor $X$ in $T$ with root node $x$. To prove the claim, we will show that the invariant (*) is still satisfied after adding $X$ to the family $\Xx$. 
    
    Clearly item (1) in the invariant is satisfied, because $T/X$ has pathwidth at most $2k$ thanks to  the Two Path Lemma. It remains to find a family of paths, call it $\Rr$, which  witnesses item (2) of the invariant. This family is constructed as follows:
    \begin{enumerate}[label=(\alph*)]
        \item \label{group:star} Choose a vertex $u$ that is introduced in $x$. For every vertex in the  bag of $x$, choose  an inner path in the torso $T/X$ which goes from that vertex to $u$, and add this path to $\Rr$. Such a path must exist because the  tree decomposition $T$  is sane. The paths added in this step  ensure that item (i) of the invariant will hold. In the remaining steps, we need to take care of the  border hyperedge $e$ which is used by the paths from $\Pp$. This hyperedge will not be allowed in $\Rr$ since it is removed from the torso once we add the factor $X$ to $\Xx$.
        \item \label{group:vw} Consider a path in $\Pp$ which either does not use
         the border  hyperedge $e$, or uses it with a profile  $(v,w)$ that is different from $(s,t)$.  If $e$ is used by this path, then replace the corresponding segment
        \begin{align*}
        v \stackrel e \to w
        \end{align*}
        with some inner path in $T/X$ that goes from $v$ to $w$ (which exists because the tree decomposition is sane). Then, add the resulting path to $\Rr$.
        \item \label{group:v0w0} Consider a path in $\Pp$ which  uses the hyperedge $e$ with  profile  $(s,t)$. We do a similar procedure as in the previous item, except that we use two inner paths instead of one. 
        Let $P_1$ and $P_2$ be the two inner paths in the torso $T/X$ from the conclusion of the Two Path Lemma. Partition the $\ell$ paths in $\Pp$ which have profile $(s,t)$ into two groups, of sizes $\lceil \ell/2 \rceil$ and $\lfloor \ell/2 \rfloor$. For every path in the first group, replace the segment
        \begin{align*}
        s \stackrel e \to t
        \end{align*}
        with the inner path $P_1$, and for every path in the second group replace this segment with the inner path $P_2$. Add the resulting paths to $\Rr$. By construction, if we look at the paths added in this step, then every border hyperedge of $T/X$  is visited by paths from at most one of the two groups, and therefore by  at most $\lceil \ell /2 \rceil$ paths.
    \end{enumerate}
    We now argue that the family $\Xx \cup \set X$ satisfies item (2) of the invariant, with respect to the family $\Rr$ defined above. 
    \begin{enumerate}[(i)]
        \item This part of the  invariant is satisfied  thanks to the paths  that are described  in item~\ref{group:star} of the definition of~$\Rr$.    
        \item This part of the invariant says that every vertex  from the torso of  $\bigcup (\Xx \cup \set X)$ is used by at most $2k^3+k$ paths from $\Rr$. Consider first a vertex $v$ that appears already in $\bigcup \Xx$. Because, in the definition of $\Rr$, we used inner paths from $T/X$ to replace traversals of the border hyperedge $e$, it follows that  
        \begin{align*}
            \text{number of paths in $\Rr$ that uses $v$} = 
        \myunderbrace{\text{number of paths in $\Pp$ that uses $v$}}{at most $2k^3 + k$ by  assumption}.
        \end{align*}
        Consider now a vertex that is introduced in $X$. This vertex is used by at most 
        \begin{align*}
           \myunderbrace{2k^3}{paths from $\Pp$ \\
           \scriptsize that used $e$} \quad + \quad 
           \myunderbrace{k}{paths added  \\ \scriptsize in item \ref{group:star}}
           \end{align*}
           paths and therefore the invariant is also satisfied. 
   
        \item This part of the invariant says that every border hyperedge  in the torso of $\bigcup (\Xx \cup \set X)$ is used by at most $2k^3$ paths from $\Rr$. If this hyperedge is not a border hyperedge of $T/X$, then we use the same argument as for vertices. Otherwise, the number of paths in $\Rr$ that use this border hyperedge is at most 
       \begin{align*}
           \myunderbrace{\text{number of paths  in $\Pp$ that do not have profile $(s,t)$}}{paths added  \\ \scriptsize in item \ref{group:vw}} \ \ + \ \ 
           \myunderbrace{\lceil \ell/2 \rceil}{paths added  \\ \scriptsize in item \ref{group:v0w0}}\ \ + \ \ 
           \myunderbrace{k.}{paths added  \\ \scriptsize in item \ref{group:star}}
           \end{align*}
           The above sum is at most $2k^3$, which is argued by considering two cases. If $\ell < 2k$, then the summands for items~\ref{group:vw} and~\ref{group:v0w0} are at most $k^2 (2k-1)$, and adding $k$ will not exceed the threshold $2k^3$. Otherwise, if $\ell \ge 2k$, then 
       \begin{align*}
           \lceil \ell /2 \rceil + k \le \ell
       \end{align*}
       and therefore the sum of all items at most the number of paths in $\Pp$ which uses the border hyperedge $e$.
    \end{enumerate}
\end{proof}

Iterate the above claim, starting with the empty family, until reaching a family of factors $\Xx$ which satisfies the invariant (*), and  which partitions all nodes of the tree decomposition. We will show that $\Xx$ satisfies the conclusion of the lemma. Item (1) of the lemma, about bounded pathwidth of the torsos,  follows immediately from the corresponding item in the  claim. We are left with proving  item (2) from the lemma,  about definability of  the binary relation 
\begin{align*}
    \text{factor from $\Xx$ that introduces $v$} 
    \quad \myunderbrace{\le}{tree ordering on $\Xx$ inherited from $T$}\quad
    \text{factor from $\Xx$ that introduces $w$}.
    \end{align*}
Here we will use the family of paths  $\Pp$ from item (2) in the invariant (*).  By item~\ref{fact-family:stars} of the invariant, for every factor $X \in \Xx$  we can choose a vertex $v_X$ that is introduced in $X$, such that using paths from $\Pp$ one can connect $v_X$ with every vertex in the root bag of $X$.
By the bounded overlap condition for vertices from  from item~\ref{fact-family:disjoint} of the invariant, we can use the Bounded Overlap Lemma to show that the binary  relation
% both of the following binary relations on vertices are definable by constant formulas of \mso with set parameters: 
\begin{align}
% \set{(v_X,w) : \text{$X \in \Xx$ and $w$ is in bag of the root node of $X$}}\\
\label{eq:adhesion-definable}
\set{(v_X,w) : \text{$X \in \Xx$ and $w$ is in the adhesion of the  root node of $X$}}
\end{align}
can be defined by an \mso formula with set parameters that depends only on $k$. (When applying the Bounded Overlap Lemma, we observe that in a hypergraph with treewidth at most $k$, all hyperedges have arity at most $k+1$.)
Let $X$ be a factor in $\Xx$, with root node $x$. By definition of sane tree decompositions, the torso of the subtree of $x$ in $T$ is inner connected, which means that every two vertices from this torso that are not ports (i.e.~they are not in the  adhesion of $x$) can be connected by an inner path. Since $v_X$ is not in the adhesion of $x$, it follows that a vertex $w$ is introduced in the subtree of  $x$ if and only if there is a path from $w$ to $v_X$ which does not use vertices from the adhesion of $x$. Since this can be formalised in \mso using the relation~\eqref{eq:adhesion-definable}, we can define in \mso the following binary relation:
\begin{align*}
\set{(v_X,w) : \text{$X \in  \Xx$ and $w$ is introduced in root node of $X$ or its descendants}}.
\end{align*}
Using the above relation, we can easily define in \mso the binary relation from item (2) in the statement of the lemma, thus finishing the proof. 
\end{proof}

\exercisehead

\mikexercise{
\label{exercise:mengers-theorem}
Prove Menger's Theorem, in the hypergraph version that is used in the proof of the Two Path Lemma.}{We use  Menger's Theorem about separators and disjoint paths in undirected graphs. Menger's Theorem says that if, in an undirected graph with designated source and target vertices, one cannot find a set of $k$ separating vertices, then there are $k$ paths from the source to target which are disjoint (apart from the source and target). The lemma follows immediately by applying Menger's Theorem, in the case of $k=1$, to the Gaifman graph of a hypergraph, which is an undirected graph that is defined as in the following example:
\mypic{106}
}

\subsection{Application to recognisability}
\label{sec:application-to-recognisability}
A corollary of  Theorem~\ref{thm:definable-tree-decompositions}  is that the converse of Courcelle's Theorem holds for hypergraphs of bounded treewidth, as stated below.

\begin{corollary}\label{cor:courcelle-reverse}
    Let   $L \subseteq \hmonad \Sigma$  be a set of hypergraphs of  bounded treewidth. Then $L$ is recognisable if and only if it is definable in counting \mso. 
\end{corollary}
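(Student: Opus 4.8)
One implication is free: by Courcelle's Theorem, every language definable in counting \mso is recognisable, and this holds for arbitrary hypergraphs, in particular for those of bounded treewidth. So the content is the converse: if $L \subseteq \hmonad \Sigma$ is recognisable and every hypergraph in $L$ has treewidth at most $k$, then $L$ is definable in counting \mso. The plan is to combine Theorem~\ref{thm:definable-tree-decompositions} (bounded treewidth gives definable tree decompositions of bounded width) with a bottom-up evaluation of the recognising algebra along such a decomposition, in the spirit of the ``finite algebra to counting \mso'' direction of Theorem~\ref{thm:forest-mso-with-counting}. Fix a recognising homomorphism $h : \hmonad \Sigma \to A$ into an algebra that is finite on every arity, together with an accepting set $F \subseteq A$. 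By Theorem~\ref{thm:definable-tree-decompositions} there is a width bound $\ell$ and an \mso formula $\varphi$ with set parameters such that every $G \in L$ has a tree decomposition of width at most $\ell$ whose introduction ordering is definable by $\varphi$. Let $A'$ be the (finite) union of the components of $A$ of arity at most $\ell+1$.

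\textbf{Guessing the decomposition.} Using \mso (no counting needed here) one existentially guesses the set parameters of $\varphi$, recovers from $\varphi$ a candidate introduction ordering, and checks that it really is the introduction ordering of some tree decomposition of the input hypergraph of width at most $\ell$ satisfying the sanity conditions (A) and (B) recalled in Section~\ref{sec:definabl-tree-decompositions}; one also guesses an accompanying local colouring (that is, $\ell+1$ further set parameters colouring vertices by $\set{0,\ldots,\ell}$) and checks its defining property. From the introduction ordering and the local colouring, all auxiliary notions — nodes (equivalence classes), bags, adhesions, the hyperedge introduced at a node, and the torso $G_x$ of the subtree rooted at a node $x$ together with its ordered port sequence — are \mso-definable, exactly as in the discussion preceding Theorem~\ref{thm:definable-tree-decompositions}.

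\textbf{Bottom-up evaluation.} By Theorem~\ref{thm:treewidth-in-terms-of-terms}, a width-$\le\ell$ tree decomposition exhibits $G$ as built, via the treewidth terms (forgetting, fusion, rearrangement), from hyperedge units; concretely, the value $h(G_x) \in A'$ at a node $x$ is determined by the hyperedges introduced at $x$ (each of which, after rearranging its incidence list according to the local colouring of the bag of $x$, contributes a unit of $A$ of arity equal to the bag size) together with the values $h(G_y)$ of the children $y$ of $x$, each rearranged into the bag of $x$, followed by a forgetting step down to the adhesion of $x$. A node may have unboundedly many children and introduce unboundedly many hyperedges, but on a fixed arity the fusion operation makes $A'$ into a commutative monoid, so the fused value depends only on how many children fall into each of the finitely many value-classes and how many hyperedges of each of the finitely many type-classes are introduced, and each such count matters only up to a finite threshold and modulo a fixed number — precisely the situation of Claim~\ref{claim:forest-local-computation} and Exercise~\ref{ex:commutative-regular-languages}. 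Hence, guessing for each $a \in A'$ the set $Z_a$ of nodes whose torso has value $a$ (represented by the vertices they introduce, which is legitimate by (A)), one writes a counting \mso formula asserting that $\set{Z_a}_{a \in A'}$ partitions the nodes and is locally consistent in the above sense; this forces $Z_a$ to be exactly the set of nodes of value $a$ for every genuine tree decomposition, independently of whether $G \in L$. The sentence then accepts iff the root node has value in $F$ (when $G$ has no ports, $G = G_{\mathrm{root}}$ and this value is $h(G)$). If $G \in L$ some choice of parameters makes every check succeed and yields value $h(G) \in F$; if $G \notin L$, every choice of parameters is either rejected (bogus introduction ordering) or yields the value $h(G) \notin F$. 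Thus the sentence defines $L$, completing the proof modulo the usual harmless reductions for isolated vertices and ports (Exercises~\ref{ex:isolated-and-not} and~\ref{ex:isolated-vertices}).

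\textbf{Main obstacle.} The conceptual reduction is short given Theorems~\ref{thm:definable-tree-decompositions} and~\ref{thm:treewidth-in-terms-of-terms}; the real work is formalising the bottom-up evaluation in counting \mso, in particular (i) turning a width-$\le\ell$ tree decomposition into the correct sequence of treewidth operations in an \mso-definable way, handling the ordering of bags via the local colouring and the rearrangements along parent–child adhesions, and (ii) evaluating the fusion of unboundedly many children and parallel hyperedges, which is exactly where counting \mso is indispensable and where one imports the commutative-monoid analysis underlying Exercise~\ref{ex:commutative-regular-languages} and Claim~\ref{claim:forest-local-computation}.
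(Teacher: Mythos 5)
Your proposal is correct and follows essentially the same route as the paper: guess the set parameters for the definable tree decomposition and local colouring from Theorem~\ref{thm:definable-tree-decompositions}, guess per-node types stored on the introducing vertices, verify local consistency of the types via the fusion/forget/rearrangement operations of Theorem~\ref{thm:treewidth-in-terms-of-terms}, and use the commutativity-plus-modular-counting analysis (as in Claim~\ref{claim:forest-local-computation} and Exercise~\ref{ex:commutative-regular-languages}) to evaluate the unbounded fusions in counting \mso. The only cosmetic difference is that the paper spells out the consistency check as the explicit recursive equality~\eqref{eq:tree-decomposition-recursive-equality} with the term operations $f$ and $f_y$, whereas you describe it more informally; the underlying argument is the same.
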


\begin{proof}   The right-to-left implication, even without the assumption on  bounded treewidth, is Courcelle's Theorem. 
    
    The left-to-right implication is proved using the definable tree decompositions from Theorem~\ref{thm:definable-tree-decompositions}.  The  formula defining the language $L$ guesses the  tree decomposition,  by existentially  quantifying over the set parameters needed to define its introduction order adjacent bag relation, and then does a bottom-up pass through the decomposition to compute the value of the hypergraph with respect to the recognising homomorphism. Suppose that $L \subseteq \hmonad \Sigma$ has bounded treewidth and is recognised by a homomorphism
    \begin{align*}
    h : \hmonad \Sigma \to A
    \end{align*}
    into a hypergraph algebra that is finite on every arity. We need to show that $L$ is definable in counting \mso. 
    By Theorem~\ref{thm:definable-tree-decompositions} there is some $k \in \set{0,1,\ldots}$ and  an \mso formula $\varphi$ 
    with set parameters, such that every hypergraph in $L$ has a tree decomposition  $T$ of width at most $k$ that can  be defined by $\varphi$. 

    The formula of counting \mso that defines the language $L$ works as follows.
    Let $G \in L$ and let $T$ be a tree tree decomposition which can be defined by some choice of set parameters in the formula $\varphi$. For a node $x$ of this tree decomposition, define  $G_x$ to be the torso of the subtree of $x$, and define the \emph{type} of $x$ to  be  the image of $G_x$ under the recognising homomorphism $h$. 
    The formula defining $L$  uses existential set quantification to guess the following sets:
    \begin{enumerate}[(a)]
        \item set parameters for the formula $\varphi$ which define the introduction order of $T$;
        \item set parameters for the formula $\varphi$ which define the bag relation of $T$;
        \item $k+1$ sets which represent a local colouring;
        \item sets which represent the types of the nodes in the  tree decomposition.
    \end{enumerate}
    The types from item (d) are  represented by storing the  type  of node  $x$ in every vertex $v$ of the underlying hypergraph that is introduced in node $x$. 
    Note that the arity of a type is at most  $k+1$, since each torso has at most $k+1$ ports, and therefore there are finitely many possibilities for the types, which ensures  that they can be  represented a bounded number of  sets (number of elements in the algebra that have arity at most $k+1$).  Next, the formula checks that the sets guessed in (a, b, c) indeed describe a tree decomposition of width at most $k$ together with a local colouring; this can be easily done by expressing the appropriate definitions in \mso. The rest of this proof is devoted to explaining how a formula of counting \mso can check that the types  from item (d) were guessed correctly. Once we know this, membership in the language boils down to checking that the type of the  root node in the tree decomposition  belongs to the accepting set for the homomorphism.

    To check that the  types  from item (d)  are guessed correctly, we examine  every node $x$ of the tree decomposition, and check if its guessed type is consistent with the guessed types of its children.  This is done as follows. 
    For a node $x$ in the tree decomposition, decompose the torso  $G_x$ into the following parts:
    \begin{itemize}
        \item For a hyperedge $e$ that is introduced in  $x$, define $E_e$ to be the hypergraph obtained from $G_x$ by restricting it so that (a)  the  vertices  are the bag of $x$; (b) the only hyperedge is $e$; (c) all vertices are ports.
        \item  For a child $y$ node of $x$, define $H_y$ to be the hypergraph obtained from $G_x$ by  restricting it so that: (a) the vertices are the bag of $x$ plus vertices introduced in $y$ and its  descendants of $y$; (b) the hyperedges are those that are introduced in $y$ and its descendants; (c) the ports are the bag of $x$.
    \end{itemize}
    % These hypergraphs are illustrated below:
    % \mypic{89}
    Recall the fusion, forget and rearrangement operations from Figure~\ref{fig:treewidth-terms}.
    Apart from the ports, the parts $E_e$ and $H_y$ defined above  are disjoint, and together they represent all vertices and hyperedges in the torso $G_x$. 
    Therefore,  $G_x$ can be recovered from these hypergraphs by fusing  these  and then forgetting the vertices from the bag of $x$ which are not in the adhesion of the subtree of $x$, as expressed in the following equality:
    \begin{align*}
        G_x  \qquad = \qquad 
        \myoverbrace{f \big(
        \myunderbrace{\sum_{y} H_y}{fusion ranging\\
        \scriptsize over children  of $x$} 
        \quad 
        \myunderbrace{+}{fusion}
        \quad 
        \myunderbrace{\sum_{e} E_e}{fusion ranging\\
        \scriptsize over hyperedges \\
        \scriptsize introduced in $x$}\big)}{the operation $f$ restricts the ports\\
        \scriptsize to the adhesion of the subtree of $x$
        }.
        \end{align*}
    Although fusion is in principle a binary operation, it is associative and commutative when the number of ports is fixed, and therefore unordered sums in the above expression are meaningful. Both fusion and $f$ are special cases of term operations in hypergraph algebra, which means that they commute with the homomorphism $h$.

    The  hypergraph $H_y$ is obtained from $G_y$ by forgetting some ports (namely the ports that are not in the bag of $x$), and then adding some new vertices as ports (namely the vertices that are in the bag of $x$ but not in the bag of $y$). This corresponds to a term operation, call it $f_y$, which  transforms $H_y$ into $G_y$ is a term operation. Summing up, we have the following equation for every node $x$ in the tree decomposition: 
        \begin{align*}
        G_x = f(\sum_y \underbrace{f_y(G_y)}_{H_y} + \sum_e E_e).
        \end{align*}
        The above equation is in the free algebra $\hmonad \Sigma$.
        Since $h$ is a homomorphism, and homomorphisms commute with term operations such as fusion and $f_y$, we have the following equation in the algebra $A$:
        \begin{align}\label{eq:tree-decomposition-recursive-equality}
        \myunderbrace{h(G_x)}{type of $x$}
         = f(\sum_y f_y(\myunderbrace{h(G_y)}{type of $y$}) + \sum_e h(H_e)),
        \end{align}
        where both fusion and the term operations $f$ and $f_y$ are now interpreted in the algebra $A$.  If the guessed types satisfy the equality in~\eqref{eq:tree-decomposition-recursive-equality} for every node $x$, then they are equal to the actual types.  
        Therefore, it remains to show that a formula of counting \mso can check if the guessed types satisfy the equality~\eqref{eq:tree-decomposition-recursive-equality} for every node $x$.

        Define $A^x$ and $A_x$ to be the elements of the algebra $A$ whose arity is equal to sizes of, respectively, the adhesion and the bag of $x$. The term operations in~\eqref{eq:tree-decomposition-recursive-equality} have the following types:
        \begin{align*}
        \myunderbrace{A_x \to A^x}{$f$}
        \qquad
        \myunderbrace{A_x \times A_x \to A_x}{$+$}
        \qquad 
        \myunderbrace{A^y \to A_x}{$f_y$}.
        \end{align*}
        Since fusion $+$ is  associative and commutative, it follows that the fusion from~\eqref{eq:tree-decomposition-recursive-equality}  corresponds to multiplication in a  finite commutative semigroup, and such multiplication can be computed in counting \mso. The result follows, since the term operation $f_y$ can be determined in \mso based on the node $y$ (given by a vertex introduced in it), and the same is true for $h(H_e)$. 
\end{proof}

\exercisehead

\mikexercise{
    We say that a hypergraph algebra $A$ is \emph{aperiodic} if for every $n \in \set{0,1,\ldots}$ the semigroup
    \begin{align*}
        (\text{elements of $A$ with arity $n$}, \text{fusion})
        \end{align*}
    is aperiodic. Show that if $L$ is a set of hypergraphs of bounded treewidth, then $L$ is definable in \mso without counting if and only if it is recognised by a finite aperiodic hypergraph algebra.
}{}

\mikexercise{Suppose that $L$ has bounded treedepth, as described in Exercise~\ref{ex:tree-depth}. Show that if $L$ is recognised by an aperiodic hypergraph algebra, then it is definable  in first-order logic.
}{}

% \newpage
% \part{Solutions to the exercises}
% \shipoutAnswer
 
 \backmatter
% \appendix
% % if you only have one appendix, use \oneappendix instead of \appendix
% \include{theorem}
% \include{root}
% \include{appnum}
% \endappendix
 \addtocontents{toc} {\vspace{\baselineskip}}
 
% the following lines will give you references at the end of the book
 \renewcommand{\refname}{Bibliography}% if you prefer this heading
%  \bookreferences % if you already have references at the end of chapters,
 % you will need this command to start a new \chapter* heading
%  \bibliography{bib}\label{refs}
%  \bibliographystyle{cambridgeauthordate}
\printbibliography

@article{frandsenMiltersenSkym97,
	Author = {Gudmund  Skovbjerg Frandsen and   Peter Bro Miltersen and Sven Skyum},
	Journal = {J. ACM},
	Month = mar,
	Number = {2},
	Pages = {257--271},
	Title = {Dynamic Word Problems},
	Volume = {44},
	Year = {1997}}

@article{simonFactorizationForestsFinite1990,
	Author = {Simon, Imre},
	Journal = {Theoretical Computer Science},
	Number = {1},
	Pages = {65--94},
	Title = {Factorization Forests of Finite Height},
	Volume = {72},
	Year = {1990}}

@inproceedings{kufleitner2008,
	Author = {Manfred Kufleitner},
	Booktitle = {Mathematical Foundations of Computer Science 2008, 33rd International Symposium, {MFCS} 2008, Torun, Poland, August 25-29, 2008, Proceedings},
	Editor = {Edward Ochmanski and Jerzy Tyszkiewicz},
	Pages = {443--454},
	Publisher = {Springer},
	Series = {Lecture Notes in Computer Science},
	Title = {The Height of Factorization Forests},
	Volume = {5162},
	Year = {2008}}

@article{Schutzenberger65,
	Author = {Sch{\"u}tzenberger, Marcel-Paul},
	Journal = {Information and Control},
	Pages = {190--194},
	Title = {On finite monoids having only trivial subgroups},
	Volume = 8,
	Year = 1965}

@book{McNaughtonPapert71,
	Author = {McNaughton, Robert and Papert, Seymour},
	Publisher = {The M.I.T. Press, Cambridge, Mass.-London},
	Title = {Counter-free automata},
	Year = 1971}

@phdthesis{Kamp68,
	Author = {Kamp, J.A.},
	School = {Univ. of California, Los Angeles},
	Title = {Tense Logic and the Theory of Linear Order},
	Year = 1968}

@article{trakthenbrot1958,
	Author = {Trakhtenbrot, Boris A.},
	Journal = {Dokl. Akad. Nauk SSSR},
	Number = {4},
	Pages = {646--649},
	Title = {The synthesis of logical nets whose operators are described in terms of one-place predicate calculus (Russian)},
	Volume = {118},
	Year = {1958}}

@article{Buchi60,
	Author = {B{\"u}chi, J. Richard},
	Journal = {Z. Math. Logik und Grundl. Math.},
	Pages = {66--92},
	Title = {Weak second-order arithmetic and finite automata},
	Volume = 6,
	Year = 1960}

@article{Elgot61,
	Author = {Elgot, Calvin C.},
	Journal = {Trans. Amer. Math. Soc.},
	Pages = {21--51},
	Title = {Decision problems of finite automata design and related arithmetics},
	Volume = 98,
	Year = 1961}

@incollection{Fagin74,
	Address = {Providence, R.I.},
	Author = {Fagin, Ronald},
	Booktitle = {Complexity of computation (Proc. SIAM-AMS Sympos. Appl. Math., New York, 1973)},
	Pages = {43--73. SIAM-AMS Proc., Vol. VII},
	Publisher = {Amer. Math. Soc.},
	Title = {Generalized first-order spectra and polynomial-time recognizable sets},
	Year = 1974}

@inproceedings{wilke1999,
	Address = {Berlin, Heidelberg},
	Author = {Wilke, Thomas},
	Booktitle = {STACS 99},
	Editor = {Meinel, Christoph and Tison, Sophie},
	Pages = {32--46},
	Publisher = {Springer Berlin Heidelberg},
	Title = {Classifying Discrete Temporal Properties},
	Year = {1999}}

@InProceedings{Simon75,
author={Simon, Imre},
editor={Brakhage, H.},
title={Piecewise testable events},
booktitle={Automata Theory and Formal Languages},
year={1975},
publisher={Springer Berlin Heidelberg},
address={Berlin, Heidelberg},
pages={214--222},
isbn={978-3-540-37923-2}
}

@incollection{Buchi62,
	Address = {Stanford, Calif.},
	Author = {B{\"u}chi, J. Richard},
	Booktitle = {Logic, Methodology and Philosophy of Science (Proc. 1960 Internat. Congr .)},
	Pages = {1--11},
	Publisher = {Stanford Univ. Press},
	Title = {On a decision method in restricted second order arithmetic},
	Year = 1962}

@article{McNaughton66,
	Author = {McNaughton, Robert},
	Journal = {Information and Control},
	Pages = {521--530},
	Title = {Testing and generating infinite sequences by a finite automaton},
	Volume = 9,
	Year = 1966}

@article{Ramsey29,
	Author = {Ramsey, Frank D.},
	Journal = {Proc. of the London Math. Soc.},
	Pages = {338--384},
	Title = {On a problem of formal logic},
	Volume = 30,
	Year = 1929}

@article{shelahMonadicTheoryOrder1975,
	Author = {Shelah, Saharon},
	Journal = {Annals of Mathematics},
	Pages = {379--419},
	Title = {The {{Monadic Theory}} of {{Order}}},
	Year = {1975}}

@article{carton_colcombet_puppis_2018,
	Author = {Carton, Olivier and Colcombet, Thomas and Puppis, Gabriele},
	Journal = {The Journal of Symbolic Logic},
	Number = {3},
	Pages = {1147--1189},
	Title = {An algebraic approach to {MSO}-definability on countable linear orders},
	Volume = {83},
	Year = {2018}}

@article{bloomEsik2005,
	Author = {Stephen L. Bloom and Zolt\'{a}n \'{E}sik},
	Journal = {Information and Computation},
	Number = {1},
	Pages = {55 - 89},
	Title = {The equational theory of regular words},
	Volume = {197},
	Year = {2005}}

@article{lauchli1966elementary,
	Author = {L{\"a}uchli, H and Leonard, J},
	Journal = {Fundamenta Mathematicae},
	Number = {1},
	Pages = {109--116},
	Title = {On the elementary theory of linear order},
	Volume = {59},
	Year = {1966}}

@article{Schtzenberger1976SurLP,
	Author = {Marcel-Paul Sch{\"u}tzenberger},
	Journal = {Semigroup Forum},
	Pages = {47-75},
	Title = {Sur Le Produit De Concatenation Non Ambigu},
	Volume = {13},
	Year = {1976}}

@inproceedings{colcombetSreejith2015,
	Author = {Thomas Colcombet and A. V. Sreejith},
	Booktitle = {{{International Colloquium on Automata, Languages and Programming}}, {{ICALP}}, {{Kyoto}}, {{Japan}}},
	Editor = {Magn{\'{u}}s M. Halld{\'{o}}rsson and Kazuo Iwama and Naoki Kobayashi and Bettina Speckmann},
	Pages = {146--158},
	Publisher = {Springer},
	Series = {Lecture Notes in Computer Science},
	Title = {Limited Set Quantifiers over Countable Linear Orderings},
	Volume = {9135},
	Year = {2015}}

@inproceedings{arfi87,
	Author = {Mustapha Arfi},
	Booktitle = {Symposium on Theoretical Aspects of Computer Science, {STACS}, Passau, Germany},
	Pages = {198--206},
	Title = {Polynomial Operations on Rational Languages},
	Year = {1987}}

@article{PinWeil97a,
	Author = {Pin, J.-E. and Weil, P.},
	Journal = {Theory Comput. Syst.},
	Number = 4,
	Pages = {383--422},
	Title = {Polynomial closure and unambiguous product},
	Volume = 30,
	Year = 1997}

@article{placeZeitoun2019,
	Author = {Thomas Place and Marc Zeitoun},
	Journal = {J. {ACM}},
	Number = {2},
	Pages = {12:1--12:65},
	Title = {Going Higher in First-Order Quantifier Alternation Hierarchies on Words},
	Volume = {66},
	Year = {2019}}

@article{green_rees_1952,
	Author = {Green, J. A. and Rees, D.},
	Journal = {Mathematical Proceedings of the Cambridge Philosophical Society},
	Number = {1},
	Pages = {35--40},
	Title = {On semi-groups in which {$x^r = x$}},
	Volume = {48},
	Year = {1952}}

@article{rispalComplementationRationalSets2005,
	Author = {Rispal, Chlo{\'e} and Carton, Olivier},
	Journal = {International Journal of Foundations of Computer Science},
	Month = aug,
	Number = {04},
	Pages = {767-786},
	Title = {Complementation of Rational Sets on Countable Scattered Linear Orderings},
	Volume = {16},
	Year = {2005}}

@book{appelgate1969seminar,
	Author = {Appelgate, H and Barr, M and Beck, J and Lawvere, FW and Linton, FEJ and Manes, E and Tierney, M and Ulmer, F},
	Publisher = {Springer},
	Title = {Seminar on triples and categorical homology theory},
	Year = {1969}}

@article{sankappanavar1981course,
	Author = {Sankappanavar, Hanamantagouda P and Burris, Stanley},
	Journal = {Graduate Texts Math},
	Title = {A course in universal algebra},
	Volume = {78},
	Year = {1981}}

@article{thatcherGeneralizedFiniteAutomata1968,
	Author = {Thatcher, J. W. and Wright, J. B.},
	Journal = {Mathematical systems theory},
	Month = mar,
	Number = {1},
	Pages = {57-81},
	Title = {Generalized Finite Automata Theory with an Application to a Decision Problem of Second-Order Logic},
	Volume = {2},
	Year = {1968}}

@article{DBLP:journals/corr/abs-1808-03559,
	Author = {Achim Blumensath},
	Journal = {CoRR},
	Title = {Regular Tree Algebras},
	Volume = {abs/1808.03559},
	Year = {2018}}

@book{eilenberg1975pseudovarieties,
	Author = {Eilenberg, Samuel and Sch{\"u}tzenberger, Marcel-Paul},
	Publisher = {IRIA. Laboratoire de Recherche en Informatique et Automatique},
	Title = {On pseudovarieties},
	Year = {1975}}

@book{ebbinghausFlumFinite,
	Author = {Heinz-Dieter Ebbinghaus, J{\"o}rg Flum},
	Edition = {2nd},
	Publisher = {Springer},
	Series = {Springer Monographs in Mathematics},
	Title = {Finite Model Theory},
	Year = {2006}}

@book{courcelleGraphStructureMonadic2012,
	Author = {Courcelle, Bruno and Engelfriet, Joost},
	Publisher = {{Cambridge University Press}},
	Series = {Encyclopedia of Mathematics and Its Applications},
	Title = {Graph {{Structure}} and {{Monadic Second}}-{{Order Logic}} - {{A Language}}-{{Theoretic Approach}}},
	Volume = {138},
	Year = {2012}}

@book{diestel,
	Author = {Diestel, Reinhard},
	Publisher = {Springer-Verlag},
	Series = {Graduate texts in mathematics},
	Title = {Graph theory (electronic edition)},
	Volume = {173},
	Year = {2006}}

@article{DBLP:journals/lmcs/BojanczykK19,
	Author = {Miko{\l}aj Boja{\'{n}}czyk and Bartek Klin},
	Journal = {Logical Methods in Computer Science},
	Number = {4},
	Title = {A non-regular language of infinite trees that is recognizable by a sort-wise finite algebra},
	Volume = {15},
	Year = {2019}}

@inproceedings{bojanczykOptimizingTreeDecompositions2017a,
	Author = {Boja\'nczyk, Miko{\l}aj and Pilipczuk, Michal},
	Booktitle = {{Symposium} on {Theoretical Aspects} of {Computer Science}, {STACS}, {Hannover}, {Germany}},
	Editor = {Vollmer, Heribert and Vall{\'e}e, Brigitte},
	Pages = {15:1--15:13},
	Publisher = {{Schloss Dagstuhl - Leibniz-Zentrum fuer Informatik}},
	Series = {{{LIPIcs}}},
	Title = {Optimizing {{Tree Decompositions}} in {{MSO}}},
	Volume = {66},
	Year = {2017}}

@inproceedings{ChuzhoyT19,
	Author = {Julia Chuzhoy and Zihan Tan},
	Booktitle = {Symposium on Discrete Algorithms, {(SODA)}, San Diego,  USA},
	Editor = {Timothy M. Chan},
	Pages = {1445--1464},
	Publisher = {{SIAM}},
	Title = {Towards Tight(er) Bounds for the Excluded Grid Theorem},
	Year = {2019}}

@article{bojanczykTwoMonadsGraphs2018,
	Author = {Boja\'nczyk, Miko{\l}aj},
	Journal = {CoRR},
	Title = {Two Monads for Graphs},
	Volume = {abs/1804.09408},
	Year = {2018}}

@article{SD_1955-1956__9__A10_0,
	Author = {Sch\"utzenberger, Marcel-Paul},
	Journal = {S\'eminaire Dubreil. Alg\`ebre et th\'eorie des nombres},
	Title = {Une th\'eorie alg\'ebrique du codage},
	Volume = {9},
	Year = {1955-1956}}

@article{dubreil1941,
	Author = {Paul Dubreil},
	Journal = {M{\'e}m. Acad. Sci. Inst. France},
	Pages = {1--52},
	Title = {Contribution {\`a} la th{\'e}orie des demi-groupes. I},
	Volume = {61},
	Year = {1941}}

@article{green51,
	Author = {J. A. Green},
	Journal = {Annals of Mathematics},
	Number = {1},
	Pages = {163--172},
	Title = {On the Structure of Semigroups},
	Volume = {54},
	Year = {1951}}

@article{moore1956gedanken,
	Author = {Moore, Edward F},
	Journal = {Automata studies},
	Pages = {129--153},
	Title = {Gedanken-experiments on sequential machines},
	Volume = {34},
	Year = {1956}}

@article{etessamiWilke2000,
	Author = {Kousha Etessami and Thomas Wilke},
	Journal = {Inf. Comput.},
	Number = {1-2},
	Pages = {88--108},
	Title = {An Until Hierarchy and Other Applications of an Ehrenfeucht-Fra{\"{\i}}ss{\'{e}} Game for Temporal Logic},
	Volume = {160},
	Year = {2000}}

@article{CohenPerrinPin93,
	Author = {Cohen, Jo{\"e}lle and Perrin, Dominique and Pin, Jean-Eric},
	Journal = {J. Comput. System Sci.},
	Number = 3,
	Pages = {271--294},
	Title = {On the expressive power of temporal logic},
	Volume = 46,
	Year = 1993}

@inproceedings{czewinskietal2013,
	Address = {Berlin, Heidelberg},
	Author = {Czerwi{\'{n}}ski, Wojciech and Martens, Wim and Masopust, Tom{\'a}{\v{s}}},
	Booktitle = {{{International Colloquium on Automata, Languages and Programming}}, {{ICALP}}, {{Riga}}, {{Latvia}}},
	Editor = {Fomin, Fedor V. and Freivalds, R{\={u}}si{\c{n}}{\v{s}} and Kwiatkowska, Marta and Peleg, David},
	Pages = {150--161},
	Publisher = {Springer Berlin Heidelberg},
	Title = {Efficient Separability of Regular Languages by Subsequences and Suffixes},
	Year = {2013}}

@inproceedings{TherienWilke98,
	Author = {Th{\'e}rien, Denis and Wilke, Thomas},
	Booktitle = {Proceedings of the 30th Annual ACM Symposium on Theory of Computing},
	Pages = {41--47},
	Title = {Over words, two variables are as powerful as one quantifier alternation: FO${}^2 = \Sigma^2 \cap \Pi^2$},
	Year = 1998}

@inproceedings{Safra88,
	Author = {Safra, Shmuel},
	Booktitle = {Proc. 29th Ann. IEEE Symp. on Foundations of Computer Science},
	Pages = {319--327},
	Publisher = {IEEE},
	Title = {On the complexity of the {$\omega$}-automata},
	Year = 1988}

@url{bojanczyk_automata_2018,
	Author = {Boja{\'n}czyk, Miko{\l}aj and Czerwi{\'n}ski, Wojciech},
	Date-Added = {2019-11-13 12:05:54 +0100},
	Date-Modified = {2019-11-13 12:06:37 +0100},
	Title = {An {Automata} {Toolbox}},
	Url = {https://www.mimuw.edu.pl/~bojan/upload/reduced-may-25.pdf},
	Year = {2018},
	}

@incollection{PerrinPin95,
	Address = {Dordrecht},
	Author = {Perrin, Dominique and Pin, Jean-Eric},
	Booktitle = {Semigroups, formal languages and groups (York, 1993)},
	Pages = {49--72},
	Publisher = {Kluwer Acad. Publ.},
	Title = {Semigroups and automata on infinite words},
	Year = 1995}

@article{eilenbergAutomataGeneralAlgebras,
	Author = {Eilenberg, Samuel and Wright, Jesse},
	Journal = {Information and Control},
	Pages = {452--470},
	Title = {Automata in {{General Algebras}}},
	Volume = {11},
	Year = {1967}}

@article{bojanczykRecognisableLanguagesMonads2015,
	Author = {Boja\'nczyk, Miko{\l}aj},
	Journal = {CoRR},
	Title = {Recognisable languages over monads},
	Volume = {abs/1502.04898},
	Year = {2015}}

@book{Eilenberg76,
	Address = {New York},
	Author = {Eilenberg, Samuel},
	Publisher = {Academic Press [Harcourt Brace Jovanovich Publishers]},
	Title = {Automata, languages, and machines. {V}ol. {B}},
	Year = 1976}

@article{benediktSegoufin2009,
	Author = {Michael Benedikt and Luc Segoufin},
	Journal = {{ACM} Trans. Comput. Log.},
	Number = {1},
	Pages = {4:1--4:32},
	Title = {Regular tree languages definable in {FO} and in {FO}\({}_{\mbox{\emph{mod}}}\)},
	Volume = {11},
	Year = {2009}}

@inproceedings{bojanczykForestAlgebras2008,
	Author = {Boja\'nczyk, Miko{\l}aj and Walukiewicz, Igor},
	Booktitle = {Logic and {{Automata}}: {{History}} and {{Perspectives}} [in {{Honor}} of {{Wolfgang Thomas}}]},
	Editor = {Flum, J{\"o}rg and Gr{\"a}del, Erich and Wilke, Thomas},
	Pages = {107--132},
	Publisher = {{Amsterdam University Press}},
	Series = {Texts in {{Logic}} and {{Games}}},
	Title = {Forest Algebras},
	Volume = {2},
	Year = {2008}}

@article{Birkhoff35,
	Author = {Birkhoff, Garrett},
	Journal = {Proc. Cambridge Phil. Soc.},
	Pages = {433--454},
	Title = {On the structure of abstract algebras},
	Volume = 31,
	Year = 1935}

@article{courcelleMonadicSecondorderLogic1990,
	Author = {Courcelle, Bruno},
	Journal = {Information and Computation},
	Month = mar,
	Number = {1},
	Pages = {12-75},
	Title = {The Monadic Second-Order Logic of Graphs. {{I}}. {{Recognizable}} Sets of Finite Graphs},
	Volume = {85},
	Year = {1990}}

@inproceedings{bojanczykDefinabilityEqualsRecognizability2016a,
	Author = {Boja\'nczyk, Miko{\l}aj and Pilipczuk, Michal},
	Booktitle = {Proceedings of the 31st {{Annual ACM}}/{{IEEE Symposium}} on {{Logic}} in {{Computer Science}}, {{LICS}} '16, {{New York}}, {{NY}}, {{USA}}, {{July}} 5-8, 2016},
	Editor = {Grohe, Martin and Koskinen, Eric and Shankar, Natarajan},
	Pages = {407--416},
	Publisher = {{ACM}},
	Title = {Definability Equals Recognizability for Graphs of Bounded Treewidth},
	Year = {2016}}
 \cleardoublepage

% end notes, if you have them
% \theendnotes

% glossary
% \printglossary

% indexes

% for a single index
% \printindex

% for multiple indexes using multind.sty
 \printindex{authors}{Author index}
 \printindex{subject}{Subject index}
 \end{document}